\documentclass[11pt]{article}
\usepackage{a4wide}
\usepackage[utf8]{inputenc}
\usepackage[T1]{fontenc}


\usepackage{inconsolata}
\usepackage{libertine}


\usepackage{amsthm}
\usepackage{amssymb}
\usepackage{amsmath}
\usepackage{mathtools}
\usepackage{mathrsfs}
\usepackage{thmtools,thm-restate}
\usepackage{bm}


\usepackage{tikz}
\usepackage{graphicx}
\graphicspath{ {./images/} }


\usepackage{algorithm}
\usepackage[noend]{algpseudocode}
\usepackage{caption}
\usepackage{subcaption}
\usepackage[noadjust]{cite}
\usepackage{enumitem}
\usepackage{xspace}
\usepackage{xcolor}
\usepackage{xparse}
\usepackage{xfrac}
\usepackage[disable]{todonotes}
\usepackage{textpos}

\usepackage[colorlinks = true,linkcolor = blue,urlcolor = blue, citecolor = blue]{hyperref}

\usepackage[capitalize, nameinlink]{cleveref}

\newcommand\tk[1]{\todo[inline,size=\scriptsize,backgroundcolor=yellow]{#1 - \textbf{Tuukka}}}

\newcommand\ms[1]{\todo[inline,size=\scriptsize,backgroundcolor=green]{#1 - \textbf{Marek}}}

\newtheorem{theorem}{Theorem}[section]

\newtheorem{lemma}[theorem]{Lemma}
\newtheorem{observation}[theorem]{Observation}
\newtheorem{corollary}[theorem]{Corollary}

\newtheorem{problem}[theorem]{Problem}

\crefname{claim}{Claim}{Claims}
\crefname{problem}{Problem}{Problems}
\newtheorem{claim}[theorem]{Claim}

\newenvironment{claimproof}
 {\proof}
 {\endproof}

\newcommand{\Oh}[2][]{{\cal O}_{#1}(#2)}

\newcommand{\Tpref}{\ensuremath{T_{\mathrm{pref}}}\xspace}
\newcommand{\App}{\ensuremath{\mathsf{App}}}
\newcommand{\height}{\ensuremath{\mathsf{height}}}

\newcommand{\funrestriction}[2]{#1\vert_{#2}}

\newcommand{\C}{\mathbb{C}}
\newcommand{\D}{\mathbb{D}}
\newcommand{\F}{\mathbb{F}}
\renewcommand{\H}{\mathbb{H}}

\newcommand{\N}{\mathbb{N}}

\newcommand{\Z}{\mathbb{Z}}

\newcommand{\Cc}{\mathcal{C}}
\newcommand{\Dc}{\mathcal{D}}

\newcommand{\Kc}{\mathcal{K}}

\newcommand{\Tc}{\mathcal{T}}

\newcommand{\Wc}{\mathcal{W}}
\newcommand{\Vc}{\mathcal{V}}

\newcommand{\BB}{\mathfrak{B}}

\renewcommand{\leq}{\leqslant}
\renewcommand{\geq}{\geqslant}

\renewcommand{\le}{\leqslant}
\renewcommand{\ge}{\geqslant}

\newcommand{\CMSO}{\mathsf{CMSO}}
\newcommand{\LinCMSO}{\mathsf{LinCMSO}}
\newcommand{\MSO}{\mathsf{MSO}}


\newcommand{\oE}{\vec{E}}
\newcommand{\oed}{\vec{e}}
\newcommand{\ouv}{\vec{uv}}
\newcommand{\oxy}{\vec{xy}}
\newcommand{\oyx}{\vec{yx}}
\newcommand{\oyz}{\vec{yz}}
\newcommand{\ozy}{\vec{zy}}

\newcommand{\ovu}{\vec{vu}}
\newcommand{\ozx}{\vec{zx}}
\newcommand{\oab}{\vec{ab}}
\newcommand{\oba}{\vec{ba}}
\newcommand{\obc}{\vec{bc}}
\newcommand{\ocb}{\vec{cb}}
\newcommand{\olp}{\vec{lp}}
\newcommand{\opl}{\vec{pl}}
\newcommand{\oap}{\vec{ap}}

\newcommand{\oxp}{\vec{xp}}
\newcommand{\opx}{\vec{px}}
\newcommand{\ocax}{\vec{c_1 x}}
\newcommand{\ocbx}{\vec{c_2 x}}
\newcommand{\oxca}{\vec{x c_1}}

\newcommand{\oApp}{\ensuremath{\vec{\mathsf{App}}}}

\newcommand{\boldcup}{\bm{\bigcup}}

\newcommand{\reps}{\mathcal{R}}
\newcommand{\repse}{\mathcal{E}}
\newcommand{\dmap}{\mathcal{F}}
\newcommand{\PT}{\mathcal{P}_3}

\newcommand{\leafs}{L}
\newcommand{\leafe}{\vec{L}}
\newcommand{\lmap}{\lambda}
\newcommand{\lparts}{\mathcal{L}}

\newcommand{\prt}{\mathcal{C}}
\newcommand{\prtb}{\mathcal{B}}

\newcommand{\compl}[1]{\overline{#1}}

\newcommand{\GF}{\text{GF}}

\newcommand{\NA}{\ensuremath{\mathsf{Annot}}}

\newcommand{\cut}{\mathsf{cut}}
\newcommand{\aep}{\mathsf{aep}}
\newcommand{\aes}{\mathsf{aes}}

\newcommand{\Tconn}{\ensuremath{T_{\mathrm{conn}}}\xspace}

\newcommand{\width}{\ensuremath{\mathsf{width}}}

\newcommand{\cutrk}{\ensuremath{\mathsf{cutrk}}}
\newcommand{\trivialpartition}[1]{\ensuremath{\mathsf{TrivPart}}(#1)}
\newcommand{\reppart}{\ensuremath{\mathsf{RepPart}}}

\newcommand{\prdesc}{\overline{u}}

\newcommand{\pred}{\ensuremath{\mathsf{pred}}}

\newcommand{\autom}{\mathcal{A}}

\newcommand{\pg}{\mathcal{G}}

\newcommand{\opkg}{\ensuremath{\mathsf{op}}}

\newcommand{\Expr}{\ensuremath{\mathsf{Expr}}}

\newcommand{\eudesc}{\overline{e}}

\newcommand{\sumof}[1]{\langle {#1} \rangle}
\newcommand{\swap}{\mathrm{swap}}

\newcommand{\symd}{\triangle}

\newcommand{\fullset}{{\rm FS}}
\newcommand{\lin}[1]{\mathbf{#1}}

\newcommand{\mix}{\mathsf{M}}
\newcommand{\AutomataReps}{\mathsf{reps}}

\crefname{arde}{ENC}{ENC}
\crefname{ardse}{SENC}{SENC}

\begin{document}
\title{\huge{Almost-linear time parameterized algorithm for rankwidth via dynamic rankwidth}}
\author{Tuukka Korhonen\thanks{Department of Informatics, University of Bergen, Norway (\texttt{tuukka.korhonen@uib.no}). Tuukka Korhonen was supported by the Research Council of Norway via the project BWCA (grant no. 314528).} \and
Marek Sokołowski\thanks{Institute of Informatics, University of Warsaw, Poland (\texttt{marek.sokolowski@mimuw.edu.pl}). The work of Marek Soko{\l}owski on this manuscript is a part of a project that has received funding from the European Research Council (ERC), grant agreement No 948057 --- BOBR.}}
\date{}
\maketitle
\thispagestyle{empty}
\pagenumbering{roman}

 \begin{textblock}{20}(-1.9, 8.2)
  \includegraphics[width=40px]{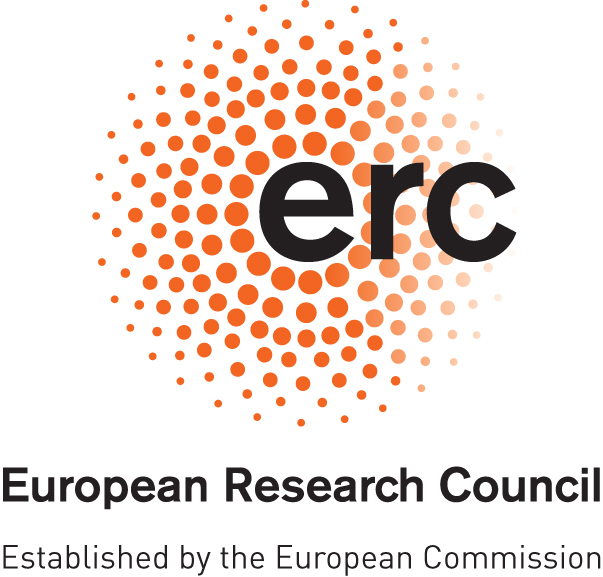}%
 \end{textblock}
 \begin{textblock}{20}(-2.15, 8.5)
  \includegraphics[width=60px]{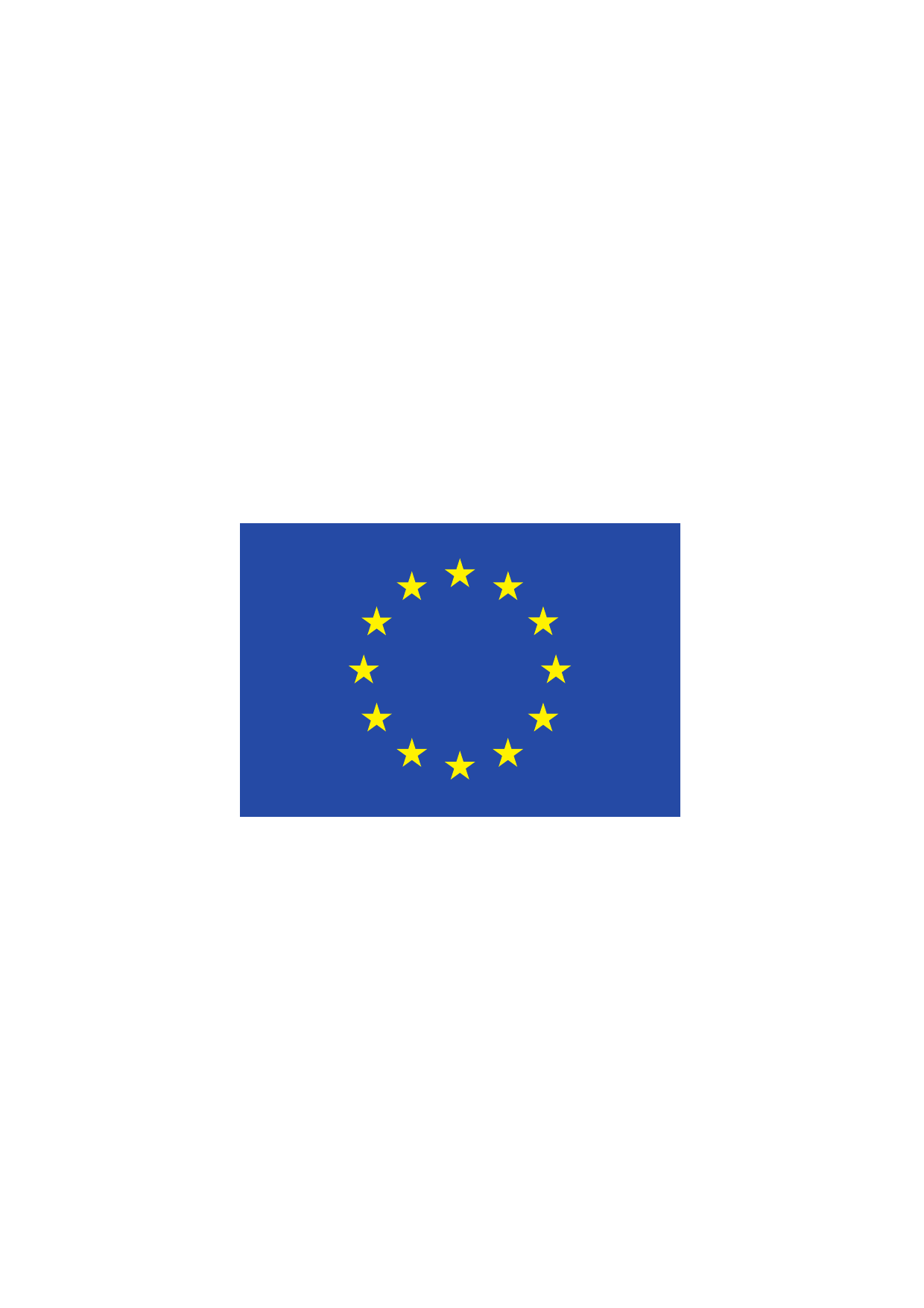}%
 \end{textblock}

\begin{abstract}
We give an algorithm that given a graph $G$ with $n$ vertices and $m$ edges and an integer $k$, in time $\Oh[k]{n^{1+o(1)}} + \Oh{m}$ either outputs a rank decomposition of $G$ of width at most $k$ or determines that the rankwidth of $G$ is larger than $k$; the $\Oh[k]{\cdot}$-notation hides factors depending on $k$.
Our algorithm returns also a $(2^{k+1}-1)$-expression for cliquewidth, yielding a $(2^{k+1}-1)$-approximation algorithm for cliquewidth with the same running time.
This improves upon the $\Oh[k]{n^2}$ time algorithm of Fomin and Korhonen~[STOC~2022].

The main ingredient of our algorithm is a fully dynamic algorithm for maintaining rank decompositions of bounded width: We give a data structure that for a dynamic $n$-vertex graph $G$ that is updated by edge insertions and deletions maintains a rank decomposition of $G$ of width at most $4k$ under the promise that the rankwidth of $G$ never grows above $k$.
The amortized running time of each update is $\Oh[k]{2^{\sqrt{\log n} \log \log n}}$.
The data structure furthermore can maintain whether $G$ satisfies some fixed $\CMSO_1$ property within the same running time.
We also give a framework for performing ``dense'' edge updates inside a given set of vertices $X$, where the new edges inside $X$ are described by a given $\CMSO_1$ sentence and vertex labels, in amortized \mbox{$\Oh[k]{|X| \cdot 2^{\sqrt{\log n} \log \log n}}$} time.
Our dynamic algorithm generalizes the dynamic treewidth algorithm of Korhonen, Majewski, Nadara, Pilipczuk, and Soko{\l}owski~[FOCS~2023].
\end{abstract}

\newpage
\tableofcontents
\newpage
\pagenumbering{arabic}

\section{Introduction}
\label{sec:intro}
Decomposing a graph into a tree-like structure along separators or cuts with simple structure is a popular paradigm in graph algorithms.
While treewidth~\cite{DBLP:journals/jct/RobertsonS84} is the most prominent graph parameter associated with such decompositions, the second most prominent is arguably the \emph{rankwidth}.

A \emph{rank decomposition} of a graph $G$ is a pair $(T,\lmap)$, where $T$ is a tree whose every non-leaf node has degree $3$ and $\lmap$ is a bijection from $V(G)$ to the leaves of $T$.
For an edge $uv$ of $T$, the \emph{width} of $uv$ is defined as follows.
Let $L_u$ be the leaves of $T$ that are closer to $u$ than to $v$ and let $L_v$ be the leaves that are closer to $v$ than to $u$.
Then, the width of $uv$ is the $\GF(2)$-rank of the $|L_u| \times |L_v|$ matrix that describes the adjacencies between $\lmap^{-1}(L_u)$ and $\lmap^{-1}(L_v)$ in $G$ with zeros and ones.
The width of the rank decomposition $(T,\lmap)$ is the maximum width of an edge of it, and the rankwidth of a graph $G$ is the minimum width of a rank decomposition of it.

Rankwidth was introduced by Oum and Seymour~\cite{OumS06} to approximate a graph parameter called \emph{cliquewidth}\footnote{We provide the definition of cliquewidth in \Cref{sec:cliquewidth}. For this introduction it is enough to know that cliquewidth and rankwidth are functionally tied to each other, and the cliquewidth of a graph $G$ is equal to the smallest $k$ for which there exists a decomposition called a ``$k$-expression'' for $G$.}.
They showed that if a graph has rankwidth $k$, then its cliquewidth is between $k$ and $2^{k+1}-1$, and gave a polynomial-time algorithm that constructs a \emph{$(2^{k+1}-1)$-expression} witnessing that the cliquewidth is at most $(2^{k+1}-1)$ when given a rank decomposition of width $k$.

Our main contribution is the following theorem about computing rankwidth exactly and approximating cliquewidth.
We use the $\Oh[k]{\cdot}$-notation to hide factors depending only on $k$.

\begin{theorem}
\label{the:altrw}
There is an algorithm that, given an $n$-vertex $m$-edge graph $G$ and an integer $k$, in time $\Oh[k]{n \cdot 2^{\sqrt{\log n} \log \log n}} + \Oh{m}$, either outputs a rank decomposition of $G$ of width at most $k$ or determines that the rankwidth of $G$ is larger than $k$.
The algorithm also outputs a $(2^{k+1}-1)$-expression for cliquewidth of $G$ within the same running time.
\end{theorem}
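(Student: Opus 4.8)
The plan is to reduce Theorem~\ref{the:altrw} to the fully dynamic rankwidth data structure stated in the abstract, following the general recipe that was used to obtain static treewidth algorithms from dynamic ones. The key realization is that, just as in the treewidth setting, a fully dynamic algorithm maintaining a decomposition of bounded width under a promise can be bootstrapped into a static algorithm by a combination of two ingredients: (i) an iterative-compression-style construction that builds the graph up vertex by vertex (or edge by edge), and (ii) a procedure that, whenever the promise is about to be violated, either certifies that the width has genuinely exceeded $k$ or repairs the decomposition. So the first step is to invoke the dynamic data structure of this paper with parameter $k$, which maintains a rank decomposition of width at most $4k$ under the promise that the rankwidth never exceeds $k$; here each update costs amortized $\Oh[k]{2^{\sqrt{\log n}\log\log n}}$.

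Next, the approach is to handle the two separate regimes of the running time. The $\Oh{m}$ additive term and the fact that $n$ (not $m$) appears in the main term strongly suggest a preprocessing step that reduces the number of edges: one first reads the whole graph in $\Oh{m}$ time, and either detects that $m$ is too large relative to $n$ and $k$ for the rankwidth to be at most $k$ (graphs of bounded rankwidth are sparse in a suitable sense — in fact of bounded average degree after iteratively removing low-degree vertices, or one can use the fact that rankwidth $k$ graphs have at most $\Oh[k]{n}$ edges if one is careful), or one is left with $\Oh[k]{n}$ edges to actually feed into the dynamic structure. Then the second step is to insert these $\Oh[k]{n}$ edges one at a time into the dynamic data structure, starting from the edgeless graph on $V(G)$; the total cost of the insertions is $\Oh[k]{n \cdot 2^{\sqrt{\log n}\log\log n}}$, matching the claimed bound. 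Throughout, the promise ``rankwidth $\le k$'' is respected because $G$ itself has rankwidth at most $k$ if the algorithm is to succeed; if at some insertion the data structure reports that the promise is violated (i.e., rankwidth has exceeded $k$), we output that $\tw$— that $\operatorname{rw}(G) > k$ and halt. At the end, we read off from the data structure a rank decomposition of width at most $4k$, and then run an exact width-minimization routine: since rank decompositions of width $\le 4k$ give us a handle on the graph (e.g.\ bounded cliquewidth, hence we can run a fixed-parameter algorithm that, given a width-$4k$ decomposition, either produces a width-$\le k$ one or certifies $\operatorname{rw}(G)>k$) — this is the analogue of the Hlin\v{e}n\'y--Oum / Fomin--Korhonen exact step, and on a decomposition of bounded width it runs in $\Oh[k]{n}$ time. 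Finally, from the width-$k$ rank decomposition we apply the Oum--Seymour conversion to obtain the $(2^{k+1}-1)$-expression, again in time linear in $n$ for fixed $k$.

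The main obstacle I expect is the interface between the dynamic data structure's promise and the need to \emph{decide} rankwidth exactly. The dynamic structure only guarantees a width-$4k$ decomposition under the promise; it does not by itself certify $\operatorname{rw}(G) > k$ when the promise fails — it merely may ``give up.'' So one must argue carefully that the event ``the data structure cannot maintain the promise'' is equivalent to (or at least soundly implies) $\operatorname{rw}(G) > k$, and conversely that if $\operatorname{rw}(G) \le k$ the data structure never gives up. This requires that the dynamic algorithm's failure condition be a genuine certificate of large rankwidth, which in turn is where the $4k$ vs.\ $k$ gap and the exact postprocessing on the final width-$4k$ decomposition must do the work: even if intermediate graphs along the insertion order had rankwidth slightly above $k$ (which cannot happen here since all of them are subgraphs of $G$... but subgraphs of bounded-rankwidth graphs need not have bounded rankwidth!), we must choose the insertion order — or the edge-reduction preprocessing — so that every intermediate graph also has rankwidth at most $k$, or else route around this by working with vertex insertions on induced subgraphs, for which rankwidth is monotone. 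Getting this monotonicity right (vertex-deletion/induced-subgraph monotonicity of rankwidth, combined with the right incremental order) is the delicate point, and is presumably why the theorem is phrased with the $\Oh[k]{\cdot}$ hiding potentially large dependence on $k$ coming from the exact postprocessing step.
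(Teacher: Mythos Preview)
Your proposal has a fundamental gap: the sparsity preprocessing step is simply false for rankwidth. You write that one can ``either detect that $m$ is too large relative to $n$ and $k$ for the rankwidth to be at most $k$ \dots\ or one is left with $\Oh[k]{n}$ edges.'' But graphs of bounded rankwidth are \emph{not} sparse --- this is precisely what distinguishes rankwidth from treewidth. The complete graph $K_n$ has rankwidth $1$ and $\binom{n}{2}$ edges; more generally, cographs and distance-hereditary graphs have rankwidth at most $1$ and can be arbitrarily dense. So there is no edge-reduction step, and naively inserting all $m$ edges into the dynamic structure costs $m \cdot 2^{\Oh[k]{\sqrt{\log n \log\log n}}}$, which for dense inputs is $\Theta(n^2) \cdot n^{o(1)}$ --- no better than the previous $\Oh[k]{n^2}$ algorithm and far from the claimed bound. (Your parenthetical ``graphs of bounded rankwidth are sparse in a suitable sense'' is where the confusion with treewidth creeps in.)

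The paper's route around this is genuinely nontrivial and is the content of \Cref{sec:comp}. First it reduces to bipartite graphs via Courcelle's construction $B(G)$. Then, instead of inserting all edges, it exploits a structural fact: bipartite graphs of bounded rankwidth with an unbalanced bipartition contain many twins and near-twins (\Cref{lem:bipartitetwins,lem:neartwins}). This enables a divide-and-conquer on one side of the bipartition, in which at each step the graph is rebuilt not edge-by-edge but by cloning a vertex (free) and then flipping only $\Oh[k]{n}$ adjacencies (the ``Twin Flipping'' problem, \Cref{prob:rwcomp}). The dynamic data structure is used only inside Twin Flipping, where the number of edge updates is linear in $n$, not in $m$; the $\Oh{m}$ term arises solely from a one-time initialization of a string-algorithmic data structure for detecting twins (\Cref{lem:twinds}). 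So the dynamic structure is not the whole algorithm but a subroutine invoked $\Oh[k]{\log^2 n}$ times on instances of linear total size.

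Your secondary worry about monotonicity is real but minor by comparison: rankwidth is monotone under vertex deletion but not edge deletion, and the paper handles the analogous issue inside Twin Flipping by arguing that each intermediate graph is obtained from the target by adding twins and deleting vertices (\Cref{obs:addtwin}), which preserves the rankwidth bound. The exact postprocessing and the conversion to a $(2^{k+1}-1)$-expression are as you describe.
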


\Cref{the:altrw} improves upon the $\Oh[k]{n^2}$ time algorithm of Fomin and Korhonen~\cite{DBLP:conf/stoc/FominK22}, and is a subpolynomial $2^{\sqrt{\log n} \log \log n} = n^{o(1)}$ factor away from concluding the long line of work on computing rankwidth in the setting where $k$ is bounded~\mbox{\cite{OumS06,OumS07,CourcelleO07,Oum08,HlinenyO08,DBLP:journals/siamdm/JeongKO21,DBLP:conf/stoc/FominK22}}.
Moreover, if the average degree of the input graph is higher than $f(k) \cdot 2^{\sqrt{\log n} \log \log n}$ for the function $f(k)$ hidden by the $\Oh[k]{\cdot}$-notation (which is a very natural case when we are interested in rankwidth), then our algorithm works in truly linear $\Oh{m}$ time.
Before further comparing our algorithm to the previous algorithms, let us discuss the motivation for computing rankwidth.

\paragraph{Applications of rankwidth.}
Cliquewidth was introduced by Courcelle, Engelfriet, and Rozenberg~\cite{CourcelleER93} in their study of logic and automata on graphs and was defined in its present form by Courcelle~\cite{DBLP:journals/apal/Courcelle95}.
A closely related parameter \emph{NLC-width} was also investigated by Wanke~\cite{Wanke94}.
Cliquewidth and rankwidth can be regarded as generalizations of treewidth that are suitable for dense graphs: A graph of treewidth $k$ has rankwidth at most $k+1$~\cite{DBLP:journals/jgt/Oum08} and cliquewidth at most $3 \cdot 2^{k-1}$~\cite{DBLP:journals/siamcomp/CorneilR05}, while for example the complete graphs have unbounded treewidth but rankwidth at most $1$.
More generally, $n$-vertex graphs with more than $kn$ edges have treewidth more than $k$, while dense classes of graphs with bounded rankwidth include for example the cographs and the distance-hereditary graphs~\cite{DBLP:journals/ijfcs/GolumbicR00}.

Many graph problems that are NP-hard in general can be solved efficiently on graphs of bounded treewidth using dynamic programming.
The celebrated theorem of Courcelle~\cite{Courcelle90} states that every graph problem expressible in \emph{Counting Monadic Second Order Logic} ($\CMSO_2$) can be solved in $\Oh[k]{n}$ time when a graph is given together with a tree decomposition of width $k$.
Combined with the $\Oh[k]{n}$ time algorithm for computing optimum-width tree decompositions by Bodlaender~\cite{bodlaender-tw-opt}, this yields $\Oh[k]{n}$ time algorithms for many classical NP-hard graph problems on graphs of treewidth~$k$.

The version of Courcelle's theorem for cliquewidth by Courcelle, Makowsky, and Rotics~\cite{CourcelleMR00} states that every graph problem that is expressible in a variant of $\CMSO_2$ called $\CMSO_1$ can be solved in $\Oh[k]{n}$ time when a graph is given together with a $k$-expression for it.
The difference between $\CMSO_1$ and $\CMSO_2$ is that while in $\CMSO_2$ one can quantify over sets of vertices and edges, $\CMSO_1$ allows quantification only over sets of vertices.
$\CMSO_1$ together with its optimization variant called $\LinCMSO_1$~\cite{CourcelleMR00}
captures graph problems such as $k$-colorability for fixed $k$, maximum independent set, maximum clique, minimum dominating set, minimum feedback vertex set, and longest induced path.
It is known that the boundary of tractability for $\CMSO_2$ is characterized precisely by bounded treewidth and the boundary of tractability for $\CMSO_1$ by bounded cliquewidth/rankwidth~\cite{DBLP:journals/apal/Seese91,CourcelleO07,DBLP:conf/lics/KreutzerT10,DBLP:conf/soda/KreutzerT10}.

Our \Cref{the:altrw} combined with the ``Courcelle's theorem for cliquewidth''~\cite{CourcelleMR00} implies the following corollary about algorithms for problems on graphs of bounded rankwidth.
\begin{corollary}
\label{cor:mso}
Every graph problem that can be expressed in $\LinCMSO_1$ logic can be solved in time $\Oh[k]{n \cdot 2^{\sqrt{\log n} \log \log n}}+\Oh{m}$ on graphs of rankwidth $k$.
\end{corollary}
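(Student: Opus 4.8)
The plan is to combine \Cref{the:altrw} with the cliquewidth version of Courcelle's theorem due to Courcelle, Makowsky, and Rotics~\cite{CourcelleMR00} in the most direct possible way. Fix an $n$-vertex, $m$-edge graph $G$ of rankwidth $k$, together with a fixed $\LinCMSO_1$ objective $\psi$ that we wish to evaluate on $G$.

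First, I would invoke the algorithm of \Cref{the:altrw} on $G$ with parameter $k$. Since the rankwidth of $G$ equals $k$, hence in particular is at most $k$, the algorithm cannot report that the rankwidth exceeds $k$; therefore, within time $\Oh[k]{n \cdot 2^{\sqrt{\log n} \log \log n}} + \Oh{m}$, it returns a rank decomposition of $G$ of width at most $k$ and, along with it, a $(2^{k+1}-1)$-expression $\phi$ witnessing that the cliquewidth of $G$ is at most $2^{k+1}-1$. (If the value of $k$ were not supplied as part of the input, one could instead run the algorithm for $k = 1, 2, 4, 8, \dots$, doubling the guess until it succeeds; this changes the running time only by a multiplicative constant depending on~$k$.)

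Second, I would feed $\phi$ into the metatheorem of~\cite{CourcelleMR00}: given a $c$-expression for an $n$-vertex graph and a fixed $\LinCMSO_1$ formula, one can compute the optimum value of the formula, together with a set assignment attaining it, in time $\Oh[c]{n}$, by bottom-up dynamic programming over the expression tree that maintains, for each subexpression, a table of constant size recording the relevant logical types. Instantiating this with $c = 2^{k+1}-1$, a quantity depending only on $k$, this step runs in time $\Oh[k]{n}$.

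Summing the two contributions yields total running time $\Oh[k]{n \cdot 2^{\sqrt{\log n} \log \log n}} + \Oh{m} + \Oh[k]{n} = \Oh[k]{n \cdot 2^{\sqrt{\log n} \log \log n}} + \Oh{m}$, which is the claimed bound. There is no genuine obstacle beyond correctly invoking the two black boxes; the only point that needs a moment's attention is that \Cref{the:altrw} is phrased as a procedure that either outputs a decomposition (and an expression) or reports that the rankwidth exceeds $k$, so one must observe that the hypothesis ``$G$ has rankwidth $k$'' forces the procedure into the branch that actually produces the $(2^{k+1}-1)$-expression, which is precisely the input that the algorithm of~\cite{CourcelleMR00} consumes.
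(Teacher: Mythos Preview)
Your proposal is correct and matches the paper's own justification: the corollary is presented in the paper as an immediate consequence of \Cref{the:altrw} combined with the Courcelle--Makowsky--Rotics metatheorem~\cite{CourcelleMR00}, and your two-step argument (produce a $(2^{k+1}-1)$-expression via \Cref{the:altrw}, then run the $\LinCMSO_1$ dynamic program of~\cite{CourcelleMR00}) is exactly that.
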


In particular, thanks to our \Cref{the:altrw}, all aforementioned NP-hard graph problems can be solved in time $\Oh[k]{n \cdot 2^{\sqrt{\log n} \log \log n}}+\Oh{m}$ on graphs of rankwidth $k$, improving upon previous $\Oh[k]{n^2}$ time.
Beyond optimization, Courcelle's theorem for cliquewidth has been extended also to counting problems~\cite{DBLP:journals/dam/CourcelleMR01}.
In addition to these meta-theorems, there has been a significant amount of work in designing hand-crafted algorithms working on rank decompositions and cliquewidth expressions for various problems, see e.g.~\cite{DBLP:journals/dam/FischerMR08,DBLP:journals/dam/Bui-XuanTV10,DBLP:journals/dam/GanianH10,DBLP:journals/talg/CoudertDP19,DBLP:journals/siamdm/Lampis20,DBLP:journals/tocl/GroheN23}.

\paragraph{Computing rankwidth and cliquewidth.}
When Oum and Seymour~\cite{OumS06} introduced rank\-width, they also gave an algorithm for $3$-approximating rankwidth in $\Oh{8^k n^9 \log n}$ time, which by the relation between rankwidth and cliquewidth implied a $2^{\Oh{k}}$-approximation for cliquewidth with the same running time.
Their algorithm works in the general setting of branchwidth of connectivity functions.
This setting not only captures rankwidth, but also branchwidth of (hyper)graphs and matroids~\cite{DBLP:journals/jct/RobertsonS91}, and carving width~\cite{DBLP:journals/combinatorica/SeymourT94}.
This was the first $f(k)$-approximation for cliquewidth with running time $f(k) n^{\Oh{1}}$ for any function $f(k)$, and even presently, to the best of our knowledge, all known algorithms for approximating cliquewidth work via rankwidth.
Computing cliquewidth was shown to be NP-complete by Fellows, Rosamond, Rotics, and Szeider~\cite{DBLP:journals/siamdm/FellowsRRS09} and the NP-completeness of rankwidth was observed by Oum~\cite{Oum08}.

Oum gave in~\cite{Oum08} two algorithms improving the running time of the Oum-Seymour $3$-ap\-prox\-imation, first running in $\Oh{8^k n^4}$ time and second running in $\Oh[k]{n^3}$ time.
The latter algorithm combines the approach of~\cite{OumS06} with a~related work of Hlin{\v{e}}n{\'y} that provides a~$3$-approximation of the branchwidth of matroids in $\Oh[k]{n^3}$ time~\cite{Hlineny05}.
An exact algorithm computing rankwidth in time $n^{\Oh{k}}$ was given by Oum and Seymour~\cite{OumS07}, and in time $\Oh[k]{n^3}$ by Courcelle and Oum~\cite{CourcelleO07} by using vertex-minors.
The algorithm of Courcelle and Oum does not provide the corresponding rank decomposition, but this caveat was removed by Hlin\v{e}n{\'y} and Oum~\cite{HlinenyO08} by giving a $\Oh[k]{n^3}$ time exact algorithm that also constructs the rank decomposition.
An alternative constructive $\Oh[k]{n^3}$ time exact algorithm was developed by Jeong, Kim, and Oum~\cite{DBLP:journals/siamdm/JeongKO21}, who gave an algorithm that directly constructs an optimum-width rank decomposition by dynamic programming on rank decompositions, analogous to the algorithm of Bodlaender and Kloks for treewidth~\cite{DBLP:journals/jal/BodlaenderK96}.
Their algorithm extends to the general setting of branchwidth of ``subspace arrangements'' over finite fields.
In 2017 Oum asked whether there exists an $\Oh[k]{n^c}$ time (approximation) algorithm for rankwidth for $c<3$~\cite{Oum17}, which was answered affirmatively by Fomin and Korhonen with an $\Oh[k]{n^2}$ time algorithm~\cite{DBLP:conf/stoc/FominK22}.
Our \Cref{the:altrw} further improves upon their algorithm.
We give an overview of the algorithms for rankwidth in \Cref{table:rw_history}.

\begin{table}[!t]
\begin{center}
\begin{tabular}{|c|c|c|c|}
\hline
Reference & APX & TIME & Remarks \\ \hline
\cite{OumS06} & $3k+1$ & $\Oh{8^k n^9\log{n}} $ & Works for connectivity functions\\
\cite{OumS07} & exact & $\Oh{n^{8k+12} \log n}$ & Works for connectivity functions\\
\cite{Oum08} & $3k+1$ & $\Oh{8^k n^4}$ & \\
\cite{Oum08} & $3k-1$ & $\Oh[k]{n^3}$ & \\
\cite{CourcelleO07} & exact & $ \Oh[k]{n^3} $ & Does not provide a decomposition\\ 
\cite{HlinenyO08} & exact & $ \Oh[k]{n^3} $ & \\
\cite{DBLP:journals/siamdm/JeongKO21} & exact & $\Oh[k]{n^3}$ & Works for spaces over finite fields\\ 
\cite{DBLP:conf/stoc/FominK22} & exact &  $\Oh[k]{n^2}$ & \\
This paper & exact & $\Oh[k]{n \cdot 2^{\sqrt{\log n} \log \log n}} + \Oh{m}$ & \\
\hline
\end{tabular}
\end{center}
\caption{Overview of algorithms for computing rankwidth.
Here $n$ is the number of vertices, $m$ is the number of edges, and $k$ is the rankwidth of the input graph.
Unless otherwise specified, each of the algorithms outputs in $\Oh{\text{TIME}}$ a decomposition of width given in the APX column.
All the functions on $k$ hidden by the $\Oh[k]{\cdot}$-notation here are at least double-exponential. Most of this table is from a similar table in~\cite{DBLP:conf/stoc/FominK22}.}
\label{table:rw_history}
\end{table}

\paragraph{Dynamic rankwidth.}
As both a main ingredient in proving \Cref{the:altrw} and a contribution of independent interest, we give a data structure for efficiently maintaining rank decompositions of dynamic graphs under edge insertions and deletions, under the promise that the rankwidth of the graph never grows above a given parameter $k$.
The data structure can also maintain any finite-state dynamic programming scheme on the rank decomposition.
We formalize this by stating that it can maintain the value of any $\LinCMSO_1$ sentence on the graph.
In particular, we prove the following.

\begin{theorem}
\label{the:dynsimple}
There is a data structure that is initialized with an integer $k$ and an empty $n$-vertex dynamic graph $G$, and maintains a rank decomposition of $G$ of width at most $4k$ under edge insertions and deletions, under the promise that the rankwidth of $G$ never exceeds $k$.
The amortized initialization time is $\Oh[k]{n \log^2 n}$ and the amortized update time is $2^{\Oh[k]{\sqrt{\log n \log \log n}}}$.
Furthermore, when initialized with a $\LinCMSO_1$ sentence $\varphi$ of constant length, the data structure maintains the value of $\varphi$ on $G$.
\end{theorem}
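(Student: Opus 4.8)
The plan is to adapt the dynamic treewidth data structure of Korhonen, Majewski, Nadara, Pilipczuk, and Soko\l{}owski to rank decompositions: the role played there by adhesions of a tree decomposition is taken over by \emph{cuts} of the rank decomposition, measured by the $\GF(2)$-cutrank connectivity function, and the role of finite-state dynamic programming over a tree decomposition is taken over by the automata-style dynamic programming over rank decompositions used by Fomin and Korhonen. The data structure stores a \emph{rooted} rank decomposition $(T,\lmap)$ of $G$ of width at most $4k$, and at every node keeps a bounded-size annotation: the cutrank of the cut induced at that node, a set of $\GF(2)$-representatives encoding the neighbourhood types of the two sides of the cut with respect to each other (there are at most $2^{4k}$ of them, since a cut of cutrank at most $4k$ spans a $\GF(2)$-space of dimension at most $4k$), and the state of a fixed finite automaton equivalent to the input $\LinCMSO_1$ sentence $\varphi$. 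The constant $4$ in the width bound is the slack that the local re-decomposition subroutine is allowed to lose, in the same spirit in which the dynamic treewidth algorithm maintains width $\Oh{k}$ rather than exactly $k$.

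The heart of the construction is the \Refine\ operation. An edge insertion or deletion $uv$ changes the cutrank of every cut of $(T,\lmap)$ separating $\lmap(u)$ from $\lmap(v)$ by at most one, so the width may rise to $4k+1$ along the corresponding path, and repeated updates also gradually unbalance $T$. \Refine\ repairs this by selecting a small piece of the decomposition around the damaged path --- a rooted subtree, or a bounded-``weight'' prefix --- contracting everything outside it into boundary leaves carrying the annotation above, and recomputing, by a near-linear-time dynamic program over rank decompositions, a rank decomposition of width at most $4k$ of this boundaried instance; such a decomposition exists because, by the promise, $G$ admits a rank decomposition of width at most $k$, from which one extracts a boundary-respecting one of width at most $k$ for the piece. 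Correctness of re-gluing rests on the annotation being a faithful and composable summary of the discarded part as seen from the boundary --- a partition of the boundary into at most $2^{4k}$ neighbourhood-type classes together with a bilinear form of rank at most $4k$ --- which is the rankwidth analogue of the statement that a part of a tree decomposition interacts with the rest only through its adhesion. The automaton state, and for $\LinCMSO_1$ the accompanying optimisation value, are recomputed inside the refined piece and are unchanged elsewhere, since they combine associatively along cuts.

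For the running time I would run an amortized analysis through a potential $\Phi(T,\lmap)$ that penalises both excess width and imbalance of $T$. An edge update raises $\Phi$ by a bounded amount, whereas each \Refine\ is triggered by a local witness that $\Phi$ is large (width exceeded at a node, or a node too deep relative to the size of its subtree) and is engineered to lower $\Phi$ by at least a constant fraction of its own running time. As in the dynamic treewidth analysis, the two parameters at our disposal --- the height to which the maintained decomposition is kept balanced, and the size and hence the frequency of the rebuilt pieces --- are tuned against each other: a larger target height makes \Refine\ rarer but more expensive, since its dynamic-programming recomputation cost grows with the height, and balancing the two optimises to refined pieces of size $2^{O(\sqrt{\log n \log\log n})}$ and the claimed amortized update time $2^{\Oh[k]{\sqrt{\log n \log \log n}}}$. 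The $\Oh[k]{n\log^2 n}$ initialisation time is just the cost of building a balanced width-$0$ rank decomposition of the empty $n$-vertex graph and populating the annotations.

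I expect the main obstacle to be the \Refine\ operation itself. For tree decompositions the interface between a part and the rest is a small vertex set; for rank decompositions it is instead an equivalence relation on the boundary together with a low-rank bilinear form, so both the right definition of a boundaried rank-decomposition instance and the near-linear-time algorithm that rebuilds a bounded-width decomposition respecting such a boundary --- extending the Fomin--Korhonen dynamic program and its representative bookkeeping to the boundaried setting --- are considerably more delicate than their treewidth analogues. Submodularity and symmetry of the cutrank function enter here repeatedly, in particular to produce the balanced cuts needed for the height reduction performed inside \Refine. Once \Refine\ and its local-composable-annotation guarantee are established, the potential-based amortization carries over from the treewidth setting with only quantitative changes, and the $\LinCMSO_1$ bookkeeping rides along for free.
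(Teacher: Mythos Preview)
Your high-level architecture matches the paper: maintain a rooted \emph{annotated} rank decomposition of width $\le 4k$ and subpolynomial height, repair width and height via a local \Refine{} operation, amortise via a potential mixing width and height, and reuse the height-reduction trade-off from the dynamic-treewidth paper to get the $2^{\Oh[k]{\sqrt{\log n\log\log n}}}$ bound. The $\LinCMSO_1$ bookkeeping riding along at every node is also how the paper does it.

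The gap is in \Refine, and it is not only a matter of filling in details. Your description --- contract each appendix subtree of the chosen prefix $\Tpref$ into a single boundary leaf, then recompute an optimal decomposition of the resulting small boundaried instance --- does not by itself supply the potential drop needed for the amortisation: if the subtrees outside $\Tpref$ are left intact, rebuilding the prefix costs time proportional to $|\Tpref|$ but does not decrease any contribution outside $\Tpref$, and there is nothing to pay for height-reduction refinements whose running time can far exceed $|\Tpref|$. The paper's resolution is a new combinatorial object, a \emph{$k$-closure} of $\Tpref$: a partition $\prt$ of $V(G)$ refining the appendix subtrees so that (i) each subtree is split into at most $c=\Oh[\ell]{1}$ parts --- this requires a \emph{Dealternation Lemma} for rankwidth, an analogue of the Boja\'nczyk--Pilipczuk lemma proved on top of the Jeong--Kim--Oum structure theory for totally pure decompositions --- and (ii) a \emph{minimal} such closure is \emph{linked}, which via submodularity of $\cutrk$ forces each of those $\le c$ pieces to have \emph{strictly smaller} cutrank than the node it came from. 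The potential is then specifically $\sum_x (2c)^{\width(x)}\cdot\height(x)$: the exponential in width guarantees that replacing one node of $\cut(\prt)\setminus\Tpref$ by $\le c$ nodes of strictly smaller width at least halves its contribution, and this is what pays for the running time of \Refine. Your ``boundaried rank-decomposition instance'' is precisely the partitioned graph $(G[\prt],\prt)$; finding a minimal closure and a width-$2k$ decomposition of it in the stated time further requires recasting the Jeong--Kim--Oum exact-rankwidth DP as a rank-decomposition automaton working over the annotated decomposition. These three ingredients --- the Dealternation Lemma, closure linkedness, and the exponential-in-width potential --- are the main technical content of the paper and are not suggested by the treewidth analogue you are porting from.
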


We observe that $2^{\Oh[k]{\sqrt{\log n \log \log n}}} \le \Oh[k]{2^{\sqrt{\log n} \log \log n}}$ by a simple tradeoff trick.
We stated \Cref{the:altrw} in the latter form for simplicity.
We also note that \Cref{the:dynsimple} immediately implies a $4$-approximation algorithm for rankwidth with time complexity $\Oh[k]{(n+m) \cdot 2^{\sqrt{\log n} \log \log n}}$, simply by inserting the edges of the input graph one by one into the data structure an returning the final rank decomposition\footnote{Note that the edges can be inserted in an order so that the graph held by the data structure never has rankwidth more than $k+1$, assuming the input graph has rankwidth at most $k$. This happens, for instance, if the edges are inserted in the lexicographic order: the edge $uv$ is inserted before the edge $u'v'$ if either $\min(u, v) < \min(u', v')$, or $\min(u, v) = \min(u', v')$ and $\max(u, v) < \max(u', v')$.}.
To prove \Cref{the:altrw}, we use a bit more technical version of \Cref{the:dynsimple} where we assert that the rank decomposition held by the data structure is represented as an ``annotated rank decomposition'' (defined in \Cref{sec:annot}).
Even after this the proof of \Cref{the:altrw} requires non-trivial additional work.

A theorem similar to \Cref{the:dynsimple}, but for treewidth and $\CMSO_2$ instead of rankwidth and $\CMSO_1$, was recently given by Korhonen, Majewski, Nadara, Pilipczuk, and Soko\l{}owski~\cite{dyntw}.
As rankwidth generalizes treewidth, the high-level approach of our \Cref{the:dynsimple} is similar to the high-level approach of the data structure of~\cite{dyntw} (yielding the similar running times).
However, making the approach work for rankwidth requires developing an extensive amount of new machinery for rankwidth, with several new algorithmic and structural insights.
We provide a comparison between techniques in \Cref{the:dynsimple} and in the result of~\cite{dyntw} at the end of \Cref{ssec:ov:dynrw}.
We note that also formally speaking, \Cref{the:dynsimple} is a generalization the result of~\cite{dyntw}: The setting of treewidth and $\CMSO_2$ logic can be reduced to the setting of rankwidth and $\CMSO_1$ logic by considering instead of a graph $G$ the graph $G'$ obtained by subdividing every edge of $G$ once and adding two degree-1 vertices adjacent to each non-subdivision vertex.
Then every edge update of $G$ can be simulated by two edge updates of $G'$, the rankwidth of $G'$ is at most the treewidth of $G$ plus one, and every $\CMSO_2$ sentence about $G$ can be translated into a $\CMSO_1$ sentence about $G'$.

An apparent caveat in the statement of \Cref{the:dynsimple} is that even though the $n$-vertex complete graph has rankwidth $1$, it takes $\Omega(n^2)$ edge insertion operations to build it in the data structure.
To address this caveat, we introduce a framework for dense updates that can manipulate many edges efficiently.
For example, given two sets of vertices $A$ and $B$, we can add all possible edges between $A$ and $B$ in $|A \cup B| \cdot 2^{\Oh[k]{\sqrt{\log n \log \log n}}}$ amortized time.
More generally, we define that an \emph{edge update sentence} is a tuple $\eudesc = (\varphi, X, X_1, \ldots, X_p)$, where $\varphi$ is a $\CMSO_1$ sentence with $p+1$ free set variables, and $X_i \subseteq X \subseteq V(G)$.
Such edge update sentence re-defines all adjacencies inside the induced subgraph $G[X]$ by setting an edge between $u,v \in X$ if and only if $G$, together with the interpretations of the $p+1$ free variables as $(\{u,v\}, X_1, \ldots, X_p)$, satisfies $\varphi$.
Then we define that $|\eudesc| = |X|$ and that the length of $\eudesc$ is the length of $\varphi$.
Now, \Cref{the:dynsimple} can be generalized as follows.

\begin{theorem}
\label{the:dynfull}
The data structure of \Cref{the:dynsimple}, when furthermore initialized with a given integer $d$, can also support the following operations:
\begin{itemize}
\item $\mathsf{Update}(\eudesc)$: Given an edge update sentence $\eudesc$ of length at most $d$, either returns that the graph resulting from applying $\eudesc$ to $G$ would have rankwidth more than $k$, or applies $\eudesc$ to update $G$. Runs in $|\eudesc| \cdot 2^{\Oh[k,d]{\sqrt{\log n \log \log n}}}$ amortized time.
\item $\LinCMSO_1(\varphi,X_1,\ldots,X_p)$: Given a $\LinCMSO_1$ sentence $\varphi$ of length at most $d$ with $p$ free set variables and $p$ vertex subsets $X_1,\ldots,X_p \subseteq V(G)$, returns the value of $\varphi$ on $(G,X_1,\ldots,X_p)$. Runs in time $\Oh[d]{1}$ if $X_1,\ldots,X_p = \emptyset$, and in time $\sum_{i=1}^p |X_i| \cdot 2^{\Oh[k,d]{\sqrt{\log n \log \log n}}}$ otherwise.
\end{itemize}
\end{theorem}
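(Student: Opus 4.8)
The plan is to build both operations on top of the annotated rank decomposition $(T,\lmap)$ of width at most $4k$ maintained by the data structure of \Cref{the:dynsimple}, augmented --- now that the length bound $d$ is available at initialization --- with the run of the \emph{universal $d$-automaton} $\autom_d$, whose state at a node encodes the $\CMSO_1$-theory up to quantifier-complexity $d$ of the piece of $G$ described by the subtree below that node. I will assume, as delivered by the machinery of the preceding sections, that this augmented data structure (i) maintains the $\autom_d$-run on (the balanced representation of) $(T,\lmap)$, whose depth is $2^{\Oh[k]{\sqrt{\log n\log\log n}}}$; (ii) given a set $Y$ of leaves carrying new annotations, recomputes the resulting $\autom_d$-run --- in particular the new root state --- in $|Y|\cdot 2^{\Oh[k]{\sqrt{\log n\log\log n}}}$ amortized time; and (iii) for a set $Y$ of leaves, reports the $\autom_d$-summary of the minimal subtree of $T$ spanned by $Y$ together with its interface to the rest of $T$, in $|Y|\cdot 2^{\Oh[k]{\sqrt{\log n\log\log n}}}$ amortized time. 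Everything below reduces to these two themes: running $\autom_d$ with a few non-default leaf annotations, and re-decomposing the minimal subtree of $T$ spanned by a given vertex set.

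\emph{The $\LinCMSO_1$ query.} If $X_1=\dots=X_p=\emptyset$, the $\autom_d$-state at the root of $(T,\lmap)$ already encodes the $d$-theory of $G$, and the value of the given length-$\le d$ sentence $\varphi$ on $G$ is extracted from it in $\Oh[d]{1}$ time. If some $X_i$ is non-empty, I would annotate each leaf $\lmap(v)$ with the set $\{\, i : v\in X_i \,\}$ and invoke assumption (ii) on $Y=\lmap\!\big(\bigcup_i X_i\big)$: only the $\Oh[k]{\sum_i|X_i|}$ strict ancestors of annotated leaves can acquire an $\autom_d$-state differing from the maintained one, so the recomputation costs $\sum_i|X_i|\cdot 2^{\Oh[k,d]{\sqrt{\log n\log\log n}}}$, the new root state encodes the $d$-theory of $(G,X_1,\dots,X_p)$, and the value of $\varphi$ is read off it.

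\emph{The $\mathsf{Update}$ operation.} Write $\eudesc=(\varphi,X,X_1,\dots,X_p)$ and let $\wh G$ be the graph obtained by applying $\eudesc$. The first step is to replace the (arbitrarily large) ``outside of $X$'' by a bounded gadget: since the $\CMSO_1$-theory up to complexity $d$ over graphs of rankwidth $\le 4k$ is finite-state, I would compute from $(T,\lmap)$, via assumption (iii) applied to $\lmap(X)$, a labelled graph $H$ on $\Oh[k,d]{|X|}$ vertices of rankwidth $\Oh[k,d]{1}$ containing $X$, together with a fixed equivalent sentence $\varphi'$, such that $(G,\{u,v\},X_1,\dots,X_p)\models\varphi \iff (H,\{u,v\})\models\varphi'$ for all $u,v\in X$; this is exactly the representative/reduction construction developed earlier, and it runs in $|X|\cdot 2^{\Oh[k,d]{\sqrt{\log n\log\log n}}}$ time. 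The edge set of $\wh G[X]$ is then a non-copying $\MSO$-transduction of $H$, so the effective transduction theorem for bounded cliquewidth yields, in $\Oh[k,d]{|X|}$ time, a rank decomposition $(T_X,\lmap_X)$ of $\wh G[X]$ of width $\Oh[k,d]{1}$. It remains to splice $\wh G[X]$ into $(T,\lmap)$. Since $G$ and $\wh G$ differ only inside $X$, the only edges of $T$ whose width can change are those on the minimal subtree $S$ spanning $\lmap(X)$, and by the depth bound $|S|\le |X|\cdot 2^{\Oh[k]{\sqrt{\log n\log\log n}}}$; I would then invoke the refinement engine of the preceding sections on $S$, its $\Oh[k]{|X|}$ boundary adhesions, and the new internal adjacencies encoded by $(T_X,\lmap_X)$, so that it either re-decomposes $S$ to make the whole tree have width at most $4k$ --- propagating the change through the maintained automata via their incremental interface --- or certifies that the rankwidth of $\wh G$ exceeds $k$, in which case the partial work is undone and this is reported. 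With the usual amortized accounting the cost is $|X|\cdot 2^{\Oh[k,d]{\sqrt{\log n\log\log n}}} = |\eudesc|\cdot 2^{\Oh[k,d]{\sqrt{\log n\log\log n}}}$.

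\emph{Main obstacle.} The automata bookkeeping and the constructions of $H$ and of $(T_X,\lmap_X)$ are routine applications of finite-state theory, of the effective transduction theorem, and of machinery already in place. The real work is the last step of $\mathsf{Update}$: the refinement engine of \Cref{the:dynsimple} is designed to absorb single-edge updates, and one must show it degrades gracefully to a \emph{bulk patch} whose affected subtree $S$ carries $\Theta(|X|)$ terminals, paying only $|X|\cdot n^{o(1)}$ rather than $\Theta(|X|^2)\cdot n^{o(1)}$. Concretely this requires: (i) that the amortized potential argument of \Cref{the:dynsimple} charges a size-$|X|$ bulk update like $|X|$ unit updates; (ii) an exact refinement primitive that, given a subtree with $t$ terminals, bounded adhesion, and a prescribed bounded-width decomposition of its internal graph, runs in $t\cdot n^{o(1)}$ and certifies that the rankwidth exceeds $k$ whenever width $4k$ is unattainable; and (iii) a batched version of the automata maintenance. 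I expect the bulk of the proof of \Cref{the:dynfull} to be the verification of (i)--(iii) against the data structure constructed in the earlier sections, rather than a genuinely new idea.
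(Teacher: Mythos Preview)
Your $\LinCMSO_1$ query is essentially what the paper does (\Cref{lem:lincmsoprds}): maintain, for every length-$\le d$ sentence, the automaton of \Cref{lem:lincmsordaut} via \Cref{lem:automatonprds}; answer empty-set queries from the stored root state, and for non-empty $X_i$ temporarily relabel the affected leaves, re-evaluate along root-paths, and reset.

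For $\mathsf{Update}(\eudesc)$ you take a substantially different route, and the obstacle you flag --- making the refinement engine absorb a spliced-in decomposition $(T_X,\lmap_X)$ of $\wh G[X]$ with $\Theta(|X|)$ terminals --- is real for your approach but is \emph{avoided entirely} by the paper's. The observation you are missing is that the existing tree $T$, with its leaf mapping untouched, is already a rank decomposition of $\wh G$ of width $\Oh[k,d]{1}$; no separate decomposition of $\wh G[X]$ and no splicing are needed. This is \Cref{lem:mso_edge_update}: one writes a sentence $\varphi'$ with $(G,\{u,v\},X,X_1,\dots,X_p)\models\varphi'$ iff $uv\in E(\wh G)$, builds the automaton $\autom_{\varphi'}$ of \Cref{lem:cmsordaut} with finite state set $Q$, and takes as the new representative set at each oriented edge $\oxy$ the set $\{f_{\oxy}(q):q\in Q\}\setminus\{\bot\}$, where $f_{\oxy}(q)$ is any vertex in $\lparts(\Tc)[\oxy]$ on which the run reaches state $q$. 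Two vertices reaching the same state on $\oxy$ have the same $\wh G$-neighbourhood across the cut, so this is a representative of $\lparts(\Tc)[\oxy]$ in $\wh G$ of size at most $|Q|=\Oh[\varphi,\ell]{1}$. The result is an \emph{edge update description} of bounded width on the prefix $\Tpref$ spanning $\lmap(X)$, applied as a single prefix-rebuilding update; then one call to $\mathsf{Refine}(\Tpref)$ (which either restores width $\le 4k$ or reports rankwidth $>k$, whereupon the update is reverted) followed by $\mathsf{ImproveHeight}()$ suffices, and the amortized analysis of \Cref{lem:finaldynrwds} goes through verbatim with $|\Tpref|\le h\cdot|\eudesc|$. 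Your gadget $H$, the transduction theorem, and the generalised bulk-splice refinement primitive are all unnecessary.
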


We included also the $\LinCMSO_1(\varphi,X_1,\ldots,X_p)$ operation to the statement of \Cref{the:dynfull} to allow determining whether two given vertices $u$ and $v$ are adjacent in $2^{\Oh[k,d]{\sqrt{\log n \log \log n}}}$ time, as this becomes a non-trivial problem in this setting.
We discuss further extensions of \Cref{the:altrw} and \Cref{the:dynsimple} in \Cref{sec:concl}.

\paragraph{Organization.}
We start by giving an overview of our proofs in \Cref{sec:overview}.
We discuss notation and preliminary results in \Cref{sec:preli}.
Then, we give our framework of annotated rank decompositions and prefix-rebuilding operations in \Cref{sec:annot}.
In \Cref{sec:refi} we give the main tools for maintaining rank decompositions of dynamic graphs, although delaying significant ingredients to \Cref{sec:dealternation,sec:computing-closures}.
In \Cref{sec:rwautom} we introduce rank decomposition automata and give results about them.
Then, in \Cref{sec:dynrw} we finish the proofs of \Cref{the:dynsimple,the:dynfull}.
In \Cref{sec:comp} we prove \Cref{the:altrw}.
Then in \Cref{sec:dealternation} we prove a result called ``Dealternation Lemma'', which is used in \Cref{sec:refi}.
In \Cref{sec:computing-closures} we give results related to computing exact rankwidth by dynamic programming, which are used in \Cref{sec:refi,sec:dynrw,sec:comp}.
Finally, we conclude in \Cref{sec:concl}.

\section{Overview}
\label{sec:overview}
In this section we give an overview of our algorithms.
We start by giving an overview of the proof of \Cref{the:dynsimple} in \Cref{ssec:ov:dynrw}.
We omit many important ingredients, of which two major ones we overview in \Cref{ssec:ov:dealt,ssec:overview-optimum-width}.
We overview the proof of \Cref{the:altrw} in \Cref{ssec:ov:altrw}.

\subsection{Dynamic rankwidth}
\label{ssec:ov:dynrw}
Suppose we maintain a dynamic $n$-vertex graph $G$ under edge insertions and deletions, and the rankwidth of $G$ is guaranteed to stay at most $k$.
Our goal will be to maintain a \emph{rooted rank decomposition} $\Tc = (T,\lmap)$ of $G$ of width at most $4k$ and height at most $h = 2^{\Oh[k]{\sqrt{\log n \log \log n}}}$.
Rooted rank decompositions are defined like rank decompositions, except the tree $T$ is a rooted binary tree.
Even the existence of rank decompositions with such parameters is not immediately obvious, but indeed Courcelle and Kant{\'{e}}~\cite{DBLP:conf/wg/CourcelleK07} show that a rank decomposition of width $k$ can be turned into a rooted rank decomposition of width at most $2k$ and height $\Oh{\log n}$.

It is essential for our algorithm to also maintain dynamic programming schemes on the rank decomposition.
We need this to support both the $\LinCMSO_1$ queries and various internal operations of our data structure.
We will formalize dynamic programming as automata processing the tree $T$, so that the state of a node can be computed in $\Oh[k]{1}$ time from the states of its children.
For this, we need to store additional information about the graph $G$ in the rank decomposition, for which we next define \emph{annotated rank decompositions}.

For an edge $xy$ of the tree $T$, we denote by $\lparts(\Tc)[\oxy] \subseteq V(G)$ the vertices of $G$ that are mapped to leaves of $T$ closer to $x$ than $y$.
If $\Tc$ has width $\le 4k$, there exists a set $R \subseteq \lparts(\Tc)[\oxy]$ with $|R| \le 2^{4k}$ so that for every $v \in \lparts(\Tc)[\oxy]$ exists $r \in R$ so that $N(r) \setminus \lparts(\Tc)[\oxy] = N(v) \setminus \lparts(\Tc)[\oxy]$.
We define that such $R$ is a \emph{representative} of $\lparts(\Tc)[\oxy]$, and a minimal such $R$ a \emph{minimal representative} of $\lparts(\Tc)[\oxy]$.
An annotated rank decomposition stores for every oriented edge $\oxy$ of $T$ a minimal representative $\reps(\oxy)$ of the set $\lparts(\Tc)[\oxy]$.
It also stores for every edge $xy$ of $T$ the bipartite graph $\repse(xy) = G[\reps(\oxy), \reps(\oyx)]$, encoding the adjacencies between $\lparts(\Tc)[\oxy]$ and $\lparts(\Tc)[\oyx]$.
Furthermore, for every (oriented) path $xyz$ of length $3$ in $T$, it stores the function $\dmap(xyz) \colon \reps(\oxy) \to \reps(\oyz)$, mapping each $v \in \reps(\oxy)$ to $r \in \reps(\oyz)$ so that $N(r) \setminus \lparts(\Tc)[\oyz] = N(v) \setminus \lparts(\Tc)[\oyz]$.
It can be shown that an annotated rank decomposition of $G$ uniquely defines the graph $G$.
Indeed in our algorithm we do not store the graph $G$ explicitly; we only maintain an annotated rank decomposition of it.

Then, the slightly more formal definition of a \emph{rank decomposition automaton} is that it is a tree automaton working on $T$, where the state of a node $x$ can be computed from the states of its children and the annotations $\reps,\repse,\dmap$ around $x$ in $\Oh[k]{1}$ time.
Note that the total size of the annotations around $x$ is $\Oh[k]{1}$.
A significant part of this article is to show that various dynamic programming routines on rank decompositions and cliquewidth expressions can be formulated as rank decomposition automata.
The formal definitions about annotated rank decompositions are in \Cref{sec:annot} and about rank decomposition automata in \Cref{sec:rwautom}.

After this detour to dynamic programming, let us return to the problem of dynamic maintenance of the annotated rank decomposition $\Tc$ of $G$.
We say that a set $\Tpref \subseteq V(T)$ is a \emph{prefix} of $T$ if it induces a connected subtree of $T$ that contains the root.
Suppose $\Tc$ has width at most $4k$ and height at most $h$, and there comes an update to insert or delete an edge between two vertices $u$ and $v$, which turns $G$ into $G'$.
Let $\Tpref$ be the minimal prefix of $T$ that contains $\lmap(u)$ and $\lmap(v)$.
We have that $|\Tpref| \le 2 \cdot h \le 2^{\Oh[k]{\sqrt{\log n \log \log n}}}$.
Now, we can turn $\Tc$ into an annotated rank decomposition of $G'$ by only re-computing annotations inside $\Tpref$, which can be done in $\Oh[k]{|\Tpref|}$ time.
The states of the maintained automata also need to be re-computed only for nodes in $\Tpref$, which also works in $\Oh[k]{|\Tpref|}$ time.
Therefore, we manage to update $\Tc$ in the desired time bound.
The only issue is that the width of $\Tc$ could increase in this process.

We observe that the width of $\Tc$ can increase only by one, and moreover, only the widths of edges of $\Tc$ inside $\Tpref$ can increase, so suppose now that $\Tc$ has width $4k+1$ and all edges of width $4k+1$ are inside $\Tpref$.
For convenience, we remove $\lambda(u)$ and $\lambda(v)$ from $\Tpref$ so that all leaves of $\Tc$ are outside of $\Tpref$; this maintains that all edges of width $4k+1$ are inside $\Tpref$ since the edges of $\Tc$ incident to $\lambda(u)$ and $\lambda(v)$ have width at most $1$ at all times.
To reduce the width, we design a \emph{refinement operation}, that takes as input a~prefix $\Tpref$ of $\Tc$ not containing any leaves of $\Tc$ such that all edges incident to a~vertex outside of $\Tpref$ have width at most $4k$ and, in some sense, locally re-computes the decomposition for the prefix $\Tpref$.
More accurately, the goal is that applying refinement to $\Tpref$ reduces the width back to at most $4k$, assuming $G$ has rankwidth at most $k$, and runs in amortized time proportional to $|\Tpref|$.
The refinement can increase the height of $\Tc$ by $\Oh{\log n}$.
To not allow the height of $\Tc$ to spiral out of control, we then use a \emph{height reduction scheme}, that by repeatedly applying refinement decreases the height back to at most~$h$.


Then we delve into the details of the refinement operation.
Let $\Tpref$ be the given prefix, and let us define $\oApp(\Tpref)$ as the set of oriented edges $\oxy$ of $T$ with $x \notin \Tpref$ and  $y \in \Tpref$.
Note that for every $v \in V(G)$ there exists a~unique edge $\oxy \in \oApp(\Tpref)$ such that $v \in \lparts(\Tc)[\oxy]$.
Then we say that a \emph{closure} of $\Tpref$ is a partition $\prt$ of $V(G)$, so that for each $C \in \prt$ there exists $\oxy \in \oApp(\Tpref)$ with $C \subseteq \lparts(\Tc)[\oxy]$.
A rank decomposition of a closure $\prt$ is a pair $\Tc^{\star} = (T^{\star},\lmap^{\star})$, where $T^{\star}$ is a cubic tree with $|\prt|$ leaves and $\lmap^{\star}$ is a bijection from $\prt$ to the leaves of $T^{\star}$.
The width of $\Tc^{\star}$ can be naturally defined analogously to the definition for rank decompositions of graphs.
Before giving any arguments about how to find closures $\prt$ with desirable properties, let us describe how the refinement operation uses a closure $\prt$ to transform $\Tc$ into a new annotated rank decomposition $\Tc'$.

Assume $\Tc^{\star}$ is a rank decomposition of $\prt$ of width at most $2k$.
We use the method of~\cite{DBLP:conf/wg/CourcelleK07} to turn $\Tc^{\star}$ into a rooted rank decomposition $\Tc^{\star\star}$ of width at most $4k$ and height at most $\Oh{\log n}$.
Then, for each $C \in \prt$, we construct from $\Tc$ a rooted rank decomposition $\Tc^{C}$ with $|C|$ leaves corresponding to $C$ by repeatedly deleting all leaves not corresponding to vertices in $C$ and contracting degree-2 nodes.
In particular, if $\Tc$ contains an edge $xy$ with $\lparts(\Tc)[\oxy] \cap C \neq \emptyset$ and $\lparts(\Tc)[\oyx] \cap C \neq \emptyset$, then $\Tc^{C}$ contains an edge $x'y'$ with $\lparts(\Tc^C)[\vec{x'y'}] = \lparts(\Tc)[\oxy] \cap C$ and $\lparts(\Tc^C)[\vec{y'x'}] = \lparts(\Tc)[\oyx] \cap C$ (and all edges of $\Tc^C$ are like that).
Then we construct $\Tc'$ by taking $\Tc^{\star\star}$ and identifying the root of each $\Tc^C$ with the leaf of $\Tc^{\star\star}$ corresponding to $C$.
Without assuming anything about $\prt$, we can deduce that the height of $\Tc'$ is at most $\Oh{\log n}$ more than the height of $\Tc$ and all edges of $\Tc'$ corresponding to edges of $\Tc^{\star\star}$ have width at most $4k$.
However, we do not know anything about the widths of edges not in $\Tc^{\star\star}$, and we do not know how this transformation could be implemented efficiently.

The first requirement for efficient implementation of this transformation is that $|\prt|$ is not too large.
We prove the existence of a closure with an even stronger property.
We say that $\prt$ is a \emph{$k$-closure} if the rankwidth of $\prt$ is at most $2k$.
We also say that $\prt$ is \emph{$c$-small} if for all $\oxy \in \oApp(\Tpref)$ there are at most $c$ parts $C \in \prt$ with $C \subseteq \lparts(\Tc)[\oxy]$.
We prove the following lemma in \Cref{sec:refi}.
\begin{lemma}[\Cref{lem:smallclosures}]
\label{lem:ov:smallclosure}
For every $k,\ell \in \N$ there exists $c \in \N$ so that if $\Tc$ has width at most $\ell$ and $G$ has rankwidth at most $k$, then for any prefix $\Tpref$ of $\Tc$ there exists a $c$-small $k$-closure $\prt$ of $\Tpref$.
\end{lemma}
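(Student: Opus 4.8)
The plan is to build the closure $\prt$ by a greedy, bottom-up sweep over the tree $T$ that groups the "hanging" pieces along each appendix edge of $\Tpref$ into a bounded number of parts, using the rankwidth-$k$ assumption on $G$ to control the rankwidth of $\prt$. First I would fix an optimum-width rank decomposition $\Tc_G = (T_G, \lmap_G)$ of $G$ of width at most $k$; this will be the "external blueprint" against which we define the closure. For each oriented appendix edge $\oxy \in \oApp(\Tpref)$, the set $A_{\oxy} := \lparts(\Tc)[\oxy]$ is a piece of $V(G)$ that we must partition into parts, each of which will become a leaf of the closure. The key observation is that the interaction of $A_{\oxy}$ with the rest of $V(G)$ (both with the complement $V(G) \setminus A_{\oxy}$, which is captured by the width-$\le 4k$ bound on $\Tc$, giving at most $2^{4k}$ "types" as in the definition of representatives, and with the structure imposed by $\Tc_G$) is bounded. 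Concretely, I want each part $C$ of $\prt$ inside $A_{\oxy}$ to be a set that (i) lies inside $A_{\oxy}$ (so it is a legal part of a closure), and (ii) is "closed" with respect to some bounded-size family of cuts, so that the rank-boundary between $C$ and $V(G) \setminus C$ stays bounded.

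The main technical device I would use is to intersect $A_{\oxy}$ with the edges of $\Tc_G$: the decomposition $\Tc_G$ restricted to $A_{\oxy}$ yields a laminar family of subsets of $A_{\oxy}$, each with rank-boundary at most $k$ inside $G$. I would then "trim" this laminar family to bounded branching — keeping only the maximal subsets that are entirely inside a single appendix part and are "few" in number — and let the parts $C$ of $\prt$ be (roughly) the resulting clusters. The point is that a part $C$ obtained this way has a small cut to $V(G) \setminus C$: its boundary toward the outside of $A_{\oxy}$ has rank $\le 4k$ (from $\Tc$), and its boundary toward the rest of $A_{\oxy}$ is controlled because $C$ is an honest union of $\Tc_G$-subtrees of small boundary (rank $\le k$ each). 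Bounding the number of such clusters by a constant $c = c(k,\ell)$ is where the width bound $\ell$ on $\Tc$ and $k$ on $\Tc_G$ both enter: the $2^{\ell}$-bound on representatives of $A_{\oxy}$ limits how many distinct "external types" of vertices live in $A_{\oxy}$, and within each type a further bounded-depth trimming of the $\Tc_G$-laminar family finishes the count. Then, to get that $\prt$ itself has rankwidth $\le 2k$, I would construct a rank decomposition of $\prt$ by taking $\Tc_G$ and contracting: since each part $C \in \prt$ is a union of $\Tc_G$-subtrees, "quotienting" $\Tc_G$ by $\prt$ gives a cubic tree on $|\prt|$ leaves whose every edge induces a partition of $\prt$ refining a $\Tc_G$-cut, hence of rank at most $k$ plus a bounded correction term coming from the trimming — which I would push to $\le 2k$ by choosing the trimming conservatively.

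The hard part, and the step I expect to be the main obstacle, is simultaneously (a) keeping the number of parts inside each $A_{\oxy}$ bounded by a constant and (b) ensuring the quotient of $\Tc_G$ by $\prt$ is a legitimate rank decomposition of width $\le 2k$. These two requirements pull against each other: making the parts coarse (few parts) risks creating a part $C$ that is "spread out" in $\Tc_G$, so that no single $\Tc_G$-edge separates $C$ from its complement and the induced cut in the quotient blows up; making the parts fine (so each is a clean $\Tc_G$-subtree) risks producing too many parts. The resolution I have in mind is a careful accounting argument: a part that is spread across many $\Tc_G$-subtrees must, by the laminarity of $\Tc_G$ and the bounded number of external types in $A_{\oxy}$, actually be expressible as a bounded union of subtrees of bounded total boundary — essentially because two subtrees of $\Tc_G$ that are assigned to the same part and have the same external type toward $V(G)\setminus A_{\oxy}$ can be handled together, and the number of "transitions" along $\Tc_G$ between being inside and outside $A_{\oxy}$ is itself bounded by the width of $\Tc$. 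Making this simultaneous bound precise — i.e., producing one trimming of the $\Tc_G$-laminar family that is coarse enough to have $O_{k,\ell}(1)$ parts per appendix edge yet fine enough that the quotient stays width $\le 2k$ — is the real content of the proof, and I would expect it to require an explicit, if elementary, combinatorial lemma about laminar families refined by a bounded-width external cut structure.
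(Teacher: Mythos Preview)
Your high-level plan is the right one and matches the paper's: fix an optimum-width decomposition $\Tc_G$ of $G$, partition each appendix piece $A_{\oxy}$ into a bounded number of ``$\Tc_G$-shaped'' parts, and obtain a rank decomposition of $\prt$ by quotienting $\Tc_G$. The paper does exactly this, with the parts being \emph{factors} (tree factors $\lparts(\Tc_G)[t]$ or context factors $\lparts(\Tc_G)[t]\setminus\lparts(\Tc_G)[t']$) of $\Tc_G$; the $2k$ bound then follows by a short submodularity argument that turns each context factor into a tree factor at the cost of at most doubling the width.

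The genuine gap is the step you flag as ``the hard part.'' You assert that ``the number of transitions along $\Tc_G$ between being inside and outside $A_{\oxy}$ is itself bounded by the width of $\Tc$,'' and that bounded external types (at most $2^\ell$ of them) control this. This is false for an \emph{arbitrary} optimum-width $\Tc_G$: take $G$ edgeless, so $\cutrk(A_{\oxy})=0$ and there is a single external type, yet in a caterpillar-shaped $\Tc_G$ the vertices of $A_{\oxy}$ and its complement can alternate unboundedly. The width of $\Tc$ bounds the rank of the cut $(A_{\oxy},\overline{A_{\oxy}})$, not how $A_{\oxy}$ is laid out inside some other decomposition.

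What the paper uses instead is a substantial structural theorem, the \emph{Dealternation Lemma} (Lemma~\ref{lem:dealt}): there \emph{exists} a particular optimum-width $\Tc'$ such that every $\lparts(\Tc)[t]$ decomposes into at most $f(\ell)$ factors of $\Tc'$. This is not a property of all optimum-width decompositions but one that must be engineered; its proof (all of Section~\ref{sec:dealternation}) starts from the totally-pure decompositions of Jeong--Kim--Oum, shows these already have bounded ``mixed skeletons,'' and then performs carefully chosen local reorderings (block shuffles) along paths of $\Tc'$ to collapse the alternation to a constant without increasing width. Your trimming-and-types argument does not produce such a $\Tc'$, and without it the bound on the number of parts per appendix edge does not follow.
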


As $|\oApp(\Tpref)| \le \Oh{|\Tpref|}$, this implies $|\prt| \le \Oh[k]{|\Tpref|}$.
In \Cref{ssec:ov:dealt}, we will highlight the main tool we develop for proving \Cref{lem:ov:smallclosure} --- the \emph{Dealternation Lemma} for rankwidth, analogous to the Dealternation Lemma for treewidth of Boja{ń}czyk and Pilipczuk~\cite{DBLP:journals/lmcs/BojanczykP22}, and overview its proof, which is fully presented in \Cref{sec:dealternation}.

We then require one more property of $\prt$, which will be useful in both addressing the issue of the widths of the edges of $\Tc'$ coming from the decompositions $\Tc^{C}$ and in the efficient implementation of the refinement.
For a node $x$ of $\Tc$ with parent $p$ we denote $\lparts(\Tc)[x] = \lparts(\Tc)[\oxp]$, and when $x$ is the root $\lparts(\Tc)[x] = V(G)$.
We say that $\prt$ \emph{cuts} a node $x$ of $\Tc$ if more than one part in $\prt$ intersects $\lparts(\Tc)[x]$, and define $\cut(\prt)$ to be the set of nodes cut by $\prt$.
Note that $\cut(\prt)$ is a prefix of $T$ and $\Tpref \subseteq \cut(\prt)$.
We define that $\prt$ is a \emph{minimal} $c$-small $k$-closure if among all $c$-small $k$-closures of $\Tpref$, $\prt$ primarily minimizes $\sum_{C \in \prt} \cutrk(C)$ and secondarily minimizes $|\cut(\prt)|$.
Here, $\cutrk(C)$ denotes the rank of the $|C| \times |V(G)\setminus C|$ matrix describing adjacencies between $C$ and $V(G)\setminus C$.

We observe that if $\Tc$ has a node $x \in V(T) \setminus \Tpref$ and $C \in \prt$ intersects $\lparts(\Tc)[x]$, then in $\Tc'$ there is a node $x^C$ with $\lparts(\Tc')[x^C] = \lparts(\Tc)[x] \cap C$, and moreover all nodes of $\Tc'$ coming from the decompositions $\Tc^{C}$ can be characterized like this.
Therefore, to prove that $\Tc'$ has width at most $4k$, it suffices to prove that for all such $x$ it holds that $\cutrk(\lparts(\Tc)[x] \cap C) \le 4k$.
We prove the following stronger statement in \Cref{sec:refi}.
\begin{lemma}[\Cref{lem:closlink,lem:mincloswidthbound}]
\label{lem:ov:closlink}
Let $\prt$ be a minimal $c$-small $k$-closure of $\Tpref$, $C \in \prt$, and $x \in V(T) \setminus \Tpref$ with $\lparts(\Tc)[x] \cap C \neq \emptyset$.
Then $\cutrk(\lparts(\Tc)[x] \cap C) \le \cutrk(\lparts(\Tc)[x])$, with equality only if $\lparts(\Tc)[x] \subseteq C$.
\end{lemma}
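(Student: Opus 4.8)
The plan is to prove the inequality $\cutrk(\lparts(\Tc)[x] \cap C) \le \cutrk(\lparts(\Tc)[x])$ by a submodularity-type argument, and then to upgrade it to the ``equality only if'' statement by a minimality (exchange) argument. Write $A := \lparts(\Tc)[x]$ and $A' := A \cap C$. The key observation is that $\cutrk$ behaves like a connectivity function: for disjoint sets $P, Q$ one has the submodular inequality $\cutrk(P \cup Q) + \cutrk(P \cap Q) \le \cutrk(P) + \cutrk(Q)$ (over $\GF(2)$ this is the standard cut-rank submodularity, which I would cite from the preliminaries). The first step is therefore to find a decomposition of $A'$ or of its complement that exhibits $A'$ as an ``intersection'' controlled by $A$ and by $C$. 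Since $\prt$ is a partition, $A \setminus C = A \setminus A'$ is a union of parts $C' \in \prt$ each with $C' \subseteq \lparts(\Tc)[\oxy]$ for some $\oxy \in \oApp(\Tpref)$; because $x \notin \Tpref$ and $A = \lparts(\Tc)[x]$, each such $C'$ lies entirely inside $A$. I would then apply submodularity to $A$ and to $V(G) \setminus C$: writing $P = A$, $Q = V(G) \setminus C$, we get $P \cap Q = A \setminus C = A \setminus A'$, $P \cup Q = (V(G) \setminus C) \cup A$, and
\[
\cutrk(A \setminus A') + \cutrk\big((V(G)\setminus C) \cup A\big) \le \cutrk(A) + \cutrk(V(G) \setminus C) = \cutrk(A) + \cutrk(C).
\]
This alone is not quite $\cutrk(A') \le \cutrk(A)$, so the actual route is subtler: the right comparison is between the closure $\prt$ and the modified partition $\prt'$ obtained by replacing $C$ with $C \setminus A'$ and $A'$ — no wait, that changes which edge of $\oApp$ the part is contained in. Instead I would compare $\prt$ with the closure $\prt^\star$ in which the part $C$ is split into $C \cap A = A'$ and $C \setminus A$; both pieces are still contained in the same $\lparts(\Tc)[\oxy]$ as $C$ was, so $\prt^\star$ is again a closure of $\Tpref$, and it is again $k$-small-ish once we bound its rankwidth.

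The heart of the argument is then: \emph{either} $\prt^\star$ is a valid competitor in the minimization defining a minimal $c$-small $k$-closure, in which case minimality of $\prt$ forces $\cutrk(A') + \cutrk(C \setminus A) \ge \cutrk(C)$ (no strict decrease of $\sum_C \cutrk(C)$ is possible); \emph{or} splitting off $A'$ would make $\prt^\star$ fail $c$-smallness, which I would argue cannot happen because $A'$ and $C \setminus A$ are contained in the same bag and the total number of parts inside each $\lparts(\Tc)[\oxy]$ increases by at most one, so a minor adjustment of constants (the point of \Cref{lem:ov:smallclosure} giving us slack) keeps it $c$-small — alternatively one uses that $\prt^\star$ refines $\prt$ hence $\cut(\prt^\star) \supseteq \cut(\prt)$ with the relevant inequalities. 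Combining the submodular inequality applied to $C$ and $A$ — $\cutrk(C \cap A) + \cutrk(C \cup A) \le \cutrk(C) + \cutrk(A)$, i.e. $\cutrk(A') + \cutrk(C \cup A) \le \cutrk(C) + \cutrk(A)$ — with the minimality bound $\cutrk(C) \le \cutrk(A') + \cutrk(C \setminus A)$ and the (symmetric, complement-based) identity $\cutrk(C \setminus A) = \cutrk(V(G) \setminus (C \setminus A)) \ge \cutrk(C \cup A)$ (since $C \cup A \supseteq \dots$ — this needs the partition structure of $\prt$ restricted outside $A$), one telescopes to $\cutrk(A') \le \cutrk(A)$, which is exactly \Cref{lem:closlink}.

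For the refinement to the strict part — equality forces $\lparts(\Tc)[x] \subseteq C$ — I would trace the equality cases through the chain above. Equality in the final inequality forces equality in the submodular step applied to $C$ and $A$, and in the minimality step. Equality in minimality means $\prt^\star$ is \emph{also} a minimal $c$-small $k$-closure (it ties $\prt$ on $\sum \cutrk$); but $\prt^\star$ cuts a superset of the nodes cut by $\prt$ — indeed if $A' \subsetneq A$ then some node $x' $ on the path from $x$ towards the root that was not cut by $\prt$ becomes cut, because $A \setminus A' \ne \emptyset$ and $A' \ne \emptyset$ both meet $\lparts(\Tc)[x']$ for $x'$ an ancestor of $x$ with $\lparts(\Tc)[x'] \supseteq A$ — contradicting the secondary minimization of $|\cut(\prt)|$ unless $A \setminus A' = \emptyset$, i.e. $\lparts(\Tc)[x] \subseteq C$. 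This is where the secondary tie-break in the definition of \emph{minimal} $c$-small $k$-closure is used essentially.

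The main obstacle I anticipate is verifying that the split partition $\prt^\star$ really is an admissible competitor — specifically that its rankwidth is still at most $2k$ (so that it remains a $k$-closure) and that it remains $c$-small. Rankwidth is not monotone under refining a partition in general, so this step is not free: I expect one must either show that a rank decomposition of $\prt$ of width $\le 2k$ can be locally edited into one for $\prt^\star$ of width $\le 2k$ (splitting the leaf for $C$ into two leaves joined by a new edge, whose width is $\cutrk(A') \le \cutrk(A) \le \ell \le$ something, needing the very bound we are proving — a potential circularity to be broken by an induction on $|\cut(\prt)|$ or on the number of parts), or that one can bound things using only the hypothesis rankwidth$(G) \le k$ via \Cref{lem:ov:smallclosure} with a larger constant $c$. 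Resolving this circularity cleanly — presumably by phrasing the whole argument as a minimal counterexample / extremal choice of $\prt$ rather than as a direct induction — is the crux of the proof.
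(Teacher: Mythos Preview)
Your exchange goes in the wrong direction, and this is why you hit the obstacles you correctly flag at the end. You propose to \emph{split} $C$ into $A' = C \cap A$ and $C \setminus A$ and compare $\prt$ against the refined partition $\prt^\star$. But splitting can only increase the number of parts inside $\lparts(\Tc)[a]$ (so $c$-smallness really can fail --- and no, you cannot ``adjust constants'': $c$ is fixed, and minimality is taken over $c$-small closures for that fixed $c$), and there is no reason a rank decomposition of $(G[\prt],\prt)$ of width $\le 2k$ should yield one for the refinement $\prt^\star$. Your telescoping also does not close: from $\cutrk(A') + \cutrk(C\cup A) \le \cutrk(C) + \cutrk(A)$ and $\cutrk(C) \le \cutrk(A') + \cutrk(C\setminus A)$ you only get $\cutrk(C\cup A) \le \cutrk(A) + \cutrk(C\setminus A)$, which is vacuous; the claimed step $\cutrk(C\setminus A) \ge \cutrk(C\cup A)$ is false in general.

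The paper's argument runs the exchange the other way: instead of splitting $C$, it \emph{grows} $C$ to a superset $S$ with $C \subseteq S \subseteq \lparts(\Tc)[a]$ chosen to minimize $\cutrk(S)$ (or, for the strict part, takes $S = C \cup \lparts(\Tc)[x]$), and replaces every other part $D$ by $D \setminus S$. This never increases the number of parts in any appendix subtree, so $c$-smallness is automatic. Crucially, one can show $\cutrk(D\setminus S) \le \cutrk(D)$ for every other part $D$, and that the rank decomposition of $\prt$ can be edited to one for $\prt'$ of no greater width --- both via submodularity together with the extremal choice of $S$. This yields the notion of a \emph{linked} closure: $\cutrk(C) \le \cutrk(S)$ for all such $S$, and $\cutrk(C \cup \lparts(\Tc)[x]) > \cutrk(C)$ whenever $C$ cuts $x$. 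The overview statement is then a one-line consequence: if $\cutrk(A') \ge \cutrk(A)$, submodularity on $C$ and $A$ gives $\cutrk(C \cup A) \le \cutrk(C)$, contradicting linkedness unless $A \subseteq C$. So the missing idea is precisely to merge rather than split.
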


As $\cutrk(\lparts(\Tc)[x]) \le 4k$ for $x \in V(T) \setminus \Tpref$, this implies that $\Tc'$ has width at most $4k$ when $\prt$ is minimal.
The proof of \Cref{lem:ov:closlink} makes use of the submodularity of the $\cutrk$ function.
It can be considered to be a rankwidth analog of the techniques developed for improving tree decompositions by Korhonen and Lokshtanov~\cite{DBLP:conf/stoc/KorhonenL23}.
Let us then assume that $\prt$ is a minimal $c$-small $k$-closure.

We then use the fact that in \Cref{lem:ov:closlink} the equality holds only if $\lparts(\Tc)[x] \subseteq C$.
The nodes of $\Tc$ can be partitioned into three groups based on $\Tpref$ and $\prt$ --- those in $\Tpref$, those in $\cut(\prt) \setminus \Tpref$, and those in $V(T) \setminus \cut(\prt)$.
If $x \in V(T) \setminus \cut(\prt)$, then there exists $C \in \prt$ so that $\lparts(\Tc)[x] \subseteq C$.
In this case the resulting decomposition $\Tc'$ will contain the exactly same subtree rooted at $x$ as $\Tc$, so maximal such subtrees can be copied from $\Tc$ to $\Tc'$ by changing just one pointer, copying also the annotations and the automata states, and the number of them is $\Oh{|\cut(\prt)|}$.
If $x \in \cut(\prt) \setminus \Tpref$, then a node $x^C$ corresponding to $x$ is constructed for every $C \in \prt$ that intersects $\lparts(\Tc)[x]$.
Because $\prt$ is $c$-small, there are at most $c$ such nodes $x^C$, and by \Cref{lem:ov:closlink} for all of them it holds that $\cutrk(\lparts(\Tc')[x^C]) < \cutrk(\lparts(\Tc)[x])$.
Thus, we can think that we replace each node in $\cut(\prt) \setminus \Tpref$ by at most $c$ nodes that each has smaller width (the width of a node is the width of the edge between it and its parent), which motivates to use the following potential function for amortized analysis:

\[\Phi(\Tc) = \sum_{x \in V(T)} (2c)^{\width_{\Tc}(x)} \cdot \height_{\Tc}(x),\]
where $\height_{\Tc}(x)$ denotes the height of $x$ in $\Tc$, i.e., the distance from $x$ to the deepest leaf in its subtree.
Let us not focus on the $\height(x)$ factor at this point, but note that by the above discussion, the factor $(2c)^{\width_{\Tc}(x)}$ achieves that for every $x \in \cut(\prt) \setminus \Tpref$,
\[\sum_{C \in \prt\, \mid\, \lparts(\Tc)[x] \cap C \neq \emptyset} (2c)^{\width_{\Tc'}(x^C)} \cdot \height_{\Tc'}(x^C) < (2c)^{\width_{\Tc}(x)} \cdot \height_{\Tc}(x),\]
implying that the potential decreases proportionally to the number of nodes in $\cut(\prt) \setminus \Tpref$, which justifies implementing the refinement operation in time proportional to $|\cut(\prt)|$.
Before going into more analysis of the potential and the height reduction, let us discuss this implementation.

Given $\Tpref$, we wish to find in time $\Oh[k]{|\cut(\prt)|}$ some representation of a minimal $c$-small $k$-closure $\prt$ of $\Tpref$.
We observe that for each oriented edge $\oxy \in \oApp(\cut(\prt))$, there is unique $C \in \prt$ so that $\lparts(\Tc)[\oxy] \subseteq C$.
Therefore, we define the \emph{appendix edge partition} $\aep(\prt)$ of $\prt$ to be the partition of $\oApp(\cut(\prt))$ into $|\prt|$ parts naturally corresponding to $\prt$.
As $|\oApp(\cut(\prt))| \le \Oh{|\cut(\prt)|}$, we can represent $\aep(\prt)$ in $\Oh{|\cut(\prt)|}$ space.
Also, a rank decomposition $\Tc^{\star}$ of $\prt$ can be represented in $\Oh{|\cut(\prt)|}$ space by associating the leaves with the parts of $\aep(\prt)$.
We compute these objects by the following lemma, which we prove in \Cref{sec:computing-closures} and overview in \Cref{ssec:overview-optimum-width}.
\begin{lemma}[Informal statement of \Cref{lem:closureprds}]
\label{lem:ov:closureprds}
By maintaining an automaton on $\Tc$, we can support an operation that given a prefix $\Tpref$, in time $\Oh[k]{|\cut(\prt)|}$ returns $\cut(\prt)$, $\aep(\prt)$, and a rank decomposition $\Tc^{\star}$ of $\prt$ of width at most $2k$, for some minimal $c$-small $k$-closure $\prt$ of $\Tpref$, or concludes that the rankwidth of $G$ is more than $k$.
\end{lemma}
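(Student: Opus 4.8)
The plan is to perform the whole operation with a single \emph{rank decomposition automaton} $\autom$ maintained on $\Tc$ (in the sense of \Cref{sec:rwautom}), together with a \emph{traceback} query procedure that, given $\Tpref$, reads a minimal $c$-small $k$-closure directly off the states of $\autom$; here $c = c(k)$ is the constant supplied by \Cref{lem:ov:smallclosure}. Conceptually, the state of $\autom$ at a node $x$ of $\Tc$ is a bounded-size dynamic-programming table recording, for the subtree $T_x$ rooted at $x$, all ``relevant'' ways of simultaneously (i) cutting $T_x$ into a \emph{partial closure}, i.e.\ a partition of $\lparts(\Tc)[x]$ that could arise as the restriction to $\lparts(\Tc)[x]$ of a $c$-small $k$-closure of $G$, and (ii) building a width-$\le 2k$ rank decomposition of the partition assembled so far, storing only a bounded-size ``interface type'' of this partial decomposition along with the accumulated value of $\sum_C \cutrk(C)$ over the parts already finished inside $T_x$. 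For each interface type the table keeps one Pareto-optimal partial solution, under the order that first minimizes $\sum_C \cutrk(C)$ and then the number of cut nodes, so as to match the definition of a \emph{minimal} $c$-small $k$-closure, plus $\Oh[k]{1}$ traceback data enabling one reconstruction step.

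One first checks that this table has size $\Oh[k]{1}$, so that $\autom$ really is a rank decomposition automaton and is hence maintained under the prefix-rebuilding operations of our data structure in $\Oh[k]{|\Tpref|}$ time each. Finiteness has three sources: as $\Tc$ has width at most $4k$ (or $4k+1$ during a refinement), there are at most $2^{4k+1}$ neighborhood types across any cut $\lparts(\Tc)[x]$, which bounds the number of distinct minimal representatives and the $\cutrk$-values that can occur; as we only ever track \emph{partial} rank decompositions of width at most $2k$, their interfaces admit a bounded-size finite-state summary in the style of Jeong, Kim, and Oum~\cite{DBLP:journals/siamdm/JeongKO21}, which is exactly the kind of description developed in \Cref{sec:computing-closures}; and as we only care about $c$-small closures, at most $c$ parts are ``open'' at any appendix edge, so only $\Oh[k]{1}$ partially-assembled parts must be remembered. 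Consequently the state at $x$ is computable from the states of the children of $x$ and the local annotations $\reps,\repse,\dmap$ around $x$ in $\Oh[k]{1}$ time.

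Correctness has a completeness part and a soundness part. Completeness --- that the DP does not miss a good closure --- is where \Cref{lem:ov:smallclosure} enters: besides guaranteeing the existence of a $c$-small $k$-closure whenever $G$ has rankwidth at most $k$, its proof (through the Dealternation Lemma) shows that one may always take such a closure whose restriction to every subtree $T_x$ has bounded local complexity, i.e.\ falls within the finite state space tracked by $\autom$. Soundness and the correctness of the minimality objective follow from the composition rule, the objective being additive along $\Tc$; thus the value at the root of $\Tc$ equals the optimum over all $c$-small $k$-closures, and the recorded witness is a genuine minimal $c$-small $k$-closure with an accompanying width-$\le 2k$ rank decomposition. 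If no feasible solution is recorded at the root then no $c$-small $k$-closure exists, and by the contrapositive of \Cref{lem:ov:smallclosure} the rankwidth of $G$ exceeds $k$, which we report. For the query, given $\Tpref$ we first walk over $\Tpref$ --- costing $\Oh{|\Tpref|} \le \Oh{|\cut(\prt)|}$ --- to fold the ``demands'' at the appendix edges $\oApp(\Tpref)$ into the root table, fixing a global optimum and, for each $\oxy \in \oApp(\Tpref)$, the interface type the partial solution in the subtree at $x$ must realize; then we descend into the hanging subtrees along traceback pointers, but \emph{only through the nodes that will belong to $\cut(\prt)$}: at each visited node the stored witness tells us either to recurse into both children with prescribed types, or to stop and emit the single part $\lparts(\Tc)[x']$. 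We thus touch $\Oh[k]{|\cut(\prt)|}$ nodes, from which we read off $\cut(\prt)$, assemble $\aep(\prt)$ as the induced partition of $\oApp(\cut(\prt))$, and stitch the stored partial decompositions into one rank decomposition $\Tc^{\star}$ of $\prt$ of width at most $2k$, all in $\Oh[k]{|\cut(\prt)|}$ time and space.

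The main obstacle is to pin down the interface type so that it is simultaneously (a) of size $\Oh[k]{1}$ and composable locally under $\autom$; (b) expressive enough that the finite-state DP computes a true global minimum over \emph{all} $c$-small $k$-closures --- the subtle point being that the completeness argument must be synchronized with the proof of \Cref{lem:ov:smallclosure}, so that the closure it exhibits has all of its subtree-restrictions inside the tracked state space; and (c) amenable to a traceback that visits only the nodes of $\cut(\prt)$, so that the running time is $\Oh[k]{|\cut(\prt)|}$ and does not scale with $|V(T)|$. A secondary but fiddly point is faithfully implementing the two-level minimality criterion ($\sum_C \cutrk(C)$, then $|\cut(\prt)|$) inside both the DP and the traceback, and guaranteeing that the produced $\Tc^{\star}$ genuinely has width at most $2k$ rather than the $4k$ one would get by merely restricting $\Tc$ to representatives --- which forces the DP to build rank decompositions of the closure from scratch rather than inheriting them from $\Tc$. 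This is precisely the machinery assembled in \Cref{sec:computing-closures}.
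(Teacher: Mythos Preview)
Your plan lands in the right neighbourhood --- bounded DP tables on $\Tc$, Jeong--Kim--Oum-style summaries for the width-$\le 2k$ check, Pareto-filtering for the two-level minimality, and a traceback confined to $\cut(\prt)$ --- but the architecture differs from the paper's, and your sketch oscillates between two designs without committing to either.

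The paper does \emph{not} fold the rankwidth check into the maintained automaton. What is maintained on all of $\Tc$ is only a \emph{closure automaton} $\BoundRepr$: at each edge $\vec{xp}$ it stores a bounded set of isomorphism types of ``$(c,\reps(\vec{xp}))$-indexed graphs'' --- concrete encodings of the at-most-$c$ partial parts inside the subtree, each part recorded via a minimal representative of size at most $s=2^{2k}$ --- together with the minimum cut-cost achieving each type and a back-pointer $\Phi$ for traceback. At query time, the paper builds a skeleton decomposition $\Tc_{\mathrm{skel}}$ on $\Tpref$, glues onto every appendix a constant-size annotated decomposition of the chosen indexed graph, and runs the exact-rankwidth automaton $\ExactRankAutom$ (\Cref{lem:exact-rank-automaton}) over this $\Oh{|\Tpref|}$-sized object using a generic leaf-state optimization DP (\Cref{lem:automaton-dp}) that lexicographically minimizes $(\sum_C\cutrk(C),\ \text{\#cut nodes})$. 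The key structural fact enabling this is that all the glued decompositions agree on $\Tpref$, so the rankwidth DP factors cleanly through the skeleton; your single-automaton proposal would have to internalize this factorization, and your ``walk over $\Tpref$ to fold the demands'' is really this query-time DP in disguise, at odds with your earlier claim that (ii) is already in the maintained state. The separation also buys reuse: $\ExactRankAutom$ is the same automaton behind \Cref{lem:near-linear-annotated-decomposition-recovery}.

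One point where your sketch is off: completeness of the DP is \emph{not} argued via the Dealternation Lemma. The Dealternation Lemma is used only upstream, to guarantee existence of a $c$-small $k$-closure (\Cref{lem:ov:smallclosure}). Completeness instead uses that every part $C$ of a $k$-closure has $\cutrk_G(C)\le 2k$, hence a minimal representative of size at most $2^{2k}=s$; this is exactly what pins the partition's encoding down to an $s$-small indexed graph and places it inside $\BoundRepr$'s finite state space. Your ``bounded local complexity from Dealternation'' justification is the wrong hook. Finally, the decomposition $\Tc^\star$ is not stitched from stored partial decompositions; it is obtained at the end by constructing $\Tc_\lambda$ (size $\Oh[c,\ell]{|\Tpref|}$) explicitly and invoking \Cref{lem:linear-decomposition-recovery} once.
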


After turning $\Tc^{\star}$ into a log-height decomposition $\Tc^{\star\star}$, we can compute based on $\Tc^{\star\star}$ and $\aep(\prt)$ a ``recipe'' of size $\Oh{|\cut(\prt)|}$ on how the subtrees of $\Tc$ hanging below edges $\oxy \in \oApp(\cut(\prt))$ should be re-arranged to transform $\Tc$ into $\Tc'$.
Even after this, the problem of turning $\Tc$ into an annotated rank decomposition $\Tc'$ efficiently turns out to not be straightforward, as we need to compute the annotations for $\Tc'$.
In \Cref{sec:annot} we give a divide-and-conquer type algorithm for computing these annotations based on the recipe in $\Oh[k]{|\cut(\prt)| \log n}$ time (\Cref{lem:rearangmain,lem:prdsrearangmain}).
This concludes the overview on how $\Tc$ is transformed into $\Tc'$ in $\Oh[k]{|\cut(\prt)| \log n}$ time.

We then return to the potential function $\Phi(\Tc)$.
The main idea of the amortized analysis of our algorithm is that each edge update can increase the potential by at most $\Oh[k]{h^2}$ (recall that $h$ is the bound on the height of $\Tc$) and the refinement operation can increase the potential by at most $\Oh[k]{h |\Tpref| \log n}$ and decreases it proportionally to $|\cut(\prt) \setminus \Tpref|$.
The fact that edge updates increase the potential by at most $\Oh[k]{h^2}$ is straightforward from the facts that the update affects the widths of at most $\Oh{h}$ nodes, and the contribution of each node to potential is at most $\Oh[k]{h}$.

The analysis of the potential change caused by refinement is based on case-analysis of nodes of $\Tc'$: If a node is in a subtree directly copied from $\Tc$ to $\Tc'$, then nothing changes.
If a node is of type $x^C$ for $x \in \cut(\prt) \setminus \Tpref$ and $C \in \prt$, then its potential can be charged from the potential of the corresponding node $x$ as argued earlier, and this even decreases the potential proportionally to $|\cut(\prt) \setminus \Tpref|$.
If a node comes from $\Tc^{\star\star}$, then its height is initially $\Oh{\log n}$, but can increase when we attach trees $\Tc^{C}$ as its descendants.
We observe that each $\Tc^C$ can increase the height of at most $\Oh{\log n}$ such nodes, so the total potential of such nodes is bounded by $\Oh[k]{|\Tc^{\star\star}| \log n} + \sum_{C \in \prt} \Oh[k]{\height(\Tc^C) \log n} \le \Oh[k]{h |\Tpref| \log n}$ (recall that $|\prt| \le \Oh[k]{|\Tpref|}$).
These arguments imply that if the height of $\Tc$ stays at most $h$, then the amortized running time of each update is $\Oh[k]{h^2 \log^2 n}$.
It remains to give the height reduction scheme to maintain this height bound.

Suppose the height of $\Tc$ increased above $h$ by an application of the refinement operation.
We wish to argue that whenever the height is more than $h$, there is a prefix $\Tpref$, so that if we apply the refinement on $\Tpref$, the potential $\Phi(\Tc')$ of the resulting decomposition is smaller than $\Phi(\Tc)$, and moreover, the running time of the refinement operation is $\Oh[k]{(\Phi(\Tc) - \Phi(\Tc')) \cdot \log n}$.
For this, we prove the following more fine-grained bound on $\Phi(\Tc')$:
\[\Phi(\Tc') \le \Phi(\Tc) - \sum_{x \in \Tpref} \height_{\Tc}(x) - |\cut(\prt)| + \log n \cdot \Oh[k]{|\Tpref| + \sum_{\oxy \in \oApp(\Tpref)} \height_{\Tc}(x)}\]
This is not very hard to deduce from the construction of $\Tc'$ and the arguments for bounding the potential change given earlier, but we omit giving a more detailed argument here.
Then, it suffices to prove that if $\Tc$ has height more than $h$, we can find a non-empty prefix $\Tpref$ so that according to the above formula, $\Phi(\Tc') \le \Phi(\Tc) - |\cut(\prt)|$.
For this, we use the following result about binary trees proved implicitly in~\cite{dyntw} for their height reduction of dynamic treewidth.
\begin{lemma}[\cite{dyntw}]
\label{lem:ov:heightredu}
Let $c \ge 2$ and $T$ be a binary tree with $n$ nodes.
If the height of $T$ is at least $2^{\Omega(\sqrt{\log n \log c})}$ then there exists a non-empty prefix $\Tpref$ of $T$ so that \[c \cdot \left(|\Tpref| + \sum_{\oxy \in \oApp(\Tpref)} \height_{T}(x)\right) \le \sum_{x \in \Tpref} \height_T(x).\]
\end{lemma}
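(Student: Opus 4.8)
The goal is to find, in any sufficiently tall binary tree, a nonempty prefix $\Tpref$ whose ``boundary cost'' $|\Tpref| + \sum_{\oxy \in \oApp(\Tpref)} \height_T(x)$ is smaller than its ``interior gain'' $\sum_{x \in \Tpref} \height_T(x)$ by a factor of at least $c$. The plan is to build such a $\Tpref$ greedily, layer by layer, as a union of prefixes of small subtrees, exploiting the fact that if the tree is very tall then it must contain long ``paths'' along which heights decrease only slowly, and such paths contribute a lot to the interior sum while having a cheap boundary.

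First I would set up the right notion of ``height-weighted prefix''. Given a node $v$, consider the prefix $\Tpref_v$ obtained by taking, from the subtree $T_v$ rooted at $v$, every node $u$ with $\height_T(u) \ge \height_T(v) - t$ for a threshold parameter $t$ to be chosen. The appendices of $\Tpref_v$ inside $T_v$ are nodes of height roughly $\height_T(v) - t$, so each contributes about $\height_T(v) - t$ to the boundary sum; the interior nodes have heights between $\height_T(v)-t$ and $\height_T(v)$, so they contribute proportionally more. The key combinatorial fact is that a binary tree has few nodes at each height relative to a long vertical chain: if $\height_T(v) = H$, then along any root-to-deepest-leaf path inside $T_v$ there are $H$ nodes, one at each height $0,1,\dots,H-1$, so $\sum_{x \in \Tpref_v} \height_T(x) \ge \sum_{i=H-t}^{H} i = \Theta(tH)$, whereas the number of appendices can be controlled because each appendix is the root of a subtree of height $< H - t$ and these subtrees are disjoint. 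The tension is that there could be exponentially many such appendices (a full binary tree of height $H-t$ has $2^{H-t}$ leaves), which would blow up the boundary sum — so $\Tpref_v$ by itself does not work, and we must instead recurse.

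This is exactly why the conclusion has the doubly-logarithmic, $2^{\Omega(\sqrt{\log n \log c})}$ shape, and where I expect the main obstacle to be: balancing the recursion depth against the branching. I would set it up as a recursive ``peeling'': if $\Tpref_v$ as above already has cheap enough boundary we are done; otherwise the boundary sum is dominated by appendices, and by an averaging/pigeonhole argument one of the appendix subtrees $T_w$ (with $\height_T(w) \ge H - t$, using that there are at most $2^t$ appendices so some appendix subtree still has height $\ge \log_2(\text{something})$ — more carefully, if all appendix subtrees had small height the total node count $n$ would be too small) must itself be tall; recurse into it. Choosing $t = \Theta(\sqrt{\log n \log c})$, the recursion terminates after $\Theta(\sqrt{(\log n)/\log c})$ steps (since heights drop by $t$ each time and start below $\log_2 n \cdot (\text{const})$... actually start at $H \le n$ but the branching bound $2^t$ caps the useful depth), and at the bottom we either hit a node whose $\Tpref_v$ has no appendices at all (a tall path terminating in a leaf, giving interior sum $\Theta(t \cdot \height)$ with zero boundary) or the accumulated prefix along the recursion has interior-to-boundary ratio $\ge c$. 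The arithmetic to verify is: interior sum grows like (number of recursion levels) $\times t \times$ (typical height) while boundary stays like $t \times$ (typical height) $\times$ (a factor absorbed by the $2^t$-vs-recursion-depth trade-off); demanding the ratio exceed $c$ forces $t \gtrsim \sqrt{\log n \log c}$, matching the statement.

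Concretely the steps are: (1) fix $t := \lceil C_0 \sqrt{\log n \log c}\rceil$ for a suitable constant $C_0$; (2) prove the single-step lemma that for any node $v$ with $\height_T(v) \ge t$, either $\Tpref_v$ (height-$t$ top slab of $T_v$) satisfies the desired inequality, or it has an appendix $\oxy$ with $\height_T(x) \ge \height_T(v) - t$ AND the subtree $T_x$ has $\ge 2^{t}$ nodes — this uses that $|\Tpref_v| \le 2^{t+1}$ and $\sum_{x \in \Tpref_v}\height_T(x) \ge t(\height_T(v) - t)$ on one side, against the appendix count on the other, combined with $\sum_{\text{appendix }x}|T_x| \le n$; (3) iterate the single-step lemma starting from the root, building an increasing sequence of nodes $v_0, v_1, \dots$ with heights dropping by at most $t$ each time and subtree sizes dropping by a factor $\ge 2^t$ each time, hence at most $\log_{2^t} n = (\log n)/t$ iterations; (4) take $\Tpref$ to be the union of the slabs $\Tpref_{v_0} \cup \Tpref_{v_1} \cup \cdots$ along this path (this is a prefix since each $v_{i+1}$ lies in an appendix of $\Tpref_{v_i}$), observe it is a connected prefix containing the root, and sum the contributions: its interior sum is at least $(\text{number of iterations}) \cdot t \cdot (\text{min height along the path})$ up to constants, its boundary is at most the boundary of the final slab plus $|\Tpref|$, and since we need height $\ge t \cdot (\text{number of iterations})$ for the process not to have terminated early with a boundary-free slab, the height lower bound $2^{\Omega(\sqrt{\log n \log c})}$ is precisely what makes the iteration count and $t$ large enough to push the ratio past $c$. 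I would cite the analogous argument in~\cite{dyntw} for the precise constants rather than re-deriving them here.
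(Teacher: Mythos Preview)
The paper does not prove this lemma; both here and in Lemma~\ref{lem:heighttree} it is cited from~\cite{dyntw}, so there is no in-paper proof to compare against. Your sketch, however, has a genuine gap that is structural rather than a matter of constants.

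Step~(2) is internally inconsistent: however you define the ``height-$t$ top slab'' $\Tpref_v$ (by a height threshold or by depth $\le t$), every appendix of $\Tpref_v$ has height strictly less than $\height_T(v)-t$, so the claimed appendix with $\height_T(x)\ge\height_T(v)-t$ cannot exist, and the dual invariant of step~(3) is unsupported. The fatal issue is the boundary accounting in step~(4). You assert that the boundary of $\Tpref=\Tpref_{v_0}\cup\cdots\cup\Tpref_{v_k}$ is ``at most the boundary of the final slab plus $|\Tpref|$'', but each intermediate $\Tpref_{v_i}$ contributes all of its appendices except $v_{i+1}$ to $\oApp(\Tpref)$. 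In the worst case (e.g.\ when each $T_{v_i}$ is a complete binary tree on its first $t$ levels) the slab has $\Theta(2^t)$ appendices each of height $\Theta(\height_T(v_i))$, so $\sum_{\oxy\in\oApp(\Tpref)}\height_T(x)=\Theta\bigl(\sum_i|\Tpref_{v_i}|\cdot\height_T(v_i)\bigr)$; but the interior sum $\sum_{x\in\Tpref}\height_T(x)$ is also at most $\sum_i|\Tpref_{v_i}|\cdot\height_T(v_i)$. The ratio is therefore $\Theta(1)$ regardless of how you choose $t$, $k$, and the $v_i$ --- a union of slabs of this shape can never reach ratio $\ge c$. What is needed is a prefix whose appendix heights are small compared with interior heights by a \emph{multiplicative} factor, not an additive~$t$; this is what the argument in~\cite{dyntw} arranges, by a different mechanism than the one you sketch.
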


By plugging in $c = f(k) \log n$ for a suitable function $f(k)$, the existence of a desired prefix $\Tpref$ follows whenever $\height(\Tc) \ge 2^{\Omega(\sqrt{\log n \log (f(k) \log n)})} \ge 2^{\Omega_k(\sqrt{\log n \log \log n})}$, which is the claimed bound for $h$.
Then, the height-reduction scheme consists of applying refinement operations on such prefixes $\Tpref$ until the height is decreased below $h$.
As the running time is proportional to the potential decrease, these operations are ``free'' from the viewpoint of amortized analysis.
This concludes the overview of our dynamic algorithm, up to the Dealternation Lemma and the proof of \Cref{lem:ov:closureprds}, which we will overview in~\Cref{ssec:ov:dealt,ssec:overview-optimum-width}, respectively.

\paragraph{Comparison to dynamic treewidth of~\cite{dyntw}.}
Our approach for dynamic rankwidth is inspired by the approach for dynamic treewidth of~\cite{dyntw}.
In particular, we design a refinement operation with similar properties to their refinement operation, so that we can then use the height-reduction scheme encapsulated in~\Cref{lem:ov:heightredu} to control the height of the decomposition.
As the combinatorics of treewidth and rankwidth are different, the definitions and structural results used for our refinement operation are different from those of the refinement operation of~\cite{dyntw}.
In particular, the concept of closures and \Cref{lem:ov:smallclosure,lem:ov:closlink}, along with the Dealternation Lemma, are novel structural results about rankwidth.
Somewhat surprisingly, in the end our rankwidth version of the refinement operation turned out to be more elegant than the treewidth version, which has a more complicated construction of the resulting decomposition $\Tc'$, resulting also in a more complicated analysis of the potential.
From the more low-level side, manipulating rank decompositions and maintaining automata on them is much more complicated and less researched task than that on tree decompositions.
We consider the concept of annotated rank decompositions, along with the efficient algorithms for manipulating them (particularly \Cref{lem:rearangmain}), an important contribution of this work, which we will highlight further in~\Cref{ssec:overview-optimum-width}.

\subsection{Dealternation Lemma}
\label{ssec:ov:dealt}
We now overview a~crucial combinatorial result regarding optimum-width rank decompositions that lies at the heart of the dynamic rankwidth data structure: the Dealternation Lemma for rankwidth, proved in \cref{sec:dealternation}.
Then we sketch how the Dealternation Lemma is used in the proof of \cref{lem:ov:smallclosure}.

Our Dealternation Lemma essentially states the following: Whenever $\Tc^b$ is some rooted rank decomposition of a~graph $G$ of unoptimal (but bounded) width, there exists a~rank decomposition $\Tc$ of optimum width in which every subtree $\lparts(\Tc^b)[x] \subseteq V(G)$ of $\Tc^b$ can be decomposed into a~bounded number of ``simple'' pieces of $\Tc$.
An~analog of this statement for tree decompositions is proved in the work of Bojańczyk and Pilipczuk~\cite[Lemma 3.7]{DBLP:journals/lmcs/BojanczykP22}.

Formally, if $\Tc = (T, \lambda)$ is a~rooted rank decomposition of $G$ and $F \subseteq V(G)$, we say that $F$ is a~\emph{tree factor} of $\Tc$ if $F = \lparts(\Tc)[v]$ for some $v \in V(T)$, and a~\emph{context factor} if $F$ is nonempty and of the form $F_1 \setminus F_2$, where both $F_1$ and $F_2$ are tree factors.
Then the Dealternation Lemma reads as follows:

\begin{lemma}[\cref{lem:dealt}] 
  \label{lem:overview-subdealt-graphs}
  There exists a~function $f(\ell)$ so that if $G$ is a~graph and $\Tc^b = (T^b, \lambda^b)$ is a~rooted rank decomposition of $G$ of width $\ell$, then there exists a~rooted rank decomposition $\Tc$ of $G$ of optimum width so that for every node $x \in V(T^b)$, the set $\lparts(\Tc^b)[x]$ can be partitioned into a~disjoint union of at most $f(\ell)$ factors of $\Tc$.
\end{lemma}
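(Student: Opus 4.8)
The plan is to mimic the strategy of Bojańczyk and Pilipczuk~\cite{DBLP:journals/lmcs/BojanczykP22} for tree decompositions, but recast it in the algebra of $\GF(2)$-cuts. Fix an optimum-width rooted rank decomposition $\Tc^\star = (T^\star,\lambda^\star)$ of $G$; this is the ``target'' structure, but we will need to permute and re-root it to obtain the final $\Tc$. For a node $x \in V(T^b)$ write $A_x = \lparts(\Tc^b)[x]$. The key observation is that $\cutrk_G(A_x) \le \ell$ for every $x$, so the partition of $V(G)$ into $A_x$ and its complement interacts with $\Tc^\star$ through cuts of bounded rank. The central combinatorial quantity will be the number of edges of $T^\star$ that are ``mixed'' by $A_x$, i.e.\ edges $yz$ of $T^\star$ with $\lparts(\Tc^\star)[\vec{yz}]$ meeting both $A_x$ and $V(G) \setminus A_x$; call this set the \emph{border} $\partial_x \subseteq E(T^\star)$. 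A standard submodularity/uncrossing argument for the cutrank function (analogous to what powers \Cref{lem:ov:closlink}) shows that the edges in $\partial_x$ can be grouped into at most $g(\ell)$ ``bundles'', where a bundle is a set of border edges that all see the same trace $N(v)\setminus A_x$ on the outside, or more precisely a set on which the restriction of $A_x$ to the corresponding subtrees is determined up to a bounded-size representative; this is where the $2^{\Oh(\ell)}$ bound on representatives (the same bound underlying annotated rank decompositions) gets used.

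First I would prove a \emph{single-set} version: for one fixed $A \subseteq V(G)$ with $\cutrk_G(A) \le \ell$, there is an optimum-width rank decomposition $\Tc_A$ of $G$ such that $A$ is a disjoint union of a bounded number $f_1(\ell)$ of factors of $\Tc_A$. The construction re-arranges $\Tc^\star$ along the border $\partial_A$: contract each maximal subtree of $T^\star$ that is entirely inside $A$ or entirely outside $A$ into a single leaf-block, leaving a ``skeleton'' tree whose leaves are these blocks and whose size is $\Oh(|\partial_A|)$; then, within the skeleton, regroup the $A$-blocks and the non-$A$-blocks so that the $A$-side forms $\Oh(g(\ell))$ contiguous subtrees. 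The nontrivial point is that this regrouping does not increase the width: moving a block from one side of the skeleton to another changes the cut of an affected edge only by adding/removing vertex-classes that already appear among a bounded representative, and since those classes are shared across a whole bundle, the rank of each new cut stays $\le$ the optimum. This is essentially a ``sorting'' of the decomposition tree that is width-neutral because the reordered pieces are cut-equivalent.

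Next I would bootstrap from one set to \emph{all} the sets $A_x$ simultaneously. Here the crucial structural fact is laminarity: the family $\{A_x : x \in V(T^b)\}$ is laminar (it comes from a tree!), so I can process the nodes of $T^b$ in a bottom-up order and maintain the invariant that the current decomposition $\Tc$ is compatible with all $A_x$ processed so far, each being a union of $\le f(\ell)$ factors. When we incorporate a new node $x$ whose children $x_1,x_2$ are already handled, $A_x = A_{x_1} \sqcup A_{x_2}$, and I apply the single-set regrouping \emph{inside} the (bounded) collection of factors currently representing $A_{x_1}$ and $A_{x_2}$ rather than from scratch; because laminarity means $A_x$ crosses no previously-fixed factor ``from outside'', the regrouping only subdivides and permutes already-isolated pieces, and the bounded-ness of the factor count is preserved up to a constant multiplicative blow-up — so setting $f(\ell) = f_1(\ell)^{O(\text{depth-type constant})}$, or more carefully a tower controlled by the number of times a given edge of $T^b$ sits on a laminar chain, gives the uniform bound. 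I would need to be slightly careful that these regroupings at different nodes do not conflict; laminarity is exactly what guarantees they can be performed independently on nested scopes.

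The hardest part will be the width-neutrality of the regrouping step: showing that one can freely reorder the cut-equivalent blocks of the skeleton without ever creating a cut of rank exceeding the optimum. Getting the clean ``bundle'' structure — the right notion under which border edges partition into boundedly many classes that can be shuffled — is the technical crux, and it is where the $\GF(2)$ linear algebra (submodularity of $\cutrk$, bounded representatives, the fact that a bounded-rank cut induces boundedly many adjacency-types) does the real work. In the tree-decomposition setting this step is comparatively soft because bags are literally shared vertices; for rankwidth one must argue at the level of the vector spaces of neighborhoods. I expect the final $f(\ell)$ to be a tower of exponentials in $\ell$, which is fine for our applications since \Cref{lem:ov:smallclosure} only needs \emph{some} bound depending on $\ell$ and $k$.
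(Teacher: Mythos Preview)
Your high-level shape---start from an optimum-width decomposition, perform local reorderings to make each $A_x = \lparts(\Tc^b)[x]$ a union of few factors, and process the nodes of $T^b$ in a laminar order---matches the paper's strategy. But two load-bearing steps are missing, and without them the argument does not go through.

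\textbf{Missing: a bound on the initial complexity.} You take an \emph{arbitrary} optimum-width $\Tc^\star$ and assert that the ``border'' of $A_x$ groups into $g(\ell)$ bundles. This is false in general: an optimum decomposition can alternate between $A_x$ and its complement an unbounded number of times along a single path, with no repetition of ``traces''. The paper fixes this by starting from a \emph{totally pure} optimum decomposition (Jeong--Kim--Oum) and proving (\cref{lem:small-mixed-skeleton}) that total purity forces the $x$-mixed skeleton to have at most $f(\ell)$ nodes for every $x$. That skeleton bound is what makes the number of pieces finite to begin with; nothing in your proposal provides it.

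\textbf{Missing: a real width-preservation mechanism.} Your claim that blocks in a bundle are ``cut-equivalent'' so shuffling them is width-neutral is not an argument. Even when two subtrees have identical representatives toward the outside of $A_x$, moving one past the other changes every intermediate cut, and those cuts involve both $A_x$-vertices and non-$A_x$-vertices in a way that is not controlled by the outside trace alone. The paper handles this with a specific algebraic reduction: along each maximal $x$-shuffleable path it shows (after cutting into boundedly many ``$x$-static'' pieces) that the edge-widths under any block shuffle are \emph{partial sums} of a fixed integer sequence, and then invokes the typical-sequences/bichromatic-word dealternation lemma of Bodlaender--Kloks/Boja\'nczyk--Pilipczuk to reorder into $O(\ell)$ blocks without increasing the maximum partial sum. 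This is the genuine crux, and your ``bounded representatives'' heuristic does not substitute for it.

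A smaller but related gap: your bootstrap needs that processing $x$ does not destroy the factor structure already established for its descendants. The paper achieves this via a restricted class of reorderings (\emph{boundary-preserving} $x$-block shuffles) and a dedicated lemma (\cref{lem:factor-swap-perseverance}) showing they preserve $y$-factors for all $y \ngtr x$, and moreover preserve all $y$-mixed skeletons exactly---which is why the final bound depends only on $\ell$ and not on the depth of $T^b$, contrary to your $f_1(\ell)^{O(\text{depth})}$ guess.
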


\paragraph*{Subspace arrangements.}
In the following sections, we make heavy use of a~generalization of the notion of rankwidth to linear spaces over finite fields.
Let $\F$ be a~finite field; throughout the work we assume $\F = \GF(2)$.
For two linear subspaces $V_1, V_2 \subseteq \F^d$, let $V_1+V_2$ denote their sum and $V_1 \cap V_2$ denote their intersection. For $V \subseteq \F^d$, let $\dim(V)$ denote the dimension of $V$.
Any family $\Vc = \{V_1, \dots, V_n\}$ of linear subspaces of $\F^d$ is called a~\emph{subspace arrangement}.
For convenience, let $\sumof{\Vc} \coloneqq V_1 + \ldots + V_n$.
Let also $\lin{e}_1, \dots, \lin{e}_d$ denote the canonical basis of $\F^d$.

A~rank decomposition (or more properly, a \emph{branch decomposition}) $\Tc = (T, \lambda)$ of $\Vc$ is defined as for graphs or partitions, only that we assign subspaces $V_i$ to the leaves of $T$.
For an edge $xy$ of $T$, let $\lparts(\Tc)[\vec{xy}] \subseteq \Vc$ denote the subfamily of linear subspaces assigned to the leaves of $T$ closer to $x$ than~$y$.
If $\Tc$ is rooted, we define $\lparts(\Tc)[x]$ analogously to \Cref{ssec:ov:dynrw}.
Then the \emph{width} of an edge $xy$ is $\dim(\sumof{\lparts(\Tc)[\vec{xy}]} \cap \sumof{\lparts(\Tc)[\vec{yx}]})$, and then the width of $\Tc$ and the rankwidth of $\Vc$ are defined naturally.
The definitions of tree and context factors also lift naturally to the setting of rooted rank decompositions of subspace arrangements.

As observed in~\cite{DBLP:journals/tit/JeongKO17}, any undirected graph $G$ can be converted to an~equivalent subspace arrangement $\Vc$ as follows: Let $V(G) = \{v_1, \dots, v_n\}$.
Then for $i \in [n]$ define $V_i$ to be the subspace of $\F^n$ spanned by the two vectors $\lin{e}_i$ and $\sum_{v_j \in N(v_i)} \lin{e}_j$, and set $\Vc = \{V_1, \dots, V_n\}$.
Then, for any rank decomposition $\Tc$ of $G$ of width $\ell$, the isomorphic rank decomposition $\Tc'$ of $\Vc$ has width~$2\ell$.
So the rankwidth of $\Vc$ is equal to twice the rankwidth of $G$.
Hence, the Dealternation Lemma can be rephrased in the language of subspace arrangements:

\begin{lemma}[\cref{lem:subspace-dealternation}]
  \label{lem:overview-subdealt}
  There exists a~function $f(\ell)$ so that if $G$ is a~graph and $\Tc^b = (T^b, \lambda^b)$ is a~rooted rank decomposition of $G$ of width $\ell$, then there exists a~rooted rank decomposition $\Tc$ of $G$ of optimum width so that for every node $x \in V(T^b)$, the set $\lparts(\Tc^b)[x]$ can be partitioned into a~disjoint union of at most $f(\ell)$ factors of $\Tc$.
\end{lemma}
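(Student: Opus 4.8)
The plan is to prove the subspace-arrangement statement (\Cref{lem:subspace-dealternation}) directly, and then observe that the graph version (\Cref{lem:overview-subdealt-graphs}) follows via the translation $G \mapsto \Vc$ described above (which multiplies widths by at most $2$ and preserves the factor structure, so a partition into $f(\ell)$ factors of the constructed decomposition of $\Vc$ transports back to a partition into $f(\ell)$ factors of the corresponding decomposition of $G$). So from now on I work entirely with a subspace arrangement $\Vc$ over $\GF(2)$ equipped with the bounded-width rooted branch decomposition $\Tc^b = (T^b, \lambda^b)$ of width $\ell$, and I want an optimum-width rooted branch decomposition $\Tc$ together with, for every $x \in V(T^b)$, a partition of $\lparts(\Tc^b)[x]$ into at most $f(\ell)$ factors (tree or context factors) of $\Tc$.

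First I would set up the right ``local type'' machinery so that the statement can be proved by a greedy/iterative improvement argument in the spirit of Boja\'nczyk--Pilipczuk. Fix any optimum-width rooted branch decomposition $\Tc_0$ of $\Vc$. The key quantity to control is, for each node $x$ of $T^b$, the minimum number of factors of the current decomposition $\Tc$ needed to cover $\lparts(\Tc^b)[x]$; call this the \emph{alternation number} $\mathrm{alt}_x(\Tc)$. The goal is a uniform bound $\mathrm{alt}_x(\Tc) \le f(\ell)$ for all $x$ simultaneously. The heart of the argument is a local \emph{dealternation move}: if at some node $x$ of $T^b$ the set $A = \lparts(\Tc^b)[x]$ requires too many factors in $\Tc$, then there is a ``bad alternation'' along some path of $T$ — a place where, as we walk down $T$, membership in $A$ flips back and forth too often relative to what the cut-rank (equivalently $\dim(\sumof{\cdot} \cap \sumof{\cdot})$) profile allows. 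One then reroutes a bounded-size portion of $T$ to group together the pieces of $A$ that have the \emph{same} behaviour with respect to the relevant sums of subspaces, without increasing the width. Submodularity of the function $V \mapsto \dim(\sumof{V} \cap \sumof{\Vc \setminus V})$ (the cut-rank of the arrangement) is what guarantees that such a regrouping does not raise the width past the optimum: whenever two ``parallel'' pieces sit on a path of $T$, uncrossing them is width-non-increasing. This is exactly the subspace analogue of the standard uncrossing arguments for branchwidth, and it is also the step that makes the choice of the subspace-arrangement formulation convenient (the cut-rank there is literally a dimension of an intersection, and submodularity is immediate from $\dim(U+W)+\dim(U\cap W)=\dim U+\dim W$).

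The iterative argument then runs as follows. Process the nodes $x$ of $T^b$, and whenever some $\mathrm{alt}_x(\Tc)$ exceeds a threshold depending only on $\ell$, apply a dealternation move at $x$. The delicate part — and what I expect to be the main obstacle — is the \emph{potential/termination} argument: one must show (i) that the threshold can be taken to be a function of $\ell$ alone (this uses that $\Tc^b$ has width $\le \ell$, so each $\lparts(\Tc^b)[x]$ has cut-rank $\le \ell$, hence only $\le 2^{\Oh{\ell}}$ ``types'' of pieces can occur — a Ramsey-type / pigeonhole pruning showing that an alternation longer than $g(\ell)$ necessarily contains a reroutable repeat), and (ii) that a dealternation move at $x$ does not undo progress made at ancestors or descendants of $x$ in $T^b$, i.e.\ that there is a global potential (e.g.\ $\sum_x \mathrm{alt}_x(\Tc)$, or a lexicographic refinement thereof ordered by depth in $T^b$, together with the multiset of node-widths of $\Tc$) that strictly decreases. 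Making (ii) work is the crux: one has to argue that the local reroute near $x$ only affects $\mathrm{alt}_{x'}(\Tc)$ for nodes $x'$ of $T^b$ ``close to'' $x$ in a controlled way, and that the width of $\Tc$ stays optimal throughout, so that the move is genuinely monotone. If the naive global sum does not decrease monotonically, the standard fix is to bound the number of moves directly: each move is triggered by an alternation of length $> g(\ell)$ at some node, it reduces the total alternation count there, and a careful accounting — charging each new ``bad alternation'' created elsewhere to the one destroyed, using that at most $\Oh{\ell}$-many types exist — shows the process terminates after finitely many steps, at which point every $\mathrm{alt}_x(\Tc) \le g(\ell)$ and we set $f(\ell) = g(\ell)$. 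The remaining (routine) checks are that $\Tc$ remains a valid rooted branch decomposition of $\Vc$ of optimum width after each move, and that tree/context factors are preserved under the bookkeeping so that the final partition of each $\lparts(\Tc^b)[x]$ is indeed into that many factors of the final $\Tc$.
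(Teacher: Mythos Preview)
Your high-level picture is right --- start from an optimum-width decomposition and repair each $\Vc_x$ by local rerouting along paths of $\Tc$ --- but the termination/non-interference part that you yourself flag as ``the crux'' is a genuine gap, and the paper's proof resolves it via a structural mechanism you do not have.

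Concretely, the paper does \emph{not} start from an arbitrary optimum-width decomposition and drive down a potential like $\sum_x \mathrm{alt}_x(\Tc)$. Instead it invokes Jeong--Kim--Oum (\Cref{thm:overview-totally-pure}) to start from a \emph{totally pure} optimum-width decomposition, and proves (\Cref{lem:overview-small-mixed-skeleton}) that in such a decomposition the $x$-mixed skeleton --- the tree of $x$-branch and $x$-leaf points --- has size bounded by a function of $\ell$, for \emph{every} $x$ simultaneously. This bounded-skeleton property is the invariant that replaces your potential. The local improvement step (the Local Dealternation Lemma, \Cref{lem:over-local-dealternation}) is then designed so that it (a) preserves the $y$-mixed skeleton for \emph{all} $y\in V(T^b)$, and (b) preserves every existing $y$-factor for $y \ngtr x$. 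Property (a) is obtained by realising the rerouting as a composition of elementary ``$x$-swaps'' along $x$-shuffleable paths between skeleton nodes and checking, by a short case analysis over the relation of $y$ to $x$ in $T^b$, that an $x$-swap never changes any $y$-mixed skeleton. Property (b) is why the nodes of $T^b$ are processed in order of non-increasing depth: once a partition of $\Vc_y$ into few factors is fixed, later steps at ancestors of $y$ cannot destroy it. With (a) and (b), a single pass over $V(T^b)$ suffices and no potential argument is needed.

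Your sketch has neither the totally-pure starting point nor the ``local moves preserve all mixed skeletons'' property, so you have no handle on what a move at $x$ does to $\mathrm{alt}_{x'}$ for unrelated $x'$. The charging scheme you suggest (``at most $\Oh{\ell}$-many types'') does not by itself bound the number of moves: a single reroute can shuffle subtrees that are $x'$-mixed for many $x'$, and without a statement like ``the move leaves every $x'$-mixed skeleton unchanged'' there is no reason the alternation numbers elsewhere stay bounded. The actual width-preservation of a single dealternation along one path also requires more than plain submodularity: the paper reduces each shuffleable path to a bounded number of ``$x$-static'' pieces (where four subspace-intersection profiles with $B_x$ are constant), shows that on a static piece the edge widths after any block shuffle are prefix sums of a fixed integer sequence ($x$-separability), and then applies the Bodlaender--Kloks typical-sequences / Boja\'nczyk--Pilipczuk bichromatic-word lemma to shuffle down to $\Oh{\ell}$ blocks without raising any prefix sum.
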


For convenience, we will henceforth write $\Vc_x$ as a~shorthand for $\lparts(\Tc^b)[x]$.
From now on we will only focus on the proof of \cref{lem:overview-subdealt}.

\paragraph*{Outline of the proof of the Dealternation Lemma.}
The proof of \Cref{lem:overview-subdealt} is inspired by its treewidth counterpart in \cite{DBLP:journals/lmcs/BojanczykP22}: Similarly to how their Dealternation Lemma can be viewed as a~purely combinatorial understanding of the treewidth algorithm by Bodlaender and Kloks~\cite{DBLP:journals/jal/BodlaenderK96}, our proof relies heavily on the combinatorial understanding of the rankwidth algorithm by Jeong, Kim and Oum~\cite{DBLP:journals/siamdm/JeongKO21}.
In fact, as a~starting point of the proof, we invoke their result:

\begin{theorem}[{\cite[Proposition 4.6]{DBLP:journals/siamdm/JeongKO21}}]
  \label{thm:overview-totally-pure}
  Let $\Tc^b = (T^b, \lambda^b)$ be a~rooted rank decomposition of a~subspace arrangement $\Vc$.
  Then there exists a~rooted rank decomposition $\Tc = (T, \lambda)$ of the same subspace arrangement $\Vc$ of optimum width that is ``totally pure'' with respect to $\Tc^b$.
\end{theorem}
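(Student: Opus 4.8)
The plan is to follow the shape of the Bodlaender--Kloks analysis for treewidth, lifted to the linear-algebraic setting by Jeong, Kim and Oum, and prove the existence of a totally pure optimum-width decomposition by an \emph{exchange argument} driven by a well-founded potential. Recall that ``$\Tc$ is totally pure with respect to $\Tc^b$'' is a normal-form condition (as defined in \cite{DBLP:journals/siamdm/JeongKO21}) demanding, roughly, that for every node $x$ of $T^b$ the decomposition $\Tc$ interacts with the cut $(\Vc_x, \Vc \setminus \Vc_x)$ induced by $x$ as simply as possible: the minimal subtree $T_x \subseteq T$ that is not contained in a single side of this cut is a \emph{path}, and the sequence $B_1, \dots, B_r$ of boundary subspaces read along $T_x$ (where $B_i = \sumof{A_i} \cap \sumof{\Vc \setminus A_i}$ for the cut $A_i$ induced by the $i$-th edge of $T_x$) is in ``typical'' form, i.e.\ contains no redundant ``bump''. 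Being totally pure means this holds simultaneously for all $x \in V(T^b)$; the content sits at internal nodes, since leaves and the root give trivially pure cuts.

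First I would fix notation for the connectivity function $\mu(A) = \dim(\sumof{A} \cap \sumof{\Vc \setminus A})$ on subfamilies $A \subseteq \Vc$, recall that $\mu$ is symmetric and submodular, and record the $\GF(2)$-analogues of the classical typical-sequence inequalities: if a subfamily of $\Vc$ is detached from hanging off an edge with boundary $B_k$ and re-attached at an edge with boundary $B_i$, while it remains on one fixed side of every intervening cut of $T_x$, then each new boundary subspace arising along the way is contained in $B_i + B_k$ (or in an intersection of two old boundary subspaces), so its dimension does not exceed $\max(\dim B_i, \dim B_k)$. These are exactly the sandwiching lemmas underlying the rankwidth dynamic program of \cite{DBLP:journals/siamdm/JeongKO21}, and I would either quote them or rederive them from submodularity. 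Then, among all rooted rank decompositions of $\Vc$ of optimum width, I pick $\Tc = (T,\lambda)$ minimizing a well-founded potential $\Phi(\Tc)$ comparing lexicographically: (1) the multiset of dimensions of the boundary subspaces of all edges lying on some mixing subtree $T_x$ ($x \in V(T^b)$), sorted non-increasingly; (2) the total number of degree-$3$ branch nodes across all the $T_x$; (3) $|V(T)|$.

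I then claim the chosen $\Tc$ is totally pure with respect to $\Tc^b$, arguing by contradiction. If not, some node $x$ witnesses impurity in one of two forms: (i) $T_x$ branches at some node $t$, or (ii) $T_x$ is a path whose boundary sequence contains a bump at indices $i<j<k$. In case (ii) I perform the typical-sequence smoothing: detach the subtree hanging off the $[t_i,t_k]$-segment of $T_x$ that is responsible for the bump and re-attach it just past $t_i$ --- this is legal precisely because the elements involved stay on one side of every cut of $T_x$ strictly between $t_i$ and $t_k$. In case (i) I push the ``extra'' subtree hanging at the branch node $t$ down along $T_x$, replacing the branch node by further path structure. Using the sandwiching lemmas together with submodularity of $\mu$, one checks that every affected edge --- on $T_x$, on the moved subtree, and on the path it is moved along --- keeps its width, so the decomposition stays of optimum width; one also verifies that the operation creates no new mixing subtree and lengthens no $T_y$ for $y\ne x$ beyond edges already counted. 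Finally $\Phi$ strictly decreases: in case (ii) a boundary subspace of positive dimension leaves the multiset of coordinate (1) while nothing larger enters it (by the no-width-increase property), so coordinate (1) drops; in case (i), if coordinate (1) is unchanged then coordinate (2) drops. This contradicts minimality of $\Phi(\Tc)$, so $\Tc$ is totally pure with respect to $\Tc^b$, completing the proof.

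The hard part will be the two places where these arguments are genuinely load-bearing. First, the potential must be truly \emph{monotone} under the exchange operations: a carelessly chosen $\Phi$ can go down locally while the reorganization spawns a worse configuration elsewhere, so the coordinates of $\Phi$ (or whatever refinement one finally settles on) have to be designed so that every legal smoothing and de-branching step is strictly decreasing --- this is where most of the bookkeeping lives. Second, the verification that no edge increases in width is substantially more delicate than its treewidth counterpart, because here one is taking sums and intersections of subspaces over $\GF(2)$ rather than comparing sizes of bags: for each re-routed edge one must identify precisely which sandwiching inequality applies and check that its hypothesis (the moved piece stays on one side of all intervening cuts) genuinely holds. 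These two points together constitute the technical core of Proposition 4.6 of \cite{DBLP:journals/siamdm/JeongKO21}.
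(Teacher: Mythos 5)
The first thing to note is that the paper does not prove this statement at all: it is imported as a black box, namely Proposition~4.6 of Jeong, Kim and Oum, and is restated in the appendix (as \cref{lem:korean-rank-structure-theorem}) purely so that the definition of ``totally pure'' can be used downstream. So there is no proof in the paper to match your argument against; you are attempting to reprove an external result.

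As a reproof attempt, the proposal has a genuine gap at its foundation: the normal form you are driving towards is not the definition of total purity. You take ``totally pure'' to mean that for each $x \in V(T^b)$ the minimal $x$-mixing subtree $T_x$ of $T$ is a \emph{path} whose boundary-subspace sequence is typical. The actual definition (see \cref{sec:totally-pure-decomp}) is the exclusion of three specific local patterns --- improper $x$-degenerate edges, improper $x$-guarding edges, and improper $x$-blocking paths --- where ``properness'' is itself defined \emph{recursively} over nodes $y < x$ of $T^b$ (an $x$-degenerate edge can be proper because some $y$-degenerate edge elsewhere is improper). A totally pure decomposition can perfectly well have a branching $x$-mixed part; the paper only extracts from total purity that the $x$-mixed skeleton has bounded size (\cref{lem:small-mixed-skeleton}), and the whole point of the Dealternation Lemma is that one gets a bounded number of factors rather than a single interval. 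Your stronger ``$T_x$ is a path for every $x$ simultaneously'' normal form is almost certainly not achievable at optimum width, so case (i) of your exchange argument (eliminating every branch node) is aimed at a false target. Beyond that, the two steps you yourself flag --- that the lexicographic potential strictly decreases under every exchange without spawning worse configurations for other $y \in V(T^b)$, and that every rerouted edge keeps its width via the correct sandwiching inequality --- are not incidental bookkeeping but the entire technical content of JKO's proposition, and the cross-node interaction encoded in the recursive properness condition is exactly the mechanism they introduce to make the simultaneous statement go through; the proposal does not engage with it.
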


We delay the precise definition of ``totally pure'' to \cref{sec:totally-pure-decomp}.
Intuitively though, $\Tc$ is totally pure with respect to $\Tc^b$ if $\Tc$ excludes, for all $x \in V(T^b)$, specific local ``complicated'' patterns defined in terms of $\Vc_x$ and $\Vc \setminus \Vc_x$.
We now lift \cref{thm:overview-totally-pure} to show that, in fact, for all $x \in V(T^b)$ the entire decomposition $\Tc$ admits a~simple and bounded-size description in terms of $\Vc_x$ and $\Vc \setminus \Vc_x$.

Let $v \in V(T)$ and $x \in V(T^b)$.
We say that $v$ is: (i) \emph{$x$-full} if $\lparts(\Tc)[v] \subseteq \Vc_x$; (ii) \emph{$x$-empty} if $\lparts(\Tc)[v]$ is disjoint from $\Vc_x$; and (iii) \emph{$x$-mixed} otherwise.
Now, a non-leaf node $v$ of $T$ is an~\emph{$x$-leaf point} if one child of $v$ is $x$-empty and the other is $x$-full; and $v$ is an~\emph{$x$-branch point} if both children of $v$ are $x$-mixed.
We define the \emph{$x$-mixed skeleton} of $\Tc$ as a~(possibly empty) rooted tree $T^\mix$ with $V(T^\mix)$ comprising the $x$-leaf points and the $x$-branch points of $T$, with two vertices $u, v \in V(T^\mix)$ connected by an edge if the simple path between $u$ and $v$ in $T$ is internally disjoint from $V(T^\mix)$ (see \cref{fig:skeleton-def}).
We then prove that:

\begin{lemma}[\cref{lem:small-mixed-skeleton}]
  \label{lem:overview-small-mixed-skeleton}
  There exists a~function $f(\ell)$ so that if $\Tc^b = (T^b, \lambda^b)$ is a~rooted rank decomposition of $\Vc$ of width $\ell$ and $\Tc$ is an~optimum-width decomposition of $\Vc$ that is totally pure with respect to $\Tc^b$, then, for every $x \in V(T^b)$, the $x$-mixed skeleton of $\Tc$ has at most $f(\ell)$ nodes.
\end{lemma}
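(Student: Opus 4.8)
The plan is to bound separately the number of $x$-leaf points and the number of $x$-branch points of $\Tc$, and then use the fact that any rooted tree whose internal branching is controlled and whose leaf set is small has small size overall; more precisely, in the $x$-mixed skeleton $T^\mix$ every leaf is an $x$-leaf point (a node with one $x$-full and one $x$-empty child cannot have an $x$-mixed descendant forced below it) and every internal node of $T^\mix$ of degree $\ge 3$ is an $x$-branch point, so $|V(T^\mix)| = \Oh{(\#\text{leaf points}) + (\#\text{branch points})}$. Thus it suffices to show that each of these two quantities is bounded by a function of $\ell$.

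First I would handle the $x$-branch points. Here the key observation is that every $x$-mixed node $v$ carries a cut $\sumof{\lparts(\Tc)[v]}$ whose ``trace on $\Vc_x$'' and ``trace on $\Vc \setminus \Vc_x$'' are both nontrivial, and that along any root-to-leaf path in $T^\mix$ these traces evolve submodularly. Concretely, for an $x$-mixed node $v$ one should look at the pair of subspaces obtained by splitting $\sumof{\lparts(\Tc)[v]}$ according to $\Vc_x$ versus $\Vc\setminus\Vc_x$ — this is exactly the sort of local pattern that the definition of ``totally pure'' (from \Cref{sec:totally-pure-decomp}) forbids from being complicated. Since $\Tc^b$ has width $\ell$, the subspace $\sumof{\Vc_x} \cap \sumof{\Vc \setminus \Vc_x}$ has dimension at most $\ell$, so there are only $2^{\Oh{\ell}}$ ``types'' a cut can have relative to the partition $(\Vc_x, \Vc\setminus\Vc_x)$; total purity forces the types encountered along the skeleton to be monotone in an appropriate sense, so the skeleton cannot branch more than $g(\ell)$ times, since each branch point strictly refines the type data passed to its two children. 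This is the step I expect to be the main obstacle: making precise the ``monotonicity'' statement that total purity buys us and checking that a branch point genuinely consumes a unit of a bounded potential. It is essentially a translation of the Bodlaender--Kloks / Jeong--Kim--Oum typing machinery (\Cref{thm:overview-totally-pure}) into a statement about how types behave across $x$-mixed branch points, and it will require care in choosing the right potential on cut-types.

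Second, I would bound the number of $x$-leaf points, i.e.\ the leaves of $T^\mix$. Each $x$-leaf point $v$ sits at the top of a maximal subtree of $T$ that is entirely $x$-full, glued onto a maximal subtree that is entirely $x$-empty; so the $x$-leaf points are in bijection with the maximal $x$-full subtrees of $\Tc$ that are ``just below the mixed region.'' Consecutive $x$-leaf points along the fringe of $T^\mix$ are separated by $x$-branch points, and since (by the first step) there are only $g(\ell)$ branch points, a tree with $g(\ell)$ branch points has at most $g(\ell)+1$ leaves; this immediately gives $|V(T^\mix)| \le \Oh{g(\ell)}$, completing the proof with $f(\ell) = \Oh{g(\ell)}$. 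So in fact the whole lemma reduces to the branch-point bound: once we know branch points are few, leaf points are few by elementary tree combinatorics, and the internal-degree-2 nodes of $T^\mix$ are by definition contracted away. I would therefore structure the write-up as: (1) recall total purity and the type calculus; (2) define a potential on the type of an $x$-mixed node, show it lies in a bounded range and strictly drops at each $x$-branch point, yielding the branch-point bound; (3) conclude the leaf-point bound and hence the skeleton bound by tree combinatorics.
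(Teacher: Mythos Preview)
Your high-level strategy is right --- a potential based on subspaces of the boundary space $B_x$ (which has dimension at most $\ell$) that refines monotonically as you descend $T^\mix$ --- but there is a logical slip in how you assemble the pieces. A~potential that strictly drops at each branch point along any root-to-leaf path bounds the \emph{height} of $T^\mix$, not the total number of branch points; your step~(3) (``a tree with $g(\ell)$ branch points has at most $g(\ell)+1$ leaves'') silently treats the output of the potential argument as a total count. The fix is immediate once you recall (\Cref{lem:mixed-skeleton-binary}) that $T^\mix$ is already a full binary tree --- every internal node is an $x$-branch point, every leaf an $x$-leaf point, no degree-$2$ nodes to contract --- so a height bound of $h$ gives $|V(T^\mix)| \le 2^{h+1}$. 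Your separate ``bound leaf points from branch points'' step is then unnecessary: just bound the height and conclude.

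The paper's proof does exactly this, and is more concrete at the point you flagged as the main obstacle. Along a long vertical path in $T^\mix$ it tracks two chains of subspaces of $B_x$, namely $A_i = \sumof{\lparts_x(\Tc)[\vec{v_i v_i^L}]} \cap B_x$ (decreasing) and $B_i = \sumof{\lparts_x(\Tc)[\vec{v_i^L v_i}]} \cap B_x$ (increasing). Each chain stabilizes after at most $\ell$ strict steps, so on a path of length $2\ell+2$ there is an index $t$ where both are stationary. A~four-way case split on the containment relation between $A_t$ and $B_t$ then exhibits, respectively, an improper $x$-degenerate edge, an improper $x$-guarding edge (two orientations), or an improper $x$-blocking path --- each directly contradicting total purity. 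This case analysis against the three forbidden local patterns is precisely how total purity is invoked, and it is the content your sketch leaves unspecified.
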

We omit the proof in this overview; however, the proof proceeds by selecting any rank decomposition $\Tc$ of $\Vc$ of optimum width that is totally pure with respect to $\Tc^b$ (its existence is asserted by \cref{thm:overview-totally-pure}) and verifying that it actually satisfies all the conditions of \cref{lem:overview-small-mixed-skeleton}.
However, the work is far from done as $\Tc$ might not meet the requirements of the Dealternation Lemma.

Fix $x \in V(T^b)$.
Observe that every $x$-full node $v$ yields a~tree factor $\lparts(\Tc)[v] \subseteq \Vc_x$ (and every tree factor that is a~subset of $\Vc_x$ is like this).
On the other hand consider a~vertical path $v_1 v_2 \dots v_{p+1}$ in $\Tc$ without any $x$-leaf points or $x$-branch points such that $v_{p+1}$ is $x$-mixed.
For $i \in [p]$, let $v'_i$ be the child of $v_i$ different than $v_{i+1}$.
It can be shown that each node $v'_i$ is either $x$-full or $x$-empty.
Also whenever, for $1 \leq a \leq b \leq p$, the nodes $v'_a, \dots, v'_b$ are $x$-full, then $\lparts(\Tc)[v_a] \setminus \lparts(\Tc)[v_{b+1}] \subseteq \Vc_x$ is a~context factor (and all context factors that are subsets of $\Vc_x$ are like this).
This creates an issue: If, for example, $\lparts(\Tc)[v'_i]$ is $x$-full for odd $i \in [p]$ and $x$-empty for even $i \in [p]$ (in other words, the sequence $v'_1, \dots, v'_p$ alternates between $x$-full and $x$-empty nodes), we will not be able to partition $\lparts(\Tc^b)[x]$ into fewer than $\frac{p}{2} - O(1)$ factors.
Hence our strategy is to improve the decomposition by \emph{dealternating} all such heavily alternating paths --- that is, reorder the nodes along the path so as to bunch the $x$-full nodes $v'_i$ into a~small number of contiguous blocks, bounded by some constant $c_\ell \geq 1$ dependent only on $\ell$.
This reordering is highly non-trivial --- utmost care needs to be taken to avoid increasing the width of the decomposition --- and its implementation adapts to the setting of rankwidth the toolchain of \cite{DBLP:journals/lmcs/BojanczykP22}, which in turn encapsulates the technique of \emph{typical sequences} of \cite{DBLP:journals/jal/BodlaenderK96}.
After this is done, we show that we can partition $\lparts(\Tc^b)[x]$ into at most $\Oh{|V(T^\mix)| \cdot c_\ell}$ factors of $\Tc$, where $T^\mix$ is the $x$-mixed skeleton of $\Tc$.
Then we repeat the process for each $x \in V(T^b)$. So for $V(T^b) = \{x_1, \dots, x_n\}$, we perform $n$ phases, where in the $j$th phase, we perform the dealternation as above to produce a~partitioning of $\lparts(\Tc^b)[x_j]$ into a~small number of factors of $\Tc$.

This strategy comes with non-trivial requirements: (1) dealternation should not increase the width of the decomposition, (2) for $1 \leq i < j \leq n$, the reordering performed during the $j$th phase should preserve all factors in the already constructed partitioning of $\lparts(\Tc^b)[x_i]$, and (3) for $1 \leq j < i \leq n$, the $j$th phase should not blow up the size of the $x_i$-mixed skeleton of $\Tc$.
While it appears hard to ensure all these conditions at once, this feat can fortunately be achieved.
Hence, using our approach, we ultimately arrive at the following improvement step:

\begin{lemma}[Local Dealternation Lemma, \cref{lem:local-dealternation}]
  \label{lem:over-local-dealternation}
  There exists a~function $f(t,\ell)$ such that the following holds.
  Let $x \in V(T^b)$ and assume that the $x$-mixed skeleton of $\Tc$ has $t$ nodes. Then there exists a~rooted rank decomposition $\Tc'$ of $\Vc$ of optimum width such that:
  \begin{itemize}
    \item the set $\lparts(\Tc^b)[x]$ is a~disjoint union of at most $f(t, \ell)$ factors of $\Tc'$;
    \item for every $y \in V(T^b)$, the $y$-mixed skeletons of $\Tc$ and $\Tc'$ are equal; and
    \item if $y$ is not an~ancestor of $x$ and $F \subseteq \lparts(\Tc^b)[y]$ is a~factor of $\Tc$, then $F$ is also a~factor of $\Tc'$.
  \end{itemize}
\end{lemma}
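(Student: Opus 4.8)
My plan is to carve $\Tc$ along its $x$-mixed nodes into $\Oh{t}$ vertical paths, dealternate each path by a local width-preserving reordering of the subtrees hanging off it, and then argue that these reorderings are too local to disturb anything else. First I would dispose of the case $\Vc_x=\Vc$ (i.e.\ $x$ is the root of $T^b$): then $\Vc_x$ is a single tree factor of $\Tc$, the $x$-mixed skeleton is empty, and $\Tc'\coloneqq\Tc$ works; $\Vc_x=\emptyset$ cannot occur. So assume $\emptyset\neq\Vc_x\subsetneq\Vc$. Since a node whose subspace family meets $\Vc\setminus\Vc_x$ has a parent with the same property, the set $S$ of $x$-mixed nodes of $\Tc$ is a prefix of $T$; its leaves are exactly the $x$-leaf points and its nodes with two children in $S$ are exactly the $x$-branch points, so $S$ is the union of $V(T^\mix)$ with at most $\Oh{t}$ maximal paths of $x$-mixed nodes having a single child in $S$. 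I would fix one such path, written top to bottom as $v_1,\dots,v_{r+1}$ with $v_{i+1}$ a child of $v_i$, $v'_i$ the other child of $v_i$ for $i\in[r]$, and $v_{r+1}\in V(T^\mix)$; by maximality every $v'_i$ is $x$-full or $x$-empty.

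The core step is a width-preserving reordering of one such path. Let $\omega$ be the width of $\Tc$ (the rankwidth of $\Vc$; note $\omega\le\ell$), put $W_i\coloneqq\sumof{\lparts(\Tc)[v'_i]}$, $U\coloneqq\sumof{\Vc\setminus\lparts(\Tc)[v_1]}$, $Z\coloneqq\sumof{\lparts(\Tc)[v_{r+1}]}$, and split $[r]=F_P\sqcup E_P$ according to whether $v'_i$ is $x$-full ($W_i\subseteq\sumof{\Vc_x}$) or $x$-empty. The spine edges of $\Tc$ carry widths $d_j\coloneqq\dim\bigl((U+\sum_{i\le j}W_i)\cap(Z+\sum_{i>j}W_i)\bigr)\le\omega$ for $0\le j\le r$. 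I would prove the existence of a permutation $\sigma$ of $[r]$ and a constant $c_\ell$ depending only on $\ell$ such that the subsequence of $(\sigma(1),\dots,\sigma(r))$ lying in $F_P$ is increasing and likewise the one lying in $E_P$, rehanging the $v'_i$ along the spine in the order $v'_{\sigma(1)},\dots,v'_{\sigma(r)}$ keeps every spine-edge width $\le\omega$, and the $F_P$-positions then form at most $c_\ell$ contiguous blocks. This is where I would adapt to subspace arrangements the typical-sequences toolchain of~\cite{DBLP:journals/lmcs/BojanczykP22}, based on~\cite{DBLP:journals/jal/BodlaenderK96}: the $F_P$-ordered and the $E_P$-ordered subsequences of hung subtrees each induce a sequence of boundary dimensions with values in $\{0,\dots,\omega\}$, hence with a typical sequence of length $\Oh{\omega}=\Oh{\ell}$, and the subspace-arrangement form of the Bodlaender--Kloks merging lemma produces an order-preserving interleaving of the two subsequences in which the $F_P$-part occupies only $\Oh{\ell}$ contiguous blocks while every intermediate boundary dimension stays $\le\omega$; this uses submodularity of $A\mapsto\dim(A\cap B)$ over $\GF(2)$ in place of the additive bag-size accounting of the pathwidth setting. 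Rehanging the caterpillar accordingly changes only spine edges, hence preserves optimum width; I set $c_\ell$ to the resulting block count.

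Applying this step independently to every vertical path of $S$ — these act on pairwise-disjoint parts of $T$ and only permute subtrees hanging off the respective spines — composes into a single optimum-width $\Tc'$. In $\Tc'$, along each path the $x$-full hung subtrees form at most $c_\ell$ maximal runs, and the leaf set of each run equals $\lparts(\Tc')[w_a]\setminus\lparts(\Tc')[w_{b+1}]$ for suitable spine nodes, hence is a context factor contained in $\Vc_x$; moreover each of the $\Oh{t}$ $x$-leaf points contributes its unique $x$-full child, a tree factor inside $\Vc_x$. Every subspace of $\Vc_x$ lies in exactly one of these objects (its last $x$-mixed ancestor is either an $x$-leaf point or an internal path node whose hung child is $x$-full), so $\Vc_x$ is a disjoint union of at most $\Oh{t}\cdot c_\ell+\Oh{t}\eqqcolon f(t,\ell)$ factors of $\Tc'$; this is the first bullet.

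For the last two bullets, the point is that the only objects moved are the $x$-full subtrees hanging off $S$ (repositioned among the $x$-empty hung subtrees), with their interiors and all of $V(T^\mix)$ left intact. Fix $y\in V(T^b)$. A case analysis on whether $y$ is incomparable with $x$ ($\Vc_y\cap\Vc_x=\emptyset$), a strict descendant of $x$ ($\Vc_y\subsetneq\Vc_x$), or an ancestor of $x$ ($\Vc_x\subseteq\Vc_y$) — using in each case that the part $Z$ below a spine already meets whichever of $\Vc_y$ and $\Vc\setminus\Vc_y$ is relevant, so no spine node is $y$-full (resp.\ $y$-empty), and that $\sigma$ fixes the internal order of $F_P$ and of $E_P$ — shows that a node is a $y$-leaf point or $y$-branch point of $\Tc'$ exactly when the corresponding node is one in $\Tc$, with the same adjacencies in the $y$-mixed skeleton; hence the $y$-mixed skeletons of $\Tc$ and $\Tc'$ coincide. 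For the third bullet let $y$ be a non-ancestor of $x$, so $\Vc_y\cap\Vc_x=\emptyset$ or $\Vc_y\subsetneq\Vc_x$, and let $F\subseteq\Vc_y$ be a factor of $\Tc$: in the first case $F$ avoids every moved subtree, and in the second a tree or context factor contained in $\Vc_y\subsetneq\Vc_x$ must sit inside a single $x$-full hung subtree (a larger witness would contain an $x$-empty subspace), which is untouched; either way $F$ is a factor of $\Tc'$. I expect the real obstacle to be precisely the one-path reordering: the naive "$x$-full subtrees first" permutation can inflate a spine-edge width, so the argument must pass through typical sequences, with their merging lemma re-established for $\dim(\cdot\cap\cdot)$ and its submodularity in place of the additive bag-size accounting used for pathwidth; everything else is routine bookkeeping around this step.
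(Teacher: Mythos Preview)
Your overall plan matches the paper's: carve $\Tc$ along the $x$-mixed skeleton into $\Oh{t}$ $x$-shuffleable vertical paths, dealternate each into $\Oh[\ell]{1}$ blocks while preserving optimum width, and read off the factors. The skeleton preservation (second bullet) is also fine in outline, since your order-preserving reordering is a composition of adjacent transpositions of an $x$-full and an $x$-empty hung subtree, each of which is exactly what the paper calls an $x$-swap and is shown there to preserve all $y$-mixed skeletons.

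The genuine gap is in the factor preservation (third bullet). Your claim that ``a context factor contained in $\Vc_y\subsetneq\Vc_x$ must sit inside a single $x$-full hung subtree'' is false. Take $F=\lparts(\Tc)[w_1]\setminus\lparts(\Tc)[w_2]$ with $w_1,w_2$ both on a spine and all hung subtrees strictly between them $x$-full; then $F$ is the union of those hung subtrees, lies in $\Vc_x$, and can perfectly well lie in some $\Vc_y$ with $y<x$. After an arbitrary order-preserving interleaving these subtrees need no longer be consecutive on the new spine (an $x$-empty subtree may be inserted among them), so $F$ is no longer a context factor. The symmetric failure occurs for $y$ incomparable with $x$: a run of consecutive $x$-empty hung subtrees can be split. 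This is exactly why the paper restricts to \emph{block} shuffles---which never split an existing maximal same-type run---that are moreover \emph{boundary-preserving} (fixing the first and last block); both restrictions are used essentially in the factor-preservation proof (\cref{lem:factor-swap-perseverance}), and neither follows from ``order-preserving interleaving''. The fix is available, since the Bojańczyk--Pilipczuk word-dealternation lemma you cite already outputs a block shuffle, but you must then redo the third-bullet argument along the paper's lines rather than via your subtree-locality claim. A secondary issue in your width-preservation sketch: the spine width does not in general decompose as an $F_P$-contribution plus an $E_P$-contribution; the paper first partitions each spine into $\Oh{\ell}$ \emph{$x$-static} subpaths on which such a separation provably holds (\cref{lem:static-implies-separable}) and applies the dealternation per subpath.
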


\cref{lem:over-local-dealternation} is proved in \cref{ssec:final-local-dealternation-lemma}.
Then the Dealternation Lemma follows by sorting the nodes of $V(T^b)$ in the order of non-inc\-reasing distance from the root of $T^b$ and applying \cref{lem:over-local-dealternation}  for each $x \in V(T^b)$ in this order.

\paragraph*{Dealternation Lemma to \cref{lem:ov:smallclosure}.}
We now briefly describe how the Dealternation Lemma implies \cref{lem:ov:smallclosure}.
Fix $k, \ell \in \N$ and let $c \coloneqq f(\ell)$, where the function $f$ is as in the statement of the Dealternation Lemma.
Let $\Tc$ be a~rank decomposition of $G$ of width $\ell$ and assume that $G$ has width $k$.
By applying \Cref{lem:overview-subdealt-graphs}, let $\Tc' = (T',\lmap')$ be a rooted rank decomposition of $G$ of width $k$ so that for every node $t \in V(T)$ the set $\lparts(\Tc)[t]$ can be partitioned into a disjoint union of at most $c$ factors of $\Tc'$.
Then for each $a \in \App_T(\Tpref)$ let $\prt_a$ be the partition of $\lparts(\Tc)[a]$ into at most $c$ parts that are factors of $\Tc'$, and let $\prt = \bigcup_{a \in \App_T(\Tpref)} \prt_a$.
It remains to show that $(G[\prt], \prt)$ has rankwidth at most $2k$. The bound on the rankwidth of $(G[\prt], \prt)$ can be shown in several ways: For instance, one can construct a~rank decomposition $\Tc'' = (T'', \lmap'')$ of $\prt$ from $\Tc'$ by: (1) setting $T'' \coloneqq T'$, (2) choosing for every $C \in \prt$ an~arbitrary vertex $v \in C$ and assigning $\lmap''(C) \coloneqq \lmap'(v)$, and (3) removing from $T''$ all leaves without any assigned parts of $\prt$ and contracting degree-$2$ vertices.
It then can be proved that such a~constructed $\Tc''$ has width at most $2k$.
Therefore, $\prt$ is a~$k$-closure of $\Tpref$, and its $c$-smallness follows directly from the construction.

\subsection{Automata for optimum-width decompositions}
\label{ssec:overview-optimum-width}
We then overview an~algorithmic result displaying the strength of the model of rank decomposition automata; namely that there exists a~rank decomposition automaton computing the exact value of the rankwidth of the underlying graph.

\begin{lemma}[Informal]
  \label{lem:overview-decomposition-aut}
  Fix integers $k \leq \ell$.
  Suppose $\Tc$ is an~annotated rank decomposition of width at most $\ell$ of a~dynamic graph $G$.
  By maintaining an~automaton on $\Tc$, we can support an operation that returns whether the rankwidth of $G$ is at most $k$.
\end{lemma}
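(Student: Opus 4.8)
The goal is to build a rank decomposition automaton that, running on an annotated rank decomposition $\Tc$ of bounded width $\ell$, decides whether $\mathrm{rw}(G) \le k$. The natural approach is to mimic the Bodlaender--Kloks / Jeong--Kim--Oum style of bottom-up dynamic programming for computing optimum-width branch decompositions of subspace arrangements, but to phrase it as a finite-state tree automaton on $\Tc$. The key difficulty is that the ``obvious'' state --- the collection of all optimum-width partial branch decompositions of the subspace arrangement $\Vc[\oxy]$ corresponding to a subtree, together with their interface data --- is not finite-state: there are unboundedly many such partial decompositions and the interfaces involve the set $\lparts(\Tc)[\oxy]$, which grows with the subtree. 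So the first step is to identify the right \emph{bounded} summary of a subtree.

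\textbf{Step 1: a bounded interface via representatives.} For an oriented edge $\oxy$ of $T$, the relevant information about $\lparts(\Tc)[\oxy]$ as far as rankwidth computations are concerned is captured by the ``trace'' of $\lparts(\Tc)[\oxy]$ on its complement, which by the width bound has $\GF(2)$-rank at most $\ell$; this is exactly what the annotations $\reps,\repse,\dmap$ encode, via a minimal representative $\reps(\oxy)$ of size at most $2^{\ell}$. Using the subspace-arrangement viewpoint (the reduction to subspace arrangements recalled before \Cref{lem:overview-subdealt}), the subtree below $\oxy$ contributes a subspace arrangement whose sum has bounded-dimension ``boundary space'' $\sumof{\lparts(\Tc)[\oxy]} \cap \sumof{\lparts(\Tc)[\oyx]}$, of dimension $\le 2\ell$. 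So I would work throughout with a canonical ambient boundary space of dimension $O(\ell)$ attached to each edge, reconstructible from the local annotations, and track partial branch decompositions only up to their behavior on this boundary space.

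\textbf{Step 2: the automaton state via typical sequences.} The state stored at node $x$ should be, informally, the set of all achievable ``profiles'' of partial branch decompositions of the subtree rooted at $x$, where a profile records, for the root edge of that subtree, the \emph{typical sequence} (in the sense of Bodlaender--Kloks, adapted by Jeong--Kim--Oum to subspace arrangements) of widths along a root-to-node path, relative to the bounded boundary space. Typical sequences over a bounded-dimension space take boundedly many values and compose by a known ``merge'' operation; this is precisely the mechanism that makes the Bodlaender--Kloks DP finite-state, and it is exactly this finiteness I want to transplant. The transition function at an internal node with children $x_1,x_2$ combines the profile-sets of $x_1$ and $x_2$: for each pair of partial decompositions we form the new decomposition (either by joining the two subtrees at a new node, or by inserting one subtree's leaves somewhere inside the other's decomposition, as the DP for branch decompositions requires), recompute the typical sequence of widths using the merge operation on typical sequences, and discard anything whose width exceeds $k$. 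Crucially, by the preceding two steps all of this data lives in a universe of size bounded by a function of $\ell$ and $k$ only, and the new profile is computable from the children's profiles plus the annotations $\reps,\repse,\dmap$ around $x$ in $\Oh[k]{1}$ time --- which is exactly the definition of a rank decomposition automaton from \Cref{sec:rwautom}.

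\textbf{Step 3: soundness/completeness and reading off the answer.} I would prove by bottom-up induction that the profile-set maintained at every node $x$ is exactly the set of realizable profiles of partial branch decompositions of the subtree below $x$'s parent edge. The base case (leaves) is immediate: a single vertex / single subspace has one trivial partial decomposition. The inductive step is the correctness of the transition, which reduces to the correctness of the typical-sequence merge operation (a black box from \cite{DBLP:journals/jal/BodlaenderK96,DBLP:journals/siamdm/JeongKO21}) together with the fact that every branch decomposition of $\Vc$ decomposes compatibly with the subtrees of $T$ --- here I would invoke exactly the combinatorial content behind the Jeong--Kim--Oum algorithm, and possibly the ``totally pure'' normal form of \Cref{thm:overview-totally-pure} to restrict attention to a bounded family of candidate decompositions. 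At the root, the subspace arrangement is all of $\Vc$ and its boundary space is trivial, so the profile at the root records simply the minimum achievable width; $\mathrm{rw}(G) \le k$ iff this value (divided by $2$, to translate from subspace-arrangement rankwidth back to graph rankwidth) is at most $k$, and since the annotations at the root are available the automaton can output this in $\Oh[k]{1}$ time.

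\textbf{Main obstacle.} The genuinely hard part is Step 2: making precise the finite set of ``profiles'' so that (a) it is closed under the transition, (b) it is genuinely bounded in size by a function of $\ell$ and $k$ alone (independent of $n$ and of $|\lparts(\Tc)[\oxy]|$), and (c) the transition is not merely finite but computable in $\Oh[k]{1}$ time from the local annotations rather than from the full boundary-space data. The subtlety is that a partial branch decomposition of the subtree carries a whole tree's worth of width values, not a single number; taming this requires the typical-sequence compression \emph{and} an argument that the width of any edge of a hypothetical optimum decomposition of $\Vc$, when restricted to the subtree's subspace arrangement, depends only on the bounded boundary data --- which is where submodularity of $\dim(\sumof{\cdot} \cap \sumof{\compl{\cdot}})$ and the representative machinery of annotated rank decompositions do the real work. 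I expect the proof to be a careful but essentially mechanical adaptation of \cite{DBLP:journals/siamdm/JeongKO21} once the right bounded state is pinned down, with the annotated-rank-decomposition formalism of \Cref{sec:annot} supplying the local-computability.
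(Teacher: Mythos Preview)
Your plan is essentially the paper's approach: reuse the Jeong--Kim--Oum dynamic program as a rank decomposition automaton, with the bounded state being the ``full set'' $\fullset_k(x)$ of compact $B_x$-namus at each node. Two refinements are worth flagging. First, your Step~2 description of a profile as a ``typical sequence of widths along a root-to-node path'' is a path-decomposition idiom; for branch decompositions the right compressed object is a compact $B$-namu (a bounded-size subcubic tree decorated with subspaces of the boundary space and edge-widths), and the paper simply imports this wholesale from~\cite{DBLP:journals/siamdm/JeongKO21} rather than re-deriving typical-sequence machinery.

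Second, and more substantively, you correctly name the main obstacle --- making the transition computable in $\Oh[\ell]{1}$ time from local annotations --- but underestimate it slightly when you call the adaptation ``essentially mechanical''. The JKO transition at a node $x$ with children $c_1,c_2$ needs the \emph{transition matrices} $M_{\vec{c_1 x}}, M_{\vec{c_2 x}}$ expressing the ordered bases $\BB_{c_1},\BB_{c_2}$ of $B_{c_1},B_{c_2}$ in the ordered basis $\BB'_x$ of $B'_x = B_x + B_{c_1} + B_{c_2}$. In the general subspace-arrangement setting these bases are $n$-dimensional vectors and computing all transition matrices is the $\Omega(n^2)$ bottleneck of~\cite{DBLP:journals/siamdm/JeongKO21}. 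The paper's actual technical work (their Lemma~\ref{lem:transition-matrices}) is to define a specific \emph{canonical transcript} --- ordered bases built from vectors $\lin{e}_{N(v)\setminus S}$ and $\lin{e}_{N(v)\cap S}$ for $v$ ranging over the representatives $\reps(\oxp),\reps(\opx)$ --- and to show that for this particular choice the transition matrices are determined, via a dimension-reduction argument, by the transition signature $\tau(\Tc,\oxy)$ alone. Your Step~1 gestures at ``a canonical ambient boundary space \ldots\ reconstructible from the local annotations'', which is the right intuition, but the work is in picking the bases so that the change-of-basis matrices, not just the spaces, are locally computable.
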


We will then use \Cref{lem:overview-decomposition-aut} to show that given an~annotated rank decomposition of small (but possibly non-optimal) width of a~graph, we can efficiently construct a~rank decomposition of this graph of optimum width:

\begin{lemma}[\cref{lem:linear-decomposition-recovery}]
  \label{lem:overview-get-decomposition}
There is an~algorithm that, given as input an~annotated rank decomposition $\Tc$ of width $\ell$ of a graph $G$ and an integer $k$, in time $\Oh[\ell]{|\Tc|}$ either determines that $G$ has rankwidth larger than $k$, or outputs a (non-annotated) rank decomposition $(T,\lambda)$ of $G$ of width at most~$k$.
Moreover, the resulting decomposition can be annotated in time $\Oh[\ell]{|\Tc| \log |\Tc|}$.
\end{lemma}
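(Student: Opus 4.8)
The plan is to first establish \Cref{lem:overview-decomposition-aut} in a strengthened form, and then extract an actual optimum-width decomposition from a single bottom-up pass, using the automaton's internal state. Recall that the rankwidth algorithm of Jeong, Kim and Oum~\cite{DBLP:journals/siamdm/JeongKO21} works by dynamic programming on a branch decomposition of a subspace arrangement, where the table at a node $x$ records, for each ``boundary type'' of the cut at $x$, the minimum width achievable by a branch decomposition of the subtree below $x$ that is compatible with that boundary. The key point is that, for width bounded by $\ell$, both the number of relevant boundary types and the amount of information per type are bounded by a function of $\ell$ (this is exactly what makes their algorithm $\Oh[k]{n^3}$, with the cubic factor coming only from recombination at a single node, not from the table size). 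Thus this DP can be cast as a rank decomposition automaton in the sense of \Cref{sec:rwautom}: the state of $x$ is a function of the states of its two children together with the $\Oh[\ell]{1}$-size annotations $\reps,\repse,\dmap$ around $x$, and is computable in $\Oh[\ell]{1}$ time. Running this automaton on the given $\Tc$ and reading off the root state tells us whether $\rw(G)\le k$; this proves \Cref{lem:overview-decomposition-aut}, and since the automaton is evaluated bottom-up in one pass, it takes $\Oh[\ell]{|\Tc|}$ time total.

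Second, I would augment the automaton so that, in addition to the feasibility/width information, each node's state stores a \emph{witness pointer}: for the (or a) boundary type achieving the minimum width $\le k$ at $x$, which pair of child-boundary-types was used, together with the local gluing data (i.e.\ how the optimal sub-decompositions of the two children are to be combined at $x$). This does not increase the state size by more than a constant factor, so it remains a valid rank decomposition automaton. After the bottom-up pass, if the root state certifies $\rw(G)\le k$, I perform a top-down pass following the witness pointers: starting from the root with its optimal boundary type, at each node I recover the child boundary types and the gluing operation, and recursively build the optimum-width rank decomposition $(T,\lambda)$ of $G$. Each node is visited once on the way down and emits $\Oh[\ell]{1}$ tree edges, so the top-down reconstruction is also $\Oh[\ell]{|\Tc|}$, giving the claimed bound. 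For the ``moreover'' clause, once $(T,\lambda)$ is produced as a bare rank decomposition, we annotate it by computing, for every oriented edge $\oxy$ of $T$, a minimal representative $\reps(\oxy)$, the bipartite graphs $\repse(xy)$, and the maps $\dmap(xyz)$; using the divide-and-conquer annotation machinery of \Cref{sec:annot} (\Cref{lem:rearangmain}) this takes $\Oh[\ell]{|\Tc| \log|\Tc|}$ time, because the graph $G$ itself is recoverable from the input annotated decomposition $\Tc$ and $|T| = \Oh{|\Tc|}$.

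The main obstacle is verifying that the Jeong–Kim–Oum dynamic program genuinely fits the automaton model: one must check that the DP table at a node depends on the subtree below only through data of size $\Oh[\ell]{1}$, and that this data transforms in $\Oh[\ell]{1}$ time under a single merge step equipped only with the local annotations. This requires a careful rephrasing of their ``boundaried subspace arrangement'' tables — in particular identifying the right finite invariant (the analogue of a canonical/minimal representative of the cut, together with the ranks of all relevant sums and intersections) and checking it is preserved and updated locally. The typical-sequence bookkeeping in~\cite{DBLP:journals/siamdm/JeongKO21} is what guarantees the bound is a function of $\ell$ alone, so the argument leans on importing that analysis. A secondary technical point is that the witness-following top-down pass produces a decomposition with the correct leaf set and width; this is a routine but slightly tedious correctness argument, analogous to standard ``reconstruct an optimal solution from a DP table'' arguments, once the automaton is set up correctly.
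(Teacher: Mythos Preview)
Your proposal is correct and follows essentially the same approach as the paper: cast the Jeong--Kim--Oum dynamic program as a rank decomposition automaton (\Cref{lem:exact-rank-automaton}), run it bottom-up, then use their reconstruction procedure (\Cref{lem:korean-decomposition-recovery}) top-down, and finally annotate via \Cref{lem:rearangmain}. One small correction and one sharpening: the cubic factor in~\cite{DBLP:journals/siamdm/JeongKO21} does \emph{not} come from recombination at a node---their merge step is already $\Oh[\ell]{1}$ given the transition matrices (\Cref{lem:full-set-ingredients})---but from explicitly computing ordered bases $\BB_x,\BB'_x$ of the boundary spaces $B_x\subseteq\F^n$, which costs $\Omega(n^2)$ just to store. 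The paper's specific contribution here (\Cref{lem:transition-matrices}) is to define a \emph{canonical transcript} whose transition matrices $M_{\vec{xp}}$ can be recovered in $\Oh[\ell]{1}$ time purely from the local annotations $\reps,\repse,\dmap$ around $x$, without ever materializing the $n$-dimensional bases; this is exactly the ``right finite invariant'' you allude to in your third paragraph.
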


Later, we will show how both \cref{lem:overview-decomposition-aut,lem:overview-get-decomposition} are used in the proof of \cref{lem:ov:closureprds} announced in \cref{ssec:ov:dynrw}, i.e., that we can maintain an~automaton on $\Tc$ supporting the following operation: given a~prefix $\Tpref$ of $\Tc$, find a~minimal $c$-small $k$-closure $\prt$ of $\Tpref$.
Moreover, \cref{lem:overview-get-decomposition} is crucially used in the proof of \cref{the:altrw}; we overview that result in \cref{ssec:ov:altrw}.

The results in this section build on (and improve upon) an algorithm of Jeong, Kim, and Oum~\cite{DBLP:journals/siamdm/JeongKO21} for computing rankwidth exactly in $\Oh[k]{n^3}$ time by using a dynamic programming procedure that can be regarded as a rankwidth analog of the Bodlaender-Kloks dynamic programming for treewidth \cite{DBLP:journals/jal/BodlaenderK96}.
\Cref{lem:overview-get-decomposition} showcases the strength of annotated rank decompositions, as the corresponding algorithm for rank decompositions given in~\cite{DBLP:journals/siamdm/JeongKO21} works in $\Oh[\ell]{|\Tc|^2}$ time.

\paragraph*{Exact rankwidth automaton.}

%

The automaton announced in the statement of \cref{lem:overview-decomposition-aut} effectively reimplements the subroutine of \textsc{Branch-Width Compression} from the cubic-time rankwidth algorithm of Jeong, Kim and Oum~\cite{DBLP:journals/siamdm/JeongKO21}: Given a~subspace arrangement $\Vc$, $|\Vc| = n$, comprising subspaces of $\F^n$ and a~rank decomposition $\Tc^b$ of $\Vc$ of width at most $\ell$, find a~rank decomposition of $\Vc$ of width at most $k$ if one exists.
However, due to the fact that their algorithm operates on subspaces of $n$-dimensional linear spaces explicitly, their subroutine works in time $\Oh[k]{n^2}$ --- and even this complexity is only achieved after a~cubic-time preprocessing of $\Vc$.
In the restricted case of rankwidth of graphs, we are able to break the quadratic time barrier by manipulating the implicit representations of these spaces and optimize the time complexity of our implementation to linear, and even represent the algorithm as an~automaton on~$\Tc^b$.

We now give a~short overview of \textsc{Branch-Width Compression} in \cite{DBLP:journals/siamdm/JeongKO21}.
Recall that $\Vc_x = \lparts(\Tc^b)[x]$ for $x \in V(T^b)$.
The \emph{boundary space} at $x$ is $B_x \coloneqq \sumof{\Vc_x} \cap (\Vc \setminus \Vc_x)$; so we have $\ell = \max_{x \in V(T^b)} \dim(B_x)$.
Let $\BB_x$ be an~ordered basis of $B_x$ --- any sequence of $\dim(B_x)$ vectors of $B_x$ spanning $B_x$.
Then any vector of $B_x$ can be uniquely represented in the basis $\BB_x$ using $\dim(B_x)$ bits as a~linear combination of vectors of $\BB_x$, and any subspace of $B_x$ can be represented using at most $\dim(B_x)^2$ bits as a~span of at most $\dim(B_x)$ vectors of $B_x$. 
If $x$ is a~non-leaf node with two children $c_1, c_2$, then we define $B'_x = B_x + B_{c_1} + B_{c_2}$, and we let $\BB'_x$ to be an~ordered basis of $B'_x$ whose prefix is $\BB_x$.
In the algorithm, all subspaces of $B_x$ are represented in the basis $\BB_x$, and all subspaces of $B'_x$ are represented in the basis $\BB'_x$.
For any node $x$ of $T^b$ with parent $p$, let $M_{\vec{xp}}$ be the transition matrix from the basis $\BB_x$ to the basis $\BB'_p$, i.e., the unique $|\BB'_p| \times |\BB_x|$ matrix such that, for every vector $\lin{v} \in \F^{\dim(B_x)}$, we have $\sum_{i=1}^{\dim(B_x)} \lin{v}_i (\BB_x)_i = \sum_{i=1}^{\dim(B'_p)} (M_{\vec{xp}} \lin{v})_i (\BB'_p)_i$.
Note that $M_{\vec{xp}}$ can be represented using $\dim(B_x) \cdot \dim(B'_p) = \Oh[\ell]{1}$ bits, even though $B_x$ and $B'_p$ are subspaces of a~highly-dimensional space $\F^n$.

In the algorithm the authors compute, for every $x \in V(T^b)$, the \emph{full set at $x$ of width $k$ with respect to $\Tc^b$}, denoted $\fullset_k(x)$, which is a~family of objects representing heavily compressed versions of rank decompositions of $\Vc_x$ that are totally pure with respect to $\Tc^b$.\footnote{This mirrors an~analogous definition of a~full set in the work of Bodlaender and Kloks \cite{DBLP:journals/jal/BodlaenderK96}.}
They show that:
%
\begin{itemize}[nosep]
  \item for a~leaf $l$ of $T^b$, the set $\fullset_k(l)$ can be constructed in time $\Oh[\ell]{1}$ given only $\dim(B_l)$;
  \item for a~non-leaf $x$ of $T^b$ with two children $c_1, c_2$, the set $\fullset_k(x)$ can be constructed in time $\Oh[\ell]{1}$ given $\fullset_k(c_1)$, $\fullset_k(c_2)$, the transition matrices $M_{\vec{c_1 x}}$, $M_{\vec{c_2 x}}$ and the value $\dim(B_x)$;
  \item $\fullset_k(r) \neq \emptyset$ for the root $r$ of $T^b$ if and only if the rankwidth of $\Vc$ is at most $k$; and
  \item if $\fullset_k(r) \neq \emptyset$, then a~rank decomposition of $\Vc$ of width at most $k$ can be reconstructed in time $\Oh[\ell]{n}$ from the values $\fullset_k(x)$ for $x \in V(T^b)$ and the transition matrices $M_{\vec{xp}}$.
\end{itemize}

So, assuming access to the transition matrices $M_{\vec{xp}}$ for all non-root $x \in V(T^b)$ with parent $p$, the entire \textsc{Branch-Width Compression} can be implemented in time $\Oh[\ell]{n}$.
However, it seems quite hard to determine these matrices efficiently from a~general subspace arrangement $\Vc$: \cite{DBLP:journals/siamdm/JeongKO21} determines the ordered bases $\BB_x$, $\BB'_x$ explicitly and computes the transition matrices $M_{\vec{xp}}$ from these bases afterwards.
This approach unfortunately requires $\Omega(n^2)$ time and space since we need $\Omega(n^2)$ bits of memory to simply store all the ordered bases.
However, in the setting of rank decompositions of graphs, we can work around this issue using annotated rank decompositions.
The following lemma (not proved here) encapsulates the key technical idea of our approach.

\begin{lemma}[informal statement of \cref{lem:transition-matrices}]
  \label{lem:overview-transition-matrices}
  Suppose $\Tc$ is a~rooted annotated rank decomposition of a~graph $G$ and $\Tc^b$ is the isomorphic rank decomposition of the subspace arrangement $\Vc$ equivalent to $G$.
  Then there exist two families of ordered bases $\{\BB_x\}_{x \in V(T^b)}$, $\{\BB'_x\}_{x \in V(T^b)}$, such that for every $x \in V(T^b)$ with parent $p$ and children $c_1, c_2$, we can uniquely determine the transition matrices $M_{\vec{c_1 x}}$, $M_{\vec{c_2 x}}$ and the value $\dim(B_x)$ in time $\Oh[\ell]{1}$ from the annotations of $\Tc$ around $x$.
\end{lemma}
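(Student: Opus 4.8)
The plan is to make explicit the combinatorial content of the boundary spaces $B_x$ under the graph-to-subspace translation $V_i = \linspan(\lin{e}_i, \sum_{v_j \in N(v_i)} \lin{e}_j)$, and to observe that exactly this content is what the representatives stored by an annotated rank decomposition encode. Write $A_x := \lparts(\Tc)[x] \subseteq V(G)$, so $\Vc_x = \{V_i : v_i \in A_x\}$, and split coordinates as $\F^{V(G)} = \F^{A_x} \oplus \F^{V(G)\setminus A_x}$. The first step is to show that $B_x$ respects this split as a direct sum $B_x = B_x^{\mathrm{in}} \oplus B_x^{\mathrm{out}}$, where $B_x^{\mathrm{out}} \subseteq \F^{V(G)\setminus A_x}$ is spanned by the external neighbourhoods $\{N(v)\setminus A_x : v \in A_x\}$ and $B_x^{\mathrm{in}} \subseteq \F^{A_x}$ by the neighbourhoods into $A_x$, namely $\{N(u)\cap A_x : u \notin A_x\}$; both inclusions follow by expanding a vector of $B_x$ in the two generating systems of $\sumof{\Vc_x}$ and $\sumof{\Vc\setminus\Vc_x}$ and using $\F = \GF(2)$. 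In particular $\dim B_x = 2\cutrk_G(A_x)$, re-deriving the known relation between the rankwidths of $G$ and of $\Vc$. Now recall that a minimal representative $\reps(\oxp)$ of $A_x$ (for $p$ the parent of $x$) is precisely a system of representatives for the classes of $A_x$ under ``same external neighbourhood'', and symmetrically $\reps(\opx)$ for $V(G)\setminus A_x$. Since each generator of $B_x^{\mathrm{in}}$ is constant on these classes of $A_x$ and each generator of $B_x^{\mathrm{out}}$ is constant on the classes of $V(G)\setminus A_x$, the restriction maps $\F^{A_x}\to\F^{\reps(\oxp)}$ and $\F^{V(G)\setminus A_x}\to\F^{\reps(\opx)}$ are injective on $B_x^{\mathrm{in}}$ and on $B_x^{\mathrm{out}}$. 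Thus $B_x$ is represented faithfully in the tiny ``profile space'' $\Pi_x := \F^{\reps(\oxp)}\times\F^{\reps(\opx)}$, and --- crucially --- the profiles of the natural generators of $B_x^{\mathrm{in}}$ and of $B_x^{\mathrm{out}}$ are exactly the columns and the rows of the biadjacency matrix of $\repse(xp)$, so that $\dim B_x = 2\cdot\mathrm{rank}(\repse(xp))$ is read off immediately.

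Second, fix once and for all a total order on every representative set (say by vertex identity), and define $\BB_x := \BB_x^{\mathrm{in}} \frown \BB_x^{\mathrm{out}}$, where $\BB_x^{\mathrm{in}}$ is obtained by scanning $\reps(\opx)$ in order and greedily keeping those $r'$ whose column in $\repse(xp)$ is independent of the ones kept so far (so $\BB_x^{\mathrm{in}}$, read as honest vectors $N(r')\cap A_x$, is a basis of $B_x^{\mathrm{in}}$), and $\BB_x^{\mathrm{out}}$ analogously from the rows. Then $\BB_x$ is a well-defined ordered basis of $B_x$ depending only on the single annotation $\repse(xp)$ on the edge from $x$ to its parent. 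Consequently, for any node $x$ with children $c_1, c_2$, the objects $\BB_{c_1}, \BB_{c_2}$ referred to by the transition matrices are determined by $\repse(c_1 x)$ and $\repse(c_2 x)$, which also belong to the annotations around $x$; so no data needs to be propagated up the tree and consistency of all the basis choices is automatic. We then define $\BB'_x$ by starting from $\BB_x$ and greedily extending it to a basis of $B'_x = B_x + B_{c_1} + B_{c_2}$ using the vectors of $\BB_{c_1}$ and then of $\BB_{c_2}$ (which together with $\BB_x$ span $B'_x$); this extension is carried out not on honest vectors but on their profiles in a common small space, as follows.

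Third --- the structural heart of the argument --- we identify a single profile space carrying all of $B_x$, $B_{c_1}$, $B_{c_2}$, $B'_x$ faithfully. Put $P_1 = A_{c_1}$, $P_2 = A_{c_2}$, $P_0 = V(G)\setminus A_x$, so $\F^{V(G)} = \F^{P_1}\oplus\F^{P_2}\oplus\F^{P_0}$. Projecting the generating sets of $B_x$, $B_{c_1}$, $B_{c_2}$ onto the three summands, one checks that the coordinate projections of $B'_x$ onto $\F^{P_1}$, $\F^{P_2}$, $\F^{P_0}$ are exactly $B_{c_1}^{\mathrm{in}}$, $B_{c_2}^{\mathrm{in}}$, $B_x^{\mathrm{out}}$; hence sending $w\in B'_x$ to the triple of restrictions of its three projections to $\reps(\ocax)$, $\reps(\ocbx)$, $\reps(\opx)$ is injective, so $B'_x$ embeds faithfully into $\Sigma_x := \F^{\reps(\ocax)}\times\F^{\reps(\ocbx)}\times\F^{\reps(\opx)}$, of dimension $2^{\Oh{\ell}}$. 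Because the profile maps are linear and injective on the relevant subspaces, a basis (resp.\ a coordinate vector) computed on the $\Sigma_x$ side coincides with the one defined by the honest bases; so to obtain $\BB'_x$, the matrices $M_{\ocax}$, $M_{\ocbx}$ and $\dim B_x$ it suffices to compute the $\Sigma_x$-profile of every vector of $\BB_x$, $\BB_{c_1}$, $\BB_{c_2}$, after which everything is $\GF(2)$ linear algebra in dimension $2^{\Oh{\ell}}$, i.e.\ $\Oh[\ell]{1}$ time. For a vector of $\BB_x$, its $\Sigma_x$-profile is obtained from $\repse(c_1 x)$, $\repse(c_2 x)$ after relabelling the relevant representative through $\dmap(pxc_1)$, $\dmap(pxc_2)$, together with the already-known $\Pi_x$-profile on the $P_0$ part. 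For a vector $b = b^{\mathrm{in}} + b^{\mathrm{out}}$ of $\BB_{c_1}$, with $b^{\mathrm{in}}$ supported on $P_1$ and $b^{\mathrm{out}}$ on $P_2\cup P_0$: its $P_1$-component is literally the $\reps(\ocax)$-part of its $\Pi_{c_1}$-profile (read off $\repse(c_1 x)$), while its $P_2$- and $P_0$-components are obtained from the $\reps(\oxca)$-part of that profile by relabelling representatives along $\dmap(c_2 x c_1)$ and $\dmap(pxc_1)$ respectively; symmetrically for $\BB_{c_2}$ using $\dmap(c_1 x c_2)$ and $\dmap(pxc_2)$. Every $\repse$ and $\dmap$ invoked here lives on an edge incident to $x$ or a length-three path centred at $x$, hence is part of the annotations around $x$.

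The step requiring the most care, and the main obstacle, is justifying these profile relabellings: for a vector $b$ of $\BB_{c_i}$ one must verify that $b^{\mathrm{out}}$ is constant on the very equivalence classes along which $\dmap(c_{3-i} x c_i)$ and $\dmap(p x c_i)$ collapse representatives --- that is, that the class structure on $V(G)\setminus A_{c_i}$ tracked by $\reps(\oxca)$ refines the one respected by $B_{c_i}^{\mathrm{out}}$ --- and analogously for the $\BB_x$-profiles; matching up these class structures across the several representative sets is the bulk of the bookkeeping. Once it is in place, the lemma follows, with the boundary cases handled directly: for the root $r$ one has $B_r = 0$ and $\BB_r$ empty, while $B'_r = B_{c_1} + B_{c_2}$ is treated exactly as above with $P_0 = \emptyset$; for a leaf $l$ mapped to a vertex $v$, $\dim B_l$ is $2$ if $v$ has a neighbour and $0$ otherwise, and $\BB_l$, its profile, and $\dim B_l$ are read off the single row $\repse(lp)$.
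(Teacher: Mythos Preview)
Your proposal is correct and follows essentially the same approach as the paper. Both arguments hinge on the same two observations: first, that $B_x$ splits as a direct sum of an ``in'' part supported on $A_x$ and an ``out'' part supported on $V(G)\setminus A_x$, each spanned by neighbourhood vectors; second, that because these neighbourhood vectors are constant on the twin classes encoded by the representatives, all of $B_x, B_{c_1}, B_{c_2}, B'_x$ embed faithfully into a space of dimension $2^{\Oh{\ell}}$ indexed by the local representatives, where the required linear algebra can be done in $\Oh[\ell]{1}$ time. The paper phrases the second step as ``linear equivalence after dropping redundant coordinates'' and your proposal as ``injective restriction to a profile space $\Sigma_x$'', but these are the same mechanism; likewise your greedy-in-order basis selection is exactly the paper's lexicographically earliest basis. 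The only cosmetic differences are that the paper indexes the lemma by the child edge (computing $M_{\vec{xp}}$ from the transition signature at the parent $p$) rather than by the parent node, and that your relabelling step pulls back along $\dmap$ where the paper's proof pushes forward---both are valid.
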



Recalling the model of rank decomposition automata defined before, observe that we can encode the algorithm of Jeong, Kim and Oum as a~rank decomposition automaton running on $\Tc$:


\begin{lemma}[informal statement of \cref{lem:exact-rank-automaton}]
  \label{lem:overview-automaton}
%
  There exists a~rank decomposition automaton such that, for any graph $G$ with annotated rank decomposition $\Tc^b$ of width $\ell$, the state of the automaton at node $x \in V(\Tc^b)$ is exactly $\fullset_k(x)$.
  Each state of the automaton can be evaluated in time $\Oh[\ell]{1}$.
\end{lemma}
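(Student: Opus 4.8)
The plan is to take the automaton's state at a node $x$ to be exactly $\fullset_k(x)$, represented (as in~\cite{DBLP:journals/siamdm/JeongKO21}) relative to a fixed ordered basis $\BB_x$ of the boundary space $B_x$, and to realise the leaf rule and the internal-node rule of \textsc{Branch-Width Compression} as the automaton's evaluation function. The key observation is that all three ingredients the internal-node rule consumes --- $\fullset_k(c_1)$, $\fullset_k(c_2)$, the transition matrices $M_{\vec{c_1 x}}, M_{\vec{c_2 x}}$, and $\dim(B_x)$ --- are available locally at $x$: the first two are the states of the children, and the remaining ones are produced from the annotations $\reps,\repse,\dmap$ around $x$ by \cref{lem:transition-matrices} (Lemma~\ref{lem:overview-transition-matrices}) in $\Oh[\ell]{1}$ time. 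Thus no auxiliary data needs to be stored in the state beyond $\fullset_k(x)$ itself, and since $\dim(B_x) \le \ell$ for every $x$ and the size bound on full sets from~\cite{DBLP:journals/siamdm/JeongKO21} gives $|\fullset_k(x)| = \Oh[\ell]{1}$, the state is a legitimate automaton state of size $\Oh[\ell]{1}$.

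Concretely, I would first invoke \cref{lem:transition-matrices} to fix the families of ordered bases $\{\BB_x\}_x$ and $\{\BB'_x\}_x$ attached to the annotated rank decomposition $\Tc$ of $G$ and to its isomorphic copy $\Tc^b$ over the subspace arrangement $\Vc$ equivalent to $G$; henceforth every compressed rank decomposition of $\Vc_x$ listed in $\fullset_k(x)$, and every subspace of $B_x$ it refers to, is encoded in $\BB_x$ using $\Oh[\ell]{1}$ bits exactly as in~\cite{DBLP:journals/siamdm/JeongKO21}. Then I define the evaluation function. For a leaf $l$ with parent $p$, read off, in $\Oh[\ell]{1}$ time from the annotations around $l$ (in particular from $\repse(lp)$), the bounded value $\dim(B_l)$, and apply the leaf rule of~\cite{DBLP:journals/siamdm/JeongKO21} to obtain $\fullset_k(l)$. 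For a non-leaf $x$ with children $c_1, c_2$, take the child states $\fullset_k(c_1)$, $\fullset_k(c_2)$, apply \cref{lem:transition-matrices} to the annotations around $x$ to get $M_{\vec{c_1 x}}, M_{\vec{c_2 x}}$ and $\dim(B_x)$, and then apply the internal-node rule of~\cite{DBLP:journals/siamdm/JeongKO21} to these data to obtain $\fullset_k(x)$, again in $\Oh[\ell]{1}$ time. That the automaton's state at $x$ equals $\fullset_k(x)$ then follows by a straightforward induction on the distance of $x$ from the deepest leaf of its subtree in $\Tc^b$, the base case being the leaf rule and the inductive step being the internal-node rule together with the correctness guarantee of \cref{lem:transition-matrices}.

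The one delicate point --- and the place I expect the real work to sit --- is ensuring that the basis conventions of~\cite{DBLP:journals/siamdm/JeongKO21} and of \cref{lem:transition-matrices} line up, so that the internal-node rule may be applied verbatim. In~\cite{DBLP:journals/siamdm/JeongKO21} the combine step takes full sets written over $\BB_{c_1}, \BB_{c_2}$, translates their referenced subspaces into $\BB'_x$ using $M_{\vec{c_1 x}}, M_{\vec{c_2 x}}$, performs the compression there, and finally re-encodes the surviving objects over $\BB_x$, exploiting that $\BB_x$ is a prefix of $\BB'_x$ and that $\dim(B_x) \le \ell$. One must therefore check that the bases \cref{lem:transition-matrices} hands out are exactly of this prefix form and that no step of the compression argument secretly relied on $\BB_x$ being computed globally rather than supplied abstractly; since \cref{lem:transition-matrices} is tailored precisely to deliver such a coherent family of bases, this should amount to bookkeeping rather than a new idea. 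Everything else is a faithful transcription of \textsc{Branch-Width Compression} into the rank-decomposition-automaton formalism, the only change being that the $\Omega(n^2)$-space explicit construction of the bases $\BB_x$ in~\cite{DBLP:journals/siamdm/JeongKO21} is replaced by the $\Oh[\ell]{1}$-time local recovery of transition matrices from the annotations of $\Tc$.
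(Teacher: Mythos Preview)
Your proposal is correct and follows essentially the same approach as the paper: define the state at $x$ to be $\fullset_k^{\BB_x}(x)$, realise the leaf and internal-node rules of~\cite{DBLP:journals/siamdm/JeongKO21} as the initial and transition mappings, and use \cref{lem:transition-matrices} to recover the transition matrices and $\dim(B_x)$ locally from the annotations. The ``delicate point'' you flag about basis conventions is precisely what the paper resolves by fixing a specific \emph{canonical transcript} (in which each $\BB_x$ is by construction a prefix of $\BB'_x$) before proving \cref{lem:transition-matrices}, so the internal-node rule of~\cite{DBLP:journals/siamdm/JeongKO21} applies verbatim; the only item you do not mention explicitly is the final mapping $\varepsilon$ handling the root, which the paper treats by a separate small case using the edge signatures of the two root-incident edges.
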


This essentially resolves \cref{lem:overview-decomposition-aut}.
With the help of the rank decomposition reconstruction subroutine from \cite{DBLP:journals/siamdm/JeongKO21}, our algorithm can also output a~non-annotated rank decomposition of $G$ of width at most $k$ in linear time, yielding the first part of \Cref{lem:overview-get-decomposition}.
By \cref{lem:rearangmain}, the output decomposition of \Cref{lem:overview-get-decomposition} can be annotated in $\Oh[\ell]{|\Tc| \log |\Tc|}$ time using a~divide-and-conquer type algorithm.


\paragraph*{Closure automaton.}
We then briefly sketch the proof of \cref{lem:ov:closureprds} as an~application of \cref{lem:overview-automaton}:
Assuming we maintain appropriate automaton on a~decomposition $\Tc$, we can support an~operation that given a~prefix $\Tpref$, returns an~encoding of a~minimal $c$-small $k$-closure  $\prt$ of $\Tpref$.

The automaton we will construct and maintain is a~\emph{closure automaton}.
For fixed $c$ and $k$ it computes, for all edges $\vec{xy}$ of $\Tc$, the family $\AutomataReps^{c,k}(\vec{xy})$ of all partitions $\prt_{\vec{xy}}$ of $\lparts(\Tc)[\vec{xy}]$ into at most $c$ parts that can be extended to a~minimal $c$-small $k$-closure of some prefix $\Tpref$ with $\vec{xy} \in \oApp(\Tpref)$.
The main challenge is how to represent $\prt_{\vec{xy}}$ --- storing the partitioning of $\lparts(\Tc)[\vec{xy}]$ explicitly is obviously impractical, and even storing $\aep(\prt_{\vec{xy}})$ turns out to be too expensive in our algorithm.
Instead, for every set $C \in \prt_{\vec{xy}}$ we only keep a~carefully selected minimal representative of $C$.
Then, with some extensive bookkeeping, we can compute the family $\AutomataReps^{c,k}(\vec{xy})$ for all $\vec{xy}$ in time $\Oh[c,k]{1}$.

Then, given a~prefix $\Tpref$, we want to find a~closure $\prt$ of $\Tpref$ such that: (i) for every $\vec{xy} \in \oApp(\Tpref)$, (the representation of) the subfamily of $\prt$ restricted to $\lparts(\Tc)[\vec{xy}]$ belongs to $\AutomataReps^{c,k}(\vec{xy})$, (ii) the rankwidth of $\prt$ is at most $2k$.
This can be achieved in time $\Oh[c,k]{|\Tpref|}$ by using the exact rankwidth automaton from \cref{lem:overview-automaton} and applying on it standard dynamic programming techniques on automata.
With enough care, this dynamic programming allows us to find a~\emph{minimal} closure $\prt$.
Then, restoring the objects $\cut(\prt)$, $\aep(\prt)$ and the rank decomposition of $\prt$ of width at most $2k$ are straightforward (even if technical) tasks that can be done in total time $\Oh[c,k]{|\cut(\prt)|}$.


\subsection{Almost-linear time algorithm for rankwidth}
\label{ssec:overview-linear-algo}
\label{ssec:ov:altrw}
Then we show how to compute a~rank decomposition of an~$n$-vertex, $m$-edge graph $G$ of width at most $k$ in time $\Oh[k]{n \cdot 2^{\sqrt{\log n} \log \log n}} + \Oh{m}$, if such a~decomposition exists (\cref{the:altrw}).
The full exposition of this algorithm can be found in \cref{sec:comp}.


In this section we assume that the input graph $G$ is bipartite, with the bipartition $V(G) = A \cup B$; in \cref{ssec:bipartite-reduction} we show that the general case can be reduced to the bipartite case by using a construction of Courcelle~\cite{DBLP:journals/japll/Courcelle06}.
Also assume that $G$ has rankwidth at most $k$.
Let $X \triangle Y = (X \cup Y) \setminus (X \cap Y)$ denote the symmetric difference of sets.
We say that two vertices $u, v \in V(G)$ are \emph{twins} if $N(u) = N(v)$, and \emph{$c$-near-twins} for $c \in \N$ if $|N(u) \triangle N(v)| \leq c$.
The main idea of our algorithm is to exploit the presence of many twins and near-twins in bipartite graphs of small rankwidth.


Consider the following auxiliary problem, which we call {\sc Twin Flipping}.
As input we are given an~annotated rank decomposition $\Tc$ of width at most $k$ of a~bipartite graph $G = (A, B, E)$, $E \subseteq A \times B$; a~set $X \subseteq A$ with the property that every vertex of $X$ has a~twin in $A \setminus X$; and a~set of pairs $F \subseteq X \times B$.
Let $n = |A| + |B|$ and assume $|F| \leq \Oh[k]{n}$.
The task is to construct an~annotated rank decomposition of $G' \coloneqq (A, B, E \triangle F)$ of width at most $k$, assuming it exists.
Define a~function $T(n)$ with the property that {\sc Twin Flipping} can be solved in time $\Oh[k]{T(n)}$.
Then we have:
\begin{lemma}
\label{lem:overview-twin-flipping}
$T(n) \leq n \cdot 2^{o(\sqrt{\log n} \log \log n)}$.
\end{lemma}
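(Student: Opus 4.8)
The plan is to solve \textsc{Twin Flipping} recursively, with the recursion tree of depth roughly $\sqrt{\log n}$ and branching factor roughly $2^{\sqrt{\log n}}$, so that the total overhead multiplies out to $2^{O(\sqrt{\log n}\log\log n)}$. The starting observation is that since every vertex of $X$ has a twin in $A\setminus X$, a bounded-rankwidth bipartite graph has only $2^{O(k)}$ distinct neighbourhoods on the $A$-side, so $A$ (and in particular $X$) can be partitioned into $2^{O(k)}$ twin classes. Within one twin class, all vertices behave identically with respect to the ``old'' adjacencies $E$, so the only thing distinguishing vertices of $X$ inside a class is how $F$ touches them. This suggests splitting $X$ into two halves $X=X_1\cup X_2$ (balanced, using the twin structure so that each $X_i$ still has twins in $A\setminus X_i$ — crucially, a vertex of $X_2$ still has a twin in $A\setminus X_2$ because its twin partner in $A\setminus X$ is untouched, and similarly for $X_1$), flipping $F$ restricted to $X_1$ first and then $F$ restricted to $X_2$. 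Each flip is again an instance of \textsc{Twin Flipping} but now of size (in terms of the relevant set $X_i$) halved, while $n$ stays the same; the difficulty is that naively this only halves $|X|$, not $n$, so we need a genuinely size-reducing step.

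The genuine size reduction comes from \emph{contracting twin classes}. Before recursing, I would take the annotated rank decomposition $\Tc$ and, for each twin class of $A$, keep only one representative vertex — forming a ``compressed'' graph $\bar G$ on $O(2^k)$ many $A$-vertices plus the $B$-side vertices that actually appear in some pair of $F$ together with a bounded number of representatives for $B$. More precisely, I expect the right move is: let $B' \subseteq B$ be the vertices incident to $F$, together with a minimal representative set of $B$ with respect to the cut separating $A$ from $B$; then $|B'| \le |F| + 2^{O(k)}$. The subgraph induced on (representatives of $A$) $\cup\, B'$ has size $O_k(|F|)$, we can extract an annotated rank decomposition of it from $\Tc$ in time $O_k(|\Tc|)$ using the machinery for manipulating annotated rank decompositions (the subtree-deletion/contraction operations already used to build $\Tc^C$ in the refinement construction), perform the flip there by brute force or recursion, and then ``lift'' the result back: since the flipped graph $G'$ still has the property that every old twin class of $A$ splits into $O_k(1)$ new twin classes determined by the $F$-pattern, a width-$k$ rank decomposition of the compressed flipped graph can be expanded into a width-$k$ rank decomposition of $G'$ by re-inserting the twin copies as leaves hanging off the representative's leaf. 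Re-annotating this expanded decomposition costs $O_k(|\Tc|\log|\Tc|)$ via \Cref{lem:rearangmain}. This is essentially a self-reduction from \textsc{Twin Flipping} on $n$ vertices with $|F|\le O_k(n)$ pairs to \textsc{Twin Flipping} on $O_k(|F|)$ vertices, i.e.\ to an instance whose size we control.

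Putting the two ideas together: split $F$ into $q \coloneqq 2^{\lceil\sqrt{\log n}\rceil}$ roughly equal batches $F_1,\dots,F_q$, and flip them one batch at a time. Flipping a single batch $F_i$ with $|F_i|\le O_k(n)/q$ pairs reduces, by the compression step, to a \textsc{Twin Flipping} instance of size $O_k(|F_i|) = O_k(n/q) = O_k(n / 2^{\sqrt{\log n}})$, on which we recurse. The recursion depth is then $O(\sqrt{\log n})$ before instances become of size $O_k(1)$, and at each level we pay a $q = 2^{\sqrt{\log n}}$ factor for the batching plus an $O_k(\log n)$ factor from the re-annotation of \Cref{lem:rearangmain} and an $O_k(|\Tc|) = O_k(n)$ additive cost for the extraction/lifting. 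Solving the recurrence $T(n) \le q\cdot\big(T(O_k(n/q)) + O_k(n\log n)\big)$ with $q = 2^{\sqrt{\log n}}$ gives $T(n) \le n\cdot 2^{O(\sqrt{\log n})}\cdot (\log n)^{O(\sqrt{\log n})} = n \cdot 2^{O(\sqrt{\log n}\log\log n)}$, and by the standard trade-off trick (reducing the batch count slightly, as the paper does for $2^{O_k(\sqrt{\log n\log\log n})} \le O_k(2^{\sqrt{\log n}\log\log n})$) this is $n\cdot 2^{o(\sqrt{\log n}\log\log n)}$, as claimed.

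The main obstacle I anticipate is the \emph{lifting step}: arguing that a width-$k$ rank decomposition of the compressed, flipped graph genuinely expands to a width-$k$ rank decomposition of $G'$. The subtlety is that after flipping $F_i$, a twin class of $A$ may split into several classes, and these must be distributed correctly among the leaves of the compressed decomposition; one has to verify that inserting the (now several) groups of twin copies below the single representative leaf does not raise any cut-rank above $k$ — this should follow because twins contribute nothing to cut-rank beyond their single representative, but making it precise requires care with the $\GF(2)$-rank bookkeeping and with the annotations $\reps,\repse,\dmap$. A secondary obstacle is ensuring the balanced split of $X$ (and of $F$) always preserves the twin-in-$A\setminus X_i$ invariant at every recursive level; this is where one must be careful to split \emph{within} twin classes rather than across them, and to keep the untouched twin partners available. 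Both obstacles are, I believe, surmountable using only the annotated-rank-decomposition toolkit already developed in \Cref{sec:annot}, but they are where the real work lies.
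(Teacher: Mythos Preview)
Your proposal has a fatal error at its foundation and misses the intended approach entirely.

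\textbf{The starting observation is false.} You claim that a bounded-rankwidth bipartite graph has only $2^{O(k)}$ distinct neighbourhoods on the $A$-side. This is not true: a perfect matching between $A$ and $B$ has rankwidth $1$, yet every vertex of $A$ has a distinct neighbourhood. The $2^k$ bound on distinct neighbourhoods holds only across a \emph{single cut of width $k$} in the rank decomposition, not globally across the bipartition $(A,B)$. (Indeed, $\cutrk_G(A)$ can be arbitrarily large even when the rankwidth of $G$ is $1$.) Once this observation collapses, your compression step collapses with it: the ``compressed'' graph on representatives of $A$ plus $B'$ need not be smaller than the original instance, and the recursion never shrinks.

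\textbf{You missed the intended tool.} This lemma sits in the overview precisely to show that \textsc{Twin Flipping} is an \emph{application} of the dynamic rankwidth data structure of \Cref{the:dynsimple}, not something to be solved from scratch. The paper's proof is essentially two lines: initialize the data structure with the given annotated decomposition $\Tc$, then flip each pair in $F$ one at a time (in lexicographic order). The twin hypothesis---every vertex of $X$ has a twin in $A\setminus X$---guarantees that after any prefix of the flips the graph can be obtained from $G\symd F$ by deleting vertices and adding twins, so its rankwidth stays at most $k+1$ throughout. Hence all $|F|\le O_k(n)$ updates succeed, each in amortized time $2^{O_k(\sqrt{\log n\log\log n})}$, and at the end one applies \Cref{lem:overview-get-decomposition} to convert the maintained $4$-approximate decomposition into an optimal one. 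The total time is $n\cdot 2^{O_k(\sqrt{\log n\log\log n})}$, which is $O_k(n\cdot 2^{o(\sqrt{\log n}\log\log n)})$ because $\sqrt{\log n\log\log n}=\sqrt{\log n}\cdot\sqrt{\log\log n}=o(\sqrt{\log n}\cdot\log\log n)$.
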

\begin{proof}[Sketch of the proof]
Consider $G$ to be a~dynamic graph described by an~annotated rank decomposition, initially $\Tc$, maintained by the dynamic rankwidth data structure of \Cref{the:dynsimple}. Then for each $(u, v) \in F$ in the lexicographic order, flip the adjacency between $u$ and $v$ (add the edge $uv$ to $G$ if not present, remove it otherwise).
It can be shown that the rankwidth of the dynamic graph never grows above $k+1$ during this process, so the data structure can perform the initialization and all the updates in time $n \cdot 2^{\Oh[k]{\sqrt{\log n \log \log n}}} = \Oh[k]{n \cdot 2^{o(\sqrt{\log n} \log \log n)}}$\footnote{The fact that the~data structure can be efficiently initialized with an~annotated rank decomposition $\Tc$ is not stated explicitly in \cref{the:dynsimple}, but this follows readily from the discussion in \Cref{ssec:ov:dynrw} and we actually prove this in \cref{lem:finaldynrwds}.}, maintaining a $4$-approximate decomposition, which can be finally turned into optimal decomposition by \Cref{lem:overview-get-decomposition}.
\end{proof}


We also define another auxiliary problem, {\sc Twin Detection}: construct an~efficient data structure that, when initialized with a~bipartite graph $G = (A, B, E)$ with $B = \{v_1, \dots, v_{|B|}\}$, supports the following query: given a~set $X \subseteq A$ and a~subinterval $[\ell, r]$ of $[1, |B|]$, return the partition of $X$ into the equivalence classes of twins in the induced subgraph $G[X, \{v_\ell, \dots, v_r\}]$.
In \cref{lem:twinds} we propose such a~data structure with initialization time $\Oh{n + m}$ and query time $\Oh{|X| \log n}$.
The implementation uses as a~black box a~linear-time suffix array construction algorithm of \cite{DBLP:journals/jacm/KarkkainenSB06}.

In the third and final auxiliary problem, {\sc Near-Twin Pairing}, we get as input an~annotated rank decomposition $\Tc$ of width at most $k$ of a~bipartite graph $G=(A,B,E)$ with $|B| \geq 2$.
On output we should produce: (i) $t = \max(1, \frac{|B|}{\Oh[k]{1}})$ pairwise disjoint pairs of vertices $(u_1, v_1), \dots, (u_t, v_t)$ of $B$ such that $u_i$ and $v_i$ are $\Oh[k]{\frac{|A|}{|B|}}$-near-twins for all $i \in [t]$, and (ii) the sets $N(u_i) \triangle N(v_i)$ for each $i \in [t]$.
We show in \cref{lem:neartwins} the solution of this problem in time $\Oh[k]{n}$.

We also use the following straightforward fact: If $\Tc$ is an~annotated rank decomposition of $G$ of width $k$, and a~graph $G^\star$ is created from $G$ by cloning a~vertex $v$ (creating a~new vertex $v^\star$ such that $N_{G^\star}(v^\star) = N_G(v)$), then $\Tc$ can be transformed into an annotated rank decomposition $\Tc^\star$ of $G^\star$ of the same width in time $\Oh{1}$.


The main ingredient of our algorithm is the following result:
\begin{lemma}
  \label{lem:main-thm-from-flipping}
  A~decomposition of $G$ of width at most $k$ can be found in time $\Oh[k]{T(n) \log^2 n} + \Oh{m}$.
\end{lemma}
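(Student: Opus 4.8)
The plan is to design a recursive (divide-and-conquer) algorithm that reduces the vertex count of the larger side of the bipartition by a constant factor at each level of recursion, using \textsc{Near-Twin Pairing} to find many near-twin pairs cheaply, then contracting each pair into a single representative vertex; after recursively computing a decomposition of the contracted graph, we un-contract each pair, which requires cloning one vertex of each pair and then applying \textsc{Twin Flipping} to repair the $\Oh[k]{\cdot}$-near-twin discrepancies. First I would set up the recursion on $\ell = |B|$ (assume $|A| \geq |B|$ by symmetry, and if one side is bounded, e.g.\ $|B| \leq \Oh[k]{1}$, solve directly in linear time from $\Tc$). Given $\Tc$ of width $\le k$ for $G = (A, B, E)$, invoke \textsc{Near-Twin Pairing} (\cref{lem:neartwins}) in time $\Oh[k]{n}$ to obtain $t = \max(1, |B|/\Oh[k]{1})$ disjoint pairs $(u_i, v_i)$ in $B$ with each $u_i, v_i$ being $\Oh[k]{|A|/|B|}$-near-twins, together with the sets $D_i \coloneqq N(u_i) \symd N(v_i)$. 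Form the graph $G_0$ by deleting each $v_i$ (equivalently, identifying $v_i$ with $u_i$); by the straightforward cloning fact applied in reverse — deleting a vertex from an annotated rank decomposition, or rather the fact that a vertex-deletion only decreases rankwidth and can be reflected in $\Tc$ in $\Oh{1}$ time per deleted vertex — we get an annotated rank decomposition $\Tc_0$ of $G_0$ of width $\le k$ in time $\Oh[k]{n}$. Note $G_0$ has right-side size $|B| - t = |B| \cdot (1 - 1/\Oh[k]{1})$, a constant-factor shrink.

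Next I would recurse on $\Tc_0$ to obtain a (non-annotated, then annotated via \cref{lem:overview-get-decomposition}) rank decomposition of $G_0$ of width at most $k$; re-annotate it if necessary in time $\Oh[k]{|G_0| \log |G_0|}$. Now I un-contract: for each $i$, clone the surviving vertex $u_i$ to create a fresh vertex $v_i^\star$ with $N(v_i^\star) = N_{G_0}(u_i) = N_{G}(u_i)$; by the cloning fact this updates the annotated decomposition in $\Oh{1}$ time per clone, yielding an annotated rank decomposition $\Tc_1$ of the graph $G_1$ in which each $v_i^\star$ is a twin of $u_i$. Observe $G_1$ differs from $G$ exactly in the adjacencies between the $v_i^\star$'s and $B$-neighbourhoods: flipping the pairs in $F \coloneqq \bigcup_i \{v_i^\star\} \times D_i$ turns $G_1$ into (an isomorphic copy of) $G$. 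Since each $|D_i| = \Oh[k]{|A|/|B|}$ and there are $t \le |B|$ pairs, $|F| = \Oh[k]{|A|} = \Oh[k]{n}$, and the set $X \coloneqq \{v_i^\star\}_i$ satisfies the \textsc{Twin Flipping} precondition (each $v_i^\star$ has the twin $u_i \notin X$). Hence one call to \textsc{Twin Flipping} (running in $\Oh[k]{T(n)}$ time) produces an annotated rank decomposition of $G$ of width at most $k$, assuming one exists — and by monotonicity of rankwidth under the intermediate operations (deletion, cloning), all intermediate graphs have rankwidth $\le k$, so every subcall is legitimate.

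For the running time, let $S(n, \ell)$ be the cost on an instance with $n$ vertices total and $\ell = |B|$. We have $S(n, \ell) \le S(n - t, \ell - t) + \Oh[k]{T(n) \log n} + \Oh{m}$ with $t \ge \ell / \Oh[k]{1}$ (the extra $\log n$ over $T(n)$ absorbs the re-annotation step of \cref{lem:overview-get-decomposition}). The recursion has $\Oh[k]{\log \ell} = \Oh[k]{\log n}$ levels since $\ell$ shrinks by a constant factor each time; the edge term $\Oh{m}$ is incurred only at the top level (after the first contraction the graph is stored purely as an annotated decomposition, so the recursive instances never re-touch $\Theta(m)$ edges — or, more carefully, one observes the total edge-work telescopes to $\Oh{m}$). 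Summing the $T(n)\log n$ term over $\Oh[k]{\log n}$ levels gives $\Oh[k]{T(n) \log^2 n}$, which is exactly the claimed bound $\Oh[k]{T(n)\log^2 n} + \Oh{m}$. The main obstacle is ensuring the bookkeeping of \emph{which} vertex of each near-twin pair survives is consistent so that the cloning and the flip-set $F$ line up correctly, and verifying that all intermediate graphs genuinely have rankwidth at most $k$ so that each recursive call and each \textsc{Twin Flipping} call is guaranteed to succeed; a secondary subtlety is confirming that \textsc{Near-Twin Pairing} applies — its hypothesis $|B| \ge 2$ is met precisely when we are not yet in the base case, so the recursion is well-founded.
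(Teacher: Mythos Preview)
There is a genuine gap: your recursion assumes you already hold an annotated rank decomposition $\Tc$ of $G$ of width $\leq k$ (you write ``Given $\Tc$ of width $\le k$ for $G$, invoke \textsc{Near-Twin Pairing}\ldots''), but this is exactly the object the lemma asks you to construct from scratch. \textsc{Near-Twin Pairing} requires a decomposition as input, so you cannot call it on the top-level graph without first solving the problem. As written, your recursive call takes $\Tc_0$ of width $\leq k$ and returns a decomposition of $G_0$ of width $\leq k$ --- but $\Tc_0$ already \emph{is} such a decomposition, so the recursion does no work. Your scheme is really a procedure for passing from a width-$k$ decomposition of a smaller graph to one of a larger graph; it never explains how the first decomposition is obtained.

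The paper breaks this circularity with a second, outer layer of recursion. It fixes an ordering $v_1,\ldots,v_{|B|}$ of $B$ and recurses on intervals $[\ell,r]$: split $B' = \{v_\ell,\ldots,v_r\}$ in half, recursively compute decompositions $\Tc_1,\Tc_2$ of $G[A'',B_1]$ and $G[A'',B_2]$ (base case $|B'|=1$ is a forest, handled directly), and only then merge them. Your near-twin-pair/delete/recurse/clone/flip idea is precisely this merge step, applied to $\Tc_2$ --- which is now legitimately available from the inner recursive call. To keep $|A|$ under control at each level, the paper first uses a \textsc{Twin Detection} data structure (built once in $\Oh{n+m}$ from the raw edge list; this is where the $+\,\Oh{m}$ term actually comes from) to discard twins in $A$ relative to the current interval, shrinking to $|A''| = \Oh[k]{|B'|}$ before recursing. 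The two nested recursions, each of depth $\Oh[k]{\log n}$, yield the $\log^2 n$ factor.
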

\begin{proof}[Sketch of the proof]
  In time $\Oh{n + m}$, initialize the data structure for {\sc Twin Detection} on the input graph $G = (A,B,E)$.
  Also suppose $B = \{v_1, \dots, v_{|B|}\}$.
  We now design a~recursive algorithm that takes as input a~subset $A' \subseteq A$ and a~subinterval $[\ell, r] \subseteq [1, |B|]$, and returns an~annotated rank decomposition of width $k$ of $G[A', B']$, where $B' = \{v_\ell, \dots, v_r\}$.
  
  The base case is $\ell = r$; then the graph is a~forest and we can construct its rank decomposition of width at most $k$ in time $\Oh{|A'|}$.
  So suppose that $\ell < r$.
  We resolve this case in several steps.
  
  \medskip
  \noindent\textbf{Step 1: Filter out the twins.}
  We query the data structure for {\sc Twin Detection} on $X = A'$ and the interval $[\ell, r]$ in time $\Oh{|A'| \log n}$; the result of the query can be represented as a~subset $A'' \subseteq A'$ with no twins in $G[A'', B']$, and a~mapping $\eta \,\colon\, A' \to A''$ such that for every $v \in A' \setminus A''$, $\eta(v)$ is a~twin of $v$ in $G[A', B']$.
  Since $A''$ has no twins in $G[A'', B']$ and $G[A'', B']$ has rankwidth at most $k$, we can show that $|A''| \leq \Oh[k]{|B'|}$; this statement is proved as \cref{lem:bipartitetwins}, but has appeared before in various forms and generalizations \cite{DBLP:conf/mfcs/PaszkeP20,DBLP:journals/ejc/BonnetFLP24}.  
  Hence we will now only compute an annotated rank decomposition $\Tc$ of $G[A'', B']$ since it is straightforward to add the vertices of $A' \setminus A''$ to $\Tc$ as soon as $\Tc$ is constructed.
  
  \medskip
  \noindent \textbf{Step 2: Recurse on $B$.}
  Let $\delta \approx \frac12(\ell + r)$ and let $B_1 = \{v_\ell, \dots, v_\delta\}$ and $B_2 = \{v_{\delta+1}, \dots, v_r\}$.
  For each $i \in [2]$, we construct an~annotated rank decomposition $\Tc_i$ of $G[A'', B_i]$ recursively.
  
  \medskip
  \noindent \textbf{Step 3: Merge the decompositions.}
  The final step -- merging $\Tc_1$ and $\Tc_2$ into an~annotated rank decomposition $\Tc$ of $G[A'', B_1 \cup B_2]$ -- is quite non-trivial.
  In fact, we will perform this step recursively by implementing a~subroutine taking as input a~subset $B'_2 \subseteq B_2$ and a~rank decomposition $\Tc'_2$ of $G[A'', B'_2]$ of width at most $k$ and returning an~analogous decomposition $\Tc'$ of $G[A'', B_1 \cup B'_2]$.
  
  First, if $|B'_2| = 1$, then we model the problem at hand as an~instance of {\sc Twin Flipping} as follows: assume $B'_2 = \{v\}$. Choose an arbitrary vertex $u \in B_1$ in $G[A'', B_1]$ and clone it, naming the clone~$v$. Denote the updated graph $G^\star$ and let $\Tc^\star$ be an~annotated rank decomposition of $G^\star$. 
  Then $\Tc'$ is exactly the result of the {\sc Twin Flipping} problem for the graph $G^\star$ with sides $B_1 \cup \{v\}$ and $A''$, the decomposition $\Tc^\star$, the set $X = \{v\}$ and the set of edges $F = \{wv \mid w \in N(u) \triangle N(v)\}$.
  We can easily see that the time required to resolve case is $\Oh[k]{T(|A''| + |B_1| + 1)} = \Oh[k]{T(|B'|)}$.
  
  Now suppose $|B'_2| \geq 2$.
  Then by {\sc Near-Twin Pairing} applied to the decomposition $\Tc'_2$ of $G[A'', B'_2]$ we get $t = \max(1, \frac{|B'_2|}{\Oh[k]{1}})$ pairwise disjoint pairs of vertices $(u_i, v_i), \dots, (u_v, v_t)$ of $B$ such that $|N(u_i) \triangle N(v_i)| \leq \Oh[k]{\frac{|A''|}{|B'_2|}}$ for each $i \in [t]$.
  Therefore, $\sum_{i = 1}^t |N(u_i) \triangle N(v_i)| \leq \Oh[k]{|A''|}$.
  Let $B_2^{\rm del} = B'_2 \setminus \{v_1, \dots, v_t\}$ and $\Tc_2^{\rm del}$ be the rank decomposition of $G[A'', B_2^{\rm del}]$, easily constructed from $\Tc_2'$.
  We run the subroutine recursively for $B_2^{\rm del} \subseteq B_2$ and $\Tc_2^{\rm del}$ and get the decomposition $\Tc^{\rm del}$ of $G[A'', B_1 \cup B_2^{\rm del}]$.
  Create a~new graph $G^\star$ from $G[A'', B_1 \cup B_2^{\rm del}]$ by cloning, for each $i \in [t]$, the vertex $u_i$ and naming the clone $v_i$; let also $\Tc^\star$ be the decomposition of $G^\star$.
  Finally, let $F = \{wv_i \mid i \in [t], w \in N(u_i) \triangle N(v_i)\}$ and apply {\sc Twin Flipping} to the graph $G^\star$, its decomposition $\Tc^\star$, the set $X = \{v_1, \dots, v_t\}$ and the set of flipped edges $F$, resulting in the sought decomposition $\Tc'$.
  Tracing all the steps described above, excluding the recursive call on the subset $B_2^{\rm del}$, we find that that these steps can be performed in total time $\Oh[k]{T(|A''| + |B_1| + |B'_2|)} = \Oh[k]{T(|B'|)}$.
  
  Since each recursive call takes time $\Oh[k]{T(|B'|)}$ and $B'$ decreases in size by a multiplicative factor of $1 - \frac{1}{\Oh[k]{1}}$ on each level of recursion, we get that the recursion terminates after $\Oh[k]{\log |B'|}$ levels and so the entire decomposition-merging subroutine takes total time $\Oh[k]{T(|B'|) \log |B'|}$.
  
  \medskip
  \noindent \textbf{Summary.} The recursive reconstruction of an~annotated rank decomposition of $G[A', B']$, where $|B'| = r - \ell + 1$, takes time $\Oh[k]{T(|B'|) \log |B'|}$, excluding the time spent in the two recursive calls for subsets of $B'$.
  Thus, the total running time of the entire recursive scheme across all levels of recursion is $\Oh[k]{T(n) \log^2 n}$.
  Including the time required to instantiate the instance of {\sc Twin Detection}, we get the final time complexity of $\Oh[k]{T(n) \log ^2 n} + \Oh{m}$.
\end{proof}

So \cref{the:altrw} holds by \cref{lem:overview-twin-flipping,lem:main-thm-from-flipping}.
Moreover, an~$\Oh[k]{n \log^{\Oh{1}} n}$ time algorithm for {\sc Twin Flipping} would immediately imply an~improved $\Oh[k]{n \log^{\Oh{1}} n} + \Oh{m}$ time algorithm for finding rank decompositions of graphs of width at most $k$.

\section{Preliminaries}
\label{sec:preli}
We present definitions and preliminary results in this section.

We use $\log$ to denote the base-2 logarithm.
We use $\N$ to denote the set of non-negative integers, and $\Z$ the set of all integers.
For two integers $a$ and $b$ with $a \le b$ we denote by $[a,b]$ the set of integers $\{a,\ldots,b\}$ and for $n \in \N$ we denote by $[n]$ the set $\{1,\ldots,n\}$.
For two sets $A$ and $B$, we denote their symmetric difference by $A \symd B = (A \cup B) \setminus (A \cap B)$.

\paragraph{Graphs and trees.}
For a graph $G$, we denote by $V(G)$ the set of its vertices and $E(G)$ the set of its edges.
We assume that there is a total order on the set $V(G)$, for example by representing vertices as integers.
All graphs in this paper are undirected and we normally treat edges as undirected, i.e., for $uv \in E(G)$ it holds that $uv = vu$, but we associate with $G$ the set of \emph{oriented edges} $\oE(G)$, which for every $uv \in E(G)$ contains $\ouv$ and $\ovu$ which denote, respectively, the orienting of $uv$ towards $v$ and the orienting of $uv$ towards $u$.
The set of neighbors of a vertex $v$ in $G$ is denoted by $N_G(v)$ and neighbors of a set of vertices $X$ by $N_G(X) = \bigcup_{v \in X} N_G(v) \setminus X$.
Closed neighborhoods are denoted by $N_G[v] = N_G(v) \cup \{v\}$ and $N_G[X] = N_G(X) \cup X$.
We drop the subscript if the graph is clear from the context.
We call two vertices $u$ and $v$ \emph{twins} if $N(u) = N(v)$.
A path of length $\ell \ge 1$ is an ordered sequence $v_1 v_2\ldots v_{\ell-1} v_{\ell}$ of $\ell$ distinct vertices so that any two consecutive vertices are adjacent.
We denote by $\PT(G)$ the set of paths of length $3$ in $G$.

We denote the subgraph of $G$ induced by $X \subseteq V(G)$ by $G[X]$, and the subgraph induced by $V(G) \setminus X$ by $G - X$.
When $X,Y \subseteq V(G)$ are disjoint, we denote by $G[X,Y]$ the bipartite graph with vertex set $X \cup Y$ that contains the edges of $G$ with one endpoint in $X$ and one endpoint in $Y$.
A partition of a set $X$ is a set of non-empty disjoint subsets of $X$ so that $X$ equals their union.
For a partition $\prt$ of $X$ we use the notation $\boldcup \prt = \bigcup_{C \in \prt} C = X$.  
For a graph $G$ and a partition $\prt$ of a subset of $V(G)$ we denote by $G[\prt]$ the graph with vertex set $V(G[\prt]) = \boldcup \prt$ and edge set $E(G[\prt]) = \{uv \in E(G) \mid u \in C_1 \in \prt, v \in C_2 \in \prt, C_1 \neq C_2\}$.

A tree is a connected acyclic graph.
We often call vertices of trees \emph{nodes} to distinguish them from vertices of graphs.
A subtree of a tree $T$ is a subgraph of $T$ that is connected.
\emph{Contracting} a degree-2 node in a tree means contracting one of the edges incident to it.
A leaf is a node of a tree with degree $1$, except the root of a rooted tree is never a leaf.
A cubic tree is a tree where every non-leaf node has degree $3$, and which has at least two leaves, and a subcubic tree is a tree where each node has degree at most $3$.
A binary tree is a rooted tree where each node has either $0$ or $2$ children, and which has at least two leaves.
Note that cubic trees and binary trees correspond to each other: We can make a cubic tree into a binary tree by subdividing an edge and placing the root on the subdivision vertex, and we can make a binary tree into a cubic tree by contracting the root.
In a~rooted tree, a~vertical path is a~path $x_1 x_2 \dots x_k$ where $x_{i+1}$ is a~child of $x_i$ for each $i \in [k - 1]$.

A node $x$ of a rooted tree is a descendant of a node $y$ if the unique path from $x$ to the root contains $y$.
If $x$ is a descendant of $y$, then $y$ is an ancestor of $x$.
Note that every node is both a descendant and an ancestor of itself.
An oriented edge $\vec{xy}$ of a rooted tree $T$ is directed towards the root if $y$ is the parent of $x$, and away from the root otherwise.
We say that an oriented edge $\vec{x y}$ of a tree $T$ is a \emph{predecessor} of an oriented edge $\vec{z w}$ if either $\vec{x y} = \vec{z w}$ or there is a path in $T$ between $y$ and $z$ that avoids $x$ and $w$.
The set of predecessors of $\vec{z w}$ is denoted by $\pred_T(\vec{zw})$.
If $\vec{x y}$ is a predecessor of $\vec{z w}$ then we say $\vec{z w}$ is a \emph{successor} of $\vec{x y}$.
If $\vec{x y},\vec{y z} \in \oE(T)$ with $x \neq z$, then $\vec{x y}$ is called a \emph{child} of $\vec{yz}$.

We denote by $\leafs(T)$ the set of leaves of a tree $T$, and by $\leafe(T)$ the oriented edges $\vec{lp} \in \oE(T)$ where $l$ is a leaf of $T$, which will be called \emph{leaf edges}.
For an oriented edge $\oxy \in \oE(T)$, we denote by $\leafs(T)[\oxy] \subseteq \leafs(T)$ the subset of the leaves of $T$ that are closer to $x$ than $y$.
The set $\leafe(T)[\oxy] \subseteq \leafe(T)$ is defined analogously, i.e., $\leafe(T)[\oxy] = \{\vec{lp} \in \oE(T) \mid l \in \leafs(T)[\oxy]\}$.
When $T$ is rooted and $t \in V(T)$, we use $\leafs(T)[t]$ to denote the set of leaves that are descendants of $t$.
The set $\leafe(T)[t]$ is defined analogously.

Let $\Tconn \subseteq V(T)$ be a set of nodes that induces a connected subtree of a tree $T$.
We say that a node $a \in V(T)$ is an \emph{appendix} of $\Tconn$ if $a$ is not in $\Tconn$ but a neighbor of $a$ is.
We denote the set of appendices of $\Tconn$ by $\App_T(\Tconn)$.
The oriented edges $\oxy \in \oE(T)$ with $x \in \App_T(\Tconn)$ and $y \in \Tconn$ are called the \emph{appendix edges} of $\Tconn$ and the set of them is denoted by $\oApp_T(\Tconn)$.
If $T$ is rooted and $\Tconn$ contains the root, then $\Tconn$ is called a \emph{prefix} of $T$.
The set $\Tconn$ is called \emph{leafless} if it is disjoint from $\leafs(T)$.

The \emph{height} of a node $x$ in a rooted tree $T$ is the number of nodes on a longest path from $x$ to a leaf and is denoted by $\height_T(x)$.
The height of $T$ is the height of its root.

\paragraph{Rank decompositions.}
A partitioned graph is a pair $(G,\prt)$, where $G$ is a graph and $\prt$ is a partition of $V(G)$.
A \emph{rank decomposition} of a partitioned graph $(G, \prt)$ is a pair $\Tc = (T,\lmap)$, where $T$ is a cubic tree and $\lmap$ is a bijection $\lmap \colon \prt \rightarrow \leafe(T)$.
A rank decomposition of a graph $G$ is a rank decomposition of $(G, \trivialpartition{V(G)})$, where $\trivialpartition{V(G)}$ denotes the partition of $V(G)$ into sets of size $1$.
The bijection $\lmap$ is called the \emph{leaf mapping}.
In the case of graphs, we may treat $\lmap$ as a function $\lmap \colon V(G) \rightarrow \leafe(T)$.
We define that there is no rank decomposition of a partitioned graph with less than $2$ parts or a graph with less than $2$ vertices.
For an oriented edge $\oxy \in \oE(T)$, we denote by $\lparts(\Tc)[\oxy] = \bigcup_{\olp \in \leafe(T)[\oxy]} \lmap^{-1}(\olp)$ the union of the parts of $\prt$ that are mapped to leaf edges that are closer to $x$ than $y$.
A \emph{rooted rank decomposition} of a partitioned graph is defined like a rank decomposition, but the tree $T$ is a binary tree.
When $\Tc$ is a rooted rank decomposition and $t \in V(T)$, we denote by $\lparts(\Tc)[t] = \bigcup_{\olp \in \leafe(T)[t]} \lmap^{-1}(\olp)$ the union of the parts of $\prt$ that are mapped to descendants of $t$.

Let $G$ be a graph and $A \subseteq V(G)$.
We denote $\compl{A} = V(G) \setminus A$.
We denote by $\cutrk_G(A)$ the rank of the $|A|\times |\compl{A}|$ 0-1-matrix over the binary field $\GF(2)$ describing adjacencies between vertices in $A$ and vertices in $\compl{A}$ in $G$.
The \emph{width} of an edge $xy \in E(T)$ of a rank decomposition is $\cutrk_G(\lparts(\Tc)[\oxy]) = \cutrk_G(\lparts(\Tc)[\oyx])$, and the width of a rank decomposition is the maximum width of its edge.
The rankwidth of a graph is the minimum width of a rank decomposition of it.

We will use the following properties of the $\cutrk_G$ function.

\begin{lemma}[\cite{OumS06}]
\label{lem:symsubmod}
For any graph $G$, the function $\cutrk_G \colon 2^{V(G)} \rightarrow \N$ is symmetric and submodular, that is,
\begin{enumerate}
\item\label{lem:symsubmod:sym} $\cutrk_G(A) = \cutrk_G(\compl{A})$ for all $A \subseteq V(G)$ and
\item\label{lem:symsubmod:submod} for all $A,B \subseteq V(G)$ it holds that $\cutrk_G(A \cup B) + \cutrk_G(A \cap B) \le \cutrk_G(A) + \cutrk_G(B)$.
\end{enumerate}
\end{lemma}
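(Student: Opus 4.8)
The plan is to reduce both claims to standard facts of linear algebra about the rank of $0$--$1$ matrices over $\GF(2)$. Throughout, fix a graph $G$ and let $M$ be the full $V(G) \times V(G)$ adjacency matrix of $G$ over $\GF(2)$. For disjoint $A, B \subseteq V(G)$ write $M[A,B]$ for the submatrix with rows indexed by $A$ and columns indexed by $B$; then $\cutrk_G(A) = \mathrm{rank}\, M[A, \compl{A}]$.

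For symmetry (part \ref{lem:symsubmod:sym}): I would simply observe that $M[A, \compl{A}]$ and $M[\compl{A}, A]$ are transposes of each other (since $G$ is undirected, $M$ is symmetric), and the rank of a matrix over any field equals the rank of its transpose. Hence $\cutrk_G(A) = \mathrm{rank}\, M[A,\compl{A}] = \mathrm{rank}\, M[\compl{A}, A] = \cutrk_G(\compl{A})$. This part is immediate and needs essentially no further work.

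For submodularity (part \ref{lem:symsubmod:submod}): Let $A, B \subseteq V(G)$ and set $X = A \cup B$, $Y = A \cap B$. First I would massage the claim so that all four matrices live inside one common ambient matrix. Consider the submatrix $N := M[A \cup B,\ \compl{A} \cup \compl{B}] = M[A \cup B,\ \compl{A \cap B}]$, whose rows are indexed by $A \cup B$ and columns by $\overline{A \cap B}$. The key observations are: (i) $M[A, \compl{A}]$ is (up to deleting zero-irrelevant rows/columns, i.e.\ it is exactly a submatrix of $N$ obtained by restricting rows to $A$ and columns to $\compl{A}$), so $\cutrk_G(A) \le \mathrm{rank}\,($row-block $A$ of $N)$; and similarly for $B$; (ii) $\cutrk_G(A\cup B) = \mathrm{rank}\, M[A\cup B, \compl{A\cup B}]$ is the rank of $N$ restricted to the columns in $\compl{A \cup B} = \compl{A} \cap \compl{B}$; and (iii) $\cutrk_G(A \cap B)$ is the rank of $N$ restricted to the rows in $A \cap B$. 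The cleanest route is then the standard submodularity of matrix rank: for any matrix, $\mathrm{rank}$ of the union of two sets of columns plus $\mathrm{rank}$ of their intersection is at most the sum of the two ranks, and dually for rows; combining a row-version and a column-version inequality applied to $N$ yields exactly $\cutrk_G(A\cup B) + \cutrk_G(A \cap B) \le \cutrk_G(A) + \cutrk_G(B)$.

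The step I expect to require the most care is bookkeeping the index sets: the rows/columns of $M[A,\compl A]$, $M[B,\compl B]$, $M[A\cup B, \overline{A\cup B}]$, $M[A\cap B, \overline{A\cap B}]$ are four different index rectangles, and one must check that after embedding all four into the ambient matrix $N = M[A\cup B, \overline{A\cap B}]$, the deleted rows/columns are genuinely irrelevant — e.g.\ that restricting $N$ to rows $A$ and then to columns $\compl A$ recovers $M[A,\compl A]$ because the columns in $\overline{A\cap B}\setminus \compl A \subseteq B\setminus A$ lie inside $A\cup B$'s complement region correctly. The underlying linear algebra ($\mathrm{rank}$ is submodular as a function of the column set, and of the row set) is completely standard; alternatively, one can avoid it entirely by exhibiting explicit injections between the relevant vector spaces (for instance, a linear map from the "cut space" of $A\cup B$ direct-summed with that of $A\cap B$ into the cut spaces of $A$ and of $B$) and checking its kernel is trivial, but the matrix-rank-submodularity route is shorter. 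Since this lemma is quoted verbatim from Oum--Seymour~\cite{OumS06}, I would keep the proof to these few lines.
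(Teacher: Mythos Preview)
The paper does not prove this lemma at all; it is stated with a citation to Oum--Seymour~\cite{OumS06} and used as a black box. Your sketch is essentially the standard proof from that reference, and the symmetry part is exactly right.

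For submodularity, your plan is on the right track but the phrase ``combining a row-version and a column-version inequality'' hides the one nontrivial step. One-dimensional rank submodularity (on columns alone, or rows alone) is not enough, because in $\cutrk_G(A)=\mathrm{rank}\,M[A,\compl A]$ both the row set and the column set vary with $A$. What you actually need is the two-dimensional inequality: for any matrix $M$ and any $X_1,X_2$ of rows and $Y_1,Y_2$ of columns,
\[
\mathrm{rank}\,M[X_1,Y_1]+\mathrm{rank}\,M[X_2,Y_2]\ \ge\ \mathrm{rank}\,M[X_1\cap X_2,\,Y_1\cup Y_2]+\mathrm{rank}\,M[X_1\cup X_2,\,Y_1\cap Y_2].
\]
Instantiating with $X_1=A$, $X_2=B$, $Y_1=\compl A$, $Y_2=\compl B$ gives the claim directly, and your bookkeeping concern about embedding into $N=M[A\cup B,\overline{A\cap B}]$ disappears. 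This two-dimensional lemma has a short proof (e.g.\ via the dimension formula for sums and intersections of row spaces, applied twice), and is exactly what Oum--Seymour use. So your plan is correct once you state this lemma explicitly rather than leaving it as ``combining'' two one-dimensional facts.
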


We will refer to \Cref{lem:symsubmod:sym} as the symmetry of $\cutrk$ and to \Cref{lem:symsubmod:submod} as the submodularity of $\cutrk$.

Let us also recall a known lemma that rank decompositions can be transformed into logarithmic height without increasing the width much.
This lemma was shown by Courcelle and Kant{\'{e}}~\cite{DBLP:conf/wg/CourcelleK07}, but we will also give a proof of it in~\Cref{sec:appendix1} in order to demonstrate the $\Oh{|V(T)| \log |V(T)|}$ running time.
\begin{restatable}{lemma}{lemlogdepthdecomp}
\label{lem:logdepthdecomp}
There is an algorithm that given a (rooted) rank decomposition $(T,\lmap)$ of a partitioned graph $(G,\prt)$ of width $k$, in time $\Oh{|V(T)| \log |V(T)|}$ returns a rooted rank decomposition of $(G,\prt)$ of height $\Oh{\log |V(T)|}$ and width at most $2k$.
\end{restatable}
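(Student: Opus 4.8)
The plan is to follow the classical approach of Courcelle and Kant{\'e}: recursively find a node $x$ of the decomposition tree whose removal splits the remaining leaves into roughly balanced parts, hang the (at most three) resulting subtrees below a freshly created root, and recurse into each of them. Concretely, given a rank decomposition $(T,\lmap)$ of $(G,\prt)$, I would first orient it arbitrarily and treat it as rooted; then I would use the standard centroid-type argument: there exists a node $x$ of $T$ such that every connected component of $T - x$ contains at most $\tfrac{2}{3}$ of the leaf edges of $T$. Removing $x$ yields at most three subtrees $T_1, T_2, T_3$, each carrying a sub-family of parts $\prt_1, \prt_2, \prt_3$ partitioning $\prt$. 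Recursively build logarithmic-height decompositions $\Tc_i$ of $(G, \prt_i)$ (after merging each $\prt_i$ into one "merged" part if a subtree has only a single part, which is the degenerate base case handled by attaching a leaf directly), and then create a new root with the roots of $\Tc_1, \Tc_2, \Tc_3$ as descendants via one or two internal binary nodes.

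The height recurrence is $H(n) \le H(\lceil \tfrac{2}{3} n \rceil) + O(1)$, giving $H(n) = O(\log n)$ where $n = |V(T)|$ (equivalently $|\prt|$, up to constant factors, since a cubic tree on $|\prt|$ leaves has $\Theta(|\prt|)$ nodes). For the width bound, the key observation is that every edge $e$ of the newly constructed tree induces a partition of $\prt$ into two sides $\lparts(\Tc)[\vec e]$ and its complement, and this set $\lparts(\Tc)[\vec e]$ is always obtained as a union or difference of at most two sets of the form $\lparts(\Tc)[\vec f]$ for oriented edges $\vec f$ of the \emph{original} tree $T$ — indeed, each leaf of an intermediate subtree corresponds either to an original leaf or to one of the "cut" directions at the chosen centroids, and the cut at a centroid $x$ in a cubic tree partitions the leaves into the three sets hanging off the three edges at $x$. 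Hence $\lparts(\Tc)[\vec e]$ is $\lparts(\Tc)[\vec{f_1}]$, or $\lparts(\Tc)[\vec{f_1}] \cup \lparts(\Tc)[\vec{f_2}]$, or $\lparts(\Tc)[\vec{f_1}] \setminus \lparts(\Tc)[\vec{f_2}]$, for edges $f_1, f_2$ of $T$. By symmetry (\Cref{lem:symsubmod:sym}) and submodularity (\Cref{lem:symsubmod:submod}) of $\cutrk_G$, for any sets $A, B$ we have $\cutrk_G(A \cup B) \le \cutrk_G(A) + \cutrk_G(B)$ and $\cutrk_G(A \setminus B) = \cutrk_G(A \cup \compl B) \le \cutrk_G(A) + \cutrk_G(\compl B) = \cutrk_G(A) + \cutrk_G(B)$; so in all cases $\cutrk_G(\lparts(\Tc)[\vec e]) \le 2k$, as required.

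For the running time, a naive implementation would recompute centroids from scratch and cost $O(|V(T)|^2)$; to get $O(|V(T)| \log |V(T)|)$ I would precompute, for the current subtree, subtree sizes in linear time, locate the centroid in linear time in the size of the current subtree, and then charge the work: each recursive level partitions the leaf set into pieces of size at most $\tfrac{2}{3}$ of the parent, so a leaf edge participates in $O(\log |V(T)|)$ levels, and the total work per level is linear in the total number of leaf edges at that level, which is $O(|V(T)|)$; summing over $O(\log |V(T)|)$ levels gives $O(|V(T)| \log |V(T)|)$. The main obstacle I anticipate is bookkeeping in the width argument: one must argue carefully that, when a subtree $T_i$ degenerates (contains only one part, or must be re-rooted at the cut vertex), the "virtual leaf" introduced at the cut point genuinely corresponds to the set $\lparts(\Tc)[\vec f]$ of an original oriented edge, so that the union/difference characterization of $\lparts(\Tc)[\vec e]$ for the new edges goes through cleanly at every recursion level; once that invariant is set up, the rest is routine.
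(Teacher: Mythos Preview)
Your approach has a genuine gap in the width argument. The claim that every edge $\vec e$ of the new tree satisfies $\lparts(\Tc')[\vec e] = \lparts(T)[\vec{f_1}]$, or a union/difference of two such sets, is correct only if every recursive piece is a subtree of $T$ with \emph{at most two} boundary edges (edges of $T$ with one endpoint inside and one outside). Your centroid-node removal does not preserve this. Concretely: suppose the current piece $P$ already has two boundary edges, at nodes $n_1,n_2\in P$. If the leaf-balancing centroid $x$ of $P$ happens not to lie on the $n_1$--$n_2$ path in $P$ (which nothing prevents), then one component of $P-x$ contains both $n_1$ and $n_2$ and thus has \emph{three} boundary edges; its leaf set is an intersection of three sets $\lparts(T)[\vec f]$, so submodularity only gives width $\le 3k$. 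One more level and you can get four boundaries, and so on. This is not bookkeeping: the invariant simply fails under naive centroid splitting.

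The paper's proof maintains the ``$\le 2$ boundary edges'' invariant explicitly and uses a different split rule to do so. When the current subtree $\delta(t)$ has at most one boundary edge, it splits along a balanced \emph{edge} (so each child gains at most one new boundary, giving $\le 2$). When $\delta(t)$ has exactly two boundary edges, at nodes $x$ and $y$, it does not pick an arbitrary centroid: it walks the $x$--$y$ path, selects the first node $z_r$ where the cumulative off-path leaf mass reaches $|X|/3$, and splits there (into two or three children, depending on whether $z_r\in\{x,y\}$). This guarantees every child still has $\le 2$ boundaries, and at least one child drops back to $\le 1$ boundary, which is also what drives the $O(\log|V(T)|)$ height bound. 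The width bound then follows exactly as you wrote, via $\cutrk_G(\lparts(T)[\vec{f_1}]\cap\lparts(T)[\vec{f_2}])\le 2k$. So the missing idea is precisely this path-based split in the two-boundary case; once you add it, the rest of your outline (including the $O(n\log n)$ accounting) goes through.
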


In \Cref{lem:logdepthdecomp} we assume that the leaf mapping $\lmap \colon \prt \rightarrow \leafe(T)$ is represented in $\Oh{|V(T)|}$ space, for example as a mapping from pointers representing parts in $\prt$ to $\leafe(T)$.

\paragraph{Representatives.}
A \emph{representative} of a set $A \subseteq V(G)$ in a graph $G$ is a set $R_A \subseteq A$ so that for every $a \in A$ there exists $r \in R_A$ with $N_G(a) \setminus A = N_G(r) \setminus A$.
Such set $R_A$ is a \emph{minimal representative} of $A$ if no subset of it is a representative of $A$.
A cut of a graph $G$ is a pair $(A,B)$ so that $V(G)$ is the disjoint union of $A$ and $B$.
We say that vertices $u,v \in A$ are twins over a cut $(A,B)$ if $N(u) \cap B = N(v) \cap B$.
A representative graph of a cut $(A,B)$ is a bipartite graph $G[R_A,R_B]$, where $R_A$ is a representative of $A$ and $R_B$ a representative of $B$.
A minimal representative graph of a cut is defined by requiring $R_A$ and $R_B$ to be minimal representatives.
We observe that $\cutrk_{G[R_A,R_B]}(R_A) = \cutrk_{G}(A)$.

We will need the following lemma about how minimal representative graphs are isomorphic to each other.

\begin{lemma}
\label{lem:uniqisom}
Let $(A,B)$ be a cut of a graph $G$, $R_A^1,R_A^2$ minimal representatives of $A$, and $R_B^1,R_B^2$ minimal representatives of $B$.
The graphs $G[R_A^1,R_B^1]$ and $G[R_A^2,R_B^2]$ are isomorphic to each other, and moreover if $R_B^1 = R_B^2$, then there is a unique isomorphism that is identity on $R_B^1 = R_B^2$.
\end{lemma}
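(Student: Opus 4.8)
The plan is to first establish a ``canonical'' bijection between any two minimal representatives of the same side of a cut, and then show that this bijection respects the adjacency relation across the cut. Fix the cut $(A,B)$. For a vertex $a \in A$, write $N^B(a) = N_G(a) \cap B$ for its trace on $B$; note that $a,a'$ are twins over $(A,B)$ exactly when $N^B(a) = N^B(a')$. A representative $R_A$ of $A$ is precisely a set that hits every twin-class of $A$ over $(A,B)$ (using that $N_G(a)\setminus A = N^B(a)$), and a minimal representative hits every twin-class exactly once. Hence for minimal representatives $R_A^1, R_A^2$ there is a unique bijection $\phi_A \colon R_A^1 \to R_A^2$ sending $a \in R_A^1$ to the unique element of $R_A^2$ lying in the same twin-class, i.e.\ with $N^B(\phi_A(a)) = N^B(a)$. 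Define $\phi_B \colon R_B^1 \to R_B^2$ symmetrically using traces on $A$.

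The first key step is to observe that the trace on the \emph{whole} side $B$ already determines the trace on \emph{any} representative of $B$: if $R_B$ is a representative of $B$ and $a,a' \in A$ satisfy $N^B(a) = N^B(a')$, then trivially $N_G(a) \cap R_B = N_G(a') \cap R_B$; conversely, if $N_G(a) \cap R_B = N_G(a') \cap R_B$ then for every $b \in B$, picking $r \in R_B$ with $N_G(b)\setminus B = N_G(r)\setminus B$ gives $b \in N_G(a) \iff a \in N_G(b) \iff a \in N_G(r) \iff r \in N_G(a)$, and similarly for $a'$, so $b \in N_G(a) \iff r \in N_G(a) = $ (same for $a'$) $\iff b \in N_G(a')$; hence $N^B(a) = N^B(a')$. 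In other words, two vertices of $A$ are twins over $(A, B)$ iff they have the same neighbourhood inside $R_B$. This is the crux: it lets us transfer twin-classes freely between $B$ and any of its representatives.

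The second key step is to verify that $(\phi_A, \phi_B)$ is an isomorphism $G[R_A^1, R_B^1] \to G[R_A^2, R_B^2]$. Take $a \in R_A^1$ and $b \in R_B^1$. We want $a b \in E(G) \iff \phi_A(a)\phi_B(b) \in E(G)$. By definition $N^B(\phi_A(a)) = N^B(a)$, so $a$ and $\phi_A(a)$ have the same neighbourhood inside $B$, in particular $ab \in E(G) \iff \phi_A(a) b \in E(G)$. Now apply the observation of the previous paragraph with the roles of $A$ and $B$ swapped, to the pair $b, \phi_B(b) \in B$: they are twins over $(B,A)$, hence have the same neighbourhood inside the representative $R_A^2$ of $A$ (here we use that $R_A^2$ is a representative of $A$); since $\phi_A(a) \in R_A^2$, we get $\phi_A(a) b \in E(G) \iff \phi_A(a)\phi_B(b) \in E(G)$. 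Chaining the two equivalences finishes this step. For the ``moreover'' part, suppose $R_B^1 = R_B^2 =: R_B$. Then $\phi_B = \mathrm{id}_{R_B}$, and the displayed argument shows $\phi_A$ is an isomorphism that is the identity on $R_B$. Uniqueness: if $\psi$ is any isomorphism $G[R_A^1, R_B] \to G[R_A^2, R_B]$ that is the identity on $R_B$, then for each $a \in R_A^1$ the vertex $\psi(a) \in R_A^2$ has $N_G(\psi(a)) \cap R_B = \psi(N_G(a) \cap R_B) = N_G(a) \cap R_B$; by the observation of the second paragraph (swapping $A,B$), $\psi(a)$ and $a$ are then twins over $(A,B)$, so $\psi(a)$ lies in the same twin-class as $a$ within $R_A^2$, which forces $\psi(a) = \phi_A(a)$ by minimality of $R_A^2$.

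I expect the main obstacle to be the second key step's reliance on the ``trace on a representative determines the twin-class'' principle, and more specifically being careful about \emph{which} side's minimality/representativeness is invoked at each point: the argument uses that $R_A^2$ is a representative (not necessarily minimal) of $A$ when comparing $b$ and $\phi_B(b)$ restricted to $R_A^2$, and that $R_A^1, R_A^2$ are \emph{minimal} only to get that $\phi_A$ is a well-defined bijection. Once that bookkeeping is pinned down, the rest is routine.
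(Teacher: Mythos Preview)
Your proposal is correct and follows essentially the same approach as the paper: define $\phi_A$ and $\phi_B$ by matching each element to the unique element of the other minimal representative in the same twin-class, then check this preserves adjacency and is forced once one side is fixed. Your write-up is considerably more detailed than the paper's (which leaves the adjacency check and the uniqueness implicit); one tiny slip is that in the uniqueness argument you invoke the observation ``swapping $A,B$'' when in fact you want the \emph{unswapped} version (you are comparing $a,\psi(a)\in A$ via their neighbourhoods in $R_B$), but the intended step is clear and correct.
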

\begin{proof}
For every $v \in R_A^1$ there exists by definition exactly one $u \in R_A^2$ so that $N(v) \cap B = N(u) \cap B$, so we can map such $u$ and $v$ to each other, and similarly for $R_B^1$ and $R_B^2$.
This is not necessarily the only isomorphism because both sides can be permuted, e.g., when $G$ is a perfect matching between $A$ and $B$.
However, it becomes unique if we fix the mapping for one side.
\end{proof}

We also recall the following well-known lemma, which allows to make use of rank decompositions in dynamic programming.

\begin{lemma}
\label{lem:repsbound}
Let $A \subseteq V(G)$ and $R_A$ a minimal representative of $A$.
Then $|R_A| \le 2^{\cutrk_G(A)}$.
\end{lemma}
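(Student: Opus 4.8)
The plan is to show that a minimal representative $R_A$ of $A$ is precisely a set obtained by picking exactly one vertex from each equivalence class of the "same outside-neighborhood" relation on $A$, and then to bound the number of such classes by $2^{\cutrk_G(A)}$. First I would recall the relevant equivalence relation: for $u, v \in A$, declare $u \sim v$ if and only if $N_G(u) \setminus A = N_G(v) \setminus A$. A set $R_A \subseteq A$ is a representative of $A$ exactly when it meets every $\sim$-class, and it is a minimal representative exactly when it contains exactly one vertex from each $\sim$-class (if it missed a class it would not be a representative, and if it contained two vertices from the same class, deleting one would still leave a representative). Hence $|R_A|$ equals the number of distinct sets of the form $N_G(v) \setminus A$ for $v \in A$; equivalently, the number of distinct rows of the $|A| \times |\compl{A}|$ adjacency matrix $M$ defining $\cutrk_G(A)$.

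The core step is the linear-algebraic bound: the number of distinct rows of a $0/1$-matrix $M$ over $\GF(2)$ is at most $2^{\mathrm{rank}(M)}$. To see this, let $r = \mathrm{rank}(M)$ and choose $r$ columns of $M$ that span the column space; let $M'$ be the $|A| \times r$ submatrix formed by these columns. Then two rows of $M$ that agree on these $r$ coordinates must agree everywhere: every other column of $M$ is a $\GF(2)$-linear combination of the chosen columns, so the value of a row in any other column is determined by its values in the chosen columns. Therefore the map sending a row of $M$ to its restriction to the $r$ chosen columns is injective on the set of distinct rows, and $M'$ has at most $2^r$ distinct rows since each such row is a vector in $\GF(2)^r$. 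Combining, the number of distinct rows of $M$ is at most $2^r = 2^{\cutrk_G(A)}$, which gives $|R_A| \le 2^{\cutrk_G(A)}$.

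There is no serious obstacle here; the lemma is a standard consequence of the fact that the rank bounds the number of distinct rows. The only point requiring a little care is the translation between the combinatorial object (a minimal representative $R_A$, defined via $N_G(r) \setminus A$) and the algebraic object (rows of the cut matrix $M$): one must note that $N_G(v) \setminus A$ as a subset of $\compl{A}$ is exactly the support of the row of $M$ indexed by $v$, so distinct outside-neighborhoods correspond bijectively to distinct rows. Once this identification is made, the bound follows immediately from the row-count estimate above.
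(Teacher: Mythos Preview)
Your proposal is correct and takes essentially the same approach as the paper: the paper's proof is the single sentence ``Follows from the fact that a matrix of rank $k$ over $\GF(2)$ can have at most $2^k$ distinct rows,'' and you have simply spelled out this fact together with the identification of $|R_A|$ with the number of distinct rows of the cut matrix.
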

\begin{proof}
Follows from the fact that a matrix of rank $k$ over $\GF(2)$ can have at most $2^k$ distinct rows.
\end{proof}

\section{Annotated rank decompositions and prefix rebuilding}
\label{sec:annot}
In this section we introduce our notion of annotated rank decompositions and the notion of prefix-rebuilding updates to manipulate them.
Definitions comprise a large part of this section, but we also give (slightly non-trivial) proofs on the implementations of these manipulations.

\subsection{Annotated rank decompositions}
An annotated rank decomposition is a tuple $\Tc = (T, U, \reps, \repse, \dmap)$, where 
\begin{itemize}
\item $T$ is a cubic tree and $U$ is a set,
\item $\reps$ is a function that maps each oriented edge $\oxy \in \oE(T)$ to a non-empty set $\reps(\oxy) \subseteq U$,
\item $U$ is the disjoint union of the sets $\reps(\olp)$ over the leaf edges $\olp \in \leafe(T)$,
\item $\repse$ is a function that maps each edge $xy \in E(T)$ to a bipartite graph $\repse(xy)$ with bipartition $(\reps(\oxy), \reps(\oyx))$ and with no twins over this bipartition, and
\item $\dmap$ is a function that maps each path of length three $xyz \in \PT(T)$ to a \emph{representative map} $\dmap(xyz) \colon \reps(\oxy) \rightarrow \reps(\oyz)$.
\end{itemize}

For an oriented edge $\oxy \in \oE(T)$, we denote by $\lparts(\Tc)[\oxy] = \bigcup_{\olp \in \leafe(T)[\oxy]} \reps(\olp)$ the union of the elements of $U$ on the leaf edges that are closer to $x$ than $y$.
Let $(G,\prt)$ be a partitioned graph.
We define that an annotated rank decomposition $\Tc = (T, U, \reps, \repse, \dmap)$ \emph{encodes} $(G,\prt)$ if
\begin{enumerate}
\item\label[arde]{prop:ard:enclp} $\prt = \{\reps(\olp) \mid \olp \in \leafe(T)\}$, and in particular $V(G) = U$,
\item\label[arde]{prop:ard:encedge} for all $C \in \prt$ the graph $G[C]$ is edgeless,
\item\label[arde]{prop:ard:minrep} for all $\oxy \in \oE(T)$ the set $\reps(\oxy)$ is a minimal representative of $\lparts(\Tc)[\oxy]$ in $G$ and $\repse(xy) = G[\reps(\oxy), \reps(\oyx)]$, and
\item\label[arde]{prop:ard:edges} for all $xyz \in \PT(T)$ and $u \in \reps(\oxy)$ it holds that $N_G(u) \cap \reps(\ozy) = N_G(\dmap(xyz)(u)) \cap \reps(\ozy)$.
\end{enumerate}

We will call these the properties \Cref{prop:ard:enclp,prop:ard:encedge,prop:ard:minrep,prop:ard:edges}.
Let us then prove that the partitioned graph encoded by $\Tc$ is uniquely defined by $\Tc$.
The proof contains useful properties of annotated rank decompositions that will be implicitly used later.

\begin{lemma}
If an annotated rank decomposition encodes a partitioned graph, then it uniquely determines the partitioned graph it encodes.
\end{lemma}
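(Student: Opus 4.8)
The plan is to reconstruct the partitioned graph $(G,\prt)$ directly from the data of $\Tc = (T, U, \reps, \repse, \dmap)$, showing that each piece of $(G,\prt)$ is forced. First I would recover $V(G) = U$ and the partition $\prt$: by \Cref{prop:ard:enclp}, $\prt$ must be exactly $\{\reps(\olp) \mid \olp \in \leafe(T)\}$, which is determined by $\Tc$. So it remains to show that the edge set $E(G)$ is uniquely determined. By \Cref{prop:ard:encedge}, there are no edges inside any part $C \in \prt$, so it suffices to determine, for every pair of distinct leaf edges $\olp, \vec{l'p'} \in \leafe(T)$ and every $a \in \reps(\olp)$, $b \in \reps(\vec{l'p'})$, whether $ab \in E(G)$.

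The key step is to push $a$ and $b$ toward each other along the tree using the representative maps $\dmap$, tracking adjacency via \Cref{prop:ard:edges}. Concretely, let $l = v_0, v_1, \dots, v_s = l'$ be the path in $T$ between the two leaves. I would define a sequence $a = a_0, a_1, \dots$ where $a_{i+1} = \dmap(v_{i} v_{i+1} v_{i+2})(a_i) \in \reps(\vec{v_{i+1} v_{i+2}})$, walking $a$ up from the $l$ side toward $l'$, and symmetrically a sequence $b = b_0, b_1, \dots$ walking $b$ from the $l'$ side. The crucial invariant, proved by induction using \Cref{prop:ard:edges}, is that $N_G(a_i) \cap \lparts(\Tc)[\vec{v_{j}v_{j-1}}] = N_G(a) \cap \lparts(\Tc)[\vec{v_j v_{j-1}}]$ for the appropriate range of indices — i.e. replacing $a$ by its representative $a_i$ does not change adjacencies into the part of the graph lying on the far side. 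Since $\reps(\vec{xy})$ is a minimal representative of $\lparts(\Tc)[\vec{xy}]$ (\Cref{prop:ard:minrep}), and \Cref{prop:ard:edges} says precisely that $\dmap$ preserves the neighbourhood into $\reps(\ozy)$ (hence, one shows, into all of $\lparts(\Tc)[\ozy]$), this invariant is maintained. Marching both $a$ and $b$ until they meet on a common edge $xy$ of $T$ — with $a^\star \in \reps(\vec{xy})$ the image of $a$ and $b^\star \in \reps(\vec{yx})$ the image of $b$ — we get that $ab \in E(G)$ if and only if $a^\star b^\star \in E(\repse(xy))$, and the latter is stored in $\Tc$ by \Cref{prop:ard:minrep}.

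The main obstacle, and the place requiring the most care, is justifying the claim that preserving adjacency into $\reps(\ozy)$ (which is what \Cref{prop:ard:edges} literally gives) implies preserving adjacency into the whole set $\lparts(\Tc)[\ozy]$, and more generally that the marching procedure is consistent — that the value we land on does not depend on small choices and that the two marches $a$ and $b$ are genuinely "compatible" on the meeting edge. For the first point I would argue that since $\reps(\vec{zy})$ is a representative of $\lparts(\Tc)[\vec{zy}]$, every vertex of $\lparts(\Tc)[\vec{zy}]$ has the same neighbourhood outside $\lparts(\Tc)[\vec{zy}]$ as some element of $\reps(\vec{zy})$; combined with the fact that $a_i, a_{i+1}$ both lie outside $\lparts(\Tc)[\vec{zy}]$ and agree on $\reps(\vec{zy})$, a short computation shows they agree on all of $\lparts(\Tc)[\vec{zy}]$. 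For the consistency of the two marches, I would note that when $a$ reaches edge $xy$ (oriented so $a \in$ side of $x$) its image $a^\star$ satisfies $N_G(a^\star) \cap \lparts(\Tc)[\vec{yx}] = N_G(a)\cap \lparts(\Tc)[\vec{yx}]$, and symmetrically $N_G(b^\star)\cap\lparts(\Tc)[\vec{xy}] = N_G(b)\cap\lparts(\Tc)[\vec{xy}]$; since $b \in \lparts(\Tc)[\vec{yx}]$ and $a\in\lparts(\Tc)[\vec{xy}]$, both reduce to the single question of whether $a^\star b^\star \in E(G)$, i.e. whether $a^\star b^\star \in E(\repse(xy))$. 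Handling the degenerate cases ($s = 1$, i.e. $l$ and $l'$ share a neighbour, or paths too short for $\dmap$ to apply) is routine. Since every edge of $G$ is determined this way and no edges lie inside parts, $E(G)$ — and hence $(G,\prt)$ — is uniquely determined by $\Tc$.
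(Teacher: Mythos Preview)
Your proposal is correct and follows essentially the same approach as the paper's proof: recover $\prt$ from \Cref{prop:ard:enclp}, rule out intra-part edges via \Cref{prop:ard:encedge}, and for vertices in distinct leaves march representatives along the leaf-to-leaf path using $\dmap$, maintaining the invariant (via \Cref{prop:ard:edges} and the representative property \Cref{prop:ard:minrep}) that adjacencies into the far side are preserved, until the answer can be read off some $\repse(xy)$. The only cosmetic difference is that the paper marches just one vertex all the way to the other leaf's edge and reads $\repse(p_2 l_2)$ there, whereas you march both vertices to meet at an intermediate edge; your two-sided version is a bit more symmetric but requires the extra ``consistency of the two marches'' paragraph, which the paper's one-sided march avoids.
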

\begin{proof}
Suppose $\Tc = (T, U, \reps, \repse, \dmap)$ encodes $(G,\prt)$.
The partition $\prt = \{\reps(\olp) \mid \olp \in \leafe(T)\}$ is uniquely defined by $\Tc$ by \Cref{prop:ard:enclp}.
Let $u,v$ be distinct vertices in $V(G) = U$.
If $u$ and $v$ are in the same part of $\prt$ then by \Cref{prop:ard:encedge} there is no edge between $u$ and $v$.

Then suppose $u \in \reps(\vec{l_1 p_1})$ and $v \in \reps(\vec{l_2 p_2})$ with $l_1 \neq l_2$.
Let $x_1 = l_1, x_2, \ldots, x_{t-1}, x_t = l_2$ be the unique path in $T$ between $l_1$ and $l_2$.
For $i \in [t-2]$ let $u_i = \dmap(x_i x_{i+1} x_{i+2}) \circ \ldots \circ \dmap(x_1 x_2 x_3)(u)$, where $\circ$ denotes the function composition.
Let us prove by induction that for every $i \in [t-2]$, it holds that
\begin{equation}
\label{lem:ardeuniq:eq1}
N_G(u) \cap \reps(\vec{x_{i+2} x_{i+1}}) = N_G(u_i) \cap \reps(\vec{x_{i+2} x_{i+1}}).
\end{equation}
For $i=1$ it holds by \Cref{prop:ard:edges}.
Then, for $i\ge 2$ we have by \Cref{prop:ard:minrep} and induction assumption that 
\[N_G(u) \cap \lparts(\Tc)[\vec{x_{i+1} x_{i}}] = N_G(u_{i-1}) \cap \lparts(\Tc)[\vec{x_{i+1} x_{i}}],\]
which implies
\[N_G(u) \cap \reps(\vec{x_{i+2} x_{i+1}}) = N_G(u_{i-1}) \cap \reps(\vec{x_{i+2} x_{i+1}})\]
because $\reps(\vec{x_{i+2} x_{i+1}}) \subseteq \lparts(\Tc)[\vec{x_{i+1} x_{i}}]$.
This yields \Cref{lem:ardeuniq:eq1} by \Cref{prop:ard:edges} by applying the function $\dmap(x_i x_{i+1} x_{i+2})$ to $u_{i-1} \in \reps(\vec{x_i x_{i+1}})$.

Now, $u_{t-2} \in \reps(\vec{p_2 l_2})$ and $u$ is adjacent to $v$ if and only if $u_{t-2}$ is adjacent to $v$, and therefore by \Cref{prop:ard:minrep} we have that $uv \in E(G)$ if and only if $u_{t-2} v \in E(\repse(p_2 l_2))$.
\end{proof}

We say that an annotated rank decomposition encodes a graph $G$ if it encodes the partitioned graph $(G,\trivialpartition{V(G)})$.

At this point, let us make a few remarks about the choices of these definitions.
We note that it would have been natural to require an additional property that $\reps(\oxy) \subseteq \reps(\ozx) \cup \reps(\vec{w x})$, where $zxy, wxy \in \PT(T)$.
However, this property turns out to be too strong in that in some cases we do not know if it could be maintained efficiently.
We also note that an equivalent alternative to storing the functions $\dmap(xyz)$ would be to store the graphs $G[\reps(\oxy), \reps(\ozy)]$: The function $\dmap(xyz)$ can be computed given $\repse(yz)$ and $G[\reps(\oxy), \reps(\ozy)]$, and conversely the graph $G[\reps(\oxy), \reps(\ozy)]$ can be computed given $\dmap(xyz)$ and $\repse(yz)$.
We choose to store $\dmap(xyz)$ because it is more explicit for the purpose of tracking representatives along the decomposition.


We define $|\Tc| = |T|$.
The width of an annotated rank decomposition is the maximum of $\cutrk_{\repse(xy)}(\reps(\oxy))$.
If the width of an annotated rank decomposition is $\ell$, then by \Cref{lem:repsbound} we have $|\reps(\oxy)| \le 2^{\ell}$ for all oriented edges $\oxy$.
It follows that an annotated rank decomposition of width $\ell$ can be represented in space $2^{\Oh{\ell}} |\Tc|$.

Let $\Tc' = (T',\lmap)$ be a rank decomposition of $(G,\prt)$.
We say that an annotated rank decomposition $\Tc = (T, V(G), \reps, \repse, \dmap)$ \emph{corresponds} to $\Tc'$ if $T = T'$ and for all $\olp \in \leafe(T)$ it holds that $\reps(\olp) = \lmap^{-1}(\olp)$.
Note that there is a unique rank decomposition of $(G,\prt)$ that the annotated rank decomposition $\Tc$ corresponds to.
We also observe that if $\Tc$ corresponds to $\Tc'$, then the widths of $\Tc$ and $\Tc'$ are equal.
When talking about annotated rank decompositions we sometimes use definitions that are defined for rank decompositions but not explicitly for annotated rank decompositions, in which case these definitions refer to the rank decomposition that the annotated rank decomposition corresponds to.

We define a \emph{rooted annotated rank decomposition} in the same way as an annotated rank decomposition, except the tree $T$ is a binary tree instead of a cubic tree.
If $r \in V(T)$ is the root and $x,y \in V(T)$ are its two children, then we require that $\reps(\vec{xr}) = \reps(\vec{ry})$, $\reps(\vec{yr}) = \reps(\vec{rx})$, and the functions $\dmap(xry)$ and $\dmap(yrx)$ are identity functions.
We observe that an annotated rank decomposition of width $\ell$ can be turned in $\Oh[\ell]{1}$ time into a corresponding rooted annotated rank decomposition, and vice versa.

We assume that the tree $T$ of an annotated rank decomposition is represented by an adjacency list and the functions $\reps$, $\repse$, $\dmap$ as tables.
For rooted annotated rank decompositions the adjacency list furthermore contains information on which adjacent node is the parent, and we also always store a pointer to the root node.
We also assume that the representation contains a table so that given $u \in U$ we can find $\olp \in \leafe(T)$ so that $u \in \reps(\olp)$ in constant time.

\subsection{Prefix-rebuilding updates}
\label{subsec:prefixrebuilding}
We will maintain a rooted annotated rank decomposition that encodes the dynamic graph $G$ we are maintaining.
All updates to the decomposition will be done via \emph{prefix-rebuilding updates}, which informally speaking change a prefix of a rooted annotated rank decomposition, but keep everything else intact.
The updates to the graph $G$ will also be made via prefix-rebuilding updates to the decomposition.
In particular, we will not maintain $G$ explicitly, but instead $G$ will be represented by the decomposition we are maintaining.

We then define prefix-rebuilding updates formally.
An update that changes a rooted annotated rank decomposition $\Tc = (T, U, \reps, \repse, \dmap)$ into another rooted annotated rank decomposition $\Tc' = (T', U, \reps', \repse', \dmap')$ is a prefix-rebuilding update if there exists a leafless prefix $\Tpref$ of $T$ and a leafless prefix $\Tpref'$ of $T'$ so that
\begin{itemize}
\item $T - \Tpref = T' - \Tpref'$,
\item for all $\oxy \in \oE(T-\Tpref)$ it holds that $\reps(\oxy) = \reps'(\oxy)$,
\item for all $\olp \in \leafe(T)$ there exists $p' \in V(T')$ so that $\vec{lp'} \in \leafe(T')$ and $\reps(\olp) = \reps'(\vec{lp'})$,
\item for all $xy \in E(T - \Tpref)$ it holds that $\repse(xy) = \repse'(xy)$, and
\item for all $xyz \in \PT(T - \Tpref)$ it holds that $\dmap(xyz) = \dmap'(xyz)$.
\end{itemize}

We say that such $\Tpref$ is the prefix of $T$ associated with the update and $\Tpref'$ the prefix of $T'$ associated with the update.
We observe that a prefix-rebuilding update never changes the partition of $U$ associated with the leaves of the decomposition.
We also note that because both $T$ and $T'$ are binary trees with the same number of leaves, $|\Tpref| = |\Tpref'|$ must hold.

The purpose of prefix-rebuilding updates will be to argue that such updates, along with re-computing various auxiliary information stored in the decomposition, can be implemented in time proportional to $|\Tpref|$ instead of time proportional to $|V(T)|$.
For example, bottom-up dynamic programming on the decomposition would need to be recomputed only for the nodes in $\Tpref'$.
Next we introduce some definitions to more formally facilitate this.

We define the tuple of \emph{annotations} of $\Tc'$ with respect to the prefix $\Tpref'$ to be the triple \[\NA(\Tc',\Tpref') = (\funrestriction{\reps'}{\oE(T') \setminus \oE(T' - \Tpref')}, \funrestriction{\repse'}{E(T') \setminus E(T'-\Tpref')}, \funrestriction{\dmap'}{\PT(T') \setminus \PT(T' - \Tpref)}).\]

Then, we say that the \emph{description} of the prefix-rebuilding update that changes $\Tc$ into $\Tc'$ is the triple
\[\prdesc = (\Tpref, T^\star, \NA(\Tc',\Tpref')),\]
where $\Tpref$ and $\NA(\Tc',\Tpref')$ are as defined above, and $T^\star = T'[\Tpref' \cup \App_{T'}(\Tpref')]$.
Note that $\Tpref' = V(T^\star) \setminus \leafs(T^\star)$ and $\leafs(T^\star) = \App_T(\Tpref) = \App_{T'}(\Tpref') = \App_{T^\star}(\Tpref')$.
We observe that the resulting rooted annotated rank decomposition $\Tc'$ is uniquely determined by $\Tc$ and $\prdesc$.
We denote $|\prdesc| = |\Tpref|$ and observe that if $\Tc'$ has width $\ell$, then $\prdesc$ can be represented in space $\Oh[\ell]{|\prdesc|}$.

Next we show that rooted annotated rank decompositions can be maintained efficiently under prefix-rebuilding updates.

\begin{lemma}
\label{lem:basicprds}
Suppose a representation of a rooted annotated rank decomposition $\Tc$ is already stored.
Then, given a description $\prdesc$ of a prefix-rebuilding update that changes $\Tc$ into $\Tc'$ of width $\ell$, the representation of $\Tc$ can be turned into a representation of $\Tc'$ in time $\Oh[\ell]{|\prdesc|}$.
\end{lemma}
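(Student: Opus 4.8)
The plan is to perform local surgery on the stored representation of $\Tc$: we excise the part indexed by $\Tpref$ and splice in the part indexed by $\Tpref'$, leaving the rest of the representation untouched. All the data needed is contained in the description $\prdesc = (\Tpref, T^\star, \NA(\Tc', \Tpref'))$; in particular, $T^\star$ records both the new topology of the rebuilt prefix and, because $\leafs(T^\star) = \App_T(\Tpref) = \App_{T'}(\Tpref')$, the way in which the subtrees of $T$ hanging off the appendix edges of $\Tpref$ are to be reattached in $T'$. Two size bounds underlie the whole argument. First, $T$ and $T'$ are binary trees, so every node has degree at most~$3$; hence there are only $\Oh{|\Tpref|}$ edges of $T$ incident to $\Tpref$ and only $\Oh{|\Tpref|}$ length-$3$ paths $xyz \in \PT(T)$ with $\{x,y,z\} \cap \Tpref \neq \emptyset$, and symmetrically for $\Tpref'$ (recall $|\Tpref| = |\Tpref'|$). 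Second, $\Tc'$ has width~$\ell$, so by \Cref{lem:repsbound} every set $\reps'(\oxy)$ has at most $2^{\ell}$ elements, every graph $\repse'(xy)$ has at most $2^{\ell+1}$ vertices, and every $\dmap'(xyz)$ is a function between such sets; thus a single annotation occupies $\Oh[\ell]{1}$ space and can be copied or removed in $\Oh[\ell]{1}$ time. We work in the word-RAM model, in which each of the stored tables supports lookup, insertion and deletion in $\Oh{1}$ time.

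We perform three updates. \textbf{(a)~Tree.} Scanning the adjacency lists of the nodes of $\Tpref$, we compute $\App_T(\Tpref)$; each appendix has exactly one neighbour in $\Tpref$ (otherwise $\Tpref$ would contain a cycle), which is its parent in the rooted tree. We delete all nodes of $\Tpref$ together with their incident edges, insert the internal nodes of $T^\star$ --- that is, $\Tpref'$ --- with the adjacencies prescribed by $T^\star$, and for each appendix $a$ replace in $a$'s adjacency list its former parent (a node of $\Tpref$) by its neighbour in $T^\star$, updating $a$'s parent pointer accordingly; the subtree of $T$ below $a$ is kept verbatim. The parent pointers inside $\Tpref'$ and the new root pointer are read off from the rooted tree $T^\star$. \textbf{(b)~Annotations.} By the definition of a prefix-rebuilding update, a $\reps$-, $\repse$- or $\dmap$-entry differs between $\Tc$ and $\Tc'$ only if it is indexed by an oriented edge, edge, or length-$3$ path touching $\Tpref$ in $T$ (respectively $\Tpref'$ in $T'$). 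We enumerate the entries of the first kind --- by scanning the $\Oh{1}$ edges and length-$3$ paths incident to each node of $\Tpref$ --- and delete them, and then insert the entries of the second kind by reading them directly out of $\NA(\Tc', \Tpref')$. In particular, an appendix edge $xy$ of $\Tpref$ (with $y$ the former parent of $x$) disappears together with its annotations $\reps(\oxy), \reps(\oyx), \repse(xy)$, while the edge $x p'$ joining $x$ to its new parent $p' \in \Tpref'$ is absent from $T' - \Tpref'$, so its annotations $\reps'(\vec{x p'}), \reps'(\vec{p' x}), \repse'(x p')$ lie in $\NA(\Tc', \Tpref')$ and get inserted. \textbf{(c)~Leaf lookup table.} Since $\Tpref$ is leafless, no leaf is deleted and the leaf partition of $U$ is unchanged; the only leaves whose incident leaf edge can change are those that are appendices of $\Tpref$, since for such a leaf $l$ the parent $p \in \Tpref$ is replaced by some $p' \in \Tpref'$ and the leaf edge becomes $\vec{l p'}$. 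There are at most $|\App_T(\Tpref)| = \Oh{|\Tpref|}$ such leaves, and for each we reset, for every $u \in \reps(\olp) = \reps'(\vec{l p'})$, the pointer from $\olp$ to $\vec{l p'}$; this costs $\Oh[\ell]{|\Tpref|}$ in total.

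Each of the three steps performs $\Oh{|\Tpref|}$ table operations on objects of size $\Oh[\ell]{1}$, so the total running time is $\Oh[\ell]{|\Tpref|} = \Oh[\ell]{|\prdesc|}$. The resulting representation has the tree $T'$, the annotations dictated by $\NA(\Tc', \Tpref')$ on the rebuilt part and the old ones elsewhere, and a correct leaf lookup table, so it represents $\Tc'$ --- which, as observed after the definition of $\prdesc$, is uniquely determined by $\Tc$ and $\prdesc$. There is no conceptual difficulty here; the one place that needs a little care is the ``seam'' between the untouched part $T - \Tpref = T' - \Tpref'$ and the rebuilt prefix: one must observe that the appendix edges of $\Tpref$ vanish and are replaced by edges into $\Tpref'$ whose annotations are, by construction of the description, already present in $\NA(\Tc', \Tpref')$, and that leaf edges incident to $\Tpref$ migrate and thereby force the update of the $u \mapsto$ leaf-edge table in step~(c). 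The bounded degree of binary trees is what keeps the number of affected objects down to $\Oh{|\Tpref|}$.
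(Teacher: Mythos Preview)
Your proof is correct and follows essentially the same approach as the paper's: excise the prefix $\Tpref$, splice in $\Tpref'$ using $T^\star$, and copy the new annotations from $\NA(\Tc',\Tpref')$. You are in fact more thorough than the paper --- you explicitly verify the $\Oh{|\Tpref|}$ count of affected edges and paths via bounded degree, and you handle the update of the $u \mapsto$ leaf-edge lookup table (your step~(c)), a detail the paper's proof silently skips.
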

\begin{proof}
Let $\Tc = (T,U,\reps,\repse,\dmap)$, $\Tc' = (T', U, \reps', \repse',\dmap')$, and $\prdesc = (\Tpref, T^\star, \NA(\Tc',\Tpref'))$.
We first use $T^\star$ to compute for all $a \in \App_T(\Tpref)$ the parent $\pi(a)$ of $a$ in $T^\star$, which is also the parent of $a$ in $T'$.
Then, we construct $T'$ by taking $T^\star$ and for each $a \in \App_T(\Tpref)$ attaching the subtree of $T$ rooted at $a$ as a child of $\pi(a)$.
This can be done by $\Oh{1}$ pointer changes for each such $a$, so we constructed $T'$ in time $\Oh{|\Tpref|}$.
In this process also the annotations in the subtrees below such appendices $a$ are preserved, so to construct the rest of the annotations of $\Tc'$ we just copy the annotations from $\NA(\Tc',\Tpref')$ in $\Oh[\ell]{|\Tpref|}$ time.
\end{proof}

\subsection{Prefix-rebuilding data structures}
To formalize the notion of a rooted annotated rank decomposition that maintains some auxiliary information under prefix-rebuilding updates, we define \emph{prefix-rebuilding data structures}.
For $\ell \in \N$, an $\ell$-prefix-rebuilding data structure with overhead $\tau$ is a data structure that maintains a rooted annotated rank decomposition $\Tc$ of width at most $\ell$ that encodes a dynamic graph $G$, and supports the following queries:
\begin{itemize}
\item $\mathsf{Initialize}(\Tc)$: Initialize the data structure with the given rooted annotated rank decomposition $\Tc$. Assumes that $\Tc$ encodes a graph $G$ and the width of $\Tc$ is at most $\ell$. Runs in time $\Oh{\tau \cdot |\Tc|}$.
\item $\mathsf{Update}(\prdesc)$: Given a description $\prdesc$ of a prefix-rebuilding update that changes $\Tc$ into $\Tc'$, apply this update to $\Tc$. Assumes that $\Tc'$ encodes a graph $G'$ and the width of $\Tc'$ is at most $\ell$. Runs in time $\Oh{\tau \cdot |\prdesc|}$.
\end{itemize}

Note that an $\ell$-prefix-rebuilding data structure with overhead $\tau = \Oh[\ell]{1}$ supporting the two queries mentioned above can be readily implemented by \Cref{lem:basicprds}.
The purpose of this definition is to give a template for data structures that implement also other queries in addition to the aforementioned two.
As an immediate example, let us give a prefix-rebuilding data structure for maintaining the $\height_T(t)$ function.

\begin{lemma}
\label{lem:utilityprds}
Let $\ell \in \N$.
There exists an $\ell$-prefix-rebuilding data structure with overhead $\Oh[\ell]{1}$ that maintains a rooted annotated rank decomposition $\Tc = (T,V(G),\reps,\repse,\dmap)$ that encodes a dynamic graph $G$, and additionally supports the following query:
\begin{itemize}
\item $\mathsf{Height}(t)$: Given a node $t \in V(T)$, returns $\height_T(t)$ in time $\Oh{1}$.
\end{itemize}
\end{lemma}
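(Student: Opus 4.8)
The plan is to augment the basic $\ell$-prefix-rebuilding data structure of \Cref{lem:basicprds} with, for every node $t \in V(T)$, a stored value $\height_T(t)$. The query $\mathsf{Height}(t)$ then simply reads off this value in $\Oh{1}$ time, so the only content is to show that the stored heights can be maintained within the claimed $\Oh[\ell]{1}$ overhead.

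First, I would handle $\mathsf{Initialize}(\Tc)$: upon receiving $\Tc$, compute $\height_T(t)$ for all $t$ by a single post-order traversal of $T$, setting $\height_T(t) = 1$ for a leaf and $\height_T(t) = 1 + \max(\height_T(c_1), \height_T(c_2))$ for an internal node with children $c_1, c_2$. This takes $\Oh{|\Tc|}$ time, consistent with overhead $\Oh[\ell]{1}$. Second, for $\mathsf{Update}(\prdesc)$ where $\prdesc = (\Tpref, T^\star, \NA(\Tc',\Tpref'))$, I would first invoke the underlying update of \Cref{lem:basicprds} to rebuild the tree and annotations in $\Oh[\ell]{|\prdesc|}$ time. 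The key observation is that a prefix-rebuilding update only alters the part of the tree inside $\Tpref'$: by definition $T - \Tpref = T' - \Tpref'$, and for every appendix $a \in \App_T(\Tpref) = \App_{T'}(\Tpref')$ the entire subtree rooted at $a$ is unchanged, so $\height_{T'}(t) = \height_T(t)$ for all $t \notin \Tpref'$. Hence it suffices to recompute $\height_{T'}(t)$ for the nodes $t \in \Tpref'$. Since $\Tpref'$ induces a connected subtree containing the root and all of its "boundary" children (the appendices) retain their old, correct heights, I would process the nodes of $\Tpref'$ in any bottom-up order (e.g. by decreasing distance from the root within $T^\star$, which is available from $\prdesc$), applying the same recurrence $\height_{T'}(t) = 1 + \max(\height_{T'}(c_1), \height_{T'}(c_2))$; each child $c_i$ is either inside $\Tpref'$ (already processed) or an appendix (height unchanged and already stored). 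This touches $|\Tpref'| = |\Tpref| = |\prdesc|$ nodes with $\Oh{1}$ work each, so the total added cost is $\Oh{|\prdesc|}$, and the overhead remains $\Oh[\ell]{1}$.

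There is essentially no hard obstacle here; the statement is a warm-up illustrating the prefix-rebuilding data structure template. The only point requiring (minor) care is the bookkeeping to ensure that when we recompute heights inside $\Tpref'$ we always have the children's heights available — this is guaranteed by processing $\Tpref'$ in a valid post-order and by the fact, noted above, that heights outside $\Tpref'$ are unchanged by the update. One should also double-check the edge convention for leaves (leaves of $T$ are never in $\Tpref'$ since $\Tpref'$ is leafless, so their stored height $1$ is never recomputed and stays correct). With these remarks the maintenance is routine and the lemma follows.
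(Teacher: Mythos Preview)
Your proposal is correct and follows essentially the same approach as the paper: store $\height_T(t)$ at every node, compute all heights by bottom-up dynamic programming at initialization, and on each update recompute only the heights of nodes in $\Tpref'$ (in bottom-up order) since the subtrees rooted at the appendices are unchanged. Your write-up is in fact slightly more detailed than the paper's, but the argument is the same.
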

\begin{proof}
In the $\mathsf{Initialize}(\Tc)$ query we compute $\height_T(t)$ by bottom-up dynamic programming for every node $t \in V(T)$.
This runs in $\Oh[\ell]{|\Tc|}$ time.
The $\mathsf{Update}(\prdesc)$ query is implemented by first using \Cref{lem:basicprds} to construct $\Tc' = (T',U,\reps',\repse',\dmap')$, and then computing $\height_{T'}(t)$ for all $t \in \Tpref'$ by bottom-up dynamic programming, where $\Tpref'$ is the prefix of $T'$ associated with the update.
This runs in $\Oh[\ell]{|\prdesc|}$ time.
Then the $\mathsf{Height}(t)$ query can be implemented by simply returning the already stored height of the node $t$.
\end{proof}

Let us then clarify our assumptions about prefix-rebuilding data structures.
We assume that the stored decomposition $\Tc$ always encodes a graph.
We also assume that the data structure explicitly represents the current decomposition $\Tc$ at all times, so that we can access it and for example retrieve a copy of $\Tc$ in $\Oh[\ell]{|\Tc|}$ time.

As the final lemma of this subsection we give a prefix-rebuilding data structure for making certain straightforward manipulations of descriptions of prefix-rebuilding updates.

\begin{lemma}
\label{lem:prdscompose}
Let $\ell \in \N$.
There exists an $\ell$-prefix-rebuilding data structure with overhead $\Oh[\ell]{1}$ that maintains a rooted annotated rank decomposition $\Tc$ that encodes a dynamic graph $G$, and additionally supports the following queries:
\begin{itemize}
\item $\mathsf{Reverse}(\prdesc)$: Given a description $\prdesc$ of a prefix-rebuilding update that changes $\Tc$ into $\Tc'$, return a description of a prefix-rebuilding update that changes $\Tc'$ into $\Tc$. Runs in time $\Oh[\ell]{|\prdesc|}$.
\item $\mathsf{Compose}(\prdesc_1,\prdesc_2)$: Given two descriptions of prefix-rebuilding updates, $\prdesc_1$ and $\prdesc_2$, so that $\prdesc_1$ changes $\Tc$ into $\Tc'$ and $\prdesc_2$ changes $\Tc'$ into $\Tc''$ and both $\Tc'$ and $\Tc''$ have width at most $\ell$, return a description of a prefix-rebuilding update that changes $\Tc$ into $\Tc''$. Runs in time $\Oh[\ell]{|\prdesc_1| + |\prdesc_2|}$.
\end{itemize}
\end{lemma}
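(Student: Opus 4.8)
The plan is to build this $\ell$-prefix-rebuilding data structure on top of the basic one from \Cref{lem:basicprds} (equivalently, the data structure implicit in \Cref{lem:utilityprds} restricted to just the height information, which we will reuse for locating prefixes). The key observation is that both operations only require manipulating the combinatorial objects $\prdesc = (\Tpref, T^\star, \NA(\Tc',\Tpref'))$ together with local access to $\Tc$; neither operation mutates $\Tc$. So the data structure stores $\Tc$ explicitly (as every prefix-rebuilding data structure does by assumption), and additionally maintains the $\height_T$ function via \Cref{lem:utilityprds} so that we can identify subtrees quickly, though in fact for these two operations all we need is the explicit adjacency-list representation of $\Tc$ plus the ability, given a node, to read off its parent and children in $\Oh{1}$ time.

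For $\mathsf{Reverse}(\prdesc)$: given $\prdesc = (\Tpref, T^\star, \NA(\Tc',\Tpref'))$ changing $\Tc = (T,U,\reps,\repse,\dmap)$ into $\Tc' = (T',U,\reps',\repse',\dmap')$, we must produce $\prdesc^{\mathrm{rev}} = (\Tpref', T^{\star\star}, \NA(\Tc,\Tpref))$. Here $\Tpref'$ is recovered as $V(T^\star) \setminus \leafs(T^\star)$; $T^{\star\star}$ is obtained by taking $T[\Tpref \cup \App_T(\Tpref)]$, and since $\leafs(T^\star) = \App_T(\Tpref)$, we know $\Tpref$ and its appendices already, so we just read off the relevant induced subtree of the explicitly stored $T$ in time $\Oh{|\Tpref|}$ (using $|\Tpref| = |\Tpref'|$). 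The annotations $\NA(\Tc,\Tpref)$ are $\reps,\repse,\dmap$ restricted to the edges/paths touching $\Tpref$, all of which are stored in $\Tc$ and number $\Oh{|\Tpref|}$; since the width is $\le \ell$, each annotation entry has size $\Oh[\ell]{1}$, so this takes $\Oh[\ell]{|\Tpref|} = \Oh[\ell]{|\prdesc|}$ time. Correctness is immediate: applying $\prdesc$ then $\prdesc^{\mathrm{rev}}$ leaves $\Tc$ unchanged because $T - \Tpref = T' - \Tpref'$ and the re-copied annotations agree with the originals on the untouched part.

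For $\mathsf{Compose}(\prdesc_1,\prdesc_2)$: write $\prdesc_1 = (\Tpref_1, T^\star_1, \NA(\Tc',\Tpref_1'))$ and $\prdesc_2 = (\Tpref_2, T^\star_2, \NA(\Tc'',\Tpref_2''))$, where $\prdesc_1$ changes $\Tc$ into $\Tc' = (T',U,\reps',\repse',\dmap')$ and $\prdesc_2$ changes $\Tc'$ into $\Tc''$. The composed description needs a prefix $\Tpref^c$ of $T$, a prefix $\Tpref''^c$ of $T''$, the induced subtree $T^{\star c} = T''[\Tpref''^c \cup \App_{T''}(\Tpref''^c)]$, and the annotations $\NA(\Tc'',\Tpref''^c)$. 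The natural choice is to take $\Tpref_1'$ (the prefix of $T'$ modified by the first update) and $\Tpref_2$ (the prefix of $T'$ modified by the second update, viewed inside $T'$) and let $\Tpref^c \subseteq T$ be the minimal prefix of $T$ whose complement subtrees all survive \emph{both} updates, i.e.\ roughly the "pullback'' of $\Tpref_1' \cup \Tpref_2$ to $T$; symmetrically $\Tpref''^c \subseteq T''$ is the corresponding prefix of $T''$. The main subtlety --- and I expect this to be the principal obstacle --- is that $\Tpref_2$ is a prefix of $T'$, not of $T$ or of $T''$, so I need to carefully trace which subtrees of $T$ are genuinely untouched by the composition. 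Concretely: a subtree of $T$ hanging below an appendix $a \in \App_T(\Tpref_1)$ is moved verbatim into $T'$ by the first update; if moreover in $T'$ this whole moved subtree lies strictly below some appendix of $\Tpref_2$ (equivalently, is disjoint from $\Tpref_2$), then it is again moved verbatim by the second update into $T''$, so it should lie outside $\Tpref^c$ and outside $\Tpref''^c$. Otherwise the subtree intersects $\Tpref_2$, and we must include (a suitable portion of) it in $\Tpref^c$. Since $|\Tpref_1| = \Oh{|\prdesc_1|}$ and $|\Tpref_2| = \Oh{|\prdesc_2|}$, the set of appendix edges we need to examine and re-route is $\Oh{|\prdesc_1| + |\prdesc_2|}$, so the bookkeeping stays within the budget; we realize it by first reconstructing $T^\star_1$ and $T^\star_2$ locally, then gluing along $\App_T(\Tpref_1)$ exactly as in the proof of \Cref{lem:basicprds} to understand how $\Tpref_2$ sits relative to the moved subtrees, and finally extracting $T^{\star c}$ and the annotations $\NA(\Tc'',\Tpref''^c)$ --- the latter simply being $\NA(\Tc'',\Tpref_2'')$ augmented with those $\reps',\repse',\dmap'$ entries from $\NA(\Tc',\Tpref_1')$ that fall on edges/paths of $T''$ inside $\Tpref''^c$ but outside $\Tpref_2''$ (these are available because such edges were created by the first update and not destroyed by the second). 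All entries have size $\Oh[\ell]{1}$ and there are $\Oh{|\prdesc_1| + |\prdesc_2|}$ of them, giving the claimed running time. Correctness follows by checking the five defining conditions of a prefix-rebuilding update for the pair $(\Tpref^c, \Tpref''^c)$: on $T - \Tpref^c$ the tree and all annotations coincide with those of $T''$ on $T'' - \Tpref''^c$ because, by construction of $\Tpref^c$, every such subtree was moved verbatim by both updates in turn.
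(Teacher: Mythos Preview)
Your $\mathsf{Reverse}$ is exactly the paper's construction.

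Your $\mathsf{Compose}$ takes a genuinely different route. The paper does not attempt the direct bookkeeping you outline; instead it \emph{temporarily applies} the updates: using \Cref{lem:basicprds} it turns the stored $\Tc$ into $\Tc'$ and then into $\Tc''$, reads off $\NA(\Tc'',\Tpref^{\circ'})$ and $T''[\Tpref^{\circ'} \cup \App_{T''}(\Tpref^{\circ'})]$ directly from the now-explicit $\Tc''$, and finally undoes both updates via two $\mathsf{Reverse}$ calls. The composed prefixes are defined exactly as you guess, $\Tpref^{\circ} = \Tpref_1 \cup (\Tpref_2 \setminus \Tpref_1')$ and $\Tpref^{\circ'} = \Tpref_2'' \cup (\Tpref_1' \setminus \Tpref_2)$, but with $\Tc''$ in hand there is no case analysis at all.

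Your direct approach is correct --- one can check that every edge or $3$-path touching $\Tpref^{\circ'}$ either touches $\Tpref_2''$ (so its $\Tc''$-annotation lies in $\NA(\Tc'',\Tpref_2'')$) or lies entirely in $T'' - \Tpref_2'' = T' - \Tpref_2$ and touches $\Tpref_1'$ (so its $\Tc''$-annotation equals the $\Tc'$-annotation and lies in $\NA(\Tc',\Tpref_1')$), and the tree $T^{\star c}$ can indeed be assembled from $T^\star_2$ together with the portions of $T^\star_1$ hanging below those appendices of $\Tpref_2''$ that happen to land in $\Tpref_1'$. What you gain is that $\Tc$ is never mutated, which is conceptually pleasant; what you pay is the case analysis you flag as the ``principal obstacle''. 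The paper's apply-and-undo trick sidesteps that obstacle entirely at the cost of transient mutation, which is harmless here since the data structure already supports $\mathsf{Update}$ in the required time.
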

\begin{proof}
We maintain $\Tc = (T,V(G),\reps,\repse,\dmap)$ by using \Cref{lem:basicprds}.

The $\mathsf{Reverse}$ query is implemented as follows.
Let $\prdesc = (\Tpref, T^\star, \NA(\Tc',\Tpref'))$, where $\Tpref' = V(T^\star) \setminus \leafs(T^\star)$.
We observe that now
\[\prdesc^r = (\Tpref', T[\Tpref \cup \App_T(\Tpref)], \NA(\Tc, \Tpref))\]
is a description of a prefix-rebuilding update that changes $\Tc'$ into $\Tc$, and it can be computed from $\prdesc$ and $\Tc$ in $\Oh[\ell]{|\prdesc|}$ time.

The $\mathsf{Compose}$ query is implemented as follows.
Let $\prdesc_1 = (\Tpref^1, T^\star_1, \NA(\Tc',\Tpref^{1'}))$ and $\prdesc_2 = (\Tpref^2, T^\star_2, \NA(\Tc'', \Tpref^{2'}))$, where $\Tpref^{i'} = V(T^\star_i) \setminus \leafs(T^\star_i)$ for $i \in [2]$.
Let $\Tpref^{\circ} = \Tpref^1 \cup (\Tpref^2 \setminus \Tpref'^1)$ and $\Tpref^{\circ'} = \Tpref^{2'} \cup (\Tpref^{1'} \setminus \Tpref^2)$.

We use the $\mathsf{Reverse}$ query to compute a description $\prdesc^r_1$ that turns $\Tc'$ into $\Tc$, then we use \Cref{lem:basicprds} to turn $\Tc$ into $\Tc'$, then again $\mathsf{Reverse}$ to compute a description $\prdesc^r_2$ that turns $\Tc''$ into $\Tc'$, and then \Cref{lem:basicprds} to turn $\Tc'$ into $\Tc'' = (T'', V(G), \reps'', \repse'', \dmap'')$.
This runs in time $\Oh[\ell]{|\prdesc_1| + |\prdesc_2|}$.
Then, we observe that
\[\prdesc_{\circ} = (\Tpref^{\circ}, T''[\Tpref^{\circ'} \cup \App_T(\Tpref^{\circ'})], \NA(\Tc'', \Tpref^{\circ'}))\]
is a description of a prefix-rebuilding update that changes $\Tc$ into $\Tc''$.
We can compute $\prdesc_{\circ}$ from $\Tc''$, $\prdesc_1$, and $\prdesc_2$ in $\Oh[\ell]{|\Tpref^{\circ}| + |\Tpref^{\circ'}|} = \Oh[\ell]{|\prdesc_1| + |\prdesc_2|}$ time.
We return $\prdesc_{\circ}$ and finally turn $\Tc''$ back into $\Tc$ with $\prdesc^r_1$ and $\prdesc^r_2$.
\end{proof}

\subsection{Prefix-rearrangement descriptions}
In our algorithm we wish to re-arrange rooted annotated rank decompositions by prefix-rebuilding updates without worrying about the details on what happens to the annotations $\reps$, $\repse$, and $\dmap$ stored in them.
In this subsection we show that prefix-rebuilding updates that are described without the tuple of new annotations and which do not change the graph encoded by the decomposition can be efficiently turned into prefix-rebuilding updates with descriptions as defined in \Cref{subsec:prefixrebuilding}.
In particular, we introduce \emph{prefix-rearrangement descriptions} as a more high-level versions of descriptions of prefix-rebuilding updates, and show that they can be turned efficiently into descriptions of prefix-rebuilding updates.

Let $\Tc = (T,U,\reps,\repse,\dmap)$ be a rooted annotated rank decomposition that encodes a graph $G$.
We define that a prefix-rearrangement description is a pair $\prdesc = (\Tpref,T^\star)$, where $\Tpref$ is a leafless prefix of $T$ and $T^\star$ is a binary tree with $\leafs(T^\star) = \App_T(\Tpref)$.
A prefix-rebuilding update \emph{corresponds} to $(\Tpref,T^\star)$ if it changes $\Tc$ into a rooted annotated rank decomposition $\Tc' = (T',U,\reps',\repse',\dmap')$ so that
\begin{itemize}
\item $\Tc'$ encodes $G$,
\item $\Tpref$ is the prefix of $T$ associated with the update, and
\item $T^\star = T'[\Tpref' \cup \App_{T'}(\Tpref')]$, where $\Tpref'$ is the prefix of $T'$ associated with the update.
\end{itemize}

In other words, a prefix-rearrangement description is like a prefix-rebuilding description but it does not contain the triple of new annotations, and it is required to maintain the graph $G$ encoded by the decomposition.
It can be observed that $\Tc$ and the prefix-rearrangement description uniquely determine the resulting tree $T'$, and in particular the rank decomposition to which $\Tc'$ corresponds, but not necessarily the annotations in $\Tc'$.

We again denote $|\prdesc| = |\Tpref|$.
The rest of this subsection is devoted to showing that given a prefix-rearrangement description $\prdesc$, a description of a prefix-rebuilding update that corresponds to $\prdesc$ can be computed in $\Oh[\ell]{|\prdesc| \log |\prdesc|}$ time, where $\ell$ is the maximum of the widths of $\Tc$ and $\Tc'$.
We start with several auxiliary lemmas.
In these lemmas we mostly manipulate unrooted annotated rank decompositions.

We first observe that we can efficiently remove leaves from an annotated rank decomposition.

\begin{lemma}
\label{lem:dropleafs}
There is an algorithm that given an annotated rank decomposition $\Tc$ of width $\ell$ that encodes a partitioned graph $(G,\prt)$ and a subset $\prt' \subseteq \prt$ with $|\prt'| \ge 2$, in time $\Oh[\ell]{|\Tc|}$ returns an annotated rank decomposition of width at most $\ell$ that encodes $(G[\prt'],\prt')$.
\end{lemma}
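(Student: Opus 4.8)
The plan is to remove the leaves of $\Tc$ that do not correspond to parts in $\prt'$ one group at a time — or rather, to identify all such leaves simultaneously, delete them, suppress the resulting degree-$2$ nodes, and then recompute the annotations $\reps,\repse,\dmap$ along the (bounded) portion of the tree that changed. The starting point is the observation that if we delete a leaf $\olp$ with $\reps(\olp) \notin \prt'$ from $T$ and suppress the degree-$2$ node $p$ that results, the only edge of the new tree that is genuinely new is the one joining the two former neighbors of $p$ other than $l$; call them $q$ and $r$. On this new edge $qr$ we must install $\reps(\vec{qr}), \reps(\vec{rq}), \repse(qr)$ and update the $\dmap$-entries for the length-$3$ paths through $q$ and $r$. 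Crucially, $\lparts(\Tc')[\vec{qr}]$ for the new decomposition equals $\lparts(\Tc)[\vec{qp}] \cup (\text{stuff on the } l \text{ side that survived})$, but since nothing on the $l$-side survives except what lies past $q$ and $r$, in fact the relevant sets on $\vec{qr}$ and $\vec{rq}$ are exactly the old $\lparts(\Tc)[\vec{qp}]$ and $\lparts(\Tc)[\vec{rp}]$.

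First I would handle the combinatorial side: construct the tree $T'$ underlying the output. Mark every leaf $\olp \in \leafe(T)$ with $\reps(\olp) \notin \prt'$ (these are the leaves to drop; there are $|\prt \setminus \prt'|$ of them, and we locate them in $\Oh[\ell]{|\Tc|}$ time by scanning $\prt$), delete them all, and then repeatedly suppress degree-$2$ nodes; the result is a cubic tree $T'$ whose leaves are exactly $\leafe(T)$ restricted to $\prt'$, and whose edges naturally correspond to the edges of $T$ that ``survived'' (i.e., each edge of $T'$ is a path in $T$ between two surviving branching nodes, all of whose internal nodes were suppressed). This contraction can be carried out in $\Oh{|\Tc|}$ total time by a single bottom-up/union-find-style pass. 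Set $U' = \boldcup\prt'$.

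Next I would recompute the annotations. For each edge $e' = qr$ of $T'$ corresponding to a nontrivial path $q = y_0 y_1 \cdots y_s = r$ in $T$, observe that $\lparts(\Tc')[\vec{qr}]$ (as a subset of $U'$) equals $\lparts(\Tc)[\vec{y_1 y_0}] \cap U'$ — wait, more carefully, it is the union of the surviving parts on the $q$-side, which is $\lparts(\Tc)[\vec{qy_1}]$ intersected with $\boldcup\prt'$. A minimal representative $\reps'(\vec{qr})$ of this set in $G[\prt']$ can be computed as follows: the old $\reps(\vec{qy_1})$ is a representative of $\lparts(\Tc)[\vec{qy_1}]$ in $G$, but it may contain elements not in $U'$ and may fail minimality relative to the smaller ground set $U'$. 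We have $|\reps(\vec{qy_1})| \le 2^\ell$, so we can afford to examine all of it; using the stored $\dmap$ along the path $y_0 y_1 \cdots y_s$ we pull each element of $\reps(\vec{qy_1})$ to a representative on the $r$-side, read off its adjacency pattern toward $\reps(\vec{rq})$-analogues using $\repse$, restrict attention to $U'$, and then greedily prune to a minimal subset — all in $\Oh[\ell]{s}$ time for that edge, hence $\Oh[\ell]{|\Tc|}$ over all edges since the paths partition $E(T)$. Once every $\reps'(\vec{xy})$ is fixed, the graphs $\repse'(xy) = G[\prt'][\reps'(\vec{xy}),\reps'(\vec{yx})]$ and the maps $\dmap'(xyz)$ are determined and can be extracted locally (using the old annotations, which encode all adjacencies in $G$, hence in $G[\prt']$) in $\Oh[\ell]{1}$ time per edge / per length-$3$ path. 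Finally I would verify that the width did not increase: for a surviving edge $xy$, $\cutrk_{G[\prt']}(\lparts(\Tc')[\vec{xy}]) \le \cutrk_G(\lparts(\Tc)[\vec{x'y'}])$ for the corresponding $T$-edge, because restricting the bipartite adjacency matrix to a subset of rows and columns cannot raise the rank — so width stays $\le \ell$. The bound $|\prt'| \ge 2$ is needed only so that $T'$ is a legitimate cubic tree with at least two leaves.

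The main obstacle is the representative bookkeeping along contracted paths: we must be sure that the representatives produced are genuinely minimal representatives \emph{in the new, smaller graph $G[\prt']$} (so that the output is a valid annotated rank decomposition per \Cref{prop:ard:minrep}), and that chasing an element of an old representative along a path of suppressed nodes via $\dmap$ really yields a vertex with the correct neighborhood in $G[\prt']$ restricted to the far side — this is exactly the kind of ``tracking representatives along the decomposition'' computation that the uniqueness lemma's proof performs, and it must be redone here with the ground set shrunk to $U'$. Everything else — tree contraction, the per-edge $\Oh[\ell]{1}$ work given the representatives, the width argument — is routine.
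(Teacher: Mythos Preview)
Your overall strategy---drop the dead leaves, suppress degree-$2$ nodes, recompute annotations---matches the paper's, but one step fails as written. The output must satisfy $\reps'(\oxy)\subseteq U'=\boldcup\prt'$: every representative has to be an actual vertex of the smaller graph $G[\prt']$ (by definition a representative of $\lparts(\Tc')[\vec{qr}]$ is a \emph{subset} of that set, which lies inside $U'$). Your procedure for an edge $qr$ of $T'$ starts from the old $\reps(\vec{qy_1})$ and prunes it; but $\reps(\vec{qy_1})\subseteq\lparts(\Tc)[\vec{qy_1}]$ may well contain vertices from deleted parts (the $q$-side of $T$ can contain both surviving and deleted leaves), and pruning will not discard such a vertex if it realises a neighbourhood class that is still distinct after restricting to $U'$. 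Chasing it along the path via $\dmap$ tells you its adjacency pattern, but gives you no replacement vertex in $U'$ with that pattern---and your per-edge budget $\Oh[\ell]{s}$ looks only at the contracted path, not at the surviving leaves where $U'$-vertices live.

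The paper fixes this by reversing the order of operations: it first recomputes, on \emph{every} oriented edge of the original $T$, a set $\reps'''(\oxy)\subseteq\lparts(\Tc)[\oxy]\cap V(G')$ of surviving-vertex representatives (together with a pointer $\phi(\oxy)$ back into the old $\reps(\oxy)$), via two DFS passes that push actual $U'$-vertices up from the surviving leaves through $\dmap$; it then refines to minimal representatives $\reps''$ in $G'$ and builds $\dmap''$ using $\phi$; and only \emph{afterwards} deletes edges whose representatives became empty and contracts the resulting degree-$2$ nodes. Your plan becomes correct once you add this leaf-to-root propagation step to manufacture $U'$-witnesses before (or instead of) the per-edge pruning.
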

\begin{proof}
Let $\Tc = (T,V(G),\reps,\repse,\dmap)$ and $G' = G[\prt']$.
We will construct an annotated rank decomposition $\Tc'$ that encodes $(G',\prt')$ and has width at most $\ell$.

First we construct for all $\oxy \in \oE(T)$ a set $\reps'''(\oxy) \subseteq \lparts(\Tc)[\oxy] \cap V(G')$ that is a minimal representative of $\lparts(\Tc)[\oxy] \cap V(G')$ in $G$, along with functions $\phi(\oxy) \colon \reps'''(\oxy) \rightarrow \reps(\oxy)$ that satisfy $N_G(u) \cap \reps(\oyx) = N_G(\phi(\oxy)(u)) \cap \reps(\oyx)$ for all $u \in \reps'''(\oxy)$.
These can be computed by dynamic programming that follows the mapping $\dmap$ by two depth-first searches on $\Tc$, first computing for edges pointing towards an arbitrarily chosen root, and second for edges pointing away from the root.
Then we construct the graph $\repse'''(xy) = G[\reps'''(\oxy), \reps'''(\oyx)]$ for each $xy \in E(T)$ from $\repse(xy)$ with the help of the functions $\phi(\oxy)$ and $\phi(\oyx)$.
Then, by using $\repse'''(xy)$ we can compute for each $\oxy$ a subset $\reps''(\oxy) \subseteq \reps'''(\oxy)$ that is a minimal representative of $\lparts(\Tc)[\oxy] \cap V(G')$ in $G'$ (instead of $G$).
We also compute the graphs $\repse''(xy) = G[\reps''(\oxy), \reps''(\oyx)]$ for each $xy \in E(T)$.

We also construct for all $xyz \in \PT(T)$ a function $\dmap''(xyz) \colon \reps''(\oxy) \rightarrow \reps''(\oyz)$ so that for all $u \in \reps''(\oxy)$ it holds that $N_{G'}(u) \cap \reps''(\ozy) = N_{G'}(\dmap(xyz)(u)) \cap \reps''(\ozy)$.
This can be constructed by first using $\phi(\oxy)$ and $\dmap(xyz)$ to compute $v \in \reps(\oyz)$ so that $N_G(v) \cap \reps(\ozy) = N_G(u) \cap \reps(\ozy)$ and then using $v$, $\repse(yz)$, and $\phi(\ozy)$ to compute $N_{G'}(u) \cap \reps''(\ozy)$.

We observe that $\Tc'' = (T,V(G'), \reps'',\repse'',\dmap'')$ almost satisfies all the properties required to be an annotated rank decomposition that encodes $G'$: the only issue is that some of the sets $\reps''(\oxy)$ can be empty.
We construct $\Tc'$ from $\Tc''$ by deleting all edges $xy$ where either $\reps''(\oxy)$ or $\reps''(\oyx)$ is empty, deleting all thus created isolated nodes, and finally contracting all degree-2 nodes.
Note that the annotations can be modified in a straightforward way when contracting.

Because $\repse''(xy)$ is isomorphic to an induced subgraph of $\repse(xy)$ for all $xy \in E(T)$, the width of the resulting decomposition is at most the width of $\Tc$.
Also, because $|\reps(\oxy)| \le 2^{\ell}$ for all $\oxy \in \oE(T)$, the algorithm can be implemented in $\Oh[\ell]{|\Tc|}$ time.
\end{proof}

Then, we will observe that a~certain type of induced subgraph finding problem can be solved by dynamic programming on annotated rank decompositions.
Let $G$ and $H$ be graphs, and $\gamma$ a function $\gamma \colon V(G) \rightarrow 2^{V(H)}$.
We say that $H$ is a \emph{labeled induced subgraph} of $(G,\gamma)$ if $G$ has an induced subgraph $G[X]$ so that $G[X]$ is isomorphic to $H$ with an isomorphism $\phi \colon X \rightarrow V(H)$ so that $\phi(x) \in \gamma(x)$ for all $x \in X$.
The pair $(X,\phi)$ will be called the \emph{witness} of the labeled induced subgraph.
The following lemma will be proven in \Cref{subsec:appendixcmso} by encoding the problem in $\CMSO_1$ logic.

\begin{restatable}{lemma}{lemlabelindsub}
\label{lem:labelindsub}
There is an algorithm that given an annotated rank decomposition $\Tc$ of width $\ell$ that encodes a partitioned graph $(G,\prt)$, a graph $H$, and a function $\gamma \colon V(G) \rightarrow 2^{V(H)}$, in time $\Oh[\ell,H]{|\Tc|}$ either returns a witness of $H$ as a labeled induced subgraph of $(G,\gamma)$ or returns that $(G,\gamma)$ does not contain $H$ as a labeled induced subgraph.
\end{restatable}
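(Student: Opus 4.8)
The plan is to reduce the problem to $\CMSO_1$ model checking, with recovery of a satisfying assignment, on the graph $G$ encoded by $\Tc$ (which is well defined by the uniqueness lemma above). Write $V(H) = \{v_1,\dots,v_h\}$. Since $h$ is a constant, every vertex $v$ of $G$ has $\gamma(v)$ equal to one of at most $2^h$ subsets of $V(H)$, so first I would colour $v$ by this subset, turning $(G,\gamma)$ into a vertex-coloured graph with at most $2^h$ colours. This colouring is computed in $\Oh{|V(G)|}$ time, which is $\Oh[\ell]{|\Tc|}$: by \Cref{lem:repsbound} every part $\reps(\olp)$ of $\prt$ has at most $2^\ell$ elements, so $|V(G)| \le 2^\ell \cdot |\leafs(T)| = \Oh[\ell]{|\Tc|}$. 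Then I would write a $\CMSO_1$ formula $\varphi_H(X_1,\dots,X_h)$ with $h$ free set variables such that $(G,\gamma,S_1,\dots,S_h) \models \varphi_H$ exactly when each $S_i$ is a singleton $\{x_i\}$, the $x_i$ are pairwise distinct, $v_i$ belongs to the colour of $x_i$ for every $i$, and $x_ix_j \in E(G) \iff v_iv_j \in E(H)$ for all $i < j$. (All of this is even first-order; set variables are used only so that ``find a satisfying assignment'' produces the witness directly.) By construction, a satisfying assignment of $\varphi_H$ is precisely a witness $(X,\phi)$ of $H$ as a labeled induced subgraph of $(G,\gamma)$, with $X = \{x_1,\dots,x_h\}$ and $\phi(x_i)=v_i$, and $\varphi_H$ has length $\Oh[H]{1}$.

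Next I would evaluate $\varphi_H$ on $G$ using only $\Tc$. To this end I would translate $\Tc$, in $\Oh[\ell]{|\Tc|}$ time, into a clique-width expression for $G$ (enriched with the at most $2^h$ colours) using $\Oh[\ell,h]{1}$ labels. This is essentially the Oum--Seymour construction~\cite{OumS06} turning a rank decomposition of width $\ell$ into a $(2^{\ell+1}-1)$-expression, and it is immediate to carry out in linear time from the annotations: for every edge $xy$ of the decomposition tree, the stored data $\reps$, $\repse$, $\dmap$ already record the bounded set of neighbourhood types across the cut at $xy$, which is exactly what the construction needs, and the colours are folded into the labels at the leaves. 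I would then run the $\CMSO_1$ model-checking algorithm of Courcelle, Makowsky and Rotics~\cite{CourcelleMR00} on this expression; it runs in $\Oh[\ell,h]{|\Tc|}$ time and decides whether $\varphi_H$ holds. If it does not, I would report that $(G,\gamma)$ contains no labeled induced copy of $H$.

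If $\varphi_H$ holds, I would recover a satisfying assignment by the standard top-down augmentation of the model-checking dynamic program. After the bottom-up pass has computed, at each node of the clique-width expression (equivalently, at each node of $T$), the finite type relevant to $\varphi_H$, a top-down pass fixes at each node how the sets $X_1,\dots,X_h$ split among the node's subexpressions, consistently with a combination of the children's types that witnesses satisfaction. At the leaves this pins down, for each $i$, a concrete vertex $x_i \in U = V(G)$ (each leaf edge's representative set $\reps(\olp)$ is stored explicitly, with the colours of its elements). I would then output $X = \{x_1,\dots,x_h\}$ and $\phi$ with $\phi(x_i)=v_i$. Each of the two passes touches every node of $T$ once with $\Oh[\ell,h]{1}$ work per node, so together with the $\Oh[\ell]{|\Tc|}$ preprocessing the whole procedure runs in $\Oh[\ell,H]{|\Tc|}$ time.

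The routine parts are the first-order encoding of ``$H$ is a labeled induced subgraph'' and the top-down witness trace-back. The hard part is the linear-time $\CMSO_1$ evaluation directly from an annotated rank decomposition, i.e.\ the claim that $\Tc$ can be turned into a bounded-label clique-width expression in $\Oh[\ell]{|\Tc|}$ time, or, in this paper's language, that the model-checking dynamic program is a rank decomposition automaton whose states and transitions have size $\Oh[\ell,h]{1}$. This rests on verifying that the annotations $\reps,\repse,\dmap$ around an edge of the decomposition tree faithfully capture all adjacencies across the corresponding cut, so that a Feferman--Vaught-style composition goes through; setting this up carefully, together with the bookkeeping needed to return an actual witness rather than a yes/no answer, is the content of \Cref{subsec:appendixcmso}.
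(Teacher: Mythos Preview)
Your proposal is correct and takes essentially the same approach as the paper: encode the labeled induced subgraph problem as a fixed $\CMSO_1$ sentence with $|V(H)|$ free singleton set variables, evaluate it on $G$ via the cliquewidth translation (this is exactly \Cref{lem:rdtocwexp} and \Cref{lem:cmsordaut}), and recover a witness by a top-down pass over the automaton's run. The only cosmetic difference is that the paper encodes $\gamma$ through $p$ additional free set variables $Y_1,\dots,Y_p$ (requiring $X_i\subseteq Y_i$) rather than through $2^h$ vertex colours, and it makes the partitioned-graph-to-graph conversion explicit by attaching a size-$\Oh[\ell]{1}$ subtree under each leaf before invoking the automaton.
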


Then we need an algorithm that, given an annotated rank decomposition that encodes a partitioned graph $(G,\prt)$ and a vertex $v \in V(G)$, outputs $N_G(v)$.

\begin{lemma}
\label{lem:getnbs}
There is an algorithm that, given an annotated rank decomposition $\Tc$ of width $\ell$ that encodes a partitioned graph $(G,\prt)$ and a vertex $v \in V(G)$, in time $\Oh[\ell]{|\Tc|}$ returns $N_G(v)$.
\end{lemma}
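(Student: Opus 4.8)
The plan is to carry out, simultaneously for all leaves of $T$ by a single traversal of the tree, the ``transport of a representative'' that already underlies the proof that an annotated rank decomposition determines the graph it encodes. Write $\Tc = (T, U, \reps, \repse, \dmap)$ and suppose it encodes $(G,\prt)$. Using the assumed lookup table, find in $\Oh{1}$ time the leaf $l$ of $T$, with neighbour $p$, such that $v \in \reps(\vec{lp})$; then $v$ belongs to no other set $\reps(\olp')$ with $\olp' \in \leafe(T)$. Root $T$ at $l$, let $\pi(x)$ denote the parent of a node $x \neq l$, and set $D_x \coloneqq \lparts(\Tc)[\vec{x\,\pi(x)}]$ (the ``descendants of $x$'', so $D_p = U \setminus \reps(\vec{lp})$ and $D_{l'} = \reps(\vec{l'p'})$ for a leaf $l' \neq l$ with parent $p'$). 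Processing the nodes top-down from $l$, we compute for every $x \neq l$ an element $\rho(\vec{\pi(x)\,x}) \in \reps(\vec{\pi(x)\,x})$ maintaining the invariant
\[ N_G(v) \cap D_x \;=\; N_G\!\big(\rho(\vec{\pi(x)\,x})\big) \cap D_x . \]
We put $\rho(\vec{lp}) \coloneqq v$, for which the invariant at $x = p$ is immediate. For a node $x$ with $y \coloneqq \pi(x)$ and $q \coloneqq \pi(y)$ (so $qyx \in \PT(T)$), we put $\rho(\vec{yx}) \coloneqq \dmap(qyx)\big(\rho(\vec{qy})\big) \in \reps(\vec{yx})$.

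To see the invariant is preserved, assume it at $y$: $N_G(v) \cap D_y = N_G(\rho(\vec{qy})) \cap D_y$. Since $D_x \subseteq D_y$, this already gives $N_G(v) \cap D_x = N_G(\rho(\vec{qy})) \cap D_x$, so it suffices to prove $N_G(\rho(\vec{qy})) \cap D_x = N_G(\rho(\vec{yx})) \cap D_x$. By \Cref{prop:ard:edges} applied to the path $qyx$ and to $\rho(\vec{qy}) \in \reps(\vec{qy})$, the vertices $\rho(\vec{qy})$ and $\rho(\vec{yx}) = \dmap(qyx)(\rho(\vec{qy}))$ have the same neighbours inside $\reps(\vec{xy})$. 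Now $\reps(\vec{xy})$ is a minimal representative of $D_x = \lparts(\Tc)[\vec{xy}]$ by \Cref{prop:ard:minrep}, and both $\rho(\vec{qy})$ and $\rho(\vec{yx})$ lie in $\lparts(\Tc)[\vec{yx}]$, which is disjoint from $D_x$. Hence, exactly as in the argument around \eqref{lem:ardeuniq:eq1}: for any $a \in D_x$ pick $r \in \reps(\vec{xy})$ with $N_G(a) \setminus D_x = N_G(r) \setminus D_x$; then for $u \in \{\rho(\vec{qy}), \rho(\vec{yx})\}$ we have $a \in N_G(u) \iff r \in N_G(u)$ because $u \notin D_x$, and $r \in N_G(\rho(\vec{qy})) \iff r \in N_G(\rho(\vec{yx}))$ by the previous sentence; therefore $a \in N_G(\rho(\vec{qy})) \iff a \in N_G(\rho(\vec{yx}))$, as needed.

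It remains to read off the answer. For each leaf $l' \neq l$ with parent $p'$, the invariant at $l'$ together with $D_{l'} = \reps(\vec{l'p'})$ and $\repse(p'l') = G[\reps(\vec{p'l'}), \reps(\vec{l'p'})]$ shows that $N_G(v) \cap \reps(\vec{l'p'})$ equals the set of neighbours of $\rho(\vec{p'l'})$ in the stored bipartite graph $\repse(p'l')$; moreover $N_G(v) \cap \reps(\vec{lp}) = \emptyset$ by \Cref{prop:ard:encedge}. Since the sets $\reps(\olp')$ over $\olp' \in \leafe(T)$ partition $V(G)$, the algorithm returns $N_G(v)$ as the union of these leafwise neighbourhoods. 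For the running time, the top-down sweep touches $\Oh{|T|}$ nodes and spends $\Oh[\ell]{1}$ per node (every $\dmap$-evaluation is a table lookup on a domain of size $|\reps(\cdot)| \le 2^\ell$), while reading off the neighbourhood at each leaf costs $\Oh{|\reps(\vec{l'p'})|} = \Oh[\ell]{1}$; since $|V(G)| \le 2^\ell \cdot |\leafe(T)| = \Oh[\ell]{|\Tc|}$, the total is $\Oh[\ell]{|\Tc|}$. The only genuinely delicate point is the inductive step — keeping straight which side of each cut contains $v$ and using minimality of $\reps(\vec{xy})$ to lift agreement on a representative to agreement on the whole part $D_x$ — but this is precisely the bookkeeping already carried out in the proof that an annotated rank decomposition determines the graph it encodes, so it should go through without surprises.
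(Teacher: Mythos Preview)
Your proof is correct and follows essentially the same approach as the paper: root the tree at the leaf containing $v$, propagate a representative of $v$ outward along every edge via the maps $\dmap$, and read off $N_G(v)$ from the graphs $\repse$ at the leaves. The paper's proof is a two-sentence sketch of exactly this procedure, whereas you have carefully spelled out and verified the inductive invariant (which is indeed the same bookkeeping as in the uniqueness lemma for annotated rank decompositions).
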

\begin{proof}
We run a depth-first search that starts at the leaf edge $\olp \in \leafe(T)$ with $v \in \reps(\olp)$, and for each successor $\oxy$ of $\olp$ computes $u \in \reps(\oxy)$ that represents $v$ by following the mapping $\dmap$ along the depth-first search.
After this, the neighbors of $v$ can be determined from the graphs $\repse(l'p')$ of the leaf edges $\vec{l'p'} \in \leafe(T)$.
As $|\reps(\oxy)| \le 2^{\ell}$ for all $\oxy \in \oE(T)$, both steps take $\Oh[\ell]{|\Tc|}$ time.
\end{proof}

The following lemma will be the main lemma towards the main algorithm of this subsection.
It performs the update in the setting when the prefix-rearrangement description completely describes the new tree.
After that, we will reduce the general case to this.

\begin{lemma}
\label{lem:rearangmain}
There is an algorithm that given an annotated rank decomposition $\Tc'$ of width at most $\ell$ that encodes a partitioned graph $(G,\prt)$ and a rank decomposition $(T,\lmap)$ of $(G,\prt)$ of width at most $\ell$, in time $\Oh[\ell]{|V(T)| \log |V(T)|}$ returns an annotated rank decomposition $\Tc$ that encodes $(G,\prt)$ and corresponds to $(T,\lmap)$.
\end{lemma}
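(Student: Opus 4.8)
The plan is to compute the annotations of $\Tc$ by a divide-and-conquer along the tree $T$, using the already-available annotated decomposition $\Tc'$ as an oracle that lets us compute neighborhoods and small induced subgraphs efficiently. First I would fix a balanced separator edge or vertex of $T$: since $T$ has $\Oh{|V(T)|}$ nodes, there is an edge $xy$ (or a node) whose removal splits $T$ into two parts each of size at most $\tfrac{2}{3}|V(T)|$. For this edge I need to produce a minimal representative $\reps(\oxy)$ of $\lparts(\Tc)[\oxy]$ in $G$, a minimal representative $\reps(\oyx)$ of the complementary side, and the bipartite graph $\repse(xy) = G[\reps(\oxy),\reps(\oyx)]$. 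The key point is that all of these objects have size $2^{\Oh{\ell}}$, so once I know them I can afford $\Oh[\ell]{|V(T)|}$ time to propagate. To obtain $\reps(\oxy)$: conceptually I want a minimal representative of the vertex set $\lparts(\Tc)[\oxy]$ with respect to the cut $(\lparts(\Tc)[\oxy], \lparts(\Tc)[\oyx])$, and since $\cutrk \le \ell$ this set has at most $2^\ell$ elements. I would find such a set by running a labeled-induced-subgraph query (\Cref{lem:labelindsub}) on $\Tc'$: the minimal representative graph of the cut is, up to isomorphism, unique by \Cref{lem:uniqisom}, so I can guess its isomorphism type among the $2^{\Oh{\ell}}$ possibilities and ask whether $\Tc'$ contains it as an appropriately labeled induced subgraph with the labels restricting one side to $\lparts(\Tc)[\oxy]$ and the other to $\lparts(\Tc)[\oyx]$; the witness returned gives me actual vertices of $G$ realizing $\reps(\oxy)$, $\reps(\oyx)$, and simultaneously $\repse(xy)$.

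Having computed the ``boundary data'' at the separator edge, I would recurse on the two halves of $T$. The subtlety is that a recursive call on a subtree $T_1$ of $T$ is no longer a rank decomposition of $(G,\prt)$ but of the contracted instance in which the whole other side collapses to a bounded-size ``interface'' — concretely, I pass down $\reps(\oxy)$ (a bounded set of real vertices of $G$) together with $\repse(xy)$ so that adjacencies from the $T_1$-side into the rest of the graph are faithfully represented by adjacencies into these $\le 2^\ell$ representative vertices. Formally this means I recurse on the rank decomposition obtained from $(T_1, \funrestriction{\lmap}{\cdot})$ augmented by one extra leaf carrying the part $\reps(\oyx)$, over the partitioned graph $G[\lparts(\Tc)[\oxy] \cup \reps(\oyx)]$; the correctness that this has width at most $\ell$ follows from $\cutrk_{G[R_A,R_B]}(R_A) = \cutrk_G(A)$ observed in the preliminaries. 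At the base case of a single leaf edge $\olp$ the representative is just $\lmap^{-1}(\olp)$ and $\repse$, $\dmap$ are read off directly. When the two recursive calls return, I stitch the annotations together: for edges and length-three paths that straddle the separator I compute $\dmap(xyz)$ from the representatives on the two sides and the bipartite graphs $\repse$, using \Cref{lem:getnbs} on $\Tc'$ (or the already-computed $\repse$ graphs) to read off the needed adjacencies, all in $2^{\Oh{\ell}}$ time per node.

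For the running time: each level of recursion touches every node of $T$ a bounded number of times and, via \Cref{lem:labelindsub,lem:getnbs}, spends $\Oh[\ell]{|V(T)|}$ time on the oracle calls against $\Tc'$ (here I should be careful to call the oracle on $\Tc'$ restricted to the relevant part of the partition — this is where \Cref{lem:dropleafs} comes in, to cut $\Tc'$ down to the relevant subinstance in linear time, though a cleaner accounting is to pass labels into the single global $\Tc'$ and argue the query cost is linear in $|\Tc'| = \Oh[\ell]{|V(T)|}$ anyway, which already suffices). Because the separator is balanced, the recursion has depth $\Oh{\log |V(T)|}$ and total work $\Oh[\ell]{|V(T)| \log |V(T)|}$, as claimed. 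The main obstacle I expect is the bookkeeping in the recursive step: ensuring that the bounded-size interface passed between the two sides is chosen consistently so that the representatives $\reps(\oxy)$ produced on the $T_1$-side and on the $T_2$-side agree (as actual subsets of $V(G)$, not merely up to isomorphism) and that the glued-together $\dmap$ functions satisfy \Cref{prop:ard:edges} exactly; the uniqueness clause of \Cref{lem:uniqisom} (fixing one side pins down the isomorphism) is the tool that makes this consistent, but getting every orientation and every length-three path right is the delicate part.
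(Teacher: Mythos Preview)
Your high-level plan matches the paper's proof: divide-and-conquer on $T$ via a balanced separator edge, find a minimal representative of the cut by a labeled-induced-subgraph query against the (restricted) annotated decomposition, recurse on each side with a bounded ``interface'' attached as an extra leaf, and stitch the two halves together using the uniqueness clause of \Cref{lem:uniqisom}. The piece you are missing is that the interface does not stay bounded under the recursion as you describe it. After the first split you attach one boundary part $R_Y$ as a new leaf; at the next split this leaf lands on one of the two sides, and that side receives \emph{another} new boundary leaf, so after $d$ levels a subinstance may carry up to $d$ boundary parts. This matters because \Cref{lem:dropleafs} can only drop original parts of $\prt$ --- the boundary parts are not in the restricted $\Tc'$ at all --- so in the representative-finding step you must \emph{guess} how boundary vertices sit inside the minimal representative before invoking \Cref{lem:labelindsub}. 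With $b$ boundary parts of size at most $2^{\ell}$ each, that guessing costs roughly $(2^{\ell+1}+1)^{b\cdot 2^{\ell}}$, which is $\Oh[\ell]{1}$ only when $b$ is bounded. The paper handles this by packaging the recursive state as a \emph{decomposition-boundary-pair} with at most four boundary parts, and by alternating the splitting rule: when there are at most three boundary parts pick a balanced edge, and when there are four pick an edge that sends exactly two to each side. This caps the boundary at four while still guaranteeing a constant-factor shrink in leaf count every two levels, which is what makes the $\Oh[\ell]{|V(T)|\log|V(T)|}$ bound go through.

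Your parenthetical fallback of querying the single global $\Tc'$ does not rescue the analysis either: at recursion depth $d$ there are up to $2^d$ subproblems and each global call to \Cref{lem:labelindsub} or \Cref{lem:getnbs} costs $\Oh[\ell]{|V(T)|}$, so a level already costs $\Oh[\ell]{2^d\,|V(T)|}$ rather than $\Oh[\ell]{|V(T)|}$. You must shrink $\Tc'$ along the recursion and carry the boundary separately, and then the boundary bound above becomes essential.
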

\begin{proof}
The idea of the algorithm will be to work recursively by picking an edge $xy \in E(T)$ that corresponds to a balanced cut between the leaves of $T$, then in time $\Oh[\ell]{|V(T)|}$ computing a minimal representative of the cut $(\lparts(T,\lmap)[\oxy], \lparts(T,\lmap)[\oyx])$ of $G$, then recursively constructing annotated rank decompositions on both sides of this cut, and finally combining them.
To make this idea work, we need to keep the ``boundary'' of the subtree of $T$ that we are currently working on small, and explicitly encode all adjacencies from a representative of the boundary to all other vertices.

More formally, we define a \emph{decomposition-boundary-pair}:
Let $\Tc'$ be an annotated rank decomposition that encodes a partitioned graph $(G',\prt')$, $B$ a graph with $V(G') \subseteq V(B)$ so that $B[V(G')]$ is edgeless, and $\prtb$ a partition of $V(B) \setminus V(G')$ so that for all $C \in \prtb$ the graph $B[C]$ is edgeless.
We call the pair $(B, \prtb)$ a \emph{boundary representation} and the pair $(\Tc', (B, \prtb))$ a decomposition-boundary-pair.
The pair $(\Tc', (B, \prtb))$ encodes a partitioned graph $(G,\prt)$ where $V(G) = V(B)$, $E(G) = E(B) \cup E(G')$, and $\prt = \prtb \cup \prt'$.
In particular, the edges in the subgraph induced by $V(G')$ come from $\Tc'$, and the other edges come from $B$.
Note that we allow $V(G') = V(B)$, in which case $\prtb = \emptyset$ and $B$ is edgeless.

Then we give our algorithm.
We will describe a recursive algorithm that takes as input
\begin{itemize}
\item an annotated rank decomposition $\Tc'$ of width at most $\ell$ and a boundary representation $(B, \prtb)$ with $|\prtb| \le 4$ and $|C| \le 2^{\ell}$ for all $C \in \prtb$, so that the decomposition-boundary-pair $(\Tc', (B,\prtb))$ encodes a partitioned graph $(G,\prt)$, where $|\prt| \ge 2$, and
\item a rank decomposition $(T,\lmap)$ of $(G,\prt)$ of width at most $\ell$,
\end{itemize}
and outputs
\begin{itemize}
\item an annotated rank decomposition $\Tc$ that encodes $(G,\prt)$ and corresponds to $(T,\lmap)$.
\end{itemize}
The base case is that $|\prt| \le 3$.
In this case $|V(G)| \le 3 \cdot 2^{\ell}$, so we can first explicitly construct $(G,\prt)$ from $(\Tc', (B,\prtb))$, and then from $(G,\prt)$ and $(T,\lmap)$ construct an annotated rank decomposition $\Tc$ that corresponds to $(T,\lmap)$ in a straightforward way in time $\Oh[\ell]{1}$.


Then we consider the case when $|\prt| \ge 4$.
Let us first pick the edge of $T$ along which we do recursion.
We say that a leaf of $T$ is a \emph{boundary leaf} if it corresponds to a part of $\prt$ that is in $\prtb$.
By our assumption there are at most $4$ boundary leaves.
If there are exactly $4$ boundary leaves, we pick $xy \in E(T)$ so that both $\leafs(T)[\oxy]$ and $\leafs(T)[\oyx]$ contain $2$ boundary leaves (note that this can always be done by a walking argument on the decomposition).
Otherwise, we pick $xy \in E(T)$ so that $|\leafs(T)[\oxy]| \le \frac{2}{3} |\leafs(T)|$ and $|\leafs(T)[\oyx]| \le \frac{2}{3} |\leafs(T)|$ (this can also be done by a similar argument).
In both of the cases, such $xy$ can be found in time $\Oh{|V(T)|}$.

Let $(X,Y) = (\lparts(T,\lmap)[\oxy],\lparts(T,\lmap)[\oyx])$ be the cut of $G$ corresponding to $xy$.
Next we compute a minimal representative $(R_X, R_Y)$ of $(X,Y)$.
Such a representative corresponds to a largest set of vertices $R \subseteq V(G)$ so that in the graph $G[R \cap X, R \cap Y]$ there are no twins over the bipartition $(R \cap X, R \cap Y)$.
Because the width of $(T,\lmap)$ is at most $\ell$, we have by \Cref{lem:repsbound} that $|R_X|,|R_Y| \le 2^{\ell}$.
Therefore, we compute such largest $R$ by a combination of brute-force and \Cref{lem:labelindsub}: We guess a graph isomorphic to $G[R \cap X, R \cap Y]$ and how the vertices in $\boldcup \prtb$ are mapped into this graph.
Then we use the graph $B$ and the cut $(X,Y)$ to compute for each vertex in $V(G')$ how it could be mapped to this graph so that it is consistent with the already guessed mapping, and based on that construct an instance of labeled induced subgraph and apply \Cref{lem:labelindsub} with $\Tc'$ to find such $R \subseteq V(G)$.
Note that multiple such $R$ could exist, but we pick arbitrarily a single one found by this procedure.
As $|R| \le 2 \cdot 2^{\ell}$ and $|\boldcup \prtb| \le 4 \cdot 2^{\ell}$, the running time of this step is $\Oh[\ell]{|V(T)|}$.

Then we describe the recursive call.
We describe the call only for the $X$-side of the cut, but it is analogous for the side of $Y$, with notation using $Y$ in the subscript instead of $X$.
Let $\prt_X$ be the partition obtained from $\prt$ by first removing all parts that are subsets of $Y$, and then inserting the part $R_Y$. Let also $G_X = G[X \cup R_Y]$.
Because $X$ and $Y$ are non-empty, we have that $|\prt_X| \ge 2$.
Then, a rank decomposition $(T_X, \lmap_X)$ of $(G_X, \prt_X)$ is obtained from $(T,\lmap)$ by cutting along $xy$, taking the side with $X$ in the leaves, and mapping $R_Y$ to the new leaf created by this cutting, and all other parts of $\prt_X$ to the same leaves they were previously mapped.
The new leaf to which $R_Y$ is mapped will be called $y$, so $T_X$ is an induced subgraph of $T$.
(Similarly, $T_Y$ is an induced subgraph of $T$, with $V(T_X) \cap V(T_Y) = \{x,y\}$.)
Because $R_Y$ is a representative of $Y$ it follows that the width of $(T_X,\lmap_X)$ is at most $\ell$.
Both $\prt_X$ and $(T_X,\lmap_X)$ can be constructed in $\Oh[\ell]{|V(T)|}$ time.

We will recursively call the algorithm with $(T_X,\lmap_X)$, and for this we must construct a decompo\-sition-boundary-pair that encodes $(G_X, \prt_X)$.
To deal with technicalities, if $|\prt_X| \le 5$, we actually do not apply a recursive call but instead construct $(G_X, \prt_X)$ explicitly in time $\Oh[\ell]{|V(T)|}$ by using \Cref{lem:getnbs}, and construct the annotations for $(T_X,\lmap_X)$ in a straightforward way in time $\Oh[\ell]{1}$.
Then, assume $|\prt_X| \ge 6$.
The new boundary representation $(B_X,\prtb_X)$ is constructed by first removing all vertices in $Y$ from $B$ and from all sets in $\prtb$, then inserting to $\prtb$ the set $R_Y$ as a new part, and then inserting to the graph $B$ the vertices $R_Y$ and all edges between $R_Y$ and $X$, which can be computed in time $\Oh[\ell]{|V(T)|}$ by \Cref{lem:getnbs} and the fact that $|R_Y| \le 2^{\ell}$.
The fact that $|R_Y| \le 2^{\ell}$ also implies that the assumption that all parts of $\prtb_X$ have size at most $2^{\ell}$ holds.
We also have to argue that $|\prtb_X| \le 4$.
If $|\prtb| \le 3$, this holds by the fact that we inserted only one new part.
If $|\prtb| = 4$, then by the selection of $xy$ there are two parts of $\prtb$ that are subsets of $Y$, and in fact in this case we have $|\prtb_X| \le 3$.
We will use this fact also later in the analysis of the overall time complexity.
The annotated rank decomposition $\Tc'_X$ of the decompo\-sition-boundary-pair is constructed from $\Tc'$ in $\Oh[\ell]{|V(T)|}$ time by applying \Cref{lem:dropleafs}, in particular, by deleting the parts that are subsets of $Y$.
Here we use $|\prt_X| \ge 6$ to guarantee that $\Tc'_X$ has at least two leaves.
We then observe that $(\Tc'_X, (B_X,\prtb_X))$ is a decomposition-boundary pair that encodes $(G_X,\prt_X)$ and satisfies all assumptions required by the recursion.

Then, let $\Tc_X = (T_X,V(G_X),\reps_X,\repse_X, \dmap_X)$ and $\Tc_Y = (T_Y, V(G_Y),\reps_Y, \repse_Y, \dmap_Y)$ be the annotated rank decompositions obtained by the recursive calls.
We describe the construction of the annotated rank decomposition $\Tc = (T,V(G),\reps,\repse,\dmap)$.
First, for every $\oed \in \oE(T_X) \setminus \{\oxy\}$ we set $\reps(\oed) \coloneqq \reps_X(\oed)$, and for every $\oed \in \oE(T_Y) \setminus \{\oyx\}$ we set $\reps(\oed) \coloneqq \reps_Y(\oed)$.
Observe that this sets representatives for all oriented edges of $T$, and that $\reps(\oxy) = R_X$ and $\reps(\oyx) = R_Y$.
Then, for every $e \in E(T_X) \setminus \{xy\}$ we set $\repse(e) \coloneqq \repse_X(e)$ and for every $e \in E(T_Y) \setminus \{xy\}$ we set $\repse(e) \coloneqq \repse_Y(e)$.
We set $\repse(xy) \coloneqq G[R_X, R_Y]$, which can be computed in time $\Oh[\ell]{|V(T)|}$ by \Cref{lem:getnbs}.

At this point, we note that from the fact that $(R_X,R_Y)$ is a minimal representative of $(X,Y)$, and by induction on the recursion, it follows that for all $\oab \in \oE(T)$, the set $\reps(\oab)$ is a minimal representative of $\lparts(\Tc)[\oab]$, and that for all $ab \in E(T)$, $\repse(ab) = G[\reps(\oab), \reps(\oba)]$.
In particular, $\Tc$ satisfies the property \Cref{prop:ard:minrep}.
Also the properties \Cref{prop:ard:enclp,prop:ard:encedge} are clearly satisfied.

Then we construct $\dmap$.
First, for every $abc \in \PT(T_X)$ so that $c \neq y$ we have $\reps_X(\oab) = \reps(\oab)$ and $\reps_X(\obc) = \reps(\obc)$, so we can set $\dmap(abc) \coloneqq \dmap_X(abc)$.
Analogously, for every $abc \in \PT(T_Y)$ so that $c \neq x$ we set $\dmap(abc) \coloneqq \dmap_Y(abc)$.
Then consider $txy \in \PT(T_X)$ for arbitrary such $t \in V(T_X)$.
We have that $\repse_X(xy) = G[\reps_X(\oxy), R_Y]$ and $\reps_X(\oxy)$ is a minimal representative of $\lparts(\Tc)[\oxy]$.
By \Cref{lem:uniqisom}, $\repse_X(xy)$ is isomorphic to $G[R_X, R_Y]$ with an isomorphism that is identity on on $R_Y$, and such an isomorphism is unique.
We find such isomorphism $\phi \colon \reps_X(\oxy) \cup R_Y \rightarrow R_X \cup R_Y$ in $\Oh[\ell]{1}$ time.
Then we construct $\dmap(txy)$ by letting $\dmap(txy)(r) = \phi(\dmap_X(txy)(r))$ for all $r \in \reps(\vec{tx})$.
For $tyx \in \PT(T_Y)$ we construct $\dmap(tyx)$ analogously.

It can be observed that this construction can be implemented in $\Oh[\ell]{|V(T)|}$ time.
It remains to show that the constructed annotated rank decomposition $\Tc$ indeed encodes $(G,\prt)$ and corresponds to $(T,\lmap)$.
We observe that by construction $\Tc$ corresponds to $(T,\lmap)$.
For showing that $\Tc$ encodes $(G,\prt)$, it remains to show \Cref{prop:ard:edges}.

\begin{claim}
For all $abc \in \PT(T)$ and $u \in \reps(\oab)$, we have $N_G(u) \cap \reps(\ocb) = N_G(\dmap(abc)(u)) \cap \reps(\ocb)$.
\end{claim}
\begin{claimproof}
For all $abc \in \PT(T)$ except of form $abc \in \{txy,tyx\}$ this holds because it holds for $\Tc_X$ and $\Tc_Y$ and the graphs $G_X$ and $G_Y$ are induced subgraphs of $G$.

Then consider the case $abc = txy$.
Recall that $R_Y = \reps(\oyx) = \reps_X(\oyx)$ and let $\phi$ be the unique isomorphism from $\repse_X(xy) = G[\reps_X(\oxy),R_Y]$ to $G[R_X, R_Y]$ that is identity on $R_Y$.
We have that
\begin{align*}
N_G(u) \cap R_Y &= N_G(\dmap_X(txy)(u)) \cap R_Y && \text{(by \Cref{prop:ard:edges} on $\Tc_X$)}\\
&=N_G(\phi(\dmap_X(txy)(u))) \cap R_Y && \text{(by isomorphism)}\\
&=N_G(\dmap(txy)(u)) \cap R_Y && \text{(by construction)}
\end{align*}
The case of $abc = tyx$ is similar.
\end{claimproof}
This concludes the proof that the output of the algorithm is as claimed.

Then we analyze the time complexity of the algorithm.
We already analyzed that a single recursive call takes $\Oh[\ell]{|V(T)|}$ time.
It remains to observe that if $|\prtb| = 4$, then in the child calls it holds that $|\prtb_X|,|\prtb_Y| \le 3$, and that if $|\prtb| \le 3$, then in the child calls it holds that $|\leafs(T_X)|,|\leafs(T_Y)| \le \frac{2}{3} |\leafs(T)|+1$.
Because $T$ is a cubic tree we have $|V(T)| = \Oh{|\leafs(T)|}$, and therefore a standard analysis of divide-and-conquer algorithms gives the total time complexity $\Oh[\ell]{|V(T)| \log |V(T)|}$.
\end{proof}

Then we will present one more auxiliary lemma that will be used in reducing the general case to the case of \Cref{lem:rearangmain}.

Let $\Tc = (T,V(G),\reps,\repse,\dmap)$ be a rooted annotated rank decomposition that encodes a partitioned graph $(G,\prt)$.
Given a leafless connected node set $\Tconn \subseteq V(T) \setminus \leafs(T)$, we denote by $\reppart(\Tc,\Tconn)$ the partition $\{\reps(\oap) \mid \oap \in \oApp_T(\Tconn)\}$ naturally associated with the appendix edges of $\Tconn$.
Obtaining an annotated rank decomposition of the partitioned graph $(G[\reppart(\Tc,\Tconn)], \reppart(\Tc,\Tconn))$ from $\Tc$ is almost straightforward, but we have to deal with a technical issue arising from the fact that some representatives in $\Tc$ inside the subtree $T[\Tconn]$ are not necessarily in $\boldcup \reppart(\Tc,\Tconn)$.

\begin{lemma}
\label{lem:annotdecomppart}
Let $\Tc = (T,V(G),\reps,\repse,\dmap)$ be a rooted annotated rank decomposition of width $\ell$ that encodes $(G,\prt)$ and whose representation is already stored.
There is an algorithm that given a leafless connected node set $\Tconn \subseteq V(T) \setminus \leafs(T)$, in time $\Oh[\ell]{|\Tconn|}$ returns an annotated rank decomposition of width at most $\ell$ that encodes the partitioned graph $(G[\reppart(\Tc,\Tconn)], \reppart(\Tc,\Tconn))$.
\end{lemma}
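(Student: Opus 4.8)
The statement we need to prove (Lemma~\ref{lem:annotdecomppart}) is about extracting an annotated rank decomposition of the "boundary partition graph" $(G[\reppart(\Tc,\Tconn)], \reppart(\Tc,\Tconn))$ from a larger rooted annotated rank decomposition $\Tc$ in time $\Oh[\ell]{|\Tconn|}$. The natural approach is to take the subtree $T[\Tconn]$, together with its appendix edges $\oApp_T(\Tconn)$, attach a fresh leaf to each appendix edge to represent the corresponding part $\reps(\oap)$, and then fix up the annotations $\reps,\repse,\dmap$ inside this small tree. The key point that makes this feasible in $\Oh[\ell]{|\Tconn|}$ time is that $T[\Tconn]$ has $\Oh{|\Tconn|}$ nodes and edges, $|\oApp_T(\Tconn)| = \Oh{|\Tconn|}$, and every $\reps(\oxy)$ has size at most $2^\ell$, so all the local information we need is already stored in $\Tc$ restricted to $\Tconn$ and its appendix edges.

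**Main steps, in order.** First, I would form the tree $T'$: take $T[\Tconn]$, and for each appendix edge $\oap \in \oApp_T(\Tconn)$ (so $a \notin \Tconn$, $p \in \Tconn$) attach a new leaf $\ell_{\oap}$ adjacent to $p$; root $T'$ at the root of $T$ (which lies in $\Tconn$ since $\Tconn$ is a prefix). This $T'$ is a binary tree whose leaves are exactly the new leaves $\ell_{\oap}$, and it will carry the decomposition of $\reppart(\Tc,\Tconn)$ with $\reps'(\vec{\ell_{\oap} p}) \coloneqq \reps(\oap)$. Second, the delicate part: for an oriented edge $\oxy$ internal to $T[\Tconn]$, the set $\lparts(\Tc)[\oxy]$ need \emph{not} equal the corresponding union of boundary parts $\bigcup\{\reps(\oap) : \oap \in \oApp_T(\Tconn), \oap \text{ on the } x\text{-side of } \oxy\}$, and moreover $\reps(\oxy)$ — a minimal representative of $\lparts(\Tc)[\oxy]$ in $G$ — may contain vertices not lying in $\boldcup \reppart(\Tc,\Tconn)$. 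So I cannot simply reuse $\reps,\repse,\dmap$ verbatim. Instead I would recompute, by dynamic programming along $T'$ (two DFS passes, inward then outward, following $\dmap$ exactly as in the proof of \Cref{lem:dropleafs}), for each oriented edge $\oxy$ of $T'$ a minimal representative $\reps'(\oxy)$ of $\lparts(\Tc')[\oxy]$ in the graph $H \coloneqq G[\reppart(\Tc,\Tconn)]$, together with a "tracing" function into the old $\reps(\oxy)$ that records which old representative each new one corresponds to. Using these tracing functions and the stored graphs $\repse(xy)$, I can compute $\repse'(xy) = H[\reps'(\oxy), \reps'(\oyx)]$ and then $\dmap'(xyz)$ for every $xyz \in \PT(T')$, exactly mirroring the construction in \Cref{lem:dropleafs}. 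Third, I would delete any edge $xy$ of $T'$ where $\reps'(\oxy)$ or $\reps'(\oyx)$ becomes empty (this can happen only if some boundary part $\reps(\oap)$ became isolated in $H$, but since parts are minimal representatives over real cuts this should not occur in the relevant direction — still, for safety I handle it by deleting isolated nodes and contracting degree-$2$ nodes as in \Cref{lem:dropleafs}). Finally, verify that the result encodes $(H, \reppart(\Tc,\Tconn))$ by checking properties \Cref{prop:ard:enclp,prop:ard:encedge,prop:ard:minrep,prop:ard:edges} — the last one follows from the correctness of the $\dmap'$ construction just as in \Cref{lem:dropleafs}.

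**The obstacle, and a shortcut.** The main obstacle is precisely the mismatch between $\lparts(\Tc)[\oxy]$ for $x,y \in \Tconn$ and the "reassembled" set made from boundary parts: a representative $r \in \reps(\oxy)$ of some $v \in \lparts(\Tc)[\oxy] \setminus \boldcup\reppart(\Tc,\Tconn)$ carries no information about $H$. The cleanest way around this is to observe that $H = G[\reppart(\Tc,\Tconn)]$ only keeps edges \emph{between distinct boundary parts}, so the adjacency between two vertices $u \in \reps(\vec{a_1 p_1})$ and $v \in \reps(\vec{a_2 p_2})$ (with $a_1 \ne a_2$) in $H$ equals their adjacency in $G$, which is exactly what the path-tracing argument of the uniqueness lemma computes. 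Thus, rather than reinvent the dynamic program, I would prove the lemma by \emph{reduction to \Cref{lem:dropleafs}}: first build an annotated rank decomposition $\Tc^\circ$ of the partitioned graph $(G, \prt^\circ)$ where $\prt^\circ$ is the partition of $V(G)$ obtained from $\prt$ by merging, for each $\oap \in \oApp_T(\Tconn)$, all parts of $\prt$ inside $\lparts(\Tc)[\oap]$ into the single part $\lparts(\Tc)[\oap]$ — this $\Tc^\circ$ is obtained from $\Tc$ by contracting each subtree hanging off an appendix edge into a single leaf, reusing the stored $\reps(\oap)$ as that leaf's representative set; then apply \Cref{lem:dropleafs} to $\Tc^\circ$ with $\prt' = \reppart(\Tc,\Tconn)$ being a choice of one representative-based sub-part per merged part. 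This keeps the tree at size $\Oh{|\Tconn|}$ throughout (the contraction step touches only $T[\Tconn]$ and its appendix edges), so the whole algorithm runs in $\Oh[\ell]{|\Tconn|}$ time, and correctness is inherited from \Cref{lem:dropleafs} plus the observation that $G[\reppart(\Tc,\Tconn)] = G^\circ[\reppart(\Tc,\Tconn)]$ where $G^\circ$ is the edgeless-within-parts graph encoded by $\Tc^\circ$.
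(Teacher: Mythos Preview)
Your first approach is correct and follows the same strategy as the paper: work on the subtree $T' = T[\Tconn \cup \App_T(\Tconn)]$ (the appendix nodes themselves serve as leaves, so no fresh leaves are needed) and run a two-pass DP to fix the representatives. However, the paper makes a sharper observation that you miss. The restricted tuple $(T',\, V(G'),\, \funrestriction{\reps}{\oE(T')},\, \funrestriction{\repse}{E(T')},\, \funrestriction{\dmap}{\PT(T')})$ is already a valid annotated rank decomposition of $(G',\prt')$ \emph{except} that some vertices $u \in \reps(\oxy)$ may lie in $\lparts(\Tc)[\oxy] \setminus V(G')$. The point is that the twin-classes over the cut do not change when passing from $\lparts(\Tc)[\oyx]$ to $\lparts(\Tc')[\oyx]$, because the latter --- being a union of the sets $\reps(\oap)$ --- is itself a representative of the former in $G$. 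Hence after substituting each offending $u$ by some $u' \in V(G')$ with $N_G(u') \cap \lparts(\Tc)[\oyx] = N_G(u) \cap \lparts(\Tc)[\oyx]$ (which always exists), the resulting $\reps(\oxy)$ remains a \emph{minimal} representative of $\lparts(\Tc')[\oxy]$ in $G'$, and $\repse(xy)$, $\dmap(xyz)$ carry over after this renaming. So the two-phase DP only substitutes vertex names; no recomputation of minimality and no empty-edge cleanup is needed. (A minor point: $\Tconn$ is only assumed connected and leafless, not a prefix, so the root of $T$ need not lie in $\Tconn$; this does not affect the argument.)

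Your shortcut via \Cref{lem:dropleafs} does not work as stated. By \Cref{prop:ard:enclp}, the parts of the partition encoded by an annotated rank decomposition are exactly the leaf sets $\reps(\olp)$, and $U$ is their disjoint union. So your $\Tc^\circ$ cannot simultaneously encode $(G, \prt^\circ)$ with parts $\lparts(\Tc)[\oap]$ and have $\reps^\circ(\olp) = \reps(\oap)$ at the leaves. Either the leaf sets are the full $\lparts(\Tc)[\oap]$ --- then the representation has size $\Omega(|V(G)|)$, not $\Oh[\ell]{|\Tconn|}$ --- or $U^\circ = \boldcup\reppart(\Tc,\Tconn)$ --- then $\Tc^\circ$ already encodes $(G', \prt')$, so constructing it correctly \emph{is} the lemma, and applying \Cref{lem:dropleafs} afterwards is moot (and $\reppart(\Tc,\Tconn)$ is not a subfamily of $\prt^\circ$ as you define it anyway).
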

\begin{proof}
We denote $(G',\prt) = (G[\reppart(\Tc,\Tconn)], \reppart(\Tc,\Tconn))$.

Let $T' = T[\Tconn \cup \App_T(\Tconn)]$ be the subtree of $T$ induced by $\Tconn$ and its appendices.
We construct $\Tc' = (T', V(G'), \funrestriction{\reps}{\oE(T')}, \funrestriction{\repse}{E(T')}, \funrestriction{\dmap}{\PT(T')})$ in a straightforward way in $\Oh[\ell]{|\Tconn|}$ time.
It can be observed that $\Tc'$ is almost an annotated rank decomposition that encodes $(G',\prt)$: the only issue is that some representatives are not from the set $V(G')$.
This issue can be fixed by finding for every representative $u \in \reps(\oxy)$ with $u \notin V(G')$ a representative $u' \in V(G')$ with $N(u') \cap \lparts(\Tc)[\oyx] = N(u) \cap \lparts(\Tc)[\oyx]$, and replacing $u$ with $u'$ in $\reps(\oxy)$, $\repse(xy)$, and in the representative maps that concern the edge $\oxy$.
This can be done in $\Oh[\ell]{|\Tconn|}$ time by a 2-phase dynamic programming that first finds such representatives $u'$ on oriented edges pointing towards the root, and then on oriented edges pointing towards the leaves.
Finally, it is straightforward to turn the obtained rooted annotated rank decomposition into unrooted.
\end{proof}

Then we give the main algorithm of this subsection.

\begin{lemma}
\label{lem:prdsrearangmain}
There exists an $\ell$-prefix-rebuilding data structure that maintains a rooted annotated rank decomposition $\Tc$ and additionally supports the following query:
\begin{itemize}
\item $\mathsf{Translate}(\Tpref,T^\star)$: Given a prefix-rearrangement description $(\Tpref,T^\star)$ on the decomposition $\Tc$, in time $\Oh[\ell,\ell']{|\Tpref| \log |\Tpref|}$ returns a description of a corresponding prefix-rebuilding update, where $\ell'$ is the width of the resulting rooted annotated rank decomposition.
\end{itemize}
\end{lemma}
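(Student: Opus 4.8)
The plan is to maintain the rooted annotated rank decomposition $\Tc$ with \Cref{lem:basicprds} (which already realises $\mathsf{Initialize}$ and $\mathsf{Update}$ with overhead $\Oh[\ell]{1}$) and to implement $\mathsf{Translate}(\Tpref,T^\star)$ by reduction to \Cref{lem:rearangmain}. The key observation is that a prefix-rebuilding update corresponding to $(\Tpref,T^\star)$ leaves every subtree of $\Tc$ hanging below an appendix edge of $\Tpref$ untouched, so we "compress" each such subtree to a single leaf and rebuild only on the resulting small instance. Let $(G,\prt)$ be the partitioned graph encoded by $\Tc$. First apply \Cref{lem:annotdecomppart} with the leafless connected set $\Tpref$ to obtain, in time $\Oh[\ell]{|\Tpref|}$, an annotated rank decomposition $\Tc^{\circ}$ of width at most $\ell$ encoding $(G^{\circ},\prt^{\circ}) \coloneqq (G[\reppart(\Tc,\Tpref)],\reppart(\Tc,\Tpref))$, where $\reppart(\Tc,\Tpref)=\{\reps(\oap)\mid \oap\in\oApp_T(\Tpref)\}$; since $|\prt^\circ|=|\oApp_T(\Tpref)|=\Oh{|\Tpref|}$ and each part has size at most $2^\ell$, the graph $G^\circ$ has $\Oh[\ell]{|\Tpref|}$ vertices. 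Then turn $T^\star$ into a cubic tree $\tilde T^\star$ (by contracting its root) and let $\lmap^\star$ be the leaf mapping sending the part $\reps(\oap)$ to the leaf of $\tilde T^\star$ equal to the appendix $a$ (using $\leafs(T^\star)=\App_T(\Tpref)$ and the bijection between appendices and appendix edges of $\Tpref$). A direct computation with the definition of a representative shows that for every edge of $\tilde T^\star$ the corresponding cut of $(G^\circ,\prt^\circ)$ has the same $\cutrk$ as the cut $\lparts(\Tc')[\oxy]$ of $G$ of the corresponding edge of the would-be tree $T'$; hence $(\tilde T^\star,\lmap^\star)$ is a rank decomposition of $(G^\circ,\prt^\circ)$, and its width $w^\star$ satisfies $w^\star\le\ell'$ and $\max(\ell,w^\star)=\max(\ell,\ell')$. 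Invoking \Cref{lem:rearangmain} with parameter $\max(\ell,w^\star)$ on $\Tc^\circ$ and $(\tilde T^\star,\lmap^\star)$ then returns, in time $\Oh[\ell,\ell']{|\Tpref|\log|\Tpref|}$, an annotated rank decomposition $\Tc^\star$ encoding $(G^\circ,\prt^\circ)$ and corresponding to $(\tilde T^\star,\lmap^\star)$, which we re-root to agree with $T^\star$.

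It remains to read off the tuple $\NA(\Tc',\Tpref')$ of new annotations, where $\Tpref'=V(T^\star)\setminus\leafs(T^\star)$, in $\Oh[\ell]{|\Tpref|}$ further time. For an oriented edge $\oxy$ of $T^\star$ with appendix edges $S\subseteq\oApp_T(\Tpref)$ on its $x$-side we have $\lparts(\Tc')[\oxy]=\bigcup_{\oap\in S}\lparts(\Tc)[\oap]$ while $\lparts(\Tc^\star)[\oxy]=\bigcup_{\oap\in S}\reps(\oap)$. Using that each $\reps(\oap)$ is a minimal representative of $\lparts(\Tc)[\oap]$ in $G$, one checks that the two equivalence relations "same neighbourhood outside $\lparts(\Tc')[\oxy]$ in $G$" and "same neighbourhood outside $\lparts(\Tc^\star)[\oxy]$ in $G^\circ$" agree on $\bigcup_{\oap\in S}\reps(\oap)$, so a minimal representative of $\lparts(\Tc^\star)[\oxy]$ in $G^\circ$ is also a minimal representative of $\lparts(\Tc')[\oxy]$ in $G$; together with the fact that cross-part adjacencies in $G^\circ$ coincide with adjacencies in $G$, this lets us take $\reps'\coloneqq\reps^\star$ and $\repse'\coloneqq\repse^\star$ on the new region. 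For $\dmap'$, every path $xyz\in\PT(T')$ with $y\in\Tpref'$ lies inside $T^\star$, and we set $\dmap'(xyz)\coloneqq\dmap^\star(xyz)$. The only remaining new-region paths have middle node an appendix $a$ of $\Tpref'$ and one endpoint its parent $p_a\in\Tpref'$; writing $q_a$ for the parent of $a$ in $T$, so $\reps(\vec{aq_a})=\reps'(\vec{ap_a})$ while $\reps^\star(\vec{p_a a})$ and $\reps(\vec{q_a a})$ are (possibly distinct) minimal representatives of the same set $\lparts(\Tc')[\vec{p_a a}]$, we obtain these $\dmap'$-values from the corresponding maps $\dmap(\cdot\,a\,\cdot)$ of $\Tc$ after relabelling along the canonical matching of these two minimal representatives, which is computed from $\repse(q_a a)$ and $\repse^\star(p_a a)$ in $\Oh[\ell]{1}$ time per appendix. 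Finally we return $\prdesc=(\Tpref,T^\star,\NA(\Tc',\Tpref'))$; everything outside the call to \Cref{lem:rearangmain} runs in $\Oh[\ell]{|\Tpref|}$ time, giving the claimed bound.

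For correctness it suffices to check that the rooted annotated rank decomposition $\Tc'$ obtained by applying $\prdesc$ to $\Tc$ encodes $G$ (the requirements on the associated prefixes and on $T^\star=T'[\Tpref'\cup\App_{T'}(\Tpref')]$ are immediate from the construction). Properties \Cref{prop:ard:enclp,prop:ard:encedge} are inherited from $\Tc$ since no leaf or leaf part changes. Property \Cref{prop:ard:minrep} holds on the new region by the representative-transfer statement above, and on the unchanged region because there $\lparts(\Tc')[\oxy]=\lparts(\Tc)[\oxy]$. Property \Cref{prop:ard:edges} follows for interior paths from the same property of $\Tc^\star$ (using that cross-part adjacencies in $G^\circ$ equal those in $G$), and for appendix-middle paths from the same property of $\Tc$ together with the elementary fact that, for vertices outside a set $W$, whether they have equal neighbourhood in $W$ does not depend on which minimal representative of $W$ is tested.

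The main obstacle is exactly the treatment of the boundary between the rebuilt prefix and the untouched hanging subtrees: \Cref{lem:annotdecomppart,lem:rearangmain} only produce correct annotations for the compressed graph $G^\circ$, so one must (i) argue that the transferred representatives and representative graphs remain valid with respect to the \emph{full} graph $G$, and (ii) reconstruct the representative maps $\dmap'$ on the paths crossing an appendix, which \Cref{lem:rearangmain} does not output since the appendices are leaves of $\tilde T^\star$. Both points reduce to careful but elementary manipulations of minimal representatives; the remainder is bookkeeping of $\Oh[\ell]{1}$-sized objects along $\Oh{|\Tpref|}$ tree nodes.
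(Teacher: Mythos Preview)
Your proposal is correct and follows essentially the same approach as the paper: compress the prefix to the partitioned graph $(G[\reppart(\Tc,\Tpref)],\reppart(\Tc,\Tpref))$ via \Cref{lem:annotdecomppart}, rebuild annotations on the compressed instance using \Cref{lem:rearangmain}, and then stitch the resulting annotations back onto the unchanged hanging subtrees, resolving the boundary $\dmap'$-maps at each appendix through the unique isomorphism of \Cref{lem:uniqisom} between the old and new minimal representative graphs. Your identification of the two cases for $\dmap'$ (paths with middle node in $\Tpref'$ versus paths with middle node an appendix) and your representative-transfer argument match the paper's \Cref{lem:prdsrearangmain:claim1,lem:prdsrearangmain:claim2}.
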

\begin{proof}
We maintain the rooted annotated rank decomposition $\Tc = (T,V(G),\reps,\repse,\dmap)$ that encodes a graph $G$ by making use of \Cref{lem:basicprds}.
It remains to describe how the $\mathsf{Translate}(\Tpref,T^\star)$ query is implemented.
Throughout the proof we will use $\pi \colon \oApp_T(\Tpref) \rightarrow \leafe(T^\star)$ to denote the bijection that maps an appendix edge $\oap \in \oApp_T(\Tpref)$ to the corresponding edge $\vec{ap'} \in \leafe(T^\star)$.

Consider the partitioned graph $(G^\star,\prt^\star) = (G[\reppart(\Tc,\Tpref)],\reppart(\Tc,\Tpref))$.
We apply \Cref{lem:annotdecomppart} to obtain an annotated rank decomposition $\Tc^{\star \star}$ of width at most $\ell$ that encodes $(G^\star,\prt^\star)$.
Then let $\lmap^\star \colon \prt^\star \rightarrow \leafe(T^\star)$ be the function that maps $\reps(\oap)$ to $\pi(\oap)$ for all appendix edges $\oap \in \oApp_T(\Tpref)$.
We observe that $(T^\star, \lmap^\star)$ is a rooted rank decomposition of $(G^\star,\prt^\star)$ of width at most $\ell'$.
Then we apply \Cref{lem:rearangmain} with $\Tc^{\star \star}$ and $(T^\star, \lmap^\star)$ to obtain a rooted annotated rank decomposition $\Tc^{\star} = (T^\star,V(G^\star),\reps^\star,\repse^\star,\dmap^\star)$ that encodes $(G^\star,\prt^\star)$ and corresponds to $(T^\star,\lmap^\star)$.
Note that even though \Cref{lem:rearangmain} works with unrooted decompositions, it is simple to make it work for rooted decompositions by unrooting $(T^\star, \lmap^\star)$ before applying it and then rooting the returned decomposition at the corresponding place.
So far all the steps have taken $\Oh[\ell,\ell']{|\Tpref| \log |\Tpref|}$ time.
It remains to attach the subtrees below the appendices of $\Tpref$ to $\Tc^\star$.

We consider an annotated rank decomposition $\Tc' = (T', V(G), \reps', \repse', \dmap')$ that is constructed as follows.
We start with $\Tc^\star$, and then for every appendix edge $\oap \in \oApp_T(\Tpref)$, we attach the subtree of $T$ below $\oap$ to $T'$ so that $\oap$ is identified with $\pi(\oap) = \vec{ap'}$, and also copy all annotations associated to that subtree in $\Tc$ to $\Tc'$.
We do not copy the annotations on the edge $ap$, in particular, it will hold that $\reps'(\vec{p'a}) = \reps^\star(\vec{p'a})$ and $\reps'(\vec{ap'}) = \reps^\star(\vec{ap'}) = \reps(\oap)$.

The functions $\dmap'(tap')$ where $\vec{ap'} = \pi(\oap)$ for $\oap \in \oApp_T(\Tpref)$ and $t$ is a child of $a$ can be copied from $\Tc$ in the natural way, so it remains to construct the functions $\dmap'(p'at)$.
It holds that $\reps(\vec{ap}) = \reps'(\vec{ap'})$ and both $\repse(ap)$ and $\repse'(ap')$ are representative graphs of $(\lparts(\Tc)[\vec{ap}], \lparts(\Tc)[\vec{pa}])$.
Therefore, by \Cref{lem:uniqisom} let $\phi \colon \reps'(\vec{ap'}) \cup \reps'(\vec{p'a}) \rightarrow \reps(\vec{ap}) \cup \reps(\vec{pa})$ be the unique isomorphism between $\repse'(ap')$ and $\repse(ap)$ that is identity on $\reps'(\vec{ap'}) = \reps(\vec{ap})$.
Such $\phi$ can be computed in $\Oh[\ell]{1}$ time.
We construct $\dmap'(p'at)$ by setting $\dmap'(p'at)(u) \coloneqq \dmap(pat)(\phi(u))$ for all $u \in \reps'(\vec{p'a})$.

Clearly, this construction of $\Tc'$ can be implemented by a prefix-rebuilding update that corresponds to the given prefix-rearrangement description, and the description of this prefix-rebuilding update can be computed according to the previous discussion in $\Oh[\ell,\ell']{|\Tpref| \log |\Tpref|}$ time.
It remains to show that $\Tc'$ encodes $G$.

The properties \Cref{prop:ard:enclp,prop:ard:encedge} obviously hold.
Then we show \Cref{prop:ard:minrep}.

\begin{claim}
\label{lem:prdsrearangmain:claim1}
For all $\oxy \in \oE(T')$ it holds that $\reps'(\oxy)$ is a minimal representative of $\lparts(\Tc')[\oxy]$ and $\repse'(xy) = G[\reps'(\oxy), \reps'(\oyx)]$.
\end{claim}
\begin{claimproof}
First, suppose that $\oxy \in \oE(T') \setminus \oE(T^\star)$.
We have $\lparts(\Tc')[\oxy] = \lparts(\Tc)[\oxy]$, $\reps'(\oxy) = \reps(\oxy)$, and $\repse'(xy) = \repse(xy)$, so the claim holds because $\Tc$ encodes $G$.

Then suppose $\oxy \in \oE(T^\star)$.
Because $\Tc^\star$ encodes $(G^\star,\prt^\star)$, $\reps'(\oxy)$ is a minimal representative of $\lparts(\Tc^\star)[\oxy]$ in $G^\star$ and $\repse'(xy) = G^\star[\reps'(\oxy), \reps'(\oyx)]$.
To obtain that $\reps'(\oxy)$ is a minimal representative of $\lparts(\Tc')[\oxy]$ and $\repse'(xy) = G[\reps'(\oxy), \reps'(\oyx)]$, it suffices to argue that $\lparts(\Tc^\star)[\oxy]$ is a representative of $\lparts(\Tc')[\oxy]$ in $G$ and $G[\lparts(\Tc^\star)[\oxy], \lparts(\Tc^\star)[\oyx]] = G^\star[\lparts(\Tc^\star)[\oxy], \lparts(\Tc^\star)[\oyx]]$.
The former follows from the fact that for each $\oap \in \oApp_{T}(\Tpref)$ the set $\reps(\oap) = \reps'(\pi(\oap))$ is a representative of $\lparts(\Tc)[\oap] = \lparts(\Tc')[\pi(\oap)]$.
The latter follows from the definition of $G^\star$ and the fact that for each $\oap \in \oApp_{T}(\Tpref)$ either $\reps(\oap) \subseteq \lparts(\Tc^\star)[\oxy]$ or $\reps(\oap) \subseteq \lparts(\Tc^\star)[\oyx]$.
\end{claimproof}

The next claim will imply \Cref{prop:ard:edges}.

\begin{claim}
\label{lem:prdsrearangmain:claim2}
For all $xyz \in \PT(T')$ and $u \in \reps'(\oxy)$, it holds that \[N_G(u) \cap \reps'(\ozy) = N_G(\dmap'(xyz)(u)) \cap \reps'(\ozy).\]
\end{claim}
\begin{claimproof} 
When $y \notin V(T^\star)$ or when $xyz = tap'$ for $\vec{ap'} = \pi(\oap)$ with $\oap \in \oApp_T(\Tpref)$ this holds by the property \Cref{prop:ard:edges} of $\Tc$.
Also when $xyz \in \PT(T^\star)$ this holds by the property \Cref{prop:ard:edges} of $\Tc^\star$.
It remains to consider the case of $xyz = p'at$ for $\vec{ap'} = \pi(\oap)$ with $\oap \in \oApp_T(\Tpref)$.

Let $\phi$ be the unique isomorphism between $\repse'(ap')$ and $\repse(ap)$ that is identity on $\reps'(\vec{ap'}) = \reps(\vec{ap})$, and recall that $\reps'(\vec{ta}) = \reps(\vec{ta})$.
We have that $N_G(u) \cap \reps(\vec{ap}) = N_G(\phi(u)) \cap \reps(\vec{ap})$.
Because $\reps(\vec{ap})$ is a representative of $\lparts(\Tc)[\vec{ap}]$, this implies that $N_G(u) \cap \lparts(\Tc)[\vec{ap}] = N_G(\phi(u)) \cap \lparts(\Tc)[\vec{ap}]$.
Now, $\reps'(\vec{ta}) \subseteq \lparts(\Tc)[\vec{ta}] \subseteq \lparts(\Tc)[\vec{ap}]$, so we get that 
\begin{align*}
N_G(u) \cap \reps'(\vec{ta}) &= N_G(\phi(u)) \cap \reps'(\vec{ta})\\
&= N_G(\dmap(pat)(\phi(u))) \cap \reps'(\vec{ta}) && \text{(by \Cref{prop:ard:edges} on $\Tc$)}\\
&= N_G(\dmap'(p'at)(u)) \cap \reps'(\vec{ta}). && \text{(by construction of $\dmap'$)}
\end{align*}
\end{claimproof}
Hence the construction of $\Tc'$ is correct.
\end{proof}

\subsection{Edge update descriptions}
The dynamic graph $G$ in our algorithm is represented by an annotated rank decomposition that encodes $G$, and therefore we use prefix-rebuilding updates to update $G$.
In this section we give a higher-level formalism for describing edge updates, and show that it can be translated to corresponding descriptions of prefix-rebuilding updates efficiently.

Let $\Tc = (T,U,\reps,\repse,\dmap)$ be a rooted annotated rank decomposition that encodes a graph $G$.
An \emph{edge update description} is a quadruple $\prdesc = (W, \Tpref, \reps^\star, \repse^\star)$, where 
\begin{itemize}
\item $W \subseteq V(G)$,
\item $\Tpref$ is a prefix of $T$ so that if $\olp \in \leafe(T)$ and $\reps(\olp) \subseteq W$ then $l \in \Tpref$,
\item $\reps^\star$ is a function that maps each $\oxy \in \oE(T[\Tpref])$ to a non-empty set $\reps^\star(\oxy) \subseteq \lparts(\Tc)[\oxy]$,
\item $\repse^\star$ is a function that maps each $xy \in E(T[\Tpref])$ to a bipartite graph $\repse^\star(xy)$ with bipartition $(\reps^\star(\oxy), \reps^\star(\oyx))$, each $xyz \in \PT(T[\Tpref])$ to a bipartite graph $\repse^\star(xyz)$ with bipartition $(\reps^\star(\oxy), \reps^\star(\ozy))$, and each $xyz \in \PT(T)$ with $x \in \App_T(\Tpref)$ and $y,z \in \Tpref$ to a bipartite graph $\repse^\star(xyz)$ with bipartition $(\reps(\oxy), \reps^\star(\ozy))$.
\end{itemize}

We say that $\prdesc$ \emph{describes} a graph $G'$ if 
\begin{itemize}
\item $V(G') = V(G)$, 
\item for all $u,v \in V(G)$ with $u \notin W$ or $v \notin W$ we have $uv \in E(G')$ if and only if $uv \in E(G)$,
\item for all $\oxy \in \oE(T[\Tpref])$ the set $\reps^\star(\oxy)$ is a representative of $\lparts(\Tc)[\oxy]$ in $G'$,
\item for all $xy \in E(T[\Tpref])$ it holds that $\repse^\star(xy) = G'[\reps^\star(\oxy), \reps^\star(\oyx)]$,
\item for all $xyz \in \PT(T[\Tpref])$ it holds that $\repse^\star(xyz) = G'[\reps^\star(\oxy), \reps^\star(\ozy)]$, and
\item for all $xyz \in \PT(T)$ with $x \in \App_T(\Tpref)$ and $y,z \in \Tpref$, $\repse^\star(xyz) = G'[\reps(\oxy), \reps^\star(\ozy)]$.
\end{itemize}

Note that $\reps^\star(\oxy)$ is not required to be a minimal representative and the graphs in the image of $\repse^\star$ are allowed to have twins over the bipartition.

We observe that if $\prdesc$ describes some graph $G'$, then $G'$ is uniquely determined by $\prdesc$ and $G$.
In particular, by making use of the $\repse^\star(xyz) = G'[\reps^\star(\oxy), \reps^\star(\ozy)]$ graphs, the description $\prdesc$ can be turned into an annotated rank decomposition that encodes $G'[W]$.
We denote $|\prdesc| = |\Tpref|$.
We define that the \emph{width} of $\prdesc$ is the maximum of $\cutrk_{\repse^\star(xy)}(\reps^\star(\oxy))$ over all $\oxy \in \oE(T[\Tpref])$.
Note that if $\prdesc$ has width $\ell$ then it can be represented in space $\Oh[\ell]{|\Tpref|}$.

We say that a prefix-rebuilding update \emph{corresponds} to an edge update description $\prdesc$ if $\prdesc$ describes a graph $G'$, the update turns $\Tc$ into a rooted annotated rank decomposition $\Tc' = (T',U',\reps',\repse',\dmap')$ so that $\Tc'$ encodes $G'$ and $T' = T$, and the prefix of $T$ associated with the update is $\Tpref \setminus \leafs(T)$.
Note that such update can change the width of an edge $xy \in E(T)$ only if $W$ intersects both $\lparts(\Tc)[\oxy]$ and $\lparts(\Tc)[\oyx]$, in particular, only if $xy \in E(T[\Tpref])$.
It follows that the width of $\Tc'$ is at most the maximum of the widths of $\Tc$ and $\prdesc$.

We then give the algorithm to translate edge update descriptions into descriptions of prefix-rebuilding updates.

\begin{lemma}
\label{lem:transleudescprdesc}
There exists an $\ell$-prefix-rebuilding data structure with overhead $\Oh[\ell]{1}$ that maintains a rooted annotated rank decomposition $\Tc$ that encodes a dynamic graph $G$ and additionally supports the following query:
\begin{itemize}
\item $\mathsf{Translate}(\prdesc)$: Given an edge update description $\prdesc$ of width $\ell'$ that describes a graph $G'$, in time $\Oh[\ell,\ell']{|\prdesc|}$ returns a description of a corresponding prefix-rebuilding update.
\end{itemize}
\end{lemma}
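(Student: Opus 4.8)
The plan is to maintain the decomposition $\Tc = (T, U, \reps, \repse, \dmap)$ using \Cref{lem:basicprds} and, on a call $\mathsf{Translate}(\prdesc)$ with $\prdesc = (W, \Tpref, \reps^\star, \repse^\star)$, to produce a description of a prefix-rebuilding update whose associated prefix is $\Tpref_0 := \Tpref \setminus \leafs(T)$ and whose associated prefix in the new tree is the same set $\Tpref_0$ (since $T' = T$). Thus the task reduces to computing, within the prefix $\Tpref_0$, the new annotations $\NA(\Tc', \Tpref_0) = (\reps', \repse', \dmap')$ restricted to the oriented edges/edges/length-3 paths that touch $\Tpref_0$, where $\Tc'$ is the (unique) annotated rank decomposition that encodes $G'$ and coincides with $\Tc$ outside $\Tpref_0$. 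The key point that makes this local is the hypothesis that every leaf edge $\olp$ with $\reps(\olp) \subseteq W$ has $l \in \Tpref$: this forces $G'$ and $G$ to agree on all adjacencies outside the subtrees hanging from $\App_T(\Tpref)$, so indeed only annotations inside $\Tpref_0$ and on the appendix edges of $\Tpref_0$ can change.

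The core computation is to turn the given (possibly non-minimal, possibly twin-containing) data $\reps^\star, \repse^\star$ into genuine minimal representatives and twin-free representative graphs, together with the representative maps $\dmap'$. First I would process every $\oxy \in \oE(T[\Tpref])$: starting from $\repse^\star(xy) = G'[\reps^\star(\oxy), \reps^\star(\oyx)]$, compute a minimal subset $\reps'(\oxy) \subseteq \reps^\star(\oxy)$ that is still a representative of $\lparts(\Tc)[\oxy]$ in $G'$ — concretely, merge twins in the bipartite graph $\repse^\star(xy)$, which can be done in $\Oh[\ell']{1}$ time per edge since $|\reps^\star(\oxy)| \le 2^{\ell'}$ — and set $\repse'(xy) := G'[\reps'(\oxy), \reps'(\oyx)]$, the induced subgraph. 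For an edge $xy$ with $x \in \App_T(\Tpref)$ and $y \in \Tpref$, the representative $\reps'(\oxy) = \reps(\oxy)$ is unchanged, while $\reps'(\oyx)$ is the new minimal representative just computed; the graph $\repse'(xy) = G'[\reps(\oxy), \reps'(\oyx)]$ can be read off from $\repse^\star(xyz)$ for an appropriate neighbour $z$ of $y$ in $\Tpref$ (or directly, if the paper's edge update description had supplied enough; otherwise one more bounded-size computation via the $\repse^\star(xyz)$ graphs suffices). Then, for each $xyz \in \PT(T[\Tpref])$ I would build $\dmap'(xyz) \colon \reps'(\oxy) \to \reps'(\oyz)$ by using $\repse^\star(xyz) = G'[\reps^\star(\oxy), \reps^\star(\ozy)]$ restricted to the chosen minimal representatives: for each $u \in \reps'(\oxy)$, its row in this bipartite graph determines $N_{G'}(u) \cap \reps'(\ozy)$, and since $\reps'(\ozy)$ is a representative of $\lparts(\Tc)[\ozy]$ in $G'$, there is a unique matching element $r \in \reps'(\oyz)$ with the same neighbourhood into $\reps'(\ozy)$; reading the $\repse'(yz)$ graph finds it. The length-3 paths $xyz$ with $x \in \App_T(\Tpref)$, $y,z \in \Tpref$ are handled identically using $\repse^\star(xyz)$. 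All of this touches $\Oh{|\Tpref|}$ nodes/edges/paths and does $\Oh[\ell,\ell']{1}$ work at each, giving the claimed $\Oh[\ell,\ell']{|\prdesc|}$ running time; finally assemble $\prdesc_{\mathrm{out}} = (\Tpref_0, T[\Tpref_0 \cup \App_T(\Tpref_0)], \NA(\Tc',\Tpref_0))$ and return it, noting that $T^\star$ here is just a subtree of $T$ and costs $\Oh{|\Tpref|}$ to extract.

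The correctness obligation is to check that the $\Tc'$ so described really encodes $G'$: properties \Cref{prop:ard:enclp,prop:ard:encedge} are immediate since the leaf partition is untouched and $G'[C]$ is still edgeless for each part $C$ (this needs the bipartite structure / that $W$ does not create edges inside a part — which follows since a leaf with $\reps(\olp)\subseteq W$ lies in $\Tpref$ and $\prdesc$ is required to describe a legitimate graph); \Cref{prop:ard:minrep} holds by construction outside $\Tpref_0$ (unchanged, and $\Tc$ encoded $G$, and $G,G'$ agree there) and inside $\Tpref_0$ by the twin-reduction step, using that $\reps^\star(\oxy)$ was assumed to be a representative of $\lparts(\Tc)[\oxy]$ in $G'$; \Cref{prop:ard:edges} is exactly the defining property of the $\dmap'$ maps we constructed, on the paths inside $\Tpref$, on the boundary paths of the form $xyz$ with $x\in\App_T(\Tpref)$, and — for paths entirely outside $\Tpref_0$ — inherited from $\Tc$. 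The main obstacle I anticipate is the bookkeeping at the boundary between $\Tpref_0$ and the untouched part: one must verify that the old representatives $\reps(\oap)$ on appendix edges $\oap$ of $\Tpref_0$ are still valid representatives with respect to $G'$ (true, because $G'$ agrees with $G$ on the subtree below $\oap$ and across that cut the only changed adjacencies are between $W$-vertices, all of which sit on the $\Tpref$-side) and that the representative maps $\dmap'(p a t)$ pointing from the boundary into the untouched subtree are correctly recomputed — this is precisely the kind of ``unique isomorphism fixing one side'' argument handled by \Cref{lem:uniqisom}, applied exactly as in the proof of \Cref{lem:prdsrearangmain}. Everything else is a routine constant-size-per-node dynamic programming sweep over $\Tpref$.
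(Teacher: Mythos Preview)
Your proposal is essentially correct and follows the same approach as the paper: maintain $\Tc$ via \Cref{lem:basicprds}, twin-reduce each $\reps^\star(\oxy)$ for $\oxy \in \oE(T[\Tpref])$ to obtain minimal representatives, read off $\repse'$ from the restricted bipartite graphs, build $\dmap'$ on paths inside and crossing into $\Tpref$ using the supplied $\repse^\star(xyz)$, and output the prefix-rebuilding description with prefix $\Tpref \setminus \leafs(T)$.

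One place where you over-complicate matters is the boundary with $\App_T(\Tpref)$. For an edge $xy$ with $x \in \App_T(\Tpref)$ and $y \in \Tpref$, the cut $(\lparts(\Tc)[\oxy],\lparts(\Tc)[\oyx])$ has no changed edges at all (since $\lparts(\Tc)[\oxy]$ is disjoint from $W$), so \emph{both} $\reps'(\oxy)=\reps(\oxy)$ and $\reps'(\oyx)=\reps(\oyx)$ as well as $\repse'(xy)=\repse(xy)$ can simply be copied from $\Tc$; your sentence that ``$\reps'(\oyx)$ is the new minimal representative just computed'' is off, since $\oyx \notin \oE(T[\Tpref])$ was never processed. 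Likewise, the maps $\dmap'(pat)$ for $a \in \App_T(\Tpref)$ and $t$ a child of $a$ are unchanged from $\dmap(pat)$ and require no recomputation, so the appeal to \Cref{lem:uniqisom} in the style of \Cref{lem:prdsrearangmain} is unnecessary here. The paper handles this more simply by setting $\reps',\repse',\dmap'$ equal to $\reps,\repse,\dmap$ on everything outside $E(T[\Tpref])$ and $\PT(T[\Tpref])$ (up to the mixed paths with one endpoint outside $\Tpref$, which are handled via the supplied $\repse^\star(xyz)$ exactly as you describe).
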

\begin{proof}
We maintain the rooted annotated rank decomposition $\Tc = (T,V(G),\reps,\repse,\dmap)$ that encodes $G$ by making use of \Cref{lem:basicprds}.
It remains to describe how the $\mathsf{Translate}(\prdesc)$ query is implemented.

Denote $\prdesc = (W, \Tpref, \reps^\star, \repse^\star)$ and $\Tc = (T,U,\reps,\repse,\dmap)$.
We construct $\Tc' = (T,U,\reps',\repse',\dmap')$ as follows.
First, for every $\oxy \in \oE(T[\Tpref])$ we compute a set $\reps^{\star\star}(\oxy) \subseteq \reps^{\star}(\oxy)$ so that $\reps^{\star\star}(\oxy)$ is a minimal representative of $\lparts(\Tc)[\oxy]$ in $G'$.
This can be computed in $\Oh[\ell']{1}$ time by using $\repse^{\star}(xy)$.
We also compute $\repse^{\star\star}(xy) = \repse^{\star}[\reps^{\star\star}(\oxy),\reps^{\star\star}(\oyx)]$ for all $xy \in E(T[\Tpref])$.
Then we construct $\reps'$ by setting $\reps'(\oxy) = \reps^{\star\star}(\oxy)$ if $\oxy \in \oE(T[\Tpref])$ and $\reps'(\oxy) = \reps(\oxy)$ otherwise.
We also construct $\repse'$ by setting $\repse'(xy) = \repse^{\star\star}(xy)$ if $xy \in E(T[\Tpref])$ and $\repse'(xy) = \repse(xy)$ otherwise.

Because all edges $xy$ so that both $\lparts(\Tc)[\oxy]$ and $\lparts(\Tc)[\oyx]$ intersect $W$ are in $E(T[\Tpref])$, $\Tc'$ satisfies \Cref{prop:ard:minrep}.
It remains to construct $\dmap'$.

When both $xy$ and $yz$ are not in $E(T[\Tpref])$ we let $\dmap'(xyz) = \dmap(xyz)$.
This satisfies \Cref{prop:ard:edges} because $\reps'(\oxy) = \reps(\oxy)$, $\reps'(\oyz) = \reps(\oyz)$, $\repse'(xy) = \repse(xy)$, and $\repse'(yz) = \repse(yz)$.
Let $xyz \in \PT(T[\Tpref])$ and let $u \in \reps'(\oxy)$.
By using $\repse^\star(xyz)$ we can compute $N_{G'}(u) \cap \reps^{\star \star}(\ozy)$, and then find $v \in \reps'(\oyz) = \reps^{\star \star}(\oyz)$ so that $N_{G'}(u) \cap \reps^{\star \star}(\ozy) = N_{G'}(v) \cap \reps^{\star \star}(\ozy)$ and set $\dmap(xyz)(u) = v$.
This clearly satisfies \Cref{prop:ard:edges}.
The same idea works for computing $\dmap(xyz)$ when $x$ or $z$ is not in $\Tpref$.



We observe that this construction can be implemented with a prefix-rebuilding update so that $\Tpref \setminus \leafs(T)$ is the prefix of $T$ associated with the update.
Moreover, the description of the prefix-rebuilding update can be computed in $\Oh[\ell,\ell']{|\Tpref|}$ time.
\end{proof}

\section{Refinement}
\label{sec:refi}
In this section we introduce the refinement operation that will be used for improving the rank decomposition, and give the height reduction scheme by using the refinement operation.

\subsection{Closures}
The main graph-theoretic ingredient of the refinement operation is the concept of \emph{closures}.

Let $\Tc = (T,\lmap)$ be a rooted rank decomposition of a graph $G$, $\Tpref$ a leafless prefix of $T$, and $k$ a positive integer.
A \emph{$k$-closure} of $\Tpref$ is a partition $\prt$ of $V(G)$ so that
\begin{enumerate}
\item for each $C \in \prt$ there exists $a \in \App_T(\Tpref)$ so that $C \subseteq \lparts(\Tc)[a]$, and
\item the partitioned graph $(G[\prt], \prt)$ has rankwidth at most $2k$.
\end{enumerate}

We will show that if $G$ has rankwidth at most $k$, then for any $\Tpref$ there exists a $k$-closure with specific properties.
This will be then used in the refinement operation.

\paragraph{Small closures.}
We say that a $k$-closure $\prt$ is $c$-small for some integer $c$ if for every $a \in \App_T(\Tpref)$ there exist at most $c$ parts $C \in \prt$ with $C \subseteq \lparts(\Tc)[a]$.
In this subsection we show that if $G$ has rankwidth $k$ and $\Tc$ has width $\ell$, then there exists a $f(\ell)$-small $k$-closure of any prefix $\Tpref$ of $T$.
For this we will first prove the Dealternation Lemma for rankwidth, which will be an analogue of a similar lemma for treewidth given in~\cite{DBLP:journals/lmcs/BojanczykP22}.
We postpone the proof of this lemma to \Cref{sec:dealternation}, but let us state it here.

We say that a set $F \subseteq V(G)$ is a \emph{tree factor} of $\Tc$ if $F = \lparts(\Tc)[t]$ for some node $t \in V(T)$.
Similarly, we say that $F \subseteq V(G)$ is a \emph{context factor} of $\Tc$ if it is not a tree factor but it can be written as $F = F_1 \setminus F_2$, where $F_1$ and $F_2$ are tree factors of $\Tc$.
A set $F \subseteq V(G)$ is a \emph{factor} of $\Tc$ if it is either a tree factor or a context factor of $\Tc$.

\begin{restatable}{lemma}{dealternationlemma}
\label{lem:dealt}
There exists a function $f(\ell)$ so that if $G$ is a graph of rankwidth $k$ and $\Tc$ a rooted rank decomposition of $G$ of width $\ell$, then there exists a rooted rank decomposition $\Tc'$ of $G$ of width $k$ so that for every node $t \in V(T)$, the set $\lparts(\Tc)[t]$ can be partitioned into a disjoint union of $f(\ell)$ factors of $\Tc'$.
\end{restatable}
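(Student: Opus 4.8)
The plan is to prove the statement by passing to the equivalent formulation for subspace arrangements, which removes the factor-of-two noise when comparing ranks of graph cuts against dimensions of sums of subspaces, and then to follow the strategy outlined in \Cref{ssec:ov:dealt}. Concretely: given the unoptimal decomposition $\Tc^b = (T^b,\lambda^b)$ of $G$ of width $\ell$, we convert $G$ to its equivalent subspace arrangement $\Vc$ and view $\Tc^b$ as a rank decomposition of $\Vc$ of width $2\ell$. It then suffices to produce an optimum-width decomposition $\Tc$ of $\Vc$ such that for each $x \in V(T^b)$ the set $\Vc_x := \lparts(\Tc^b)[x]$ decomposes into at most $f(2\ell)$ factors of $\Tc$; translating back via the graph--arrangement correspondence yields the lemma for $G$. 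So from now on the target is \Cref{lem:overview-subdealt}.

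The construction proceeds in three stages. \textbf{Stage 1 (totally pure decomposition).} Invoke \Cref{thm:overview-totally-pure} (Proposition 4.6 of Jeong--Kim--Oum) to obtain an optimum-width decomposition $\Tc$ of $\Vc$ that is totally pure with respect to $\Tc^b$. \textbf{Stage 2 (small mixed skeleton).} Prove \Cref{lem:overview-small-mixed-skeleton}: for each $x \in V(T^b)$ the $x$-mixed skeleton $T^{\mix}$ of $\Tc$ — the tree on the $x$-leaf points and $x$-branch points — has $O_\ell(1)$ nodes. The branch points of $T^{\mix}$ form a binary forest whose leaves are among the $x$-leaf points, so it is enough to bound the number of $x$-leaf points; this bound should fall out of total purity by a counting/exchange argument on the boundary spaces $B_v = \sumof{\lparts(\Tc)[v]} \cap \sumof{\Vc \setminus \lparts(\Tc)[v]}$ of width $\le 2\ell$ along the skeleton, essentially because too many leaf points would create one of the forbidden ``complicated'' local patterns ruled out by purity. \textbf{Stage 3 (dealternation of vertical paths).} Between consecutive skeleton nodes the decomposition $\Tc$ consists of a vertical path $v_1 \cdots v_{p+1}$ whose side-children $v'_i$ are each $x$-full or $x$-empty; an alternating pattern of full/empty children prevents partitioning $\Vc_x$ into boundedly many factors. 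We reorder each such path so that the $x$-full children are bunched into at most $c_\ell$ contiguous blocks, using the typical-sequence toolchain of Bojańczyk--Pilipczuk adapted to subspace arrangements — this is \Cref{lem:over-local-dealternation} (the Local Dealternation Lemma). The crux is that this reordering must (a) not increase the width, (b) preserve already-built factorizations of $\Vc_y$ for every $y$ that is not an ancestor of $x$, and (c) not blow up any other $y$-mixed skeleton. After one application we can cut $\Vc_x$ into $O(|V(T^{\mix})| \cdot c_\ell) = f(t,\ell)$ factors. We then apply \Cref{lem:over-local-dealternation} once for every $x \in V(T^b)$, processing the nodes $x$ in order of \emph{non-increasing} depth in $T^b$, so that earlier local dealternations (done for deeper nodes) are preserved by later ones (done for shallower, ancestor nodes) thanks to condition (b), and skeleton sizes never grow thanks to condition (c). Since every $\Vc_x$ ends up partitioned into at most $f(\ell)$ factors of the final decomposition, and that decomposition has optimum width throughout, we are done.

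The main obstacle is Stage 3, and within it the width-preservation requirement (a) of the reordering: permuting the blocks of $x$-full and $x$-empty side-trees along a vertical path changes, for every edge of the path, which subspaces lie on each side of the cut, hence changes the boundary spaces; keeping every such dimension $\le$ the original width requires the typical-sequences machinery applied to the \emph{sequence} of boundary subspaces along the path, and the careful bookkeeping to simultaneously guarantee (b) and (c) for all other nodes $y$ is the technically heaviest part. A secondary difficulty is making precise the notion of ``totally pure'' and extracting from it the clean bound of \Cref{lem:overview-small-mixed-skeleton}; this I would import essentially verbatim from \cite{DBLP:journals/siamdm/JeongKO21}, only translating their statements about totally pure decompositions into the present notation. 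All of these steps are carried out in full in \Cref{sec:dealternation}; here I have only sketched the skeleton of the argument.
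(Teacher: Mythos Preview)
Your proposal is correct and follows essentially the same approach as the paper: reduce to subspace arrangements, start from a totally pure optimum-width decomposition (Jeong--Kim--Oum), bound the $x$-mixed skeleton, and then iteratively apply a Local Dealternation Lemma over the nodes of $T^b$ processed from leaves toward the root, using the typical-sequences machinery to reorder vertical paths without increasing width. The only minor discrepancy is your sketch of Stage~2: the paper bounds the \emph{height} of the mixed skeleton (to $2\ell+2$) via a chain argument on the subspaces $\sumof{\lparts_x(\Tc)[\cdot]}\cap B_x$ along a skeleton path, rather than directly counting $x$-leaf points, but this is a detail you explicitly left to be imported.
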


Next we use the Dealternation Lemma to prove the existence of $f(\ell)$-small $k$-closures.

\begin{lemma}
\label{lem:smallclosures}
There exists a function $f(\ell)$, so that if $G$ is a graph of rankwidth $k$, $\Tc = (T,\lmap)$ is a rooted rank decomposition of $G$ of width $\ell$, and $\Tpref$ a leafless prefix of $T$, then there exists a $f(\ell)$-small $k$-closure $\prt$ of $\Tpref$.
\end{lemma}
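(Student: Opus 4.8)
The plan is to derive the lemma as a direct corollary of the Dealternation Lemma (\Cref{lem:dealt}), essentially following the outline already sketched in the overview (the paragraph ``Dealternation Lemma to \cref{lem:ov:smallclosure}''). Set $c := f(\ell)$, where $f$ is the function from \Cref{lem:dealt}. Apply \Cref{lem:dealt} to $G$ and $\Tc$ to obtain a rooted rank decomposition $\Tc' = (T',\lmap')$ of $G$ of width exactly $k$ with the property that for every node $t \in V(T)$, the set $\lparts(\Tc)[t]$ decomposes into a disjoint union of at most $c$ factors of $\Tc'$. Now, for each appendix $a \in \App_T(\Tpref)$, the node $a$ is a node of $T$, so $\lparts(\Tc)[a]$ partitions into at most $c$ factors of $\Tc'$; call this partition $\prt_a$. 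Since $\Tpref$ is leafless and a prefix, the sets $\{\lparts(\Tc)[a] : a \in \App_T(\Tpref)\}$ form a partition of $V(G)$ (every vertex lies below exactly one appendix edge). Hence $\prt := \bigcup_{a \in \App_T(\Tpref)} \prt_a$ is a partition of $V(G)$. Each part $C \in \prt$ satisfies $C \subseteq \lparts(\Tc)[a]$ for the appendix $a$ it came from, giving condition~(1) of a $k$-closure, and $|\prt_a| \le c$ for every $a$ gives $c$-smallness. It remains to verify condition~(2): $(G[\prt],\prt)$ has rankwidth at most $2k$.

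For condition~(2), the plan is to build a rank decomposition $\Tc'' = (T'',\lmap'')$ of $(G[\prt],\prt)$ from $\Tc'$ of width at most $2k$, as suggested in the overview: take $T'' := T'$ (or rather the cubic tree underlying it), pick for every $C \in \prt$ an arbitrary representative vertex $v_C \in C$, set $\lmap''(C) := \lmap'(v_C)$, then delete all leaves of $T''$ carrying no part of $\prt$ and contract resulting degree-$2$ nodes. I would then argue that for every edge $e''$ of $T''$, coming from an edge $e'$ of $T'$ with sides $(L,\bar L)$ on $V(G)$, the corresponding bipartition of $\boldcup\prt$ is $(\{C \in \prt : v_C \in L\}, \{C \in \prt : v_C \in \bar L\})$. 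One checks that $\cutrk_{G[\prt]}$ of this cut is bounded in terms of $\cutrk_G(L)$: a part $C$ straddling $L$ (i.e. meeting both $L$ and $\bar L$) contributes but each such $C$ is a factor of $\Tc'$, so the number of parts straddling any fixed cut of $\Tc'$ is $O(1)$ (in fact at most $2$, since a tree factor either lies in $L$ or in $\bar L$ or straddles, and straddling tree factors along one edge are nested; a context factor $F_1\setminus F_2$ similarly straddles a bounded number of edges). The cleanest route: the cut-rank of the cut $(\{C : v_C \in L\}, \text{rest})$ in $G[\prt]$ is at most $\cutrk_G(L)$ plus (number of straddling parts), and straddling parts can be absorbed because the adjacencies they induce are already ``seen'' in the rank-$k$ matrix of $\Tc'$. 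Making this bound precise — showing the rank only doubles at worst — is the one genuinely fiddly computation; it rests on the fact that deleting the straddling-part rows/columns from the $G[\prt]$-adjacency matrix leaves a submatrix of the rank-$k$ matrix of $\Tc'$, so the full matrix has rank at most $k$ plus the (bounded) number of deleted rows — and with a little more care, at most $2k$. (Alternatively, one can cite that $\prt$ being a ``coarsening into factors'' of the width-$k$ decomposition $\Tc'$ keeps rankwidth at most $2k$, in the same spirit as \Cref{lem:ov:closlink}.)

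The main obstacle is precisely this last bound: translating ``each part is a factor of the optimum-width decomposition $\Tc'$'' into the quantitative statement ``rankwidth of $(G[\prt],\prt)$ is at most $2k$''. The worry is whether the number of straddling parts at a single cut could be large — but it is not, because a tree factor is a subset of $L$ or of $\bar L$ unless it is one of the at most two nested tree factors containing the cut edge, and a context factor $F_1 \setminus F_2$ straddles the cut only in the narrow band between $F_2$ and $F_1$. So along any edge of $T'$ only $O(1)$ factors straddle, and since each $C\in\prt$ is a single factor, only $O(1)$ parts of $\prt$ straddle. Once that is pinned down, submodularity/rank arithmetic (\Cref{lem:symsubmod}) closes the gap, and the $c$-smallness and closure condition~(1) are immediate from the construction. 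I would present the factor-straddling count as a short claim, then the rank bound as a second short claim, and conclude.
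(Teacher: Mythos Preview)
Your plan is correct and follows the route sketched in the paper's overview (Section~2.2), but it differs from the paper's actual proof of this lemma. The paper does not argue about ``straddling parts'' at each cut of $\Tc'$. Instead, it transforms $\Tc'$ by a sequence of local surgeries: for each context factor $C = \lparts(\Tc')[x]\setminus\lparts(\Tc')[y]$ appearing in $\prt$, it detaches the subtree rooted at $y$ and reattaches it as a sibling of $x$, which turns $C$ into a tree factor. An invariant on ``processed'' versus ``unprocessed'' edges (processed edges have width $\le 2k$, unprocessed $\le k$) is maintained using submodularity, since each surgery changes only the widths of edges on the $x$--$y$ path, and for those the new set is $\lparts(\Tc')[\oab]\setminus\lparts(\Tc')[y]$ with $\cutrk \le \cutrk(\lparts(\Tc')[\oab])+\cutrk(\lparts(\Tc')[y]) \le 2k$. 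Once all parts are tree factors, truncating to the factor roots gives the width-$\le 2k$ decomposition of $(G[\prt],\prt)$ directly.

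Your representative-and-contract approach also works, but the hand-wavy step (``rank $\le k$ plus number of straddling parts, and with a little more care at most $2k$'') should be replaced by the following cleaner observation. First, at any edge of $\Tc'$ with sides $(L,\bar L)=(\lparts(\Tc')[x],\cdot)$, at most \emph{one} factor in $\prt$ straddles: any two straddling factors both meet $\lparts(\Tc')[p]\setminus\lparts(\Tc')[x]$ for $p$ the parent of $x$, hence intersect. Second, depending on the type of the straddling factor $C^\star$ and on whether $v_{C^\star}\in L$ or $v_{C^\star}\in\bar L$, the set $\bigcup\{C:v_C\in L\}$ is always one of $\emptyset$, some $\lparts(\Tc')[t]$, or some $\lparts(\Tc')[t_1]\setminus\lparts(\Tc')[t_2]$ --- in other words it is itself a factor of $\Tc'$ (or empty). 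Hence its $\cutrk_G$ is at most $2k$ by submodularity, and the bound follows without any row-counting heuristic. Both approaches ultimately rest on the same submodularity estimate; the paper's surgery version trades a structural case analysis for an inductive invariant.
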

\begin{proof}
By applying \Cref{lem:dealt}, let $\Tc' = (T',\lmap')$ be a rooted rank decomposition of $G$ of width $k$ so that for every node $t \in V(T)$ the set $\lparts(\Tc)[t]$ can be partitioned into a disjoint union of $f(\ell)$ factors of $\Tc'$.
Then for each $a \in \App_T(\Tpref)$ let $\prt_a$ be the partition of $\lparts(\Tc)[a]$ into $f(\ell)$ parts that are factors of $\Tc'$, and let $\prt = \bigcup_{a \in \App_T(\Tpref)} \prt_a$.
It remains to show that $(G[\prt], \prt)$ has rankwidth at most $2k$.

Observe that if all factors in $\prt$ would be tree factors, then we would directly get that $(G[\prt], \prt)$ has rankwidth at most $k$ by using the same rank decomposition truncated to the roots of the factors.
Therefore, our goal is to change $\Tc'$ so that all factors in $\prt$ become tree factors and the width increases to at most $2k$.

Let us say that an edge $ab \in E(T')$, where $b$ is the parent of $a$ in $T'$, is \emph{processed} if either of the following conditions holds:
\begin{itemize}
	\item there exists a tree factor $F \in \prt$ that intersects both $\lparts(\Tc')[\oab]$ and $\lparts(\Tc')[\oba]$; or
	\item $\lparts(\Tc)[a]$ is a~tree factor, and there is no context factor in $T'$ of the form $\lparts(\Tc)[g] \setminus \lparts(\Tc)[a]$ for a~strict ancestor $g$ of $b$.
\end{itemize}
Otherwise, $ab$ is \emph{unprocessed}.
We will make changes to $\Tc'$ while maintaining an invariant that every processed edge has width at most $2k$ and every unprocessed edge has width at most $k$.
Suppose there is a node $x \in V(T')$ and a descendant $y$ of $x$ so that $C = \lparts(\Tc')[x] \setminus \lparts(\Tc')[y]$ is a context factor $C \in \prt$.
Note that $x$ is not $y$ nor a child of $y$ because otherwise $C$ would be a tree factor.
Let $p_x$ be the parent of $x$ (or $p_x = x$ if $x$ is the root of $T'$) and $p_y$ be the parent of $y$ in $T'$.
Note that all edges on the simple path between $p_x$ and $y$ are unprocessed.

We will change $\Tc'$ into a new rooted rank decomposition $\Tc''$ so that the number of context factors decreases but the invariant is maintained.
In particular, $\Tc''$ is constructed by cutting off the subtree rooted at $y$ by  cutting the edge between $y$ and $p_y$, and putting it back so that $x$ and $y$ have the same parent in the resulting decomposition.
For this, the edge $xp_x$ will be subdivided, or if $x$ is the root a new root will be created so that $x$ and $y$ are its children.
Let $p'$ be the new common parent of $x$ and $y$.
Also, the degree-2 node $p_y$ created by cutting the edge $yp_y$ is contracted (\Cref{fig:small-closures-replacement}).

\begin{figure}
	\centering
	\begin{subfigure}{0.3\textwidth}
		\centering
		\includegraphics[height=4.5cm]{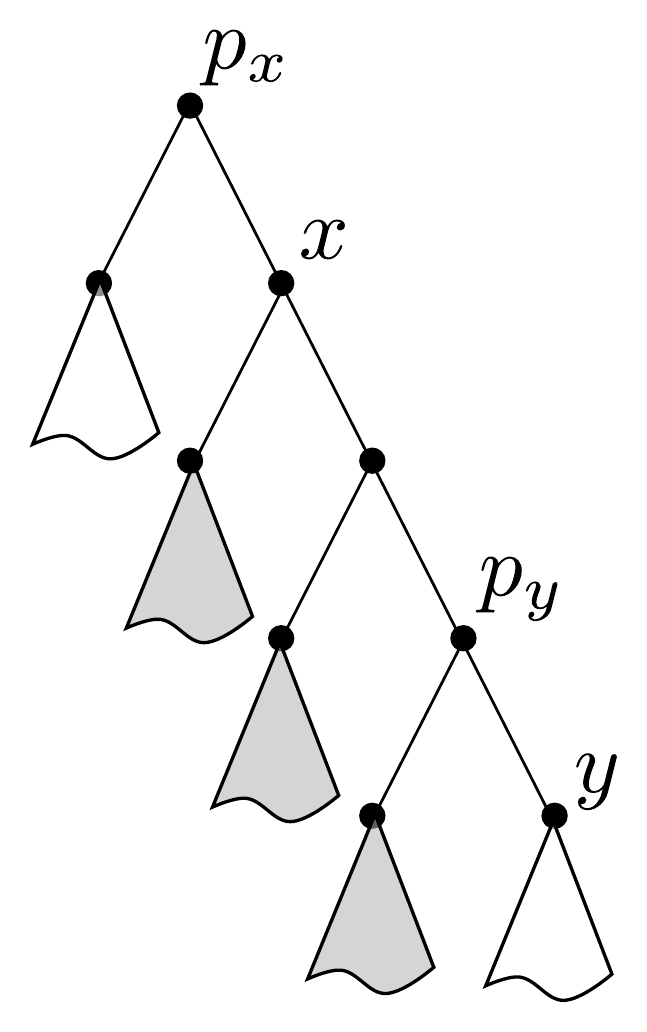}
	\end{subfigure}%
	\begin{subfigure}{0.1\textwidth}
		\centering
		$\to$
	\end{subfigure}
	\begin{subfigure}{0.3\textwidth}
		\centering
		\includegraphics[height=4.5cm]{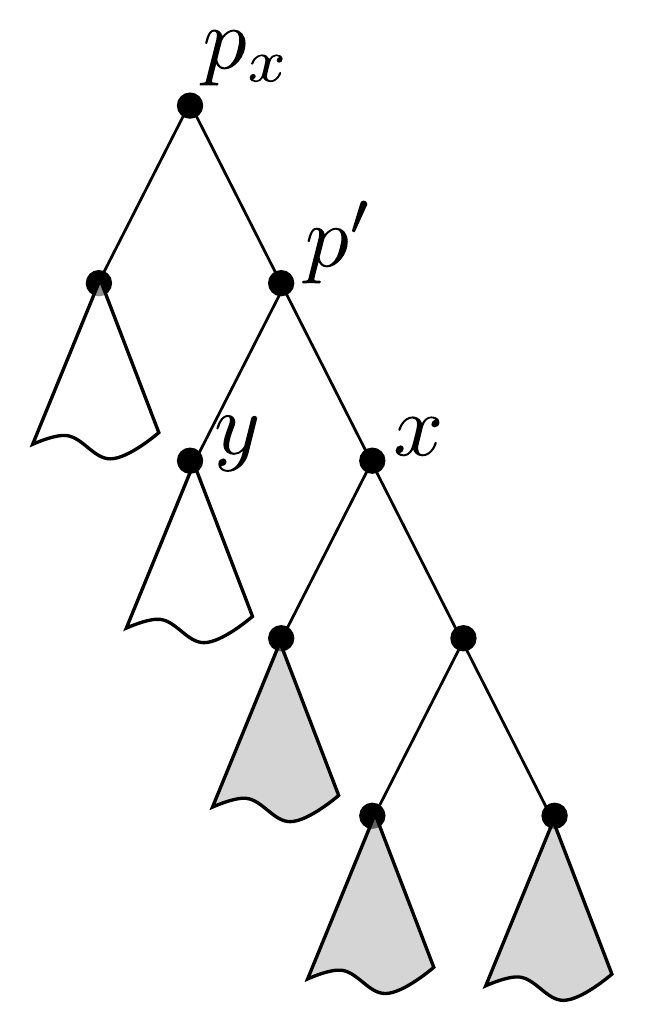}
	\end{subfigure}
	\caption{A surgery on the rank decomposition for a~context factor $C = \lparts(\Tc')[x] \setminus \lparts(\Tc')[y]$. The subtrees comprising $C$ are marked gray.}
	\label{fig:small-closures-replacement}
\end{figure}

We observe that $C$ becomes a tree factor in $\Tc''$, but no other factors change.
This change affects only the widths of edges $ab \in E(T')$ that were on the path from $p_y$ to $p_x$.
Such edges $ab$ were unprocessed, but the corresponding edges $a'b'$ in $T''$ become processed as $C$ becomes a tree factor.
Suppose $b$ is the parent of $a$.
We have that $\lparts(\Tc')[y] \subseteq \lparts(\Tc')[\oab]$, $\cutrk(\lparts(\Tc')[\oab]) \le k$, and $\cutrk(\lparts(\Tc')[y]) \le k$.
The width of the new edge $a'b'$ corresponding to $ab$ will be $\cutrk(\lparts(\Tc'')[\vec{a'b'}]) = \cutrk(\lparts(\Tc')[\oab] \setminus \lparts(\Tc')[y])$, which by symmetry and submodularity of the $\cutrk$ function is at most $\cutrk(\lparts(\Tc')[\oab]) + \cutrk(\lparts(\Tc')[y]) \le 2k$.

Therefore, the process decreases the number of context factors and maintains the invariant, and in the end we obtain a rooted rank decomposition of $G$ of width at most $2k$ so that all parts of $\prt$ are tree factors in the decomposition.
Such decomposition can be easily turned into a rank decomposition of $(G[\prt], \prt)$ of width at most $2k$.
\end{proof}

\paragraph{Closure linkedness.}
Let $A \subseteq B \subseteq V(G)$ be two sets of vertices.
We say that $A$ is \emph{linked} into $B$ if for all sets $S$ with $A \subseteq S \subseteq B$ it holds that $\cutrk(A) \le \cutrk(S)$.
We say that a set $C \subseteq V(G)$ \emph{cuts} a node $t \in V(T)$ if both $\lparts(\Tc)[t] \cap C$ and $\lparts(\Tc)[t] \setminus C$ are non-empty.
Then we say that $k$-closure $\prt$ of $\Tpref$ is \emph{linked} if for every $C \in \prt$ with $C \subseteq \lparts(\Tc)[a]$ for $a \in \App_T(\Tpref)$ it holds that
\begin{enumerate}
\item $C$ is linked into $\lparts(\Tc)[a]$, and
\item if $C$ cuts a descendant $t$ of $a$, then $\cutrk(C \cup \lparts(\Tc)[t]) > \cutrk(C)$.
\end{enumerate}

We say that a $k$-closure $\prt$ \emph{cuts} a node $t \in V(T)$ if there is $C \in \prt$ so that $C$ cuts $t$, or equivalently, if more than one part in $\prt$ intersects $\lparts(\Tc)[t]$.
Note that any $k$-closure of $\Tpref$ cuts all nodes in $\Tpref$.

In our algorithm we will use closures that are linked.
We will need to guarantee the existence of such closures and to give a method for finding them.
For this, the following definition will be useful.
We say that a $c$-small $k$-closure $\prt$ of $\Tpref$ is \emph{minimal} if among all $c$-small $k$-closures it
\begin{itemize}
\item primarily minimizes $\sum_{C \in \prt} \cutrk(C)$, and
\item secondarily minimizes the number of nodes of $T$ that it cuts.
\end{itemize}

Then, the following lemma guarantees the existence of linked $c$-small $k$-closures and provides a method for finding them.

\begin{lemma}
\label{lem:closlink}
Any minimal $c$-small $k$-closure of $\Tpref$ is linked.
\end{lemma}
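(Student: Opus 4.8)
The plan is to prove the contrapositive-style argument: assuming a $c$-small $k$-closure $\prt$ of $\Tpref$ is \emph{not} linked, I would construct from it another $c$-small $k$-closure $\prt'$ that is strictly ``smaller'' in the lexicographic order $\left(\sum_{C \in \prt}\cutrk(C),\ |\cut(\prt)|\right)$, thereby contradicting minimality. There are two ways in which linkedness can fail, corresponding to the two conditions in the definition, and each will be handled by a local surgery on one part of $\prt$ that uses the submodularity and symmetry of $\cutrk$ (\Cref{lem:symsubmod}).

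First I would handle the failure of condition~(1): some $C \in \prt$ with $C \subseteq \lparts(\Tc)[a]$ is not linked into $\lparts(\Tc)[a]$, so there is a set $S$ with $C \subseteq S \subseteq \lparts(\Tc)[a]$ and $\cutrk(S) < \cutrk(C)$. Among all such sets, pick one, say $S$, minimizing $\cutrk(S)$ (and, if helpful, then maximizing $|S|$, in the spirit of taking the ``closest'' minimum to $C$). I would then replace $C$ by $S$ and simultaneously shrink the other parts contained in $\lparts(\Tc)[a]$, i.e., intersect each such part $C' \in \prt$, $C' \subseteq \lparts(\Tc)[a]$, with the complement of $S$; together with $S$ and all parts outside $\lparts(\Tc)[a]$ this yields a partition $\prt'$ of $V(G)$. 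Each part of $\prt'$ is still contained in some $\lparts(\Tc)[a']$, and the number of parts inside $\lparts(\Tc)[a]$ did not increase (empty parts are dropped), so $\prt'$ is $c$-small. The key point is the drop in potential: for the part $S$ we gained $\cutrk(S) < \cutrk(C)$, and for the shrunk parts $C'' = C' \setminus S$ I would argue $\cutrk(C'') \le \cutrk(C')$ using submodularity — $\cutrk(C' \setminus S) + \cutrk(S) \le \cutrk(C' \cup S) + \cutrk(C' \cap S)$ does not immediately give this, so more carefully I would use that $C' \cap S = \emptyset$ and apply submodularity to $C'$ and $\compl{S}$: $\cutrk(C' \cap \compl{S}) + \cutrk(C' \cup \compl{S}) \le \cutrk(C') + \cutrk(\compl{S}) = \cutrk(C') + \cutrk(S)$, and since $C' \cup \compl{S} = \compl{S \setminus C'}$ and $S \setminus C' = S$ (as $C' \cap S=\emptyset$ only if $C' \subseteq \compl S$, which need not hold) — this is exactly the delicate spot. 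The cleaner route, which I would actually take, is to argue that shrinking parts never increases $\sum \cutrk$ by a direct submodularity/monotonicity-of-minimal-minimum argument, or to observe that the rankwidth bound of $(G[\prt'],\prt')$ follows because $\prt'$ coarsens to a closure of width $\le 2k$ — but since the partitioned graph need only have rankwidth $\le 2k$, I would instead verify directly that from a width-$\le 2k$ rank decomposition of $(G[\prt],\prt)$ one obtains such a decomposition for $(G[\prt'],\prt')$ by the same kind of cut-and-paste surgery used in the proof of \Cref{lem:smallclosures} (moving the subtree realizing $S$ together, bounding new edge widths by $\cutrk(S)+(\text{something})\le 2k$). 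This establishes that $\prt'$ is a $c$-small $k$-closure with strictly smaller first coordinate, contradicting minimality.

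Next I would handle the failure of condition~(2): every $C \in \prt$ is linked into its $\lparts(\Tc)[a]$, but some $C \in \prt$ cuts a descendant $t$ of $a$ with $\cutrk(C \cup \lparts(\Tc)[t]) \le \cutrk(C)$. Here the first coordinate $\sum_{C}\cutrk(C)$ must stay unchanged (else we contradict minimality via coordinate one), so I would replace $C$ by $C \cup \lparts(\Tc)[t]$ and correspondingly remove $\lparts(\Tc)[t]$ from whichever parts of $\prt$ it met. Submodularity gives $\cutrk(C \cup \lparts(\Tc)[t]) + \cutrk(C \cap \lparts(\Tc)[t]) \le \cutrk(C) + \cutrk(\lparts(\Tc)[t])$, and combined with linkedness of $C \cap \lparts(\Tc)[t]$ into $\lparts(\Tc)[t]$ — wait, more simply: linkedness of $C$ into $\lparts(\Tc)[a]$ plus the hypothesis $\cutrk(C\cup\lparts(\Tc)[t])\le\cutrk(C)$ forces equality $\cutrk(C\cup\lparts(\Tc)[t])=\cutrk(C)$, and then submodularity pins down $\cutrk$ of the modified parts to not exceed their old values, so the first coordinate is preserved. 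But now the new part $C\cup\lparts(\Tc)[t]$ no longer cuts $t$, and I would argue no previously-uncut node becomes cut, so $|\cut(\prt')| < |\cut(\prt)|$, again contradicting minimality (this time via the secondary criterion). The rankwidth-$\le 2k$ bound for the new partitioned graph is again verified by the subtree-relocation surgery. The main obstacle, as flagged above, is the bookkeeping in the surgeries that shows (a) $c$-smallness is preserved, (b) every modified part still sits inside some appendix branch, and (c) the width stays $\le 2k$ — all of which rely on repeated, careful applications of submodularity and symmetry of $\cutrk$, mirroring the argument already carried out in the proof of \Cref{lem:smallclosures}.
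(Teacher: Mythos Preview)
Your overall plan is right, and it matches the paper's approach: assume $\prt$ is not linked, pick a violating $C$ with $C\subseteq\lparts(\Tc)[a]$, enlarge $C$ to a set $S$ with $C\subseteq S\subseteq\lparts(\Tc)[a]$, and replace $\prt$ by $\prt'=\{S\}\cup\{D\setminus S\mid D\in\prt,\ D\not\subseteq S\}$. You even make the correct choice of $S$ minimizing $\cutrk(S)$. What you are missing is precisely how to \emph{use} that minimality at the spot you flagged as ``delicate''.

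The point is this. For any $D\in\prt\setminus\{C\}$ with $D\subseteq\lparts(\Tc)[a]$ (parts outside $\lparts(\Tc)[a]$ are disjoint from $S$ and need no argument), the set $\compl{D}\cap S$ satisfies $C\subseteq\compl{D}\cap S\subseteq\lparts(\Tc)[a]$, because $C$ and $D$ are disjoint and $C\subseteq S$. Hence $\compl{D}\cap S$ is itself a candidate for $S$, so by your minimality choice $\cutrk(S)\le\cutrk(\compl{D}\cap S)$. Now submodularity and symmetry give
\[
\cutrk(D\setminus S)=\cutrk(\compl{D}\cup S)\le\cutrk(\compl{D})+\cutrk(S)-\cutrk(\compl{D}\cap S)\le\cutrk(\compl{D})=\cutrk(D).
\]
This is exactly the inequality you were groping for; your attempted applications of submodularity to $(C',\compl{S})$ fail because you did not bring in the minimality of $S$, and your fallback ``shrinking parts never increases $\sum\cutrk$'' is simply false in general.

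The same trick handles the rankwidth bound. In a width-$\le 2k$ decomposition $(T^*,\lmap^*)$ of $(G[\prt],\prt)$, relabel the $C$-leaf by $S$ and each $D$-leaf by $D\setminus S$. For an edge $\oxy$ with $C\subseteq\lparts(\Tc^*)[\oxy]$, the new cut is $\lparts(\Tc^*)[\oxy]\cup S$; since $C\subseteq\lparts(\Tc^*)[\oxy]\cap S\subseteq\lparts(\Tc)[a]$, minimality again gives $\cutrk(S)\le\cutrk(\lparts(\Tc^*)[\oxy]\cap S)$, and submodularity yields $\cutrk(\lparts(\Tc^*)[\oxy]\cup S)\le\cutrk(\lparts(\Tc^*)[\oxy])$. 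No surgery in the style of \Cref{lem:smallclosures} is needed (and that surgery does not obviously apply here anyway).

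Finally, the paper unifies your two cases: it takes $S$ minimizing $\cutrk(S)$ over all witnesses of either failure, so the same construction and the same two claims dispatch both at once. Your separate treatment of Case~2 works too, once you plug in the minimality argument above (with $S=C\cup\lparts(\Tc)[t]$, which is a minimizer since condition~(1) holds), and then observe that $\prt'$ no longer cuts $t$ while no previously uncut node becomes cut.
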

\begin{proof}
Suppose $\prt$ is a minimal $c$-small $k$-closure of $\Tpref$ that is not linked.
Let $C \in \prt$ be a part that violates the linkedness condition, in particular, with $C \subseteq \lparts(\Tc)[a]$ for some $a \in \App_T(\Tpref)$ so that there is a set $S$ with $C \subseteq S \subseteq \lparts(\Tc)[a]$ and either
\begin{enumerate}
\item\label{lem:closlink:case1} $\cutrk(S) < \cutrk(C)$, or
\item\label{lem:closlink:case2} $\cutrk(S) = \cutrk(C)$ and $S = C \cup \lparts(\Tc)[t]$ for some descendant $t$ of $a$ so that $C$ cuts $t$.
\end{enumerate}
Let us moreover fix such set $S$ that minimizes $\cutrk(S)$.
We will use the set $S$ to construct a new $c$-small $k$-closure $\prt'$ that will contradict the minimality of $\prt$.

We let $\prt' = \{S\} \cup \{D \setminus S \mid D \in \prt \text{ and } D \not\subseteq S\}$.
Let us first show that if $\prt'$ is a $k$-closure then it contradicts the minimality of $\prt$, and then show that it indeed is a $k$-closure.
First, the facts that $C \subseteq S$ and this construction changes only parts that are subsets of $\lparts(\Tc)[a]$ implies that $\prt'$ is $c$-small.
In order to bound $\sum_{C' \in \prt'} \cutrk(C')$ we show the following.
\begin{claim}
\label{lem:closlink:claimrk}
For all $D \in \prt \setminus \{C\}$ it holds that $\cutrk(D \setminus S) \le \cutrk(D)$.
\end{claim}
\begin{claimproof}
Note that if $D$ is a not subset of $\lparts(\Tc)[a]$, then $D$ is disjoint from $S$ and this holds trivially, so we can assume that $D \subseteq \lparts(\Tc)[a]$.
Recall that for a set $X$ we denote $\compl{X} = V(G) \setminus X$.
First we observe that
\begin{equation}
\label{lem:closlink:claimrk:theeq}
\cutrk(S) \le \cutrk(\compl{D} \cap S)
\end{equation}
because $C \subseteq \compl{D} \cap S \subseteq \lparts(\Tc)[a]$, but $S$ minimizes $\cutrk(S)$ among such sets.
Then, 
\begin{align*}
\cutrk(D \setminus S) &= \cutrk(D \cap \compl{S}) = \cutrk(\compl{D} \cup S) && \text{(symmetry of $\cutrk$)}\\
&\le \cutrk(\compl{D}) + \cutrk(S) - \cutrk(\compl{D} \cap S) && \text{(submodularity of $\cutrk$)}\\
&\le \cutrk(\compl{D}) = \cutrk(D). && \text{(\Cref{lem:closlink:claimrk:theeq} and symmetry)}
\end{align*}
\end{claimproof}

\Cref{lem:closlink:claimrk} and the fact that $\cutrk(S) \le \cutrk(C)$ imply that $\sum_{C' \in \prt'} \cutrk(C') \le \sum_{C \in \prt} \cutrk(C)$.
Moreover, if $\cutrk(S) < \cutrk(C)$ then in fact $\sum_{C' \in \prt'} \cutrk(C') < \sum_{C \in \prt} \cutrk(C)$, so in the case of \Cref{lem:closlink:case1} we have already contradicted the minimality of $\prt$ and do not need to consider the secondary minimization.

Then suppose we are in the case of \Cref{lem:closlink:case2}.
First we show that if $\prt'$ cuts some node $x \in V(T)$, then also $\prt$ cuts $x$.
If $x$ is a descendant of $t$, then $\lparts(\Tc)[x] \subseteq S$, so $\prt'$ does not cut $x$.
If $\lparts(\Tc)[x]$ is disjoint from $\lparts(\Tc)[t]$, then $D \cap \lparts(\Tc)[x] = (D \setminus S) \cap \lparts(\Tc)[x]$ for all $D \in \prt \setminus \{C\}$ and $C \cap \lparts(\Tc)[x] = S \cap \lparts(\Tc)[x]$, so $\prt'$ cuts $x$ if and only if $\prt$ cuts $x$.
If $x$ is an ancestor of $t$, then $\prt$ cuts $x$ because $C$ cuts $t$.
Then, the fact that $\prt$ cuts $t$ but $\prt'$ does not cut $t$ implies that $\prt'$ cuts fewer nodes of $T$ than $\prt$.

Next we show that $\prt'$ is a $k$-closure of $\Tpref$.
Because $S \subseteq \lparts(\Tc)[a]$, it holds that for all $C' \in \prt'$ there exists $a' \in \App_T(\Tpref)$ with $C' \subseteq \lparts(\Tc)[a']$.
It remains to bound the rankwidth of $(G[\prt'], \prt')$.

\begin{claim}
\label{lem:closlink:claimrw}
The rankwidth of $(G[\prt'], \prt')$ is at most the rankwidth of $(G[\prt], \prt)$.
\end{claim}
\begin{claimproof}
Let $\Tc^* = (T^*,\lmap^*)$ be an optimum-width rank decomposition of $(G[\prt], \prt)$.
We modify $\Tc^*$ into a rank decomposition $\Tc' = (T',\lmap')$ of $(G[\prt'], \prt')$ by simply mapping $S \in \prt'$ to the leaf to which $C$ was mapped, and for each $D \setminus S \in \prt'$ mapping $D \setminus S$ to the leaf to which $D$ was mapped.
This could create some leaves to which no parts of $\prt'$ are mapped, so finally we iteratively remove leaves with no mapped parts and contract edges of degree $2$.

Consider an edge $x'y' \in E(T')$, and suppose w.l.o.g. that $S \subseteq \lparts(\Tc')[\vec{x'y'}]$.
Then there exists an oriented edge $\oxy \in \oE(T)$ so that $C \subseteq \lparts(\Tc^*)[\oxy]$ and $\lparts(\Tc')[\vec{x'y'}] = \lparts(\Tc^*)[\oxy] \cup S$.
Therefore it suffices to show that $\cutrk(\lparts(\Tc^*)[\oxy] \cup S) \le \cutrk(\lparts(\Tc^*)[\oxy])$.
First, we note that
\begin{equation}
\label{lem:closlink:claimrw:theeq2}
\cutrk(S) \le \cutrk(\lparts(\Tc^*)[\oxy] \cap S)
\end{equation}
because $C \subseteq \lparts(\Tc^*)[\oxy] \cap S \subseteq \lparts(\Tc^*)[a]$, but $S$ minimizes $\cutrk(S)$ among such sets.
Then,
\begin{align*}
\cutrk(\lparts(\Tc^*)[\oxy] \cup S) &\le \cutrk(\lparts(\Tc^*)[\oxy]) + \cutrk(S) - \cutrk(\lparts(\Tc^*)[\oxy] \cap S) && \text{(submodularity)}\\
&\le \cutrk(\lparts(\Tc^*)[\oxy]). && \text{(\Cref{lem:closlink:claimrw:theeq2})}
\end{align*}
\end{claimproof}
This finishes the proof that $\prt'$ is a $c$-small $k$-closure that contradicts the minimality of $\prt$.
\end{proof}

We then observe the main consequence of closure linkedness.

\begin{lemma}
\label{lem:mincloswidthbound}
Let $\prt$ be a $k$-closure of $\Tpref$ that is linked.
If $C \in \prt$ and $C$ cuts a node $t \in V(T) \setminus \Tpref$, then it holds that $\cutrk(C \cap \lparts(\Tc)[t]) < \cutrk(\lparts(\Tc)[t])$.
\end{lemma}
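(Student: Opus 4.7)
The proof is a direct consequence of the two linkedness conditions combined with the submodularity of the $\cutrk$ function. The plan is to first locate the appendix $a \in \App_T(\Tpref)$ relevant for $C$, then apply the two linkedness conditions to get a strict inequality, and finally use submodularity to push this strict inequality to the desired bound.

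First I would observe that since $\prt$ is a $k$-closure of $\Tpref$ and $C \in \prt$, there exists an appendix $a \in \App_T(\Tpref)$ with $C \subseteq \lparts(\Tc)[a]$. The node $t$ lies in $V(T) \setminus \Tpref$ and hence is contained in the subtree rooted at some appendix $a'$. Since $C$ cuts $t$, in particular $C \cap \lparts(\Tc)[t] \neq \emptyset$, which together with $C \subseteq \lparts(\Tc)[a]$ forces $a' = a$. Thus $t$ is a descendant of $a$ and $\lparts(\Tc)[t] \subseteq \lparts(\Tc)[a]$.

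Next, I would apply linkedness. Condition (1) applied to $S = C \cup \lparts(\Tc)[t]$ (which satisfies $C \subseteq S \subseteq \lparts(\Tc)[a]$) gives $\cutrk(C) \le \cutrk(C \cup \lparts(\Tc)[t])$. Condition (2), applied using the fact that $C$ cuts the descendant $t$ of $a$, sharpens this to
\[
\cutrk(C) < \cutrk(C \cup \lparts(\Tc)[t]).
\]

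Finally, by submodularity of $\cutrk$ (\Cref{lem:symsubmod}),
\[
\cutrk(C \cup \lparts(\Tc)[t]) + \cutrk(C \cap \lparts(\Tc)[t]) \le \cutrk(C) + \cutrk(\lparts(\Tc)[t]).
\]
Rearranging and using the strict inequality above,
\[
\cutrk(C \cap \lparts(\Tc)[t]) \le \cutrk(\lparts(\Tc)[t]) + \bigl(\cutrk(C) - \cutrk(C \cup \lparts(\Tc)[t])\bigr) < \cutrk(\lparts(\Tc)[t]),
\]
as required. There is no real obstacle here — the lemma is essentially the whole point of defining linkedness the way we did, and the argument amounts to noting that ``being linked'' forces the submodularity inequality to be strict on the appropriate side.
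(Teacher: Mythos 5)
Your proof is correct and uses essentially the same argument as the paper: condition (2) of linkedness gives $\cutrk(C \cup \lparts(\Tc)[t]) > \cutrk(C)$, and submodularity then yields the strict bound on $\cutrk(C \cap \lparts(\Tc)[t])$; the paper merely phrases this as a contradiction rather than a direct chain. Your extra verification that $t$ is a descendant of the appendix $a$ (so that condition (2) applies) is a detail the paper leaves implicit, and the appeal to condition (1) is redundant but harmless.
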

\begin{proof}
Suppose $\cutrk(C \cap \lparts(\Tc)[t]) \ge \cutrk(\lparts(\Tc)[t])$.
Then from submodularity it follows that $\cutrk(C \cup \lparts(\Tc)[t]) \le \cutrk(C)$, which contradicts that $\prt$ is linked.
\end{proof}

\paragraph{Computing closures.}
For a $k$-closure $\prt$ of $\Tpref$, we denote by $\cut_T(\prt)$ the set of nodes of $T$ that are cut by $\prt$.
Note that $\cut_T(\prt)$ is a prefix of $T$ and $\Tpref \subseteq \cut_T(\prt)$.
We wish to manipulate $k$-closures in time proportional to $|\cut_T(\prt)|$.
Let $C \in \prt$.
The \emph{appendix edge set} $\aes_T(C)$ of $C$ is the set $\aes_T(C) = \{\oap \in \oApp_T(\cut_T(\prt)) \mid \lparts(\Tc)[\oap] \subseteq C\} \subseteq \oApp_T(\cut_T(\prt))$ of appendix edges of $\cut_T(\prt)$ that correspond to $C$.
Then, we define the \emph{appendix edge partition} $\aep_T(\prt)$ of $\prt$ to be the partition $\aep_T(\prt) = \{\aes_T(C) \mid C \in \prt\}$ of $\oApp_T(\cut_T(\prt))$.
Note that $|\oApp_T(\cut_T(\prt))| = |\cut_T(\prt)|+1$, so the appendix edge partition can be represented in space $\Oh{|\cut_T(\prt)|}$.


We will use the following prefix-rebuilding data structure for computing closures.
We defer the proof to \Cref{sec:computing-closures}, but the idea will be to adapt the dynamic programming of~\cite{DBLP:journals/siamdm/JeongKO21} for computing optimal rank decompositions to our setting.

\begin{restatable}{lemma}{efficientclosureprds}
\label{lem:closureprds}
There is an $\ell$-prefix-rebuilding data structure that takes integer parameters $c \ge 1$ and $k \le \ell$ at initialization, has overhead $\Oh[c,\ell]{1}$, maintains a rooted annotated rank decomposition $\Tc$, and additionally supports the following query:
\begin{itemize}
\item $\mathsf{Closure}(\Tpref)$: Given a prefix $\Tpref$ of $\Tc$, either in time $\Oh[\ell]{|\Tpref|}$ returns that no $c$-small $k$-closure of $\Tpref$ exists, or for a minimal $c$-small $k$-closure $\prt$ of $\Tpref$ in time $\Oh[\ell]{|\cut_T(\prt)|}$ returns
\begin{itemize}
\item the sets $\cut_T(\prt)$ and $\aep_T(\prt)$, and
\item a rooted rank decomposition $(T^*,\lmap^*)$ of $(G[\prt],\prt)$ of width at most $2k$, where $\lmap^*$ is represented as a function $\lmap \colon \aep_T(\prt) \rightarrow \leafe(T^\star)$.
\end{itemize}
\end{itemize}
\end{restatable}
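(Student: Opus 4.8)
The plan is to build the desired data structure by composing several automata on the rooted annotated rank decomposition $\Tc$, and then reading off a minimal $c$-small $k$-closure by a single bottom-up pass over the prefix $\Tpref$ together with the subtree of $\Tc$ that the closure cuts. The first ingredient is the \emph{closure automaton} described in \Cref{ssec:overview-optimum-width}: for the fixed parameters $c$ and $k$, it computes, for every oriented edge $\oxy$ of $\Tc$, the family $\AutomataReps^{c,k}(\oxy)$ of all partitions of $\lparts(\Tc)[\oxy]$ into at most $c$ parts that can be completed to a minimal $c$-small $k$-closure of some prefix containing $\oxy$ as an appendix edge; crucially, each such partition is stored not explicitly but by keeping, for every part $C$, a carefully chosen minimal representative of $C$, so that the state at each node has size $\Oh[c,\ell]{1}$ and can be updated in $\Oh[c,\ell]{1}$ time under prefix-rebuilding updates. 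The second ingredient is the exact rankwidth automaton of \Cref{lem:overview-automaton} (equivalently \Cref{lem:overview-decomposition-aut}), which we will run on the candidate partitioned graph so as to certify that its rankwidth is at most $2k$; since the parts of a candidate closure are only available through their representatives, this is exactly the place where annotated rank decompositions (and \Cref{lem:overview-transition-matrices}) are needed to make the automaton computable from local annotations. These two automata, layered on top of the basic prefix-rebuilding machinery of \Cref{lem:basicprds} and the utility data structure of \Cref{lem:utilityprds}, constitute the $\ell$-prefix-rebuilding data structure with overhead $\Oh[c,\ell]{1}$.

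For the $\mathsf{Closure}(\Tpref)$ query, I would proceed in two stages. In the first stage, using the state $\AutomataReps^{c,k}$ on the appendix edges $\oApp_T(\Tpref)$ together with the exact rankwidth automaton, run a dynamic program over $\Tpref$ — viewed as a bounded-degree tree with $|\Tpref|$ nodes — that tries to select, for each $\oxy \in \oApp_T(\Tpref)$, a member of $\AutomataReps^{c,k}(\oxy)$ whose union over all appendix edges forms a partition $\prt$ of $V(G)$ with $\cutrk_{G[\prt]}$-rankwidth at most $2k$; feasibility of such a selection is precisely detected by whether the full set $\fullset_{2k}(r)$ of the exact rankwidth automaton, evaluated on the partitioned graph $(G[\prt],\prt)$, is non-empty at the root. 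If no feasible selection exists, report in $\Oh[\ell]{|\Tpref|}$ time that no $c$-small $k$-closure of $\Tpref$ exists, which is correct by \Cref{lem:smallclosures} (existence when $G$ has rankwidth $\le k$) together with the fact that the automaton correctly recognizes rankwidth $\le 2k$. Otherwise, perform the standard refinement of this DP to select a \emph{minimal} closure: primarily minimize $\sum_{C \in \prt} \cutrk(C)$ and secondarily minimize $|\cut_T(\prt)|$, both of which are additive/monotone quantities that can be threaded through the same DP over $\Tpref$ (the rank $\cutrk(C)$ of each part is visible from its representative via the stored $\repse$-graphs, and the cut count can be tracked because $\cut_T(\prt) \supseteq \Tpref$ and the extra cut nodes below each appendix edge are determined by the selected partition member).

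In the second stage, having fixed a minimal $c$-small $k$-closure $\prt$ — represented by the chosen members of $\AutomataReps^{c,k}$ on the appendix edges — I would reconstruct the explicit outputs. To find $\cut_T(\prt)$, start from $\Tpref$ and, for each appendix edge $\oap \in \oApp_T(\Tpref)$, walk downward into the subtree below $a$ marking every node $t$ that is cut by some part of $\prt$; by \Cref{lem:closlink} the closure is linked, so by \Cref{lem:mincloswidthbound} each descent strictly decreases $\cutrk(\lparts(\Tc)[t])$ each time a part properly cuts $t$, and $\cut_T(\prt)$ is a prefix of $T$, so the whole set is discovered in $\Oh[\ell]{|\cut_T(\prt)|}$ time — simultaneously yielding $\aep_T(\prt)$ as the partition of $\oApp_T(\cut_T(\prt))$ induced by which part of $\prt$ owns the subtree hanging below each such appendix edge. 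Finally, to output a rooted rank decomposition $(T^*,\lmap^*)$ of $(G[\prt],\prt)$ of width at most $2k$: extract, via \Cref{lem:annotdecomppart} applied to the connected set $\cut_T(\prt)$ (restricted appropriately so that its appendix-edge representatives realize $\reppart$), an annotated rank decomposition of a partitioned graph isomorphic to $(G[\prt],\prt)$ of width at most $\ell$ and size $\Oh{|\cut_T(\prt)|}$, then feed it to \Cref{lem:overview-get-decomposition} (i.e.\ \Cref{lem:linear-decomposition-recovery}) with parameter $2k$ to obtain a genuine rank decomposition of width $\le 2k$ in $\Oh[\ell]{|\cut_T(\prt)|}$ time; relabel its leaves by the parts of $\aep_T(\prt)$ to produce $\lmap^*$.

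The main obstacle I anticipate is the first stage: getting the closure automaton's compressed states — which store only representatives, not the parts themselves — to interface correctly with the exact rankwidth automaton, so that the feasibility check "rankwidth of $(G[\prt],\prt)$ is at most $2k$" can be carried out purely on local, constant-size data during the DP over $\Tpref$, and simultaneously so that the minimality tie-breaking is both well-defined (the secondary objective $|\cut_T(\prt)|$ depends on subtrees outside $\Tpref$) and computable in $\Oh[\ell]{|\Tpref|}$ time without ever touching those subtrees. Handling this requires pushing enough summary information (the achievable values of $\sum \cutrk(C)$ and of the "below-$\Tpref$ cut-count contribution", indexed by the choice of representative partition) into the automaton state at each appendix edge, which is exactly the kind of extensive bookkeeping flagged in \Cref{ssec:overview-optimum-width}; everything else is an application of the divide-and-conquer and decomposition-extraction lemmas already established.
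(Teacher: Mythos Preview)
Your proposal is on the right track and correctly identifies the main architecture of the paper's proof: maintain the closure automaton $\BoundRepr$ storing compressed partitions (via minimal representatives) at each edge, maintain the exact rankwidth automaton $\ExactRankAutom$, and answer $\mathsf{Closure}(\Tpref)$ by a dynamic program over $\Tpref$ that selects one compressed partition per appendix edge, certifies rankwidth $\le 2k$ via $\ExactRankAutom$, and optimizes lexicographically over $(\sum_C \cutrk(C),\,|\cut_T(\prt)|)$. The ``main obstacle'' you flag is indeed the crux, and the paper resolves it exactly as you anticipate: the closure automaton's state at each appendix edge carries both the representative-based encoding of the partition and the attained cost $q$ (the below-$\Tpref$ cut count), and the feasibility DP is formalized as the state-optimization problem of \Cref{lem:automaton-dp} run on the skeleton decomposition $\tcskel$ with leaf states coming from the per-edge glued pieces.

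Two points in your second stage need correction. First, the appeal to \Cref{lem:closlink}/\Cref{lem:mincloswidthbound} to bound the time for discovering $\cut_T(\prt)$ is a non sequitur: those lemmas bound widths, not recursion depth. What actually makes the walk work is that the closure automaton stores, alongside each compressed partition at an edge $\vec{ab}$, a pointer $\Phi$ telling you into which pair of child-edge states it decomposes, together with the cost $q$; the reconstruction (the paper's $\textsc{Populate}$) descends only where $q\ge 1$ and stops where $q=0$, so it visits precisely the nodes of $\cut_T(\prt)$.

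Second, and more substantively, your extraction of the output decomposition via \Cref{lem:annotdecomppart} on $\cut_T(\prt)$ does not work as stated. That lemma returns an annotated decomposition of $(G[\reppart(\Tc,\cut_T(\prt))],\,\reppart(\Tc,\cut_T(\prt)))$, which is the \emph{finer} partition indexed by appendix edges of $\cut_T(\prt)$, not $(G[\prt],\prt)$; there is no lemma in the paper for coarsening an annotated decomposition, and ``restricted appropriately'' does not bridge this gap. The paper instead works at the level of $\Tpref$: each selected state $\H_\lambda(\oxp)$ at $\oxp\in\oApp_T(\Tpref)$ already contains, for every part $C$ below $\oxp$, a size-$\le 2^{2k}$ minimal representative. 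One glues a tiny annotated decomposition $\Tc_{\H^\star}$ (of width at most $cs\ell+\ell$, with $s=2^{2k}$) onto the leaf $\oxp$ of the skeleton $\tcskel=\Cref{lem:annotdecomppart}(\Tpref)$, obtaining an annotated decomposition $\Tc_\lambda$ of a partitioned graph isomorphic to $(G[\prt],\prt)$ with $\Oh[c,\ell]{|\Tpref|}$ nodes and width $\le cs\ell+\ell$; then \Cref{lem:linear-decomposition-recovery} with target $2k$ yields $(T^*,\lmap^*)$ in time $\Oh[c,\ell]{|\Tpref|}$. So the final step should operate on $\Tpref$, not on $\cut_T(\prt)$.
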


\subsection{Refinement operation}
We start by introducing the potential function we use for the amortized analysis of the algorithm.

In a rooted rank decomposition $\Tc = (T,\lmap)$ of a graph $G$, let us say that the \emph{width} of a node $t \in V(T)$ is the width of the edge between the node and the parent, and denote it by $\width_{\Tc,G}(t) = \cutrk_G(\lparts(\Tc)[t])$.
The width of the root node is defined to be $0$.
Let $f$ be the function from \Cref{lem:smallclosures}.
Then we let the $\ell$-potential of $t$ with respect to $G$ be 
\[\Phi_{\ell,\Tc,G}(t) = (2 \cdot f(\ell))^{\width_{\Tc,G}(t)} \cdot \height_{T}(t),\]
and the $\ell$-potential of $\Tc$ with respect to $G$ be
\[\Phi_{\ell,G}(\Tc) = \sum_{t \in V(T)} \Phi_{\ell,\Tc,G}(t).\]
We will omit the graph $G$ from the subscript in these notations if it is clear from the context.

For a set of nodes $S \subseteq V(T)$ we will denote $\height_T(S) = \sum_{t \in S} \height_T(t)$ and $\Phi_{\ell,\Tc}(S) = \sum_{t \in S} \Phi_{\ell,\Tc}(t)$.

Then we give the refinement operation formulated as a prefix-rebuilding data structure.

\begin{lemma}
\label{lem:refimain}
Let $k \in \N$ and $\ell \ge 4k+1$.
There exists an $\ell$-prefix-rebuilding data structure with overhead $\Oh[\ell]{1}$ that maintains a rooted annotated rank decomposition $\Tc = (T,V(G),\reps,\repse,\dmap)$ that encodes a dynamic graph $G$ and supports the following operation:
\begin{itemize}
\item $\mathsf{Refine}(\Tpref)$: Given a leafless prefix $\Tpref$ of $T$ so that $\Tpref$ contains all nodes of width $>4k$, returns either that the rankwidth of $G$ is greater than $k$, or a description $\prdesc$ of a prefix-rebuilding update so that the rooted rank decomposition $\Tc'$ to which $\Tc$ corresponds to after applying $\prdesc$ has the following properties:
\begin{enumerate}
\item $\Tc'$ encodes $G$,
\item $\Tc'$ has width at most $4k$, and
\item\label{lem:refimain:enum:potbound} the following inequality holds:
\[\Phi_{\ell}(\Tc') \le \Phi_{\ell}(\Tc) - \height_T(\Tpref) + \log |\Tc| \cdot \Oh[\ell]{|\Tpref| + \height_T(\App_T(\Tpref))}.\]
\end{enumerate}
In the former case, the running time of $\mathsf{refine}(\Tpref)$ is $\Oh[\ell]{|\Tpref|}$, and in the latter case the running time and therefore also $|\prdesc|$ is bounded by
\[\log |\Tc| \cdot \Oh[\ell]{\Phi_{\ell}(\Tc)-\Phi_{\ell}(\Tc') + \log |\Tc| \cdot (|\Tpref| + \height_T(\App_T(\Tpref)))}.\]
\end{itemize}
\end{lemma}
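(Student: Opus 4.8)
The plan is to implement $\mathsf{Refine}(\Tpref)$ by combining the closure-finding data structure of \Cref{lem:closureprds}, the height-reduction routine of \Cref{lem:logdepthdecomp}, and the prefix-rearrangement translation of \Cref{lem:prdsrearangmain}. We maintain $\Tc$ together with instances of the data structures from \Cref{lem:utilityprds} (for heights), \Cref{lem:closureprds} (with parameters $c \coloneqq f(\ell)$, where $f$ is from \Cref{lem:smallclosures}, and the given $k$), \Cref{lem:prdsrearangmain}, \Cref{lem:prdscompose}; all have overhead $\Oh[\ell]{1}$. Given $\Tpref$, we first call $\mathsf{Closure}(\Tpref)$. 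If it reports that no $c$-small $k$-closure exists, then by \Cref{lem:smallclosures} the rankwidth of $G$ exceeds $k$, and we return that; this took $\Oh[\ell]{|\Tpref|}$ time. Otherwise we obtain a minimal $c$-small $k$-closure $\prt$ (which is linked by \Cref{lem:closlink}), together with $\cut_T(\prt)$, $\aep_T(\prt)$, and a rank decomposition $(T^\star,\lmap^\star)$ of $(G[\prt],\prt)$ of width at most $2k$, all in $\Oh[\ell]{|\cut_T(\prt)|}$ time.

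Next I would build the new decomposition $\Tc'$ exactly as sketched in \Cref{ssec:ov:dynrw}. Apply \Cref{lem:logdepthdecomp} to $(T^\star,\lmap^\star)$ to get a rooted rank decomposition $\Tc^{\star\star}$ of $(G[\prt],\prt)$ of width $\le 4k$ and height $\Oh{\log|\cut_T(\prt)|}$, in $\Oh{|\cut_T(\prt)| \log |\cut_T(\prt)|}$ time. Then, using $\aep_T(\prt)$, assemble a prefix-rearrangement description $(\Tpref^\circ, T^\sharp)$ on $\Tc$ as follows: $\Tpref^\circ$ is the prefix we actually rearrange, namely $\cut_T(\prt)$; and $T^\sharp$ is obtained from $\Tc^{\star\star}$ by replacing each leaf $\ell_C$ (corresponding to $C\in\prt$, i.e.\ to a part $\aes_T(C)$ of $\aep_T(\prt)$) by a small subtree joining all the appendix edges in $\aes_T(C)$ — concretely, for each node $x\in\cut_T(\prt)\setminus\Tpref$ with $\lparts(\Tc)[x]\cap C\neq\emptyset$, we get a node $x^C$, and these nodes are arranged as dictated by the subtree of $T$ rooted appropriately; the leaves of $T^\sharp$ are precisely $\App_T(\cut_T(\prt))$. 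This is a recipe of size $\Oh{|\cut_T(\prt)|}$ constructible in $\Oh[\ell]{|\cut_T(\prt)|}$ time. Since $\prt$ is linked, \Cref{lem:mincloswidthbound} gives $\cutrk(\lparts(\Tc)[x]\cap C) < \cutrk(\lparts(\Tc)[x]) \le 4k$ for every $x\in\cut_T(\prt)\setminus\Tpref$, and $\le 2k$-width of $\Tc^{\star\star}$ after \Cref{lem:logdepthdecomp} becomes $\le 4k$; together these certify that the rank decomposition to which the rearranged $\Tc'$ corresponds has width $\le 4k$. I would then feed $(\Tpref^\circ, T^\sharp)$ to $\mathsf{Translate}$ of \Cref{lem:prdsrearangmain} to obtain a genuine prefix-rebuilding description $\prdesc$; since $\Tc'$ has width $\ell'\le 4k\le\ell$, this runs in $\Oh[\ell]{|\cut_T(\prt)| \log |\cut_T(\prt)|}$ time and $|\prdesc| = \Oh{|\cut_T(\prt)|\log|\Tc|}$.

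For correctness of the potential bound I would analyze $\Phi_\ell(\Tc') - \Phi_\ell(\Tc)$ by the three-way partition of $V(T)$ into $\Tpref$, $\cut_T(\prt)\setminus\Tpref$, and $V(T)\setminus\cut_T(\prt)$. Nodes of $V(T)\setminus\cut_T(\prt)$ lie in maximal subtrees below appendix edges of $\cut_T(\prt)$ that are entirely inside a single part $C$; these subtrees are copied verbatim (same widths, same heights), so they contribute $0$ to the difference. For $x\in\cut_T(\prt)\setminus\Tpref$, the at most $c$ nodes $x^C$ each have width strictly below $\width_\Tc(x)$ by \Cref{lem:mincloswidthbound}, so $\sum_{C:\,\lparts(\Tc)[x]\cap C\neq\emptyset}(2f(\ell))^{\width_{\Tc'}(x^C)}\height_{\Tc'}(x^C) < (2f(\ell))^{\width_\Tc(x)}\height_\Tc(x) = \Phi_{\ell,\Tc}(x)$, because each such term is at most $(2f(\ell))^{\width_\Tc(x)-1}\cdot\height_{\Tc'}(x^C)$ and $c\le f(\ell)$ while $\height_{\Tc'}(x^C)\le \height_\Tc(x)+\Oh{\log|\Tc|}$; the slack $\sum_{x\in\cut_T(\prt)\setminus\Tpref}\Phi_{\ell,\Tc}(x) \ge \height_T(\cut_T(\prt)\setminus\Tpref) \ge \height_T(\Tpref) + |\cut_T(\prt)\setminus\Tpref|$ (the last using $\height\ge 1$), absorbing both the $-\height_T(\Tpref)$ term and, in the fine-grained bound used by the height-reduction scheme, a $-|\cut_T(\prt)|$ term; the additive $\Oh{\log|\Tc|}$ height increase on $x^C$ accounts for part of the $\log|\Tc|\cdot\Oh[\ell]{\cdot}$ error. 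Finally, nodes coming from $\Tc^{\star\star}$: there are $\Oh{|\cut_T(\prt)|}$ of them, each with bounded width $\le 4k$, and each $\Tc^C$ (copied below a leaf of $\Tc^{\star\star}$) can increase the height of at most $\Oh{\log|\Tc|}$ ancestor nodes inside $\Tc^{\star\star}$, each by at most $\height(\Tc^C)\le\height_T(x)$ for the relevant $x\in\App_T(\Tpref)$; summing over $C$ gives the error term $\log|\Tc|\cdot\Oh[\ell]{|\Tpref|+\height_T(\App_T(\Tpref))}$, using $|\prt|\le\Oh[\ell]{|\Tpref|}$ and $\cut_T(\prt)\setminus\Tpref$ being charged against $\Tpref$ and $\App_T(\Tpref)$ via the tree structure.

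The running-time bound in the non-failure case follows since every step above costs $\Oh[\ell]{|\cut_T(\prt)|\log|\Tc|}$, and the potential drop $\Phi_\ell(\Tc)-\Phi_\ell(\Tc')$ is at least $\height_T(\cut_T(\prt)\setminus\Tpref) - \log|\Tc|\cdot\Oh[\ell]{|\Tpref|+\height_T(\App_T(\Tpref))} \ge |\cut_T(\prt)\setminus\Tpref| - \log|\Tc|\cdot\Oh[\ell]{|\Tpref|+\height_T(\App_T(\Tpref))}$, so $|\cut_T(\prt)| \le \Oh[\ell]{\Phi_\ell(\Tc)-\Phi_\ell(\Tc') + \log|\Tc|\cdot(|\Tpref|+\height_T(\App_T(\Tpref)))}$, matching the claimed bound after multiplying by $\log|\Tc|$. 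Finally, I would apply $\mathsf{Update}(\prdesc)$ to install $\Tc'$ in all maintained data structures and return $\prdesc$. \textbf{The main obstacle} I expect is not any single inequality but the careful bookkeeping in assembling the prefix-rearrangement recipe $T^\sharp$ from $\aep_T(\prt)$ and $\Tc^{\star\star}$ so that (i) the induced rank decomposition is exactly the $\Tc'$ described combinatorially in the overview, (ii) its width is provably $\le 4k$ via \Cref{lem:mincloswidthbound} applied node-by-node to $\cut_T(\prt)\setminus\Tpref$, and (iii) the height increase of every node is controlled by $\Oh{\log|\Tc|}$ so the potential error terms come out as stated; this requires making precise the claim that each $x\in\cut_T(\prt)\setminus\Tpref$ is replaced by at most $c$ nodes of strictly smaller width, and that each $\Tc^C$ raises the height of only logarithmically many nodes of $\Tc^{\star\star}$.
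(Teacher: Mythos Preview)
Your overall strategy matches the paper's proof closely: invoke \Cref{lem:closureprds} with $c=f(\ell)$, balance the resulting closure decomposition via \Cref{lem:logdepthdecomp}, splice in the trees $T_C$ below the leaves, and translate via \Cref{lem:prdsrearangmain}. The construction and the width bound via \Cref{lem:mincloswidthbound} are right. The potential bound for the nodes coming from $T^{\star\star}$ (your final paragraph of the analysis) is also essentially the paper's argument.

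However, your potential accounting for the remaining nodes has a genuine gap. You claim
\[
\sum_{x\in\cut_T(\prt)\setminus\Tpref}\Phi_{\ell,\Tc}(x)\ \ge\ \height_T(\cut_T(\prt)\setminus\Tpref)\ \ge\ \height_T(\Tpref)+|\cut_T(\prt)\setminus\Tpref|,
\]
and use this to absorb the $-\height_T(\Tpref)$ term. The second inequality is false: nodes in $\cut_T(\prt)\setminus\Tpref$ are strict descendants of appendices of $\Tpref$, hence have \emph{smaller} heights than nodes in $\Tpref$; indeed if $\cut_T(\prt)=\Tpref$ the left side is $0$. The paper obtains $-\height_T(\Tpref)$ from a different source that you omit from your three-way case split: for $x\in\Tpref$ there is \emph{no} counterpart $x^C$ in any $T_C$ (each $C$ lies entirely below some appendix $a\in\App_T(\Tpref)$, and the height-minimizing preimage $\pi_C(t)$ is always a descendant of $a$). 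Hence the entire potential $\Phi_{\ell,\Tc}(\Tpref)\ge\height_T(\Tpref)$ is freed outright, and the slack from $\cut_T(\prt)\setminus\Tpref$ only needs to supply the $-|\cut_T(\prt)\setminus\Tpref|$ term.

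A second, smaller issue: you write $\height_{\Tc'}(x^C)\le\height_{\Tc}(x)+\Oh{\log|\Tc|}$. In fact $\height_{\Tc'}(x^C)=\height_{T_C}(x^C)\le\height_T(x)$ with no additive $\log$ term, because $T_C$ is obtained from a subtree of $T$ by deletions and contractions (which cannot increase heights), and the leaves below $x^C$ in $\Tc'$ all lie inside $T_C$. Your loose bound would add an unwanted $\Oh[\ell]{|\cut_T(\prt)\setminus\Tpref|\cdot\log|\Tc|}$ term to the potential change, which you cannot absorb; the tight bound is what makes $\sum_C \Phi_{\ell,\Tc'}(x^C)\le \Phi_{\ell,\Tc}(x)/2$ go through cleanly for every $x\in\cut_T(\prt)\setminus\Tpref$.
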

\begin{proof}
We use \Cref{lem:basicprds} for maintaining a representation of $\Tc$.
Let $c = f(\ell)$, where $f$ is the function from \Cref{lem:smallclosures}, in particular, so that if $G$ has rankwidth at most $k$ then there exists a $c$-small $k$-closure of $\Tpref$.
We maintain the $\ell$-prefix-rebuilding data structure from \Cref{lem:closureprds} with these values of $c$ and $k$ and the $\ell$-prefix-rebuilding data structure from \Cref{lem:prdsrearangmain}, by simply relaying all prefix-rebuilding updates also to these data structures.
In particular, they will always store the exactly same rooted annotated rank decomposition $\Tc$.

Then we describe how the $\mathsf{Refine}(\Tpref)$ operation is implemented.
First we apply the $\mathsf{Closure}(\Tpref)$ operation of the data structure of \Cref{lem:closureprds}.
If it returns that no $c$-small $k$-closure of $\Tpref$ exists, then by \Cref{lem:smallclosures} the rankwidth of $G$ is more than $k$ and we can return immediately.
Otherwise, it returns a representation of a minimal $c$-small $k$-closure $\prt$ of $\Tpref$, containing in particular the sets $\cut_T(\prt)$ and $\aep_T(\prt)$, and a rooted rank decomposition $\Tc^{**} = (T^{**},\lmap^{**})$ of $(G[\prt],\prt)$ of width at most $2k$, where $\lmap^{**}$ is represented as a function $\lmap^{**} \colon \aep_T(\prt) \rightarrow \leafe(T^{**})$.
We immediately use \Cref{lem:logdepthdecomp} to turn $\Tc^{**}$ into a rooted rank decomposition $\Tc^{*} = (T^{*}, \lmap^{*})$ of width at most $4k$ and height at most $\Oh{\log n}$.

Let us describe the construction of the rooted rank decomposition $\Tc' = (T',\lmap')$.
For this, we denote by $(T,\lmap)$ the rooted rank decomposition that $\Tc$ corresponds to.

First, for each part $C \in \prt$ (represented by $\aes_T(C) \in \aep_T(\prt)$) we construct a rooted rank decomposition $\Tc_C = (T_C, \lmap_C)$ as follows.
The tree $T_C$ is obtained by first taking the subtree of $T$ induced by nodes $t \in V(T)$ with $\lparts(\Tc)[t] \cap C \neq \emptyset$ and iteratively contracting all resulting degree-2 nodes.
Then we set $\lmap_C \coloneqq \funrestriction{\lmap}{C}$.
Now $\Tc_C$ is a rooted rank decomposition of $G[C]$ so that for every node $t \in V(T_C)$ there exists a node $t' \in V(T)$ with $\lparts(\Tc_C)[t] = \lparts(\Tc)[t'] \cap C$.
Then, the rooted rank decomposition $\Tc' = (T',\lmap')$ is constructed by taking $\Tc^*$ and for each $C \in \prt$ attaching $\Tc_C$ to $T^*$ by identifying the root of $T_C$ with the leaf $\lmap^*(C)$ of $T^*$.
It can be observed that $\Tc'$ is a rooted rank decomposition of $G$, for every $t \in V(T') \cap V(T^*)$ it holds that $\lparts(\Tc')[t] = \lparts(\Tc^*)[t]$, and for every $t \in V(T') \cap V(T_C)$ for $C \in \prt$ it holds that $\lparts(\Tc')[t] = \lparts(\Tc_C)[t]$.

\begin{claim}
\label{lem:refimain:claim:runtimeclaim}
A description $\prdesc$ of a prefix-rebuilding update that turns $\Tc$ into a rooted annotated rank decomposition that corresponds to $\Tc'$ can be computed in $\Oh[\ell]{|\cut_T(\prt)| \log |\Tc|}$ time.
\end{claim}
\begin{claimproof}
We will show that such prefix-rearrangement description can be computed in $\Oh[\ell]{|\cut_{T}(\prt)|}$ time.
This then implies the claim by applying the $\mathsf{Translate}$ query of the prefix-rebuilding data structure of \Cref{lem:prdsrearangmain}.

Recall that for every $t \in \App_T(\cut_T(\prt))$ we have that $\lparts(\Tc)[t] \subseteq C$ for some $C \in \prt$, so the subtree rooted at $t$ can be copied verbatim from $\Tc$ to $\Tc'$.
It follows that we can set the prefix of the prefix-rearrangement description to be $\cut_T(\prt)$.
It remains to construct the tree $T^\star$ of the description.
We construct it by first taking $T^*$, and then for every leaf of it that corresponds to a part $C \in \prt$ constructing the prefix of $T_C$ that is not copied verbatim.
In particular, let $C \subseteq \lparts(\Tc)[a]$ for $a \in \App_T(\Tpref)$, and denote by $\cut_{T,a}(\prt) \subseteq \cut_T(\prt)$ the nodes that are cut by $\prt$ and are descendants of $a$.
By using the mapping $\lmap^* \colon \aep_T(\prt) \rightarrow \leafe(T^*)$ we can construct the prefix of $T_C$ that is not copied verbatim in $\Oh{|\cut_{T,a}(\prt)|}$ time, also finding out how the subtrees that are copied verbatim are attached to the prefix.
Because $\prt$ is $c$-small, the total time sums up to $\Oh{c \cdot |\cut_{T}(\prt)|} = \Oh[\ell]{|\cut_{T}(\prt)|}$.
\end{claimproof}

For bounding the width of $\Tc'$ and analyzing the potential, let us relate the nodes in each of the trees $T_C$ to nodes in $T$.
Let us denote by $\pi_C \colon V(T_C) \rightarrow V(T)$ the mapping that maps each node $t \in V(T_C)$ to a node $t' \in V(T)$ so that $\lparts(\Tc_C)[t] = \lparts(\Tc)[t'] \cap C$ and $t'$ minimizes $\height_T(t')$ under this condition (this defines $\pi_C(t)$ uniquely).
Note that $\pi_C$ is an injection, and if $a \in \App_T(\Tpref)$ so that $C \subseteq \lparts(\Tc)[a]$, then $\pi_C(t)$ is a descendant of $a$ for all $t \in V(T_C)$.

\begin{claim}
For all $t \in V(T_C)$ it holds that $\cutrk_G(\lparts(\Tc_C)[t]) \le \cutrk_G(\lparts(\Tc)[\pi_C(t)])$, and moreover if $\pi_C(t) \in \cut_T(\prt)$ then $\cutrk_G(\lparts(\Tc_C)[t]) < \cutrk_G(\lparts(\Tc)[\pi_C(t)])$.
\end{claim}
\begin{claimproof}
First suppose that $\pi_C(t) \notin \cut_T(\prt)$.
In that case, $\lparts(\Tc_C)[t] = \lparts(\Tc)[\pi_C(t)]$ because $C$ intersects $\lparts(\Tc)[\pi_C(t)]$ but does not cut $\pi_C(t)$.
Then, if $\pi_C(t) \in \cut_T(\prt)$, \Cref{lem:mincloswidthbound} implies that $\cutrk_G(\lparts(\Tc_C)[t]) < \cutrk_G(\lparts(\Tc)[\pi_C(t)])$ because $\prt$ is linked because it is minimal.
\end{claimproof}

It follows that $\Tc'$ has width at most $4k$: All nodes in $V(T^*)$ have width at most $4k$, and for all $C \in \prt$ and $t \in V(T_C)$ we have that $\pi_C(t) \notin \Tpref$ implying $\cutrk_G(\lparts(\Tc)[\pi_C(t)]) \le 4k$ and therefore $\cutrk_G(\lparts(\Tc')[t]) \le 4k$.

To bound $\Phi_{\ell}(\Tc')$, first note that
\[\Phi_\ell(\Tc') = \Phi_{\ell,\Tc'}(V(T^*) \setminus \leafs(T^*)) + \sum_{C \in \prt} \Phi_\ell(\Tc_C).\]
Let us first bound the latter term.

\begin{claim}
\label{lem:refimain:claim:potbound1}
\[\sum_{C \in \prt} \Phi_\ell(\Tc_C) \le \Phi_\ell(\Tc) - \Phi_{\ell,\Tc}(\Tpref) - |\cut_T(\prt)\setminus \Tpref|\]
\end{claim}
\begin{claimproof}
We observe that $\height_{T_C}(t) \le \height_T(\pi_C(t))$ for all $C \in \prt$ and $t \in V(T_C)$, which implies $\Phi_{\ell,\Tc_C}(t) \le \Phi_{\ell,\Tc}(\pi_C(t))$ for all such $t$, and moreover when $\pi_C(t) \in \cut_T(\prt)$ it holds that
\begin{align*}
\Phi_{\ell,\Tc_C}(t) &= (2 \cdot f(\ell))^{\width_{\Tc_C}(t)} \cdot \height_{T_C}(t)\\
&\le (2 \cdot f(\ell))^{\width_{\Tc}(\pi_C(t))-1} \cdot \height_{T}(\pi_C(t))\\
&\le \Phi_{\ell,\Tc}(\pi_C(t)) / (2 \cdot f(\ell)).
\end{align*}

Then, for $x \in V(T)$, let us denote by $\pi^{-1}(x)$ the set of nodes in $\bigcup_{C \in \prt} V(T_C)$ that are mapped to $x$ by $\pi_C$, i.e., $\pi^{-1}(x) = \{t \in V(T_C) \mid C \in \prt \text{ and } \pi_C(t) = x\}$.
We observe that if $x \in \Tpref$ then $\pi^{-1}(x) = \emptyset$, if $x \in \cut_T(\prt) \setminus \Tpref$ then $|\pi^{-1}(x)| \le f(\ell)$ because $\prt$ is $c$-small for $c = f(\ell)$, and if $x \in V(T) \setminus \cut_T(\prt)$ then $|\pi^{-1}(x)| = 1$.

By putting these two observations together we obtain
\begin{align*}
\sum_{C \in \prt} \Phi_\ell(\Tc_C) &\le \Phi_{\ell,\Tc}(V(T)\setminus \cut_T(\prt)) + \sum_{t \in \cut_T(\prt)\setminus \Tpref} f(\ell) \cdot \Phi_{\ell,\Tc}(t) / (2 \cdot f(\ell))\\
&\le \Phi_{\ell,\Tc}(V(T)\setminus \cut_T(\prt)) + \Phi_{\ell,\Tc}(\cut_T(\prt)\setminus \Tpref)/2\\
&\le \Phi_\ell(\Tc) - \Phi_{\ell,\Tc}(\Tpref) - |\cut_T(\prt)\setminus \Tpref|.
\end{align*}
\end{claimproof}

Then we bound the former term.

\begin{claim}
\label{lem:refimain:claim:potbound2}
\[\Phi_{\ell,\Tc'}(V(T^*)) \le \log |\Tc| \cdot \Oh[\ell]{|\Tpref|+\height_T(\App_T(\Tpref))}\]
\end{claim}
\begin{claimproof}
For each node $t \in V(T^*)$, let $\Gamma(t) \in \prt$ be a part of $\prt$ so that $\lmap^*(\Gamma(t)) \in \leafe(T^*)[t]$, and among such parts $\Gamma(t)$ maximizes $\height(T_{\Gamma(t)})$.
Such $\Gamma(t)$ is not necessary unique, in which case we assign some such $\Gamma(t)$ arbitrarily.
Because the height of $T^*$ is at most $\Oh{\log |\prt|} \le \Oh{\log |\Tc|}$, we have that $\height_{T'}(t) \le \Oh{\log |\Tc|} + \height(T_{\Gamma(t)})$, implying that
\begin{align*}
\height_{T'}(V(T^*)) \le \Oh{|V(T^*)| \log |\Tc|} + \sum_{t \in V(T^*)} \height(T_{\Gamma(t)}).
\end{align*}
We observe that if $\Gamma(t) \subseteq \lparts(\Tc)[a]$ for $a \in \App_T(\Tpref)$, then $\height(T_{\Gamma(t)}) \le \height_T(a)$.
Because $\prt$ is $c$-small, for each $a \in \App_T(\Tpref)$ there are at most $c$ such sets $\Gamma(t)$.
Also, because $T^*$ has height at most $\Oh{\log |\Tc|}$, each $C \in \prt$ can be the set $\Gamma(t)$ for at most $\Oh{\log |\Tc|}$ nodes in $V(T^*)$.
From these observations it follows that
\begin{align*}
\sum_{t \in V(T^*)} \height(T_{\Gamma(t)}) &\le \sum_{a \in \App_T(\Tpref)} \Oh{\height_T(a) \cdot c \cdot \log |\Tc|}\\
&\le \Oh[\ell]{\height_T(\App_T(\Tpref)) \cdot \log |\Tc|}
\end{align*}

Then the conclusion of the claim follows from $|V(T^*)| \le 2 \cdot |\prt| \le 4c \cdot |\Tpref| \le \Oh[\ell]{|\Tpref|}$.
\end{claimproof}

By putting \Cref{lem:refimain:claim:potbound1,lem:refimain:claim:potbound2} together, we obtain
\begin{align}
\Phi_\ell(\Tc') \le \Phi_\ell(\Tc) &- \Phi_{\ell,\Tc}(\Tpref) - |\cut_T(\prt)\setminus \Tpref| \nonumber\\
&+ \log |\Tc| \cdot \Oh[\ell]{|\Tpref|+\height_T(\App_T(\Tpref))}\label{lem:refimain:runtimeequ},
\end{align}
which by $\Phi_{\ell,\Tc}(\Tpref) \ge \height_T(\Tpref)$ implies the desired potential bound of \Cref{lem:refimain:enum:potbound}.

Let us then prove the running time bound of $\mathsf{Refine}(\Tpref)$ in the lemma statement.
The algorithm consists of calling the data structure of \Cref{lem:closureprds}, applying \Cref{lem:logdepthdecomp}, and constructing the description $\prdesc$ of the prefix-rebuilding update, which by \Cref{lem:refimain:claim:runtimeclaim} all take at most $\Oh[\ell]{|\cut_T(\prt)| \log |\Tc|}$ time.
We can rearrange \Cref{lem:refimain:runtimeequ} into
\[\Phi_{\ell,\Tc}(\Tpref) + |\cut_T(\prt)\setminus \Tpref| \le \Phi_\ell(\Tc) - \Phi_\ell(\Tc') + \log |\Tc| \cdot \Oh[\ell]{|\Tpref|+\height_T(\App_T(\Tpref))},\]
which by $|\cut_T(\prt)| \le \Phi_{\ell,\Tc}(\Tpref) + |\cut_T(\prt)\setminus \Tpref|$ implies
\[|\cut_T(\prt)| \le \Phi_\ell(\Tc) - \Phi_\ell(\Tc') + \log |\Tc| \cdot \Oh[\ell]{|\Tpref|+\height_T(\App_T(\Tpref))},\]
which yields the desired running time.
\end{proof}

\subsection{Height reduction}
The main combinatorial ingredient for our height reduction scheme is the following lemma, which is proved implicitly in \cite[Section 6]{dyntw}.

\begin{lemma}[\cite{dyntw}]
\label{lem:heighttree}
Let $c \ge 2$ and $T$ be a binary tree with $n$ nodes.
If the height of $T$ is at least $2^{\Omega(\sqrt{\log n \log c})}$ then there exists a non-empty prefix $\Tpref$ of $T$ so that \[c \cdot \left(|\Tpref| + \height_T(\App_T(\Tpref))\right) \le \height_T(\Tpref).\]
Moreover, if a representation of $T$ is already stored and supports the function $\height_T(t)$ for $t \in V(T)$ in $\Oh{1}$ time, then such $\Tpref$ can be found in $\Oh{|\Tpref|}$ time.
\end{lemma}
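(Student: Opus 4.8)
The plan is to establish this via the argument of~\cite[Section~6]{dyntw}, which we now outline. Fix $c \ge 2$. We describe a recursive procedure $\mathsf{Find}$ that, given a binary tree $T'$, either returns a non-empty prefix $\Tpref'$ of $T'$ with $c\,(|\Tpref'| + \height_{T'}(\App_{T'}(\Tpref'))) \le \height_{T'}(\Tpref')$ --- call such a prefix \emph{cheap} --- or reports \emph{failure}, and we maintain as an invariant that failure forces $|V(T')|$ to be large. Since the height of any node depends only on the subtree below it, it suffices to run $\mathsf{Find}(T)$.

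The recursion is organised around the \emph{spine} of $T'$: starting at the root $r'$, repeatedly descend to the child of larger height, obtaining a root-to-leaf path $v_0 = r', v_1, \dots, v_L$; this greedy choice gives $\height_{T'}(v_i) = L + 1 - i$, with $L + 1$ equal to the height $h'$ of $T'$. For $i < L$ let $u_i$ be the off-spine child of $v_i$ and $g_i = \height_{T'}(u_i) \le L - i$. Then $\height_{T'}(\{v_0, \dots, v_L\}) = \sum_{i=0}^L (L+1-i) = \binom{L+2}{2}$, while the cost of the spine as a prefix is $(L+1) + \sum_{i=0}^{L-1} g_i$. If $(L+1) + \sum_i g_i \le \binom{L+2}{2}/c$, the spine is cheap and we return it. Otherwise (the ``dense'' case) we set $g^\star = \lceil (h'-1)/(4c) \rceil$, recurse on $\mathrm{subtree}(u_i)$ for every $i$ with $g_i \ge g^\star$, and collect into $J$ those indices for which the recursive call succeeded, returning a cheap prefix $P_i$. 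We output $\Tpref' = \{v_0, \dots, v_L\} \cup \bigcup_{i \in J} P_i$ if it is cheap, and report failure otherwise. The point is a telescoping cancellation: cost and gain of $\Tpref'$ decompose additively over the spine and the $P_i$, and since for each $i \in J$ one has $c\,(|P_i| + \height(\App(P_i))) \le \height(P_i)$, the $P_i$-contributions appear on both sides of the cheapness inequality for $\Tpref'$ and cancel, leaving that $\Tpref'$ is cheap if and only if $c\,\big((L+1) + \sum_{i \notin J} g_i\big) \le \binom{L+2}{2}$. If every recursion succeeded (so the only $i \notin J$ are those with $g_i < g^\star$), then $\sum_{i\notin J} g_i < L g^\star = O(L^2/c)$, which is at most $\binom{L+2}{2}/c - (L+1)$ once $L$ is at least a suitable constant multiple of $c$; hence $\mathsf{Find}(T')$ fails only if $h'$ is below that threshold, or if the subtrees on which the recursion \emph{failed} account for a total $\sum_{i : \text{recursion on } u_i \text{ failed}} g_i = \Omega(L^2/c)$.

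It remains to show that failure forces $|V(T')|$ to be large; this is where the hypothesis on the height of $T$ enters. Let $\hat\nu(h')$ denote the minimum number of vertices of a binary tree of height \emph{at least} $h'$ on which $\mathsf{Find}$ reports failure (and $+\infty$ if there is none); $\hat\nu$ is non-decreasing by construction. By the previous paragraph, a failing $T'$ of height $h'$ above the threshold has $\Omega(h'/c)$ indices $i$ for which $g_i \ge g^\star = \lceil(h'-1)/(4c)\rceil$ and $\mathsf{Find}(\mathrm{subtree}(u_i))$ fails; these subtrees are pairwise vertex-disjoint, so $|V(T')| \ge \Omega(h'/c) \cdot \hat\nu(\lceil(h'-1)/(4c)\rceil)$. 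This yields the recursion
\[ \hat\nu(h') \;\ge\; \Omega\!\left(\tfrac{h'}{c}\right) \cdot \hat\nu\!\left(\left\lceil\tfrac{h'-1}{4c}\right\rceil\right) \qquad \text{for all } h' \text{ above the threshold},\]
with $\hat\nu \ge 1$ below it. Unrolling over the $\Theta(\log h' / \log c)$ levels it takes the argument to descend below the threshold, with heights $h_j \approx h'/(4c)^j$, the accumulated product of the $\Omega(h_j/c)$ factors works out to $2^{\Omega((\log h')^2/\log c)}$, so $\hat\nu(h') \ge 2^{\Omega((\log h')^2/\log c)}$. Consequently, if $\mathsf{Find}(T)$ fails then $n = |V(T)| \ge 2^{\Omega((\log H)^2/\log c)}$, i.e.\ $H \le 2^{O(\sqrt{\log n \log c})}$; contrapositively, once $H \ge 2^{\Omega(\sqrt{\log n \log c})}$ with a large enough hidden constant, $\mathsf{Find}(T)$ succeeds and returns the desired prefix $\Tpref$.

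For the running time, $\mathsf{Find}$ directly inspects only the spine of $T'$ (whose nodes all lie in $\Tpref'$), the off-spine children of the spine (one $\Oh{1}$ height query each, supplied by the assumed oracle; those with $g_i < g^\star$ become appendices of $\Tpref'$, of which there are $\le |\Tpref'|+1$), and the subtrees it recurses on; successful recursive calls take time proportional to the returned sub-prefixes, which together sit inside $\Tpref$, so the only danger is wasted work in recursive calls that fail. One handles this by running the recursion with a time budget --- aborting and deeming failed any recursive call that spends more than a fixed constant times $g_i$ processing $\mathrm{subtree}(u_i)$ --- and arguing, via the combinatorial bound above (an aborted or failed call on $\mathrm{subtree}(u_i)$ certifies that $\mathrm{subtree}(u_i)$ has many vertices), that this keeps the overall running time $\Oh{|\Tpref|}$ while not affecting which top-level answers are cheap. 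I expect the main obstacle to be the quantitative bookkeeping in the middle two paragraphs: choosing the constants in the sparse/dense threshold and in $g^\star$ so that the telescoping inequality closes, while \emph{simultaneously} guaranteeing that failure forces \emph{many} pairwise-disjoint tall off-spine subtrees (rather than just one), which is exactly what makes the product recursion for $\hat\nu$ --- and hence the $2^{\Omega(\sqrt{\log n \log c})}$ height threshold --- go through, and then dovetailing this with the budgeted linear-time implementation.
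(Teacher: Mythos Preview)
The paper does not prove this lemma; it simply cites~\cite[Section~6]{dyntw}, and your outline is a faithful sketch of the argument given there (spine extraction, recursion on tall off-spine subtrees, telescoping of the cheap sub-prefixes, and a product-form recurrence $\hat\nu(h') \ge \Omega(h'/c)\cdot\hat\nu(h'/\Theta(c))$ yielding $\hat\nu(h') \ge 2^{\Omega((\log h')^2/\log c)}$). One small correction: after the telescoping you do not get ``if and only if'' but only the implication you need, namely that $c\big((L+1)+\sum_{i\notin J} g_i\big)\le\binom{L+2}{2}$ \emph{implies} $\Tpref'$ is cheap; the converse can fail because the $P_i$ may contribute strict slack. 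This does not affect the argument. The budgeted-recursion trick for the $\Oh{|\Tpref|}$ running time is also as in~\cite{dyntw}.
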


Then, our height reduction scheme is formulated as a prefix-rebuilding data structure as follows.

\begin{lemma}
\label{lem:heightreduprds}
Let $k \in \N$ and $\ell \ge 4k+1$.
There exists an $\ell$-prefix-rebuilding data structure with overhead $\Oh[\ell]{1}$ that maintains a rooted annotated rank decomposition $\Tc = (T,V(G),\reps,\repse,\dmap)$ that encodes a dynamic graph $G$ of rankwidth at most $k$, supports the operation $\mathsf{Refine}(\Tpref)$ from \Cref{lem:refimain}, and additionally supports the following operation under the promise that the width of $\Tc$ is at most $4k$:
\begin{itemize}
\item $\mathsf{ImproveHeight}()$: Updates $\Tc$ through a sequence of prefix-rebuilding updates so that the resulting annotated rank decomposition $\Tc'$ encodes $G$, has height $2^{\Oh[\ell]{\sqrt{\log n \log \log n}}}$ and width at most $4k$, and returns the corresponding sequence of descriptions of prefix-rebuilding updates.
All of the intermediate decompositions also have width at most $4k$.
It holds that $\Phi_{\ell}(\Tc') \le \Phi_{\ell}(\Tc)$ and the running time of $\mathsf{ImproveHeight}()$ is $\Oh[\ell]{(\Phi_\ell(\Tc) - \Phi_\ell(\Tc')) \log |\Tc|}$.
\end{itemize}
\end{lemma}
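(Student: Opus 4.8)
The plan is to call $\mathsf{Refine}$ of \Cref{lem:refimain} repeatedly, each time on a prefix chosen so that the refinement strictly decreases the potential $\Phi_\ell$, stopping once the height drops below the target bound. Concretely, I would maintain on one and the same rooted annotated rank decomposition $\Tc$ both the $\ell$-prefix-rebuilding data structure of \Cref{lem:refimain} (which exposes $\mathsf{Refine}$) and the one of \Cref{lem:utilityprds} (so that $\height_T(t)$ can be answered in $\Oh{1}$ time, which is exactly what \Cref{lem:heighttree} needs), relaying every $\mathsf{Initialize}$, $\mathsf{Update}$ and $\mathsf{Refine}$ to these; the overhead stays $\Oh[\ell]{1}$. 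Since prefix-rebuilding updates preserve the set of leaves, $|\Tc|$ is invariant (equal to $2n-1$), hence $\log|\Tc| = \Theta(\log n)$ throughout and the threshold fixed below is well defined.

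\textbf{The loop.} Let $C_\ell$ be the constant hidden in the $\Oh[\ell]{\cdot}$ of the potential inequality in \Cref{lem:refimain}, set $c := \alpha\, C_\ell \log|\Tc|$ for a large enough absolute constant $\alpha$, and let $h := 2^{\Theta(\sqrt{\log|\Tc|\,\log c})} = 2^{\Oh[\ell]{\sqrt{\log n \log\log n}}}$ be the corresponding threshold from \Cref{lem:heighttree}. Writing $r$ for the root of $T$, $\mathsf{ImproveHeight}()$ repeats, while $\height_T(r) \ge h$: use \Cref{lem:heighttree} to find in $\Oh{|\Tpref|}$ time a non-empty prefix $\Tpref$ with $c\,(|\Tpref| + \height_T(\App_T(\Tpref))) \le \height_T(\Tpref)$; set $Q := \Tpref \setminus \leafs(T)$, which is again a non-empty leafless prefix containing $r$ (a leaf of $T$ lying in $\Tpref$ always has degree $1$ inside $\Tpref$, so removing all of them cannot separate $Q$ from the root); call $\mathsf{Refine}(Q)$ — this is legitimate because $G$ always has rankwidth at most $k$, so $\mathsf{Refine}$ cannot report failure, and because the width is $\le 4k$ the precondition ``$Q$ contains all nodes of width $>4k$'' holds vacuously — obtaining a description $\prdesc$; apply $\prdesc$ to both maintained structures and append it to the returned sequence. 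Every $\mathsf{Refine}$ returns a decomposition of width at most $4k$ encoding $G$, and we start at width $\le 4k$, so the width invariant (and thus the $\mathsf{Refine}$ precondition) is maintained at every step and all intermediate decompositions have width at most $4k$.

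\textbf{Analysis.} Passing from $\Tpref$ to $Q$ moves at most $|\Tpref|$ leaves (each of height $1$) out of the prefix and into its appendix set, so $\height_T(Q) \ge \height_T(\Tpref) - |\Tpref|$ and $\height_T(\App_T(Q)) \le \height_T(\App_T(\Tpref)) + |\Tpref|$; combining the potential bound of \Cref{lem:refimain} applied to $Q$ with the inequality guaranteed by \Cref{lem:heighttree} (with $\alpha$ large enough to absorb both these extra $\Oh{|\Tpref|}$ terms and the factor $C_\ell$) yields $\Phi_\ell(\Tc_{i+1}) \le \Phi_\ell(\Tc_i) - \tfrac12\,\height_T(\Tpref_i) < \Phi_\ell(\Tc_i)$, since $\Tpref_i$ is non-empty. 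As $\Phi_\ell$ is a positive integer it can strictly decrease only finitely often, so the loop terminates; on termination $\height_T(r) < h = 2^{\Oh[\ell]{\sqrt{\log n \log\log n}}}$, the final $\Tc'$ encodes $G$, has width $\le 4k$, and $\Phi_\ell(\Tc') \le \Phi_\ell(\Tc)$. For the running time, the same two inequalities give $\log|\Tc|\,(|Q_i| + \height_T(\App_T(Q_i))) \le \Oh[\ell]{\height_T(\Tpref_i)} \le \Oh[\ell]{\Phi_\ell(\Tc_i) - \Phi_\ell(\Tc_{i+1})}$, so by the running-time clause of \Cref{lem:refimain} (plus the $\Oh[\ell]{|\prdesc|}$ cost of applying $\prdesc$, which is bounded by the same quantity) iteration $i$ runs in $\Oh[\ell]{(\Phi_\ell(\Tc_i) - \Phi_\ell(\Tc_{i+1}))\log|\Tc|}$ time; summing over iterations telescopes to the claimed $\Oh[\ell]{(\Phi_\ell(\Tc) - \Phi_\ell(\Tc'))\log|\Tc|}$ total time, which also dominates the total size of the returned sequence. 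The only genuinely delicate point is calibrating $\alpha$ (equivalently, the constant $c$ fed to \Cref{lem:heighttree}) against the hidden constant of \Cref{lem:refimain} so that each refinement both strictly decreases the potential and has its cost paid for by that decrease, together with the minor bookkeeping for the leaf-removal step; everything else is the routine telescoping amortization.
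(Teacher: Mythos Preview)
Your proof is correct and follows essentially the same approach as the paper: repeatedly invoke \Cref{lem:heighttree} to find a prefix on which $\mathsf{Refine}$ is guaranteed to strictly decrease $\Phi_\ell$, with the running time paid for by that decrease, and telescope. You are in fact slightly more careful than the paper on one point: the paper applies $\mathsf{Refine}$ directly to the prefix returned by \Cref{lem:heighttree}, while you explicitly strip leaves to obtain a leafless prefix $Q$ (as required by \Cref{lem:refimain}) and verify that the resulting $\Oh{|\Tpref|}$ slack in $\height_T(Q)$ and $\height_T(\App_T(Q))$ is absorbed by choosing the constant $\alpha$ in $c$ large enough.
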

\begin{proof}
We maintain a representation of $\Tc$ by \Cref{lem:basicprds}, and additionally maintain the prefix-rebuilding data structures $\D^{\mathsf{height}}$ given by \Cref{lem:utilityprds} and $\D^{\mathsf{refine}}$ given by \Cref{lem:refimain}, so that all prefix-rebuilding updates that are applied to $\Tc$ are also relayed to $\D^{\mathsf{height}}$ and $\D^{\mathsf{refine}}$, in particular, so that they store the exactly same rooted annotated rank decomposition $\Tc$.
The $\mathsf{Refine}(\Tpref)$ operation is implemented by using $\D^{\mathsf{refine}}$.
It remains to implement the $\mathsf{ImproveHeight}()$ operation.

Let us choose $c_0 = \Oh[\ell]{1}$ based on $\ell$ so that the inequality of \Cref{lem:refimain:enum:potbound} in \Cref{lem:refimain} is true in the form
\begin{equation}
\label{lem:heightreduprds:eq1}
\Phi_{\ell}(\Tc') \le \Phi_{\ell}(\Tc) - \height_T(\Tpref) + \frac{c_0}{2} \cdot \log |\Tc| \cdot \left(|\Tpref| + \height_T(\App_T(\Tpref))\right).
\end{equation}

Let $c = c_0 \cdot \log |\Tc|$.
First, if $\height(\Tc) \le 2^{\Oh{\sqrt{\log n \log c}}} \le 2^{\Oh[\ell]{\sqrt{\log n \log \log n}}}$, where the constant in the ${\cal O}$-notation depends on the constant in the $\Omega$-notation in \Cref{lem:heighttree}, then the height of $\Tc$ is already small enough and we do not update $\Tc$ and return an empty sequence of descriptions of prefix-rebuilding updates.
Otherwise, we use the algorithm from \Cref{lem:heighttree} with the $\height_T(t)$ operation supplied from $\D^{\mathsf{height}}$ to find a non-empty prefix $\Tpref$ of $T$ so that 
\begin{equation}
\label{lem:heightreduprds:eq2}
c_0 \cdot \log |\Tc| \cdot \left(|\Tpref| + \height_T(\App_T(\Tpref))\right) \le \height_T(\Tpref).
\end{equation}

Then we apply the $\mathsf{Refine}$ operation with this $\Tpref$ and apply the resulting prefix-rebuilding update to $\Tc$, relaying it also to $\D^{\mathsf{height}}$ and $\D^{\mathsf{refine}}$.
By putting \Cref{lem:heightreduprds:eq1,lem:heightreduprds:eq2} together, we obtain that the resulting decomposition $\Tc'$ satisfies
\[\Phi_{\ell}(\Tc') \le \Phi_\ell(\Tc) - \frac{c_0}{2} \cdot \log |\Tc| \cdot \left(|\Tpref| + \height_T(\App_T(\Tpref))\right).\]
Because $\Tpref$ is non-empty, we have in particular $\Phi_\ell(\Tc') < \Phi_\ell(\Tc)$.
The time complexity of the application of \Cref{lem:heighttree} is $\Oh{|\Tpref|} = \Oh{\Phi_\ell(\Tc)-\Phi_\ell(\Tc')}$.
The time complexity of the application of the $\mathsf{Refine}$ operation and the size of the description of the update is bounded by
\begin{align*}
&\log |\Tc| \cdot \Oh[\ell]{\Phi_{\ell}(\Tc)-\Phi_{\ell}(\Tc') + \log |\Tc| \cdot (|\Tpref| + \height_T(\App_T(\Tpref)))}\\
&= \Oh[\ell]{(\Phi_{\ell}(\Tc)-\Phi_{\ell}(\Tc')) \log |\Tc|},
\end{align*}
which is also the time it takes to apply the prefix-rebuilding updates, implying that the total time complexity is $\Oh[\ell]{(\Phi_{\ell}(\Tc)-\Phi_{\ell}(\Tc')) \log |\Tc|}$.
The width of $\Tc'$ is guaranteed to be at most $4k$ by \Cref{lem:refimain}.

Applying this update did not necessarily decrease the height of $\Tc$, but we can run it again repeatedly until it decreases the height to $2^{\Oh[\ell]{\sqrt{\log n \log \log n}}}$.
Because $\Phi_\ell(\Tc') < \Phi_\ell(\Tc)$, the number of such iterations is bounded by $\Phi_\ell(\Tc)$, and moreover, as the running time of a single iteration is bounded by $\Oh[\ell]{(\Phi_{\ell}(\Tc)-\Phi_{\ell}(\Tc')) \log |\Tc|}$, the running time of any sequence of such iterations is bounded by $\Oh[\ell]{(\Phi_{\ell}(\Tc)-\Phi_{\ell}(\Tc'')) \log |\Tc|}$, where $\Tc''$ is the final decomposition.
Because all of the updates were obtained from the $\mathsf{Refine}$ operation, all of the rank decompositions in the sequence of updates have width at most $4k$.
\end{proof}

\section{Automata}
\label{sec:rwautom}
In this section we define rank decomposition automata in order to formalize and unify dynamic programming working on rank decompositions.
We give a prefix-rebuilding data structure to maintain the runs of rank decomposition automata, give a construction of rank decomposition automata from $\CMSO_1$ sentences (using the construction for cliquewidth by~\cite{CourcelleMR00} as a black-box), and finally give our framework for performing edge updates using $\CMSO_1$.

\subsection{Rank decomposition automata}
\label{sec:rdautom}
We will define a \emph{rank decomposition automaton}, which is an automaton that processes annotated rank decompositions.
Our definitions will be for unrooted annotated rank decompositions, in particular, so that they are suited for computing dynamic programming tables directed in both directions on edges.
While these definitions allow annotated rank decompositions that encode partitioned graphs with non-trivial partitions, they are usually used with annotated rank decompositions that encode graphs.
Let us start with some auxiliary definitions.

We say that a \emph{transition signature} of width $\ell$ is a tuple $\tau = (S_\tau, U_\tau, \reps_\tau, \repse_\tau, \dmap_\tau)$,
where 
\begin{itemize}
\item $S_\tau$ is a tree with three leaf nodes and one non-leaf node,
\item $U_\tau$ is a set of size at most $6 \cdot 2^{\ell}$,
\item $\reps_\tau$ is a function that maps each oriented edge $\oxy \in \oE(S_\tau)$ to a non-empty set $\reps_\tau(\oxy) \subseteq U_\tau$,
\item $\repse_\tau$ is a function that maps each edge $xy \in E(S_\tau)$ to a bipartite graph $\repse_\tau(xy)$ with bipartition $(\reps_\tau(\oxy), \reps_\tau(\oyx))$, with no twins over this bipartition, and with $\cutrk_{\repse_\tau(xy)}(\reps_\tau(\oxy)) \le \ell$, and
\item $\dmap_\tau$ is a function that maps each path of length three $xyz \in \PT(S_\tau)$ in $S_\tau$ to a function $\dmap_\tau(xyz) \colon \reps_\tau(\oxy) \rightarrow \reps_\tau(\oyz)$.
\end{itemize}

Let $\Tc = (T,U,\reps,\repse,\dmap)$ be an annotated rank decomposition and $\vec{tp} \in \oE(T) \setminus \leafe(T)$ a non-leaf oriented edge of $T$ with children $\vec{c_1 t}$ and $\vec{c_2 t}$.
The transition signature of $\Tc$ at $\vec{tp}$, denoted by $\tau(\Tc,\vec{tp})$, is the transition signature obtained by setting $S_\tau = T[\{t,p,c_1,c_2\}]$, $\reps_\tau = \funrestriction{\reps}{\oE(S_\tau)}$, $\repse_\tau = \funrestriction{\repse}{E(S_\tau)}$, $\dmap_\tau = \funrestriction{\dmap}{\PT(S_\tau)}$, and $U_\tau = \bigcup_{\oed \in \oE(S_\tau)} \reps_\tau(\oed)$.
We observe that the width of $\tau(\Tc,\vec{tp})$ is at most the width of $\Tc$.

Then we say that an \emph{edge signature} of width $\ell$ is a tuple $\sigma = (\reps^a_\sigma, \reps^b_\sigma, \repse_\sigma)$,
where
\begin{itemize}
\item $\reps^a_\sigma$ and $\reps^b_\sigma$ are sets of size at most $2^{\ell}$ and
\item $\repse_\sigma$ is a bipartite graph with bipartition $(\reps^a_\sigma,\reps^b_\sigma)$, with no twins over this bipartition, and with $\cutrk_{\repse_\sigma}(\reps^a_\sigma) \le \ell$.
\end{itemize}

Let $\oab \in \oE(T)$.
The edge signature of $\Tc$ at $\oab$ is $\sigma(\Tc,\oab) = (\reps(\oab), \reps(\oba), \repse(ab))$.
Again, the width of $\sigma(\Tc,\oab)$ is at most the width of $\Tc$.

A rank decomposition automaton of width $\ell$ is a tuple $\autom = (Q,\Gamma, \iota, \delta, \varepsilon)$ that consists of
\begin{itemize}
\item a state set $Q$,
\item a vertex label set $\Gamma$,
\item an initial mapping $\iota$ that maps every pair of form $(\sigma,\gamma)$, where $\sigma = (\reps^a_\sigma, \reps^b_\sigma, \repse_\sigma)$ is an edge signature of width $\ell$ and $\gamma$ is a function $\gamma \colon \reps^a_\sigma \rightarrow \Gamma$, to a state $\iota(\sigma,\gamma) \in Q$,
\item a transition mapping $\delta$ that maps every triple of form $(\tau, q_1, q_2)$, where $\tau$ is a transition signature of width $\ell$ and $q_1,q_2 \in Q$, to a state $\delta(\tau, q_1, q_2) \in Q$, and
\item a final mapping $\varepsilon$ that maps every triple of form $(\sigma, q_1, q_2)$, where $\sigma$ is an edge signature of width $\ell$ and $q_1,q_2 \in Q$, to a state $\varepsilon(\sigma, q_1, q_2) \in Q$.
\end{itemize}

The state set $Q$ is allowed to be infinite.
The \emph{evaluation time} of a rank decomposition automaton is the maximum running time to compute the functions $\iota(\sigma,\gamma)$, $\delta(\tau,q_1,q_2)$, or $\varepsilon(\sigma,q_1,q_2)$ given their arguments.

Let $\Tc = (T,V(G),\reps,\repse,\dmap)$ be an annotated rank decomposition of width at most $\ell$ that encodes a partitioned graph $(G,\prt)$, $\oxy \in \oE(T)$ an oriented edge of $T$, and $\alpha \colon V(G) \rightarrow \Gamma$ a vertex-labeling of $G$ with $\Gamma$.
Recall that $\pred_T(\oxy)$ denotes the set of predecessor of $\oxy$.
The \emph{run} of $\autom$ on the triple $(\Tc, \oxy, \alpha)$ is the unique mapping $\rho \colon \pred_T(\oxy) \rightarrow Q$ so that
\begin{itemize}
\item for each leaf edge $\olp \in \pred_T(\oxy) \cap \leafe(T)$ it holds that $\rho(\olp) = \iota(\sigma(\Tc,\olp), \funrestriction{\alpha}{\reps(\olp)})$, and
\item for each non-leaf edge $\vec{tp} \in \pred_T(\oxy) \setminus \leafe(T)$ with children $\vec{c_1 t}$, $\vec{c_2 t}$, where $c_1 < c_2$, it holds that $\rho(\vec{tp}) = \delta(\tau(\Tc,\vec{tp}), \rho(\vec{c_1 t}), \rho(\vec{c_2 t}))$.
\end{itemize}

Then let $a,b \in V(T)$ be two adjacent nodes of $T$.
The run of $\autom$ on the 4-tuple $(\Tc,a,b,\alpha)$ is the unique mapping $\rho \colon \pred_T(\oab) \cup \pred_T(\oba) \cup \vartheta \rightarrow Q$ so that 
\begin{itemize}
\item $\funrestriction{\rho}{\pred_T(\oab)}$ is the run of $\autom$ on $(\Tc,\oab,\alpha)$,
\item $\funrestriction{\rho}{\pred_T(\oba)}$ is the run of $\autom$ on $(\Tc,\oba,\alpha)$, and
\item $\rho(\vartheta) = \varepsilon(\sigma(\Tc,\oab), \rho(\oab), \rho(\oba))$.
\end{itemize}

The \emph{valuation} of $\autom$ on $(\Tc,a,b,\alpha)$ is $\rho(\vartheta)$ and on $(\Tc, \oxy, \alpha)$ is $\rho(\oxy)$.
These definitions are adapted to a rooted annotated rank decompositions with root $r$ whose children are $c_1, c_2$ by setting $\rho(\vec{r c_2}) \coloneqq \rho(\vec{c_1 r})$ and $\rho(\vec{r c_1}) \coloneqq \rho(\vec{c_2 r})$.
Additionally, the run (resp.\ valuation) of $\autom$ on $(\Tc, \alpha)$ is defined as the run (resp.\ valuation) of $\autom$ on $(\Tc, c_1, r, \alpha)$, where $c_1 < c_2$.

If the valuation of $\autom$ on $(\Tc,a,b,\alpha)$ depends only on the partitioned graph $(G,\prt)$ encoded by $\Tc$ and the labeling $\alpha$, then we say that $\autom$ is \emph{decomposition-oblivious}, and refer to this valuation as the valuation of $\autom$ on $(G,\prt,\alpha)$.
When $\Tc$ encodes a graph $G$, we refer to this as the valuation of $\autom$ on $(G,\alpha)$.

Next, if all runs of $\autom$ on $(\Tc, a, b, \alpha)$ are independent on the labeling $\alpha$ (in particular, the value of the initial mapping $\iota$ only depends on the edge signature $\sigma$ and not the function $\gamma \colon \reps^a_\sigma \rightarrow \Gamma$), then we say that $\autom$ is \emph{label-oblivious}.
When defining label-oblivious automata, we will for convenience drop the vertex label set $\Gamma$ from the description of the automaton and consider $\iota$ to be a~mapping from an~edge signature $\sigma = (\reps^a_\sigma, \reps^b_\sigma, \repse_\sigma)$ to a~state $\iota(\sigma)$.
We also define the runs on $\autom$ on pairs $(\Tc, \vec{xy})$ and on triples $(\Tc, a, b)$ in a~natural way.
If $\Tc$ is rooted, we also define the run of $\autom$ on $\Tc$ naturally.

Then we give a prefix-rebuilding data structure for maintaining runs of rank decomposition automata.

\begin{lemma}
\label{lem:automatonprds}
Let $\ell \in \N$ and $\autom = (Q,\Gamma,\iota,\delta,\varepsilon)$ a rank decomposition automaton of width $\ell$ with evaluation time $\beta$.
There exists an $\ell$-prefix-rebuilding data structure with overhead $\Oh[\ell]{1} + \Oh{\beta}$ that maintains a rooted annotated rank decomposition $\Tc = (T,V(G),\reps,\repse,\dmap)$ that encodes a dynamic graph $G$, and a vertex-labeling $\alpha \colon V(G) \rightarrow \Gamma$ whose initial values $\alpha_{\mathsf{init}}$ are given at the initialization, and additionally supports the following operations:
\begin{itemize}
\item $\mathsf{Run}(\oxy)$: Given an oriented edge $\oxy \in \oE(T)$ that is directed towards the root, in time $\Oh{1}$ returns $\rho(\oxy)$, where $\rho$ is the run of $\autom$ on $(\Tc,\oxy,\alpha)$.
\item $\mathsf{Valuation}()$: In time $\Oh{1}$ returns the valuation of $\autom$ on $(\Tc,\alpha)$.
\item $\mathsf{SetLabel}(v, \gamma)$: Given a vertex $v \in V(G)$ and a label $\gamma \in \Gamma$, in time $\Oh{\height(T) \cdot \beta}$ updates $\alpha(v) \coloneqq \gamma$.
\end{itemize}
\end{lemma}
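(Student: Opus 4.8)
The plan is to build this data structure on top of \Cref{lem:basicprds}, augmenting the stored rooted annotated rank decomposition $\Tc$ with the run $\rho$ of the automaton $\autom$. Concretely, at every oriented edge $\oxy \in \oE(T)$ directed towards the root (equivalently, for every non-root node $x$ with parent $y$), I would store the value $\rho(\oxy)$, and at the root I would additionally store the valuation. Since the paper only requires the $\mathsf{Run}$ query for edges directed towards the root, it suffices to maintain the run ``bottom-up'' and not the two-directional runs; the $\mathsf{Valuation}()$ query is then handled by storing, at the root $r$ with children $c_1 < c_2$, the value $\varepsilon(\sigma(\Tc, c_1 r), \rho(\vec{c_1 r}), \rho(\vec{c_2 r}))$ — this matches the definition of the valuation of $\autom$ on $(\Tc, \alpha)$ given in the excerpt. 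We also keep the vertex-labeling $\alpha$ stored explicitly as a table indexed by $V(G)$, initialized to $\alpha_{\mathsf{init}}$.

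For $\mathsf{Initialize}(\Tc)$, I would first invoke the $\mathsf{Initialize}$ of \Cref{lem:basicprds}, then compute $\rho$ by a single bottom-up pass over $T$: for each leaf edge $\olp$ set $\rho(\olp) = \iota(\sigma(\Tc,\olp), \funrestriction{\alpha}{\reps(\olp)})$, and for each non-leaf edge $\vec{tp}$ with children $\vec{c_1 t}, \vec{c_2 t}$ (ordered so $c_1 < c_2$) set $\rho(\vec{tp}) = \delta(\tau(\Tc,\vec{tp}), \rho(\vec{c_1 t}), \rho(\vec{c_2 t}))$. Each transition/edge signature has size $\Oh[\ell]{1}$ and can be read off from the annotations around the node in $\Oh[\ell]{1}$ time, and evaluating $\iota$ or $\delta$ costs $\beta$, so the whole pass runs in $(\Oh[\ell]{1} + \Oh{\beta}) \cdot |\Tc|$ time; finally store the valuation at the root. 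For $\mathsf{Update}(\prdesc)$, apply \Cref{lem:basicprds} to obtain $\Tc'$, then recompute $\rho$ only on the prefix $\Tpref'$ of $T'$ associated with the update together with its appendix edges: the runs on all edges in $T' - \Tpref'$ are unchanged because (by the definition of a prefix-rebuilding update) the annotations on those edges, hence the signatures $\sigma,\tau$, are identical, and the children's run-values are also unchanged by induction from the leaves upward. Processing the nodes of $\Tpref'$ in a bottom-up order takes $(\Oh[\ell]{1} + \Oh{\beta}) \cdot |\prdesc|$ time; finally refresh the stored valuation at the root (the root always lies in $\Tpref'$ when $\Tpref'$ is nonempty, and if $\Tpref' = \emptyset$ nothing changes). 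Thus the overhead is $\Oh[\ell]{1} + \Oh{\beta}$ as claimed.

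The query implementations are then immediate: $\mathsf{Run}(\oxy)$ returns the stored value $\rho(\oxy)$ in $\Oh{1}$, and $\mathsf{Valuation}()$ returns the stored valuation in $\Oh{1}$. For $\mathsf{SetLabel}(v,\gamma)$, update $\alpha(v) \coloneqq \gamma$ in the table, then observe that the only run-values that can change are those on edges $\oxy$ with $v \in \lparts(\Tc)[\oxy]$; these edges form the path in $T$ from the leaf edge $\olp$ containing $v$ (i.e.\ $v \in \reps(\olp)$) up to the root, which has length $\Oh{\height(T)}$. Recompute $\rho$ along this path from the leaf upward — at each step re-reading the $\Oh[\ell]{1}$-size signature and re-evaluating $\iota$ or $\delta$ in time $\beta$ — and refresh the valuation; the total cost is $\Oh{\height(T) \cdot \beta}$. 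There is one subtlety worth checking, which I expect to be the main (mild) obstacle: one must verify that a run-value $\rho(\oxy)$ depends on $\alpha$ only through $\funrestriction{\alpha}{\lparts(\Tc)[\oxy]}$ — this follows by induction using that $\iota(\sigma(\Tc,\olp), \funrestriction{\alpha}{\reps(\olp)})$ reads $\alpha$ only on $\reps(\olp) \subseteq \lparts(\Tc)[\olp]$ and $\delta$ reads $\alpha$ only through its two child-states — so that changing $\alpha$ outside $\lparts(\Tc)[\oxy]$ indeed leaves $\rho(\oxy)$ fixed, justifying that only the root-path needs recomputation. Everything else is routine bookkeeping on top of \Cref{lem:basicprds}.
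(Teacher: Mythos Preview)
Your proposal is correct and follows essentially the same approach as the paper: store the bottom-up run values and the valuation, recompute them by a bottom-up pass at initialization, recompute only on $\Tpref' \cup \App_{T'}(\Tpref')$ after a prefix-rebuilding update, and recompute along the leaf-to-root path after $\mathsf{SetLabel}$. One minor wording slip: you say ``the runs on all edges in $T' - \Tpref'$ are unchanged'' but you (correctly) also include the appendix edges among those to recompute; those edges lie in $T' - \Tpref'$, so the sentence should exclude them.
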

\begin{proof}
We maintain a representation of $\Tc$ with \Cref{lem:basicprds}.
We also maintain the vertex labeling $\alpha$ explicitly, and the runs of $\autom$ on $(\Tc, \vec{c_1 r}, \alpha)$ and $(\Tc, \vec{c_2 r}, \alpha)$, where $r$ is the root and $c_1 < c_2$ are the children of $r$.
Note that this stores exactly one state $\rho(\oxy)$ for each oriented edge $\oxy$ of $T$ directed towards the root.
We also maintain the valuation of $\autom$ on $(\Tc, \alpha)$, which is $\varepsilon(\sigma(\Tc,\vec{c_1 r}), \rho(\vec{c_1 r}), \rho(\vec{c_2 r}))$.

At initialization, we can compute the runs and the valuations in $\Oh{|\Tc| \cdot \beta}$ time.
Then, consider a prefix-rebuilding update that turns $\Tc$ into $\Tc' = (T',V(G),\reps',\repse',\dmap')$, where the prefix of $\Tc$ associated with the update is $\Tpref$ and the prefix of $\Tc'$ is $\Tpref'$.
We observe that all edge signatures and transition signatures at edges directed towards the root in $\oE(T) \setminus \oE(T[\Tpref \cup \App_T(\Tpref)])$ stay the same in $\Tc'$.
Therefore, to recompute the runs and valuations, it suffices to recompute this information only for edges directed towards the root in $\oE(T'[\Tpref' \cup \App_{T'}(\Tpref')])$, which takes $\Oh{|\Tpref| \cdot \beta}$ time.

Then consider the $\mathsf{SetLabel}$ operation.
We observe that it can change the run on $(\Tc, \oxy, \alpha)$ only if $v \in \lparts(\Tc)[\oxy]$.
There are at most $\height(T)$ such edges $\oxy$ directed towards the root, so we recompute the runs on them in $\Oh{\height(T) \cdot \beta}$ time.

We explicitly maintain all information required to answer the $\mathsf{Run}$ and $\mathsf{Valuation}$ queries, so they can be answered in $\Oh{1}$ time.
\end{proof}

\subsection{\texorpdfstring{$\CMSO_1$}{CMSO1}}
\label{subsec:maincmso}
Monadic second-order logic ($\MSO$) is the fragment of second-order logic where quantification is allowed only over single elements of the universe and subsets of the universe.
In logic of graphs, $\MSO_1$ refers to $\MSO$ on the representation of graphs as a relational structure where the universe is the vertices and there is a binary relation describing the vertex adjacencies.
In particular, in $\MSO_1$ we can quantify over sets of vertices, but not over sets of edges.
The extension of $\MSO_1$ with predicates that allow counting the cardinality of a set modulo some given constant is called $\CMSO_1$.
We refer the reader to~\cite{CEbook} for more precise definitions.

For simplicity, we assume in this paper that all free variables of a $\CMSO_1$ sentence are set variables (note that free single-element variables can be expressed as free set variables).
The length of a $\CMSO_1$ sentence $\varphi$ is the number of symbols appearing in it, and denoted by $|\varphi|$.
We note that the length of $\varphi$ is at least the number of free variables of $\varphi$, and use the convention that the free variables are indexed by consecutive integers $1,\ldots,p$.

Let $\varphi$ be a $\CMSO_1$ sentence with $p$ free variables and $G$ a graph.
A tuple $(G,X_1,\ldots,X_p)$, where $X_i \subseteq V(G)$, \emph{satisfies} $\varphi$, written as $(G,X_1,\ldots,X_p) \models \varphi$, if $G$ together with the interpretations of the free variables as $X_1,\ldots,X_p$ satisfies $\varphi$.
Let $\alpha \colon V(G) \rightarrow 2^{[p]}$ be a vertex-labeling of $G$.
We define that $(G,\alpha)$ satisfies $\varphi$ if $(G,X_1,\ldots,X_p)$, where $X_i = \{v \in V(G) \mid i \in \alpha(v)\}$ satisfies $\varphi$.

We prove the following lemma in \Cref{sec:cliquewidth} by translating automata working on a cliquewidth expressions given by Courcelle, Makowsky, and Rotics~\cite{CourcelleMR00} (see also~\cite[Section 6]{CEbook}) to rank decomposition automata.

\begin{restatable}{lemma}{cmsordaut}
\label{lem:cmsordaut}
There is an algorithm that given a $\CMSO_1$ sentence $\varphi$ with $p$ free set variables and $\ell \in \N$, in time $\Oh[\varphi,\ell]{1}$ constructs a decomposition-oblivious rank decomposition automaton $\autom = (Q,\Gamma,\iota,\delta,\varepsilon)$ of width $\ell$ so that $\Gamma = 2^{[p]}$, the valuation of $\autom$ on $(G,\alpha)$ is $\top \in Q$ if and only if $(G,\alpha) \models \varphi$, the number of states is $|Q| \le \Oh[\varphi,\ell]{1}$, and the evaluation time is $\Oh[\varphi,\ell]{1}$.
\end{restatable}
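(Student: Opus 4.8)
The plan is to reduce the problem to the known fact that $\CMSO_1$ model checking can be implemented as a finite-state tree automaton running on a cliquewidth expression (the Courcelle--Makowsky--Rotics construction, see \cite{CourcelleMR00} and \cite[Section 6]{CEbook}), and then to show that such an automaton can be simulated by a rank decomposition automaton of width $\ell$. First I would fix the input $\CMSO_1$ sentence $\varphi$ with $p$ free set variables. Recalling that a graph of rankwidth $\le \ell$ has cliquewidth at most $2^{\ell+1}-1 \le \kappa$ for $\kappa \coloneqq 2^{\ell+1}$, and that $\varphi$ together with the free-variable labeling can be re-interpreted over the vertex-labeled structure, I would invoke the Courcelle--Makowsky--Rotics theorem to obtain a deterministic bottom-up finite tree automaton $\autom^{\mathrm{cw}}$ over $\kappa$-expressions (with extra vertex labels from $2^{[p]}$ to encode the interpretations of the free set variables) that accepts exactly those $\kappa$-expressions whose generated labeled graph satisfies $\varphi$. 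The number of states of $\autom^{\mathrm{cw}}$ is bounded by a function of $|\varphi|$ and $\kappa$, hence by a function of $|\varphi|$ and $\ell$, and the automaton itself is computable from $\varphi$ and $\ell$ in time $\Oh[\varphi,\ell]{1}$; all of this is standard and will be cited rather than reproved.

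The heart of the argument is the translation from the rank decomposition side to the cliquewidth side at the level of a single node of the decomposition tree. The key structural point is that the annotations around an edge $\oxy$ of an annotated rank decomposition determine, up to isomorphism, the labeled ``boundaried'' graph $G[\lparts(\Tc)[\oxy]]$ together with the adjacency pattern to the outside, \emph{compressed} to the minimal representative $\reps(\oxy)$: since the width is $\le \ell$, the representative has $|\reps(\oxy)| \le 2^\ell$ vertices, and the function $\lparts(\Tc)[\oxy] \to \reps(\oxy)$ partitions $\lparts(\Tc)[\oxy]$ into at most $2^\ell$ classes of vertices with identical outside-neighbourhood. I would make each such class a colour, so that the colour set has size $\le 2^\ell < \kappa$. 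Then I would prove a lemma of the following shape: given the transition signature $\tau(\Tc,\vec{tp})$ at a node and $\kappa$-expressions (with free-variable labels) generating the coloured subgraphs on the two children, one can, in $\Oh[\ell]{1}$ time, produce a $\kappa$-expression generating the coloured subgraph at $\vec{tp}$ --- using only the combine operation $\oplus$, a bounded number of edge-addition operations $\eta_{i,j}$ read off from $\repse_\tau$, and a bounded number of relabellings $\rho_{i\to j}$ read off from $\dmap_\tau$ to implement the recolouring along the two representative maps. This mirrors exactly how \cite{OumS06} turn a rank decomposition of width $k$ into a $(2^{k+1}-1)$-expression; the only new ingredient is that we never materialise the generated graph --- we keep only the \emph{state} $\autom^{\mathrm{cw}}$ would reach on it, updating it through the bounded sequence of cliquewidth operations. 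The leaf case (a single vertex with its free-variable label) and the root case (a final $\oplus$ step together with the edge-addition operations across the top edge) are handled analogously, giving $\iota$ and $\varepsilon$ respectively.

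Putting it together: the rank decomposition automaton $\autom$ has state set $Q$ equal to (a subset of) the state set of $\autom^{\mathrm{cw}}$, vertex label set $\Gamma = 2^{[p]}$, and transition/initial/final mappings obtained by running $\autom^{\mathrm{cw}}$ through the bounded cliquewidth-operation sequences described above; its evaluation time is $\Oh[\varphi,\ell]{1}$ and $|Q| \le \Oh[\varphi,\ell]{1}$. That the valuation is $\top$ iff $(G,\alpha) \models \varphi$ follows by induction up the tree: the invariant is that $\rho(\vec{tp})$ equals the state $\autom^{\mathrm{cw}}$ reaches on (any) $\kappa$-expression generating the correctly-coloured, correctly-free-variable-labeled subgraph $G[\lparts(\Tc)[\vec{tp}]]$, which holds at leaves and is preserved by $\delta$ by the single-node translation lemma; at the root the final $\oplus$ plus edge additions reconstruct all of $G$, so $\rho(\vartheta)$ is $\autom^{\mathrm{cw}}$'s verdict on $G$. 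Finally, because the output depends only on the encoded partitioned graph $(G,\prt)$ and $\alpha$ (the $\kappa$-expression is an auxiliary object whose accepted/rejected status is an isomorphism invariant of the generated labeled graph), $\autom$ is decomposition-oblivious.

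The main obstacle I anticipate is the single-node translation lemma --- specifically, correctly tracking how the \emph{colouring} by representative-classes changes between a node and its children. A colour class at $\vec{tp}$ (a set of vertices of $\lparts(\Tc)[\vec{tp}]$ with a common neighbourhood outside) need not refine cleanly into colour classes at $\vec{c_1t}$ and $\vec{c_2t}$: two vertices indistinguishable from outside $\lparts(\Tc)[\vec{tp}]$ may be distinguishable from outside $\lparts(\Tc)[\vec{c_1t}]$ (they could differ on the part of $\lparts(\Tc)[\vec{c_2t}]$), and vice versa. The resolution is that $\dmap(c_1\, t\, p)$ and $\dmap(c_2\, t\, p)$, together with $\repse(c_1t)$, $\repse(c_2t)$, and $\repse(tp)$, record exactly enough information to (i) recover, for each vertex of $\lparts(\Tc)[\vec{tp}]$, its colour in the child decomposition, and (ii) determine which cross-child edges to add and how to recolour into the $\le 2^\ell$ colours valid at $\vec{tp}$; one then implements (i) and (ii) as a bounded sequence of $\eta$ and $\rho$ operations. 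Getting this bookkeeping precisely right --- and checking that the number of operations, hence the evaluation time, stays $\Oh[\ell]{1}$ --- is where the real care is needed; the rest is routine.
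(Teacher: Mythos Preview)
Your proposal is correct and follows essentially the same approach as the paper: invoke the Courcelle--Makowsky--Rotics automaton over $(2^{\ell+1}-1)$-expressions (the paper's \Cref{the:cmsokexpr}), then simulate it on the rank decomposition by translating each transition signature into a bounded sequence of cliquewidth operations (the paper's \Cref{lem:kgraph_ind_trans}) and running the expression automaton through that sequence (the paper's \Cref{lem:automata_transl}). Your ``single-node translation lemma'' is exactly \Cref{lem:kgraph_ind_trans}, and the bookkeeping you anticipate---offsetting the child colourings into disjoint ranges, adding cross edges via $\repse$, then recolouring via $\dmap$---is precisely how the paper handles it.
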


In order to express optimization problems in the language of $\MSO_1$, Courcelle, Makowsky, and Rotics~\cite{CourcelleMR00} defined an extension of $\MSO_1$ they called ``LinEMSOL''.
Similar extension of $\CMSO_1$ was also discussed by Courcelle and Engelfriet~\cite[Section 6]{CEbook}.
Based on~\cite{CourcelleMR00,CEbook}, we define an extension of $\CMSO_1$ that we call $\LinCMSO_1$.
A $\LinCMSO_1$ sentence with $p$ free variables is a pair $(\varphi, f)$, where $\varphi$ is a $\CMSO_1$ sentence with $p+q$ free variables for $q\ge 0$, and $f \colon \Z^q \rightarrow \Z$ a linear integer function defined by $q+1$ integers $c_0,\ldots,c_q$ so that $f(x_1,\ldots,x_q) = c_0 + c_1 x_1 + \ldots + c_q x_q$.
Then, the value of $(\varphi,f)$ on a tuple $(G,X_1,\ldots,X_p)$ is the maximum value of $f(|X_{p+1}|, \ldots, |X_{p+q}|)$, where $X_{p+1},\ldots,X_{p+q} \subseteq V(G)$ and $(G,X_1,\ldots,X_{p+q}) \models \varphi$.
If no such sets $X_{p+1},\ldots,X_{p+q}$ exist, then the value is $\bot$.
We note that even though this naturally defines only maximization problems, we can define minimization problems by using negative coefficients.
We define the length $|(\varphi,f)|$ of $(\varphi,f)$ to be $|\varphi|+\sum_{i=0}^p |c_i|$.

Then, \Cref{lem:cmsordaut} extends to the following lemma.
The proof is in \Cref{sec:cliquewidth}.

\begin{restatable}{lemma}{lincmsordaut}
\label{lem:lincmsordaut}
There is an algorithm that given a $\LinCMSO_1$ sentence $\varphi$ with $p$ free set variables and $\ell \in \N$, in time $\Oh[\varphi,\ell]{1}$ constructs a decomposition-oblivious rank decomposition automaton $\autom = (Q,\Gamma,\iota,\delta,\varepsilon)$ of width $\ell$ so that $\Gamma = 2^{[p]}$, the valuation of $\autom$ on $(G,\alpha)$ is equal to the value of $\varphi$ on $(G,\alpha)$, and the evaluation time is $\Oh[\varphi,\ell]{1}$.
\end{restatable}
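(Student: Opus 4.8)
The plan is to reduce $\LinCMSO_1$ to $\CMSO_1$ by treating the $q$ extra free variables $X_{p+1},\dots,X_{p+q}$ of $\varphi$ as part of the vertex labeling, and then run the automaton from \Cref{lem:cmsordaut} while keeping track, in the state, of the achievable values of $f(|X_{p+1}|,\dots,|X_{p+q}|)$. Concretely, first I would invoke \Cref{lem:cmsordaut} on the $\CMSO_1$ sentence $\varphi$ with $p+q$ free set variables and the given width $\ell$, obtaining a decomposition-oblivious rank decomposition automaton $\autom_0 = (Q_0, \Gamma_0, \iota_0, \delta_0, \varepsilon_0)$ with $\Gamma_0 = 2^{[p+q]}$ and $|Q_0| \le \Oh[\varphi,\ell]{1}$, evaluation time $\Oh[\varphi,\ell]{1}$, whose valuation on $(G,\beta)$ is $\top$ iff $(G,\beta) \models \varphi$ for a labeling $\beta \colon V(G) \to 2^{[p+q]}$.

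Next I would build the target automaton $\autom = (Q,\Gamma,\iota,\delta,\varepsilon)$ with $\Gamma = 2^{[p]}$. A vertex labeled by $\alpha(v) \in 2^{[p]}$ under $\autom$ must be thought of as carrying, on top of that, an arbitrary choice of membership in each of $X_{p+1},\dots,X_{p+q}$; there are $2^q$ such completions. The state of $\autom$ at an oriented edge $\oxy$ should record, for each state $s \in Q_0$ of $\autom_0$, the set of tuples $(n_{p+1},\dots,n_{p+q}) \in \Z^q$ that arise as $(|X_{p+1} \cap \lparts[\oxy]|,\dots,|X_{p+q} \cap \lparts[\oxy]|)$ over all completions $\beta$ of $\alpha$ restricted to $\lparts[\oxy]$ for which $\autom_0$ reaches state $s$. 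Since we only care about the linear function $f$, I would instead record, for each $s \in Q_0$, the set $R(s) \subseteq \Z$ of values $c_1 n_{p+1} + \cdots + c_q n_{p+q}$ realizable together with state $s$; this is a finite (though unbounded-cardinality) subset of $\Z$, so $Q$ is the set of functions $Q_0 \to 2^{\Z}$ — an infinite but admissible state set. The initial mapping $\iota$ on an edge signature $\sigma$ with labeling $\gamma \colon \reps^a_\sigma \to 2^{[p]}$ enumerates the $2^q$ completions of $\gamma$ into labelings into $2^{[p+q]}$ (there are at most $2^{q \cdot 2^{\ell}} = \Oh[\varphi,\ell]{1}$ of them since $|\reps^a_\sigma| \le 2^\ell$), applies $\iota_0$ to each, and for each resulting state $s$ collects the corresponding partial sum $\sum_{j} c_j \cdot [\text{the single element of }\reps^a_\sigma \text{ lies in } X_{p+j}]$ — wait, more carefully, it sums over the actual multiset represented; since edge signatures only expose a bounded representative set, this is still $\Oh[\varphi,\ell]{1}$ work but one must be mildly careful that $\iota$ in the automaton model is applied per leaf edge and the ``counting'' of $|X_{p+j}|$ must be aggregated correctly across leaves — the leaf edge exposes exactly its own vertices, so summing works. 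The transition $\delta$ on $(\tau, q_1, q_2)$ computes $R(s) = \bigcup \{ R_1(s_1) + R_2(s_2) : \delta_0(\tau, s_1, s_2) = s \}$ (Minkowski sum of the two value sets), and $\varepsilon$ is analogous, producing finally the value $c_0 + \max\{ v : v \in R(\top) \}$, or $\bot$ if $R(\top) = \emptyset$. Decomposition-obliviousness of $\autom$ follows from that of $\autom_0$ together with the fact that the value sets $R(s)$ depend only on the partitioned graph and labeling, not the decomposition. Correctness is then immediate from the definition of the value of a $\LinCMSO_1$ sentence and correctness of $\autom_0$.

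The main obstacle is bounding the evaluation time by $\Oh[\varphi,\ell]{1}$: the sets $R(s)$ can be large (up to $\Oh{|V(G)|}$ distinct values), so a naive representation makes $\delta$ slow. The standard fix, which I would adopt, is to observe that only the \emph{maximum} of $R(\top)$ matters at the root, and that maxima compose: it suffices to store, for each $s \in Q_0$, the single integer $M(s) = \max R(s) \in \Z \cup \{-\infty\}$, so $Q = (\Z \cup \{-\infty\})^{Q_0}$, which is still infinite but each state has $\Oh[\varphi,\ell]{1}$ size, and $\delta$ computes $M(s) = \max\{ M_1(s_1) + M_2(s_2) : \delta_0(\tau,s_1,s_2) = s \}$ in $\Oh[\varphi,\ell]{1}$ time, and similarly $\iota$, $\varepsilon$. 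One must double-check that taking maxima at every step is sound, i.e. that an optimal global assignment restricts to locally-maximal partial assignments conditioned on the reached $\autom_0$-state — this holds precisely because $f$ is additive over the parts of the decomposition and $\autom_0$ is deterministic, so the reachable state at a node depends only on the local signature and the child states, not on the child value sets. With this representation the state size, the number of distinct signatures to be handled, and the evaluation time are all $\Oh[\varphi,\ell]{1}$, completing the construction within the claimed time bound.
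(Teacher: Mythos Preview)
Your proposal is correct and takes essentially the same approach as the paper: invoke \Cref{lem:cmsordaut} on the underlying $\CMSO_1$ sentence with $p+q$ free variables, and let the new automaton's state be a function $Q_0 \to \Z \cup \{-\infty\}$ recording, for each inner state $s$, the maximum achievable partial value of the linear functional; transitions are the max-plus combination over pairs $(s_1,s_2)$ mapping to $s$ under $\delta_0$. The only cosmetic difference is that the paper includes $c_0$ in the stored partial values and subtracts it at each merge, whereas you keep $c_0$ out and add it once at the root; both are equivalent.
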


We note that the reason for having \Cref{lem:cmsordaut} and \Cref{lem:lincmsordaut} as separate lemmas is that we will use the fact that the number of states in the automaton constructed in \Cref{lem:cmsordaut} is $\Oh[\varphi,\ell]{1}$.
We also note that in both \Cref{lem:cmsordaut,lem:lincmsordaut} the constructed automaton works only on decompositions encoding graphs, not partitioned graphs.

By putting together \Cref{lem:automatonprds,lem:lincmsordaut}, we obtain the following.
\begin{lemma}
\label{lem:lincmsoprds}
Let $w, \ell \in \N$.
There exists an $\ell$-prefix-rebuilding data structure with overhead $\Oh[\ell,w]{1}$ that maintains a rooted annotated rank decomposition $\Tc$ that encodes a dynamic graph $G$, and additionally supports the following query:
\begin{itemize}
\item $\LinCMSO_1(\varphi,X_1,\ldots,X_p)$: Given a $\LinCMSO_1$ sentence $\varphi$ of length at most $w$ with $p$ free variables and $p$ vertex subsets $X_1,\ldots,X_p \subseteq V(G)$, returns the value of $\varphi$ on $(G,X_1,\ldots,X_p)$. Runs in time $\Oh[\varphi]{1}$ if the sets are empty, and in $\Oh[\ell,\varphi]{\sum_{i=1}^p |X_i| \cdot \height(\Tc)}$ time otherwise.
\end{itemize}
\end{lemma}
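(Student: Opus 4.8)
The plan is to combine the two ingredients already set up: the $\LinCMSO_1$-to-automaton translation of \Cref{lem:lincmsordaut} and the generic automaton-maintenance data structure of \Cref{lem:automatonprds}. The data structure we build will, on each $\LinCMSO_1(\varphi,X_1,\ldots,X_p)$ query, construct the automaton for $\varphi$, feed it the right vertex-labeling, run it, and return the valuation. Since the sentences can differ between queries, but all have length at most $w$, there are only $\Oh[w]{1}$ many of them; so we may precompute all the corresponding automata once at initialization (each in $\Oh[\ell,w]{1}$ time by \Cref{lem:lincmsordaut}) and maintain a run of each of them via a separate copy of the data structure of \Cref{lem:automatonprds}. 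Each such automaton has width $\ell$, evaluation time $\beta = \Oh[\ell,w]{1}$, and its label set is $\Gamma = 2^{[p]}$ for the appropriate $p \le w$; accordingly each maintained copy has overhead $\Oh[\ell]{1} + \Oh{\beta} = \Oh[\ell,w]{1}$, and with $\Oh[w]{1}$ copies the total overhead stays $\Oh[\ell,w]{1}$, matching the claimed bound.

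The remaining work is just how to answer a single query given these maintained runs. First I would identify the automaton $\autom_\varphi$ corresponding to the queried sentence $\varphi$; its maintained copy $\D_\varphi$ of the data structure of \Cref{lem:automatonprds} carries a vertex-labeling which we keep equal to the all-$\emptyset$ labeling $\alpha_{\mathsf{init}} \equiv \emptyset$ between queries. For the query, for each $v$ that lies in some $X_i$ we call $\D_\varphi.\mathsf{SetLabel}(v,\{i \mid v \in X_i\})$ to install the labeling $\alpha$ encoding $(X_1,\ldots,X_p)$; there are at most $\sum_{i=1}^p |X_i|$ such vertices, and each $\mathsf{SetLabel}$ costs $\Oh{\height(\Tc)\cdot\beta} = \Oh[\ell,w]{\height(\Tc)}$, giving $\Oh[\ell,\varphi]{\sum_{i=1}^p |X_i|\cdot\height(\Tc)}$ total. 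Then $\D_\varphi.\mathsf{Valuation}()$ returns, in $\Oh{1}$ time, the valuation of $\autom_\varphi$ on $(\Tc,\alpha)$, which by \Cref{lem:lincmsordaut} equals the value of $\varphi$ on $(G,\alpha)$, i.e.\ on $(G,X_1,\ldots,X_p)$; this is because \Cref{lem:lincmsordaut} produces a \emph{decomposition-oblivious} automaton, so the valuation does not depend on the particular $\Tc$ we maintain. Finally we must reset the labeling to $\alpha_{\mathsf{init}}$ by calling $\mathsf{SetLabel}(v,\emptyset)$ for the same vertices, at the same cost, so that the invariant holds for the next query. In the degenerate case $X_1 = \cdots = X_p = \emptyset$ there is nothing to set and $\mathsf{Valuation}()$ already answers in $\Oh{1} = \Oh[\varphi]{1}$ time.

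There is one subtlety to address: the automaton of \Cref{lem:lincmsordaut} is only guaranteed to work on annotated rank decompositions encoding \emph{graphs}, not partitioned graphs; but by the standing assumptions on prefix-rebuilding data structures the maintained $\Tc$ always encodes a graph $G$, so this is fine. Another point is that prefix-rebuilding updates to $\Tc$ must be relayed to all the maintained copies $\D_\varphi$; this is exactly what \Cref{lem:automatonprds} supports, and the cost per update is $\Oh[\ell,w]{|\prdesc|}$ as required for overhead $\Oh[\ell,w]{1}$. I expect no real obstacle here — the lemma is essentially an assembly of \Cref{lem:automatonprds} and \Cref{lem:lincmsordaut} — the only mildly delicate part is bookkeeping the labeling correctly (installing it, reading off the valuation, and restoring $\alpha_{\mathsf{init}}$) so that the claimed running time, in particular the $\Oh[\varphi]{1}$ bound in the empty-set case, is met and so that repeated queries do not interfere.
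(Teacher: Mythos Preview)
Your proposal is correct and follows essentially the same approach as the paper: precompute an automaton via \Cref{lem:lincmsordaut} for each of the $\Oh[w]{1}$ sentences of length at most $w$, maintain each with a copy of the data structure of \Cref{lem:automatonprds} (kept at the all-$\emptyset$ labeling between queries), and answer a query by temporarily installing the labeling, reading the valuation, and resetting. The paper's proof is nearly identical, differing only in presentation.
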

\begin{proof}
We enumerate all $\LinCMSO_1$ sentences $\varphi$ of length at most $w$, and for each of them construct an auxiliary $\ell$-prefix-rebuilding structure $\D^{\varphi}$ as follows.
Let $p$ be the number of free variables in $\varphi$.
We apply \Cref{lem:lincmsordaut} to obtain a rank decomposition automaton $\autom = (Q,\Gamma,\iota,\delta,\varepsilon)$ of width $\ell$ so that $\Gamma = 2^{[p]}$, the valuation of $\autom$ on $(G,\alpha)$ is equal to the value of $\varphi$ on $(G,\alpha)$, and the evaluation time of $\autom$ is $\Oh[\varphi,\ell]{1}$.
Then we initialize an $\ell$-prefix-rebuilding data structure $\D^{\varphi}$ of \Cref{lem:automatonprds} with $\autom$.
The overhead of $\D^{\varphi}$ is $\Oh[\varphi,\ell]{1}$.
We initialize the labeling $\alpha$ held by $\D^{\varphi}$ to be $\alpha(v) = \emptyset$ for all $v \in V(G)$.

Note that there are at most $\Oh[w]{1}$ $\LinCMSO_1$ sentences of length at most $w$, so the initialization works in $\Oh[\ell,w]{1}$ time.
Then, all prefix-rebuilding updates to our data structures are relayed to all of the auxiliary data structures $\D^{\varphi}$ so that they also hold the decomposition $\Tc$ at all times, resulting in the overhead $\Oh[\ell,w]{1}$.

The $\LinCMSO_1(\varphi,X_1,\ldots,X_p)$ query is implemented as follows.
We maintain that between the queries, the labeling $\alpha$ held by $\D^{\varphi}$ is $\alpha(v) = \emptyset$ for all $v \in V(G)$.
Therefore, if the given sets $X_1,\ldots,X_p$ are empty, we can simply return the value given by the query $\mathsf{Valuation}()$ of $\D^{\varphi}$.
This runs in $\Oh[\varphi]{1}$ time.
If some of the sets $X_1,\ldots,X_p$ is non-empty, we compute $X = X_1 \cup \ldots \cup X_p$, use the $\mathsf{SetLabel}$ query of $\D^{\varphi}$ to set $\alpha(v) = \{i \mid v \in X_i\}$ for all $v \in X$, and return the value given by the query $\mathsf{Valuation}()$ of $\D^{\varphi}$.
Then, we reset the labels $\alpha(v)$ of all $v \in X$ to be $\emptyset$.
This takes $\Oh[\ell,\varphi]{|X| \cdot \height(\Tc)} = \Oh[\ell,\varphi]{\sum_{i=1}^p |X_i| \cdot \height(\Tc)}$ time.
\end{proof}

\subsection{Edge update sentences}
Let $G$ be a graph.
An \emph{edge update sentence} on $G$ is a tuple $\eudesc = (\varphi, X, X_1, \ldots, X_p)$, where $\varphi$ is a $\CMSO_1$ sentence with $p+1$ free set variables, $X \subseteq V(G)$, and $X_i \subseteq X$ for all $i \in [p]$.
The graph resulting from applying $\eudesc$ to $G$ is the graph $G'$ with $V(G') = V(G)$, and with $uv \in E(G')$ for $u \neq v$ if and only if either 
\begin{itemize}
\item $|\{u,v\} \cap X| \le 1$ and $uv \in E(G)$, or
\item $u,v \in X$ and $(G,\{u,v\},X_1,\ldots,X_p) \models \varphi$.
\end{itemize}

In other words, the edges inside $G[X]$ are defined by $\eudesc$, while other edges remain unchanged.
We define that \emph{size} of $\eudesc$ as $|\eudesc| = |X|$ and that the \emph{length} of $\eudesc$ is the length of $\varphi$, i.e., $|\varphi|$.

Next we give our data structure to turn edge update sentences to edge update descriptions.
We note that while it is not immediately obvious that a rank decomposition of $G$ of width $\ell$ would also be a rank decomposition of $G'$ whose width is bounded by $\Oh[\ell,|\varphi|]{1}$, our proof implies this because the resulting edge update description has width $\Oh[\ell,|\varphi|]{1}$.

\begin{lemma}
\label{lem:mso_edge_update}
Let $d,\ell \in \N$.
There exists an $\ell$-prefix-rebuilding data structure with overhead $\Oh[\ell,d]{1}$ that maintains a rooted annotated rank decomposition $\Tc$ that encodes a dynamic graph $G$ and additionally supports the following query:
\begin{itemize}
\item $\mathsf{EdgeUpdate}(\eudesc)$: Given an edge update sentence $\eudesc$ on $G$ of length at most $d$, returns an edge update description of width $\Oh[\ell,d]{1}$ that describes the graph $G'$ that results from applying $\eudesc$ to $G$. Runs in time $\Oh[\ell,d]{\height(\Tc) \cdot |\eudesc|}$.
\end{itemize}
\end{lemma}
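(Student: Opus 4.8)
The plan is to reduce the construction of the edge update description to a bounded collection of $\CMSO_1$ queries, each handled by the $\LinCMSO_1$ data structure of \Cref{lem:lincmsoprds}, together with explicit local recomputation along the prefix of $\Tc$ spanned by the affected vertices. First I would fix the prefix $\Tpref$ of $T$ to be the minimal prefix containing all leaves $\olp$ with $\reps(\olp) \subseteq X$; since $\Tc$ is rooted and has height $\height(\Tc)$, we have $|\Tpref| \le \Oh{|X| \cdot \height(\Tc)}$, and all edges $xy$ whose cut separates $X$ nontrivially lie inside $T[\Tpref]$ — so outside $\Tpref$ nothing changes, matching the shape of an edge update description $\prdesc = (W,\Tpref,\reps^\star,\repse^\star)$ with $W = X$. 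The real content is to compute, for every oriented edge $\oxy \in \oE(T[\Tpref])$, a representative $\reps^\star(\oxy)$ of $\lparts(\Tc)[\oxy]$ in $G'$ and the various bipartite graphs $\repse^\star(xy)$, $\repse^\star(xyz)$, $\repse^\star(xyz)$ for appendix triples.

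The key observation is that adjacency in $G'$ is expressible: whether $uv \in E(G')$ depends only on $G$, on whether $u,v \in X$, and on the membership of $u,v$ in the $X_i$'s, via the fixed sentence $\varphi$. Concretely, for two vertices $u,v$ I can decide $uv \in E(G')$ in time $\Oh[\ell,d]{1}$ by: checking $u,v \in X$ (a table lookup), and if so calling $\LinCMSO_1(\varphi', X, X_1,\ldots,X_p, \{u\}, \{v\})$ for a suitable quantifier-free modification $\varphi'$ of $\varphi$ that reads off the pair $\{u,v\}$ from two extra singleton free variables — each such call costs $\Oh[\ell,d]{\height(\Tc)}$ by \Cref{lem:lincmsoprds}; if $|\{u,v\}\cap X|\le 1$ we just query adjacency in $G$, which is itself a trivial $\CMSO_1$ query handled the same way. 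To assemble $\prdesc$ I would proceed bottom-up along $T[\Tpref]$: at each edge $\oxy \in \oE(T[\Tpref])$, the old representative $\reps(\oxy)$ together with the (at most $2^\ell$) representatives on the children edges gives, via the stored $\dmap$, a set of $\Oh[\ell]{1}$ candidate vertices whose $G'$-adjacencies to the candidate representatives on the opposite side I compute by the adjacency queries above; I then extract a minimal (or just a bounded) representative $\reps^\star(\oxy)$ of $\lparts(\Tc)[\oxy]$ in $G'$ and record the induced bipartite graphs. Since we only need $\reps^\star(\oxy)$ to be \emph{a} representative, not minimal, and the adjacency pattern of a representative to everything on the other side is determined by $\cutrk_{G'}$, boundedness of the candidate sets follows once we know the new cut rank is $\Oh[\ell,d]{1}$; this in turn follows because $G'$ differs from $G$ only inside $G[X]$ and $G[X]$ has an induced subgraph structure controlled by the $\Oh[\ell,d]{1}$ "types" of vertices of $X$ with respect to $\varphi$ — so a representative of $\lparts(\Tc)[\oxy]$ in $G$ plus a bounded number of type-witnesses in $X$ forms a representative in $G'$, bounding $\cutrk_{G'}(\lparts(\Tc)[\oxy])$ and hence the width of $\prdesc$ by $\Oh[\ell,d]{1}$.

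The main obstacle I anticipate is controlling the width of the output edge update description — i.e.\ proving the $\Oh[\ell,d]{1}$ bound on $\cutrk_{\repse^\star(xy)}(\reps^\star(\oxy))$ — rather than the bookkeeping. The clean way to get it is to observe that the equivalence "$u,v \in X$ have the same $\varphi$-behaviour" depends only on a bounded amount of data: the $\CMSO_1$-type of $G$ with the annotations $X, X_1, \ldots, X_p$ relative to the two-element free variable, which by Feferman--Vaught-style composition over the cut $(\lparts(\Tc)[\oxy], \lparts(\Tc)[\oyx])$ factors through a bounded index. Thus $X$ is partitioned into $\Oh[\ell,d]{1}$ classes, within each of which all vertices are interchangeable as far as new edges go; picking one witness per class on each side of the cut yields a representative of $\lparts(\Tc)[\oxy]$ in $G'$ of bounded size, which simultaneously bounds $\cutrk_{G'}$ and lets us compute $\repse^\star$ explicitly. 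With the width bounded, the running time is $\Oh[\ell,d]{1}$ per edge of $T[\Tpref]$ times $\Oh[\ell,d]{\height(\Tc)}$ per adjacency query, giving the claimed $\Oh[\ell,d]{\height(\Tc)\cdot |\eudesc|}$; the overhead $\Oh[\ell,d]{1}$ comes from maintaining, in parallel, the $\LinCMSO_1$ data structure of \Cref{lem:lincmsoprds} (and the base prefix-rebuilding structure of \Cref{lem:basicprds}), relaying every prefix-rebuilding update to it.
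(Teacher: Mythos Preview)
There is a genuine runtime gap in your plan, and it comes from using \Cref{lem:lincmsoprds} as a black box for per-pair adjacency tests. Look again at the cost there: a call $\LinCMSO_1(\varphi', X, X_1,\ldots,X_p, \{u\},\{v\})$ with nonempty argument sets takes time $\Oh[\ell,d]{\sum_i |X_i|\cdot\height(\Tc)}$, hence $\Oh[\ell,d]{|X|\cdot\height(\Tc)}$ per call --- not $\Oh[\ell,d]{\height(\Tc)}$ as you claim. With $|\Tpref|=\Oh{|X|\cdot\height(\Tc)}$ edges and $\Oh[\ell,d]{1}$ queries per edge, you get $\Oh[\ell,d]{|X|^2\cdot\height(\Tc)^2}$, quadratically off target. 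You cannot fix this by ``setting the $X,X_i$-labels once'': \Cref{lem:lincmsoprds} resets its labeling after every query. A second problem is that your bottom-up scheme needs to \emph{reduce} the union of children's candidate representatives to a bounded set at each edge, which requires testing $G'$-twinness over the cut --- but that in turn requires a $G'$-representative of the \emph{other} side, which you have not yet computed; your proposal does not break this circularity.

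The paper's proof works one layer below \Cref{lem:lincmsoprds}. It applies \Cref{lem:cmsordaut} directly to obtain a finite-state automaton $\autom_{\varphi'}$ (with $|Q|\le\Oh[\varphi,\ell]{1}$) deciding $G'$-adjacency, and then builds an augmented automaton $\autom'_{\varphi'}$ that at each oriented edge $\oxy$ computes, besides the state $q_{\oxy}$, a map $f_{\oxy}\colon Q\to V(G)\cup\{\bot\}$ recording for each state $q$ a witness vertex $v\in\lparts(\Tc)[\oxy]$ whose run on $(\Tc,\oxy,\alpha_v)$ reaches $q$; this $f_{\oxy}$ is computable from the children's $f$ by a one-step transition, so it is maintained by \Cref{lem:automatonprds}. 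On a query one sets the labels for $X,X_1,\ldots,X_p$ \emph{once} via $\mathsf{SetLabel}$ (total cost $\Oh[\ell,d]{|X|\cdot\height(\Tc)}$); thereafter every $f_{\oxy}$ along $\Tpref$ is available in $\Oh[\ell,d]{1}$, and $R_{\oxy}\coloneqq\{f_{\oxy}(q):q\in Q\}\setminus\{\bot\}$ is immediately a $G'$-representative of $\lparts(\Tc)[\oxy]$ of size at most $|Q|$ --- with the bipartite graphs $\repse^\star$ read off from the automaton's final mapping, no separate adjacency queries needed. Your Feferman--Vaught intuition is exactly the finiteness of $Q$; what you are missing is that to \emph{find} type-witnesses within budget you must maintain them as part of the automaton run, not extract them after the fact through $\LinCMSO_1$ calls.
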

\begin{proof}
In the initialization we construct a set of auxiliary automata and prefix-rebuilding data structures as follows.
We enumerate all $\CMSO_1$ sentences of length at most $d$ and at least one free set variable, i.e., all $\CMSO_1$ sentences that could be in the edge update sentence given in $\mathsf{EdgeUpdate}(\eudesc)$.
Let $\varphi$ be such sentence with $p+1$ free variables $Y, X_1, \ldots, X_p$, where $Y$ is the free variable that is supposed to hold the endpoints of the potential edge.
We construct a $\CMSO_1$ sentence $\varphi'$ with $p+2$ free variables $Y, X, X_1, \ldots, X_p$, so that $(G, Y, X, X_1, \ldots, X_p) \models \varphi'$ if and only if either 
\begin{itemize}
\item $Y \subseteq X$, $|Y| = 2$, and $(G, Y, X_1, \ldots, X_p)$ satisfies $\varphi$, or 
\item $Y \not\subseteq X$ and $Y = \{u,v\}$ with $uv \in E(G)$.
\end{itemize}
In particular, $(G, Y, X, X_1, \ldots, X_p) \models \varphi'$ if and only if $Y = \{u,v\}$ corresponds to an edge in the graph $G'$ resulting from applying the edge update sentence $(\varphi,X,X_1, \ldots, X_p)$.
Such $\varphi'$ with $|\varphi'| \le \Oh{|\varphi|}$ can be constructed in time $\Oh{|\varphi|}$.

Then we use \Cref{lem:cmsordaut} to construct a rank decomposition automaton $\autom_{\varphi'} = (Q,\Gamma,\iota,\delta,\epsilon)$ of width $\ell$ so that $\Gamma = 2^{[p+2]}$, the valuation of $\autom_{\varphi'}$ on $(G,\alpha)$ is $\top$ if and only if $(G,\alpha) \models \varphi'$, $|Q| \le \Oh[\varphi,\ell]{1}$, and the evaluation time is $\Oh[\varphi,\ell]{1}$.
We say that a labeling $\alpha \colon V(G) \rightarrow 2^{[p+2]}$ corresponds to an edge update sentence $(\varphi,X,X_1,\ldots,X_p)$ if $2 \in \alpha(v)$ if and only if $v \in X$, and $2+i \in \alpha(v)$ if and only if $v \in X_i$.

Let $\Tc = (T,V(G),\reps,\repse,\dmap)$ be an annotated rank decomposition that encodes $G$, and let $\alpha \colon V(G) \rightarrow 2^{[p+2]}$ be a labeling of $G$ with $1 \notin \alpha(v)$ for all $v \in V(G)$.
Let us also denote by $\alpha_{v}$ the labeling so that $\alpha_{v}(u) \setminus \{1\} = \alpha(u)$ for all $u \in V(G)$ and $1 \in \alpha_{v}(u)$ if and only if $u = v$.
With an oriented edge $\oxy \in \oE(T)$ we associate a 4-tuple $(q_{\oxy},f_{\oxy},g_{\oxy}, h_{\oxy})$ so that 
\begin{itemize}
\item $q_{\oxy}$ is the valuation of $\autom_{\varphi'}$ on $(\Tc,\oxy,\alpha)$,
\item $f_{\oxy} \colon Q \rightarrow V(G) \cup \{\bot\}$ is the function so that for every $q \in Q$ the value $f_{\oxy}(q)$ is the vertex $v \in \lparts(\Tc)[\oxy]$ with the smallest index so that the valuation of $\autom_{\varphi'}$ on $(\Tc,\oxy,\alpha_{v})$ is $q$, or $\bot$ if no such vertex $v$ exists,
\item $g_{\oxy} \colon \reps(\oxy) \rightarrow V(G)$ is the function so that for every $r \in \reps(\oxy)$ the value $g_{\oxy}(r)$ is the smallest-index vertex $v \in \lparts(\Tc)[\oxy]$ so that $N_G(v) \cap \reps(\oyx) = N_G(r) \cap \reps(\oyx)$, and
\item $h_{\oxy} \colon \reps(\oxy) \rightarrow Q$ is the function so that $h_{\oxy}(r)$ is the valuation of $\autom_{\varphi'}$ on $(\Tc,\oxy,\alpha_{g_{\oxy}(r)})$.
\end{itemize}

We construct a rank decomposition automaton $\autom_{\varphi'}'$ of width $\ell$ so that the valuation of $\autom_{\varphi'}'$ on $(\Tc,\oxy,\alpha)$ is the 4-tuple $(q_{\oxy},f_{\oxy},g_{\oxy},h_{\oxy})$.
Such automaton with evaluation time $\Oh[\ell,\varphi]{1}$ can be constructed as follows:
First, the state $q_{\oxy}$ can be maintained simply by simulating $\autom_{\varphi'}$.
Then, we observe that $f_{\oxy}$ can be computed from $f_{\ocax}$, $f_{\ocbx}$, $q_{\ocax}$, and $q_{\ocbx}$, where $\ocax$ and $\ocbx$ are the child edges of $\oxy$, in particular
\[f_{\oxy}(q) = \min\left\{\min_{q_1 \in Q \mid \delta(\tau(\Tc, \oxy), q_1, q_{\ocbx})=q} f_{\ocax}(q_1), \min_{q_2 \in Q \mid \delta(\tau(\Tc, \oxy), q_{\ocax}, q_2)=q} f_{\ocbx}(q_2)\right\},\]
where $\bot$ is regarded as larger than any vertex.
For $g_{\oxy}$ and $h_{\oxy}$, we first observe that if $g_{\oxy}(r) = v$, then there exists either $r' \in \reps(\ocax)$ with $g_{\ocax}(r') = v$ or $r' \in \reps(\ocbx)$ with $g_{\ocbx}(r') = v$.
With this observation, $g_{\oxy}$ can be computed from $g_{\ocax}$ and $g_{\ocbx}$ by using $\dmap(c_1 x y)$, $\dmap(c_2 x y)$, and $\repse(xy)$, which are stored in $\tau(\Tc, \oxy)$.
Then, if $g_{\oxy}(r) = v$ so that there exists $r' \in \reps(\ocax)$ with $g_{\ocax}(r') = v$, we have $h_{\oxy}(r) = \delta(\tau(\Tc,\oxy), h_{\ocax}(r'), q_{\ocbx})$; and the other case is similar.
This completes the construction of $\autom_{\varphi'}'$.

Then, we construct an $\ell$-prefix-rebuilding data structure $\D^{\varphi}$ by invoking \Cref{lem:automatonprds} with $\autom_{\varphi'}'$.
All prefix-rebuilding updates are relayed to $\D^{\varphi}$ so that it always holds the same annotated rank decomposition as the main prefix-rebuilding data structure of the lemma.
The vertex labeling $\alpha_{\varphi}$ that $\D^{\varphi}$ holds will always be $\alpha_{\varphi}(v) = \emptyset$ for all $v \in V(G)$, except when we are processing the $\mathsf{EdgeUpdate}(\eudesc)$ query.
Note that because $|\varphi| \le d$, the number of such prefix-rebuilding data structures $\D^{\varphi}$ we maintain is $\Oh[d]{1}$.

This completes the description of the initialization and the handling of prefix-rebuilding updates.
It remains to describe how $\mathsf{EdgeUpdate}(\eudesc)$ is implemented.

Let $\eudesc = (\varphi,X,X_1,\ldots,X_p)$.
We first use the $\mathsf{SetLabel}(v,\gamma)$ query of $\D^{\varphi}$ for all $v \in X$ to set the labeling $\alpha$ to correspond to $\eudesc$.
This takes $\Oh[\ell,\varphi]{\height(T) \cdot |\eudesc|}$ time.
Then, let $\Tpref$ be the unique smallest prefix of $T$ that contains all leaves $l \in \leafs(T)$ with $\reps(\olp) \subseteq X$.
We have that $|\Tpref| \le \height(T) \cdot |\eudesc|$.
The prefix $\Tpref$ will be the prefix of the edge update description we output.
With the help of $\D^{\varphi}$ we compute the triples $(q_{\oxy},f_{\oxy},g_{\oxy})$ for all oriented edges $\oxy \in \oE(T[\Tpref \cup \App_T(\Tpref)])$ in $\Oh[\ell,\varphi]{|\Tpref|} = \Oh[\ell,\varphi]{\height(T) \cdot |\eudesc|}$ time.
In particular, such triples are directly given by $\D^{\varphi}$ for all oriented edges directed towards the root, and for oriented edges directed towards the leaves we can compute them with $\autom_{\varphi'}'$ in a top-down manner.

Then, the purpose of the definition of $f_{\oxy}$ is to make the following hold.

\begin{claim}
\label{lem:mso_edge_update:claimf}
Let $\oxy \in \oE(T)$ and let $G'$ be the graph resulting from applying $\eudesc$ to $G$.
The set $R_{\oxy} = \bigcup_{q \in Q} \{f_{\oxy}(q)\} \setminus \{\bot\}$ is a representative of $\lparts(\Tc)[\oxy]$ in $G'$, and given $f_{\oxy}$ and $f_{\oyx}$ the graph $G'[R_{\oxy},R_{\oyx}]$ can be determined in $\Oh[\varphi,\ell]{1}$ time.
\end{claim}
\begin{claimproof}
Let $v \in \lparts(\Tc)[\oxy]$ and $u \in \lparts(\Tc)[\oyx]$.
We observe that $uv \in E(G')$ if and only if the valuation of $\autom_{\varphi'}$ on $(\Tc, \oxy, \alpha_{v})$ is $q_1$, the valuation of $\autom_{\varphi'}$ on $(\Tc, \oyx, \alpha_{v})$ is $q_2$, and $\varepsilon(\sigma(\Tc, \oxy), q_1, q_2) = \top$.
Therefore if $r \in R_{\oxy}$ and the valuations of $\autom_{\varphi'}$ on $(\Tc, \oxy, \alpha_{v})$ and $(\Tc, \oxy, \alpha_{r})$ are the same, then $N_{G'}(v) \cap \lparts(\Tc)[\oyx] = N_{G'}(r) \cap \lparts(\Tc)[\oyx]$.
Because for every $v \in \lparts(\Tc)[\oxy]$ there exists such $r \in R_{\oxy}$, we have that $R_{\oxy}$ is a representative of $\lparts(\Tc)[\oxy]$ in $G'$.
Then the graph $G'[R_{\oxy},R_{\oyx}]$ can be determined by verifying whether $\varepsilon(\sigma(\Tc, \oxy), q_1, q_2) = \top$ for all $q_1,q_2 \in Q$.
\end{claimproof}

In particular, by \Cref{lem:mso_edge_update:claimf} in the edge update description we can set $\reps^\star(\oxy) = R_{\oxy}$ for all $\oxy \in \oE(T[\Tpref])$.
It also gives a way to compute the graphs $\repse^\star(xy) = G'[R_{\oxy},R_{\oyx}]$ for $xy \in E(T[\Tpref])$.
For $xyz \in \PT(T[\Tpref])$, the graphs $\repse^\star(xyz) = G'[R_{\oxy}, R_{\ozy}]$ can be computed as follows.
Let $v \in R_{\oxy}$ and $u \in R_{\ozy}$, and let $w$ be the neighbor of $y$ that is not $x$ or $z$.
From $f_{\oxy}$ we know the valuation of $\autom_{\varphi'}$ on $(\Tc, \oxy, \alpha_{v})$, from $f_{\ozy}$ we know the valuation of $\autom_{\varphi'}$ on $(\Tc,\ozy,\alpha_{u})$, and from $q_{\vec{wy}}$ we know the valuation of $\autom_{\varphi'}$ on $(\Tc,\vec{wy},\alpha)$.
By combining these with $\Oh{1}$ transitions of $\autom_{\varphi'}$ we find whether $uv \in E(G')$.
This takes $\Oh[\varphi,\ell]{1}$ time for each $xyz \in \PT(T[\Tpref])$, i.e., $\Oh[\varphi,\ell]{\height(T) \cdot |\eudesc|}$ time in total.

It remains to compute for $xyz \in \PT(T)$ with $x \in \App_T(\Tpref)$ and $y,z \in \Tpref$ the graphs $\repse^\star(xyz) = G'[\reps(\oxy), \reps^\star(\ozy)]$.
For this, we recall that $g_{\oxy}$ stores for each $r \in \reps(\oxy)$ the smallest-index vertex $v \in \lparts(\Tc)[\oxy]$ so that $N_G(v) \cap \lparts(\Tc)[\oyx] = N_G(r) \cap \lparts(\Tc)[\oyx]$, and $h_{\oxy}$ stores for each $r \in \reps(\oxy)$ the valuation of $\autom_{\varphi'}$ on $(\Tc,\oxy,\alpha_{g_{\oxy}(r)})$.
Now, because $\lparts(\Tc)[\oxy]$ is disjoint from $X$, we have that $N_{G'}(v) \cap \lparts(\Tc)[\oyx] = N_{G'}(r) \cap \lparts(\Tc)[\oyx]$.
Therefore, it suffices to find the adjacencies of such vertices $v$ to $\reps^\star(\ozy)$ in $G'$.
Because we know the valuation of $\autom_{\varphi'}$ on $(\Tc,\oxy,\alpha_v)$, we can do this in a similar manner as in the previous paragraph.

This completes the description of the implementation of $\mathsf{EdgeUpdate}(\eudesc)$.
All of the steps took $\Oh[\varphi,\ell]{\height(T) \cdot |\eudesc|} = \Oh[d,\ell]{\height(T) \cdot |\eudesc|}$ time.
\end{proof}

\section{Dynamic rankwidth}
\label{sec:dynrw}
In this section we put together the material from the previous sections to give the final proof of our dynamic data structure for rankwidth.

Let us first bound how much a prefix-rebuilding update resulting from an edge update description can increase the potential of a rank decomposition.

\begin{lemma}
\label{lem:ubpoteudesc}
Let $\Tc$ be a rooted annotated rank decomposition that encodes a graph $G$, $\prdesc$ an edge update description that describes a graph $G'$, $\Tc'$ a rooted annotated rank decomposition that results from applying to $\Tc$ a prefix-rebuilding update that corresponds to $\prdesc$, and $\ell$ an integer so that the widths of both $\Tc$ and $\Tc'$ are at most $\ell$.
Then it holds that
\[\Phi_{\ell,G'}(\Tc') \le \Phi_{\ell,G}(\Tc) + \Oh[\ell]{|\prdesc| \cdot \height(\Tc)}\]
\end{lemma}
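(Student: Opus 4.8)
The plan is to exploit that a prefix-rebuilding update corresponding to an edge update description leaves the underlying tree and the leaf mapping untouched, so the only ingredient of the potential that can change is the width $\cutrk(\lparts(\Tc)[t])$ of a node, and to show that this can change only for nodes inside the prefix $\Tpref$ of the description. Concretely, write $\prdesc = (W,\Tpref,\reps^\star,\repse^\star)$, $\Tc = (T,V(G),\reps,\repse,\dmap)$ and $\Tc' = (T',V(G),\reps',\repse',\dmap')$. By the definition of "corresponds to $\prdesc$" we have $T' = T$ and $\Tc'$ encodes $G'$, and by the third bullet in the definition of a prefix-rebuilding update $\reps'(\olp) = \reps(\olp)$ for every leaf edge $\olp \in \leafe(T)$. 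Hence $\lparts(\Tc')[t] = \lparts(\Tc)[t] =: L_t$ for all $t \in V(T)$, and $\height_{T'}(t) = \height_T(t)$. Writing $f$ for the function from \Cref{lem:smallclosures}, this gives
\[
  \Phi_{\ell,G'}(\Tc') - \Phi_{\ell,G}(\Tc)
  = \sum_{t \in V(T)} \left( (2f(\ell))^{\cutrk_{G'}(L_t)} - (2f(\ell))^{\cutrk_{G}(L_t)} \right) \cdot \height_T(t),
\]
and a summand is nonzero only when $\cutrk_{G'}(L_t) \neq \cutrk_{G}(L_t)$.

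Next I would localize the change to $\Tpref$. Fix $t \notin \Tpref$. Since $\Tpref$ is a prefix, $t$ lies in the subtree hanging below some appendix $a \in \App_T(\Tpref)$, and that subtree is disjoint from $\Tpref$; so every leaf $l$ below $a$ satisfies $l \notin \Tpref$, whence by the defining constraint of an edge update description $\reps(\olp) \not\subseteq W$. As $\Tc$ encodes a graph, each $\reps(\olp)$ is a singleton, so $\reps(\olp) \cap W = \emptyset$, and therefore $L_t \subseteq \lparts(\Tc)[a]$ is disjoint from $W$, i.e. $W \subseteq \compl{L_t}$. Since $G'$ differs from $G$ only on edges with both endpoints in $W$, all such edges lie entirely inside $\compl{L_t}$, so the bipartite adjacency matrices between $L_t$ and $\compl{L_t}$ coincide in $G$ and $G'$; hence $\cutrk_G(L_t) = \cutrk_{G'}(L_t)$ and the summand vanishes. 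Thus only $t \in \Tpref$ contribute.

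Finally I would bound the surviving terms crudely: for $t \in \Tpref$ we have $\cutrk_{G'}(L_t) \le \ell$ because the width of $\Tc'$ is at most $\ell$, so each summand is at most $(2f(\ell))^{\ell} \cdot \height_T(t) \le (2f(\ell))^{\ell}\cdot\height(\Tc) = \Oh[\ell]{\height(\Tc)}$, using $\height_T(t) \le \height(T) = \height(\Tc)$ and $(2f(\ell))^{\ell} = \Oh[\ell]{1}$. Summing over the $|\Tpref| = |\prdesc|$ nodes of $\Tpref$ yields $\Phi_{\ell,G'}(\Tc') \le \Phi_{\ell,G}(\Tc) + \Oh[\ell]{|\prdesc|\cdot\height(\Tc)}$. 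I do not anticipate a real obstacle here; the only point needing a little care is the localization argument of the second paragraph, which relies on the structural constraint built into the definition of an edge update description (all leaf‑parts contained in $W$ sit inside $\Tpref$) together with the fact that in the graph case leaf representatives are singletons.
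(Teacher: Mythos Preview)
Your proposal is correct and follows essentially the same approach as the paper's proof: the tree and leaf mapping are unchanged, so only the widths $\cutrk(\lparts(\Tc)[t])$ can change, and they change only for $t\in\Tpref$; then each such term is bounded crudely by $(2f(\ell))^\ell\cdot\height(\Tc)$. The paper's proof is terser because it simply cites the observation (stated right after the definition of ``corresponds to an edge update description'') that the width of an edge can change only if it lies in $E(T[\Tpref])$, whereas you re-derive this fact from the singleton leaf representatives and the defining constraint on $\Tpref$.
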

\begin{proof}
Recall that both graphs $G$ and $G'$ share the same set of vertices and for both decompositions $\Tc$ and $\Tc'$ the tree $T$ and the sets $\reps(\olp)$ on leaf edges $\olp$ are the same.
Let $\Tpref$ be the prefix of $T$ given in the edge update description.
We have that $|\Tpref| = |\prdesc|$ and the width of an edge can change only if it is in $T[\Tpref]$.
Then, the conclusion follows directly from the definition of $\Phi$.
\end{proof}

Then we state a lemma about computing optimum-width rank decompositions by dynamic programming on annotated rank decompositions, which will be proved in \Cref{ssec:rankwidth-automaton}.

\begin{restatable}{lemma}{nearlineardecompositionrecovery}
\label{lem:near-linear-annotated-decomposition-recovery}
Let $k, \ell \geq 0$ be integers.
There exists an~algorithm that, given as input an~annotated rank decomposition $\Tc$ of width $\ell$ that encodes a partitioned graph $(G,\prt)$, in time $\Oh[\ell]{|\Tc| \log |\Tc|}$ either:
\begin{itemize}
\item correctly determines that $(G, \prt)$ has rankwidth larger than $k$; or
\item outputs an annotated rank decomposition that encodes $(G, \prt)$ and has width at most $k$.
\end{itemize}   
\end{restatable}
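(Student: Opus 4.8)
The plan is to combine the exact rankwidth automaton of \Cref{lem:overview-automaton} with the reconstruction subroutine of Jeong, Kim and Oum~\cite{DBLP:journals/siamdm/JeongKO21}, and then to turn the recovered non-annotated decomposition into an annotated one via \Cref{lem:rearangmain}; in effect this is the partitioned-graph analogue of \Cref{lem:overview-get-decomposition} with the annotation step folded in. We may assume $k < \ell$, since otherwise $\Tc$ itself has width at most $k$ and can be returned directly.

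First I would pass to the subspace-arrangement viewpoint: convert $(G,\prt)$ into the equivalent subspace arrangement $\Vc$ (one subspace per part), and let $\Tc^b$ be the rank decomposition of $\Vc$ isomorphic to the non-annotated decomposition underlying $\Tc$, so that $\Tc^b$ has width $\Oh{\ell}$ and the rankwidth of $\Vc$ is functionally tied to that of $(G,\prt)$. The reason annotated decompositions help here is \Cref{lem:overview-transition-matrices}: for each node $x$ of $\Tc^b$ with parent $p$ and children $c_1,c_2$, the value $\dim(B_x)$ and the transition matrices $M_{\ocax},M_{\ocbx}$ are recoverable from the annotations $\reps,\repse,\dmap$ of $\Tc$ around $x$ in $\Oh[\ell]{1}$ time, so the high-dimensional ambient space is never materialized. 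I then instantiate the automaton of \Cref{lem:overview-automaton} with parameters $k,\ell$ and run it on $\Tc$, which by that lemma computes, in total time $\Oh[\ell]{|\Tc|}$, the full set $\fullset_k(x)$ of size $\Oh[\ell]{1}$ at every node $x$.

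If $\fullset_k(r)=\emptyset$ at the root $r$, I report that $(G,\prt)$ has rankwidth more than $k$, which is correct by the correctness of the dynamic programming (\Cref{thm:overview-totally-pure}) together with the relation between the rankwidths of $(G,\prt)$ and $\Vc$. Otherwise, I invoke the reconstruction subroutine of~\cite{DBLP:journals/siamdm/JeongKO21}, which from $\{\fullset_k(x)\}_x$ and the transition matrices produces, in time $\Oh[\ell]{|\Tc|}$, a non-annotated rank decomposition $(T,\lambda)$ of $\Vc$ --- equivalently of $(G,\prt)$, with the leaves of $T$ labeled by the parts of $\prt$ --- of width at most $k$. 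As $T$ is cubic with $|\prt|\le|\Tc|$ leaves, $|V(T)|=\Oh{|\Tc|}$. Finally I apply \Cref{lem:rearangmain} to the input $\Tc$ and to $(T,\lambda)$, obtaining in time $\Oh[\ell]{|V(T)|\log|V(T)|}=\Oh[\ell]{|\Tc|\log|\Tc|}$ an annotated rank decomposition that encodes $(G,\prt)$ and corresponds to $(T,\lambda)$, hence has width at most $k$; this is the output, and its computation dominates the running time, giving the claimed bound.

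The hard part is not this assembly but the two ingredients it relies on, \Cref{lem:overview-automaton} and \Cref{lem:overview-transition-matrices} (both proved in the same section): one must exhibit a single globally consistent family of ordered bases $\{\BB_x\}_x,\{\BB'_x\}_x$, with each $\BB_x$ a prefix of $\BB'_p$, described implicitly through representatives, such that every transition matrix is recoverable in $\Oh[\ell]{1}$ time from $\reps,\repse,\dmap$ alone, and then verify that the Jeong--Kim--Oum dynamic programming and its reconstruction phase stay entirely within this implicit representation. Inside the present lemma the only points needing care are checking that the reconstruction subroutine uses nothing beyond the full sets and transition matrices (so that it runs in $\Oh[\ell]{|\Tc|}$ and returns a decomposition whose leaves are indexed by $\prt$, as \Cref{lem:rearangmain} requires) and that the partitioned-graph case is subsumed uniformly by the subspace-arrangement reduction.
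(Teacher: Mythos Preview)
Your proposal is correct and matches the paper's approach: the paper derives \Cref{lem:near-linear-annotated-decomposition-recovery} by first invoking \Cref{lem:linear-decomposition-recovery} (which itself is the combination of the exact rankwidth automaton \Cref{lem:exact-rank-automaton} with the JKO reconstruction of \Cref{lem:korean-decomposition-recovery}) to obtain a non-annotated decomposition of width at most $k$ in $\Oh[\ell]{|\Tc|}$ time, and then applying \Cref{lem:rearangmain} to annotate it in $\Oh[\ell]{|\Tc|\log|\Tc|}$ time. Your identification of where the real work lies --- in \Cref{lem:transition-matrices} and \Cref{lem:exact-rank-automaton} rather than in the assembly --- is also accurate.
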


Next we give the main lemma giving the basic version of our data structure.
In the statement it is important that the decomposition $\Tc$ is maintained by prefix-rebuilding updates, as this implies that any feature of $\Tc$ that can be maintained by a prefix-rebuilding data structure can be plugged in to the data structure.

\begin{lemma}
\label{lem:finaldynrwds}
Let $k,d,n \in \N$.
There is a data structure that using prefix-rebuilding updates maintains a rooted annotated rank decomposition $\Tc$ that encodes a dynamic $n$-vertex graph $G$ and has width at most $4k$, under the promise that $G$ has rankwidth at most $k$ at all times, under the following operations:
\begin{itemize}
\item $\mathsf{Init}(\tilde{\Tc})$: Given a rooted annotated rank decomposition $\tilde{\Tc}$ that encodes a graph $G$ and has width at most $4k$, initializes the data structure to hold $\Tc \coloneqq \tilde{\Tc}$. Runs in amortized $\Oh[k,d]{n \log^2 n}$ time.
\item $\mathsf{Update}(\eudesc)$: Given an edge update sentence $\eudesc$ of length at most $d$, either returns that the graph resulting from applying $\eudesc$ to $G$ would have rankwidth more than $k$, or applies $\eudesc$ to update $G$. Runs in amortized $|\eudesc| \cdot 2^{\Oh[k,d]{\sqrt{\log n \log \log n}}}$ time.
\end{itemize}
Moreover, it is guaranteed that after each operation, the height of $\Tc$ is at most $2^{\Oh[k,d]{\sqrt{\log n \log \log n}}}$, even though during the implementations of the operations the height of $\Tc$ can be greater.
\end{lemma}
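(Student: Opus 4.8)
The plan is to assemble the data structure of \Cref{lem:finaldynrwds} by stacking together all the prefix-rebuilding data structures constructed in the previous sections, so that they all hold the same rooted annotated rank decomposition $\Tc$ at all times, with every prefix-rebuilding update relayed to all of them. Concretely, I would maintain: the height data structure of \Cref{lem:utilityprds}; the height-reduction data structure of \Cref{lem:heightreduprds} (which internally also provides the $\mathsf{Refine}$ operation of \Cref{lem:refimain} and the closure machinery of \Cref{lem:closureprds}); the data structure of \Cref{lem:mso_edge_update} for translating edge update sentences into edge update descriptions; and the data structure of \Cref{lem:transleudescprdesc} for translating those edge update descriptions into genuine descriptions of prefix-rebuilding updates. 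We fix $\ell = 4k+1$ throughout, and we maintain the invariant that between operations the width of $\Tc$ is at most $4k$ and its height is at most $h := 2^{\Oh[k,d]{\sqrt{\log n \log\log n}}}$.

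For $\mathsf{Init}(\tilde\Tc)$: relay an initialization to each of the stacked data structures, then call $\mathsf{ImproveHeight}()$ once to bring the height down to $h$ (the input has width $\le 4k$ by assumption). The cost is $\Oh[k,d]{n\log n}$ for the initializations plus, by \Cref{lem:heightreduprds}, $\Oh[k]{(\Phi_\ell(\Tc)-\Phi_\ell(\Tc'))\log n} = \Oh[k]{\Phi_\ell(\tilde\Tc)\log n} \le \Oh[k]{n\cdot h\log n}$ for the height improvement; we charge $\Phi_\ell(\tilde\Tc) \le \Oh[k]{n h}$ to the potential so that amortized this is $\Oh[k,d]{n\log^2 n}$. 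For $\mathsf{Update}(\eudesc)$: first call $\mathsf{EdgeUpdate}(\eudesc)$ of \Cref{lem:mso_edge_update} to obtain an edge update description $\prdesc_e$ of width $\Oh[k,d]{1}$ describing the target graph $G'$, in time $\Oh[k,d]{h\cdot|\eudesc|}$; translate it via \Cref{lem:transleudescprdesc} into a description of a prefix-rebuilding update and apply it, relaying to all structures. This may raise the width to $\Oh[k,d]{1}$ in general, which is too much — so instead I would \emph{not} apply $\prdesc_e$ blindly, but rather: apply it to a \emph{scratch} copy to obtain an annotated rank decomposition of $G'$ of width $\Oh[k,d]{1}$, then invoke \Cref{lem:near-linear-annotated-decomposition-recovery} with the parameter $k$ to either (i) conclude that $G'$ has rankwidth $> k$ — in which case we report this and do not touch $\Tc$ — or (ii) obtain an annotated rank decomposition of $G'$ of width $\le k$. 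In case (ii) we must turn the current $\Tc$ into this new decomposition. Since the two decompositions need not share a large common suffix, the honest route is: keep the currently-held decomposition but locally repair the prefix touched by $\prdesc_e$. That prefix $\Tpref$ has size $\le h\cdot|\eudesc|$ and contains all edges whose width could have increased; so first apply the prefix-rebuilding update corresponding to $\prdesc_e$ (width rises to $4k+\Oh[k,d]{1} = \ell$ only inside $\Tpref$), then call $\mathsf{Refine}(\Tpref')$ of \Cref{lem:refimain} with $\Tpref'$ being $\Tpref$ with leaves removed, using the above rankwidth test to know in advance that $\mathsf{Refine}$ will succeed. By \Cref{lem:refimain} this restores width $\le 4k$ and, crucially, changes the potential by at most $\log n\cdot\Oh[k]{|\Tpref| + \height_T(\App_T(\Tpref))} \le \log^2 n\cdot\Oh[k,d]{|\eudesc|\cdot h}$. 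Finally call $\mathsf{ImproveHeight}()$ to restore $\height(\Tc)\le h$; by \Cref{lem:heightreduprds} this is free in amortized terms since it only decreases the potential.

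The amortized analysis uses the potential $\Phi_\ell(\Tc)$ from \Cref{sec:refi}. Each $\mathsf{Update}(\eudesc)$ does $\Oh[k,d]{|\eudesc|\cdot h}$ honest work for $\mathsf{EdgeUpdate}$, $\Oh[k,d]{|\eudesc|\cdot h}$ for the rankwidth test via \Cref{lem:near-linear-annotated-decomposition-recovery} applied to a scratch decomposition of size $\Oh[k,d]{h\cdot|\eudesc|}$ (costing an extra $\log$ factor), $\Oh[k,d]{|\eudesc|\cdot h\log(h|\eudesc|)}$ for the $\mathsf{Refine}$ call by the running-time bound in \Cref{lem:refimain}, and increases the potential by at most $\log^2 n\cdot\Oh[k,d]{|\eudesc|\cdot h}$; the trailing $\mathsf{ImproveHeight}()$ has running time $\Oh[k]{(\Phi_\ell(\Tc)-\Phi_\ell(\Tc'))\log n}$, which telescopes against the potential increase. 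Since every node contributes at most $\Oh[k]{h}$ to $\Phi_\ell$ and the potential is always nonnegative and $\le \Oh[k]{n h}$, the amortized cost per update is $|\eudesc|\cdot h\cdot\log^{\Oh{1}} n \cdot \Oh[k,d]{1} = |\eudesc|\cdot 2^{\Oh[k,d]{\sqrt{\log n\log\log n}}}$ after folding the polylog into the subpolynomial factor (using $h = 2^{\Oh[k,d]{\sqrt{\log n\log\log n}}}$ and a standard tradeoff). The "moreover" clause about the height after each operation is exactly the postcondition of the final $\mathsf{ImproveHeight}()$ call.

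The main obstacle I anticipate is the interaction between the \emph{width blow-up} caused by applying an edge update description (which has width $\Oh[k,d]{1}$, not $\le 4k$) and the requirement that $\mathsf{Refine}$ be called with a prefix containing all nodes of width $> 4k$. One must check that after applying $\prdesc_e$ every edge of width $> 4k$ indeed lies inside the (leafless) prefix $\Tpref'$ handed to $\mathsf{Refine}$ — this is true because $\prdesc_e$ only modifies annotations on $T[\Tpref]$ and an edge's width changes only if $W$ meets both sides of the cut, hence only for edges inside $T[\Tpref]$ — but it requires care to state the prefix-rebuilding data structure for $\mathsf{Refine}$ at the parameter $\ell = 4k+1$ while all other structures were set up at some larger bound $\ell' = \Oh[k,d]{1}$, and to argue the $\ell$-potential (which governs the amortization) is the right quantity throughout. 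A secondary subtlety is that \Cref{lem:near-linear-annotated-decomposition-recovery} is applied to a \emph{scratch} decomposition of size $\Oh[k,d]{h|\eudesc|}$, not to $\Tc$ itself, so its $\Oh[\ell]{|\Tc|\log|\Tc|}$ running time is $\Oh[k,d]{h|\eudesc|\log(h|\eudesc|)}$, which is absorbed by the subpolynomial factor; one should double-check that this test can be performed on the scratch copy without disturbing the maintained structures.
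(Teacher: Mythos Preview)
Your high-level plan---compute an edge update description, apply it, call $\mathsf{Refine}$ on the affected prefix, then $\mathsf{ImproveHeight}$, and amortize via the potential $\Phi_\ell$---matches the paper's. But three concrete steps do not go through as written.

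\textbf{The $\mathsf{Init}$ bound.} You claim $\Phi_\ell(\tilde\Tc) \le \Oh[k]{nh}$ and hence that calling $\mathsf{ImproveHeight}$ directly on $\tilde\Tc$ costs $\Oh[k,d]{n\log^2 n}$. But $\tilde\Tc$ is only promised to have width $\le 4k$; its height can be $\Theta(n)$, so $\Phi_\ell(\tilde\Tc)$ can be $\Omega_k(n^2)$ and the $\mathsf{ImproveHeight}$ call would cost $\Omega_k(n^2)$. The paper avoids this by first passing $\tilde\Tc$ through \Cref{lem:near-linear-annotated-decomposition-recovery}, then \Cref{lem:logdepthdecomp}, then \Cref{lem:rearangmain}, to obtain in $\Oh[\ell]{n\log n}$ time a decomposition of width $\le 2k$ and height $\Oh{\log n}$, whose potential is $\Oh[\ell]{n\log n}$.

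\textbf{The rankwidth test via a scratch copy.} You propose to test whether $G'$ has rankwidth $\le k$ by applying the edge update to a scratch copy and running \Cref{lem:near-linear-annotated-decomposition-recovery}. But any annotated rank decomposition encoding $G'$ has size $\Theta(n)$, so both the copy and the call to \Cref{lem:near-linear-annotated-decomposition-recovery} cost $\Omega_k(n)$ per update, not $\Oh[k,d]{h|\eudesc|}$; the later claim that the scratch decomposition has size $\Oh[k,d]{h|\eudesc|}$ is simply false. The paper instead lets $\mathsf{Refine}$ itself perform the test: by \Cref{lem:refimain}, $\mathsf{Refine}(\Tpref)$ already returns ``rankwidth $>k$'' in time $\Oh[\ell]{|\Tpref|}$ when no $c$-small $k$-closure exists, and in that case one rolls back the tentative update with the $\mathsf{Reverse}$ operation of \Cref{lem:prdscompose}. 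No separate scratch test is needed.

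\textbf{The width parameter.} You set $\ell = 4k+1$ for $\mathsf{Refine}$ but acknowledge the edge update description has width $\Oh[k,d]{1}$, possibly larger; an $\ell$-prefix-rebuilding data structure cannot then hold the post-update decomposition, and using two different $\ell$'s makes the potential-based amortization incoherent (the constant $f(\ell)$ in $\Phi_\ell$ depends on $\ell$). The paper's fix is to pick a single $\ell = \Oh[k,d]{1}$ large enough to dominate $4k+1$, $d$, and the width bound from \Cref{lem:mso_edge_update}, and to run \emph{all} structures---including $\mathsf{Refine}$ and the potential $\Phi_\ell$---at this $\ell$. The $\mathsf{Refine}$ precondition that $\Tpref$ contain all nodes of width $>4k$ is independent of $\ell$ and is still satisfied.
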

\begin{proof}
We choose $\ell$ to be the smallest positive integer so that $\ell \ge 4k+1$, $\ell \ge d$, and $\ell$ is at least the largest width of an edge update description that is returned by the $\mathsf{EdgeUpdate}(\eudesc)$ query of the $4k$-prefix-rebuilding data structure of \Cref{lem:mso_edge_update} with the parameter $d$.
Note that $\ell \le \Oh[k,d]{1}$.

Then, the $\mathsf{Init}(\tilde{\Tc})$ query is implemented as follows.
Given the decomposition $\tilde{\Tc}$ that encodes $G$, we first use \Cref{lem:near-linear-annotated-decomposition-recovery} to compute a rank decomposition $\tilde{\Tc'}$ of $G$ of width at most $k$, then use \Cref{lem:logdepthdecomp} to turn $\tilde{\Tc'}$ into a rank decomposition $\tilde{\Tc''}$ of height $\Oh{\log n}$ and width at most $2k$, and then use \Cref{lem:rearangmain} with $\tilde{\Tc}$ and $\tilde{\Tc''}$ to compute an annotated rank decomposition $\Tc$ that encodes $G$ and corresponds to $\tilde{\Tc''}$.
This runs in $\Oh[k,d]{n \log n} = \Oh[\ell]{n \log n}$ time in total, and because the resulting decomposition $\Tc$ has width at most $2k$ and height at most $\Oh{\log n}$, its $\ell$-potential is $\Phi_{\ell,G}(\Tc) \le \Oh[\ell]{n \log n}$.
The first prefix-rebuilding update is to update $\tilde{\Tc}$ into $\Tc$.
Note that we can set its description to fully contain $\Tc$ in $\Oh[\ell]{n}$ time.

We then initialize the $\ell$-prefix-rebuilding data structures $\D^{\mathsf{improve}}$ of \Cref{lem:heightreduprds}, $\D^{\mathsf{translate}}$ of \Cref{lem:transleudescprdesc}, and $\D^{\mathsf{prdsutil}}$ of \Cref{lem:prdscompose} with $\Tc$, and the $4k$-prefix-rebuilding data structure $\D^{\mathsf{upd}}$ of \Cref{lem:mso_edge_update} with $\Tc$.
Usually, these four data structures will hold the same current annotated rank decomposition $\Tc$ of width at most $4k$, but during the $\mathsf{Update}$ query the data structure $\D^{\mathsf{improve}}$ may hold an annotated rank decomposition $\Tc'$ of width up to $\ell$.
The initialization of these data structures takes $\Oh[\ell]{n}$ time.

Let $h = 2^{\Oh[\ell]{\sqrt{\log n \log \log n}}}$ be so that the maximum height of $\Tc$ after applying the $\mathsf{ImproveHeight}()$ operation of $\D^{\mathsf{improve}}$ is at most $h$.
We will maintain the invariant that between the $\mathsf{Update}$ queries, the height of $\Tc$ is at most $h$.
During the $\mathsf{Update}$ query the height may grow unboundedly.

Then, the $\mathsf{Update}(\eudesc)$ query is implemented as follows.
Let $G'$ be the graph resulting from applying $\eudesc$ to $G$.
We first use the data structure $\D^{\mathsf{upd}}$ to compute an edge update description $\prdesc$ corresponding to $\eudesc$.
This runs in time $\Oh[k,d]{\height(\Tc) \cdot |\eudesc|} \le \Oh[\ell]{h \cdot |\eudesc|}$, which is also an upper bound for $|\prdesc|$.
By the choice of $\ell$, the width of $\prdesc$ is at most $\ell$, which is also an upper bound for the width of the decomposition resulting from applying $\prdesc$ to $\Tc$.
Then we use the data structure $\D^{\mathsf{translate}}$ to translate $\prdesc$ into a description $\prdesc_1$ of a prefix-rebuilding update.
This runs in $\Oh[\ell]{|\prdesc|} = \Oh[\ell]{h \cdot |\eudesc|}$ time, which is also an upper bound for $|\prdesc_1|$.
Then, we apply $\prdesc_1$ to $\D^{\mathsf{improve}}$ (but not the other prefix-rebuilding data structures).
Let $\Tc' = (T, V(G), \reps', \repse', \dmap')$ be the decomposition resulting from applying $\prdesc_1$ to $\Tc$.
We have that $\Tc'$ encodes $G'$ and by \Cref{lem:ubpoteudesc} the $\ell$-potential of $\Tc'$ is at most \[\Phi_{\ell,G'}(\Tc') \le \Phi_{\ell,G}(\Tc) + \Oh[\ell]{h \cdot |\eudesc|}.\]
Let $\Tpref$ be the prefix of $T'$ associated with $\prdesc_1$.
We note that all nodes of $\Tc'$ of width larger than $4k$ are in $\Tpref$, and apply the $\mathsf{Refine}(\Tpref)$ operation of $\D^{\mathsf{improve}}$.
If it returns that the rankwidth of $G'$ is greater than $k$, we use the $\mathsf{Reverse}$ operation of $\D^{\mathsf{prdsutil}}$ to compute a description of a prefix-rebuilding operation that turns $\Tc'$ back to $\Tc$, apply it to $\D^{\mathsf{improve}}$, and then return.
In this case the time complexity is $\Oh[\ell]{h \cdot |\eudesc|}$.
The other case is that the $\mathsf{Refine}(\Tpref)$ operation returns a description $\prdesc_2$ of a prefix-rebuilding update that turns $\Tc'$ into a decomposition $\Tc''$ that encodes $G'$, has width at most $4k$, and satisfies
\begin{align*}
\Phi_{\ell,G'}(\Tc'') &\le \Phi_{\ell,G'}(\Tc') - \height_{T'}(\Tpref) + \log n \cdot \Oh[\ell]{|\Tpref| + \height_{T'}(\App_{T'}(\Tpref))}\\
&\le \Phi_{\ell,G'}(\Tc') + \log n \cdot \Oh[\ell]{h \cdot |\eudesc| + h^2 \cdot |\eudesc|}\\
&\le \Phi_{\ell,G}(\Tc) + \Oh[\ell]{h^2 \cdot |\eudesc| \cdot \log n}.
\end{align*}
The running time of the operation and therefore also $|\prdesc_2|$ is
\begin{align*}
&\log n \cdot \Oh[\ell]{\Phi_{\ell,G'}(\Tc')-\Phi_{\ell,G'}(\Tc'') + \log n \cdot (|\Tpref| + \height_{T'}(\App_{T'}(\Tpref)))}\\
\le& \log n \cdot \Oh[\ell]{\Phi_{\ell,G'}(\Tc')-\Phi_{\ell,G'}(\Tc'')} + \Oh[\ell]{h^2 \cdot |\eudesc| \cdot \log^2 n}\\
\le& \log n \cdot \Oh[\ell]{\Phi_{\ell,G'}(\Tc) - \Phi_{\ell,G'}(\Tc'')} + \Oh[\ell]{h^2 \cdot |\eudesc| \cdot \log^2 n}.
\end{align*}
Then we use $\D^{\mathsf{compose}}$ to compute from $\prdesc_1$ and $\prdesc_2$ a description $\prdesc_{\circ}$ of a prefix-rebuilding update that turns $\Tc$ into $\Tc''$.
We apply $\prdesc_{\circ}$ to $\D^{\mathsf{translate}}$, $\D^{\mathsf{compose}}$, and $\D^{\mathsf{upd}}$, and then apply $\prdesc_2$ to $\D^{\mathsf{improve}}$.
Now, all of these data structures hold the same decomposition $\Tc''$.
This takes time $\Oh[\ell]{|\prdesc_1|+|\prdesc_2|} \le \log n \cdot \Oh[\ell]{\Phi_{\ell,G'}(\Tc) - \Phi_{\ell,G'}(\Tc'')} + \Oh[\ell]{h^2 \cdot |\eudesc| \cdot \log^2 n}$.


Then, we call the $\mathsf{ImproveHeight}()$ operation of $\D^{\mathsf{improve}}$.
This updates $\Tc''$ through a series of prefix-rebuilding updates into a decomposition $\Tc'''$ that has height at most $h$ and width at most $4k$, and returns the corresponding sequence of descriptions of prefix-rebuilding updates.
We also apply the same sequence of prefix-rebuilding updates to $\D^{\mathsf{translate}}$, $\D^{\mathsf{compose}}$, and $\D^{\mathsf{upd}}$, noting that also the intermediate decompositions in this sequence have width at most $4k$.
It holds that $\Phi_{\ell,G'}(\Tc''') \le \Phi_{\ell,G'}(\Tc'')$ and the running time of this is
\begin{align}
&\log n \cdot \Oh[\ell]{\Phi_{\ell,G'}(\Tc'')-\Phi_{\ell,G'}(\Tc''')} \nonumber\\
\le&\log n \cdot \Oh[\ell]{\Phi_{\ell,G'}(\Tc)-\Phi_{\ell,G'}(\Tc''')} + \Oh[\ell]{h^2 \cdot |\eudesc| \cdot \log^2 n}\label{lem:finaldynrwds:eqtime}.
\end{align}
Finally, $\Tc'''$ is the decomposition that our data structure will hold after the $\mathsf{Update}$ operation.
Note that we updated $\Tc$ into $\Tc'''$ by prefix-rebuilding operations so that all intermediate decompositions had width at most $4k$.
As $\Phi_{\ell,G'}(\Tc''') \le \Phi_{\ell,G'}(\Tc'')$, the total time complexity of the operation is bounded by $\log n \cdot \Oh[\ell]{\Phi_{\ell,G'}(\Tc)-\Phi_{\ell,G'}(\Tc''')} + \Oh[\ell]{h^2 \cdot |\eudesc| \cdot \log^2 n}$.
We also have that $\Phi_{\ell,G'}(\Tc''') \le \Phi_{\ell,G'}(\Tc'') \le \Phi_{\ell,G}(\Tc) + \Oh[\ell]{h^2 \cdot |\eudesc| \cdot \log n}$.

Then we analyze the amortized time complexity.
Let us consider the sequence of $t$ first $\mathsf{Update}$ operations applied to the data structure, and let us denote by $\eudesc_1, \ldots, \eudesc_t$ the edge update sentences given in them and by $\Tc_1, \ldots, \Tc_t$ the decompositions after each of the updates, and by $\Tc_0$ the initial decomposition.
By \Cref{lem:finaldynrwds:eqtime}, the total time used in the first $t$ $\mathsf{Update}$ operations is at most
\[\sum_{i=1}^t \left(\Oh[\ell]{h^2 \cdot |\eudesc_i| \cdot \log^2 n} + \log n \cdot \Oh[\ell]{\Phi_{\ell}(\Tc_{i-1})-\Phi_{\ell}(\Tc_i)}\right).\]
Now, because $\Phi_{\ell}(\Tc_i)$ is always non-negative, $\Phi_{\ell}(\Tc_0) \le \Oh[\ell]{n \log n}$, and $\Phi_{\ell}(\Tc_i) \le \Phi_{\ell}(\Tc_{i-1})+\Oh[\ell]{h^2 \cdot |\eudesc_i| \cdot \log n}$, we have that
\[\sum_{i=1}^t \Oh[\ell]{\Phi_{\ell}(\Tc_{i-1})-\Phi_{\ell}(\Tc_i)} \le \Oh[\ell]{n \log n} + \sum_{i=1}^t \Oh[\ell]{h^2 \cdot |\eudesc_i| \cdot \log n}.\]
This implies that the total running time of the first $t$ operations is bounded by
\[\Oh[\ell]{n \log^2 n} + \sum_{i=1}^t \Oh[\ell]{h^2 \cdot |\eudesc_i| \cdot \log^2 n}.\]
We conclude the claimed amortized running time by charging the $\Oh[\ell]{n \log^2 n}$ term from the $\mathsf{Init}$ operation and for each $i \in [t]$ the $\Oh[\ell]{h^2 \cdot |\eudesc_i| \cdot \log^2 n}$ term from the $i$:th $\mathsf{Update}$ operation.
Note that $\Oh[\ell]{h^2 \cdot |\eudesc_i| \cdot \log^2 n} \le |\eudesc_i| \cdot 2^{\Oh[k,d]{\sqrt{\log n \log \log n}}}$.
\end{proof}

Then we add a couple of more features to the data structure of \Cref{lem:finaldynrwds}.

\begin{lemma}
\label{lem:dynfull}
Let $k,d,n \in \N$.
The data structure of \Cref{lem:finaldynrwds} can furthermore support the following operations:
\begin{itemize}
\item $\mathsf{InitEmpty}()$: Initializes the data structure to hold the $n$-vertex edgeless graph $G$. Runs in amortized $\Oh[k,d]{n \log^2 n}$ time.
\item $\LinCMSO_1(\varphi,X_1,\ldots,X_p)$: Given a $\LinCMSO_1$ sentence $\varphi$ of length at most $d$ with $p$ free set variables and $p$ vertex subsets $X_1,\ldots,X_p \subseteq V(G)$, returns the value of $\varphi$ on $(G,X_1,\ldots,X_p)$. Runs in time $\Oh[d]{1}$ if the sets $X_1,\ldots,X_p$ are empty, and in time $\sum_{i=1}^p |X_i| \cdot 2^{\Oh[k,d]{\sqrt{\log n} \log \log n}}$ otherwise.
\end{itemize}
\end{lemma}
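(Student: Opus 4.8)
The plan is to extend the data structure of \Cref{lem:finaldynrwds} without touching any of its internal logic: we keep all of its state, attach one more auxiliary prefix-rebuilding data structure, and implement the two new operations on top. This is exactly the use case highlighted before \Cref{lem:finaldynrwds} --- since $\Tc$ is maintained purely through prefix-rebuilding updates, any feature expressible as a prefix-rebuilding data structure can be plugged in at constant amortized overhead.

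For $\mathsf{InitEmpty}()$ (assuming $n \ge 2$, the case $n < 2$ being vacuous), I would first build in $\Oh{n}$ time an annotated rank decomposition $\tilde{\Tc}$ that encodes the $n$-vertex edgeless graph: take any binary tree with $n$ leaves, biject $V(G)$ with its leaves, set each $\reps(\olp)$ on a leaf edge equal to the corresponding singleton, each $\reps(\oxy)$ on an internal oriented edge equal to an arbitrary singleton subset of $\lparts(\tilde{\Tc})[\oxy]$, each $\repse(xy)$ equal to the edgeless bipartite graph on these two singletons, and each $\dmap(xyz)$ equal to the unique map between the relevant singletons. As all $\cutrk$-values in an edgeless graph are $0$, one checks directly that $\tilde{\Tc}$ encodes the edgeless graph and has width $0 \le 4k$, so it is a legal input to $\mathsf{Init}$. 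We then call $\mathsf{Init}(\tilde{\Tc})$; the $\Oh{n}$ construction cost is dominated by the amortized $\Oh[k,d]{n\log^2 n}$ cost of $\mathsf{Init}$.

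For the $\LinCMSO_1$ query, inside $\mathsf{Init}$ (hence also inside $\mathsf{InitEmpty}$) we additionally initialize on $\Tc$ the $4k$-prefix-rebuilding data structure $\D^{\mathsf{lincmso}}$ of \Cref{lem:lincmsoprds} with parameters $w \coloneqq d$ and $\ell \coloneqq 4k$; its overhead is $\Oh[k,d]{1}$, so this costs $\Oh[k,d]{|\Tc|} = \Oh[k,d]{n}$, absorbed into the initialization bound. Every prefix-rebuilding update that \Cref{lem:finaldynrwds} applies to $\Tc$ --- both at initialization and during the (possibly many) intermediate steps of $\mathsf{Update}$ --- is relayed to $\D^{\mathsf{lincmso}}$, so it always holds the same decomposition $\Tc$; relaying a description $\prdesc$ costs only $\Oh[k,d]{|\prdesc|}$, a constant-factor increase over the time \Cref{lem:finaldynrwds} already spent producing and applying $\prdesc$, hence the stated amortized running times are preserved. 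To answer $\LinCMSO_1(\varphi,X_1,\ldots,X_p)$ we forward the call to the $\LinCMSO_1$ query of $\D^{\mathsf{lincmso}}$; since $|\varphi| \le d = w$, it returns the value of $\varphi$ on $(G,X_1,\ldots,X_p)$ in time $\Oh[d]{1}$ when all $X_i$ are empty, and otherwise in time $\Oh[k,d]{\sum_{i=1}^p |X_i| \cdot \height(\Tc)}$. Between operations \Cref{lem:finaldynrwds} guarantees $\height(\Tc) \le 2^{\Oh[k,d]{\sqrt{\log n \log \log n}}}$, so the latter equals $\sum_{i=1}^p |X_i| \cdot 2^{\Oh[k,d]{\sqrt{\log n \log \log n}}}$, which by the standard tradeoff $2^{\Oh[k,d]{\sqrt{\log n \log \log n}}} \le 2^{\Oh[k,d]{\sqrt{\log n}\log \log n}}$ matches the claimed bound.

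The only point that needs care --- and the closest thing to an obstacle --- is that the $\LinCMSO_1$ query may only be invoked between operations, where the height bound holds; during $\mathsf{Update}$ the height of $\Tc$ can temporarily blow up, but since $\D^{\mathsf{lincmso}}$ is kept synchronized by relaying, no query is ever served while the tree is tall. Everything else is routine bookkeeping.
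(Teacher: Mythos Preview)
Your proof is correct and follows essentially the same approach as the paper: build an annotated rank decomposition of the edgeless graph by hand and feed it to $\mathsf{Init}$, and plug in the $4k$-prefix-rebuilding data structure of \Cref{lem:lincmsoprds} with $w=d$ alongside the other auxiliary structures. You spell out considerably more detail than the paper's two-sentence proof, but the content is the same.
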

\begin{proof}
First, the $\mathsf{InitEmpty}()$ operation can implemented by the $\mathsf{Init}(\tilde{\Tc})$ operation of the data structure of \Cref{lem:finaldynrwds}, as it is straightforward to construct an annotated rank decomposition of an $n$-vertex edgeless graph in $\Oh{n}$ time.
Then, to support the $\LinCMSO_1(\varphi,X_1,\ldots,X_p)$ queries, we maintain the $4k$-prefix-rebuilding data structure of \Cref{lem:lincmsoprds} for $w = d$.
\end{proof}

It is easy to see that \Cref{the:dynfull} is a special case of \Cref{lem:dynfull}: The operations to insert and delete edges can be simulated by edge update sentences of constant length and size.

\section{Almost-linear time algorithm for rankwidth}
\label{sec:comp}
In this section we prove \Cref{the:altrw} by using \Cref{lem:finaldynrwds}.
We prove in fact a bit more general statement, showing that if the $2^{\Oh[k]{\sqrt{\log n \log \log n}}}$ factor in \Cref{lem:finaldynrwds} could be improved to $\Oh[k]{\log^{\Oh{1}} n}$, then the $2^{\sqrt{\log n} \log \log n}$ factor in \Cref{the:altrw} could be improved to $\log^{\Oh{1}} n$.

\subsection{The twin flipping problem}
When $G$ is a graph and $F$ is a set of unordered pairs of vertices of $G$, we denote by $G \symd F$ the graph obtained from $G$ by ``flipping'' adjacencies between every pair in $F$.
In other words, $V(G \symd F) = V(G)$ and $E(G \symd F) = E(G) \symd F$.
Recall that a vertex $v$ is a twin of a vertex $u$ if $N(v) = N(u)$.
Our interface between \Cref{lem:finaldynrwds} and \Cref{the:altrw} will be the following problem.

\begin{problem}[Twin Flipping]
\label{prob:rwcomp}
Given an annotated rank decomposition of width at most $k$ that encodes an $n$-vertex bipartite graph $G$ with bipartition $(A,B)$, two disjoint vertex sets $X,Y \subseteq A$ so that every vertex in $X$ has a twin in $Y$, and a set $F \subseteq X \times B$ of size $|F| \le \Oh[k]{n}$, either determine that the rankwidth of $G \symd F$ is more than $k$, or return an annotated rank decomposition that encodes $G \symd F$ and has width at most $k$.
\end{problem}

In this section we will show that algorithms for \Cref{prob:rwcomp} can be translated to algorithms for computing rankwidth.
Before showing that, let us give an algorithm for Twin Flipping by using \Cref{lem:finaldynrwds}.
The following basic observation is useful in this algorithm and later in this section.

\begin{observation}
\label{obs:addtwin}
Let $G$ be a graph that contains twins $u,v \in V(G)$.
The rankwidth of $G$ is at most the rankwidth of $G - \{v\}$.
\end{observation}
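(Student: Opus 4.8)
The statement to prove is \Cref{obs:addtwin}: if $G$ contains twins $u,v$, then the rankwidth of $G$ is at most the rankwidth of $G - \{v\}$.

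The plan is to take an optimum-width rank decomposition $\Tc' = (T',\lmap')$ of $G - \{v\}$ and insert the vertex $v$ as a new leaf right next to the leaf holding $u$, arguing that no edge width increases. First I would handle the degenerate cases: if $G - \{v\}$ has fewer than two vertices there is no rank decomposition of it, but then $G$ has at most two vertices and the claim is vacuous or trivial; if $G - \{v\}$ has exactly two vertices, $G$ has three vertices and any rank decomposition has width at most $1$, which is also a valid lower bound for the width of a two-vertex decomposition — so I can assume $|V(G)| \ge 4$ and that $\Tc'$ is a genuine cubic-tree rank decomposition. Let $\olp = \lmap'(u)$ be the leaf edge of $T'$ corresponding to $u$, where $l$ is the leaf and $p$ its neighbor. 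Construct $T$ from $T'$ by subdividing the edge $lp$ with a new node $q$, and attaching a new leaf $l_v$ adjacent to $q$; set $\lmap(v) = \vec{l_v q}$, $\lmap(u) = \vec{l q}$, and $\lmap(w) = \lmap'(w)$ for all other $w$ (reinterpreting leaf edges along the unchanged part of the tree in the obvious way). This $\Tc = (T,\lmap)$ is a rank decomposition of $G$.

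Next I would verify the width bound edge by edge. Every oriented edge $\oxy$ of $T$ falls into one of three types. For edges entirely in the untouched part of $T'$ (i.e., not $\olp$ and not among the two new edges or the two halves of the subdivided edge on the far side from $l$), we have $\lparts(\Tc)[\oxy] = \lparts(\Tc')[\oxy]$ if it does not separate $u$ from the rest, or $\lparts(\Tc)[\oxy] = \lparts(\Tc')[\oxy] \cup \{v\}$ or $\lparts(\Tc)[\oxy] = \lparts(\Tc')[\oxy]$ on the side containing $u$. The key point is: for any set $S \subseteq V(G-\{v\})$, since $u$ and $v$ are twins in $G$, $\cutrk_G(S \cup \{v\}) = \cutrk_G(S)$ whenever $u \in S$, and $\cutrk_G(S) = \cutrk_{G-\{v\}}(S)$ whenever $v \notin S$ (as adjacencies among $V(G)\setminus\{v\}$ are the same in $G$ and $G-\{v\}$, and twin rows/columns do not change matrix rank). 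Combining these: every edge of $T$ whose two sides are $(S, V(G)\setminus S)$ with $u,v$ on the same side has width equal to $\cutrk_{G-\{v\}}$ of the corresponding edge of $T'$ (or, for the two brand-new edges incident to $q$, width at most $1$). Hence $\mathsf{width}(\Tc) \le \mathsf{width}(\Tc') = \mathrm{rw}(G - \{v\})$, giving $\mathrm{rw}(G) \le \mathrm{rw}(G - \{v\})$.

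The main (minor) obstacle is just the bookkeeping for the small-size edge cases, where ``rank decomposition'' is undefined for graphs with fewer than two vertices, and making sure the inequality still reads correctly there; and being careful that the new leaf edge and the two new tree edges incident to the subdivision node $q$ indeed have width at most $\cutrk_{G-\{v\}}$ of some edge of $T'$ or at most $1$ — the edge separating $\{u,v\}$ from everything else has width $\cutrk_G(\{u,v\}) = \cutrk_G(\{u\}) \le \cutrk_{G-\{v\}}(\{u\})$, which is the width of the old leaf edge $\olp$ in $\Tc'$ (and $\cutrk_{G-\{v\}}(\{u\}) \le 1$ trivially since it is a single row). Everything else is a routine consequence of the twin property and the rank-preservation of duplicate rows.
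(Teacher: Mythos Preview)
Your construction is exactly the paper's, and the equality you ultimately need---$\cutrk_G(A)=\cutrk_{G-\{v\}}(A\setminus\{v\})$ whenever $u,v\in A$---is correct. However, your two-step factorization of it is wrong: neither ``$\cutrk_G(S\cup\{v\})=\cutrk_G(S)$ whenever $u\in S$'' nor ``$\cutrk_G(S)=\cutrk_{G-\{v\}}(S)$ whenever $v\notin S$'' holds in general. Take $G=K_{2,2}$ with parts $\{u,v\}$ and $\{a,b\}$ (so $N(u)=N(v)=\{a,b\}$ and $u,v$ are twins) and $S=\{u,a\}$: then $\cutrk_G(S)=2$ but $\cutrk_G(S\cup\{v\})=\cutrk_{G-\{v\}}(S)=1$. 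Both intermediate equalities fail, even though their composite $\cutrk_G(S\cup\{v\})=\cutrk_{G-\{v\}}(S)$ is true.

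The fix is to argue the combined equality directly, as the paper does: if $u,v\in A$, then in the $A\times(V(G)\setminus A)$ biadjacency matrix over $\GF(2)$ the rows for $u$ and $v$ are identical (both are the indicator of $N(u)\cap(V(G)\setminus A)$), so deleting row $v$ preserves the rank; the resulting matrix has row set $A\setminus\{v\}$ and column set $V(G)\setminus A=(V(G)\setminus\{v\})\setminus(A\setminus\{v\})$, which is exactly the matrix defining $\cutrk_{G-\{v\}}(A\setminus\{v\})$. With this one-line replacement your argument goes through.
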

\begin{proof}
Observe that if $A \subseteq V(G) \setminus \{v\}$ and $u \in A$, then $\cutrk_{G - \{v\}}(A) = \cutrk_{G}(A \cup \{v\})$.
Therefore, we can construct a rank decomposition of $G$ of equal width from a rank decomposition of $G - \{v\}$ by adding two children $c_1,c_2$ to the leaf corresponding to $u$, and mapping $u$ to $c_1$ and $v$ to $c_2$.
\end{proof}

Then we give the algorithm for Twin Flipping.

\begin{lemma}
\label{lem:algoforprob}
There is a $n \cdot 2^{\Oh[k]{\sqrt{\log n \log \log n}}}$ time algorithm for \Cref{prob:rwcomp}.
\end{lemma}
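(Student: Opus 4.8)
The plan is to treat $G$ as a dynamic graph maintained by the data structure of \Cref{lem:finaldynrwds}, to perform the flips of $F$ one pair at a time, and finally to recover an optimum‑width decomposition via \Cref{lem:near-linear-annotated-decomposition-recovery}. Concretely, I would root the given annotated rank decomposition of $G$ in $\Oh[k]{1}$ time and call $\mathsf{Init}$ of the data structure of \Cref{lem:finaldynrwds} with parameter $k' \coloneqq k+1$ and some constant $d$; this is permitted since the given decomposition has width at most $k \le 4k'$ and $G$ has rankwidth at most $k \le k'$. Then I would go through the pairs of $F$ and, for each pair, flip the corresponding adjacency using the $\mathsf{Update}$ operation with an edge update sentence $\eudesc$ with $|\eudesc|=2$ whose $\CMSO_1$ part, applied to the two endpoints, toggles their adjacency. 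The crux is that \Cref{lem:finaldynrwds} requires the rankwidth of the maintained graph to stay at most $k'$ throughout, so the order of the flips must be chosen so that the intermediate graphs stay simple.

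The order I would use is: process the pairs of $F$ so that, for every $x \in X$, all pairs $(x,\cdot)\in F$ are processed consecutively (the order on $X$ and within each group being arbitrary). Write $X = \{v_1,\dots,v_t\}$ in this order and let $u_i \in Y$ be a twin of $v_i$ in $G$. Every graph $H$ arising during the process is then obtained from $G$ by replacing, for some index $j$ and some $D' \subseteq \{b : (v_j,b)\in F\}$, the neighbourhood of $v_i$ by $N_{G\symd F}(v_i)$ for $i<j$, the neighbourhood of $v_j$ by $N_G(v_j)\symd D'$, and leaving all other neighbourhoods unchanged. Note that each such $H$ is bipartite with sides $(A,B)$, that $H$ differs from $G\symd F$ only on edges incident to $\{v_j,\dots,v_t\}$, and that $v_i$ is still a twin of $u_i$ in $H$ for every $i>j$; also $G\symd F$ itself is one such $H$.

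The key lemma I would prove is: \emph{if $G\symd F$ has rankwidth at most $k$, then every such intermediate graph $H$ has rankwidth at most $k+1$.} For this, note first that $H-\{v_j,\dots,v_t\} = (G\symd F)-\{v_j,\dots,v_t\}$; restricting an optimum‑width rank decomposition of $G\symd F$ (width at most $k$) to this vertex set gives a rank decomposition of $H-\{v_j,\dots,v_t\}$ of width at most $k$. Then, for each $i\in\{j+1,\dots,t\}$ in turn, I would reinsert $v_i$ as a sibling leaf of $u_i$ (which is a leaf of the current decomposition, as $u_i\in Y$ is disjoint from $X$); since $N_H(v_i)=N_H(u_i)\subseteq B$ and $B$ is already entirely present, this leaves the width at most $k$, by the argument of \Cref{obs:addtwin} applied locally. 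This yields a rank decomposition of $H-\{v_j\}$ of width at most $k$, and finally reinserting $v_j$ as a new leaf subdividing an arbitrary leaf edge raises the width by at most one, since moving a single vertex across a cut changes its cutrank by at most one. I expect this lemma to be the main obstacle: the subtle point is that the already‑finished vertices $v_1,\dots,v_{j-1}$ carry no exploitable twin structure in $H$, which is precisely why the construction must start from a decomposition of $G\symd F$ rather than of $G$.

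Given the lemma, the rest is bookkeeping. If some $\mathsf{Update}$ call reports that the current graph has rankwidth more than $k'$, then some intermediate $H$ has rankwidth more than $k+1$, so by the contrapositive of the key lemma $G\symd F$ has rankwidth more than $k$, and I would report that and stop; note that in this case the data structure is left in a valid state, since a failing $\mathsf{Update}$ does not modify it. Otherwise all flips go through and the data structure holds an annotated rank decomposition of $G\symd F$ of width at most $4k'$; I would retrieve a copy of it and run \Cref{lem:near-linear-annotated-decomposition-recovery} with parameter $k$, which either certifies that $G\symd F$ has rankwidth more than $k$ (which I report) or returns a width‑at‑most‑$k$ annotated rank decomposition of $G\symd F$ (which I return). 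For the running time: the $\mathsf{Init}$ call costs amortized $\Oh[k]{n\log^2 n}$, each of the $|F|\le\Oh[k]{n}$ flips costs amortized $2^{\Oh[k]{\sqrt{\log n\log\log n}}}$, and the final retrieval and decomposition recovery cost $\Oh[k]{n\log n}$; summed over this fixed sequence of operations (so that actual cost is bounded by the sum of amortized costs) this gives the claimed bound $n\cdot 2^{\Oh[k]{\sqrt{\log n\log\log n}}}$.
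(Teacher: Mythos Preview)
Your proposal is correct and follows essentially the same approach as the paper: initialize the dynamic data structure of \Cref{lem:finaldynrwds}, perform the flips of $F$ grouped by the vertex of $X$, argue that every intermediate graph is obtained from $G\symd F$ by deleting the not-yet-finished vertices of $X$ and reinserting them as twins of their partners in $Y$, and finish with \Cref{lem:near-linear-annotated-decomposition-recovery}. The only difference is that the paper applies all flips incident to a given $v_i\in X$ in a \emph{single} $\mathsf{Update}$ call (an edge update sentence of size $|F_i|+1$), so its intermediate graphs are precisely your graphs $H$ with $D'=\emptyset$; in those, $v_j$ is still a twin of $u_j$, so the same argument gives rankwidth at most $k$ rather than $k+1$, and the paper can run the data structure with parameter $k$ instead of your $k+1$.
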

\begin{proof}
Denote the vertices in $X$ as $X = \{v_1, \ldots, v_{|X|}\}$.
Let $G_0 = G$, and for each $i \in [|X|]$ let $G_i$ be the bipartite graph with bipartition $(A,B)$, so that for $j\le i$ it holds that $N_{G_i}(v_j) = N_{G \symd F}(v_j)$, for $j>i$ it holds that $N_{G_i}(v_j) = N_{G}(v_j)$, and for $u \in A \setminus X$ it holds that $N_{G_i}(u) = N_{G}(u) = N_{G \symd F}(u)$.
We have that $G_{|X|} = G \symd F$ and because for each $v \in X$ there exists $u \in Y$ so that $N_G(v) = N_G(u) = N_{G \symd F}(u)$, each $G_i$ can be obtained from $G \symd F$ by adding twins and deleting vertices, which by \Cref{obs:addtwin} implies that if $G \symd F$ has rankwidth at most $k$ then also $G_i$ for each $i \in [|X|]$ has rankwidth at most $k$.

Now, for each vertex $v_i \in X$, let $F_i$ be the set of vertices $F_i = \{u \in B \mid v_i u \in F\}$.
We can write an edge update sentence $\eudesc_i$ of size $|\eudesc_i| = |F_i|+1$ and constant length that turns $G_{i-1}$ into $G_i$.
Let $\Tc$ be the given annotated rank decomposition that encodes the graph $G$.
We initialize the data structure of \Cref{lem:finaldynrwds} with $\Tc$ and $k$, and the length bound $d = \Oh{1}$ of these edge update sentences, which takes $\Oh[k]{n \log^2 n}$ amortized time.
We then apply the edge update sentences $\eudesc_i$ one by one to $\Tc$.
If the data structure at any point returns that the rankwidth would become larger than $k$, we can return that the rankwidth of $G \symd F$ is more than $k$.
This takes $|F| \cdot 2^{\Oh[k]{\sqrt{\log n \log \log n}}}$ amortized time in total.

Finally, we obtain an annotated rank decomposition $\Tc'$ that encodes $G \symd F$ and has width at most $4k$.
We then use \Cref{lem:near-linear-annotated-decomposition-recovery} to obtain in time $\Oh[k]{n \log n}$ an annotated rank decomposition $\Tc''$ that encodes $G \symd F$ and has width at most $k$ or determine that $G \symd F$ has rankwidth more than $k$, and then return $\Tc''$.

The running time is $\Oh[k]{n \log^2 n} + |F| \cdot 2^{\Oh[k]{\sqrt{\log n \log \log n}}} = n \cdot 2^{\Oh[k]{\sqrt{\log n \log \log n}}}$.
\end{proof}

Then, the rest of this section will be devoted to showing that algorithms for Twin Flipping imply algorithms for computing rankwidth, in particular, to proving the following lemma.

\begin{restatable}{lemma}{compmain}
\label{lem:compmain}
Let $T \colon \N \rightarrow \N$ be a function so that there is a $\Oh[k]{T(n)}$ time algorithm for \Cref{prob:rwcomp}.
Then there is an algorithm that given an $n$-vertex $m$-edge graph $G$ and an integer $k$, in time $\Oh[k]{T(n) \log^2 n} + \Oh{m}$ either returns that the rankwidth of $G$ is more than $k$, or returns an annotated rank decomposition that encodes $G$ and has width at most $k$.
\end{restatable}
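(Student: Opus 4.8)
The plan is to reduce the general problem to the bipartite case and then run the recursive decomposition-merging scheme sketched in \Cref{ssec:ov:altrw}, using the hypothetical $\Oh[k]{T(n)}$-time algorithm for \Cref{prob:rwcomp} as a black box wherever {\sc Twin Flipping} is invoked. First I would handle the reduction from arbitrary graphs to bipartite graphs: given $G$, apply Courcelle's construction~\cite{DBLP:journals/japll/Courcelle06} (to be recalled in \cref{ssec:bipartite-reduction}) to produce in $\Oh{n+m}$ time a bipartite graph $G^\dagger$ whose rankwidth is tied to that of $G$ by a function of $k$ only, together with a way to translate a rank decomposition of $G^\dagger$ of width $\Oh[k]{1}$ back into a rank decomposition of $G$ of width at most $k$ (or to conclude $\mathrm{rw}(G) > k$). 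Since the blow-up in both the vertex count and the target width is bounded by a function of $k$, it suffices to prove the statement for bipartite inputs, at the cost of replacing $k$ by $\Oh[k]{1}$; I will tacitly do so from here on.

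Next I would set up the data structures that the recursion needs. Build the {\sc Twin Detection} data structure of \cref{lem:twinds} on the bipartite input $G = (A,B,E)$ in time $\Oh{n+m}$, using the linear-time suffix array construction of~\cite{DBLP:journals/jacm/KarkkainenSB06} as in its proof. Then implement the recursive procedure of \cref{lem:main-thm-from-flipping}: it takes $A' \subseteq A$ and a subinterval $[\ell,r] \subseteq [1,|B|]$ and returns an annotated rank decomposition of $G[A', \{v_\ell,\dots,v_r\}]$ of width at most $k$ (or reports failure). The base case $\ell = r$ yields a forest, decomposed directly in $\Oh{|A'|}$ time. For $\ell < r$: Step 1 queries {\sc Twin Detection} on $(A', [\ell,r])$ to obtain a twin-free representative subset $A'' \subseteq A'$ with $|A''| \le \Oh[k]{|B'|}$ (this is \cref{lem:bipartitetwins}) and the twin-map $\eta$; by \Cref{obs:addtwin} it is enough to decompose $G[A'', B']$ and then re-insert the vertices of $A' \setminus A''$ as twins in $\Oh[k]{1}$ time each. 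Step 2 splits $B'$ into two halves $B_1, B_2$ and recurses on each to get $\Tc_1, \Tc_2$. Step 3 merges them via an inner recursion over subsets $B_2' \subseteq B_2$: when $|B_2'| = 1$, clone an arbitrary $u \in B_1$, call it $v$, and solve a single instance of {\sc Twin Flipping} with $X = \{v\}$ and $F = \{wv : w \in N(u) \triangle N(v)\}$, costing $\Oh[k]{T(|B'|)}$; when $|B_2'| \geq 2$, invoke {\sc Near-Twin Pairing} (\cref{lem:neartwins}) on $\Tc_2'$ to extract $t = \max(1, |B_2'|/\Oh[k]{1})$ disjoint near-twin pairs with total symmetric-difference size $\Oh[k]{|A''|}$, recurse on $B_2^{\mathrm{del}} = B_2' \setminus \{v_1,\dots,v_t\}$, clone each $u_i$ to $v_i$, and apply one {\sc Twin Flipping} call with $X = \{v_1,\dots,v_t\}$ and $F = \bigcup_i \{w v_i : w \in N(u_i) \triangle N(v_i)\}$, again $\Oh[k]{T(|B'|)}$ for the non-recursive work. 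The key point for correctness is that after cloning, $X$ satisfies the precondition of \Cref{prob:rwcomp} (each clone $v_i$ is a twin of $u_i$ which lies outside $X$), and that $G^\star \triangle F$ is exactly the desired $G[A'', B_1 \cup B_2']$; the rankwidth-at-most-$k$ guarantee on the intermediate graphs follows from \Cref{obs:addtwin} since each such graph is obtained from $G[A'',B']$ by vertex deletions and twin additions.

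Finally I would assemble the running-time bound. The inner merge recursion shrinks $|B'|$ by a multiplicative factor $1 - 1/\Oh[k]{1}$ per level, so it has $\Oh[k]{\log|B'|}$ levels, each contributing $\Oh[k]{T(|B'|)}$ non-recursive cost (plus the $\Oh{|A'|\log n}$ {\sc Twin Detection} query, which is dominated since $T(n) = \Omega(n)$ without loss of generality), giving $\Oh[k]{T(|B'|)\log|B'|}$ for one call of the outer procedure excluding its two recursive calls on halves of $B'$. Summing over the outer recursion tree — $\Oh{\log n}$ levels, total interval length $\Oh[k]{n}$ per level since the $A''$-sets at a fixed level have sizes bounded by $\Oh[k]{|B'|}$ and the $B'$-intervals at a level partition $[1,|B|]$ — yields $\Oh[k]{T(n)\log^2 n}$; adding the $\Oh{n+m}$ bipartite reduction and {\sc Twin Detection} setup gives the claimed $\Oh[k]{T(n)\log^2 n} + \Oh{m}$. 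I expect the main obstacle to be Step 3: verifying that the cloned graph $G^\star$ together with the flip set $F$ really encodes the target merged graph, that $|F| \le \Oh[k]{n}$ so the {\sc Twin Flipping} hypothesis applies, and that all intermediate decompositions stay within width $k$ — in short, the bookkeeping tying {\sc Near-Twin Pairing}'s output to a legitimate {\sc Twin Flipping} instance, together with the amortized size accounting across the two nested recursions, is where the real work lies. The bipartite reduction itself is standard once \cite{DBLP:journals/japll/Courcelle06} is in hand, and the data-structure pieces (\cref{lem:twinds,lem:neartwins,lem:bipartitetwins}) are quoted rather than reproved here.
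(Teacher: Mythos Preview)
Your proposal is correct and follows essentially the same approach as the paper: the reduction to bipartite graphs via \cref{ssec:bipartite-reduction}, the outer divide-and-conquer on $B$ with twin filtering via \cref{lem:twinds} and \cref{lem:bipartitetwins}, and the inner merge recursion using \cref{lem:neartwins} together with the {\sc Twin Flipping} black box are exactly the components assembled in \cref{lem:combine} and the proof of \cref{lem:compmain}. The only minor omission is that the summation over the outer recursion (from ``total interval length $\Oh[k]{n}$ per level'' to ``$\Oh[k]{T(n)\log n}$ per level'') tacitly relies on the convexity assumption on $T$ that the paper states explicitly just after the lemma.
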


Putting \Cref{lem:algoforprob,lem:compmain} together implies the first part of \Cref{the:altrw}.
In particular, as $n \cdot 2^{\Oh[k]{\sqrt{\log n \log \log n}}} \le \Oh[k]{n \cdot 2^{\sqrt{\log n} \log \log n}/\log^2 n}$, we can set $T(n) = n \cdot 2^{\sqrt{\log n} \log \log n}/\log^2 n$ to obtain an algorithm with a running time of $\Oh[k]{n \cdot 2^{\sqrt{\log n} \log \log n}} + \Oh{m}$.
Then, we prove in \Cref{sec:cliquewidth} (\Cref{lem:rdtocwexp}) that given an annotated rank decomposition of width $k$ that encodes $G$, we can in $\Oh[k]{n}$ time output a $(2^{k+1}-1)$-expression for cliquewidth of $G$.
This gives the second part of \Cref{the:altrw}.

We remark that in the proof of \Cref{lem:compmain} we make the natural assumptions that $T(n) \ge \Omega(n)$ and $T(n)$ is increasing and convex.

\subsection{Reduction to bipartite graphs}
\label{ssec:bipartite-reduction}
We will work on bipartite graphs in our algorithm, so the first step is to reduce the task of computing the rankwidth of a graph to bipartite graphs.
For this, we will use a reduction given by Courcelle~\cite{DBLP:journals/japll/Courcelle06} and further analyzed by Oum~\cite[Section 4.1]{Oum08}.

Let $G$ be a graph.
We define $B(G)$ to be the bipartite graph whose vertex set is $V(B(G)) = V(G) \times [4]$, and edge set is defined so that
\begin{enumerate}
\item if $v \in V(G)$ and $i \in [3]$, then $(v,i)$ is adjacent to $(v,i+1)$ in $B(G)$ and
\item if $uv \in E(G)$, then $(u,1)$ is adjacent to $(v,4)$ in $B(G)$.
\end{enumerate}

We observe that given an $n$-vertex $m$-edge graph $G$, we can compute $B(G)$ in $\Oh{n+m}$ time.
Oum showed that the rankwidths of $G$ and $B(G)$ are tied to each other.
\begin{lemma}[\cite{Oum08}]
\label{lem:rwbg}
If the rankwidth of $G$ is $k$, then the rankwidth of $B(G)$ is at least $k/4$ and at most $\max(2k,1)$.
\end{lemma}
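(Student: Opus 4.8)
\textbf{Plan for the proof of \Cref{lem:rwbg}.}
The statement has two directions. For the lower bound, the plan is to show that $G$ can be recovered from $B(G)$ by a rankwidth-bounded ``decoding'' construction, so that $\tw$-like monotonicity under such constructions gives $\mathrm{rw}(G) \le 4 \cdot \mathrm{rw}(B(G))$. Concretely, note that $V(G)$ is in bijection with the vertex set $V(G) \times \{1\}$ of $B(G)$, and two vertices $u,v$ are adjacent in $G$ exactly when $(u,1)$ and $(v,1)$ are joined by the length-$3$ path $(u,1)(v,4)(v,3)(v,2)$ followed by --- wait, we should be careful: adjacency $uv \in E(G)$ corresponds to $(u,1) \sim (v,4)$ and (symmetrically, if we had added it) we only added $(u,1) \sim (v,4)$ for \emph{one} orientation of each edge. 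Since $G$ is undirected, to recover adjacency we use that $(u,1) \sim (v,4)$ \emph{or} $(v,1) \sim (u,4)$; in either case the two endpoints of the original edge lie at distance $3$ in $B(G)$ via the gadget path on one of the two sides. The upshot is that $G$ is obtained from $B(G)$ by taking the induced subgraph on $V(G)\times\{1\}$ and adding an edge between $(u,1)$ and $(v,1)$ whenever they are connected by a path of length exactly $3$ in $B(G)$ --- this is a fixed $\CMSO_1$-definable operation with a bounded number of ``layers,'' and each such layer at most quadruples the cutrank. I would formalize this by fixing a rank decomposition $(T,\lambda)$ of $B(G)$ of width $\ell$, restricting $\lambda$ to the leaves $\lambda^{-1}(V(G)\times\{1\})$ and contracting degree-$2$ nodes to get a decomposition of $G$, and bounding the width of each edge by a submodularity-style argument: the adjacency of $u$ to $V(G)\setminus\{u\}$ in $G$ is determined by the adjacency of the four vertices $(u,1),\dots,(u,4)$ across the corresponding cut in $B(G)$, so $\cutrk_G(A) \le 4\cdot\cutrk_{B(G)}(A \times [4])$. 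This yields $\mathrm{rw}(G) \le 4\,\mathrm{rw}(B(G))$, i.e.\ $\mathrm{rw}(B(G)) \ge \mathrm{rw}(G)/4 = k/4$.

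For the upper bound $\mathrm{rw}(B(G)) \le \max(2k,1)$, the plan is to take an optimal rank decomposition $(T,\lambda)$ of $G$ (of width $k$) and expand it into a rank decomposition of $B(G)$ by replacing each leaf $\lambda(v)$ with a small gadget tree holding the four copies $(v,1),\dots,(v,4)$. Then every edge of the expanded decomposition either lies inside such a gadget (where the cutrank is trivially $\Oh{1}$, in fact at most $1$ or $2$ by direct inspection of the path gadget) or corresponds to an edge $e$ of $T$ with some cut $(A,\overline A)$ of $V(G)$. For the latter, the induced cut of $V(B(G))$ is $(A\times[4], \overline A\times[4])$, and the adjacency matrix across this cut in $B(G)$ has a block structure: the only cross edges are of the form $(u,1)\sim(v,4)$ with $uv\in E(G)$, $u\in A$, $v\in\overline A$ (or $u\in\overline A$, $v\in A$). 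Hence the cross-adjacency matrix is, up to reordering rows/columns, a direct sum of two copies of the (bipartite adjacency submatrix of the) cut matrix of $(A,\overline A)$ in $G$ --- one copy from edges oriented $A\to\overline A$ and one from edges oriented $\overline A\to A$. Its rank is therefore at most $2\cutrk_G(A) \le 2k$. Taking the maximum over all edges gives width $\le \max(2k,1)$, where the $\max$ with $1$ handles the degenerate case $k=0$ (the gadget edges still have cutrank $1$).

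The main obstacle I expect is purely bookkeeping rather than conceptual: carefully setting up the gadget trees so that (i) $B(G)$ has at least two leaves / the decomposition is a genuine cubic tree, and (ii) the internal edges of each gadget genuinely have small cutrank. The path $(v,1)(v,2)(v,3)(v,4)$ plus the single pendant-style connection $(v,1)\sim(u,4)$ means that when we split a gadget, the ``external'' behavior of the part containing $(v,1)$ versus the part containing $(v,4)$ differs, but in each case only $\Oh{1}$ rows/columns interact, so the rank stays bounded by $2$; I would just verify the four or five sub-cases of where inside the gadget the split occurs. The other mildly delicate point is the $\max(2k,1)$ rather than $2k$: when $k = 0$, $G$ is edgeless but $B(G)$ is a disjoint union of paths $P_4$, which has rankwidth $1$, so the bound must be stated with the $\max$. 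Both directions are otherwise routine applications of the block structure of $B(G)$'s adjacency matrix and of \Cref{lem:symsubmod} (submodularity and symmetry of $\cutrk$), and I would cite \cite{Oum08} for the precise constants once the structural picture is in place.
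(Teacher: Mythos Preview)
The paper does not prove this lemma; it is simply cited from \cite{Oum08} as a known result. So there is no ``paper's own proof'' to compare against, only the original source.

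Your upper bound sketch is correct and standard: blowing up each leaf of an optimal decomposition of $G$ into a small caterpillar for $(v,1),\dots,(v,4)$ and checking that the cross-cut adjacency matrix over a cut $(A\times[4],\overline A\times[4])$ has rank at most $2\cutrk_G(A)$ is exactly how this direction goes, and the $\max(\cdot,1)$ handles $k=0$ as you note.

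Your lower bound plan, however, has a genuine gap. You propose to take a width-$\ell$ decomposition of $B(G)$, restrict it to the leaves in $V(G)\times\{1\}$, and then invoke an inequality of the form $\cutrk_G(A)\le 4\cdot\cutrk_{B(G)}(A\times[4])$. But after restriction, an edge of the new tree corresponds to a cut $(X,\overline X)$ of $V(B(G))$ that is \emph{not} of the form $(A\times[4],\overline A\times[4])$: the vertices $(v,2),(v,3),(v,4)$ can lie on either side of $X$, independently of where $(v,1)$ is. In particular, for $u\in A$ and $w\in\overline A$ with $uw\in E(G)$, the edge $(u,1)(w,4)$ of $B(G)$ need not cross $(X,\overline X)$ at all if $(w,4)\in X$, so the rank of the $G$-cut matrix over $(A,\overline A)$ is not controlled by $\cutrk_{B(G)}(X)$. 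Your inequality, even if true for the specific set $A\times[4]$, is simply not applicable to the cuts that actually arise.

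Oum's actual argument for this direction does not go through restriction of the decomposition. It uses that $G$ is a \emph{vertex-minor} of $B(G)$: by performing suitable local complementations inside each gadget and then deleting the auxiliary vertices, one recovers $G$ from $B(G)$, and rankwidth is monotone under vertex-minors. The factor $4$ in the statement is an artifact of how the bound is packaged in \cite{Oum08}; the point is that recovering $G$ requires an operation whose effect on rankwidth is bounded, and a naive ``each vertex of $G$ corresponds to four vertices of $B(G)$'' count does not substitute for that.
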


Even though Oum gives an explicit construction of a rank decomposition of $G$ given a rank decomposition of $G$, it seems complicated to adapt to work in linear time with annotated rank decompositions.
We use an alternative approach by using edge update sentences.

\begin{lemma}
\label{lem:trrdbgtoardg}
Let $G$ be an $n$-vertex graph.
There is an algorithm that given an annotated rank decomposition $\Tc$ that encodes $B(G)$ and has width $k$, in time $\Oh[k]{n \log n}$ returns an annotated rank decomposition that encodes $G$ and has optimum width.
\end{lemma}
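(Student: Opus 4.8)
The plan is to turn the given annotated rank decomposition of $B(G)$ into an annotated rank decomposition of $G$ by a single ``dense'' $\CMSO_1$ edge update that reads off $E(G)$ from the length-$3$ subdivision paths of $B(G)$; this update costs $\Oh[k]{\height(\Tc)\cdot|V(B(G))|}$ by \Cref{lem:mso_edge_update}, so we first reduce the height of the decomposition to $\Oh{\log n}$. Finally, we obtain the \emph{optimum} width by repeatedly invoking the exact recovery algorithm \Cref{lem:near-linear-annotated-decomposition-recovery} with increasing target widths.

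\textbf{Step 1: height reduction.} Extract from $\Tc$ the underlying rank decomposition $(T,\lmap)$ of $B(G)$ (each leaf edge carries a singleton representative), apply \Cref{lem:logdepthdecomp} to obtain a rooted rank decomposition $(T',\lmap')$ of $B(G)$ of width at most $2k$ and height $\Oh{\log n}$, and then apply \Cref{lem:rearangmain} with $\Tc$ and $(T',\lmap')$ to obtain an annotated rank decomposition $\hat\Tc$ that encodes $B(G)$, corresponds to $(T',\lmap')$, and hence has width at most $2k$ and height $\Oh{\log n}$. Both steps run in $\Oh[k]{n\log n}$ time.

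\textbf{Step 2: the edge update.} Since we constructed $B(G)$, the sets $X_i \coloneqq V(G)\times\{i\}$ for $i\in[4]$ are known explicitly. Let $\varphi$ be the constant-length $\CMSO_1$ sentence with free set variables $Y,X_1,X_2,X_3,X_4$ asserting that $Y=\{a,b\}$ with $a,b\in X_1$ and that there exist $w_2\in X_2$, $w_3\in X_3$, $w_4\in X_4$ with $b\sim w_2$, $w_2\sim w_3$, $w_3\sim w_4$, and $a\sim w_4$. Because the subdivision paths $(v,1)(v,2)(v,3)(v,4)$ of $B(G)$ are vertex-disjoint and the only cross edges are $(u,1)(v,4)$ for $uv\in E(G)$, for $a=(u,1)$ and $b=(v,1)$ the chain $w_2,w_3,w_4$ is forced to be $(v,2),(v,3),(v,4)$, and $a\sim_{B(G)}(v,4)$ iff $uv\in E(G)$; also no such chain exists when $b\notin X_1$. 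Hence the edge update sentence $\eudesc=(\varphi,V(B(G)),X_1,X_2,X_3,X_4)$ transforms $B(G)$ into a graph $G'$ on $V(B(G))$ in which $G'[X_1]$ is isomorphic to $G$ via $(v,1)\leftrightarrow v$ and all vertices of $X_2\cup X_3\cup X_4$ are isolated. We now initialize the data structure of \Cref{lem:mso_edge_update} with $\hat\Tc$ (parameter $\ell=2k$, length bound $d$ equal to $|\varphi|=\Oh{1}$) in $\Oh[k]{n}$ time, and call $\mathsf{EdgeUpdate}(\eudesc)$; since $|\eudesc|=4n$ and $\height(\hat\Tc)=\Oh{\log n}$, this returns in $\Oh[k]{n\log n}$ time an edge update description $\prdesc$ of width $\Oh[k]{1}$ describing $G'$. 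We translate $\prdesc$ into a description of a prefix-rebuilding update with \Cref{lem:transleudescprdesc} and apply it with \Cref{lem:basicprds}, both in $\Oh[k]{n}$ time; this produces an annotated rank decomposition of width $\Oh[k]{1}$ encoding $G'$. Applying \Cref{lem:dropleafs} with the parts $\{\{(v,1)\}:v\in V(G)\}$ and relabelling $(v,1)\mapsto v$ then yields an annotated rank decomposition $\Tc_G$ of width $\ell_0=\Oh[k]{1}$ that encodes $G$, in $\Oh[k]{n}$ additional time.

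\textbf{Step 3: reaching optimum width.} The rankwidth of $G$ is at most $4k$ (by \Cref{lem:rwbg}, as the rankwidth of $B(G)$ is at most $k$), and in any case it is at most $\ell_0$. For $k'=0,1,2,\dots$ we run \Cref{lem:near-linear-annotated-decomposition-recovery} on $\Tc_G$ with target $k'$; the procedure succeeds no later than $k'=\ell_0$, and the first success outputs an annotated rank decomposition of $G$ of width exactly $k'=\mathrm{rw}(G)$, i.e.\ of optimum width. Each call costs $\Oh[\ell_0]{|\Tc_G|\log|\Tc_G|}=\Oh[k]{n\log n}$, and there are at most $\ell_0+1=\Oh[k]{1}$ calls, so the whole algorithm runs in $\Oh[k]{n\log n}$ time.

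\textbf{Main obstacle.} The crux is performing the edge update in near-linear time despite $|\eudesc|=\Theta(n)$: the running time of \Cref{lem:mso_edge_update} scales with $\height(\Tc)\cdot|\eudesc|$, so the preliminary height reduction via \Cref{lem:logdepthdecomp} and \Cref{lem:rearangmain} is essential. A secondary point is the design of the constant-length sentence $\varphi$, which must recover $E(G)$ from $E(B(G))$ using only the four layer labels; this is possible precisely because the subdivision paths of $B(G)$ are vertex-disjoint, making the relevant length-$3$ chain unique.
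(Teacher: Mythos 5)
Your proposal is correct and follows essentially the same route as the paper's proof: the same constant-length $\CMSO_1$ edge update sentence reading off $E(G)$ from the subdivision paths of $B(G)$, the same preliminary height reduction via \Cref{lem:logdepthdecomp} and \Cref{lem:rearangmain} to make the $\mathsf{EdgeUpdate}$ call affordable, and the same chain \Cref{lem:transleudescprdesc}--\Cref{lem:basicprds}--\Cref{lem:dropleafs}--\Cref{lem:near-linear-annotated-decomposition-recovery} to finish. Your Step 3 merely makes explicit the $\Oh[k]{1}$-fold iteration over target widths that the paper leaves implicit when invoking \Cref{lem:near-linear-annotated-decomposition-recovery} to reach optimum width.
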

\begin{proof}
Consider an edge update sentence $\eudesc = (\varphi,X,X_1,X_2,X_3,X_4)$ that has $X = V(B(G))$, $X_i = V(G) \times \{i\}$, and $\varphi(Y,X_1,X_2,X_3,X_4) =$
\begin{align*}
\exists u \in Y, v \in Y . (u \neq v \wedge \forall w \in Y . (u=w \vee v=w))
\wedge u \in X_1 \wedge v \in X_1\\
\wedge (\exists u_2 \in X_2, u_3 \in X_3, u_4 \in X_4 . (E(u,u_2) \wedge E(u_2,u_3) \wedge E(u_3,u_4) \wedge E(u_4,v)).
\end{align*}
Let $G'$ be the graph resulting from applying $\eudesc$ to $B(G)$.
We observe that the subgraph of $G'$ induced by $V(G) \times \{1\}$ is equal to $G$, after renaming every vertex of form $(v,1)$ to $v$.

Therefore we use our machinery built in previous sections as follows.
First, we use \Cref{lem:logdepthdecomp} with $\Tc$ to compute a rank decomposition $\Tc^1$ of $B(G)$ of width at most $2k$ and height $\Oh{\log n}$.
Then we use \Cref{lem:rearangmain} with $\Tc$ and $\Tc^1$ to obtain an annotated rank decomposition $\Tc^2$ that encodes $B(G)$, has width at most $2k$, and height $\Oh{\log n}$.
These steps take $\Oh[k]{n \log n}$ time.
Then we initialize the $2k$-prefix-rebuilding data structure of \Cref{lem:mso_edge_update} with $\Tc^2$ and the parameter $d$ (the bound on the length of an~edge update sentence) equal to the length of $\varphi$ (which is constant), and then apply the $\mathsf{EdgeUpdate}(\eudesc)$ query to obtain an edge update description $\prdesc$ of width $\ell = \Oh[k]{1}$ that describes $G'$.
This takes $\Oh[k]{n \log n}$ time as the height of $\Tc^2$ is $\Oh{\log n}$.
Then, we initialize the $2k$-prefix-rebuilding data structure of \Cref{lem:transleudescprdesc} with $\Tc^2$, and translate $\prdesc$ to a description $\prdesc'$ of a prefix-rebuilding update.
This takes $\Oh[k,\ell]{n} = \Oh[k]{n}$ time.
Then, we use \Cref{lem:basicprds} to apply $\prdesc'$ to $\Tc^2$, turning $\Tc^2$ into an annotated rank decomposition $\Tc^3$ that encodes $G'$ and has width at most $\max(k,\ell) = \Oh[k]{1}$.
Then we use \Cref{lem:dropleafs} to turn $\Tc^3$ into an annotated rank decomposition of the subgraph of $G'$ induced by $V(G) \times \{1\}$, and then by renaming vertices turn it into an annotated rank decomposition $\Tc^4$ of $G$.
These steps take $\Oh[k]{n}$ time.
Finally we use \Cref{lem:near-linear-annotated-decomposition-recovery} with $\Tc^4$ to compute an optimum-width rank decomposition that encodes $G$, and return it.
This runs in time $\Oh[k]{n \log n}$.
\end{proof}

\Cref{lem:rwbg,lem:trrdbgtoardg} and the fact that $B(G)$ can be computed from $G$ in $\Oh{n+m}$ time imply that we can now focus on bipartite graphs.

\subsection{Twins and near-twins}
In this subsection we prove lemmas about finding twins and near-twins in graphs of small rankwidth.
The following lemma will be our main tool.
Recall here from \Cref{sec:refi} that for a~rooted rank decomposition $\Tc = (T, \lambda)$ of a~graph $G$, a set $F \subseteq V(G)$ is a \emph{tree factor} whenever $F = \lparts(\Tc)[x]$ for some $x \in V(T)$, and a \emph{context factor} whenever $F$ is not a tree factor but $F = F_1 \setminus F_2$ for tree factors $F_1, F_2$.
$F$ is a factor if $F$ is a tree factor or a context factor.

\begin{lemma}
\label{lem:twinfactors}
There is an algorithm that given a rooted rank decomposition $\Tc$ of an $n$-vertex graph $G$, an integer $\ell \geq 1$, and a set $W \subseteq V(G)$ with $|W| \geq 16\ell$, in time $\Oh{n}$ outputs a set of at least $|W|/(16 \ell)$ disjoint factors of $\Tc$ so that each of them contains at least $\ell$ vertices in $W$.
The outputted tree factors are represented by single nodes of $\Tc$ and context factors by pairs of nodes of $\Tc$.
\end{lemma}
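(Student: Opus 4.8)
The plan is to process the decomposition $\Tc$ in a bottom-up fashion, greedily carving out factors that each capture exactly (or nearly) $\ell$ vertices of $W$. First I would root $\Tc$ (it is already rooted) and compute, in $\Oh{n}$ time by bottom-up dynamic programming, for every node $x \in V(T)$ the value $w(x) = |\lparts(\Tc)[x] \cap W|$. Since $T$ is a binary tree with $\Oh{n}$ nodes this preprocessing is linear. The key structural observation is the usual ``heavy node'' argument: a node $x$ is \emph{full} if $w(x) \ge \ell$; the minimal full nodes (full nodes all of whose children are non-full) have the property that their $W$-counts lie in the range $[\ell, 2\ell - 1]$, because each child contributes at most $\ell - 1$. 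These minimal full nodes are pairwise incomparable, hence yield disjoint tree factors, and together they cover all but fewer than $\ell$ of the vertices of $W$ (the leftover being the $W$-vertices in subtrees hanging off the path structure above all minimal full nodes — more precisely, $W$-vertices $v$ such that no minimal full node is an ancestor of $\lambda(v)$).

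The issue is that minimal full tree factors alone need not cover a constant fraction of $W$: consider a long ``caterpillar'' where $W$-vertices are spread one per leaf along a path, so that no proper subtree is full until we are near the root. Here context factors are essential. To handle this, after removing the (disjoint) subtrees rooted at minimal full nodes, I would look at the residual structure: contract each minimal full node to a leaf, obtaining a tree $T'$ in which every node has $W$-count $< \ell$ at every subtree \emph{except} possibly because of contracted children. Now walk up from the root along ``heavy paths'': whenever a vertical path $x_1 x_2 \dots x_p$ in $T'$ accumulates (in the off-path children $v'_i$, the child of $x_i$ not on the path) a total of at least $\ell$ $W$-vertices, the set $\lparts(\Tc)[x_1] \setminus \lparts(\Tc)[x_{p+1}]$ is a factor (a context factor, or a tree factor if $x_1$ is special) containing between $\ell$ and $2\ell$ vertices of $W$; cut it off and continue. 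Iterating this over the whole decomposition — which can be done in $\Oh{n}$ total time with a single DFS maintaining running counters and a ``current accumulation'' along the active root-to-leaf path — produces a family of pairwise disjoint factors, each with between $\ell$ and, say, $4\ell$ $W$-vertices. Each cut ``consumes'' at least $\ell$ elements of $W$ and at most $\ell-1$ elements remain uncovered globally, so if $|W| \ge 16\ell$ the number of factors produced is at least $(|W| - \ell)/(4\ell) \ge |W|/(16\ell)$, with room to spare in the constants; and if some accounting produces a factor with more than $4\ell$ vertices of $W$ we can always split it trivially along the decomposition to bring it back into range without losing disjointness.

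The main obstacle I expect is the bookkeeping to make the context-factor extraction run in $\Oh{n}$ time while keeping the factors disjoint and each of size at least $\ell$ in $W$: naively, re-scanning a path each time we cut could cost $\Oh{n \cdot (|W|/\ell)}$. The fix is to perform one DFS that, upon returning from a child, adds that child's (uncut) $W$-count to an accumulator attached to the current node on the DFS stack; whenever an accumulator reaches $\ell$, emit a factor consisting of the current node minus the already-emitted/already-full descendants, reset the accumulator, and continue — this charges each vertex of $W$ to exactly one emitted factor and each tree node is visited a constant number of times, giving the linear bound. I would also need to be careful that a context factor $F_1 \setminus F_2$ is genuinely \emph{not} a tree factor only when $F_2 \ne \emptyset$ and $F_2$ is not a child subtree of $F_1$; in the degenerate cases the output is a tree factor, which is still allowed, so representing each emitted object by either a single node or a pair of nodes (ancestor, descendant) and letting the consumer interpret it covers both cases uniformly. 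Finally, correctness of the covering bound — that at most $\ell - 1$ vertices of $W$ escape all emitted factors — follows because the only $W$-vertices never charged are those whose leaf-to-root path, with all emitted factors removed, still has accumulated count below $\ell$ at the root, and there are fewer than $\ell$ such vertices by construction.
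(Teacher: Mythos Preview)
Your proposal has a genuine structural gap, not merely a bookkeeping one. A context factor of $\Tc$ is by definition $\lparts(\Tc)[x] \setminus \lparts(\Tc)[y]$ for a \emph{single} descendant $y$ of $x$. Your DFS emits ``the current node minus the already-emitted/already-full descendants'' --- but once both children of some node $x$ have had factors emitted inside them, the uncut set at $x$ has two (or more) holes and is no longer a factor of $\Tc$. Concretely, let $I = \{x : |\lparts(\Tc)[x] \cap W| \ge \ell\}$; this is a prefix of $T$. At any branching node of $T[I]$ (both children full), each child's subtree already contains emitted factors, so there is no single $y$ with $\lparts(\Tc)[x] \setminus \lparts(\Tc)[y]$ equal to the uncut set. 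Your heavy-path description has the dual problem: when the heavy path passes through such a branching node, the off-path child is itself full, so the context factor $\lparts(\Tc)[x_1] \setminus \lparts(\Tc)[x_{p+1}]$ you emit \emph{contains} minimal-full subtrees you already output as tree factors, violating disjointness. You flagged disjointness as ``the main obstacle'' but treated it as a running-time issue; it is in fact a correctness issue tied to the restricted shape of factors.

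The paper's proof resolves this by not mixing the two kinds of output. It builds the junction tree $T'$ on $I$ (nodes: minimal full nodes and branching points of $T[I]$) and observes that $V(G)$ decomposes into tree factors at the leaves of $T'$ and context factors along the edges of $T'$ --- the latter automatically avoid full subtrees because along an edge of $T'$ all off-path children are non-full. Then a simple size dichotomy on $|V(T')|$ finishes: if $|V(T')| \ge |W|/(8\ell)$ the leaf tree factors alone suffice; otherwise the edge context factors carry at least $|W|/2$ vertices of $W$ and each can be chopped greedily (exactly as you describe, but now safely, since along a single edge of $T'$ there is no branching) into pieces of size in $[\ell, 2\ell)$. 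Your greedy instinct is right for a single path; what is missing is the junction-tree step that reduces the problem to a collection of such paths and lets you pick one of two disjoint families rather than trying to interleave them.
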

\begin{proof}
Let $\Tc = (T,\lmap)$.
We say that a node $x$ of $T$ is \emph{important} if $|\lparts(\Tc)[x] \cap W| \ge \ell$.
Let us denote the set of important nodes of $T$ by $I \subseteq V(T)$.
If a node is important, then also its parent is, so $I$ is a prefix of $T$.
Note that the root of $T$ is important.
Let us furthermore say that a node is a \emph{junction} if it is important, and also either has degree $1$ or $3$ in $T[I]$ or is the root of $T$.
We denote the set of junctions by $J \subseteq I$.
Note that if $x,y \in J$, then the lowest common ancestor of $x$ and $y$ is also in $J$.

Then we define a rooted tree $T'$ so that $V(T') = J$, there is an edge between $x,y \in J$ if there is a path between $x$ and $y$ in $T$ that avoids other nodes in $J$, and the root of $T'$ is the root of $T$.
Observe that $T'$ is a rooted tree where each node except the root has either $0$ or $2$ children, and the root has $1$ or $2$ children.
Now, $V(G)$ can be partitioned into a disjoint union of factors of $\Tc$ as follows:
\begin{itemize}
\item for each leaf $l$ of $T'$ there is a tree factor $\lparts(\Tc)[l]$,
\item for each edge $xp$ of $T'$, where $p$ is the parent of $x$ in $T'$ and $c$ is the child of $p$ on the path from $p$ to $x$ in $T$ there is a context factor $\lparts(\Tc)[c] \setminus \lparts(\Tc)[x]$, and
\item if $c$ is a child of the root and is not in $I$, then there is a tree factor $\lparts(\Tc)[c]$.
\end{itemize}

We consider cases based on $|V(T')|$.
First, suppose that $|V(T')| \ge |W|/(8 \ell)$.
This implies that $T'$ has at least $|W|/(16 \ell)$ leaves, so by outputting the leaves of $T'$ we output at least $|W|/(16 \ell)$ tree factors that each contains at least $\ell$ vertices in $W$.

Then, suppose $|V(T')| \le |W|/(8 \ell)$.
We note that each tree factor corresponding to a leaf of $T'$ contains at most $2\ell-1$ vertices in $W$, and the possible single tree factor corresponding to a child of the root not in $I$ contains at most $\ell-1$ vertices in $W$, so therefore the context factors corresponding to the edges of $T'$ contain at least 
\[|W|-|V(T')|\cdot (2\ell-1)-(\ell-1) \ge |W|-\frac{|W|\cdot (2\ell-1)}{8 \ell} - \frac{|W|}{16} \ge |W|/2\] vertices in $W$.
Now, consider an edge $xp$ of $T'$, where $p$ is the parent of $x$ in $T'$.
This corresponds to a path $x, y_1, \ldots, y_t, p$ in $T$.
Then, for each $i \in [t]$ let $z_i$ be the child of $y_i$ that is not on this path.
We observe that the context factor associated with $xp$ is equal to $\bigcup_{i=1}^t \lparts(\Tc)[z_i]$, and that for each $i$ it holds that $|\lparts(\Tc)[z_i] \cap W| < \ell$.
This implies that if this context factor contains $w$ vertices in $W$, then it can be further partitioned into at least $\lfloor \frac{w}{2\ell} \rfloor \geq \frac{w}{2\ell} - 1$ context factors that each contain at least $\ell$ vertices in $W$, plus at most one context factor that contains less than $\ell$ vertices in $W$.
By performing this partitioning to all $|E(T')| \le |W|/(8\ell)$ such context factors that in total contain at least $|W|/2$ vertices in $W$, we obtain at least
\[\frac{|W|/2}{2\ell}-|E(T')| \ge |W|/(8\ell)\]
context factors that each contain at least $\ell$ vertices in $W$.
This procedure clearly can be implemented in $\Oh{n}$ time given $\Tc$.
\end{proof}

Then we apply \Cref{lem:twinfactors} to prove that bipartite graphs with small rankwidth and unbalanced bipartition contain a lot of twins.

\begin{lemma}
\label{lem:bipartitetwins}
There is a function $f(k) \in 2^{\Oh{k}}$, so that if $G$ is a bipartite graph with bipartition $(A,B)$ and rankwidth $k$, and $|A| \ge f(k) \cdot |B|$, then there exist at least $|A|/f(k)$ disjoint pairs of twins in $A$.
\end{lemma}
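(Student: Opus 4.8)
The plan is to apply \Cref{lem:twinfactors} to the large side $A$, and then argue that any factor of a rank decomposition that contains many vertices of $A$ but is ``cheap'' on the boundary must contain many twins of $A$. Concretely, fix an optimum-width rank decomposition $\Tc = (T,\lmap)$ of $G$; turn it into a rooted one (subdividing an edge), keeping width $k$. Apply \Cref{lem:twinfactors} with $W = A$ and an appropriate threshold $\ell = \ell(k)$ (to be fixed below, of order $2^{\Theta(k)}$). Provided $|A| \ge 16\ell$ — which holds once $f(k) \ge 16\ell$ — we obtain a family $\Fc$ of at least $|A|/(16\ell)$ pairwise disjoint factors of $\Tc$, each containing at least $\ell$ vertices of $A$.

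Next I would bound, for a single factor $F \in \Fc$, the number of distinct neighbourhoods $N_G(v)$ realised by vertices $v \in F \cap A$. The key point is that $N_G(v) \subseteq B$ for every $v \in A$, so ``twins in $A$'' are exactly vertices of $A$ with equal neighbourhood in $B$. If $F = \lparts(\Tc)[x]$ is a tree factor, then $\cutrk_G(F) \le k$, so the submatrix of adjacencies between $F$ and $\overline{F}$ has rank $\le k$, hence at most $2^k$ distinct rows; restricting columns to $B$ only decreases the number of distinct rows, so $F \cap A$ has at most $2^k$ twin-classes. If $F = F_1 \setminus F_2$ is a context factor with $F_1 = \lparts(\Tc)[x]$, $F_2 = \lparts(\Tc)[y]$ (so $y$ a descendant of $x$), then for $v \in F$ the set $N_G(v) \cap B$ is determined by the pair $\big(N_G(v) \cap (B \setminus F_1),\ N_G(v) \cap (B \cap F_2)\big)$: the first coordinate takes at most $2^{\cutrk_G(F_1)} \le 2^k$ values and the second at most $2^{\cutrk_G(F_2)} \le 2^k$ values, so $F \cap A$ has at most $2^{2k}$ twin-classes. (One should double-check that $N_G(v) \cap B$ is genuinely recovered from these two coordinates together with membership in $F$, which follows because $B \subseteq (B\setminus F_1)\cup(B\cap F_2)$ when $v\in F_1\setminus F_2$, using $F_2 \subseteq F_1$.) In either case, setting $\ell := 2^{2k}+1$, every factor in $\Fc$ has some twin-class of $A$ of size $\ge 2$, i.e.\ contributes at least one pair of twins; taking $\lfloor (\ell)/2 \rfloor \cdot$ (number of full classes) more carefully, a factor with $t \ge \ell$ vertices of $A$ and at most $2^{2k}$ classes contains at least $t - 2^{2k} \ge t/2$ vertices that can be paired into $\ge t/4 \ge \ell/4$ disjoint twin-pairs.

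Finally I would collect the pairs across $\Fc$: the factors are pairwise disjoint, so the twin-pairs obtained from distinct factors are disjoint, giving at least $|\Fc| \cdot (\ell/4) \ge |A|/(16\ell) \cdot (\ell/4) = |A|/64$ disjoint pairs of twins in $A$ — in fact with a cleaner accounting one gets $\Omega(|A|)$, and setting $f(k)$ to be the maximum of $16\ell = \Oh{2^{2k}}$ and the constant needed for this bound (say $f(k) = c\cdot 2^{2k}$ for a suitable absolute constant $c$) yields both the hypothesis $|A| \ge f(k)|B|$ (which we only used to guarantee $|A|\ge 16\ell$, and note $|A| \ge f(k)|B| \ge f(k)$ since $|B|\ge 1$, so this is automatic) and the conclusion of at least $|A|/f(k)$ disjoint twin-pairs. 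I expect the main obstacle to be the bookkeeping in the context-factor case — making precise that the neighbourhood-into-$B$ of a vertex of a context factor is controlled by the two ``boundary'' cutrank values, and choosing constants so that the final count is cleanly $|A|/f(k)$ rather than just $\Omega(|A|)$; the tree-factor case and the assembly of disjoint pairs are routine given \Cref{lem:twinfactors} and \Cref{lem:repsbound}.
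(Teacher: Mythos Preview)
Your argument has a genuine gap: you never use the hypothesis $|A| \ge f(k)\cdot|B|$ in any essential way (you say explicitly it is ``only used to guarantee $|A|\ge 16\ell$''), yet the conclusion is false without it. Take $G$ to be a perfect matching between $A$ and $B$ with $|A|=|B|=n$: this has rankwidth $1$, and for $n$ large your argument would predict $\Omega(n)$ disjoint twin-pairs in $A$, but there are none.

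The error is in the twin-class count per factor. For a tree factor $F$, the bound $\cutrk_G(F)\le k$ controls neighbourhoods of $v\in F$ into $\overline{F}$, not into $B$; your step ``restricting columns to $B$'' is only valid when $B\subseteq\overline{F}$, i.e.\ when $F\cap B=\emptyset$. Similarly, for a context factor $F=F_1\setminus F_2$, your claim that $B\subseteq (B\setminus F_1)\cup(B\cap F_2)$ fails whenever $B\cap F\neq\emptyset$, so the pair of coordinates you record does not determine $N_G(v)\cap B$. The paper's fix is exactly the missing step: since the factors from \Cref{lem:twinfactors} are pairwise disjoint, at most $|B|$ of them can meet $B$; using $|A|\ge f(k)|B|$, at least $|A|/f(k)$ of the factors lie entirely inside $A$, and for \emph{those} your (or the paper's simpler) twin-class bound goes through. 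With this correction your approach is essentially the paper's, though your per-factor counting is more generous than needed --- the paper just extracts one twin-pair per good factor.
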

\begin{proof}
We will prove the lemma for $f(k) = 32 \cdot (2^{2k}+1)$, so assume that $|A| \ge f(k) \cdot |B|$.
Let $\Tc$ be a rank decomposition of $G$ of width at most $k$, and let us apply \Cref{lem:twinfactors} with $\ell = f(k)/32$ and $W = A$.
This outputs at least $\frac{2|A|}{f(k)}$ disjoint factors of $\Tc$ so that each of them contains at least $f(k)/32 = 2^{2k}+1$ vertices in $A$.
Among them, there are at least $\frac{2|A|}{f(k)} - |B| \ge |A|/f(k)$ factors that contain no vertices in $B$.
It suffices to prove that each of these contains a pair of twins in $A$.

Consider a factor $F$ of $\Tc$ with $|F| \ge 2^{2k}+1$ and $F \subseteq A$.
If $F$ is a tree factor, then $\cutrk_G(F) \le k$ by definition, and if $F$ is a context factor, we can prove by symmetry and submodularity of $\cutrk_G$ that $\cutrk_G(F) \le 2k$.
Now, \Cref{lem:repsbound} implies that $F$ has a representative $R$ of size $|R| \le 2^{2k}$.
Because $|F| > 2^{2k}$, there exists a vertex $v \in F \setminus R$, and because $R$ is a representative of $F$, there exists $u \in R$ so that $N(v) \setminus F = N(u) \setminus F$.
Because $F \subseteq A$, the vertices $u$ and $v$ are twins.
\end{proof}

We say that two vertices $u$ and $v$ of a graph $G$ are \emph{$q$-near-twins} if $|N(u) \symd N(v)| \le q$.
Next we use \Cref{lem:twinfactors} to give an algorithm for finding many near-twins in graphs of small rankwidth.

\begin{lemma}
\label{lem:neartwins}
There exists a function $f \in 2^{\Oh{k}}$ so that there is an algorithm that given an annotated rank decomposition $\Tc$ of width $k$ that encodes an $n$-vertex graph $G$ and a set $W \subseteq V(G)$ such that $|W| \geq f(k)$, in time $\Oh[k]{n}$ returns $|W|/f(k)$ disjoint pairs of vertices $(u_1,v_1),\ldots,(u_t,v_t)$ in $W$, so that $u_i$ and $v_i$ are $(f(k) \cdot n/|W|)$-near-twins.
The algorithm furthermore returns the sets $N(u_i) \symd N(v_i)$ for all $i \in [t]$.
\end{lemma}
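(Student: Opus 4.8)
The plan is to mimic the proof of \Cref{lem:bipartitetwins}, only now we need to actually construct the near-twins algorithmically (hence the use of annotated rank decompositions rather than abstract rank decompositions) and we need pairs rather than just the existence statement. First I would extract from $\Tc$ the underlying rooted rank decomposition $\Tc' = (T,\lmap)$ of $G$ of width at most $k$ (this is immediate from the correspondence between annotated and ordinary rank decompositions, and the unrooted-to-rooted conversion of \Cref{sec:preli}). Then I would apply \Cref{lem:twinfactors} to $\Tc'$ with the given set $W$ and a suitable threshold $\ell$ chosen as a function $\ell(k)$ to be fixed later. This returns, in $\Oh{n}$ time, at least $|W|/(16\ell)$ pairwise disjoint factors $F_1,\dots,F_s$ of $\Tc'$, each with $|F_i \cap W| \ge \ell$; tree factors are given by a node of $T$, context factors by a pair of nodes.

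Next I would pass to the representatives. For each factor $F$ in the output list, symmetry and submodularity of $\cutrk_G$ (\Cref{lem:symsubmod}) give $\cutrk_G(F) \le 2k$, so by \Cref{lem:repsbound} $F$ has a minimal representative $R_F \subseteq F$ with $|R_F| \le 2^{2k}$. Here is where the annotated structure earns its keep: using the $\reps,\repse,\dmap$ annotations around the node(s) describing $F$, the minimal representative $R_F$, the restriction $\lmap|_F$, and the bipartite adjacency between $R_F$ and a fixed representative $R_{\overline F}$ of $\overline F$ can all be read off and processed in $\Oh[k]{1}$ time per factor, by the same bottom-up/top-down dynamic programming along $\dmap$ used throughout \Cref{sec:annot} (e.g.\ as in \Cref{lem:dropleafs,lem:annotdecomppart}). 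In particular, for every $w \in F$ we can determine in $\Oh[k]{1}$ time which representative $r \in R_F$ satisfies $N_G(w)\setminus F = N_G(r)\setminus F$; grouping the $\ge \ell$ vertices of $F \cap W$ by their representative and applying pigeonhole, as soon as $\ell > 2^{2k}$ there are two distinct $u,v \in F \cap W$ with the same representative, hence $N_G(u)\setminus F = N_G(v)\setminus F$, so $N_G(u)\symd N_G(v) \subseteq F$. Since the factors $F_i$ are pairwise disjoint, picking one such pair $(u_i,v_i)$ per factor yields pairwise disjoint pairs; choosing $\ell$ of order $2^{2k}$ (say $\ell = 2^{2k}+1$) gives $t \ge |W|/(16\ell) \ge |W|/f(k)$ for $f(k) = 16(2^{2k}+1) = 2^{\Oh{k}}$, which also forces the hypothesis $|W|\ge f(k)$ to be exactly what \Cref{lem:twinfactors} needs.

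It remains to bound $|N_G(u_i)\symd N_G(v_i)|$ and to actually output these symmetric-difference sets. Since $N_G(u_i)\symd N_G(v_i) \subseteq F_i$, it suffices to bound $|F_i|$. The factors partition all of $V(G)$ up to disjointness in \Cref{lem:twinfactors}, but we only know each $|F_i \cap W| \ge \ell$; we do \emph{not} get an a priori upper bound on $|F_i|$ from that lemma alone. To fix this I would strengthen the use of \Cref{lem:twinfactors}: after it outputs its factors, greedily refine any factor $F$ with $|F\cap W| \ge 2\ell$ by splitting it (a tree factor $\lparts(\Tc')[x]$ into sub-factors along $T$, a context factor similarly) so that every output factor has $\ell \le |F\cap W| < 2\ell$ — this is exactly the splitting argument already used inside the proof of \Cref{lem:twinfactors} and costs $\Oh{n}$ overall. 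Now I still need $|F|$ small, not just $|F\cap W|$ small; for this I additionally discard any output factor containing a vertex outside $W$ whose ``private'' contribution is large — concretely, I keep only factors $F$ with $|F| \le f(k)\cdot n/|W|$, and argue by a counting/averaging argument (the factors are disjoint, there are $\le |W|/\ell$ of them covering a subset of the $n$ vertices, so the total size is $\le n$ and at most half of them can have size exceeding $2n/(\text{number of factors}) \ge \Omega(\ell n/|W|)$) that at least half survive, i.e.\ still $\ge |W|/f(k)$ after adjusting $f$. For each surviving pair, $N_G(u_i)$ is computed by \Cref{lem:getnbs} in $\Oh[k]{n}$ time — but that is too slow if done per pair; instead, since $N_G(u_i)\symd N_G(v_i)\subseteq F_i$, I compute it by comparing $N_G(u_i)\cap F_i$ with $N_G(v_i)\cap F_i$, which the annotated decomposition restricted to the subtree(s) defining $F_i$ delivers in $\Oh[k]{|F_i|}$ time, summing to $\Oh[k]{n}$ total. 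The main obstacle is precisely this last bookkeeping: getting a genuine \emph{upper} bound on the factor sizes (not merely on their intersection with $W$) while keeping enough factors and staying within $\Oh[k]{n}$ time — everything else is a routine combination of \Cref{lem:twinfactors}, \Cref{lem:repsbound}, pigeonhole, and the standard $\dmap$-following dynamic programming on annotated rank decompositions.
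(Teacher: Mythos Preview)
Your approach is essentially the paper's, and the core argument is sound: apply \Cref{lem:twinfactors}, discard factors that are too large, and in each remaining factor use pigeonhole on representatives to extract a pair. Two points where you take an unnecessary detour:

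First, the refinement step (splitting factors so that $|F\cap W|<2\ell$) is not needed. The paper simply observes that among the $\ge 2|W|/f(k)$ disjoint factors produced by \Cref{lem:twinfactors}, at most $n/(f(k)\,n/|W|)=|W|/f(k)$ can have $|F|>f(k)\,n/|W|$, so at least $|W|/f(k)$ are small; this is the same averaging you do, but applied directly without any preprocessing of the factor list.

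Second, your claim that the representative of each $w\in F$ can be read off in $\Oh[k]{1}$ time is not justified as stated: following $\dmap$ from the leaf of $w$ up to the edge(s) defining $F$ costs $\Oh{|F|}$, not $\Oh[k]{1}$. The paper sidesteps this by iterating vertices of $F\cap W$ one at a time and stopping as soon as two land on the same representative, which happens within the first $2^{2k}+1$ vertices; this gives $\Oh[k]{|F|}$ per small factor and $\Oh[k]{n}$ overall, without needing your refinement to bound $|F\cap W|$. (Your version can be salvaged either by this early-termination trick or by a single bottom-up pass per factor precomputing all representatives in $\Oh[k]{|F|}$, after which lookups really are $\Oh{1}$.)
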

\begin{proof}
The proof will use similar ideas to the proof of \Cref{lem:bipartitetwins}.
We will prove the the lemma for $f(k) = 32 \cdot (2^{2k}+1)$.
Let us root $\Tc = (T,V(G),\reps,\repse,\dmap)$ arbitrarily and apply \Cref{lem:twinfactors} with $\Tc$, $\ell = f(k)/32$, and the set $W$.
This outputs at least $\frac{2|W|}{f(k)}$ disjoint factors of $\Tc$ so that each of them contains at least $f(k)/32 = 2^{2k}+1$ vertices in $W$.
Let us say that a factor $F \subseteq V(G)$ is \emph{big} if $|F| > \frac{f(k) n}{|W|}$ and \emph{small} otherwise.
Because the factors are disjoint, there are at most $\frac{|W|}{f(k)}$ big factors, implying that there are at least $\frac{|W|}{f(k)}$ small factors.

Now it suffices to output a single such pair $(u_i,v_i)$ from each small factor.
We observe that if $F$ is a small factor and $u,v \in F \cap W$ are two vertices with $N(u) \setminus F = N(v) \setminus F$, then $|N(u) \symd N(v)| \le |F|$, implying that they are $(f(k) \cdot n/|W|)$-near-twins.
It remains to argue that we can find such $u$ and $v$ in $\Oh[k]{|F|}$ time for each small factor $F$.

First suppose that $F$ is a tree factor, given as $F = \lparts(\Tc)[x]$ for some $x \in V(T)$, and let $p$ be the parent of $x$ in $T$.
In this case, the subtree below $x$ in $T$ has $\Oh{|F|}$ nodes.
For each vertex $v \in F$ there exists a vertex $w \in \reps(\vec{xp})$ so that $N(v) \setminus F = N(w) \setminus F$, and given $v$ we can find such vertex $w$ in time $\Oh{|F|}$ by following the mapping $\dmap$ of $\Tc$.
We iterate through vertices in $F \cap W$ until we find two vertices $u,v \in F \cap W$ with the same such vertex $w$.
This implies that $N(u) \setminus F = N(v) \setminus F$, so we can return the pair $(u,v)$.
As $|\reps(\vec{xp})| \le 2^k$, finding such $u$ and $v$ takes at most $2^k+1$ iterations of finding such $w$, resulting in $\Oh[k]{|F|}$ time, and we are guaranteed to find such $u$ and $v$ because $|F \cap W| \ge 2^{2k}+1$.
To compute $N(u) \symd N(v)$, we first compute $N(u) \cap F$ and $N(v) \cap F$ in $\Oh[k]{|F|}$ time by modifying the method of \Cref{lem:getnbs} so that we follow the mapping $\dmap$ only inside the subtree below $x$.
Then, we can output $(N(u) \cap F) \symd (N(v) \cap F) = N(u) \symd N(v)$.

Then suppose $F$ is a context factor, given as $F = \lparts(\Tc)[x] \setminus \lparts(\Tc)[y]$ for some nodes $x,y \in V(T)$, so that $y$ is a descendant of $x$.
Let $p^x$ be the parent of $x$ and $p^y$ the parent of $y$.
We have that the subtree of $T$ consisting of the descendants of $x$ minus the descendants of $y$ has $\Oh{|F|}$ nodes.
Again, for each vertex $v \in F$ there exists a vertex $w^x \in \reps(\vec{x p^x})$ so that $N(v) \setminus \lparts(\Tc)[x] = N(w^x) \setminus \lparts(\Tc)[x]$ and a vertex $w^y \in \reps(\vec{p^y y})$ so that $N(v) \cap \lparts(\Tc)[y] = N(w^y) \cap \lparts(\Tc)[y]$, and we can find such $w^x$ and $w^y$ in $\Oh{|F|}$ time given $v$ by following the mapping $\dmap$ of $\Tc$.
Now, if we find two vertices $u,v \in F \cap W$ with the same such pair $(w^x,w^y)$, then $N(u) \setminus F = N(v) \setminus F$.
Because $|\reps(\vec{x p^x})|,|\reps(\vec{p^y y})| \le 2^k$, there are at most $2^{2k}$ such pairs, so we find such $u,v$ within the first $2^{2k}+1$ iterations, resulting in $\Oh[k]{|F|}$ time.
The set $N(u) \symd N(v)$ can be computed in $\Oh[k]{|F|}$ time by similar arguments as in the previous case.
\end{proof}

Then we give a data structure for finding twins guaranteed by \Cref{lem:bipartitetwins} efficiently in a certain setting where we consider induced subgraphs defined by an interval.
For a graph $G$ and a vertex set $X \subseteq V(G)$, the \emph{twin-equivalence classes} of $X$ in $G$ are the maximal sets $X' \subseteq X$ so that any two vertices in $X'$ are twins in $G$.

\begin{lemma}
\label{lem:twinds}
There is a data structure that is initialized with an $n$-vertex $m$-edge bipartite graph $G$ given with a bipartition $(A,B)$, where $B$ is indexed as $B = \{v_1, \ldots, v_{|B|}\}$, and supports the following query:
\begin{itemize}
\item $\mathsf{Twins}(X,\ell,r)$: Given a set $X \subseteq A$ and two integers $\ell,r$ with $1 \le \ell \le r \le |B|$, in time $\Oh{|X| \log n}$ returns the twin-equivalence classes of $X$ in the graph $G[X, \{v_\ell,\ldots,v_r\}]$.
\end{itemize}
The initialization time of the data structure is $\Oh{n+m}$.
\end{lemma}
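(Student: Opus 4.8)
The plan is to reduce each query to a sequence-equality problem that classical stringology solves in the required time. During $\mathsf{Init}$, for every vertex $a \in A$ I would compute the sorted list $L_a = (i_1 < i_2 < \cdots < i_{d_a})$ of indices of the neighbours of $a$ in $B = \{v_1, \dots, v_{|B|}\}$; producing all of these costs $\Oh{n+m}$ via one bucket sort over the edge list, and I also store $\sum_a d_a = \Oh{m}$ in total. I then concatenate the lists, separated by a sentinel $0$ (which never occurs among the indices $1, \dots, |B|$), into a single array $S$ over the integer alphabet $\{0, 1, \dots, |B|\}$; note $|S| = \sum_a (d_a + 1) = \Oh{n+m}$, and for each $a$ I remember the offset of $L_a$ inside $S$. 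Using the linear-time integer-alphabet suffix array construction of~\cite{DBLP:journals/jacm/KarkkainenSB06}, followed by the standard linear-time computation of the LCP array and a linear-size constant-query range-minimum structure over it, I obtain in $\Oh{n+m}$ time a structure that, given two positions of $S$, returns in $\Oh{1}$ the length of the longest common prefix of the corresponding suffixes; consequently, any two substrings of $S$ of prescribed lengths can be tested for equality — and ordered lexicographically, with a strictly shorter equal block counted as smaller — in $\Oh{1}$ time.

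To answer $\mathsf{Twins}(X, \ell, r)$, for each $a \in X$ I binary-search $L_a$ for the smallest entry $\ge \ell$ and the largest entry $\le r$, which delimits the contiguous block $L_a[p_a{:}q_a]$ of $L_a$ consisting exactly of the neighbours of $a$ lying in $\{v_\ell, \dots, v_r\}$ (an empty block if there are none). Because $L_a$ is sorted, for $a, a' \in X$ the vertices $a$ and $a'$ are twins in $G[X, \{v_\ell, \dots, v_r\}]$ if and only if $N(a) \cap \{v_\ell,\dots,v_r\} = N(a') \cap \{v_\ell,\dots,v_r\}$, which holds if and only if these two blocks are equal as integer sequences — a test available in $\Oh{1}$ from the range-minimum structure once the block endpoints (and hence their offsets in $S$) are known. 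I then sort the vertices of $X$ using the resulting $\Oh{1}$-time lexicographic comparator on their blocks — with the empty block smallest, so that all vertices with no neighbour in the range collapse into one class — and output the maximal runs of equal blocks as the twin-equivalence classes. The binary searches cost $\sum_{a\in X}\Oh{\log d_a} = \Oh{|X|\log n}$, and the sort performs $\Oh{|X|\log|X|}$ comparisons at $\Oh{1}$ each, for a total of $\Oh{|X|\log n}$.

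Correctness is immediate from the two equivalences just stated, so the only delicate point is the stringology pipeline: one must verify that the suffix array, the LCP array, and the range-minimum queries are all genuinely linear-time on an integer alphabet of size $\Oh{n}$, and that substrings of differing lengths are compared with the prefix convention above. This is the main obstacle I anticipate, but it is entirely standard; granting it, the rest is bookkeeping. (If one is content with $\Oh{(n+m)\log(n+m)}$ initialization, a sparse table suffices for the range-minimum queries and no specialized linear-time variant is needed; the stated $\Oh{n+m}$ bound only requires the linear-size version.)
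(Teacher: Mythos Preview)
Your proposal is correct and follows essentially the same approach as the paper: sorted adjacency lists concatenated into a single string, suffix array plus LCP array plus range-minimum queries built in linear time, then per-query binary search to isolate the relevant block followed by a sort to collapse equal blocks into twin classes. The only cosmetic differences are that the paper uses an $\Oh{\log n}$-time RMQ rather than a constant-time one, and it first buckets $X$ by block length and then sorts each bucket by inverse-suffix-array rank rather than doing one global comparison sort; neither change affects the stated bounds.
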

\begin{proof}
We will use tools from the theory of string algorithms: the \emph{suffix array} and the \emph{LCP array}.
For a string $S = s_1, \ldots, s_t$ of length $t$, the suffix array of $S$ is the array $\mathsf{SA}$ of length $t$ that at position $i \in [t]$ stores the index $\mathsf{SA}[i] \in [t]$ so that the $i$th lexicographically smallest suffix of $S$ starts at index $\mathsf{SA}[i]$ of $S$.
The LCP array associated with $S$ and $\mathsf{SA}$ is the array $\mathsf{LCP}$ of length $t-1$ that at position $i \in [t-1]$ stores the length $\mathsf{LCP}[i]$ of the longest common prefix of the suffix of $S$ starting at $\mathsf{SA}[i]$ and the suffix of $S$ starting at $\mathsf{SA}[i+1]$.
It is known that both the suffix array and the LCP array of a given string can be computed in linear time~\cite{DBLP:journals/jacm/KarkkainenSB06}.

The initialization of our data structure works as follows.
We consider the total order of the vertices $B = \{v_1, \ldots, v_{|B|}\}$ so that $v_i < v_j$ whenever $i<j$.
First we use bucket sort to sort the neighborhoods $N(a)$ of each vertex $a \in A$ into an ordered list, in total time $\Oh{n+m}$.
Then we concatenate these lists into a string $S$ of length $m$, so that for each vertex $a \in A$, the neighborhood of $a$ corresponds to a substring $S[L_a,R_a] = s_{L_a},\ldots,s_{R_a}$ of $S$, in which the neighbors of $a$ occur in the sorted order.
We store the indices $L_a$ and $R_a$ of each $a \in A$.
We then compute the suffix array $\mathsf{SA}$ and the LCP array $\mathsf{LCP}$ of $S$ by using the algorithm of~\cite{DBLP:journals/jacm/KarkkainenSB06} in $\Oh{m}$ time.
We also compute the inverse array of $\mathsf{SA}$, in particular, the array $\mathsf{InvSA}$ so that for each $i \in [m]$ it holds that $\mathsf{SA}[\mathsf{InvSA}[i]] = i$.
Finally, we compute a range minimum query data structure on the LCP array, in particular, a data structure that can answer queries that given indices $\ell,r \in [m]$, report $\min_{i \in [\ell,r]} \mathsf{LCP}[i]$.
Such data structure that answers queries in $\Oh{\log m} = \Oh{\log n}$ time can be computed by folklore techniques with binary trees in $\Oh{m}$ time.
All together, the initialization works in $\Oh{n+m}$ time.

Then the $\mathsf{Twins}(X,\ell,r)$ query is implemented as follows.
Let us denote $Y = \{v_{\ell},\ldots,v_r\}$.
First, for each $a \in X$, we use binary search to compute the indices $L'_a,R'_a$ so that the neighborhood $N(a) \cap Y$ of $a$ into $Y$ corresponds to the substring $S[L'_a,R'_a]$, or decide that the neighborhood of $a$ into $Y$ is empty.
This takes $\Oh{|X| \log n}$ time.
The first equivalence class is the vertices in $X$ whose neighborhood into $Y$ is empty.
Then, based on the computed indices $L'_a$ and $R'_a$, we know for each $a \in X$ the size $|N(a) \cap Y|$.
We group the remaining vertices in $X$ based on $|N(a) \cap Y|$, which can be done in $\Oh{|X| \log n}$ time.
It remains to consider the problem where given $X' \subseteq X$ so that each $a \in X'$ has exactly $p \ge 1$ neighbors in $Y$, we have to compute the twin-equivalence classes of $X'$ in $G[X',Y]$.

Consider two vertices $a,b \in X'$ and assume $\mathsf{InvSA}[L'_a] < \mathsf{InvSA}[L'_b]$; so the suffix of $S$ starting at index $L'_a$ is lexicographically smaller than the suffix starting at index $L'_b$.
Then $N(a) \cap Y = N(b) \cap Y$ holds if and only if these suffixes share a~common prefix of length $p$, or equivalently $p \le \min_{i \in [\mathsf{InvSA}[L'_a],\mathsf{InvSA}[L'_b]-1]} \mathsf{LCP}[i]$.
Therefore, to compute the twin-equivalence classes of $X'$, we first sort $X'$ based on the integers $\mathsf{InvSA}[L'_a]$ in time $\Oh{|X'| \log n}$, then assuming this sorted order of $X'$ is $a_1,\ldots,a_{|X'|}$, we compute for each $i \in [|X'|-1]$ the integer $z_i = \min_{j \in [\mathsf{InvSA}[L'_{a_i}], \mathsf{InvSA}[L'_{a_{i+1}}]-1]} \mathsf{LCP}[j]$ by using the range minimum query data structure in $\Oh{|X'| \log n}$ time.
Now we have that $a_i$ and $a_j$ with $i < j$ have $N(a_i) \cap Y = N(a_j) \cap Y$ if and only if $p \le \min_{k \in [i,j-1]} z_k$, so with this information we can output the twin-equivalence classes of $X'$ in $\Oh{|X'|}$ time.
Therefore, the total time to answer the query is $\Oh{|X| \log n}$.
\end{proof}

Then we show that the method of adding twins to a rank decomposition discussed in \Cref{obs:addtwin} can be efficiently implemented on annotated rank decompositions.

\begin{lemma}
\label{lem:addtwin}
Let $G$ be a graph with twins $u,v \in V(G)$.
Suppose a representation of an annotated rank decomposition $\Tc'$ that encodes $G - \{v\}$ and has width $k$ is already stored.
Then, given $u$ and $v$, the representation of $\Tc'$ can in time $\Oh{1}$ be turned into a representation of an annotated rank decomposition $\Tc$ that encodes $G$ and has width $k$.
\end{lemma}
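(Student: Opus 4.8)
The plan is to realise the surgery from the proof of \Cref{obs:addtwin} directly on the stored annotated rank decomposition, exploiting that it touches only a constant-sized neighbourhood of a single leaf. Note first that, since $\Tc'$ encodes $G-\{v\}$, that graph (and hence $G$) has at least two vertices, so $|V(G)| \ge 3$; also $u \not\sim v$ because $N_G(u) = N_G(v)$. Using the lookup table of the representation I would find in $\Oh{1}$ time the leaf edge $\vec{lp} \in \leafe(T')$ with $\reps'(\vec{lp}) = \{u\}$, and build $T$ by attaching two fresh leaves $c_1, c_2$ to $l$, so that $l$ becomes internal of degree $3$ with neighbours $p, c_1, c_2$, placing $u$ at $c_1$ and $v$ at $c_2$; this costs $\Oh{1}$ pointer updates, and I would also register $v$ and re-point $u$ in the vertex-to-leaf table.

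The key observation is that $u$ and $v$, being twins, lie on the same side of the cut $(\lparts(\Tc)[\oxy], V(G)\setminus\lparts(\Tc)[\oxy])$ for every oriented edge $\oxy$ of $T$ inherited from $T'$ (both are reachable only through $l$). Consequently, for every such $\oxy$ the set $\reps'(\oxy)$ is still a minimal representative of $\lparts(\Tc)[\oxy]$ in $G$: the adjacency type of $v$ towards the opposite side equals that of $u$, which either lies on that opposite side or is itself already represented on the $\lparts$-side, and minimality is inherited because a representative in $G$ is also one in $G-\{v\}$. Likewise $\repse'(xy) = G[\reps'(\oxy),\reps'(\oyx)]$ and the maps $\dmap'(xyz)$ stay correct because these representatives avoid $v$. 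I would therefore keep $\reps, \repse, \dmap$ equal to $\reps', \repse', \dmap'$ on everything carried over from $T'$ (in particular $\reps(\vec{lp}) = \{u\}$ is unchanged, and $\{u\}$ is indeed a minimal representative of $\lparts(\Tc)[\vec{lp}] = \{u,v\}$), and set $\reps(\vec{c_1 l}) \coloneqq \{u\}$, $\reps(\vec{c_2 l}) \coloneqq \{v\}$.

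It remains to define, for a constant number of objects only, the annotations on the oriented edges $\vec{l c_1}$ and $\vec{l c_2}$, the bipartite graphs $\repse(l c_1), \repse(l c_2)$, and the maps $\dmap$ on the finitely many length-$3$ paths through $l$ that meet $c_1$ or $c_2$. Here $\lparts(\Tc)[\vec{l c_1}] = V(G)\setminus\{u\}$ and $\lparts(\Tc)[\vec{l c_2}] = V(G)\setminus\{v\}$, both of $\cutrk_G$-value at most $1$, so a minimal representative of each has size at most $2$. Reading off from $\repse'(lp) = G[\{u\},\reps'(\vec{pl})]$ which elements of $\reps'(\vec{pl})$ are neighbours of $u$ (a scan of at most $2^k$ elements), I would assemble these representatives out of $\reps'(\vec{pl}) \cup \{u,v\}$: if $\reps'(\vec{pl})$ already contains a non-neighbour of $u$ it can be reused as is, and otherwise one adds $v$ (resp.\ $u$), which is itself a non-neighbour of $u$, using $|V(G)| \ge 3$ to see that this always yields a valid minimal representative of size at most $2$. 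All adjacencies needed to build $\repse(lc_1), \repse(lc_2)$ are adjacencies to $u$ or $v$, hence known since $u,v$ are twins, and each required $\dmap$-value is obtained by matching adjacency types among the constantly many candidates; every other length-$3$ path inherits its map from $\Tc'$.

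Finally I would check the encoding properties \Cref{prop:ard:enclp,prop:ard:encedge,prop:ard:minrep,prop:ard:edges}: the first two are immediate, the third follows from the observations above and the explicit choices, and the fourth holds on inherited paths because it held in $\Tc'$ and the relevant neighbourhoods ignore $v$, and on the new paths by construction. The width does not increase: every inherited edge keeps its width (since $\cutrk_G(A\cup\{v\}) = \cutrk_{G-\{v\}}(A)$ for $A \ni u$, $v \notin A$, as in the proof of \Cref{obs:addtwin}), the edge $lp$ has width $\cutrk_G(\{u,v\}) \le 1$, and the leaf edges $lc_1, lc_2$ have width at most $1$; these bounds are all $0$ when $k=0$ (then $G$ is edgeless), so $\Tc$ has width exactly $k$. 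The main — and only mildly delicate — obstacle is precisely this bookkeeping for $\vec{lc_1}$ and $\vec{lc_2}$: showing that a valid minimal representative of $V(G)\setminus\{u\}$ and of $V(G)\setminus\{v\}$, together with the associated $\repse$- and $\dmap$-data, can always be reconstructed from what $\Tc'$ already stores around $l$, which is exactly where the twin hypothesis and $|V(G)| \ge 3$ enter.
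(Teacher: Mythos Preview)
Your approach is the paper's: attach two new leaves $c_1,c_2$ below the leaf $l$ of $u$ and fill in the constantly many new annotations, keeping everything on $T'$ untouched. Two remarks: (i) your ``scan of at most $2^k$ elements'' should read ``at most $2$'' --- the leaf edge $lp$ has width $\cutrk_{G-\{v\}}(\{u\})\le 1$, hence $|\reps'(\vec{pl})|\le 2$, and this is precisely what yields an absolute $\Oh{1}$ bound rather than $\Oh[k]{1}$ (the paper states this explicitly); (ii) your case split on whether $\reps'(\vec{pl})$ contains a non-neighbour of $u$, adding $v$ (resp.\ $u$) when it does not, is actually more careful than the paper's proof, which simply sets $\reps(\vec{lc_1})=\reps(\vec{lc_2})=\reps'(\vec{pl})$ and would, as written, fail \Cref{prop:ard:minrep} when every vertex of $V(G)\setminus\{u,v\}$ is adjacent to $u$ (e.g.\ the path $u\text{--}w\text{--}v$).
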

\begin{proof}
We implement the construction discussed in the proof of \Cref{obs:addtwin}.
Denote the stored decomposition by $\Tc' = (T',V(G) \setminus \{v\},\reps',\repse',\dmap')$ and let $\olp \in \leafe(T')$ so that $\reps'(\olp) = \{u\}$.
We construct $\Tc = (T,V(G),\reps,\repse,\dmap)$ as follows.
The tree $T$ is created by adding two children $c_1$ and $c_2$ for the leaf $l$ of $T'$.
The annotations for edges of $T$ that exist in $T'$ are directly copied from $\Tc'$ to $\Tc$.
Then we set $\reps(\vec{c_1 l}) \coloneqq \{u\}$, $\reps(\vec{c_2 l}) \coloneqq \{v\}$, and $\reps(\vec{l c_1}) \coloneqq \reps(\vec{l c_2}) \coloneqq \reps'(\opl)$.
We also set $\repse(c_1 l) \coloneqq \repse'(lp)$ and obtain $\repse(c_2 l)$ by replacing $u$ by $v$ in $\repse'(lp)$.
The functions $\dmap(c_1 l p)$ and $\dmap(c_2 l p)$ both map to the single vertex $u \in \reps(\olp)$.

We can verify that $\Tc$ is indeed an annotated rank decomposition that encodes $G$, and whose width is at most the width of $\Tc'$.
The construction can be implemented in $\Oh{1}$ time because $|\reps'(\olp)| = 1$ and $|\reps'(\opl)| \le 2$.
\end{proof}

\subsection{Proof of \texorpdfstring{\Cref{lem:compmain}}{Lemma \ref{lem:compmain}}}
Before finally proving \Cref{lem:compmain}, let us give the crucial subroutine for which the algorithm for \Cref{prob:rwcomp} is used.

\begin{lemma}
\label{lem:combine}
Let $T \colon \N \rightarrow \N$ be a function so that there is a $\Oh[k]{T(n)}$ time algorithm for \Cref{prob:rwcomp}.
Let also $G$ be a bipartite graph with bipartition $(X,Y_1 \cup Y_2)$, where $Y_1$ and $Y_2$ are disjoint.
There is an algorithm that given an annotated rank decomposition $\Tc_1$ of width at most $k$ that encodes $G[X,Y_1]$ and an annotated rank decomposition $\Tc_2$ of width at most $k$ that encodes $G[X,Y_2]$, either returns that the rankwidth of $G$ is more than $k$, or returns an annotated rank decomposition that encodes $G$ and has width at most $k$.
The algorithm runs in time $\Oh[k]{T(n) \log n}$, where $n = |X|+|Y_1|+|Y_2|$.
\end{lemma}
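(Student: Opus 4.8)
\textbf{The plan} is to recognize this statement as the ``merge'' step of the recursive algorithm of \Cref{lem:main-thm-from-flipping} and to implement it, recursively, by repeated reduction to the Twin Flipping problem of \Cref{prob:rwcomp} through the near-twin machinery of \Cref{lem:neartwins}. Concretely, I would build a recursive routine $\mathsf{Merge}(Y_2',\Tc_2')$ which takes $Y_2'\subseteq Y_2$ and an annotated rank decomposition $\Tc_2'$ of width at most $k$ encoding $G[X,Y_2']$, and returns an annotated rank decomposition of width at most $k$ encoding $G[X,Y_1\cup Y_2']$, or correctly reports that this graph --- hence $G$, since rankwidth does not increase when passing to induced subgraphs --- has rankwidth more than $k$; the output of \Cref{lem:combine} is then $\mathsf{Merge}(Y_2,\Tc_2)$. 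The base cases are degenerate: if $Y_1=\emptyset$ return $\Tc_2'$; tiny instances ($|X\cup Y_1\cup Y_2'|\le 3$) are handled by brute force; and if $|Y_2'|=1$ and $Y_1\neq\emptyset$, the target graph is obtained from $G[X,Y_1]$ (encoded by the stored $\Tc_1$) by adding the single vertex of $Y_2'$, whose neighbourhood --- contained in $X$ and readable from $\Tc_2'$ via \Cref{lem:getnbs} --- we install by cloning an arbitrary $u\in Y_1$ (\Cref{lem:addtwin}, \Cref{obs:addtwin}) and running Twin Flipping once with $\Oh[k]{|X|}$ flipped pairs.

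\textbf{The recursive step} ($|Y_2'|\ge 2$). First I would run \Cref{lem:neartwins} on $\Tc_2'$ with $W\coloneqq Y_2'$; if $|Y_2'|<f(k)=\Oh[k]{1}$ the lemma does not apply, and instead I take a single arbitrary pair $u_1,v_1\in Y_2'$ and compute $N_G(u_1)\symd N_G(v_1)\subseteq X$ by \Cref{lem:getnbs} --- legitimate because in that regime any two vertices of $Y_2'$ are $\Oh[k]{|X|/|Y_2'|}$-near-twins. Either way we obtain $t=\max(1,|Y_2'|/\Oh[k]{1})$ pairwise disjoint pairs $(u_i,v_i)$ of $Y_2'$ together with the sets $N_G(u_i)\symd N_G(v_i)$, satisfying $\sum_i|N_G(u_i)\symd N_G(v_i)|\le|X|+|Y_2'|\le n$. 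Put $Y_2^{\mathrm{del}}\coloneqq Y_2'\setminus\{v_1,\dots,v_t\}$ (which has at least $|Y_2'|/2\ge 1$ elements), obtain $\Tc_2^{\mathrm{del}}$ encoding $G[X,Y_2^{\mathrm{del}}]$ from $\Tc_2'$ by \Cref{lem:dropleafs}, and recurse to get $\Tc^{\mathrm{del}}$ encoding $G[X,Y_1\cup Y_2^{\mathrm{del}}]$ (or propagate a ``rankwidth $>k$'' answer). Then, for each $i\in[t]$, clone $u_i$ inside $\Tc^{\mathrm{del}}$ via \Cref{lem:addtwin}, calling the clone $v_i$; let $G^\star$ and $\Tc^\star$ be the resulting graph and decomposition (of width still at most $k$). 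Finally apply the assumed $\Oh[k]{T(\cdot)}$-time Twin Flipping algorithm to $\Tc^\star$, viewed as a bipartite graph with sides $Y_1\cup Y_2^{\mathrm{del}}\cup\{v_1,\dots,v_t\}$ and $X$, letting the role of ``$X$'' in \Cref{prob:rwcomp} be $\{v_1,\dots,v_t\}$, the role of ``$Y$'' be $\{u_1,\dots,u_t\}$ (each $v_i$ is a twin of $u_i$ in $G^\star$), and $F\coloneqq\{v_iw:i\in[t],\,w\in N_G(u_i)\symd N_G(v_i)\}$; here $|F|\le n$ and the flip restores the neighbourhood of each $v_i$ to $N_G(v_i)$, so $G^\star\symd F$, after renaming each clone $v_i$ back to the original vertex of $Y_2'$, equals $G[X,Y_1\cup Y_2']$. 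If Twin Flipping reports rankwidth $>k$ we report the same for $G$; otherwise we return the renamed decomposition.

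\textbf{Running time and the main obstacle.} Outside the single recursive subcall, each invocation performs only $\Oh[k]{n}$ work (one call to \Cref{lem:neartwins}, one to \Cref{lem:dropleafs}, at most $t\le n$ calls to \Cref{lem:addtwin}, and the renaming) plus one Twin Flipping call on an instance with at most $n$ vertices and $|F|\le\Oh[k]{n}$, costing $\Oh[k]{T(n)}$; since $T(n)=\Omega(n)$ this is $\Oh[k]{T(n)}$ per level. As $|Y_2'|$ shrinks by a factor $1-1/\Oh[k]{1}$ per level (and by one once $|Y_2'|$ is already $\Oh[k]{1}$), the recursion has depth $\Oh[k]{\log|Y_2|}\le\Oh[k]{\log n}$, so the total running time is $\Oh[k]{T(n)\log n}$ (using monotonicity and convexity of $T$ to sum per-level costs). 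I expect the real difficulty to lie entirely in the verification layer rather than the idea: checking at every level that the Twin Flipping preconditions genuinely hold --- $\{v_1,\dots,v_t\}$ and $\{u_1,\dots,u_t\}$ disjoint and on the same side, each $v_i$ a twin of $u_i$ before the flip, $F\subseteq\{v_1,\dots,v_t\}\times X$, and $|F|=\Oh[k]{n}$ --- and that $G^\star\symd F$ is precisely the intended graph on the intended vertex set once the clones are re-identified, together with the slightly delicate handling of the $|Y_2'|<f(k)$ corner case where \Cref{lem:neartwins} is unavailable.
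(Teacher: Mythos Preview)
Your proposal is correct and follows the same approach as the paper's proof: recursively strip near-twin pairs from $Y_2$ using \Cref{lem:neartwins}, recurse on the remainder, clone the removed vertices back via \Cref{lem:addtwin}, and restore their true neighbourhoods with one Twin Flipping call per level, for $\Oh[k]{\log n}$ levels at cost $\Oh[k]{T(n)}$ each. One small correction: in the $|Y_2'|<f(k)$ branch your claim that ``any two vertices of $Y_2'$ are $\Oh[k]{|X|/|Y_2'|}$-near-twins'' is unfounded --- nothing forces this --- but it is also unnecessary, since $N_G(u_1)\symd N_G(v_1)\subseteq X$ already gives $|F|\le|X|=\Oh[k]{n}$, which is all that \Cref{prob:rwcomp} requires; the paper sidesteps this regime by making $|Y_2|\le f(k)$ the base case and inserting all those vertices in a single Twin Flipping call.
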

\begin{proof}
The algorithm is recursive.
Let $f$ be the function from \Cref{lem:neartwins}.

We first consider the base case that $|Y_2| \le f(k)$.
If $Y_1$ is empty, we can simply return $\Tc_2$.
Otherwise, let $v$ be an arbitrary vertex in $Y_1$.
We use \Cref{lem:addtwin} to add to $\Tc_1$ for each vertex $u \in Y_2$ two new vertices $u'$ and $u''$ as twins of $v$, and denote by $Y_2'$ the set of such vertices $u'$ and by $Y_2''$ such vertices $u''$.
Let $G'$ denote the resulting graph.
The rankwidth of $G'$ is at most $k$ because it is created from $G[X,Y_1]$ by adding twins.

We use \Cref{lem:getnbs} with $\Tc_2$ to compute for each $u \in Y_2$ the neighborhood $N(u)$, and with $\Tc_1$ to compute $N(v)$.
Then, we compute $F = \{u'w \mid u' \in Y_2', w \in N(u) \symd N(v)\}$.
As $|Y_2| \le f(k)$, this takes $\Oh[k]{n}$ time, which is also an upper bound for $|F|$.
We observe that the graph $G' \symd F$ is isomorphic to a graph created from $G[X, Y_1 \cup Y_2]$ by adding a twin for each vertex in $Y_2$.
Then we apply the algorithm for Twin Flipping (\Cref{prob:rwcomp}) with the sets $Y_2'$, $Y_2''$, and $F$, and the decomposition $\Tc_1$ to obtain either that the rankwidth of $G' \symd F$ is more than $k$, in which case we can return that the rankwidth of $G[X, Y_1 \cup Y_2]$ is more than $k$, or an annotated rank decomposition $\Tc'$ of $G' \symd F$ of width at most $k$.
This takes $\Oh[k]{T(n)}$ time.
Then, $\Tc'$ can be turned into an annotated rank decomposition $\Tc$ of $G[X,Y_1 \cup Y_2]$ by using \Cref{lem:dropleafs} to delete $Y_2''$ and renaming all vertices $u' \in Y_2'$ to $u \in Y_2$.
This takes $\Oh[k]{n}$ time.
This finishes the description of the base case.
The total running time in this case is $\Oh[k]{n} + \Oh[k]{T(n)} = \Oh[k]{T(n)}$ (we assume $T(n) \ge \Omega(n)$).

Then consider the case that $|Y_2| > f(k)$.
We first apply \Cref{lem:neartwins} with $\Tc_2$ and $Y_2$ to find $|Y_2|/f(k)$ disjoint pairs of vertices $(u_1,v_1), \ldots, (u_t,v_t)$ so that $u_i$ and $v_i$ are $(f(k) \cdot n/|Y_2|)$-near-twins, and the sets $N(u_i) \symd N(v_i)$.
We let $F = \bigcup_{i=1}^t \{v_i w \mid w \in N(u_i) \symd N(v_i)\}$.
This runs in $\Oh[k]{n}$ time, which is also an upper bound for $|F|$.
Let $Y_2' = \{u_1,\ldots,u_t\}$ and $Y_2'' = \{v_1,\ldots,v_t\}$.
We use \Cref{lem:dropleafs} to obtain an annotated rank decomposition $\Tc_2'$ that encodes $G[X,Y_2 \setminus Y_2'']$, and call the algorithm recursively with $\Tc_1$ and $\Tc_2'$.
If it returns that the rankwidth of $G[X,Y_1 \cup Y_2 \setminus Y_2'']$ is more than $k$, then we can return that the rankwidth of $G[X,Y_1 \cup Y_2]$ is more than $k$.
Otherwise, let $\Tc$ be the returned annotated rank decomposition that encodes $G[X,Y_1 \cup Y_2 \setminus Y_2'']$ and has width at most $k$.
We insert the vertices $Y_2'' = \{v_1,\ldots,v_t\}$ into $\Tc$ with \Cref{lem:addtwin} so that $v_i$ is inserted as a twin of $u_i$.
Let $G'$ be the graph that the resulting decomposition encodes.
We have that $G' \symd F = G[X, Y_1 \cup Y_2]$, and we apply the algorithm for \Cref{prob:rwcomp} with this decomposition and the sets $Y_2'$, $Y_2''$, and $F$.
This either returns that the rankwidth of $G[X, Y_1 \cup Y_2]$ is more than $k$ or an annotated rank decomposition of $G[X, Y_1 \cup Y_2]$ of width at most $k$.
This finishes the description of the recursive case.
The total running time of also this case, not counting the time spent in the recursive call, is also $\Oh[k]{T(n)}$.

At each level of recursion the size of $Y_2$ decreases by at least $|Y_2|/f(k)$, so the  depth of the recursion is $\Oh[k]{\log |Y_2|}$.
At each level the running time is $\Oh[k]{T(n)}$, so the total running time is $\Oh[k]{T(n) \log |Y_2|} = \Oh[k]{T(n) \log n}$.
\end{proof}

Then we prove \Cref{lem:compmain}, which we restate here.

\compmain*
\begin{proof}
By \Cref{lem:rwbg,lem:trrdbgtoardg}, proving the lemma under the assumption that $G$ is bipartite implies the lemma for general $G$.
Therefore, we then assume that $G$ is bipartite.
Let us fix a bipartition $(A,B)$ of $G$ and an indexing $B = \{v_1, \ldots, v_{|B|}\}$ of $B$, and initialize the data structure of \Cref{lem:twinds} with these.
This takes $\Oh{n + m}$ time.

We will describe a recursive algorithm that takes as input
\begin{itemize}
\item a subset $X \subseteq A$ and two integers $\ell,r$ with $1 \le \ell \le r \le |B|$,
\end{itemize}
and outputs
\begin{itemize}
\item either an annotated rank decomposition of $G[X,\{v_\ell,\ldots,v_r\}]$ of width at most $k$, or that $G[X,\{v_\ell,\ldots,v_r\}]$ has rankwidth more than $k$.
\end{itemize}

We denote $Y = \{v_\ell,\ldots,v_r\}$.
If $|Y| = 1$, we compute an annotated rank decomposition of $G[X,Y]$ of width at most $1$ in time $\Oh{|X|+|Y|}$ and return it.
Then assume $|Y| \ge 2$.

We first use the data structure of \Cref{lem:twinds} to find the twin-equivalence classes of $X$ in $G[X,Y]$, and then compute a set $X' \subseteq X$ that contains exactly one vertex from each of the equivalence classes.
We also store for each vertex $u \in X \setminus X'$ a vertex $u_x \in X'$ so that $N_{G[X,Y]}(u) = N_{G[X,Y]}(u_x)$.
This step takes in total $\Oh{|X| \log n}$ time.
Let $f(k)$ be the function from \Cref{lem:bipartitetwins}.
The lemma implies that if $|X'| \ge f(k) \cdot |Y|$, then the rankwidth of $G[X',Y]$ (and thus also of $G[X,Y]$) is more than $k$.
In this case we can return immediately,
Then assume $|X'| \le \Oh[k]{|Y|}$.

We select $t \in [\ell,r-1]$ so that both $Y_1 = \{v_\ell,\ldots,v_t\}$ and $Y_2 = \{v_{t+1},\ldots,v_r\}$ have size either $\lfloor |Y|/2 \rfloor$ or $\lceil |Y|/2 \rceil$.
Then we make two recursive calls of the algorithm, one with $X'$ and $\ell,t$, and another with $X'$ and $t+1,r$.
If either of the calls returns that the graph has rankwidth more than $k$, we can return that the rankwidth of $G[X,Y]$ is more than $k$.
Otherwise, let $\Tc_1$ be the decomposition returned by the first call and $\Tc_2$ the decomposition returned by the second call.
We apply the algorithm of \Cref{lem:combine} with these decompositions to either conclude that the rankwidth of $G[X,Y]$ is more than $k$, or to obtain an annotated rank decomposition $\Tc$ of $G[X',Y]$ of width at most $k$.
This runs in $\Oh[k]{T(|X'|+|Y|) \log (|X'|+|Y|)}$ time.
Finally, we insert the vertices $X \setminus X'$ to the decomposition in $\Oh{|X \setminus X'|}$ time by using \Cref{lem:addtwin}, and return the resulting decomposition.
This completes the description of the algorithm.

We observe that the running time of each recursive call, not counting the time spent in the subcalls, is $\Oh[k]{T(|X|+|Y|) \log n}$.
The sum of the sizes of the sets $Y$ over all such calls is $\Oh{n \log n}$.
On all calls except the first, it is guaranteed that $|X| \le \Oh[k]{|Y|}$, so the sum of sizes of the sets $X$ over all such calls is $\Oh[k]{n \log n}$.
Then, the facts that $|X|+|Y| \le n$ in each call and the function $T$ is convex imply that the total running time past the initialization of the data structure of \Cref{lem:twinds} is $\Oh[k]{T(n) \log^2 n}$.
This concludes the proof since the data structure is initialized in $\Oh{n + m}$ time.
\end{proof}

\section{Dealternation Lemma}
\label{sec:dealternation}

In this section, we prove the Dealternation Lemma announced in \cref{lem:dealt}:

\dealternationlemma*

We will actually prove a~slightly more general result, showing an~analog of the Dealternation Lemma for \emph{subspace arrangements} -- structures described by families of linear spaces that generalize the notions of graphs, hypergraphs and linear matroids.

We begin by introducing the concepts and notation used throughout the proof.

\subsection{Section-specific preliminaries}
\label{ssec:dealternation-prelims}

\paragraph*{Linear spaces.}
Let $\F$ be a~fixed finite field; in this work we assume $\F = \GF(2)$.
The linear space over $\F$ of dimension $d$ is denoted by $\F^d$.
Given two linear subspaces $V_1, V_2$ of $\F^d$, we denote by $V_1 + V_2$ their sum and by $V_1 \cap V_2$ their intersection.
By $\dim(V)$ we denote the dimension of the subspace $V$ of $\F^d$.

The following facts are standard.

\begin{lemma}
  For any two linear subspaces $V_1, V_2$ of $\F^d$, we have that
  \[ \dim(V_1) + \dim(V_2) = \dim(V_1 + V_2) + \dim(V_1 \cap V_2). \]
\end{lemma}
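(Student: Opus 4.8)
The statement to prove is the classical dimension formula $\dim(V_1) + \dim(V_2) = \dim(V_1+V_2) + \dim(V_1 \cap V_2)$ for subspaces of $\F^d$. This is a standard fact, so the proof should be short and should follow the textbook argument via extending a basis.

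The plan is to start from a basis of the intersection and extend it in two ways. First I would pick a basis $\lin{b}_1, \dots, \lin{b}_s$ of $V_1 \cap V_2$, where $s = \dim(V_1 \cap V_2)$. Since $V_1 \cap V_2 \subseteq V_1$, this linearly independent set extends to a basis $\lin{b}_1, \dots, \lin{b}_s, \lin{c}_1, \dots, \lin{c}_p$ of $V_1$, so $\dim(V_1) = s + p$. Similarly it extends to a basis $\lin{b}_1, \dots, \lin{b}_s, \lin{d}_1, \dots, \lin{d}_q$ of $V_2$, so $\dim(V_2) = s + q$. The crux is then to show that $B \coloneqq \{\lin{b}_1, \dots, \lin{b}_s, \lin{c}_1, \dots, \lin{c}_p, \lin{d}_1, \dots, \lin{d}_q\}$ is a basis of $V_1 + V_2$; this immediately gives $\dim(V_1+V_2) = s + p + q$, and then
\[
\dim(V_1) + \dim(V_2) = (s+p) + (s+q) = s + (s+p+q) = \dim(V_1 \cap V_2) + \dim(V_1+V_2),
\]
as desired.

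To show $B$ spans $V_1 + V_2$: any element of $V_1+V_2$ is a sum of an element of $V_1$ (a combination of the $\lin{b}_i$ and $\lin{c}_j$) and an element of $V_2$ (a combination of the $\lin{b}_i$ and $\lin{d}_k$), hence a combination of vectors in $B$. The main obstacle—though still routine—is linear independence of $B$. Suppose $\sum_i \alpha_i \lin{b}_i + \sum_j \beta_j \lin{c}_j + \sum_k \gamma_k \lin{d}_k = \lin{0}$. Rearranging, $\lin{v} \coloneqq \sum_i \alpha_i \lin{b}_i + \sum_j \beta_j \lin{c}_j = -\sum_k \gamma_k \lin{d}_k$. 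The left side lies in $V_1$ and the right side lies in $V_2$, so $\lin{v} \in V_1 \cap V_2$; thus $\lin{v}$ is a combination of the $\lin{b}_i$ alone. Comparing this with the expression $\lin{v} = -\sum_k \gamma_k \lin{d}_k$ and using that $\{\lin{b}_1, \dots, \lin{b}_s, \lin{d}_1, \dots, \lin{d}_q\}$ is a basis of $V_2$ (hence linearly independent), we conclude all $\gamma_k = 0$, so $\lin{v} = \lin{0}$. Then $\sum_i \alpha_i \lin{b}_i + \sum_j \beta_j \lin{c}_j = \lin{0}$, and linear independence of the basis $\{\lin{b}_1, \dots, \lin{b}_s, \lin{c}_1, \dots, \lin{c}_p\}$ of $V_1$ forces all $\alpha_i$ and $\beta_j$ to vanish as well. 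Hence $B$ is linearly independent, completing the proof. (One should also note the degenerate cases—when $V_1 \cap V_2 = \{\lin{0}\}$, or when $V_1 \subseteq V_2$ etc.—are handled uniformly by allowing empty lists of basis vectors, so no separate case analysis is needed.)
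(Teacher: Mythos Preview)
Your proof is correct and is the standard textbook argument. The paper does not prove this lemma at all: it is listed among ``standard'' facts and stated without proof, so there is nothing to compare against beyond noting that your argument is exactly the classical one any reader would supply.
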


\begin{lemma}[{\cite[Lemma 25]{DBLP:journals/tit/JeongKO17}}]
  \label{lem:linear-dimension-switching}
  For any four linear subspaces $U_1, U_2, V_1, V_2$ of $\F^d$, we have that
  \[
  \begin{split}
  &\hphantom{=} \dim((U_1 + U_2) \cap (V_1 + V_2)) + \dim(U_1 \cap U_2) + \dim(V_1 \cap V_2)\\
  &= \dim((U_1 + V_1) \cap (U_2 + V_2)) + \dim(U_1 \cap V_1) + \dim(U_2 \cap V_2).
  \end{split}
  \]
\end{lemma}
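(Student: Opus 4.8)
The statement to be proved is \cref{lem:linear-dimension-switching}, attributed to \cite{DBLP:journals/tit/JeongKO17}: for any four subspaces $U_1, U_2, V_1, V_2$ of $\F^d$,
\[
\dim((U_1 + U_2) \cap (V_1 + V_2)) + \dim(U_1 \cap U_2) + \dim(V_1 \cap V_2)
= \dim((U_1 + V_1) \cap (U_2 + V_2)) + \dim(U_1 \cap V_1) + \dim(U_2 \cap V_2).
\]
I would prove this by establishing that each of the two sides equals a~common symmetric quantity. The natural candidate is $\dim(W) - \dim(U_1 + U_2 + V_1 + V_2) + \dim(U_1) + \dim(U_2) + \dim(V_1) + \dim(V_2)$ (or some close variant), where $W \subseteq \F^{2d}$ is a~cleverly chosen auxiliary subspace built from the four given spaces; since this expression is manifestly invariant under swapping the roles of the ``$+$-pairings'' $\{U_1 U_2, V_1 V_2\}$ and $\{U_1 V_1, U_2 V_2\}$, proving the equality reduces to proving one of the two sides equals it, and then symmetry does the rest.

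\textbf{Key steps.} First I would recall the two standard facts stated just above in the excerpt: the modular-law-style identity $\dim(A) + \dim(B) = \dim(A+B) + \dim(A \cap B)$, and the first displayed lemma (unlabelled) that will be used repeatedly. Second, I would introduce the diagonal-type construction: work in $\F^d \oplus \F^d = \F^{2d}$, and consider the subspace $X = \{(x, x) : x \in \F^d\}$ together with the ``product'' subspaces $U_1 \oplus V_1$, $U_2 \oplus V_2$, etc., so that intersections like $(U_1 \oplus V_1) \cap X$ correspond precisely to $U_1 \cap V_1$ embedded diagonally, while sums project down to sums in $\F^d$. Concretely, $\dim((U_1 \oplus V_1) \cap X) = \dim(U_1 \cap V_1)$, and projecting $(U_1 \oplus V_1) + (U_2 \oplus V_2)$ onto the second coordinate recovers $V_1 + V_2$. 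Third, I would compute $\dim\big( ((U_1 \oplus V_1) + (U_2 \oplus V_2)) \cap X \big)$ in two ways: once using the pairing $(U_1 \oplus V_1), (U_2 \oplus V_2)$ and once — after regrouping the four summands as $(U_1 \oplus U_2)$-type blocks across the two copies of $\F^d$ — using the pairing that produces $(U_1 + U_2) \cap (V_1 + V_2)$. Each computation is a~chain of applications of the dimension formula and the ``kernel = intersection with $X$, image = projected sum'' bookkeeping. Matching the two expressions yields exactly the claimed identity after cancelling the common terms $\dim(U_i), \dim(V_i)$ and $\dim(U_1 + U_2 + V_1 + V_2)$.

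\textbf{Main obstacle.} The genuinely delicate step is the second dimension computation, where the four summands have to be regrouped: one must show that $((U_1 \oplus V_1) + (U_2 \oplus V_2)) \cap X$, which lives inside $\F^{2d}$, has dimension expressible through $\dim((U_1 + U_2) \cap (V_1 + V_2))$ plus correction terms $\dim(U_1 \cap U_2)$, $\dim(V_1 \cap V_2)$. This requires carefully identifying, for a~vector $(x,x)$ in the intersection, a~decomposition $x = u_1 + u_2 = v_1 + v_2$ with $u_i \in U_i$, $v_i \in V_i$, and counting the ambiguity in such decompositions (which is controlled precisely by $U_1 \cap U_2$ and $V_1 \cap V_2$) against the genuine freedom (controlled by $(U_1 + U_2) \cap (V_1 + V_2)$). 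Setting up a~short exact sequence of linear maps — say, from the fibered product $\{(u_1, u_2, v_1, v_2) : u_1 + u_2 = v_1 + v_2\}$ down to $(U_1 + U_2) \cap (V_1 + V_2)$ with kernel $U_1 \cap U_2$ times $V_1 \cap V_2$, or rather a~suitable quotient thereof — makes this exact; the bookkeeping of which intersections appear with which sign is where care is needed, but no deep idea beyond rank–nullity is required. I would then remark that this is a~purely linear-algebraic fact, cite \cite{DBLP:journals/tit/JeongKO17} for the original statement, and note that the identity is exactly the ``linear algebra'' underpinning the submodularity-type manipulations of $\cutrk$ used later in \cref{sec:dealternation}.
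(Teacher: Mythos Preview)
The paper does not prove this lemma at all; it is stated in the preliminaries with a citation to \cite[Lemma~25]{DBLP:journals/tit/JeongKO17} and no argument is given. So there is no ``paper's own proof'' to compare against, and any correct argument you supply would go beyond what the paper does.

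Your proposal contains the right idea but the framing is muddled. The $\F^{2d}$-with-diagonal setup does not do what you claim: the subspace $((U_1 \oplus V_1) + (U_2 \oplus V_2)) \cap X$ equals $(U_1+U_2)\oplus(V_1+V_2)$ intersected with the diagonal, which is just $(U_1+U_2)\cap(V_1+V_2)$ embedded diagonally --- a single fixed subspace, not something you can ``regroup'' to obtain $(U_1+V_1)\cap(U_2+V_2)$. Regrouping would mean replacing $U_1\oplus V_1$ and $U_2\oplus V_2$ by $U_1\oplus U_2$ and $V_1\oplus V_2$, but then you are intersecting a \emph{different} subspace with $X$, so you are no longer computing the same quantity in two ways.

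What actually works is exactly the fibered product you mention only in your ``main obstacle'' paragraph, used directly. Set
\[
K = \{(u_1,u_2,v_1,v_2)\in U_1\times U_2\times V_1\times V_2 : u_1+u_2 = v_1+v_2\}.
\]
The map $(u_1,u_2,v_1,v_2)\mapsto u_1+u_2$ is a surjection $K\to (U_1+U_2)\cap(V_1+V_2)$ with kernel isomorphic to $(U_1\cap U_2)\times(V_1\cap V_2)$, giving $\dim K$ equal to the left-hand side. The map $(u_1,u_2,v_1,v_2)\mapsto u_1 - v_1 = v_2 - u_2$ is a surjection $K\to (U_1+V_1)\cap(U_2+V_2)$ with kernel isomorphic to $(U_1\cap V_1)\times(U_2\cap V_2)$, giving $\dim K$ equal to the right-hand side. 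That is the whole proof; drop the $\F^{2d}$ scaffolding and lead with this.
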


For any set of vectors $A \subseteq \F^d$, we denote by $\sumof{A}$ the subspace of $\F^d$ spanned by the vectors of $A$.
If $\dim(\sumof{A}) = |A|$, then we say that $A$ is a~\emph{basis} of $\sumof{A}$.
Then any permutation $\BB$ of elements of $A$ is called an~\emph{ordered basis} of $\sumof{A}$; for convenience, we define that $\sumof{\BB} = \sumof{A}$.
Letting $\BB = (\lin{v}_1, \lin{v}_2, \dots, \lin{v}_c)$, we have that every vector $\lin{u} \in \sumof{\BB}$ can be uniquely represented as a~linear combination $\lin{u} = \sum_{i=1}^c \alpha_i \lin{v}_i$.
In this work, whenever the ordered basis $\BB$ of a~vector space $V$ is known from context, all vectors $\lin{u} \in V$ will be implicitly represented as such a~linear combination.
Similarly, subspaces of $V$ are then implicitly represented as $\sumof{\{\lin{u}_1, \dots, \lin{u}_d\}}$, where $\lin{u}_1, \dots, \lin{u}_d \in V$ are implicitly represented as linear combinations of vectors of $\BB$.
Such a~representation can be then stored using $\Oh{cd}$ elements of $\F$.

\paragraph*{Subspace arrangements and rank decompositions.}
Let $d \in \N$ and consider the linear space $\F^d$.
Any family $\Vc = \{V_1, V_2, \dots, V_n\}$ of linear subspaces of $\F^d$ is called a~\emph{subspace arrangement}.
For visual clarity, let $\sumof{\Vc} = \sum_{i=1}^n V_i$ be the sum of all subspaces in the arrangement.

A~\emph{rank decomposition} of a~subspace arrangement $\Vc$ is a~pair $\Tc = (T, \lambda)$, where $T$ is a~cubic tree and $\lambda$ is a~bijection $\lambda\,\colon\,\Vc \to \leafe(T)$.
For an~oriented edge $\ouv \in \oE(T)$, we denote by $\lparts(\Tc)[\ouv] = \{\lmap^{-1}(\olp) \,\mid\, \olp \in \leafe(T)[\ouv]\}$ the subfamily of $\Vc$ comprising all linear subspaces that are mapped to leaf edges that are closer to $u$ than $v$.
The \emph{boundary space} of an~edge $uv$ is defined as $B_{uv} = \sumof{\lparts(\Tc)[\ouv]} \cap \sumof{\lparts(\Tc)[\ovu]}$.

A~\emph{rooted rank decomposition} is defined analogously to a~rank decomposition, only that $T$ is a~binary tree.
Recall that a~rank decomposition can be rooted by subdividing a~single edge $uv$ once -- replacing it with a~path $urv$ -- and rooting the tree at $r$.
The boundary space of a~non-root node $v$ with parent $p$ is $B_v = B_{vp}$ and the boundary space of the root $r$ is $B_r = \{\lin{0}\}$.
Also, we set $\lparts(\Tc)[v] = \lparts(\Tc)[\vec{vp}]$ for $v \neq r$ and $\lparts(\Tc)[r] = \Vc$.

The width of an~edge $uv \in E(T)$ is defined as $\dim(B_{uv})$.
The width of a~rank decomposition is the maximum width of any edge of the decomposition.
Thus, the width of a~rooted rank decomposition is equivalently the maximum value of $\dim(B_v)$ ranging over non-root nodes $v$.

Rank decompositions of (partitioned) graphs can be transformed to equivalent rank decompositions of subspace arrangements; the reduction is shown below, but it is also present in~\cite{DBLP:journals/siamdm/JeongKO21}.

Suppose $G$ is a~graph; for simplicity, assume $V(G) = \{1, \dots, |V(G)|\}$.
Consider the vector space $\GF(2)^{|V(G)|}$ and its canonical basis $\{\lin{e}_1, \lin{e}_2, \dots, \lin{e}_{|V(G)|}\}$.
To each vertex $v \in V(G)$ assign the vector space $A_v$ spanned by the vectors $\lin{e}_v$ and $\sum_{u \in N(v)} \lin{e}_u$, which we will call the \emph{canonical subspace} of $v$.
Similarly, for a~set $S \subseteq V(G)$, we assign to it the canonical subspace $A_S \coloneqq \sum_{v \in S} A_v$. 
It is then straightforward to verify that:
\begin{lemma}[{\cite[Lemma 52]{DBLP:journals/tit/JeongKO17}}]
For any set $S \subseteq V(G)$, we have
\[ \dim\left( A_S \cap A_{V(G) \setminus S} \right) = 2 \cdot \cutrk(S). \]
\end{lemma}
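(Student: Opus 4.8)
The plan is to describe $A_S$ and $A_{\overline{S}}$ fully explicitly in coordinates, where $\overline{S} \coloneqq V(G) \setminus S$. I would split the ambient space $\GF(2)^{|V(G)|}$ as the direct sum $\Pi_S \oplus \Pi_{\overline{S}}$ of the coordinate subspaces $\Pi_S \coloneqq \sumof{\{\lin{e}_v : v \in S\}}$ and $\Pi_{\overline{S}} \coloneqq \sumof{\{\lin{e}_v : v \in \overline{S}\}}$, and for each $v$ split its neighbourhood vector $\lin{n}_v \coloneqq \sum_{u \in N(v)} \lin{e}_u$ as $\lin{n}_v = \lin{n}_v^S + \lin{n}_v^{\overline{S}}$, where $\lin{n}_v^S \coloneqq \sum_{u \in N(v) \cap S} \lin{e}_u \in \Pi_S$ and $\lin{n}_v^{\overline{S}} \coloneqq \sum_{u \in N(v) \cap \overline{S}} \lin{e}_u \in \Pi_{\overline{S}}$ are the two coordinate blocks of $\lin{n}_v$.

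First I would prove that $A_S = \Pi_S \oplus W_S$, where $W_S \coloneqq \sumof{\{\lin{n}_v^{\overline{S}} : v \in S\}} \subseteq \Pi_{\overline{S}}$. By definition $A_S$ is spanned by the vectors $\lin{e}_v$ and $\lin{n}_v$ over $v \in S$; since $\lin{e}_v \in A_v \subseteq A_S$ for all $v \in S$, we also get $\lin{n}_v^S \in \Pi_S \subseteq A_S$, hence $\lin{n}_v^{\overline{S}} = \lin{n}_v - \lin{n}_v^S \in A_S$. Thus $A_S$ is spanned by $\{\lin{e}_v : v \in S\} \cup \{\lin{n}_v^{\overline{S}} : v \in S\}$, which gives $A_S = \Pi_S + W_S$, and the sum is direct because $\Pi_S \cap \Pi_{\overline{S}} = \{\lin{0}\}$. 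Crucially, $W_S$ is exactly the $\GF(2)$-row space of the $S \times \overline{S}$ adjacency matrix $M$ of the cut $(S, \overline{S})$ (the row indexed by $v \in S$ is $\lin{n}_v^{\overline{S}}$), so $\dim W_S = \operatorname{rank}_{\GF(2)}(M) = \cutrk(S)$. Applying the same reasoning to $\overline{S}$ yields $A_{\overline{S}} = \Pi_{\overline{S}} \oplus W_{\overline{S}}$ with $W_{\overline{S}} \subseteq \Pi_S$ the column space of $M$, and since row rank equals column rank, $\dim W_{\overline{S}} = \operatorname{rank}_{\GF(2)}(M^{\mathsf{T}}) = \cutrk(S)$ as well.

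The remaining step is the intersection, and I would show $A_S \cap A_{\overline{S}} = W_{\overline{S}} \oplus W_S$. The inclusion $\supseteq$ is immediate: any $x \in W_{\overline{S}}$ lies in $\Pi_S \subseteq A_S$ and in $W_{\overline{S}} \subseteq A_{\overline{S}}$, and symmetrically any $y \in W_S$ lies in $W_S \subseteq A_S$ and in $\Pi_{\overline{S}} \subseteq A_{\overline{S}}$, so $x+y \in A_S \cap A_{\overline{S}}$. For $\subseteq$, take $z \in A_S \cap A_{\overline{S}}$ and write $z = x + y$ with $x \in \Pi_S$, $y \in \Pi_{\overline{S}}$ via the coordinate splitting; comparing this with the decomposition of $z$ coming from $A_S = \Pi_S \oplus W_S$ forces $y \in W_S$, and comparing with the one from $A_{\overline{S}} = \Pi_{\overline{S}} \oplus W_{\overline{S}}$ forces $x \in W_{\overline{S}}$. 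Since $W_{\overline{S}} \subseteq \Pi_S$ and $W_S \subseteq \Pi_{\overline{S}}$, the sum $W_{\overline{S}} + W_S$ is direct, and therefore $\dim(A_S \cap A_{\overline{S}}) = \dim W_{\overline{S}} + \dim W_S = 2 \cdot \cutrk(S)$.

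Most of this is routine direct-sum bookkeeping, and the degenerate cases $S = \emptyset$ or $S = V(G)$ are handled automatically by the empty-index conventions. The only point needing genuine care is the intersection computation in the last step — specifically, using the uniqueness of the $\Pi_S \oplus \Pi_{\overline{S}}$ decomposition of $z$ to simultaneously locate its $S$-component in $W_{\overline{S}}$ and its $\overline{S}$-component in $W_S$, and then verifying the reverse inclusion — but I do not anticipate any real obstacle there.
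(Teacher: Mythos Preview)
Your proof is correct. The decomposition $A_S = \Pi_S \oplus W_S$ with $W_S$ equal to the row space of the bipartite adjacency matrix across the cut is exactly right, and the intersection argument via uniqueness of the $\Pi_S \oplus \Pi_{\overline{S}}$ splitting is clean and complete.

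Note, however, that the paper does not actually prove this lemma: it is stated with a citation to \cite[Lemma~52]{DBLP:journals/tit/JeongKO17} and no proof is given. So there is no in-paper argument to compare against; your write-up supplies a self-contained elementary proof of a result the authors chose to import as a black box.
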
 

We then immediately have that:

\begin{lemma}
  \label{lem:subspace-arrangement-from-graph}
  Let $(G, \Cc)$ be a~partitioned graph with $\Cc = \{S_1, S_2, \dots, S_n\}$.
  Let $\Vc = \{V_1, V_2, \dots, V_n\}$ be a~subspace arrangement over $\GF(2)^{|V(G)|}$, where $V_i = A_{S_i}$ for each $i \in [n]$. 
  Then $\Vc$ satisfies the following property.
  
  Let $T$ be a~cubic tree with leaves $\ell_1, \dots, \ell_n$.
  Define bijections $\lambda_1\,\colon\,\Cc \to \leafe(T)$ and $\lambda_2\,\colon\,\Vc \to \leafe(T)$ so that for every $i \in [n]$, both $\lambda_1(S_i)$ and $\lambda_2(V_i)$ are assigned to the oriented edge incident to $\ell_i$.
  Note that $\Tc = (T, \lambda_1)$ is a~rank decomposition of $(G, \Cc)$ and $\Tc' = (T, \lambda_2)$ is an~(isomorphic) rank decomposition of $\Vc$.
  Then the width of $\Tc'$ is equal to twice the width of $\Tc$.
\end{lemma}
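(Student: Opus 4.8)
The plan is to prove \Cref{lem:subspace-arrangement-from-graph} by combining the preceding \cite[Lemma 52]{DBLP:journals/tit/JeongKO17} with a routine bookkeeping argument matching cuts in the tree to cuts in the vertex set. First I would fix the cubic tree $T$ with leaves $\ell_1, \dots, \ell_n$ and the two bijections $\lambda_1, \lambda_2$ as in the statement, so that $\Tc = (T, \lambda_1)$ and $\Tc' = (T, \lambda_2)$ share the same underlying tree and assign the $i$-th leaf edge the part $S_i$ and the subspace $V_i = A_{S_i}$, respectively. The goal then reduces to showing that each oriented edge $\ouv \in \oE(T)$ has equal width in both decompositions, i.e., $\dim(B_{uv}) = 2 \cdot \cutrk_G(\lparts(\Tc)[\ouv])$, where $\lparts(\Tc)[\ouv] \subseteq V(G)$ is the union of the parts on the $u$-side and $\lparts(\Tc')[\ouv]$ is the corresponding subfamily of $\Vc$.

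The key computation is to observe that $\sumof{\lparts(\Tc')[\ouv]} = \sum_{i : \ell_i \text{ on } u\text{-side}} A_{S_i} = A_{\lparts(\Tc)[\ouv]}$, using that the canonical subspace of a union of disjoint sets is the sum of the canonical subspaces of the pieces (this is just the definition $A_S = \sum_{v\in S} A_v$, applied with $S$ ranging over the parts $S_i$ on the $u$-side, whose disjoint union is $\lparts(\Tc)[\ouv]$). Symmetrically, $\sumof{\lparts(\Tc')[\ovu]} = A_{\lparts(\Tc)[\ovu]} = A_{V(G) \setminus \lparts(\Tc)[\ouv]}$ since the parts partition $V(G)$. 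Therefore, writing $S = \lparts(\Tc)[\ouv]$,
\[
\dim(B_{uv}) = \dim\!\big(\sumof{\lparts(\Tc')[\ouv]} \cap \sumof{\lparts(\Tc')[\ovu]}\big) = \dim\!\big(A_S \cap A_{V(G)\setminus S}\big) = 2 \cdot \cutrk_G(S),
\]
where the last equality is exactly \cite[Lemma 52]{DBLP:journals/tit/JeongKO17}. Meanwhile the width of edge $uv$ in $\Tc$ is by definition $\cutrk_G(S)$, so the width of this edge in $\Tc'$ is twice its width in $\Tc$. Taking the maximum over all edges of $T$ gives that the width of $\Tc'$ equals twice the width of $\Tc$.

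I do not expect a genuine obstacle here; the lemma is essentially a packaging of the cited dimension formula. The only point requiring a small amount of care is the identity $A_{S_1 \cup S_2} = A_{S_1} + A_{S_2}$ for disjoint $S_1, S_2$, which must be invoked inductively over the parts lying on one side of the edge; this is immediate from the definition $A_S = \sum_{v \in S} A_v$ but should be stated explicitly so that the reduction to \cite[Lemma 52]{DBLP:journals/tit/JeongKO17} is transparent. One should also briefly note that $\lparts(\Tc')[\ouv]$ and $\lparts(\Tc')[\ovu]$ together enumerate all of $\Vc$ precisely because $\leafe(T)[\ouv]$ and $\leafe(T)[\ovu]$ partition $\leafe(T)$, which is how the complementary set $V(G)\setminus S$ arises. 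With these remarks in place the proof is a two-line calculation as displayed above.
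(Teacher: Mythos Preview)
Your proposal is correct and follows exactly the approach the paper intends: the paper states this lemma with the phrase ``We then immediately have that'' after citing \cite[Lemma 52]{DBLP:journals/tit/JeongKO17}, giving no further proof, and your argument is precisely the routine unpacking of that citation via the identity $\sumof{\lparts(\Tc')[\ouv]} = A_{\lparts(\Tc)[\ouv]}$.
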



The statement of the Dealternation Lemma (\cref{lem:dealt}) can be thus generalized to the rank decompositions of subspace arrangements.
Mimicking the concepts defined for graphs, we say that a~set $F \subseteq \Vc$ is a~tree factor of $\Tc = (T, \lambda)$ if $F = \lparts(\Tc)[t]$ for some $t \in V(T)$; and a~context factor if it is not a~tree factor, but a set of the form $F = F_1 \setminus F_2$, where $F_1$ and $F_2$ are tree factors of $\Tc$.
$F$ is a~factor of $\Tc$ if it is either a~tree factor or a~context factor of $\Tc$.
Then:

\begin{lemma}[Dealternation Lemma for subspace arrangements]
  \label{lem:subspace-dealternation}
  There exists a~function $f_{\ref{lem:subspace-dealternation}}\,\colon\,\N\to\N$ so that if $\Vc$ is a~subspace arrangement and $\Tc^b = (T^b, \lambda^b)$ is a~rooted rank decomposition of $\Vc$ of width $\ell \geq 0$, then there exists a~rooted rank decomposition $\Tc$ of $\Vc$ of optimum width so that for every node $t \in V(T^b)$, the set $\lparts(\Tc^b)[t]$ can be partitioned into a~disjoint union of at most $f_{\ref{lem:subspace-dealternation}}(\ell)$ factors of $\Tc$.
\end{lemma}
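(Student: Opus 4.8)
The plan is to follow the roadmap sketched in the overview and reduce everything to a careful combinatorial analysis of a single optimum-width decomposition that is ``totally pure'' with respect to $\Tc^b$, in the sense of Jeong, Kim and Oum \cite{DBLP:journals/siamdm/JeongKO21}. Concretely, I would first invoke \cref{thm:overview-totally-pure} to obtain a rooted rank decomposition $\Tc$ of $\Vc$ of optimum width that is totally pure with respect to $\Tc^b$. The heart of the argument is then the \emph{Local Dealternation Lemma} (\cref{lem:over-local-dealternation}): for a fixed $x\in V(T^b)$, if the $x$-mixed skeleton of the current optimum decomposition has $t$ nodes, then one can reorder the decomposition along the heavily alternating vertical paths (the ones with many alternations between $x$-full and $x$-empty side-children) so that $\lparts(\Tc^b)[x]$ becomes a disjoint union of at most $f(t,\ell)$ factors, while (i) optimum width is preserved, (ii) all $y$-mixed skeletons are preserved for every $y$, and (iii) every factor of $\Tc$ contained in $\lparts(\Tc^b)[y]$ for a $y$ that is not an ancestor of $x$ survives as a factor of $\Tc'$. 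I would then combine this with the bound from \cref{lem:overview-small-mixed-skeleton} (size of every $x$-mixed skeleton is at most some $g(\ell)$) to conclude: enumerate $V(T^b)=\{x_1,\dots,x_m\}$ in order of non-increasing distance from the root of $T^b$, and apply the Local Dealternation Lemma successively at $x_1,\dots,x_m$. Since processing $x_j$ does not change any mixed skeleton, the skeleton at $x_i$ always has size $\le g(\ell)$ when we reach it, so each step produces a partition of $\lparts(\Tc^b)[x_i]$ into at most $f(g(\ell),\ell)=:f_{\ref{lem:subspace-dealternation}}(\ell)$ factors; and by property (iii), because $x_i$ comes after all its strict descendants in $T^b$ (deeper nodes first, and an ancestor is never deeper), all previously-built partitions are preserved. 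The final decomposition after the last step is the $\Tc$ claimed by the lemma.

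Before that, the groundwork consists of proving the two structural lemmas that the above uses as black boxes. For \cref{lem:overview-small-mixed-skeleton}, the plan is to take $\Tc$ totally pure with respect to $\Tc^b$ and bound, for each $x\in V(T^b)$, the number of $x$-leaf points and $x$-branch points. The branch points form a binary-tree-like structure whose leaves are among the $x$-leaf points, so it suffices to bound the $x$-leaf points; each of these is associated with a ``transition'' in $\Tc$ between $\Vc_x$ and its complement, and total purity precisely forbids the local patterns that would allow unboundedly many inequivalent such transitions — one shows that along any vertical path the relevant boundary-space data (governed by $B_x$, which has dimension $\le\ell$, and the analogous spaces coming from $\Tc$, bounded by optimum width $\le\ell$) can take only $f(\ell)$ essentially different values, forcing the number of branch/leaf points to be $f(\ell)$. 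For the Local Dealternation Lemma itself, the plan is to adapt the typical-sequences toolbox of Bodlaender–Kloks \cite{DBLP:journals/jal/BodlaenderK96} as repackaged by Bojańczyk–Pilipczuk \cite{DBLP:journals/lmcs/BojanczykP22} to the rankwidth setting: along a vertical path $v_1v_2\dots v_{p+1}$ with side-children $v'_1,\dots,v'_p$ each $x$-full or $x$-empty, one wants to permute the side-children so that the $x$-full ones occupy at most $c_\ell$ contiguous blocks, without increasing the width of any edge of the decomposition. This is done by working with the sequence of boundary subspaces $B_{v_i}$, replacing the additive ``typical sequence'' argument for treewidth by the dimension-exchange identities for linear subspaces (\cref{lem:linear-dimension-switching} and the modular equality $\dim(V_1)+\dim(V_2)=\dim(V_1+V_2)+\dim(V_1\cap V_2)$), which let one argue that certain reorderings can only decrease the maximum dimension along the path.

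I expect the reordering step — the proof of the Local Dealternation Lemma and in particular the guarantee that it does not increase width and simultaneously satisfies the three compatibility conditions — to be the main obstacle, since conditions (ii) and (iii) constrain the surgery to touch only the heavily alternating vertical paths ``belonging'' to the current $x$ and leave everything else, including other mixed skeletons and already-formed factors, exactly in place. Ensuring all of this at once is precisely where the care in \cite{DBLP:journals/lmcs/BojanczykP22} is needed, now transplanted to subspace arrangements where the ``costs'' are dimensions of intersections of subspaces rather than sizes of separators. The remaining ingredient is entirely routine: translating \cref{lem:subspace-dealternation} back to \cref{lem:dealt} for graphs via the equivalence of \cref{lem:subspace-arrangement-from-graph} (the canonical-subspace construction, under which the rankwidth of $\Vc$ is twice that of $G$, optimum-width decompositions correspond, and tree/context factors correspond), so that the function $f$ in \cref{lem:dealt} can be taken to be $f_{\ref{lem:subspace-dealternation}}$ evaluated at $2\ell$. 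Thus the only nontrivial work is concentrated in the two structural lemmas and the inductive application described in the first paragraph.
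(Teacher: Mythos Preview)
Your proposal is correct and matches the paper's approach essentially line for line: start from a totally pure optimum-width decomposition (\cref{lem:korean-rank-structure-theorem}), use the bounded mixed-skeleton lemma (\cref{lem:small-mixed-skeleton}), then iterate the Local Dealternation Lemma (\cref{lem:local-dealternation}) over the nodes of $T^b$ ordered with descendants before ancestors, using property~(ii) to keep all skeletons small throughout and property~(iii) to preserve previously-built factorizations. Your sketch of how the two supporting lemmas are proved (analysis of total purity for the skeleton bound; block shuffles via typical-sequence/dimension-exchange arguments for the local step) also agrees with what the paper does in \cref{sec:totally-pure-decomp} and \cref{ssec:final-local-dealternation-lemma}.
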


Note that \cref{lem:subspace-dealternation} directly implies the Dealternation Lemma through \cref{lem:subspace-arrangement-from-graph}.
Hence, the rest of this section will be devoted to the proof of \cref{lem:subspace-dealternation}.

\paragraph*{Fullness, emptiness and mixedness of edges and nodes.}
Let $\Vc = \{\Vc_1, \Vc_2, \dots, \Vc_n\}$ be a~subspace arrangement and $\Tc^b = (T^b, \lambda^b)$ be a~rooted rank decomposition of $\Vc$ (possibly of unoptimal width).
We introduce the ancestor-descendant relationship on the nodes of $T^b$: we say $x \leq y$ whenever $x = y$ or $x$ is a~descendant of $y$, and by $x < y$ we mean $x \leq y$ and $x \neq y$.
Moreover, define $\Vc_x = \lparts(\Tc^b)[x]$ as the subfamily of $\Vc$ comprising those subspaces $\Vc_i$ that are mapped to the leaf edges $\olp$ with $x \geq l$.
Note that $\Vc_r = \Vc$ if and only if $r$ is the root of $T^b$, and $|\Vc_l| = 1$ if and only if $l$ is a~leaf of $T^b$.
We will then say that each $V \in \Vc_x$ is \emph{in the subtree of $\Tc^b$ rooted at $x$}.
We remark that if $x, y \in V(T^b)$ with $x \leq y$, then $\Vc_x \subseteq \Vc_y$; and whenever $x, y$ are not in the ancestor-descendant relationship in $T^b$, then $\Vc_x \cap \Vc_y = \emptyset$.

For the following description, consider a~node $x$ of $T^b$.
Let $\Tc = (T, \lambda)$ be a~rank decomposition of $\Vc$ (rooted or unrooted).
Define $\lparts_x(\Tc)[\ouv] = \lparts(\Tc)[\ouv] \cap \Vc_x$ as the family of linear spaces containing exactly those linear spaces $V \in \Vc$ that:
\begin{itemize}
  \item are in the subtree of $(T^b, \lambda^b)$ rooted at $x$; and
  \item in $(T, \lambda)$, are mapped to a~leaf edge closer to $u$ than $v$.
\end{itemize}
Similarly, we set $\lparts_{\bar{x}}(\Tc)[\ouv] = \lparts(\Tc)[\ouv] \setminus \Vc_x = \lparts(\Tc)[\ouv] \setminus \lparts_x(\Tc)[\ouv]$.
Note that if an~edge $\vec{v_1 v_2}$ is a~predecessor of an~edge $\vec{v_3 v_4}$ in $T$, then $\lparts_x(\Tc)[\vec{v_1 v_2}] \subseteq \lparts_x(\Tc)[\vec{v_3 v_4}]$ and $\lparts_{\bar{x}}(\Tc)[\vec{v_1 v_2}] \subseteq \lparts_{\bar{x}}(\Tc)[\vec{v_3 v_4}]$.

We also say that a directed edge $\ouv$ of $T$ is:
\begin{itemize}
  \item \emph{$x$-full} if $\lparts(\Tc)[\ouv] \subseteq \Vc_x$; that is, for every leaf edge $e$ of $(T, \lambda)$ closer to $u$ than $v$, $e$ is mapped to a~space $V \in \Vc$ in the subtree of $\Tc^b$ rooted at $x$;
  \item \emph{$x$-empty} if $\lparts(\Tc)[\ouv] \cap \Vc_x = \emptyset$, or equivalently, $\lparts_x(\Tc)[\ouv] = \emptyset$;
  \item \emph{$x$-mixed} otherwise.
\end{itemize}

Similarly, if $\Tc$ is rooted, then we additionally say that a~node $v \in V(T)$ is $x$-full (resp.~$x$-empty or $x$-mixed) if $\lparts(\Tc)[v] \subseteq \Vc_x$ (resp.~$\lparts(\Tc)[v] \cap \Vc_x = \emptyset$ or $\lparts(\Tc)[v] \cap \Vc_x \notin \{\emptyset, \lparts(\Tc)[v] \}$).
Equivalently for non-root nodes $v$, $v$ is $x$-full (resp.~$x$-empty, $x$-mixed) if and only if the directed edge $\vec{vp}$ is $x$-full (resp.~$x$-empty, $x$-mixed), where $p$ is the parent of $v$ in $T$.

The following observation shows how the notions of fullness, emptiness and mixedness of edges of $T$ are related for pairs of nodes of $T^b$:

\begin{observation}
  \label{obs:mixedness-relations}
  Let $x, y \in V(T^b)$ and $\ouv \in \oE(T)$.
  \begin{itemize}
    \item If $\ouv$ is $x$-empty and $y \leq x$, then $\ouv$ is $y$-empty.
    \item If $\ouv$ is $x$-mixed and $y \ngtr x$, then $\ouv$ is $y$-empty or $y$-mixed.
    \item If $\ouv$ is $x$-mixed and $y \geq x$, then $\ouv$ is $y$-mixed or $y$-full.
    \item If $\ouv$ is $x$-full and $y \geq x$, then $\ouv$ is $y$-full.
    \item If $\ouv$ is $x$-full and $x, y$ are not in the ancestor-descendant relationship, then $\ouv$ is $y$-empty.
  \end{itemize}
\end{observation}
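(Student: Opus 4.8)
The plan is to verify each of the five implications directly from the definitions, using only the two structural facts recorded just before the statement: if $x \leq y$ in $T^b$ then $\Vc_x \subseteq \Vc_y$, and if $x$ and $y$ are incomparable in $T^b$ then $\Vc_x \cap \Vc_y = \emptyset$. Throughout I would write $P = \lparts(\Tc)[\ouv]$ for brevity and recall that $\ouv$ is $x$-full iff $P \subseteq \Vc_x$, $x$-empty iff $P \cap \Vc_x = \emptyset$, and $x$-mixed iff neither holds, i.e.\ $P \cap \Vc_x \neq \emptyset$ and $P \setminus \Vc_x \neq \emptyset$.

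Three of the items are immediate one-liners. For the first, if $\ouv$ is $x$-empty and $y \leq x$, then $P \cap \Vc_y \subseteq P \cap \Vc_x = \emptyset$, so $\ouv$ is $y$-empty. For the fourth, if $\ouv$ is $x$-full and $y \geq x$, then $P \subseteq \Vc_x \subseteq \Vc_y$, so $\ouv$ is $y$-full. For the fifth, if $\ouv$ is $x$-full and $x,y$ are incomparable, then $P \cap \Vc_y \subseteq \Vc_x \cap \Vc_y = \emptyset$, so $\ouv$ is $y$-empty.

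The remaining two items require a small case distinction. For the third, if $\ouv$ is $x$-mixed there is some $W \in P \cap \Vc_x$; since $y \geq x$ gives $\Vc_x \subseteq \Vc_y$, we get $W \in P \cap \Vc_y$, so $\ouv$ is not $y$-empty, i.e.\ it is $y$-mixed or $y$-full. For the second, assuming $\ouv$ is $x$-mixed and $y \ngtr x$, it suffices to show $\ouv$ is not $y$-full, i.e.\ $P \not\subseteq \Vc_y$. If $y \leq x$, pick $V \in P \setminus \Vc_x$ (which exists by $x$-mixedness); then $V \notin \Vc_y$ since $\Vc_y \subseteq \Vc_x$. If instead $y$ and $x$ are incomparable, pick $W \in P \cap \Vc_x$; then $W \notin \Vc_y$ since $\Vc_x \cap \Vc_y = \emptyset$. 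Either way $P \not\subseteq \Vc_y$, as desired; the sub-case $y = x$ is trivial, and $y > x$ is excluded by the hypothesis $y \ngtr x$.

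Since the argument is entirely elementary, I do not anticipate a genuine obstacle; the only point demanding a little care is ensuring the case split in the second and third items covers exactly the relations permitted by the hypotheses $y \ngtr x$ and $y \geq x$, so that no configuration of $x$ and $y$ in $T^b$ is overlooked.
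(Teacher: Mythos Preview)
Your proof is correct and follows exactly the natural route: the paper states this as an observation without proof, implicitly expecting the reader to verify each item directly from the set-theoretic relations $\Vc_y \subseteq \Vc_x$ (when $y \leq x$) and $\Vc_x \cap \Vc_y = \emptyset$ (when $x,y$ are incomparable), which is precisely what you do.
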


Naturally, \cref{obs:mixedness-relations} directly translates to the fullness, emptiness and mixedness of nodes of $T$ whenever $T$ is rooted.

\paragraph*{Well-structured rank decompositions.}
In the proof we will use the result of Jeong, Kim and Oum~\cite{DBLP:journals/siamdm/JeongKO21} asserting the existence of well-structured rank decompositions of subspace arrangements of optimum width, called \emph{totally pure rank decompositions}.
We defer the formal definition to \cref{sec:totally-pure-decomp}, but intuitively, a~rank decomposition $\Tc$ of a~subspace arrangement $\Vc$ is totally pure with respect to another rank decomposition $\Tc^b$ if, for every $x \in V(\Tc^b)$, $\Tc$ excludes some small local patterns defined in terms of subspaces $\lparts_x(\Tc)[\ouv]$ for $\ouv \in \oE(\Tc)$.

\begin{restatable}[{\cite[Proposition 4.6]{DBLP:journals/siamdm/JeongKO21}}]{lemma}{koreanrankstructuretheorem}
  \label{lem:korean-rank-structure-theorem}
  Let $\Tc^b$ be a~rooted rank decomposition of a~subspace arrangement $\Vc$.
  Then there exists a~rooted rank decomposition $\Tc$ of the same subspace arrangement $\Vc$ of optimum width that is totally pure with respect to $\Tc^b$.
\end{restatable}

\subsection{Mixed skeletons}
\label{ssec:dealternation-mixed-skeletons}
Suppose again that $\Vc$ is a~subspace arrangement, $\Tc^b = (T^b, \lambda^b)$ is a~rooted rank decomposition of $\Vc$, and $\Tc = (T, \lambda)$ is a~rooted rank decomposition of $\Vc$.
Let $x \in V(T^b)$ be a~node of $T^b$.
We define the \emph{$x$-mixed skeleton} of $\Tc$ as a~(possibly empty) rooted tree $T^\mix$ with $V(T^\mix) \subseteq V(T)$ constructed as follows.
For $v \in V(T)$, we put $v$ in $V(T^\mix)$ if $v$ has two children and one of the following cases holds:
\begin{itemize}
  \item one child is $x$-empty and the other is $x$-full; or
  \item both children are $x$-mixed.
\end{itemize}
In the first case we will say that $v$ is an~\emph{$x$-leaf point}, and in the second -- that $v$ is an~\emph{$x$-branch point}.
Then two vertices $u, v \in V(T^\mix)$ are connected by an~edge in $T^\mix$ if the path between $u$ and $v$ in $T$ is internally disjoint from $V(T^\mix)$ (\cref{fig:skeleton-def}).

\begin{figure}
  \centering
  \begin{subfigure}[b]{0.4\textwidth}
    \centering
    \includegraphics[width=\textwidth]{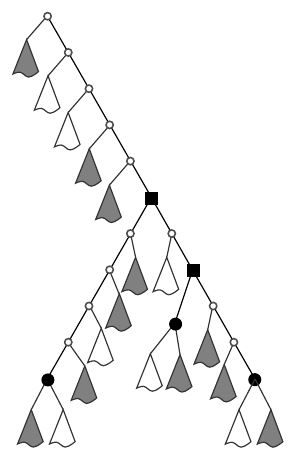}
    \caption{}
  \end{subfigure}
  \hspace{0.1\textwidth}
  \begin{subfigure}[b]{0.15\textwidth}
    \centering
    \includegraphics[width=\textwidth]{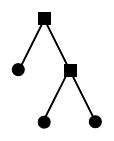}
    \caption{}
  \end{subfigure}
  
  \caption{(a) An~example rooted rank decomposition $\Tc$. All nodes of $\Tc$ that are not $x$-mixed are contracted to rooted subtrees; white subtrees have $x$-empty roots, while the dark subtrees have $x$-full roots. The $x$-leaf points of $\Tc$ are marked by $\bullet$, while the $x$-branch points of $\Tc$ are marked by~$\blacksquare$. The nodes that are $x$-mixed in $\Tc$, but neither $x$-leaf points nor $x$-branch points, are marked by~$\circ$. (b) The $x$-mixed skeleton of $\Tc$.}
  \label{fig:skeleton-def}
\end{figure}

We will now show the correctness and the properties of this construction in a~series of claims.

\begin{lemma}
  \label{lem:mixed-skeleton-mixed-ancestors}
  Suppose $v \in V(T^\mix)$.
  Then every ancestor of $v$ (including $v$) is $x$-mixed.
\end{lemma}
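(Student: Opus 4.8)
The plan is to prove the contrapositive-style statement directly: take any $v \in V(T^\mix)$ and an arbitrary ancestor $w$ of $v$ in $T$ (with $w \ge v$), and show $w$ is $x$-mixed. First I would recall that $v \in V(T^\mix)$ means $v$ has two children and is either an $x$-leaf point (one child $x$-empty, one child $x$-full) or an $x$-branch point (both children $x$-mixed); in either case I claim $v$ itself is $x$-mixed. Indeed, if $v$ has an $x$-full child $c$, then $\emptyset \ne \lparts(\Tc)[c] \subseteq \lparts(\Tc)[v] \cap \Vc_x$, so $\lparts(\Tc)[v] \cap \Vc_x \ne \emptyset$; and the sibling $c'$ of $c$ is either $x$-empty (leaf point case, so some space of $\lparts(\Tc)[v]$ lies outside $\Vc_x$) or $x$-mixed (branch point case, so again $\lparts(\Tc)[c'] \setminus \Vc_x \ne \emptyset$), giving $\lparts(\Tc)[v] \not\subseteq \Vc_x$; hence $v$ is $x$-mixed. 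If instead $v$ is a branch point, both children are $x$-mixed, so $\lparts(\Tc)[v]$ meets both $\Vc_x$ and its complement, and again $v$ is $x$-mixed. So in all cases $v$ is $x$-mixed, which handles the base case $w = v$.

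Next I would push mixedness upward along the ancestor path. The key structural fact is monotonicity of the "mixed part": for a node $z$ with parent $p$ we have $\lparts(\Tc)[z] \subseteq \lparts(\Tc)[p]$ (equivalently $\vec{zp}$ is a predecessor of the edge above $p$), and by the remark in \cref{ssec:dealternation-prelims} this gives $\lparts_x(\Tc)[z] \subseteq \lparts_x(\Tc)[p]$ and $\lparts_{\bar x}(\Tc)[z] \subseteq \lparts_{\bar x}(\Tc)[p]$. Therefore if $z$ is $x$-mixed — meaning $\lparts_x(\Tc)[z] \ne \emptyset$ and $\lparts_{\bar x}(\Tc)[z] \ne \emptyset$ — both nonempty sets are inherited by $p$, so $\lparts(\Tc)[p]$ meets $\Vc_x$ and meets its complement as well, i.e. $p$ is $x$-mixed. (This is also exactly the content of the third bullet of \cref{obs:mixedness-relations} applied with $y = p \ge z = x'$, where $x'$ ranges over nodes: an $x$-mixed node has only $x$-mixed or $x$-full ancestors, and fullness is impossible for $p$ once $\lparts_{\bar x}(\Tc)[p] \supseteq \lparts_{\bar x}(\Tc)[z] \ne \emptyset$.)

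Finally I would chain this: let $v = z_0 < z_1 < \dots < z_m = w$ be the path from $v$ to $w$ in $T$, where each $z_{i+1}$ is the parent of $z_i$. By the base case $z_0 = v$ is $x$-mixed, and by the inductive step just described, if $z_i$ is $x$-mixed then so is $z_{i+1}$. Hence $w = z_m$ is $x$-mixed, completing the proof. There is no real obstacle here — the statement is essentially an unwinding of the definitions together with the monotonicity of $\lparts(\Tc)[\cdot]$ along root-directed paths; the only mild care needed is in the base case, verifying that every $x$-leaf point and every $x$-branch point is itself $x$-mixed, which is why I would spell that out explicitly rather than take it for granted.
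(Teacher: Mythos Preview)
Your proposal is correct and is precisely the ``straightforward verification with the definitions'' that the paper's one-line proof alludes to: you establish that any node of $T^\mix$ is itself $x$-mixed, and then propagate mixedness to ancestors via the inclusion $\lparts(\Tc)[z] \subseteq \lparts(\Tc)[p]$. (One small quibble: your parenthetical invocation of \cref{obs:mixedness-relations} is off, since that observation compares different nodes of $T^b$ rather than ancestors in $T$, but your main monotonicity argument is complete without it.)
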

\begin{proof}
  Follows from the straightforward verification with the definitions.
\end{proof}

It is also easily verified that a~``converse'' statement also holds:
\begin{lemma}
  \label{lem:mixed-skeleton-mixed-descendant}
  Suppose $v \in V(T)$ is $x$-mixed.
  Then some descendant of $v$ in $T$ is an $x$-leaf point in $\Tc$.
\end{lemma}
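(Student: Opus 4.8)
The plan is to prove \cref{lem:mixed-skeleton-mixed-descendant} by a straightforward induction on the height of $v$ in $T$, following the structural definition of the $x$-mixed skeleton. The base case is when $v$ is a leaf of $T$: then $\lparts(\Tc)[v]$ is a singleton, so $v$ cannot be $x$-mixed (it is either $x$-full or $x$-empty), and the statement is vacuously true. So assume $v$ is a non-leaf $x$-mixed node with children $c_1, c_2$.

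First I would recall that $\lparts(\Tc)[v] = \lparts(\Tc)[c_1] \cup \lparts(\Tc)[c_2]$ is a disjoint union, and that $v$ being $x$-mixed means $\emptyset \neq \lparts(\Tc)[v] \cap \Vc_x \neq \lparts(\Tc)[v]$. I would then split into cases according to the mixedness status of $c_1$ and $c_2$. If both $c_1$ and $c_2$ are $x$-mixed, then $v$ is by definition an $x$-branch point, hence $v \in V(T^\mix)$, and $v$ is a descendant of itself, so we are done. If exactly one child, say $c_1$, is $x$-mixed, then by the induction hypothesis applied to $c_1$ some descendant of $c_1$ (hence of $v$) is an $x$-leaf point, and we are done. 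The remaining case is that neither child is $x$-mixed, i.e.\ each $c_i$ is $x$-full or $x$-empty; here I would argue that since $\lparts(\Tc)[v] \cap \Vc_x$ is neither empty nor all of $\lparts(\Tc)[v]$, the two children cannot both be $x$-empty (else $v$ would be $x$-empty) and cannot both be $x$-full (else $v$ would be $x$-full), so exactly one child is $x$-full and the other is $x$-empty. By definition this makes $v$ an $x$-leaf point, so $v \in V(T^\mix)$ is the required descendant.

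The only mild subtlety — and the closest thing to an obstacle — is making sure the case analysis is exhaustive and that the definition of $x$-mixed for nodes (in terms of $\lparts(\Tc)[v] \cap \Vc_x \notin \{\emptyset, \lparts(\Tc)[v]\}$) is used correctly in the ``neither child is $x$-mixed'' case; but this is just a finite check of the four sub-possibilities for $(c_1, c_2)$, immediately dispatched by the disjointness of $\lparts(\Tc)[c_1]$ and $\lparts(\Tc)[c_2]$. No appeal to optimality of $\Tc$, total purity, or any linear-algebraic fact is needed — the statement is purely combinatorial about the tree $T$ and the labeling, so the induction goes through cleanly.
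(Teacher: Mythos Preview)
Your induction is the right idea, and the paper itself omits the proof as ``easily verified,'' so your approach is exactly what is intended. However, there is a slip in the case where both children $c_1$ and $c_2$ are $x$-mixed: you conclude that $v$ is an $x$-branch point and hence $v \in V(T^\mix)$, but the lemma asks for an $x$-\emph{leaf} point descendant, and an $x$-branch point is by definition not an $x$-leaf point. So concluding ``$v$ is a descendant of itself, so we are done'' does not finish this case.

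The fix is immediate and already present in your other case: since $c_1$ is $x$-mixed and has smaller height, apply the induction hypothesis to $c_1$ to obtain an $x$-leaf point descendant of $c_1$, which is then also a descendant of $v$. With this one-line correction the argument is complete.
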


From the following lemma it follows directly that $T^\mix$ indeed forms a~rooted tree; in particular, $uv \in E(T^\mix)$ implies that $u$ and $v$ are in the ancestor-descendant relationship in $T$:

\begin{lemma}
  \label{lem:mixed-skeleton-lca}
  Suppose $u, v \in V(T^\mix)$.
  Then the lowest common ancestor of $u$ and $v$ belongs to $T^\mix$.
\end{lemma}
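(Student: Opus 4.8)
The plan is to prove \cref{lem:mixed-skeleton-lca} by a~direct case analysis on the types of $u$, $v$, and their lowest common ancestor $w$ in $T$. The statement is really a~closure property: the set of $x$-leaf points and $x$-branch points is closed under taking lowest common ancestors in $T$. First I would dispose of the degenerate cases: if $u \leq v$ or $v \leq u$, then $w \in \{u, v\} \subseteq V(T^\mix)$ and there is nothing to prove, so assume $u$ and $v$ are incomparable in $T$. Then $w$ is a~strict ancestor of both, $w$ has two children $w_1$ and $w_2$ (it cannot be a~leaf or a~degree-$1$ node since it has two incomparable strict descendants), and without loss of generality $u \leq w_1$ and $v \leq w_2$.

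The key observation is that both children $w_1$ and $w_2$ are $x$-mixed: since $u \in V(T^\mix)$, \cref{lem:mixed-skeleton-mixed-ancestors} tells us $u$ is $x$-mixed, and $w_1$ is an~ancestor of $u$, so $w_1$ is $x$-mixed by the same lemma (an~ancestor of an~$x$-mixed node that lies on the path up to $w$ is still an~ancestor of a~node in $V(T^\mix)$ — more carefully, $u \leq w_1$ and $u \in V(T^\mix)$ together with \cref{lem:mixed-skeleton-mixed-ancestors} give that $w_1$, being an~ancestor of $u$, is $x$-mixed). Symmetrically $w_2$ is $x$-mixed. By the definition of the $x$-mixed skeleton, a~node with two children that are both $x$-mixed is an~$x$-branch point, hence $w \in V(T^\mix)$, which is exactly what we want.

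The only subtlety I anticipate is making sure the bookkeeping in the previous paragraph is airtight, in particular the implication ``$u \leq w_1$ and $u$ is $x$-mixed $\Rightarrow$ $w_1$ is $x$-mixed''. This follows from \cref{lem:mixed-skeleton-mixed-ancestors} only if we phrase it correctly: that lemma is stated for nodes in $V(T^\mix)$, so we apply it to $u$ to learn that every ancestor of $u$ — including $w_1$ and $w$ — is $x$-mixed. That already gives that $w$ is $x$-mixed, but to conclude $w \in V(T^\mix)$ we additionally need that \emph{both} children of $w$ are $x$-mixed, which is why we invoke the lemma a~second time via $v$ to get that $w_2$ is $x$-mixed. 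With both $w_1$ and $w_2$ established as $x$-mixed, the second bullet in the construction of $T^\mix$ applies and $w$ is an~$x$-branch point. I expect no real obstacle here; the lemma is essentially a~consistency check on the definition, and the ancestor-propagation lemmas \cref{lem:mixed-skeleton-mixed-ancestors} (together with \cref{obs:mixedness-relations} if needed as a~sanity check on mixedness propagating to ancestors) do all the work. The resulting tree structure on $T^\mix$ — that $uv \in E(T^\mix)$ forces comparability — then follows because otherwise their lowest common ancestor would be a~node of $T^\mix$ strictly between them on no single path, contradicting internal disjointness of the connecting path from $V(T^\mix)$.
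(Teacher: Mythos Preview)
Your proposal is correct and follows essentially the same argument as the paper: dispose of the comparable case, then apply \cref{lem:mixed-skeleton-mixed-ancestors} to each of $u$ and $v$ to conclude that both children of the lowest common ancestor are $x$-mixed, making it an $x$-branch point. The paper's proof is just a terser version of what you wrote.
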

\begin{proof}
  Let $w$ be the lowest common ancestor of $u$ and $v$.
  if $w \in \{u, v\}$, then the lemma is trivial.
  Otherwise, let $w_u$ and $w_v$ be the two children of $w$ that are ancestors of $u$ and $w$, respectively.
  By \cref{lem:mixed-skeleton-mixed-ancestors}, both $w_u$ and $w_v$ are $x$-mixed.
  Thus $w$ is an~$x$-branch point.
\end{proof}

We continue with several properties of mixed skeletons:

\begin{lemma}
  \label{lem:mixed-skeleton-mixed-path}
  Suppose $p, q \in V(T^\mix)$ and let $uv \in E(T)$ be an~edge on the path between $p$ and $q$ in $T$.
  Then both $\ouv$ and $\ovu$ are $x$-mixed.
\end{lemma}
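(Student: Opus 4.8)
The plan is to prove \cref{lem:mixed-skeleton-mixed-path} by relating an arbitrary edge on the $p$--$q$ path in $T$ to the skeleton structure. Without loss of generality, by \cref{lem:mixed-skeleton-lca} the lowest common ancestor $w$ of $p$ and $q$ lies in $T^\mix$, so it suffices to handle the case where $p$ is an ancestor of $q$ (the general case splits the $p$--$q$ path at $w$ into two ancestor-descendant pieces, and $w$ itself being $x$-mixed by \cref{lem:mixed-skeleton-mixed-ancestors} handles the join). So assume $p \geq q$ in $T$ and let $uv \in E(T)$ be an edge on the vertical path from $p$ down to $q$, with $v$ the child of $u$. Then $q \leq v < u \leq p$.

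First I would argue that $\ovu$ (the orientation towards the root) is $x$-mixed. Since $q \in V(T^\mix)$, \cref{lem:mixed-skeleton-mixed-ancestors} tells us $q$ is $x$-mixed, hence $\lparts(\Tc)[q]$ meets $\Vc_x$ but is not contained in it; as $\lparts(\Tc)[q] \subseteq \lparts(\Tc)[v] = \lparts(\Tc)[\ovu]$, the set $\lparts(\Tc)[\ovu]$ also meets $\Vc_x$. For the non-containment: since $p \in V(T^\mix)$ is $x$-mixed, there is some $V \in \Vc$ with $V \in \lparts(\Tc)[p]$ and $V \notin \Vc_x$. If I can show such a $V$ lies outside $\lparts(\Tc)[v]$, I am done; but that is not automatic. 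Instead the cleaner route is: $p$ being $x$-mixed means $\lparts(\Tc)[\vec{p p'}] \not\subseteq \Vc_x$ where $p'$ is the parent of $p$ (or a similar statement at the root), equivalently the complementary side $\Vc \setminus \lparts(\Tc)[p]$ together with the "mixed" witness gives a subspace in $\Vc \setminus \Vc_x$; more directly, since $v \leq p$ and $p$ is $x$-mixed, \cref{obs:mixedness-relations} applied with the pair of nodes does not immediately give mixedness of $v$. The right tool is to observe that $x$-mixedness of $q$ gives a subspace $V_1 \in \lparts(\Tc)[q] \cap \Vc_x$ and the $x$-mixedness of $p$ together with $q \leq p$: pick $V_2 \in \lparts(\Tc)[p] \setminus \Vc_x$. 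Now $V_1 \in \lparts(\Tc)[\ovu]$ since $q \leq v$. For $V_2$: either $V_2 \in \lparts(\Tc)[v]$, in which case $\lparts(\Tc)[\ovu]$ is not contained in $\Vc_x$ and $\ovu$ is $x$-mixed; or $V_2 \notin \lparts(\Tc)[v]$, in which case $V_2 \in \lparts(\Tc)[\ouv]$ (as $V_2 \in \lparts(\Tc)[p]$ and $u \leq p$, so $V_2$ lies on one of the two sides of the edge $uv$ inside the subtree at $p$, hence on the $u$-side) and combined with $V_1 \in \lparts(\Tc)[\ovu] \subseteq \Vc_x$ we get that $\ouv$ is $x$-mixed. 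Either way at least one of the two orientations is $x$-mixed.

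To upgrade this to \emph{both} orientations being $x$-mixed, I would use the fact that an oriented edge and its reverse are symmetric with respect to mixedness in the following sense: $\ouv$ is $x$-mixed iff $\lparts(\Tc)[\ouv]$ meets $\Vc_x$ and meets $\Vc \setminus \Vc_x$. Given the case analysis above, in \emph{every} case I actually obtain a subspace of $\Vc_x$ and a subspace of $\Vc \setminus \Vc_x$ both lying in $\lparts(\Tc)[\ovu] \cup \lparts(\Tc)[\ouv]$ — but these two sides partition all of $\Vc$, so I need to place them more carefully. The key extra observation is that $\lparts(\Tc)[\ovu] \cap \Vc_x \neq \emptyset$ always (via $V_1$ from $q$), and $\lparts(\Tc)[\ouv] \setminus \Vc_x \neq \emptyset$ always: indeed $p$ is $x$-mixed and $u \leq p$, and one shows that the "non-$x$" witness of $p$'s mixedness can be chosen on the $u$-side of $uv$, because the edge $uv$ is internal to the subtree rooted at $p$ so the full/empty children hanging off the $p$--$q$ path supply such a witness — concretely, use \cref{lem:mixed-skeleton-mixed-ancestors} on $p$ and the structure of the path. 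Then: $\lparts(\Tc)[\ovu] \supseteq \lparts(\Tc)[q]$ which is both non-empty-in-$\Vc_x$ and (since $q$ is $x$-mixed) non-empty-outside-$\Vc_x$, so $\ovu$ is $x$-mixed; and symmetrically $\lparts(\Tc)[\ouv] = \Vc \setminus \lparts(\Tc)[v] \supseteq \Vc \setminus \lparts(\Tc)[p]$ plus the non-$x$ witness just identified, and it meets $\Vc_x$ because $\lparts(\Tc)[\ouv]$ cannot be disjoint from $\Vc_x$ (else $v$ would be $x$-full, but $v \geq q$ and $q$ is $x$-mixed forces $\lparts(\Tc)[v] \not\supseteq \Vc_x$, contradiction via \cref{obs:mixedness-relations}). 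Hence $\ouv$ is $x$-mixed as well.

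The main obstacle I anticipate is the bookkeeping of \emph{where} the "non-$x$" witness subspace of $p$'s mixedness sits relative to the edge $uv$: it is intuitively clear from the picture (\cref{fig:skeleton-def}) that along the vertical path from an $x$-mixed node down, the non-$x$ material is supplied both by the subtrees hanging off the path and never exhausted before reaching $q$, but making this rigorous requires a small induction on the path length, or equivalently invoking that $q$ itself is $x$-mixed (hence $\lparts(\Tc)[v] \not\subseteq \Vc_x$ for every $v$ with $q \leq v$, by \cref{obs:mixedness-relations} with the pair $(x, \cdot)$ — wait, that direction needs care since $v \geq q$ does not bound $v$ against $x$). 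The clean fix is to run the argument purely from the two endpoints: $q \in V(T^\mix)$ gives $x$-mixedness propagating \emph{up} along $\lparts(\Tc)[\ovu] \supseteq \lparts(\Tc)[q]$, and $p \in V(T^\mix)$ gives, via $\Vc \setminus \lparts(\Tc)[p] \subseteq \lparts(\Tc)[\ouv]$ and a mixedness witness, the complementary facts for $\ouv$ — so I expect the final write-up to lean on these two containments plus \cref{obs:mixedness-relations} and \cref{lem:mixed-skeleton-mixed-ancestors} rather than any path induction.
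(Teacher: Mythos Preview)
Your reduction to the ancestor--descendant case via \cref{lem:mixed-skeleton-lca} is fine, and once you strip away the detours, your argument that $\ovu$ (with $v$ the child of $u$, so $\lparts(\Tc)[\ovu]=\lparts(\Tc)[v]$) is $x$-mixed is correct: $\lparts(\Tc)[v]\supseteq\lparts(\Tc)[q]$, and $q$ is $x$-mixed by \cref{lem:mixed-skeleton-mixed-ancestors}, so $\lparts(\Tc)[v]$ meets both $\Vc_x$ and its complement.

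The argument for $\ouv$ has a genuine gap. You write that if $\lparts(\Tc)[\ouv]\cap\Vc_x=\emptyset$ then ``$v$ would be $x$-full'', but that is the wrong implication: the hypothesis says $\Vc_x\subseteq\lparts(\Tc)[v]$, not $\lparts(\Tc)[v]\subseteq\Vc_x$, and nothing about $q$ being $x$-mixed rules this out. Likewise the containment $\Vc\setminus\lparts(\Tc)[p]\subseteq\lparts(\Tc)[\ouv]$ is true but unhelpful, since you have no control over whether $\Vc\setminus\lparts(\Tc)[p]$ meets $\Vc_x$ or its complement.

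The missing ingredient, which is exactly what the paper's proof uses, is the structure of $p$ as a skeleton node. Let $p_1$ be the child of $p$ on the path to $q$ and $p_2$ the other child. Since $p_1$ is an ancestor of $q\in V(T^\mix)$, it is $x$-mixed by \cref{lem:mixed-skeleton-mixed-ancestors}. But $p\in V(T^\mix)$ means $p$ is an $x$-leaf point or an $x$-branch point; the former is impossible because one child would have to be $x$-full or $x$-empty while $p_1$ is $x$-mixed. Hence $p$ is an $x$-branch point and $p_2$ is $x$-mixed. Now $\lparts(\Tc)[p_2]\subseteq\lparts(\Tc)[\ouv]$ (the subtrees at $p_2$ and $v$ are disjoint), so $\lparts(\Tc)[\ouv]$ inherits both witnesses from $\lparts(\Tc)[p_2]$ and is $x$-mixed. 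The paper phrases this as a contradiction argument (assuming $\ovu$ is $x$-full or $x$-empty forces $p_2$ to be $x$-full or $x$-empty), but it is the same observation.
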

\begin{proof}
  Suppose not. Without loss of generality assume that: $pq \in E(T^\mix)$, and in particular that $p$ is an~ancestor of $q$ in $T$; and that in $T$, $q$ is closer to $u$ than $v$.
  Let $q_1, q_2$ be the two children of $q$ in $T$ and $p_1, p_2$ be the two children of $p$ in $T$; without loss of generality, assume $p_1$ is an~ancestor of $q$.
  Note that by \cref{lem:mixed-skeleton-mixed-ancestors}, $p_1$ is $x$-mixed; therefore, since $p \in V(T^\mix)$, we have that $p_2$ is $x$-mixed as well.
  
  First suppose that $\ouv$ is $x$-full.
  Then it follows immediately that both $q_1$ and $q_2$ are $x$-full as well (since both $\vec{q_1q}$ and $\vec{q_2q}$ are predecessors of $\ouv$), contradicting that $q \in V(T^\mix)$.
  A~similar contradiction follows when $\ouv$ is $x$-empty.
  In the same way, observe that if $\ovu$ is $x$-full (resp.~$x$-empty), then $p_2$ is $x$-full (resp.~$x$-empty) as well since $\vec{p_2p}$ is a~predecessor of $\ovu$.
  Therefore, both $\ouv$ and $\ovu$ must be $x$-mixed.
\end{proof}

The following lemma implies that the $x$-mixed skeleton is a~full binary tree. 

\begin{lemma}
  \label{lem:mixed-skeleton-binary}
  Every $x$-leaf point is a~leaf of $T^\mix$, and every $x$-branch point is an~internal node of $T^\mix$ with two children.
\end{lemma}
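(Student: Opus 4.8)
The plan is to prove \cref{lem:mixed-skeleton-binary} by carefully unwinding the definition of the $x$-mixed skeleton $T^\mix$, using the auxiliary structural facts established just above it (\cref{lem:mixed-skeleton-mixed-ancestors,lem:mixed-skeleton-mixed-descendant,lem:mixed-skeleton-lca,lem:mixed-skeleton-mixed-path}). Recall that a node $v \in V(T^\mix)$ is either an $x$-leaf point (one child $x$-empty, one child $x$-full) or an $x$-branch point (both children $x$-mixed), and that $u, w$ are adjacent in $T^\mix$ iff one is an ancestor of the other in $T$ and the $T$-path between them is internally disjoint from $V(T^\mix)$.

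First I would handle the $x$-leaf points. Let $v$ be an $x$-leaf point with children $v_1, v_2$ in $T$, say $v_1$ is $x$-empty and $v_2$ is $x$-full. I claim no descendant of $v$ lies in $V(T^\mix)$. By \cref{obs:mixedness-relations} (translated to nodes), every node in the subtree rooted at $v_1$ is $x$-empty and every node in the subtree rooted at $v_2$ is $x$-full; in particular no strict descendant of $v$ is $x$-mixed, so by \cref{lem:mixed-skeleton-mixed-ancestors} (contrapositive) no strict descendant of $v$ is in $V(T^\mix)$. Hence $v$ has no children in $T^\mix$, i.e.\ it is a leaf of $T^\mix$.

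Next I would handle the $x$-branch points. Let $v$ be an $x$-branch point with children $v_1, v_2$ in $T$, both $x$-mixed. By \cref{lem:mixed-skeleton-mixed-descendant}, the subtree of $T$ rooted at $v_i$ contains an $x$-leaf point, hence a node of $V(T^\mix)$, for each $i \in [2]$. Walking down from $v_i$ until we first hit a node of $V(T^\mix)$ (such a node exists and the path from $v$ to it is internally disjoint from $V(T^\mix)$ since $v$ itself is the only element of $V(T^\mix)$ above it on that segment, and intermediate nodes by \cref{lem:mixed-skeleton-mixed-path}/definition are not in $V(T^\mix)$), we obtain a child $u_i$ of $v$ in $T^\mix$, one in each of the two subtrees. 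Since the two subtrees are disjoint, $u_1 \neq u_2$, so $v$ has at least two children in $T^\mix$. It remains to argue $v$ has at most two children: any child $u$ of $v$ in $T^\mix$ is a descendant of $v$ in $T$ (by \cref{lem:mixed-skeleton-lca}, or directly since $T^\mix$-edges go between ancestor--descendant pairs), and the $T$-path from $v$ to $u$ is internally disjoint from $V(T^\mix)$; this path passes through exactly one of $v_1, v_2$, and within the subtree of that $v_i$ the ``first descendant in $V(T^\mix)$'' is uniquely determined (any two distinct such would have their lowest common ancestor strictly below $v_i$ and in $V(T^\mix)$ by \cref{lem:mixed-skeleton-lca}, contradicting that both paths are internally disjoint from $V(T^\mix)$). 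So each $v_i$ yields exactly one child of $v$ in $T^\mix$, giving exactly two. Also $v$ is an internal node: it has children in $T^\mix$. Finally, I should note $v$ cannot be the root being excluded is not an issue since the root of $T^\mix$ is simply the topmost element of $V(T^\mix)$, which is still either a leaf point or a branch point and the same argument applies.

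The main obstacle is the ``at most two children'' direction for $x$-branch points: one must rule out a single subtree $T[v_i]$ contributing two or more children of $v$ in $T^\mix$. This is exactly where \cref{lem:mixed-skeleton-lca} does the work — two distinct nearest-$V(T^\mix)$-descendants in the same subtree would force their lowest common ancestor into $V(T^\mix)$, strictly below $v_i$, breaking internal-disjointness of at least one of the two $T^\mix$-edges. Everything else is a routine translation of \cref{obs:mixedness-relations} from edges to nodes plus bookkeeping; I would keep those parts brief.
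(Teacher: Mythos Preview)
Your proposal is correct and follows essentially the same approach as the paper's proof: for $x$-leaf points you argue that no strict descendant is $x$-mixed (hence none lies in $V(T^\mix)$), and for $x$-branch points you use \cref{lem:mixed-skeleton-mixed-descendant} to produce a descendant in $V(T^\mix)$ in each subtree and \cref{lem:mixed-skeleton-lca} to bound the number of children by two. The paper's proof is just a terser version of exactly this argument; your extra citations to \cref{lem:mixed-skeleton-mixed-path} and \cref{obs:mixedness-relations} are harmless but not strictly needed.
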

\begin{proof}
  If $v$ is an~$x$-leaf point, then naturally every strict descendant of $v$ in $T$ is either $x$-full or $x$-empty.
  Thus by \cref{lem:mixed-skeleton-mixed-ancestors}, no strict descendant of $v$ is in $T^\mix$ and therefore $v$ is a~leaf in $T^\mix$.
  
  Then let $v$ be a~$x$-branch point.
  Let $v_1, v_2$ be the children of $v$ in $T$; by definition, both $v_1$ and $v_2$ are $x$-mixed.
  By \cref{lem:mixed-skeleton-mixed-descendant}, there exist $x$-leaf points $u_1, u_2 \in V(T^\mix)$ that are descendants of $v_1$ and $v_2$ in $T$, respectively, which implies that $v$ has at least two children in $T^\mix$.
  The lemma follows by observing from \cref{lem:mixed-skeleton-lca} that for each $i \in \{1, 2\}$, at most one vertex of $V(T^\mix)$ in the subtree of $T$ rooted at $v_i$ can be connected to $v$ by a~path internally disjoint from $V(T^\mix)$.
\end{proof}

The main product of this subsection is the following statement asserting that there exists an~optimum-width rooted decomposition of $\Vc$ admitting small $x$-mixed skeletons for \emph{all} $x \in V(T^b)$.

\begin{restatable}{lemma}{smallmixedskeletonlemma}
  \label{lem:small-mixed-skeleton}
  There exists a~function $f_{\ref{lem:small-mixed-skeleton}}\,\colon\,\N \to \N$ such that the following holds.
  Let $\Tc^b = (T^b, \lambda^b)$ be a~rooted rank decomposition of $\Vc$ of width $\ell \geq 0$.
  Then there exists a~rooted rank decomposition $\Tc$ of $\Vc$ of optimum width such that, for every $x \in V(T^b)$, the $x$-mixed skeleton of $\Tc$ contains at most $f_{\ref{lem:small-mixed-skeleton}}(\ell)$ nodes.
\end{restatable}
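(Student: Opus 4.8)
The plan is to start from the totally pure decomposition $\Tc$ guaranteed by \cref{lem:korean-rank-structure-theorem}, and to prove that the total purity of $\Tc$ with respect to $\Tc^b$ already forces every $x$-mixed skeleton to be small, uniformly in $x$. So I would fix such a $\Tc$ and an arbitrary $x \in V(T^b)$, write $T^\mix$ for the $x$-mixed skeleton of $\Tc$, and aim to bound $|V(T^\mix)|$ by a function of $\ell$ only. By \cref{lem:mixed-skeleton-binary}, $T^\mix$ is a full binary tree, so $|V(T^\mix)| = \Oh{1} \cdot (\#\text{$x$-leaf points})$; thus it suffices to bound the number of $x$-leaf points of $\Tc$.

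First I would understand the structure of the $x$-full nodes. An $x$-leaf point $v$ has one $x$-full child $v_f$, and the tree factor $\lparts(\Tc)[v_f]$ is a maximal tree factor of $\Tc$ contained in $\Vc_x$; conversely, every inclusion-maximal $x$-full node arises from a unique $x$-leaf point (its parent). So the number of $x$-leaf points equals the number of inclusion-maximal $x$-full nodes. The key quantity to control is then: how many inclusion-wise maximal ``blocks'' of $\Vc_x$ can a totally pure decomposition $\Tc$ be forced to have? Intuitively, total purity (\cref{sec:totally-pure-decomp}) is precisely the combinatorial device that forbids the heavily alternating vertical paths described in the overview of \cref{ssec:ov:dealt}: along any vertical path of $\Tc$ that has no $x$-leaf point and no $x$-branch point, the side-children are each $x$-full or $x$-empty, and total purity rules out having too many switches between $x$-full and $x$-empty along such a path, by bounding it in terms of the width $\ell$ of $\Tc^b$ (equivalently $2\ell$ in the subspace-arrangement picture). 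I would extract from the totally pure definition the precise statement that the number of $x$-full/$x$-empty alternations on any such path is $\Oh[\ell]{1}$, and that the ``branching'' structure of $x$-mixed nodes between consecutive skeleton nodes is trivial (a path), which is exactly what \cref{lem:mixed-skeleton-mixed-path} and \cref{lem:mixed-skeleton-binary} encode on the skeleton side.

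With these two ingredients I would run a charging/amortization argument on $T^\mix$: root $T^\mix$, and for each edge $pq$ of $T^\mix$ (with $p$ the parent), the path between $p$ and $q$ in $T$ is $x$-mixed throughout (\cref{lem:mixed-skeleton-mixed-path}) and internally skeleton-free, so total purity bounds the number of maximal $x$-full nodes hanging off that path by a constant $c_\ell$ depending only on $\ell$. Hence each skeleton edge ``produces'' at most $c_\ell$ of the $x$-full blocks, and $x$-leaf points are themselves skeleton leaves, so $\#(\text{$x$-leaf points}) \le \#(\text{edges of }T^\mix) + 1 \le \Oh{1} \cdot \#(\text{$x$-leaf points})$ — this is circular as stated, so the correct route is instead: bound the number of $x$-leaf points directly by observing that distinct $x$-leaf points correspond to distinct maximal $x$-full tree factors, and that the parents of these factors lie on a bounded-depth ``trunk'' of $x$-branch points, whose number is in turn bounded because each $x$-branch point is a genuine split of $\Vc_x$ into two nonempty $x$-mixed parts and total purity caps how finely $\Vc_x$ can be split this way. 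I would then set $f_{\ref{lem:small-mixed-skeleton}}(\ell)$ to be the resulting (explicit, but unoptimized) bound, times the constant overhead of turning a count of leaf points into a count of skeleton nodes.

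The main obstacle, and where I would spend most of the work, is step two: translating ``totally pure with respect to $\Tc^b$'' into a clean, quantitative bound on the number of $x$-full/$x$-empty alternations and $x$-branch points, valid simultaneously for every $x \in V(T^b)$ and depending only on the width $\ell$ of $\Tc^b$. This is the place where the definition in \cref{sec:totally-pure-decomp} must be unpacked carefully — total purity is a property about the subspaces $\lparts_x(\Tc)[\ouv]$ and the boundary spaces $B_{\ouv}$, so one must argue that a long alternation would create one of the forbidden patterns (using \cref{lem:linear-dimension-switching} and basic dimension counting over $\GF(2)$), thereby contradicting total purity. Everything else — the skeleton being a full binary tree, the ancestor/descendant bookkeeping, the final charging — is routine given that quantitative core. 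A secondary technical point is making sure the bound is genuinely uniform in $x$: this follows because the forbidden patterns in the definition of total purity are indexed by $x$, so one totally pure $\Tc$ works for all $x$ at once, which is exactly why \cref{lem:korean-rank-structure-theorem} is stated the way it is.
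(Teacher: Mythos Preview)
Your starting point is right: take the totally pure $\Tc$ from \cref{lem:korean-rank-structure-theorem} and show that total purity alone bounds every $x$-mixed skeleton. But the core of your plan has a genuine gap.

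You spend most of the proposal on bounding the number of $x$-full/$x$-empty alternations along the paths in $T$ \emph{between} consecutive skeleton nodes. That quantity is irrelevant to $|V(T^\mix)|$: those paths are, by definition, internally disjoint from $V(T^\mix)$, so no alternations on them produce any skeleton vertices. (That alternation bound is what \cref{lem:block-shuffle-few-blocks} handles later, for a different purpose.) You notice the circularity yourself and then fall back on asserting that ``the parents of these factors lie on a bounded-depth trunk of $x$-branch points, whose number is in turn bounded because \ldots\ total purity caps how finely $\Vc_x$ can be split this way.'' That assertion \emph{is} the lemma; you have not proved it.

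What the paper actually does is bound the \emph{height} of $T^\mix$ directly, by a short pigeonhole argument on subspace chains. Take any vertical path $v_0 v_1 \dots v_{p+1}$ \emph{in the skeleton} (so each internal $v_i$ is an $x$-branch point, hence all three edges at $v_i$ are $x$-mixed). For each $i$, set
\[
A_i = \sumof{\lparts_x(\Tc)[\vec{v_i v_i^L}]} \cap B_x, \qquad
B_i = \sumof{\lparts_x(\Tc)[\vec{v_i^L v_i}]} \cap B_x,
\]
where $v_i^L$ is the parent of $v_i$ in $T$. These form monotone chains of subspaces of $B_x$, and $\dim(B_x) \le \ell$. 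If $p \ge 2\ell+1$, some index $t$ has $A_t = A_{t+1}$ and $B_t = B_{t+1}$, which forces $\sumof{\lparts_x(\Tc)[\vec{v_t^R v_t}]} \cap B_x = A_t$ and $\sumof{\lparts_x(\Tc)[\vec{v_t v_t^R}]} \cap B_x = B_t$ as well. Now a four-way case split on how $A_t$ and $B_t$ relate ($A_t = B_t$; $A_t \subsetneq B_t$; $B_t \subsetneq A_t$; incomparable) exhibits, respectively, an $x$-degenerate edge with both orientations $x$-mixed, an improper $x$-guarding edge, the symmetric improper $x$-guarding edge, or an improper $x$-blocking path --- each contradicting total purity (using that $v_t$ is an $x$-branch point so the third edge $\vec{v'_t v_t}$ is $x$-mixed). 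Hence the skeleton has height at most $2\ell+2$, and being a full binary tree, bounded size. None of this uses \cref{lem:linear-dimension-switching}; it is straight pigeonhole on the definition of total purity.
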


The proof of \cref{lem:small-mixed-skeleton} is delayed to \cref{sec:totally-pure-decomp}.
There, we will show that any decomposition that is totally pure with respect to $\Tc^b$ fulfills the requirements of \cref{lem:small-mixed-skeleton}; this is done by a~straightforward (though careful) analysis of the definition of a totally pure decomposition.
Hence, the lemma is correct thanks to \cref{lem:korean-rank-structure-theorem}.


\subsection{Statement of the Local Dealternation Lemma}

The strategy of the proof of the Dealternation Lemma for subspace arrangements (\cref{lem:subspace-dealternation}) will be similar to that in the work of Bojańczyk and Pilipczuk~\cite{DBLP:journals/lmcs/BojanczykP22}: Given as input a~decomposition $\Tc^b$ of width $\ell \geq 0$, we first create a~decomposition $\Tc$ satisfying some strong structural properties and then update $\Tc$ in a~sequence of local improvement steps so as to produce the decomposition satisfying the Dealternation Lemma, preserving the structural properties throughout the process.
In our case of rank decompositions of subspace arrangements, the property maintained throughout the process is precisely admitting small $x$-mixed skeletons for all $x \in V(T^b)$.
Now we define the local improvement step in the form of the Local Dealternation Lemma.

Reusing the notation from the previous sections, assume that $x \in V(T^b)$.
We say that a~set $F \subseteq \Vc$ is an~$x$-factor (resp.~$x$-tree factor, $x$-context factor) in $\Tc$ if it is a~factor (resp.~tree factor, context factor) in $\Tc$ and moreover $F \subseteq \Vc_x$.

\begin{lemma}[Local Dealternation Lemma]
  \label{lem:local-dealternation}
  There exists a~function $f_{\ref{lem:local-dealternation}}\,\colon\,\N\to\N$ so that the following holds.
  Suppose $\Tc^b$ is a~rooted rank decomposition of $\Vc$ of width $\ell \geq 0$, and $\Tc$is a~rooted rank decomposition of $\Vc$ of optimum width. Moreover, let $x \in V(T^b)$ be such that the $x$-mixed skeleton of $\Tc$ has at most $f_{\ref{lem:small-mixed-skeleton}}(\ell)$ nodes. Then there exists a~rooted rank decomposition $\Tc'$ of $\Vc$ of optimum width such that:
  \begin{itemize}
    \item the set $\lparts(\Tc^b)[x]$ is a~disjoint union of at most $f_{\ref{lem:local-dealternation}}(\ell)$ $x$-factors of $\Tc'$;
    \item for every $y \in V(T^b)$, the $y$-mixed skeletons of $\Tc$ and $\Tc'$ are equal; and
    \item for every $y \in V(T^b)$ with $y \ngtr x$, every $y$-factor of $\Tc$ is also a~$y$-factor of $\Tc'$.
  \end{itemize}
\end{lemma}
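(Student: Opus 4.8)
Let $k$ denote the optimum width of $\Vc$ and let $T^\mix$ be the $x$-mixed skeleton of $\Tc=(T,\lambda)$, with $t\coloneqq |V(T^\mix)|\le f_{\ref{lem:small-mixed-skeleton}}(\ell)$; by \cref{lem:mixed-skeleton-binary} it is a~full binary tree whose leaves are the $x$-leaf points of $\Tc$ and whose internal nodes are the $x$-branch points. \Cref{lem:mixed-skeleton-mixed-ancestors,lem:mixed-skeleton-mixed-path} imply that, after removing $T^\mix$ together with the maximal subtrees hanging below the $x$-leaf points, $T$ falls apart into $O(t)$ maximal vertical paths, which I will call \emph{threads}: the $t-1$ paths corresponding to edges of $T^\mix$, plus the path from the root of $T$ down to the root of $T^\mix$. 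Every internal node of a~thread is $x$-mixed but lies outside $T^\mix$, so on a~thread $v_1v_2\cdots v_{p+1}$ (with $v_1$ nearest the root) each $v_i$ with $2\le i\le p$ has the next thread node $v_{i+1}$ as one child and a~\emph{side child} $v_i'$ as the other; since $v_i$ is neither an~$x$-leaf point nor an~$x$-branch point and one child is $x$-mixed, the subtree $S_i$ rooted at $v_i'$ is entirely $x$-full or entirely $x$-empty. Every maximal $x$-full subtree of $T$ is therefore either such a~side subtree $S_i$, or the $x$-full child subtree of an~$x$-leaf point; in the former case a~maximal run of consecutive $x$-full side subtrees along a~thread contributes one $x$-context-factor. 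So it suffices to reorganise, within each thread, the side subtrees so that the $x$-full ones end up inside boundedly many runs.

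\textbf{The core: dealternating a single thread.}
The main work is the following claim, an~adaptation to boundary spaces of the \emph{typical sequences} technique of Bodlaender and Kloks~\cite{DBLP:journals/jal/BodlaenderK96} in the form developed by Bojańczyk and Pilipczuk~\cite{DBLP:journals/lmcs/BojanczykP22}: for a~fixed thread there is a~permutation $\pi$ of its side subtrees that (i) is an~order-preserving merge of the $x$-full subsequence with the $x$-empty subsequence which does not split a~maximal monochromatic block of either; (ii) when the thread is rebuilt in the order $v_1,S_{\pi(2)},\dots,S_{\pi(p)},v_{p+1}$, does not increase the dimension of the boundary space of any edge, so $\Tc$ stays optimum-width; and (iii) leaves the $x$-full side subtrees inside at most $c_\ell$ contiguous runs, for some $c_\ell$ depending only on $\ell$. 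One studies the sequence of boundary dimensions along the thread, each at most $k$; such an~integer sequence has a~typical subsequence of length $O(k)$, and one builds the merge $\pi$ so that its dimension sequence is dominated by a~merge of the typical sequences of the full and of the empty portion, which can alternate only $O(k)$ times. That no dimension rises is verified using the symmetry and submodularity of $\cutrk$ together with the dimension identities \cref{lem:symsubmod} and \cref{lem:linear-dimension-switching}, applied to bound $\dim(\sumof{A}\cap\sumof{B})$ for the sums $A,B$ produced by the reordering. \emph{I expect this to be the principal obstacle:} all the linear algebra is concentrated here, and the typical-sequence argument, which in the treewidth setting merely manipulates plain integer cut sequences, must be rebuilt for boundary subspaces while delivering the extra merge structure in~(i).

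\textbf{Assembling $\Tc'$ and checking the three conclusions.}
Applying the claim to every thread yields the optimum-width decomposition $\Tc'$. After reordering, on each thread the $x$-full side subtrees form at most $c_\ell$ runs; each run $S_{\pi(j)},\dots,S_{\pi(j')}$ equals $\lparts(\Tc')[v_j]\setminus\lparts(\Tc')[v_{j'+1}]$, an~$x$-context-factor, and together with the $x$-tree-factors below the $x$-leaf points these $O(t\cdot c_\ell)$ sets form a~partition of $\lparts(\Tc^b)[x]$; set $f_{\ref{lem:local-dealternation}}(\ell)\coloneqq O(f_{\ref{lem:small-mixed-skeleton}}(\ell)\cdot c_\ell)$. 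For the second bullet, a~reordering inside a~thread keeps each side subtree intact, hence changes neither $\lparts(\Tc)[w]$ nor the children of any node $w$ strictly inside a~side subtree or outside all threads, so its $y$-full/$y$-empty/$y$-mixed status is preserved for every $y\in V(T^b)$; and since $\pi$ is a~monochromatic-block-preserving merge, a~short case analysis on whether $y\le x$, $y>x$, or $y$ is incomparable with $x$ (using that an~$x$-full side subtree is contained in or disjoint from $\Vc_y$ and an~$x$-empty one is disjoint from $\Vc_x$) shows the status of every thread node is preserved too; by \cref{lem:mixed-skeleton-binary} the $y$-mixed skeleton is therefore literally unchanged. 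For the third bullet, fix $y\ngtr x$. If $y$ is incomparable with $x$, then $\Vc_y$ is disjoint from $\Vc_x$, every $y$-factor of $\Tc$ is supported on $x$-empty subtrees and on regions outside all threads, and since $\pi$ does not split monochromatic blocks the same vertex set is again a~factor of $\Tc'$. If $y\le x$, then $\Vc_y\subseteq\Vc_x$: a~$y$-tree-factor lies inside one $x$-full subtree and is untouched, and a~$y$-context-factor $F$ has $F\subseteq\Vc_x$, so its ``gap'' is an~interval of all-$x$-full side subtrees of a~single thread (possibly together with the $x$-full child of one $x$-leaf point) and never crosses an~$x$-branch point; as $\pi$ keeps that interval of full subtrees contiguous, $F$ is again a~difference of two tree factors of $\Tc'$. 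This proves all three bullets; iterating \cref{lem:local-dealternation} over $V(T^b)$ in order of non-increasing depth then yields \cref{lem:subspace-dealternation}.
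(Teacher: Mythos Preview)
Your overall strategy matches the paper's: identify the $x$-shuffleable paths (your ``threads''), perform a block shuffle on each using a typical-sequence argument, and verify the three bullets. However, two of your steps have genuine gaps.

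\textbf{The shuffle must be boundary-preserving.} Your condition~(i) --- an order-preserving merge that does not split monochromatic blocks --- is not enough for the third bullet. Consider a thread whose bottom endpoint $v_{p+1}$ is an $x$-leaf point with $x$-full child $c_f$ and $x$-empty child $c_e$, and suppose the last block $S_{l_t},\dots,S_p$ along the thread is $x$-full. Then for $y\le x$ the set $F=\lparts(\Tc)[v_i]\setminus\lparts(\Tc)[c_e]=S_i\cup\dots\cup S_p\cup\lparts(\Tc)[c_f]$ (for $i\ge l_t$) is a $y$-context factor. If your merge relocates this last block to the interior of the thread, then in $\Tc'$ the subtrees $S_i,\dots,S_p$ are still contiguous but are separated from $c_f$ by other side subtrees; no single difference $\lparts(\Tc')[u]\setminus\lparts(\Tc')[v]$ equals $F$ any more. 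The symmetric problem occurs at the top of the root thread when the root's off-path child is monochromatic. The paper fixes this by requiring the shuffle to be \emph{boundary-preserving} (the first and last blocks stay put), and its proof of factor preservation (\cref{lem:factor-swap-perseverance}) uses this crucially in the case $w_2=v_{\ell_2}$.

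\textbf{Typical sequences do not apply to a raw thread.} Your core claim assumes that after a merge the width at position $i$ is determined by the pair $(i^+,i^-)$ counting full and empty side subtrees above it, so that one can run the Bodlaender--Kloks/Bojańczyk--Pilipczuk merge on the two colour classes. But this additive structure is exactly what needs to be proved: in general the cross-term $\dim(X_{\le i^+}\cap Y_{\le i^-})$ in \cref{lem:linear-dimension-switching} depends on both $i^+$ and $i^-$. The paper resolves this by first splitting each thread into at most $8\ell+1$ subpaths along which four ``profile'' dimensions $\dim(\sumof{\lparts_x}\cap B_x)$ and $\dim(\sumof{\lparts_{\bar x}}\cap B_x)$ (from each side) are constant; on such an $x$-\emph{static} subpath one shows (\cref{lem:static-implies-separable}, using \cref{lem:linear-dimension-switching}) that the cross-terms become constants, so the width is $\alpha+\beta_{i^+}+\gamma_{i^-}$ and the typical-sequence lemma applies. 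Your sketch names the right dimension identity but misses this splitting step, without which the merge argument does not go through.
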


We proceed to show how the ``global variant'' of the Dealternation Lemma for subspace arrangements (\cref{lem:subspace-dealternation}) follows from \cref{lem:local-dealternation}.

\begin{proof}[Proof of \cref{lem:subspace-dealternation} from the Local Dealternation Lemma]
  Create an~ordering $x_1, x_2, x_3, \dots, x_n$ of the nodes of $T^b$ consistent with the descendant-ancestor relationship $<$; that is, choose any~ordering of the nodes in which for every pair of nodes $x, y$ such that $x$ is a~descendant of $y$, $x$ precedes $y$ in the ordering.
  Throughout the proof, we will inductively create a~sequence of rooted rank decompositions of $\Vc$ of optimum width: $\Tc_0, \Tc_1, \dots, \Tc_n$, such that for each $t \in [0, n]$, the decomposition $\Tc_t$ satisfies the following properties:
  \begin{itemize}
    \item for every $i \in [t]$, the set $\lparts(\Tc^b)[x_i]$ is a~disjoint union of at most $f_{\ref{lem:local-dealternation}}(\ell)$ $x_i$-factors of $\Tc_t$; and
    \item for every $i \in [n]$, the $x_i$-mixed skeleton of $\Tc_t$ contains at most $f_{\ref{lem:small-mixed-skeleton}}(\ell)$ nodes.
  \end{itemize}
  Then the decomposition $\Tc_n$ will witness the Dealternation Lemma for the subspace arrangement $\Vc$, with $f_{\ref{lem:subspace-dealternation}} = f_{\ref{lem:local-dealternation}}$.

  By \cref{lem:small-mixed-skeleton}, there exists a~rank decomposition $\Tc_0$ of $\Vc$ of optimum width such that for every $x \in V(T^b)$, the $x$-mixed skeleton of $\Tc_0$ contains at most $f_{\ref{lem:small-mixed-skeleton}}(\ell)$ nodes.
  This verifies the inductive assumption about $\Tc_0$.
  
  Now assume that $t \in [n]$, we are given a~rank decomposition $\Tc_{t-1}$ of optimum width satisfying the inductive assumption, and we want to produce a~rank decomposition $\Tc_t$.
  Let us apply \cref{lem:local-dealternation} with the decomposition $\Tc_{t-1}$ and $x = x_t$, yielding the decomposition $\Tc_t$.
  We are left to verify that $\Tc_t$ satisfies the inductive assumptions.
  
  First, for every $i \in [t - 1]$, the set $\lparts(\Tc^b)[x_i]$ is a~disjoint union of at most $f_{\ref{lem:local-dealternation}}(\ell)$ $x_i$-factors of $\Tc_{t-1}$.
  Observe that $x_i \ngtr x_t$ by the construction of the order $x_1, \dots, x_n$.
  Thus by \cref{lem:local-dealternation}, each such factor is also an~$x_i$-factor of $\Tc_t$.
  Also, directly by \cref{lem:local-dealternation} we have that $\lparts(\Tc^b)[x_t]$ is a~disjoint union of at most $f_{\ref{lem:local-dealternation}}(\ell)$ $x_t$-factors of $\Tc_t$.
  
  Finally, let $i \in [n]$ and recall that the $x_i$-mixed skeleton of $\Tc_{t-1}$ contains at most $f_{\ref{lem:small-mixed-skeleton}}(\ell)$ nodes.
  By \cref{lem:local-dealternation}, the $x_i$-mixed skeletons of $\Tc_t$ and $\Tc_{t-1}$ are equal, so the bound on the number of nodes applies also to the $x_i$-mixed skeleton of $\Tc_t$.  Thus the inductive step is correct and thus the sought decomposition $\Tc_n$ exists.
\end{proof}

The following sections will introduce operations implementing ``local rearrangements'' of rank decompositions that will be used in the proof of the Local Dealternation Lemma: tree swaps and block shuffles.

\subsection{Tree swaps}


Again assume that $\Tc^b = (T^b, \lambda^b)$ and $\Tc = (T, \lambda)$ are rooted rank decompositions of $\Vc$.
Let $T$ contain a~vertical path $v_0v_1v_2v_3$.
We define a~\emph{swap} of $\Tc$ along the vertical path $v_0v_1v_2v_3$ as an~update of the decomposition replacing the path $v_0v_1v_2v_3$ with the (vertical) path $v_0v_2v_1v_3$ (\cref{fig:swap-def}).
It is easy to see that after the swap, the resulting tree remains binary.
Note also that swaps are invertible: whenever the swap of $\Tc$ along $v_0v_1v_2v_3$ produces a~tree $\Tc_\swap$, the original decomposition $\Tc$ is a~result of a~swap of $\Tc_\swap$ along $v_0v_2v_1v_3$.
Finally, we say that a~swap of $\Tc$ along the vertical path $v_0v_1v_2v_3$ is an~\emph{$x$-swap} for some $x \in V(T^b)$ if the following preconditions are met:
\begin{itemize}
  \item $v_3$ is $x$-mixed; and
  \item if $v'_1$ and $v'_2$ are the (unique) children of $v_1$ and $v_2$, respectively, outside of the path, then exactly one of the nodes $v'_1, v'_2$ is $x$-empty and the other is $x$-full.
\end{itemize}

Observe that whenever $\Tc'$ is an~$x$-swap of $\Tc$ along $v_0v_1v_2v_3$, then also $\Tc$ is an~$x$-swap of $\Tc'$ along $v_0v_2v_1v_3$.

\begin{figure}
  \centering
  \includegraphics[width=0.25\textwidth]{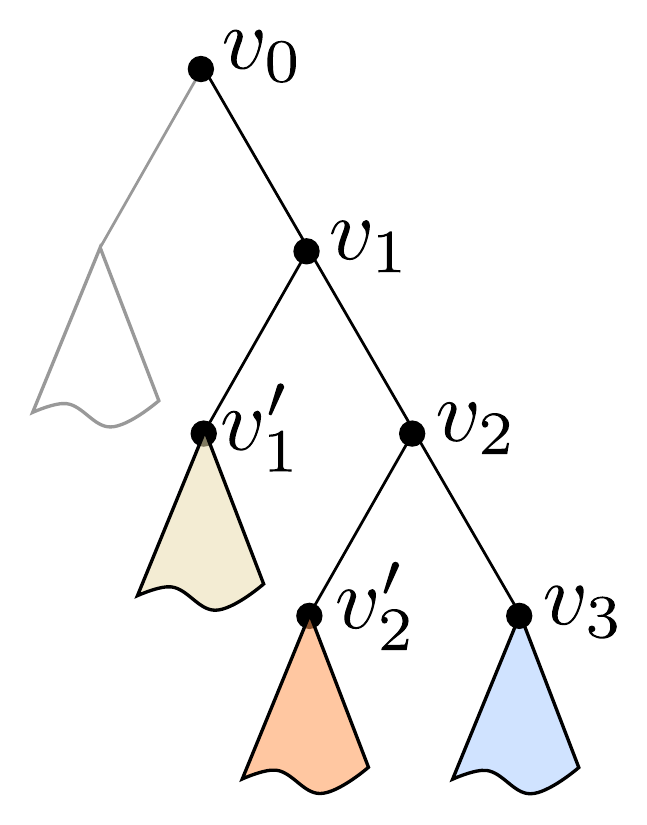}
  \hspace{0.1\textwidth}
  \includegraphics[width=0.25\textwidth]{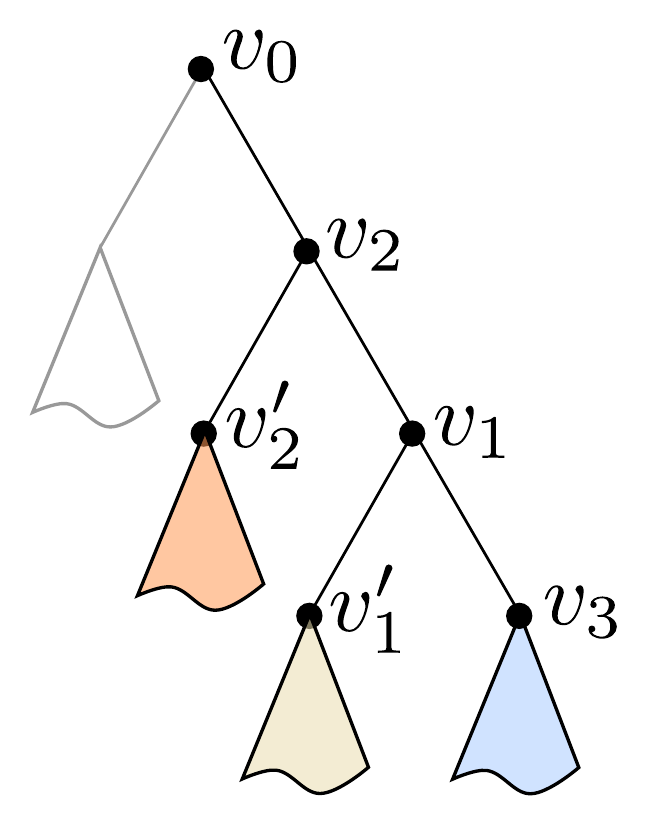}
  
  \caption{An~example swap. The right decomposition is a~swap of the left decomposition along $v_0v_1v_2v_3$.}
  \label{fig:swap-def}
\end{figure}


The main product of this subsection is the following lemma, asserting that for any $x \in V(T^b)$, any $x$-swap of $\Tc$ preserves the $y$-mixed skeletons of $\Tc$ for all $y \in V(T^b)$:
\begin{lemma}
  \label{lem:swap-preserves-mixed-skeletons}
  Let $x, y \in V(T^b)$. Suppose $\Tc_\swap$ is created from $\Tc$ by performing an~$x$-swap along the path $v_0v_1v_2v_3$.
  Then the $y$-mixed skeletons of $\Tc$ and $\Tc_\swap$ are equal.
\end{lemma}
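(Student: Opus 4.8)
The plan is to analyze how an $x$-swap affects the partition $\lparts(\Tc)[v]$ for every node $v$ of the decomposition tree, and to show that the changes are invisible from the point of view of $y$-fullness/$y$-emptiness/$y$-mixedness for all $y \in V(T^b)$. First I would observe that an $x$-swap along $v_0 v_1 v_2 v_3$ changes the tree only locally: the only nodes whose subtree of leaves changes are $v_1$ and $v_2$ (with $v_0$, $v_3$, $v'_1$, $v'_2$ and everything outside the path keeping their leaf-sets intact). Writing $A = \lparts(\Tc)[v'_1]$, $B = \lparts(\Tc)[v'_2]$, $C = \lparts(\Tc)[v_3]$, before the swap we have (taking the convention that $v_1$ is the child of $v_2$) $\lparts(\Tc)[v_1] = A \cup C$ and $\lparts(\Tc)[v_2] = A \cup B \cup C$, while after the swap the node that sits in position "child of $v_0$" has leaf-set $B \cup C$ and the node below it has leaf-set $A \cup B \cup C$. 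So the only leaf-set that genuinely changes is that of one internal node, which switches from $A \cup C$ to $B \cup C$ (the deeper node retains $A \cup B \cup C$), and correspondingly on the $v_3$-side the complementary set switches accordingly.

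The key step is then to use the $x$-swap preconditions together with \Cref{obs:mixedness-relations} to control how $y$-mixedness behaves on these sets. By assumption exactly one of $v'_1, v'_2$ is $x$-empty and the other is $x$-full; say (relabelling) $A$ is $x$-empty and $B$ is $x$-full, i.e. $A \cap \Vc_x = \emptyset$ and $B \subseteq \Vc_x$. Now fix $y \in V(T^b)$ and split into the cases dictated by the position of $y$ relative to $x$. If $y$ is not an ancestor of $x$ and $x$ is not an ancestor of $y$ (incomparable), then $\Vc_x \cap \Vc_y = \emptyset$, so $B \cap \Vc_y = \emptyset$; hence $(A \cup C) \cap \Vc_y = (B \cup C) \cap \Vc_y = C \cap \Vc_y$, so the changed node has the same $y$-status before and after, and likewise on the $v_3$-side. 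If $y \geq x$, then $\Vc_x \subseteq \Vc_y$, so $A \subseteq \Vc_x \subseteq \Vc_y$ is false—rather $A$ being $x$-empty gives no information, but $B \subseteq \Vc_x \subseteq \Vc_y$ forces $B$ to be $y$-full; one then checks directly that $A \cup C$ and $B \cup C$ are $y$-mixed (resp.\ $y$-full, resp.\ $y$-empty) under exactly the same conditions, because adding or removing the fully-$y$-contained block $B$ or the $y$-empty block $A$ does not change $y$-mixedness of a set already containing $C$ — here one uses that $v_3$ is $x$-mixed, hence (since $y \geq x$) $v_3$ is $y$-mixed or $y$-full by \Cref{obs:mixedness-relations}, which pins down $C$'s relationship to $\Vc_y$ enough to push the verification through. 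The remaining case $y < x$ (or $y$ incomparable to $x$ but an ancestor of nothing relevant) is handled symmetrically: then $A$ is $y$-empty (since $A$ is $x$-empty and $y \le x$) and $B$ is $y$-empty or $y$-mixed, and again a short case check shows the $y$-status of the affected node is unchanged.

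Having shown that every node of $T_\swap$ has the same $y$-fullness/$y$-emptiness/$y$-mixedness label as the corresponding node of $T$ (nodes outside the path trivially, $v_0$ and $v_3$ and $v'_1, v'_2$ because their leaf-sets are untouched, and the two path-internal nodes by the case analysis above), I would conclude that the set of $y$-leaf points and $y$-branch points is literally the same in $\Tc$ and $\Tc_\swap$, and that the "internally disjoint path" adjacency defining the skeleton is unaffected: any path in $T$ between two skeleton nodes that avoided the skeleton still does so in $T_\swap$, since the only edges that moved are the three edges of the swapped path, all of whose endpoints have unchanged labels and none of which can be $y$-mixed on both sides unless they already were. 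Hence $T^\mix$ (the $y$-mixed skeleton) is identical for $\Tc$ and $\Tc_\swap$, which is exactly the claim. The main obstacle I anticipate is the bookkeeping in the path-internal case analysis — in particular making sure that the hypothesis "$v_3$ is $x$-mixed" is genuinely used (it is the ingredient that prevents $C$ from being, say, $x$-empty, which would let the swap create or destroy a leaf point) and that the relabelling $A \leftrightarrow B$ is handled symmetrically so that the argument does not secretly assume which of $v'_1, v'_2$ is the full one.
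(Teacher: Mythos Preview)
Your approach has a genuine gap. The central claim---that the sets $A \cup C$ and $B \cup C$ always have the same $y$-status---is false. Take $y > x$ with $v'_1$ $x$-empty and $v'_2$ $x$-full; then $B \subseteq \Vc_x \subseteq \Vc_y$ is forced, but nothing prevents $A$ from being $y$-empty and $C = \lparts(\Tc)[v_3]$ from being $y$-full (the hypothesis that $v_3$ is $x$-mixed only yields, via \Cref{obs:mixedness-relations}, that $v_3$ is $y$-mixed \emph{or} $y$-full). In that situation $B \cup C$ is $y$-full while $A \cup C$ is $y$-mixed. So the sentence ``one then checks directly that $A \cup C$ and $B \cup C$ are $y$-mixed (resp.\ $y$-full, resp.\ $y$-empty) under exactly the same conditions'' is exactly where the argument breaks.

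The lemma still holds in this example, but for a subtler reason: what must be tracked is not the $y$-status of nodes but the membership of $v_1$ and $v_2$ in the skeleton, which depends on the \emph{pair} of children's statuses. In the counterexample, before the swap $v_2$'s children are $v'_2$ ($y$-full) and $v_3$ ($y$-full), so $v_2$ is neither a $y$-leaf point nor a $y$-branch point; after the swap $v_2$'s children are $v'_2$ ($y$-full) and $v_1$ ($y$-mixed), and $v_2$ is again not in the skeleton---but not because the children's statuses were preserved. The paper's proof abstracts a ``$y$-empty property'' and a ``$y$-full property'' of the path, shows the $x$-swap hypotheses force one of them for every $y$ (this part matches your case split on the relation between $x$ and $y$), and then argues by a direct case analysis that $v_1 \in V(T^\mix) \Leftrightarrow v_1 \in V(T^\mix_\swap)$ and similarly for $v_2$. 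It also needs the separate observation that $v_1$ and $v_2$ cannot both lie in the skeleton, which is what upgrades equality of skeleton vertex sets to equality of skeleton trees (the swap flips the ancestor relation between $v_1$ and $v_2$, so your final paragraph's claim that ``the internally-disjoint-path adjacency is unaffected'' also needs justification).
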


The rest of this section is dedicated to the proof of \cref{lem:swap-preserves-mixed-skeletons}.
The proof proceeds in two steps.
First, we phrase, in terms of $y$-emptiness, $y$-mixedness and $y$-fullness of nodes only, the structural properties of a~vertical path $v_0v_1v_2v_3$ in $\Tc$ which, when fulfilled by the path, implies the perseverance of the $y$-mixed skeleton of $\Tc$ after the swap along $v_0v_1v_2v_3$.
Then we show that whenever a~swap of $\Tc$ along a~path $P$ happens to be an~$x$-swap for any $x \in V(T^b)$, then $P$ fulfills this structural property for every $y \in V(T^b)$; hence, such a~swap will preserve all $y$-mixed skeletons for all $y \in V(T^b)$.

\smallskip

Let $v_0v_1v_2v_3$ be a~vertical path in $T$, and $v'_1, v'_2$ be the neighbors of $v_1$ and $v_2$, respectively, outside of the path.
Let also $y \in V(T^b)$.
We then say that the path satisfies:
\begin{itemize}
  \item the \emph{$y$-empty property} if at least one of $v'_1$ and $v'_2$ is $y$-empty, and $v_3$ is either $y$-empty or $y$-mixed; and
  \item the \emph{$y$-full property} if at least one of $v'_1$ and $v'_2$ is $y$-full, and $v_3$ is either $y$-full or $y$-mixed.
\end{itemize}

\begin{lemma}
  \label{lem:skeleton-perserverance-aux}
  Let $y \in V(T^b)$ and $v_0v_1v_2v_3$ be a~vertical path in $T$ satisfying either the $y$-empty property or the $y$-full property.
  Suppose $\Tc_\swap$ is created from $\Tc$ by performing a~swap along $v_0v_1v_2v_3$.
  Then the $y$-mixed skeletons of $\Tc$ and $\Tc_\swap$ are equal.
\end{lemma}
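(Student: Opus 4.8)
The plan is to prove \cref{lem:skeleton-perserverance-aux} by a careful case analysis of how the swap along $v_0v_1v_2v_3$ affects the $y$-status (empty/mixed/full) of each node, and hence which nodes are $y$-leaf points and $y$-branch points. By symmetry (negating the arrangement, i.e. swapping the roles of ``empty'' and ``full''), it suffices to treat the case where the path satisfies the $y$-empty property. First I would fix notation: write $w_1$ for the child of $v_1$ on the path (so $w_1 = v_2$ in $\Tc$) and note that the swap only changes the tree in the region spanned by $v_0,v_1,v_2,v_3,v_1',v_2'$; every node strictly above $v_0$ or strictly below $v_1'$, $v_2'$ or $v_3$ keeps the exact same set $\lparts(\cdot)[v]$, hence the same $y$-status, and the same is true for $v_0$, $v_1'$, $v_2'$, $v_3$ themselves. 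So the only nodes whose $y$-status can change are the two ``middle'' nodes, which I will call $v_1, v_2$ before the swap and $\tilde v_2, \tilde v_1$ after (where $\tilde v_2$ is now the parent of $v_1'$ and $v_3$, and $\tilde v_1$ is the parent of $\tilde v_2$ and $v_2'$).

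Next I would compute $\lparts(\cdot)[\cdot]$ for these middle nodes. In $\Tc$: $\lparts(\Tc)[v_2] = \lparts(\Tc)[v_2'] \cup \lparts(\Tc)[v_3]$ and $\lparts(\Tc)[v_1] = \lparts(\Tc)[v_1'] \cup \lparts(\Tc)[v_2'] \cup \lparts(\Tc)[v_3]$. In $\Tc_\swap$: $\lparts(\Tc_\swap)[\tilde v_2] = \lparts(\Tc)[v_1'] \cup \lparts(\Tc)[v_3]$ and $\lparts(\Tc_\swap)[\tilde v_1] = \lparts(\Tc)[v_1'] \cup \lparts(\Tc)[v_2'] \cup \lparts(\Tc)[v_3] = \lparts(\Tc)[v_1]$. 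In particular the topmost middle node has the same part set before and after, so its $y$-status is unchanged; only the ``lower'' middle node ($v_2$ vs.\ $\tilde v_2$) can switch status, and it can only switch among the statuses compatible with a subset of $\lparts(\Tc)[v_1]$. I would then invoke the $y$-empty hypothesis: at least one of $v_1', v_2'$ is $y$-empty, and $v_3$ is $y$-empty or $y$-mixed. Using \cref{obs:mixedness-relations}-style bookkeeping and the union formulas above, in every subcase one checks that $v_2$ is $y$-empty or $y$-mixed, and $\tilde v_2$ is $y$-empty or $y$-mixed, and moreover that $v_2$ is a $y$-leaf point iff $\tilde v_2$ is (this is where I need that being a $y$-leaf point requires one $y$-full child, and $v_3$ is never $y$-full under the $y$-empty property, so neither $v_2$ nor $\tilde v_2$ is a $y$-leaf point unless the $y$-full child is $v_2'$ resp.\ $v_1'$, which are unaffected). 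Similarly neither can be a $y$-branch point created or destroyed, since a $y$-branch point needs both children $y$-mixed and the $y$-statuses of $v_1', v_2', v_3$ are fixed while $v_2, \tilde v_2$ are only ever $y$-empty or $y$-mixed.

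From this node-by-node comparison I conclude that the set of $y$-leaf points and the set of $y$-branch points of $\Tc$ and $\Tc_\swap$ coincide, so $V(T^\mix)$ is the same for both; and since the ancestor-descendant relation among these skeleton nodes is determined by the tree structure outside the affected region (the swap is ``internal'' and does not reorder any skeleton nodes relative to one another — this needs the observation that no middle node lies in $V(T^\mix)$), the edge sets of the two $y$-mixed skeletons agree as well. That gives \cref{lem:skeleton-perserverance-aux}. The main obstacle I anticipate is bookkeeping discipline: there are several subcases depending on which of $v_1', v_2'$ is $y$-empty and on the $y$-status of $v_3$, and one must be scrupulous that in each the affected middle node fails to be a $y$-leaf point and fails to be a $y$-branch point in both decompositions; the slickest way to organize this is probably to first prove the clean lemma ``if $v$ has children with part sets $A, B$ and $A' , B'$ with $A \cup B = A' \cup B'$, and $A$ and $A'$ are each $y$-empty or $y$-mixed while $B = B'$, then $v$ has the same $y$-leaf/$y$-branch/ordinary status in both'' and then just instantiate it twice (once for the lower middle node, trivially for the upper one).
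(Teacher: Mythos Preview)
Your proof has a genuine gap: the assertion that ``no middle node lies in $V(T^\mix)$'' is false. Under the $y$-empty property one of $v_1,v_2$ can belong to the skeleton. For example, take $v_1'$ $y$-empty, $v_2'$ $y$-mixed, and $v_3$ $y$-mixed; the path satisfies the $y$-empty property, yet $v_2$ in $\Tc$ (with children $v_2',v_3$, both $y$-mixed) is a $y$-branch point. So the middle nodes cannot be dismissed.

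There is also a confusion between \emph{positions} and \emph{node identities}. The swap keeps the names $v_1,v_2$ but exchanges their depths: in $\Tc_\swap$ the node $v_1$ has children $v_1',v_3$ and $v_2$ has children $v_2',v_1$. Equality of skeletons means $V(T^\mix)=V(T^\mix_\swap)$ as subsets of $V(T)$, so you must compare $v_2$-in-$\Tc$ with $v_2$-in-$\Tc_\swap$, not with the node now at the same depth (your $\tilde v_2=v_1$). In the example above, your ``lower middle'' comparison yields a mismatch ($v_2$ is a branch point in $\Tc$, whereas $v_1$ in $\Tc_\swap$ has children $v_1',v_3$ of types empty/mixed and is not), even though the skeletons do coincide: $v_2$ remains a branch point in $\Tc_\swap$ since its children $v_2'$ and $v_1$ are both $y$-mixed there. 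Your ``clean lemma'' is also vacuous as stated, because with disjoint children's part sets, $A\cup B=A'\cup B'$ together with $B=B'$ forces $A=A'$. What is actually required (and what the paper does) is a short case split on which of $v_1',v_2'$ is $y$-empty and whether $v_3$ is $y$-empty or $y$-mixed, establishing that (i) $v_1\in V(T^\mix)\Leftrightarrow v_1\in V(T^\mix_\swap)$, (ii) the same for $v_2$, and (iii) $v_1$ and $v_2$ cannot both lie in the skeleton. Point (iii) is precisely what guarantees the ancestor--descendant relation among skeleton nodes is unchanged: any pair whose relation flips under the swap would have lowest common ancestor $v_1$ in one tree and $v_2$ in the other, forcing both into the skeleton by \cref{lem:mixed-skeleton-lca}.
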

\begin{proof}
  In the proof, we assume the $y$-empty property; the proof for the $y$-full property is analogous (with the roles of the $y$-emptiness and the $y$-fullness of nodes exchanged).
  For the course of the proof, let $\Tc_\swap = (T_\swap, \lambda_\swap)$, let $T^\mix$ be a~$y$-mixed skeleton of $\Tc$, and let $T^\mix_\swap$ be a~$y$-mixed skeleton of $\Tc_\swap$.
  Let also $v'_1, v'_2$ be the children of $v_1$ and $v_2$, respectively, outside of the path $v_0v_1v_2v_3$ in $T$.
  
  Our proof crucially relies on the following helper claim:
  \begin{claim}
    \label{clm:skeleton-vtx-equality-implies-all-equality}
    Suppose that $V(T^\mix_\swap) = V(T^\mix)$ and $\{v_1, v_2\} \not\subseteq V(T^\mix)$.
    Then $T^\mix_\swap = T^\mix$.
  \end{claim}
  \begin{claimproof}
    By \cref{lem:mixed-skeleton-lca}, we find that $T^\mix_\swap = T^\mix$ if and only if $V(T^\mix_\swap) = V(T^\mix)$ and the ancestor-descendant relationship is preserved on the pairs of vertices of $V(T^\mix)$ (i.e., $u_1 \leq u_2$ holds in $T$ for some $u_1, u_2 \in V(T^\mix)$ if and only if $u_1 \leq u_2$ holds in $T_\swap$).

	So suppose there exist $u_1, u_2 \in V(T^\mix)$ such that the relation $u_1 \leq u_2$ holds in exactly one of the trees $T$, $T_\swap$.
	By the construction of $T_\swap$, one of these two vertices (say, $u_1$) either is equal to $v_1$ or is a~descendant of $v'_1$; and the other (say, $u_2$) either is equal to $v_2$ or is a~descendant of $v'_2$.
	The lowest common ancestor of $u_1$ and $u_2$ is then $v_1$ in $T$ and $v_2$ in $T_\swap$.
	By \cref{lem:mixed-skeleton-lca} and $V(T^\mix_\swap) = V(T^\mix)$, we have $\{v_1, v_2\} \subseteq V(T^\mix)$ -- a~contradiction.
  \end{claimproof}
  
  It is immediate that for every non-leaf node $w \notin \{v_1, v_2\}$, both subtrees rooted at the children of $w$ in $T$ contain the same set of nodes before and after the $x$-swap.
  Hence,
  \[ \{\lparts(T_\swap)[w']\,\mid\, w'\text{ is a child of }w\text{ in }T_\swap\} \ =\ 
    \{\lparts(T)[w']\,\mid\, w'\text{ is a child of }w\text{ in }T\}. \]
  Thus, each $w \notin \{v_1, v_2\}$ is a~$y$-branch point (resp.~a~$y$-leaf point) in $T_\swap$ if and only if $w$ is a~$y$-branch point (resp.~a~$y$-leaf point) in $T$.
  Moreover, it is easy to see that for each $w \notin \{v_1, v_2\}$, $w$ is $y$-empty (resp.~$y$-mixed, $y$-full) in $T$ if and only if $w$ is $y$-empty (resp.~$y$-mixed, $y$-full) in $T_\swap$. 
  
  Therefore, by \cref{clm:skeleton-vtx-equality-implies-all-equality}, for the equality of the $y$-mixed skeletons of $\Tc$ and $\Tc_\swap$ it is enough to prove that:
  \begin{itemize}
    \item for each $w \in \{v_1, v_2\}$, $w \in V(T^\mix_\swap)$ if and only if $w \in V(T^\mix)$; and
    \item $v_1, v_2$ do not both belong to the $y$-mixed skeleton of $\Tc$.
  \end{itemize}
  
  These conditions will follow immediately from the following series of claims.
  \begin{claim}
    \label{cl:skeleton-top-vtx-imply}
    Suppose $v_1 \in V(T^\mix)$. Then $v_1 \in V(T^\mix_\swap)$.
  \end{claim}
  \begin{claimproof}
    If $v'_1$ is $y$-empty in $\Tc$, then $v_2$ must be $y$-full in $\Tc$ (otherwise we would have $v_1 \notin V(T^\mix)$); but this contradicts the assumption that $v_3$ is $y$-empty or $y$-mixed.
    Therefore, it is $v'_2$ that is $y$-empty in $\Tc$.
    We now consider cases depending on the type of $v_3$ in $\Tc$:
    \begin{itemize}
      \item If $v_3$ is $y$-empty in $\Tc$, then it follows that $v_2$ is $y$-empty in $\Tc$.
        Since $v_1 \in V(T^\mix)$, we infer that $v'_1$ is $y$-full in $\Tc$ and $v_1$ is a~$y$-leaf point in $\Tc$.
        Then, in $\Tc_\swap$, the two children of $v_1$ (that is, $v'_1$ and $v_3$) are $y$-full and $y$-empty, respectively.
        Thus $v_1$ is also a~$y$-leaf point in $\Tc_\swap$.
      \item If $v_3$ is $y$-mixed in $\Tc$, then so is $v_2$.
        Since $v_1 \in V(T^\mix)$, it must be the case that $v'_1$ is also $y$-mixed in $\Tc$ and $v_1$ is a~$y$-branch point in $\Tc$.
        Hence in $\Tc_\swap$, both children of $v_1$ (again, $v'_1$ and $v_3$) are $y$-mixed, witnessing that $v_1$ is a~$y$-branch point also in $\Tc_\swap$. \qedhere
    \end{itemize}
  \end{claimproof}
  
  \begin{claim}
    \label{cl:skeleton-bottom-vtx-imply}
    Suppose $v_2 \in V(T^\mix)$. Then $v_2 \in V(T^\mix_\swap)$.
  \end{claim}
  \begin{claimproof}
    If $v'_2$ is $y$-empty in $\Tc$, then $v_3$ must be $y$-full in $\Tc$ (otherwise $v_2 \notin V(T^\mix)$) -- a~contradiction with the $y$-empty property of $v_0v_1v_2v_3$ in $\Tc$.
    So it is $v'_1$ that is $y$-empty in $\Tc$.
    Again, consider cases depending on the type of $v_3$ in $T$:
    \begin{itemize}
      \item If $v_3$ is $y$-empty in $\Tc$, then $v_2 \in V(T^\mix)$ implies that $v'_2$ is $y$-full in $\Tc$.
        Then, in $\Tc_\swap$, $v_1$ is $y$-empty (since both children $v'_1, v_3$ are $y$-empty) and so $v_2 \in V(T^\mix_\swap)$ (since one child $v'_2$ is $y$-full and the other child $v_1$ is $y$-empty).
      \item If $v_3$ is $y$-mixed in $\Tc$, then $v_2 \in V(T^\mix)$ implies that $v'_2$ is also $y$-mixed in $\Tc$.
        Hence, in $\Tc_\swap$, $v_1$ is $y$-mixed (since a~child $v_3$ is $y$-mixed), and so $v_2 \in V(T^\mix_\swap)$ (since both children $v'_2, v_1$ are $y$-mixed). \qedhere
    \end{itemize}
  \end{claimproof}

  \begin{claim}
    \label{cl:skeleton-reverse-vtx-imply}
    If $v_1 \in V(T^\mix_\swap)$, then $v_1 \in V(T^\mix)$. Similarly, if $v_2 \in V(T^\mix_\swap)$, then $v_2 \in V(T^\mix)$.
  \end{claim}
  \begin{claimproof}
    Observe that the vertical path $v_0v_2v_1v_3$ satisfies the $y$-empty property in $\Tc_\swap$; moreover, the swap of $\Tc_\swap$ along this path produces the original decomposition $\Tc$.
    Thus, by \cref{cl:skeleton-top-vtx-imply}, $v_2 \in V(T^\mix_\swap)$ implies that $v_2 \in V(T^\mix)$.
    Similarly, by \cref{cl:skeleton-bottom-vtx-imply}, $v_1 \in V(T^\mix_\swap)$ implies $v_1 \in V(T^\mix)$.
  \end{claimproof}
  
  \begin{claim}
    \label{cl:skeleton-no-both-vertices}
    It cannot happen that $v_1, v_2 \in V(T^\mix)$.
  \end{claim}
  \begin{claimproof}
    If $v_3$ is $y$-empty in $\Tc$, then $v'_2$ must be $y$-full (otherwise $v_2 \notin V(T^\mix)$), and so $v_2$ must be $y$-mixed.
    But then from the $y$-empty property of $v_0v_1v_2v_3$, the node $v'_1$ must be $y$-empty and thus $v_1 \notin V(T^\mix)$ -- a~contradiction.
    
    If $v_3$ is $y$-mixed in $\Tc$, then so is $v'_2$ (or else $v_2 \notin V(T^\mix)$), and $v_2$ is $y$-mixed, too.
    But then again, $v'_1$ must be $y$-empty from the $y$-empty property of $v_0v_1v_2v_3$, which contradicts that $v_1 \in V(T^\mix)$.
  \end{claimproof}
  
  \Cref{cl:skeleton-top-vtx-imply,cl:skeleton-bottom-vtx-imply,cl:skeleton-reverse-vtx-imply,cl:skeleton-no-both-vertices} conclude the proof of the lemma.
\end{proof}

We are now ready to give a proof of \cref{lem:swap-preserves-mixed-skeletons}.

\begin{proof}[Proof of \cref{lem:swap-preserves-mixed-skeletons}]
  We only show the proof in the case where $v'_1$ is $x$-empty and $v'_2$ is $x$-full in $\Tc$; the proof for the symmetric case is analogous.
  Recall that $v_3$ is $x$-mixed in $\Tc$.
  We consider three cases, depending on how $x$ and $y$ are related with respect to the ancestor-descendant relationship in~$T^b$.
  
  \smallskip
  
  \emph{Case 1: $y \geq x$ (i.e., $y$ is an~ancestor of $x$ in $T^b$).}
  Then by \cref{obs:mixedness-relations}, we have that $v'_2$ is $y$-full in $\Tc$; and $v_3$ is $y$-mixed or $y$-full.
  So $v_0v_1v_2v_3$ satisfies the $y$-full property, hence \cref{lem:skeleton-perserverance-aux} applies.
  
  \smallskip
  
  \emph{Case 2: $y \leq x$ (i.e., $y$ is a~descendant of $x$ in $T^b$).}
  Then by \cref{obs:mixedness-relations}, we have that in $\Tc$, $v'_1$ is $y$-empty and $v_3$ is $y$-empty or $y$-mixed.
  Therefore, $v_0v_1v_2v_3$ satisfies the $y$-empty property and \cref{lem:skeleton-perserverance-aux} applies.
  
  \smallskip
  
  \emph{Case 3: $y$ is not in the ancestor-descendant relationship with $x$ in $T^b$.}
  Again by \cref{obs:mixedness-relations}, we have that in $\Tc$, $v'_2$ is $y$-empty and $v_3$ is $y$-empty or $y$-mixed.
  Hence we can apply \cref{lem:skeleton-perserverance-aux} as the path $v_0v_1v_2v_3$ satisfies the $y$-empty property.
\end{proof}

\subsection{Block shuffles}

While the operation of swaps is quite strong in the sense that any $x$-swap preserves the $y$-mixed skeleton for any $x, y \in V(T^b)$, this unfortunately is not the case for $y$-factors: it could happen that a~$y$-factor of $\Tc$ could cease to exist after performing an~$x$-swap.
We will resolve this issue by introducing a~more structured counterpart of a~swap: a~(boundary-preserving) block-shuffle.

Suppose that $T$ contains a~long vertical path $v_0v_1 \dots v_pv_{p+1}$, $p \geq 0$.
For each $i \in [p]$, let $v'_i$ be the (unique) child of $v_i$ not on the path.
Let also $x \in V(T^b)$ and consider the case that for each $i \in [p]$, the vertex $v'_i$ is either $x$-empty or $x$-full in $\Tc$; and that $v_{p+1}$ is $x$-mixed in $\Tc$.
(This is equivalently the case where the $x$-mixed skeleton of $\Tc$ contains a~vertex in the subtree rooted at $v_{p+1}$, but none of the vertices $v_1, \dots, v_p$ are vertices of this skeleton.)
Any such path will be called \emph{$x$-shuffleable} from now on.

Now we say that an~integer interval $I = [\ell, r] \subseteq [1, p]$ is an~\emph{$x$-empty block} if all the vertices $v'_i$ for $i \in I$ are $x$-empty, and the interval cannot be extended from either side so as to preserve this property.
We similarly define \emph{$x$-full blocks}.
Then an~\emph{$x$-block} is either an~$x$-empty block or an~$x$-full block.
Naturally, $x$-blocks form a~partitioning of $[1, p]$ into intervals, and in this partitioning, $x$-empty blocks and $x$-full blocks alternate.
In the following description, we will sometimes identify $x$-blocks $[\ell, r]$ with the sequences of vertices $(v'_\ell, \dots, v'_r)$ and $(v_\ell, \dots, v_r)$.

For a~permutation $\sigma$ of $\{1, 2, \dots, p\}$, we say that the replacement of the vertical path $v_0v_1 \dots v_pv_{p+1}$ with the path $v_0v_{\sigma(1)}v_{\sigma(2)} \dots v_{\sigma(p)}v_{p+1}$ is an~\emph{$x$-block shuffle along $v_0\dots v_{p+1}$ using $\sigma$} if all the following conditions hold:
\begin{itemize}
  \item If $i$ and $i + 1$ belong to the same $x$-block, then $\sigma^{-1}(i + 1) = \sigma^{-1}(i) + 1$ (i.e., the value $i+1$ appears in the permutation immediately after $i$); and
  \item If $1 \leq i < j \leq p$ and both $v'_i, v'_j$ are $x$-empty (or both are $x$-full), then $\sigma^{-1}(i) < \sigma^{-1}(j)$ (i.e., the value $j$ appears in the permutation later than $i$).
\end{itemize}

For convenience, we say that the permutation $\sigma$ is the \emph{recipe} of the block shuffle.

Intuitively, an~$x$-block shuffle can be pictured as an~arbitrary shuffle of vertices along the vertical path that preserves the $x$-blocks of vertices along the path and never swaps two $x$-blocks of the same kind.
For our convenience, we extend $\sigma$ to be a~permutation of $\{0, \dots, p+1\}$ by setting $\sigma(0) = 0$ and $\sigma(p+1) = p+1$.
If additionally it holds that $\sigma(1) = 1$ and $\sigma(p) = p$, then we say that an~$x$-block shuffle is \emph{boundary-preserving}; equivalently, the first and the last $x$-blocks are preserved intact by the shuffle (\cref{fig:shuffleable-path}).

\begin{figure}
  \centering
  \begin{subfigure}[b]{0.9\textwidth}
    \centering
    \includegraphics[width=\textwidth]{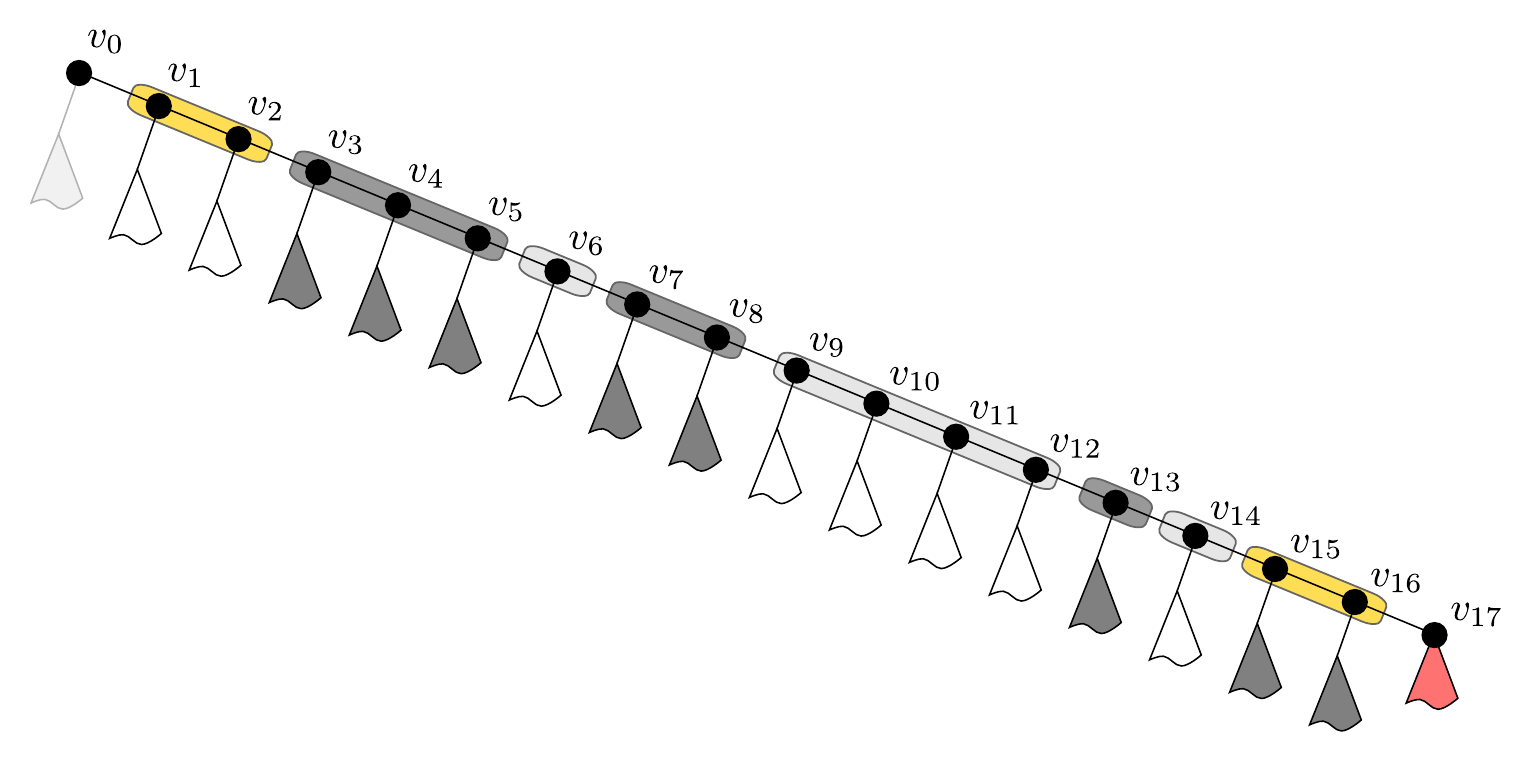}
    \caption{}
  \end{subfigure}
  
  \begin{subfigure}[b]{0.9\textwidth}
    \centering
    \includegraphics[width=\textwidth]{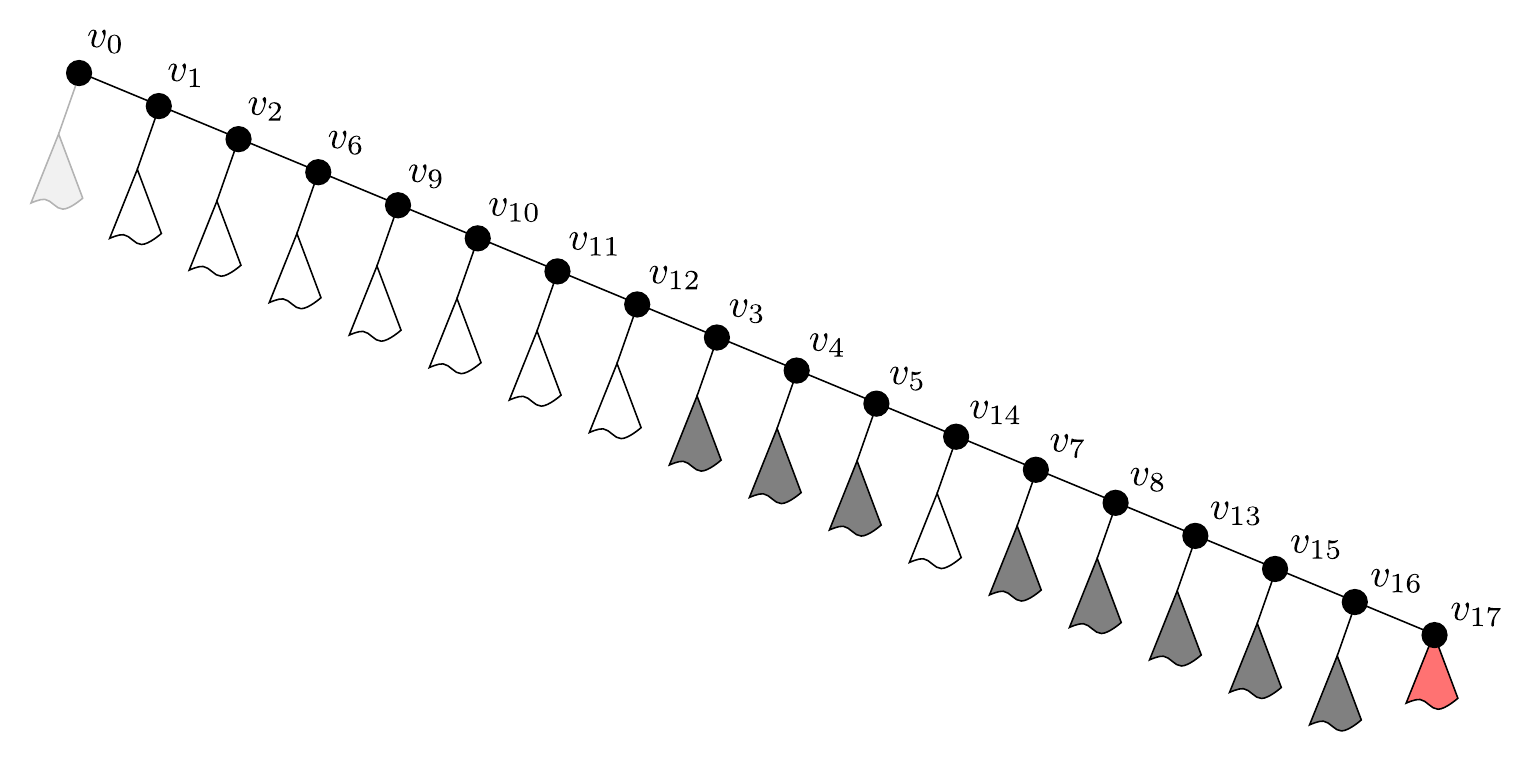}
    \caption{}
  \end{subfigure}
  
  \caption{(a) A sample $x$-shuffleable path. White subtrees have $x$-empty roots and dark subtrees have $x$-full roots; the root $v_{17}$ of the red subtree is $x$-mixed.
  The blocks of the path are indicated by boxes; the two boundary blocks are colored yellow. \\
  (b) An~example boundary-preserving $x$-block shuffle of the path. The recipe of the block shuffle is $\sigma = (1, 2, 6, 9, 10, 11, 12, 3, 4, 5, 14, 7, 8, 13, 15, 16)$.}
  \label{fig:shuffleable-path}
\end{figure}


The following fact is straightforward.

\begin{lemma}
  An~$x$-block shuffle of a~rank decomposition is equivalent to a~composition of $x$-swaps.
  In other words, if $\Tc'$ is a~result of an~$x$-block shuffle along a~vertical path of $\Tc$, then $\Tc'$ can also be produced from $\Tc$ by applying a~sequence of $x$-swaps.
\end{lemma}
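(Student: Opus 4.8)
The claim is that an~$x$-block shuffle decomposes into a~sequence of $x$-swaps. The plan is to show this by an~explicit bubble-sort-like argument on the recipe permutation $\sigma$. Since $\sigma$ is a~permutation of $\{0,1,\dots,p+1\}$ fixing $0$ and $p+1$, and we want to reach the path $v_0 v_{\sigma(1)} \dots v_{\sigma(p)} v_{p+1}$ starting from $v_0 v_1 \dots v_p v_{p+1}$, the natural approach is to realize $\sigma$ as a~product of transpositions of \emph{adjacent positions} along the path, and to observe that each such adjacent transposition is exactly an~$x$-swap. Concretely, if the current path is $v_0 w_0 w_1 w_2 w_3 \dots$ and we transpose two consecutive internal nodes, say the nodes at positions $j$ and $j+1$, then we are performing a~swap along the vertical subpath $u_0 u_1 u_2 u_3$ where $u_1, u_2$ are the two nodes being transposed and $u_0, u_3$ are their path-neighbors. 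So the main content is (i) checking that such a~swap satisfies the $x$-swap preconditions, and (ii) checking that we only ever need to transpose adjacent nodes of \emph{different} block-type (one $x$-empty child, one $x$-full child), so that the ``one child $x$-empty, one child $x$-full'' precondition of an~$x$-swap is met.

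First I would set up the sorting argument. Because the $x$-block shuffle never reorders within a~block and never swaps two blocks of the same type, the induced permutation of the \emph{blocks} is one that only ever exchanges an~$x$-empty block with an~adjacent $x$-full block; since blocks of the two types alternate, any such permutation is achievable by repeatedly swapping a~pair of adjacent blocks of opposite type. I would make this precise: sort the sequence of blocks into their target order via adjacent-block transpositions, and then observe that swapping two adjacent blocks $(v_\ell,\dots,v_m)$ and $(v_{m+1},\dots,v_r)$ of opposite type is itself realized by a~sequence of adjacent \emph{node} transpositions, each moving one node of one block past the entire other block one step at a~time — and each of these single-node moves is an~$x$-swap because at that moment we are swapping two consecutive path nodes whose off-path children are one $x$-empty and one $x$-full. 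Throughout this process the node $v_{p+1}$ at the bottom never moves and stays $x$-mixed, so the first precondition of an~$x$-swap ($v_3$ is $x$-mixed) is always met; and the off-path children $v'_i$ are never touched, so their $x$-emptiness/$x$-fullness is preserved, keeping the second precondition valid.

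The one technical point to be careful about is that ``$v_3$ is $x$-mixed'' in the definition of an~$x$-swap refers to the bottom endpoint of the length-three vertical path being swapped, which after intermediate swaps may no longer be $v_{p+1}$ but some $v_j$ that has been moved down. So I would verify that at every intermediate stage, the bottom node of the three-node window is $x$-mixed: this holds because its subtree contains $v_{p+1}$'s subtree (all intermediate windows lie on the portion of the path above $v_{p+1}$), and a~node whose subtree contains an~$x$-mixed subtree together with some $x$-empty or $x$-full material is itself $x$-mixed — more simply, $v_{p+1}$ being $x$-mixed and a~descendant of that bottom node forces the bottom node to be $x$-mixed as well (its part strictly contains a~non-$\emptyset$, non-full part, hence is itself non-$\emptyset$ and non-full). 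I expect this bookkeeping — keeping track of which nodes are $x$-empty/$x$-full/$x$-mixed through the sequence of swaps, and confirming that the preconditions of an~$x$-swap hold at each step — to be the main (though routine) obstacle; the combinatorial core, namely that an~order-preserving-within-blocks, type-respecting block permutation is a~product of adjacent opposite-type block transpositions which in turn are products of adjacent node transpositions of opposite type, is straightforward once stated carefully.

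Finally I would note that no width argument is needed here: the statement only asserts equivalence with a~composition of $x$-swaps as tree operations, and both sides produce the same final tree $T$ (the one with the path relabeled by $\sigma$) with the same leaf mapping, so there is nothing further to check. (Control of the width is deferred to the lemmas analyzing $x$-swaps and block shuffles individually.)
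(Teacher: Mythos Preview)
Your proposal is correct and is exactly the natural bubble-sort argument one would expect; the paper in fact states this lemma without proof, calling it ``straightforward.'' Your care in verifying both preconditions of an $x$-swap at every intermediate step (that the bottom node of the four-node window remains $x$-mixed because it is always an ancestor of the unchanged $v_{p+1}$, and that the two off-path children involved are of opposite type because you only ever transpose adjacent nodes coming from blocks of different kinds) is precisely what is needed, and the observation that a block permutation respecting the order within each type decomposes into adjacent opposite-type block transpositions is the right combinatorial core.
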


Together with \cref{lem:swap-preserves-mixed-skeletons}, this immediately implies the following:

\begin{lemma}
  Let $x, y \in V(T^b)$. Suppose $\Tc'$ is created from $\Tc$ by performing an~$x$-block shuffle along a~vertical path. Then the $y$-mixed skeletons of $\Tc$ and $\Tc'$ are equal.
\end{lemma}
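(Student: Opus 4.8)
The statement to prove is that an $x$-block shuffle along a vertical path of $\Tc$ preserves all $y$-mixed skeletons, for every $x, y \in V(T^b)$. The plan is to obtain this as an immediate corollary of the two preceding lemmas: first, that any $x$-block shuffle can be decomposed into a sequence of $x$-swaps, and second, that each individual $x$-swap preserves every $y$-mixed skeleton (\cref{lem:swap-preserves-mixed-skeletons}). So the only genuine content here is the observation that applying an $x$-swap, and then another $x$-swap to the resulting decomposition, still preserves all $y$-mixed skeletons; this follows by a trivial induction on the length of the swap sequence, using \cref{lem:swap-preserves-mixed-skeletons} at each step. Thus the proof is essentially one line of ``chaining'' once the decomposition-into-swaps lemma is in hand.

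Consequently, the main obstacle lies entirely in justifying the claim that an $x$-block shuffle decomposes into a sequence of $x$-swaps. First I would make precise what it means to realize the permutation recipe $\sigma$ by elementary moves: an $x$-swap along $v_0 v_1 v_2 v_3$ (in the notation of the swap definition) transposes two adjacent nodes $v_1, v_2$ on a vertical path, under the precondition that exactly one of their off-path children is $x$-empty and the other is $x$-full, and that the node below ($v_3$) is $x$-mixed. The key point is that along an $x$-shuffleable path $v_0 v_1 \dots v_p v_{p+1}$, the bottom endpoint $v_{p+1}$ is $x$-mixed, and every off-path child $v'_i$ is either $x$-empty or $x$-full; moreover, since the block shuffle never reorders two nodes whose off-path children have the same type, and keeps nodes within a single block contiguous, every elementary transposition used to realize $\sigma$ can be taken to swap two consecutive nodes whose off-path children are of opposite types — exactly the $x$-swap precondition on the children. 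The precondition that the node directly below the swapped pair is $x$-mixed is maintained throughout: below any prefix of the reordered path lies either another off-path-child subtree (never swapped to the bottom because boundary blocks or block-contiguity control this) hanging above the still-$x$-mixed tail ending at $v_{p+1}$; more carefully, the node directly below the lower of the two swapped nodes, after previous swaps, is a node whose subtree still contains the $x$-mixed node $v_{p+1}$ together with some off-path-child subtrees, hence is itself $x$-mixed. I would spell this out by choosing a concrete factorization of $\sigma$ into adjacent transpositions — e.g. realizing $\sigma$ by repeatedly moving the block that should come next into position via adjacent block-swaps, each block-swap being itself a bubble-sort-style sequence of adjacent node transpositions between an $x$-empty block and an $x$-full block — and verifying that at every intermediate step the three $x$-swap preconditions hold.

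Finally, for the preserved-skeleton conclusion itself, once the block shuffle is written as $\Tc = \Tc^{(0)}, \Tc^{(1)}, \dots, \Tc^{(m)} = \Tc'$ with each $\Tc^{(i)}$ obtained from $\Tc^{(i-1)}$ by an $x$-swap, I would invoke \cref{lem:swap-preserves-mixed-skeletons} to conclude that for every $y \in V(T^b)$ the $y$-mixed skeleton of $\Tc^{(i-1)}$ equals that of $\Tc^{(i)}$; composing these equalities over $i = 1, \dots, m$ yields that the $y$-mixed skeleton of $\Tc$ equals that of $\Tc'$. I expect no difficulty in this last step — it is purely formal. The care needed is solely in the decomposition lemma, and within it, the single subtle point is maintaining the ``node below is $x$-mixed'' precondition for each elementary $x$-swap; everything else (adjacent children of opposite type, contiguity of blocks) is forced directly by the definition of an $x$-block shuffle.
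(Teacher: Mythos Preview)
Your proposal is correct and follows exactly the paper's approach: the paper states this lemma as an immediate consequence of the preceding lemma (that an $x$-block shuffle is a composition of $x$-swaps) together with \cref{lem:swap-preserves-mixed-skeletons}, without even writing out a formal proof. Your extra care about verifying the ``node below is $x$-mixed'' precondition during the bubble-sort decomposition is more detail than the paper provides (it calls the decomposition ``straightforward''), and your argument for it is sound.
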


However, the structure introduced to $x$-block shuffles atop the $x$-swaps now allows us to reason about the perseverance of $y$-factors in the modified rank decomposition:

\begin{lemma}
  \label{lem:factor-swap-perseverance}
  Let $x, y \in V(T^b)$ with $y \ngtr x$.
  Suppose $\Tc'$ is created from $\Tc$ by performing a~boundary-preserving $x$-block shuffle along $v_0 \dots v_{p+1}$.
  Then every $y$-factor of $\Tc$ is also a~$y$-factor of $\Tc'$.
\end{lemma}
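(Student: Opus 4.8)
The plan is to analyze what a $y$-factor $F \subseteq \Vc_y$ of $\Tc$ looks like in terms of the vertical path $P = v_0 v_1 \dots v_{p+1}$ along which we perform the $x$-block shuffle, and to show that the shuffle preserves the ``shape'' of $F$. The key observation is that, since $y \ngtr x$, \cref{obs:mixedness-relations} tells us a lot: if a node $v'_i$ is $x$-full then $\vec{v'_i v_i}$ is $x$-full hence (as $y \ngtr x$) $y$-empty; and if $v'_i$ is $x$-empty then $\vec{v'_i v_i}$ is $x$-empty, hence $y$-empty or $y$-mixed. So every $x$-full block consists of nodes $v'_i$ all of which are $y$-empty, and in particular contribute nothing to any $y$-factor. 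Only the $x$-empty blocks, and the endpoints $v_0, v_{p+1}$ of the path, can carry subspaces of $\Vc_y$.

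First I would make a case distinction on where $F$ lives relative to the subtree rooted at $v_{p+1}$ and relative to $P$. If $F$ is a tree factor $\lparts(\Tc)[w]$ with $w$ not on $P$ and not a strict ancestor of $v_0$, then the whole subtree below $w$ is untouched by the shuffle (the shuffle only rearranges the internal nodes of $P$ and leaves all hanging subtrees $T[v'_i]$ and $T[v_{p+1}]$ intact), so $F$ survives verbatim. Likewise for context factors $F_1 \setminus F_2$ both of whose defining nodes avoid $P$ in this way. The interesting case is when $F$ is cut out using nodes on $P$. Here the shuffle is boundary-preserving, so $v_1$ and $v_p$ keep their positions, $v_{p+1}$ stays as the bottom, and $v_0$ stays as the top; and for two indices $i < j$ in the same kind of block (both $x$-empty, say) the shuffle preserves their relative order. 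Since an $x$-empty block is never split and never reordered internally, and all $x$-empty blocks keep their relative order among themselves, the union of the subtrees $\bigcup_{i \in B} \lparts(\Tc)[v'_i]$ over any $x$-empty block $B$ is exactly a context factor of $\Tc'$ of the form $\lparts(\Tc')[a] \setminus \lparts(\Tc')[b]$ cut along $P$ in $\Tc'$; more generally any union of consecutive $x$-empty blocks (in the block order) is such a context factor both in $\Tc$ and in $\Tc'$.

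Concretely, I would argue: a $y$-factor $F$ that is cut along $P$ is, modulo the untouched boundary pieces at $v_0$ and $v_{p+1}$, a union of $\lparts(\Tc)[v'_i]$ over a set $S$ of indices; since the $x$-full $v'_i$ are $y$-empty, we may assume $S$ only contains $x$-empty indices, and since $F$ is a factor (tree or context) of $\Tc$, $S$ must be an interval of the path, hence a union of consecutive $x$-empty blocks together with possibly a prefix/suffix of an $x$-empty block. Because block shuffles keep $x$-empty blocks intact and preserve the order among same-kind blocks, this same collection $S$ is again realized as a contiguous segment of $\Tc'$'s path, so $F = (\text{boundary part}) \cup \bigcup_{i \in S}\lparts(\Tc')[v'_i]$ is a factor of $\Tc'$ of the same type. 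One has to be slightly careful about the two borderline subcases: $F$ being a tree factor $\lparts(\Tc)[v_i]$ (which forces $S$ to be the suffix $\{i, \dots, p\}$ together with the bottom piece) and $F$ being a context factor whose ``top'' or ``bottom'' node is an interior node of a block — but these are handled by the same interval-preservation argument, noting that the first and last blocks are fixed pointwise.

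The main obstacle I expect is the bookkeeping in the borderline subcases: making precise which nodes of $\Tc'$ realize $F$ when $F$'s defining nodes are interior nodes of $P$ rather than the hanging subtrees, and verifying that the boundary-preservation hypothesis ($\sigma(1)=1$, $\sigma(p)=p$) is exactly what is needed so that a factor touching the ends of $P$ is not destroyed. The order-preservation clause of the block shuffle definition ($i<j$ same kind $\Rightarrow \sigma^{-1}(i)<\sigma^{-1}(j)$) is the crucial ingredient ensuring that a union of same-kind blocks, which is what a $y$-factor restricted to $P$ must be after discarding the $y$-empty $x$-full blocks, remains an interval after the shuffle. Everything else is a routine translation between ``factor of a rooted rank decomposition'' and ``contiguous segment of a vertical path with some hanging subtrees''.
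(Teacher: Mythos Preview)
Your proposal has the right shape but contains a genuine error that breaks the case $y \leq x$. You claim that, since $y \ngtr x$, \cref{obs:mixedness-relations} gives ``$v'_i$ is $x$-full $\Rightarrow$ $v'_i$ is $y$-empty''. That observation only gives this implication when $x$ and $y$ are incomparable; when $y \leq x$, an $x$-full node can perfectly well be $y$-full (take $y = x$). So your assertion that only the $x$-empty blocks can carry subspaces of $\Vc_y$ is false in half the cases. The correct replacement, which the paper uses, is the dichotomy $F \subseteq \Vc_x$ or $F \cap \Vc_x = \emptyset$ (immediate from $F \subseteq \Vc_y$ and $y \ngtr x$). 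This is strictly stronger than what you try to extract from \cref{obs:mixedness-relations}.

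This error cascades into a second gap. You write that the $y$-factor's trace on the path is ``a union of consecutive $x$-empty blocks together with possibly a prefix/suffix of an $x$-empty block'', and that the order-preservation clause among same-kind blocks is what saves you. In fact, once you have the dichotomy above, a $y$-context-factor $\lparts(\Tc)[v_{i_1}] \setminus \lparts(\Tc)[v_{i_2}]$ along the path contains every $\lparts(\Tc)[v'_i]$ for $i \in [i_1, i_2-1]$, and each of these must be entirely inside $\Vc_x$ (or entirely outside, depending on the case); hence the interval $[i_1, i_2-1]$ lies inside a \emph{single} $x$-block. Your ``union of several same-kind blocks'' picture does not arise, and it would not be preserved by a block shuffle anyway: if the interval genuinely crossed a block boundary, $\sigma^{-1}$ of it need not be contiguous even for a boundary-preserving shuffle. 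The ingredient that actually matters is the first clause of the block-shuffle definition (each block is kept as a contiguous segment with the same internal order), together with boundary-preservation to handle the cases where one of $w_1, w_2$ sits in the first or last block, or the factor touches $\lparts(\Tc)[v_{\ell_2}]$.
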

\begin{proof}
  Assume that $\Tc' \neq \Tc$, i.e., the performed block shuffle was non-trivial.
  Then $v_0v_1 \dots v_{p+1}$ comprises at least four $x$-blocks; let $[\ell_1, r_1], [\ell_2, r_2], \dots, [\ell_t, r_t]$ be the partitioning of $[1, p]$ into $x$-blocks, with $1 = \ell_1 \leq r_1 < \ell_2 \leq r_2 < \dots < \ell_t \leq r_t = p$ and $\ell_{i+1} = r_i + 1$ for all $i \in [p-1]$.
  Let $\sigma$ be the recipe of the block shuffle.
  Since the block shuffle is boundary-preserving, we have $\sigma(i) = i$ for $i \leq r_1$ and $i \geq \ell_t$.
  Note that by the construction, $\lparts(\Tc)[v] = \lparts(\Tc')[v]$ for every $v \in V(\Tc) \setminus \{v_{\ell_2}, v_{\ell_2 + 1}, \dots, v_{r_{t-1}}\}$.
  Moreover, $\lparts(\Tc)[v_{\ell_2}] = \lparts(\Tc')[v_{\sigma(\ell_2)}]$.
  
  Let $F \subseteq \Vc_y$ be a~$y$-factor of $\Tc$.
  The following claim captures the essential property of $y$-factors for $y \ngtr x$ that will be used in the current proof.
  \begin{claim}
    \label{cl:factor-subset-or-disjoint}
    $F \subseteq \Vc_x$ or $F$ is disjoint from $\Vc_x$.
  \end{claim}
  \begin{claimproof}
    If $y \leq x$, then $\Vc_y \subseteq \Vc_x$ and thus $F \subseteq \Vc_x$.
    On the other hand, if $y$ is incomparable with $x$ with respect to the ancestor-descendant relationship in $T^b$, then $\Vc_y$ is disjoint from $\Vc_x$, so also $F$ is disjoint from $\Vc_x$.
  \end{claimproof}
  
  First suppose that $F$ is a~$y$-tree factor, i.e., $F = \lparts(\Tc)[w]$ for some $w \in V(T)$.
  Note that if $w$ is an~ancestor of $v_{r_{t-1}}$, then $w$ is also an ancestor of both $v'_{r_{t-1}}$ and $v'_{\ell_t}$.
  But exactly one of the vertices $v'_{r_{t-1}}, v'_{\ell_t}$ is $x$-empty and the other is $x$-full.
  In other words, we have $\lparts(\Tc)[v'_{r_{t-1}}] \cup \lparts(\Tc)[v'_{\ell_t}] \subseteq F$, but exactly one of the sets $\lparts(\Tc)[v'_{r_{t-1}}]$, $\lparts(\Tc)[v'_{\ell_t}]$ is a~subset of $\Vc_x$ and the other is disjoint from $\Vc_x$.
  This, however, contradicts \cref{cl:factor-subset-or-disjoint}.
  Hence, $w$ is not an~ancestor of $v_{r_{t-1}}$.
  But then $w \notin \{v_{\ell_2}, v_{\ell_2 + 1}, \dots, v_{r_{t-1}}\}$, so $\lparts(\Tc)[w] = \lparts(\Tc')[w]$ and thus $F$ is also a~$y$-factor of $\Tc'$.
  
  Now consider the case where $F$ is a~$y$-context factor in $\Tc$, that is, $F = \lparts(\Tc)[w_1] \setminus \lparts(\Tc)[w_2]$ and $w_1$ is a~strict ancestor of $w_2$ in $T$.
  \begin{claim}
    \label{cl:subtract-close-from-add}
    It cannot happen that, for some $i \in [t-1]$, $w_1$ is an~ancestor of $v_{r_i}$ and $w_2$ is not an~ancestor of $v_{\ell_{i+1}} = v_{r_i + 1}$.
  \end{claim}
  \begin{claimproof}
    Proof by contradiction.
	First suppose that $w_2$ is not in the ancestor-descendant relationship with $v_{r_i + 2}$ in $T$.
	Since $r_i < \ell_{i+1} \le p$, we get that $\lparts(\Tc)[v_{p+1}]$ is disjoint from $\lparts(\Tc)[w_2]$ and thus $\lparts(\Tc)[v_{p+1}] \subseteq F$.
	But $v_{p+1}$ is $x$-mixed in $\Tc$, so $\lparts(\Tc)[v_{p+1}]$ is neither a subset of $\Vc_x$ nor disjoint from $\Vc_x$.
	Hence contradiction with \cref{cl:factor-subset-or-disjoint}.
	
	Since $w_2$ is not an~ancestor of $v_{r_i + 1}$, it means that $w_2$ is a~descendant of $v_{r_i + 2}$ and so $\lparts(\Tc)[v'_{r_i}] \cup \lparts(\Tc)[v'_{r_i + 1}] \subseteq F$.
%
%
    	However exactly one of $\lparts(\Tc)[v'_{r_i}]$ and $\lparts(\Tc)[v'_{r_i+1}] = \lparts(\Tc)[v'_{\ell_{i+1}}]$ is $x$-empty in $T$ and the other is $x$-full in $T$.
    	So again $\lparts(\Tc)[v'_{r_i}] \cup \lparts(\Tc)[v'_{r_i + 1}]$ is neither a~subset of $\Vc_x$ nor disjoint from $\Vc_x$ -- a~contradiction.
  \end{claimproof}
  
  If $w_1, w_2 \notin \{v_{\ell_2}, v_{\ell_2 + 1}, \dots, v_{r_t - 1}\}$, then $\lparts(\Tc)[w_1] = \lparts(\Tc')[w_1]$ and $\lparts(\Tc)[w_2] = \lparts(\Tc')[w_2]$, so $F$ is also a~$y$-context factor in $\Tc'$.
  Now suppose that at least one of $w_1, w_2$ is in $\{v_{\ell_2}, v_{\ell_2 + 1}, \dots, v_{r_t - 1}\}$.
  Since $w_1$ is a~(strict) ancestor of $w_2$, we must have that $w_1$ is also an~ancestor of $v_{r_{t-1}}$ and $w_2$ is a~descendant of $v_{\ell_2}$.
  Let then $j \in [t-1]$ be the smallest positive integer such that $w_1$ is an~ancestor of $v_{r_j}$.
  So by \cref{cl:subtract-close-from-add}, $w_2$ is an~ancestor of $v_{\ell_{j+1}}$.
  If $j = 1$, then $w_2 = v_{\ell_2}$ and $w_1 \notin \{v_{\ell_2}, v_{\ell_2 + 1}, \dots, v_{r_t - 1}\}$.
  Hence $F = \lparts(\Tc)[w_1] \setminus \lparts(\Tc)[v_{\ell_2}] = \lparts(\Tc')[w_1] \setminus \lparts(\Tc')[v_{\sigma(\ell_2)}]$ and $F$ is a~$y$-context factor in $\Tc'$.
  On the other hand, assume $j \geq 2$.
  In this case, $w_2$ is an~ancestor of $v_{\ell_{j+1}}$ and $w_1$ is an~ancestor of $w_2$, but a~descendant of $v_{\ell_j}$ (by the definition of $j$). Let $i_1, i_2$ (with $\ell_j \leq i_1 < i_2 \leq \ell_{j+1}$) be such that $w_1 = v_{i_1}$ and $w_2 = v_{i_2}$.
  Then, $F = \bigcup_{i = i_1}^{i_2 - 1} \lparts(\Tc)[v'_i]$.
  Since $[i_1, i_2 - 1]$ is a~part of an~$x$-block of the path $v_0 v_1 \dots v_{p+1}$, there exists some $q \in \N$ such that $\sigma(q + i) = i_1 + i$ for all $i \in [0, i_2 - i_1 - 1]$.
  We conclude that $F = \bigcup_{i=0}^{i_2 - i_1 - 1} \lparts(\Tc')[v'_{\sigma(q + i)}] = \lparts(\Tc')[v_{\sigma(q)}] \setminus \lparts(\Tc')[v_{\sigma(q + i_2 - i_1)}]$.
  Hence also in this case, $F$ is a~$y$-context factor of $T'$.
  As all cases have been exhausted, this finishes the proof.
\end{proof}

Observe that an~$x$-block shuffle will never increase the number of $x$-blocks along the shuffled path; on the other hand, the number of such $x$-blocks might decrease significantly if many $x$-blocks of the same kind are placed one after another.
We will now prove that it is indeed possible to perform such a~shuffle so as to decrease the number of $x$-blocks to a~constant (depending only on the width of $\Tc^b$) \emph{without} increasing the width of $\Tc$:

\begin{lemma}
  \label{lem:block-shuffle-few-blocks}
  There exists a~function $f_{\ref{lem:block-shuffle-few-blocks}}\,\colon\,\N\to\N$ such that the following holds.
  Assume that the width of $\Tc$ and $\Tc^b$ is bounded by $\ell \geq 0$ and let $x \in V(T^b)$.
  Suppose $v_0v_1\dots v_{p+1}$ is an~$x$-shuffleable path in $\Tc$.
  Then there exists a~boundary-preserving $x$-block shuffle of the path using a~permutation $\sigma$ such that:
  \begin{itemize}
    \item the decomposition $\Tc'$ after the shuffle has width not greater than the width of $\Tc$; and
    \item in $\Tc'$, the vertical path $v_{\sigma(1)}\dots v_{\sigma(p)}$ contains at most $f_{\ref{lem:block-shuffle-few-blocks}}(\ell)$ $x$-blocks.
  \end{itemize}
\end{lemma}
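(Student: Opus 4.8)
\textbf{Proof plan for \Cref{lem:block-shuffle-few-blocks}.}
The plan is to start from an~$x$-shuffleable path $v_0 v_1 \dots v_{p+1}$ with its decomposition into $x$-blocks $[\ell_1,r_1],\dots,[\ell_t,r_t]$, and to identify a~small number of ``representative'' $x$-empty blocks and $x$-full blocks that will form the target path, merging all other blocks of each kind into them. The crucial invariant to track will be a~constant number of linear-algebraic quantities associated to the ``ports'' of each block. Concretely, for each $i \in [1,p]$ consider the directed edge $\vec{v_i v_{i-1}}$ in $\Tc$; its boundary space $B_i := B_{v_i v_{i-1}} = \sumof{\lparts(\Tc)[\vec{v_i v_{i-1}}]} \cap \sumof{\lparts(\Tc)[\vec{v_{i-1} v_i}]}$ has dimension at most $\ell$. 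Since every $v'_i$ is $x$-full or $x$-empty, the space $\sumof{\lparts(\Tc)[\vec{v_i v_{i-1}}]}$ splits (not necessarily as a~direct sum) into an~``$x$-part'' contributed by full children below the current position and a~``complement part''. The key observation I would formalize is that the behaviour of a~block $[\ell_j, r_j]$, as seen from outside, is captured by a~bounded-size object: the triple consisting of the boundary space $B_{\ell_j}$ at the top of the block, the boundary space $B_{r_j+1}$ at the bottom, and the linear map describing how vectors entering from above and from the side (the $x$-homogeneous side, since within a~block all $v'_i$ have the same type) propagate downward. Because $\F = \GF(2)$ and all these spaces have dimension $\le \ell$, there are at most $f(\ell)$ isomorphism types of such ``block transcripts''.

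The main steps, in order, are as follows. First I would set up the ``transcript'' of a~contiguous run of blocks and prove a~\emph{composition lemma}: given two consecutive runs of blocks with transcripts $\mathsf{tr}_1, \mathsf{tr}_2$, the transcript of their concatenation is a~function of $(\mathsf{tr}_1, \mathsf{tr}_2)$ alone, and moreover the maximum edge width occurring inside the concatenation is determined by $(\mathsf{tr}_1,\mathsf{tr}_2)$ together with the internal widths of the two runs. This is where I would invoke \Cref{lem:linear-dimension-switching} and the standard dimension formula $\dim(V_1)+\dim(V_2) = \dim(V_1+V_2)+\dim(V_1\cap V_2)$ to argue that reordering two runs of blocks of \emph{opposite} type cannot raise the widths --- this is essentially a~rank-submodularity / typical-sequences argument, the rank-analogue of the reordering in \cite{DBLP:journals/lmcs/BojanczykP22}. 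Second, I would argue that along the path the sequence of block transcripts, read as a~``typical sequence'' over the finite alphabet of transcript types, can be compressed: whenever two $x$-empty blocks have the same ``state'' of the propagated boundary space before and after them (which can happen only $\le f(\ell)$ times before a~repeat is forced by pigeonhole, as there are $\le f(\ell)$ states), we may excise the intervening segment of same-type blocks and reattach them immediately after a~chosen representative $x$-empty block, and symmetrically for $x$-full blocks; by the composition lemma this is a~legal $x$-block shuffle, it is boundary-preserving provided we never touch the first block $[\ell_1,r_1]$ or the last block $[\ell_t,r_t]$ (so we only compress the \emph{interior}), and it does not increase any edge width. Third, iterating until no such repeat exists bounds the number of surviving interior $x$-empty blocks and interior $x$-full blocks each by $f(\ell)$, hence the total number of $x$-blocks on the new path by $2f(\ell)+2 =: f_{\ref{lem:block-shuffle-few-blocks}}(\ell)$; the permutation $\sigma$ realizing the final arrangement is obtained by composing the elementary excise-and-reinsert permutations, each of which keeps whole blocks together and never reorders two same-type blocks, so the composition is a~valid recipe.

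The hard part will be the composition lemma, specifically verifying that an~$x$-block shuffle of two opposite-type runs never increases an~internal edge width. The subtlety is that the boundary space seen at an~internal edge of, say, an~$x$-full run depends on \emph{which} $x$-empty runs lie above it, because those contribute to $\sumof{\lparts(\Tc)[\vec{v_{i-1}v_i}]}$ on the ``outside'' of the cut; moving a~full run past an~empty run changes this ambient space. I expect this to require careful bookkeeping of two spaces per position --- the span of everything strictly above the current edge on the path, and the span of the $x$-homogeneous side-subtrees within the current block --- and an~application of \Cref{lem:linear-dimension-switching} with a~suitable choice of the four subspaces $U_1,U_2,V_1,V_2$ (roughly: $U$'s recording the contribution of the block being moved, $V$'s recording the ambient context), to certify that the relevant intersection dimension only decreases or stays the same. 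I would also need to confirm that the \emph{boundary} edges of the whole path (the edges incident to $v_0$ and to $v_{p+1}$) are untouched, which holds because the shuffle is boundary-preserving and because $\lparts(\Tc)[v]=\lparts(\Tc')[v]$ for all $v$ outside the shuffled interior. Everything else --- the pigeonhole bound on the number of transcript states, the fact that excision is expressible as a~composition of $x$-swaps, and the resulting bound $f_{\ref{lem:block-shuffle-few-blocks}}(\ell) \le 2f(\ell)+2$ --- is then routine.
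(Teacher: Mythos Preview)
Your high-level intuition is right --- bounded linear-algebraic data, a pigeonhole, and a typical-sequences-style reordering are indeed the ingredients --- but the concrete mechanism you propose has a genuine gap.

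The problematic step is your composition lemma, specifically the claim that ``reordering two runs of blocks of opposite type cannot raise the widths'' and that this follows from \Cref{lem:linear-dimension-switching}. This is false even in the simplest separable setting. Already in the bichromatic-word model (where edge widths along the path become prefix sums of a $\{+,-,0\}$ sequence), swapping two adjacent opposite-colour blocks can strictly increase the prefix maximum: take $w = (+)(+)(--)$ with colours $R,R,B$; moving the $B$ block to the front gives prefix maximum $0$, but moving it to the back gives prefix maximum $2$ versus the original $1$. The word dealternation lemma of \cite{DBLP:journals/lmcs/BojanczykP22} is precisely the non-trivial statement that \emph{some} shuffle with few blocks exists without increasing the prefix maximum, not that elementary swaps are safe. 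Consequently, your excise-and-reinsert operation --- which amounts to a specific block shuffle --- has no reason to preserve width, even when the ``states'' at the two endpoints match.

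A related issue is that your block ``transcript'' (top boundary space, bottom boundary space, propagation map) is not invariant under block shuffles: after you move a block, the ambient spaces above and below it change, so its transcript changes too. You cannot pigeonhole on a quantity that the operation itself alters.

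The paper's route avoids both problems. It first observes that the four dimensions
\[
\alpha_i = \dim(\sumof{\lparts_x}\!\downarrow \cap B_x),\quad \beta_i = \dim(\sumof{\lparts_{\bar x}}\!\downarrow \cap B_x),\quad \gamma_i = \dim(\sumof{\lparts_x}\!\uparrow \cap B_x),\quad \delta_i = \dim(\sumof{\lparts_{\bar x}}\!\uparrow \cap B_x)
\]
are monotone along the path, so by pigeonhole the path splits into $O(\ell)$ \emph{$x$-static} subpaths on which all four are constant. On an $x$-static subpath, \Cref{lem:linear-dimension-switching} (applied once, globally) shows the path is \emph{$x$-separable}: after any boundary-preserving block shuffle $\sigma$, the width at position $i$ equals $c_{\sigma(0)} + c_{\sigma(1)} + \dots + c_{\sigma(i)}$ for fixed integers $c_j$. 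This cleanly reduces the problem to the bichromatic-word dealternation lemma of \cite{DBLP:journals/lmcs/BojanczykP22}, which then supplies the good shuffle on each static piece. So the dimension-switching lemma is used to establish \emph{separability}, not to justify individual swaps; and the reordering itself is outsourced to the word lemma rather than done by hand.
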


In the remaining part of this section we will cover the proof of \cref{lem:block-shuffle-few-blocks}.
We will call an~$x$-shuffleable vertical path $v_0v_1\dots v_{p+1}$:
\begin{itemize}
  \item \emph{$x$-static} if all of the following subspace equalities hold:
    \[
    \begin{split}
    \sumof{\lparts_x(T)[\vec{v_1v_0}]} \cap B_x &= \sumof{\lparts_x(T)[\vec{v_{p+1}v_p}]} \cap B_x, \\
    \sumof{\lparts_{\bar{x}}(T)[\vec{v_1v_0}]} \cap B_x &= \sumof{\lparts_{\bar{x}}(T)[\vec{v_{p+1}v_p}]} \cap B_x, \\
    \sumof{\lparts_x(T)[\vec{v_0v_1}]} \cap B_x &= \sumof{\lparts_x(T)[\vec{v_p v_{p+1}}]} \cap B_x, \\
    \sumof{\lparts_{\bar{x}}(T)[\vec{v_0v_1}]} \cap B_x &= \sumof{\lparts_{\bar{x}}(T)[\vec{v_p v_{p+1}}]} \cap B_x;
    \end{split}
  \]
  \item \emph{$x$-separable} if there exist integers $c_0, c_1, \dots, c_p \in \Z$ such that the following holds.
Suppose $\Tc'$ is formed from $\Tc$ by performing a~boundary-preserving $x$-block shuffle along $v_0v_1\dots v_{p+1}$ using $\sigma$.
Then, for every $i \in [0, p]$, the width of the edge $v_{\sigma(i)} v_{\sigma(i + 1)}$ in $\Tc'$ is equal to
$c_{\sigma(0)} + c_{\sigma(1)} + \ldots + c_{\sigma(i)}$.
\end{itemize}

The following lemma relates these notions:

\begin{lemma}
  \label{lem:static-implies-separable}
  Every $x$-static path is $x$-separable.
%
\end{lemma}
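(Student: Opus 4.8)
The plan is to prove \cref{lem:static-implies-separable} by exhibiting the integers $c_0, c_1, \dots, c_p$ explicitly and then showing, via a linear-algebraic computation, that the width of every edge along the shuffled path is the prescribed partial sum. The key observation is that for a boundary-preserving $x$-block shuffle along an $x$-shuffleable path, when we look at an intermediate edge $v_{\sigma(i)} v_{\sigma(i+1)}$ of $\Tc'$, the family $\lparts(\Tc')[\vec{v_{\sigma(i)} v_{\sigma(i+1)}}]$ (the part ``below'' this edge in $\Tc'$) decomposes as the disjoint union of $\lparts(\Tc)[\vec{v_0 v_1}]$ together with $\bigcup_{j \in S_i} \lparts(\Tc)[v'_j]$, where $S_i = \{\sigma(1), \dots, \sigma(i)\}$ is some downward-closed-within-blocks subset of $[1,p]$; and by the defining properties of a block shuffle, $S_i$ is always a union of a prefix of the $x$-empty vertices with a prefix of the $x$-full vertices along the path. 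Because the path is $x$-static, the boundary space $B_x = \sumof{\Vc_x} \cap \sumof{\Vc \setminus \Vc_x}$ plays the role of a fixed ``interface'' whose relevant projections do not change as we slide up or down the path; this is exactly what will let the width of the intermediate edge be computed additively.

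Concretely, first I would set up the right notation: write $U = \sumof{\lparts(\Tc)[\vec{v_0 v_1}]}$ for the span of the fixed ``bottom'' part, and for each $i \in [p]$ write $W_i = \sumof{\lparts(\Tc)[v'_i]}$ for the span of the side subtree at $v_i$. The width of an edge $e$ with lower side $L$ is $\dim(\sumof{L} \cap \sumof{V(G) \setminus L})$, i.e. $\dim(\sumof{L} \cap \sumof{\text{complement}})$. Along the shuffled path, the lower side of the edge after position $i$ is $U + \sum_{j \in S_i} W_j$ and the upper side is $\sumof{\lparts(\Tc)[\vec{v_{p+1} v_p}]} + \sum_{j \notin S_i} W_j$. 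Next I would separate the contribution of the side subtrees according to whether they are $x$-empty or $x$-full, using \cref{obs:mixedness-relations} and the definition of an $x$-shuffleable path: every $W_j$ is either contained in $\sumof{\Vc_x}$ ($x$-full) or in $\sumof{\Vc \setminus \Vc_x}$ ($x$-empty). Then I would invoke \cref{lem:linear-dimension-switching} (the dimension-switching identity of Jeong--Kim--Oum) to ``factor'' the dimension of the relevant intersection into a part living inside $\sumof{\Vc_x}$, a part inside $\sumof{\Vc \setminus \Vc_x}$, and a cross term controlled by $B_x$. The $x$-static hypothesis is precisely designed so that the $B_x$-mediated cross term and the two one-sided contributions each depend only on \emph{which} prefix of the $x$-empty block and which prefix of the $x$-full block are below the edge — and, crucially, not on the order in which the blocks are interleaved. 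From that, one reads off $c_j$: roughly, $c_j$ is the increment $\dim(\cdots \text{with } W_j \text{ included}) - \dim(\cdots \text{without})$, which by the static property is well-defined independently of the surrounding configuration, and $c_0$ is the width of the bottom edge $v_0 v_1$.

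Having defined the $c_j$ this way, the remaining step is bookkeeping: prove by induction on $i$ that the width of the edge after position $\sigma(i)$ equals $c_{\sigma(0)} + c_{\sigma(1)} + \dots + c_{\sigma(i)}$. The base case $i = 0$ is $c_0$ = width of $v_0 v_1$, which holds since the block shuffle is boundary-preserving (the first block, hence the bottom edge, is untouched). For the inductive step, adding the subtree $W_{\sigma(i+1)}$ to the lower side of the edge changes the intersection dimension by exactly $c_{\sigma(i+1)}$, again because of the static property together with \cref{lem:linear-dimension-switching}, which guarantees that the increment does not see the internal structure of what is already below or above. I would also need to check the boundary condition at the top: at $i = p$ the partial sum must equal the width of $v_p v_{p+1}$, which is again a consequence of boundary-preservation and the static equalities.

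The main obstacle I expect is the careful application of \cref{lem:linear-dimension-switching} to isolate the $B_x$ cross term and to verify that the resulting increments $c_j$ are genuinely order-independent: one must argue that for a side subtree $W_j$ that is, say, $x$-full, the quantity $\dim((U + X + W_j) \cap Y) - \dim((U + X) \cap Y)$ — where $X$ is the span of the other side subtrees already placed below and $Y$ is the span of what remains above — is independent of $X$ and $Y$, using only that $X$ and $Y$ have fixed intersections with $B_x$ (by $x$-staticity) and that $W_j \subseteq \sumof{\Vc_x}$. This is the crux; it is where the four subspace equalities in the definition of ``$x$-static'' get used in full, and it is essentially the rankwidth analogue of the ``typical sequences'' cancellation in \cite{DBLP:journals/jal/BodlaenderK96,DBLP:journals/lmcs/BojanczykP22}. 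I expect the proof to be a moderately long but essentially mechanical chain of applications of the two dimension identities once the right decomposition of the intersection is written down; the conceptual content is entirely in choosing that decomposition.
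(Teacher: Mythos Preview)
Your proposal is correct and takes essentially the same route as the paper: split the two sides of each intermediate edge into their $\Vc_x$-part and $(\Vc\setminus\Vc_x)$-part, apply \cref{lem:linear-dimension-switching}, and use the $x$-static equalities to kill the cross terms through $B_x$. The one simplification the paper makes over your plan is that it avoids induction entirely: a \emph{single} application of \cref{lem:linear-dimension-switching} with $U_1=X_{\le i^+},\,U_2=Y_{\le i^-},\,V_1=X_{>i^+},\,V_2=Y_{>i^-}$ yields directly the closed form $d_i=\alpha+\beta_{i^+}+\gamma_{i^-}$, where $\alpha$ is a constant, $\beta_{i^+}=\dim(X_{\le i^+}\cap X_{>i^+})$ depends only on how many $x$-full side subtrees are below, and $\gamma_{i^-}$ symmetrically; the integers $c_j$ are then just the successive differences $\beta_j-\beta_{j-1}$ and $\gamma_j-\gamma_{j-1}$. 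This sidesteps your ``main obstacle'' of verifying order-independence of increments, since the closed form already manifestly depends only on the pair $(i^+,i^-)$.
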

\begin{proof}
  Let $v_0v_1\dots v_{p+1}$ be an~$x$-static path, and for $i \in [p]$, let $v'_i$ be the unique child of $v_i$ outside of the path.
  Let us partition the sequence of nodes $v'_1, v'_2, \dots, v'_p$ into those that are $x$-full and those that are $x$-empty.
  Formally, let $q$ be the number of $x$-full nodes among $\{v'_1, \dots, v'_p\}$ and let $1 \leq a^+_1 < a^+_2 < \dots < a^+_q \leq p$ denote the sequence of indices of $x$-full nodes $v'_{a^+_1}, \dots, v'_{a^+_q}$.
  Similarly define $r$ as the number of $x$-empty nodes among $\{v'_1, \dots, v'_r\}$ and let $1 \leq a^-_1 < a^-_2 < \dots < a^-_r \leq p$ denote the complementary sequence of indices of $x$-empty nodes $v'_{a^-_1}, \dots, v'_{a^-_r}$.
  
  Recall that $x$-block shuffles do not exchange the order of $x$-full nodes or the order of $x$-empty nodes; that is, in every decomposition formed by an~$x$-block shuffle, the order of the nodes $v_{a^+_1}, \dots, v_{a^+_q}$ along the shuffled path is preserved, and so is the order of the nodes $v_{a^-_1}, \dots, v_{a^-_r}$.
  Therefore, if we assume that a~rank decomposition $\Tc'$ is formed by performing an~$x$-block shuffle using a~permutation $\sigma$ on $\Tc$, then for any $i \in [0, p]$, the sets $\lparts(\Tc')[\vec{v_{\sigma(i)}v_{\sigma(i+1)}}], \lparts(\Tc')[\vec{v_{\sigma(i+1)}v_{\sigma(i)}}]$ of vector spaces on either side of the edge of the edge $v_{\sigma(i)} v_{\sigma(i+1)}$ only depend on:
  \begin{itemize}
    \item the number $i^+ \in [0, q]$ of $x$-full nodes in the prefix $v'_{\sigma(1)}, \dots, v'_{\sigma(i)}$; and
    \item the number $i^- = i - i^+ \in [0, r]$ of $x$-empty nodes in the prefix $v'_{\sigma(1)}, \dots, v'_{\sigma(i)}$.
  \end{itemize}
  Note that $\{v'_{\sigma(1)}, v'_{\sigma(2)}, \dots, v'_{\sigma(i)}\} = \{v'_{a^+_1}, \dots, v_{a^+_{i^+}}, v'_{a^-_1}, \dots, v_{a^-_{i^-}}\}$.
  Next, define the following vector spaces:
  \begin{alignat*}{3}
    X_L &= \sumof{\lparts_x(\Tc)[\vec{v_0 v_1}]}, \qquad && Y_L &&= \sumof{\lparts_{\bar{x}}(\Tc)[\vec{v_0 v_1}]}, \\
    X_R &= \sumof{\lparts_x(\Tc)[\vec{v_{p+1} v_p}]}, \qquad && Y_R &&= \sumof{\lparts_{\bar{x}}(\Tc)[\vec{v_{p+1} v_p}]}, \\
    X_i &= \sumof{\lparts_x(\Tc)[\vec{v'_{a^+_i} v_{a^+_i}}]} = \sumof{\lparts(\Tc)[\vec{v'_{a^+_i} v_{a^+_i}}]}, \qquad && Y_j &&= \sumof{\lparts_{\bar{x}}(\Tc)[\vec{v'_{a^-_j} v_{a^-_j}}]} = \sumof{\lparts(\Tc)[\vec{v'_{a^-_j} v_{a^-_j}}]}, \\
    \hspace{-0.5em}\text{where }&\text{$i \in [q]$ and $j \in [r]$, and}\\[0.3em]
    X_{\leq i} &= X_L + X_1 + \ldots + X_i, \qquad && Y_{\leq j} &&= Y_L + Y_1 + \ldots + Y_j, \\
    X_{> i} &= X_{i+1} + \ldots + X_q + X_R, \qquad && Y_{> j} &&= Y_{j+1} + \ldots + Y_r + Y_R,
  \end{alignat*}
  where $i \in [0, q]$ and $j \in [0, r]$.
  Then
  \[
    \begin{split}
    \sumof{\lparts(T')[\vec{v_{\sigma(i)} v_{\sigma(i+1)}}]} &= X_{\leq i^+} + Y_{\leq i^-}, \\
    \sumof{\lparts(T')[\vec{v_{\sigma(i+1)}v_{\sigma(i)}}]} &= X_{> i^+} + Y_{> i^-}.
    \end{split}
  \]
  Moreover, the property of the path being $x$-static can be equivalently restated as follows:
  \begin{alignat*}{4}
    X_L \cap B_x &= X_{\leq q} \cap B_x, \qquad &&Y_L \cap B_x &&= Y_{\leq r} \cap B_x, \\
    X_R \cap B_x &= X_{> 0} \cap B_x, \qquad &&Y_R \cap B_x &&= Y_{> 0} \cap B_x.
  \end{alignat*}
  Note also that $B_x = \sumof{\Vc_x} \cap \sumof{\Vc \setminus \Vc_x} = (X_L + X_1 + \ldots + X_q + X_R) \cap (Y_L + Y_1 + \ldots + Y_r + Y_R)$.
  
  We are interested in the width of the edge $v_{\sigma(i)} v_{\sigma(i + 1)}$, that is, the dimension $d_i$ of the subspace $\lparts(\Tc')[\vec{v_{\sigma(i)} v_{\sigma(i+1)}}] \cap \lparts(\Tc')[\vec{v_{\sigma(i+1)}v_{\sigma(i)}}] = (X_{\leq i^+} + Y_{\leq i^-}) \cap (X_{> i^+} + Y_{> i^-})$.
  Applying \cref{lem:linear-dimension-switching} with $U_1 = X_{\leq i^+}$, $U_2 = Y_{\leq i^-}$, $V_1 = X_{> i^+}$, $V_2 = Y_{> i^-}$, we find that
  \begin{equation}
    \label{eq:shuffling-switching}
    \begin{split}
    \dim(&(X_{\leq i^+} + Y_{\leq i^-}) \cap (X_{> i^+} + Y_{> i^-})) + \dim(X_{\leq i^+} \cap Y_{\leq i^-}) + \dim(X_{>i^+} \cap Y_{>i^-}) = \\
    &= \dim((X_{\leq i^+} + X_{> i^+}) \cap (Y_{\leq i^-} + Y_{> i^-})) + \dim(X_{\leq i^+} \cap X_{> i^+}) + \dim(Y_{\leq i^-} \cap Y_{> i^-}).
    \end{split}
  \end{equation}
  Since $X_{\leq i^+} + X_{> i^+} = \sumof{\Vc_x}$ and $Y_{\leq i^-} + Y_{> i^-} = \sumof{\Vc \setminus \Vc_x}$, we have by definition
  \begin{equation}
    \label{eq:shuffling-e1}
    (X_{\leq i^+} + X_{> i^+}) \cap (Y_{\leq i^-} + Y_{> i^-}) = B_x.
  \end{equation}
  Now, $X_{\leq i^+} \subseteq \sumof{\Vc_x}$ and $Y_{\leq i^-} \subseteq \sumof{\Vc \setminus \Vc_x}$; since $\sumof{\Vc_x} \cap \sumof{\Vc \setminus \Vc_x} = B_x$, we see that $X_{\leq i^+} \cap Y_{\leq i^-} \subseteq B_x$.
  Therefore,
  \[ X_{\leq i^+} \cap Y_{\leq i^-} = (X_{\leq i^+} \cap B_x) \cap (Y_{\leq i^-} \cap B_x). \]
  But now, using the fact that the path $v_0v_1 \dots v_{p+1}$ is $x$-static, we have
  \[ X_L \cap B_x \,\subseteq\, X_{\leq i^+} \cap B_x \,\subseteq\, X_{\leq q} \cap B_x \,=\, X_L \cap B_x, \]
  so $X_{\leq i^+} \cap B_x = X_L \cap B_x$; similarly, we compute that $Y_{\leq i^+} \cap B_x = Y_L \cap B_x$.
  Hence,
  \begin{equation}
    \label{eq:shuffling-e2}
    X_{\leq i^+} \cap Y_{\leq i^-} = (X_L \cap B_x) \cap (Y_L \cap B_x) = X_L \cap Y_L \cap B_x = X_L \cap Y_L,
  \end{equation}
  since once again, $X_L \cap Y_L \subseteq B_x$.
  By an~analogous argument, we also deduce that
  \begin{equation}
    \label{eq:shuffling-e3}
    X_{> i^+} \cap Y_{> i^-} = X_R \cap Y_R.
  \end{equation}
  Plugging in \cref{eq:shuffling-e1,eq:shuffling-e2,eq:shuffling-e3} into \cref{eq:shuffling-switching}, we conclude that
  \[
    \begin{split}
    d_i & = \dim((X_{\leq i^+} + Y_{\leq i^-}) \cap (X_{> i^+} + Y_{> i^-})) = \\
      &= [\dim(B_x) - \dim(X_L \cap Y_L) - \dim(X_R \cap Y_R)] + \dim(X_{\leq i^+} \cap X_{> i^+}) + \dim(Y_{\leq i^-} \cap Y_{> i^-}).
    \end{split}
  \]
  That is, setting $\alpha = \dim(B_x) - \dim(X_L \cap Y_L) - \dim(X_R \cap Y_R)$ (a~constant independent on $i$ and $\sigma$), $\beta_{i^+} = \dim(X_{\leq i^+} \cap X_{> i^+})$ (a~constant dependent only on $i^+$, but not on $i$ or $\sigma$), and $\gamma_{i^-} = \dim(Y_{\leq i^-} \cap Y_{> i^-})$ (a~constant dependent only on $i^-$ and not on $i$ or $\sigma$), we have that
  \[ d_i = \alpha + \beta_{i^+} + \gamma_{i^-}. \]
  Now, set
  \[ \begin{split}
    c_0 &= \alpha + \beta_0 + \gamma_0, \\
    c_{a^+_i} &= \beta_i - \beta_{i-1} \qquad\quad\text{for }i \in [q], \\
    c_{a^-_j} &= \gamma_j - \gamma_{j-1} \qquad\quad\text{for }j \in [r].
  \end{split} \]
  It is now easy to verify that for every $i \in [0, p]$, the width of the edge $v_{\sigma(i)} v_{\sigma(i + 1)}$ in $\Tc'$ is
  \[ \begin{split}
    \dim(\lparts(\Tc')[\vec{v_{\sigma(i)} v_{\sigma(i+1)}}&] \cap \lparts(\Tc')[\vec{v_{\sigma(i+1)}v_{\sigma(i)}}]) = \\
      &= \dim((X_{\leq i^+} + Y_{\leq i^-}) \cap (X_{> i^+} + Y_{> i^-})) = \\
      &= \alpha + \beta_{i^+} + \gamma_{i^-} = \\
      &= c_0 + (c_{a^+_1} + c_{a^+_2} + \ldots + c_{a^+_{i^+}}) + (c_{a^-_1} + c_{a^-_2} + \ldots + c_{a^-_{i^-}}) = \\
      &= c_{\sigma(0)} + c_{\sigma(1)} + \ldots + c_{\sigma(i)},
    \end{split}
  \]
  since $\sigma(0) = 0$ and $\{\sigma(1), \ldots, \sigma(i)\} = \{a^+_1, \ldots, a^+_{i^+}, a^-_1, \ldots, a^-_{i^-}\}$.
\end{proof}

We now show that \cref{lem:block-shuffle-few-blocks} holds for $x$-separable paths (so, in turn, also for $x$-static paths).

\begin{lemma}
  \label{lem:block-shuffle-preserving-few-blocks}
  There exists a~function $f_{\ref{lem:block-shuffle-preserving-few-blocks}}\,\colon\,\N\to\N$ such that the following holds.
  Let $x \in V(T^b)$ and assume that the width of $\Tc$ and $\Tc^b$ is bounded by $\ell \geq 0$.
  Suppose $v_0v_1\dots v_{p+1}$ is an~$x$-separable path in $\Tc$.
  Then there exists a~boundary-preserving $x$-block shuffle of the path using a~permutation $\sigma$ such that:
  \begin{itemize}
    \item the decomposition $\Tc'$ after the shuffle has width not greater than the width of $\Tc$; and
    \item in $\Tc'$, the vertical path $v_{\sigma(1)}\dots v_{\sigma(p)}$ contains at most $f_{\ref{lem:block-shuffle-preserving-few-blocks}}(\ell)$ $x$-blocks.
  \end{itemize}
\end{lemma}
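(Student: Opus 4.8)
The plan is to reduce \cref{lem:block-shuffle-preserving-few-blocks} to a purely combinatorial question about sequences of integers, exploiting the characterization of edge widths along a shuffled $x$-separable path. Recall that by $x$-separability there are integers $c_0, c_1, \dots, c_p$ such that after any boundary-preserving $x$-block shuffle using a permutation $\sigma$, the width of the edge $v_{\sigma(i)}v_{\sigma(i+1)}$ equals the prefix sum $c_{\sigma(0)} + c_{\sigma(1)} + \dots + c_{\sigma(i)}$. Split the indices $[1,p]$ into the $x$-full indices $a^+_1 < \dots < a^+_q$ and the $x$-empty indices $a^-_1 < \dots < a^-_r$. Since $x$-block shuffles must preserve the internal order of each kind and glue together consecutive members of a single block, a boundary-preserving shuffle is exactly a choice of an interleaving of the block-list of $x$-full blocks with the block-list of $x$-empty blocks that keeps the first and last blocks fixed. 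Each block contributes a fixed increment to the running prefix sum (namely the sum of the $c_i$'s over the block), so after condensing, the shuffle is equivalent to a shuffle of two sequences of partial sums.

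First I would set up the abstraction: let $P^+$ and $P^-$ be the sequences of $x$-full and $x$-empty blocks (so $P^+$ and $P^-$ together partition the path into blocks, alternating, with the first and last fixed). Associate to the $j$-th prefix of $x$-full blocks a value $\beta_j$ (the running prefix sum $\dim(X_{\le i^+}\cap X_{>i^+})$ up to that point, via the proof of \cref{lem:static-implies-separable}), and similarly $\gamma_j$ for $x$-empty blocks; the width at any point is $\alpha + \beta_{i^+} + \gamma_{i^-}$. Now the width is determined by the pair $(i^+, i^-)$ of ``how far into each kind we are''. The key observation is that the width profile of a shuffle is a monotone lattice path in the $(i^+,i^-)$-grid from $(0,0)$ to $(q,r)$, where moving right by one $x$-full block bumps $\beta$ and moving up by one $x$-empty block bumps $\gamma$, and we want a path all of whose visited cells have value $\le W := \max_{i}(\alpha + \beta_{i^+} + \gamma_{i^-})$ along the \emph{original} (identity) path, i.e.\ not exceeding the original width of $\Tc$. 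The original path corresponds to the staircase determined by the actual interleaving $a^+$ vs $a^-$, which certifies that such a bounded path exists. The goal is then: among all monotone lattice paths from $(0,0)$ to $(q,r)$ staying in the region where $\alpha+\beta_{i^+}+\gamma_{i^-} \le W$ (with $W$ at most the width of $\Tc$, hence $W \le \ell$), find one that changes direction — switches between horizontal and vertical moves — only $f_{\ref{lem:block-shuffle-preserving-few-blocks}}(\ell)$ times. Each direction change corresponds to a block boundary between blocks of different kinds, so bounding the number of direction changes bounds the number of $x$-blocks.

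The main step is then the lattice-path lemma: I would argue that since $\beta_\bullet$ and $\gamma_\bullet$ take values in $[0,\ell]$ (as dimensions of subspaces of the boundary space $B_x$, whose dimension is at most $\ell$) and are monotone-ish in the appropriate sense, the ``feasible region'' $\{(i^+,i^-) : \alpha+\beta_{i^+}+\gamma_{i^-}\le W\}$ is the sub-level set of a sum of two staircase functions, and so has boundary complexity bounded in terms of $\ell$ only. Concretely, $\beta$ and $\gamma$ each take at most $\ell+1$ distinct values, so the grid decomposes into at most $(\ell+1)^2$ ``rectangular'' regions on which feasibility is constant; a greedy path that goes as far right as feasible, then as far up as feasible, then repeats, will change direction at most $\Oh[\ell]{1}$ times before reaching $(q,r)$ — and boundary-preservation is handled by checking the corner moves are forced correctly (the first and last blocks of each kind are fixed). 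I would take $f_{\ref{lem:block-shuffle-preserving-few-blocks}}(\ell) = \Oh{\ell^2}$ or so. Then \cref{lem:block-shuffle-few-blocks} follows by combining \cref{lem:static-implies-separable} (reducing to $x$-separable paths) with this lemma, provided one can reduce an arbitrary $x$-shuffleable path to an $x$-static one — which is presumably done by a preliminary normalization step (cutting the path at the points where the boundary-space intersections stabilize, which happens within $\Oh[\ell]{1}$ blocks from each end since the relevant subspace chains have length $\le \ell$), so that the middle portion is $x$-static and the two ends contribute only $\Oh[\ell]{1}$ extra blocks.

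The hard part will be the bookkeeping around boundary-preservation and the reduction from $x$-shuffleable to $x$-static: one must be careful that after normalizing the ends, the residual static path really does satisfy all four subspace equalities defining $x$-staticness, and that stitching the $\Oh[\ell]{1}$ boundary blocks back on neither increases the width (it doesn't, since those blocks are untouched) nor breaks the block-shuffle constraints. The lattice-path argument itself is clean once the width formula $d_i = \alpha + \beta_{i^+} + \gamma_{i^-}$ is in hand, so I expect most of the work to be in phrasing the normalization and verifying that the greedy path stays within the original width bound rather than merely within some bound depending on $\ell$ — that is, one genuinely needs that the \emph{original} interleaving certifies feasibility, which is automatic since the identity permutation is itself a boundary-preserving $x$-block shuffle.
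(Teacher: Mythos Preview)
Your reduction to a lattice-path problem is the right picture and matches the paper's setup, but the combinatorial core of your argument has two genuine gaps.

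First, you justify $\beta_{i^+},\gamma_{i^-}\in[0,\ell]$ by saying they are ``dimensions of subspaces of $B_x$''. That identification comes from the proof of \cref{lem:static-implies-separable} and is valid only for $x$-\emph{static} paths; the present lemma is stated for general $x$-\emph{separable} paths, where you are given only the integers $c_0,\dots,c_p$, not any subspace interpretation. One can still bound the range of $\beta$ and $\gamma$ in terms of $\ell$, but the argument is different: it uses that widths are nonnegative in \emph{every} boundary-preserving block shuffle (in particular in the extreme shuffle that places all interior blocks of one colour first), which is exactly how the paper establishes $\mathsf{pmin}(u)\ge -\ell$ in \cref{cl:restricted-pmin}, together with $\mathsf{pmax}(w)\le\ell$ from the identity shuffle.

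Second, and more seriously, even granting that $\beta$ and $\gamma$ are integer-valued with range $O(\ell)$, your claim that ``the grid decomposes into at most $(\ell+1)^2$ rectangular regions on which feasibility is constant'' is false: $\beta$ and $\gamma$ are not monotone (even in the static case, $\dim(X_{\le j}\cap X_{>j})$ can oscillate), so a bounded range says nothing about the number of level changes. For instance $\beta=(0,1,0,1,\dots)$ has range $2$ but $\Theta(q)$ changes, giving $\Theta(q)$ vertical strips. Consequently the greedy ``right as far as feasible, then up'' path can get stuck or make unboundedly many turns; it is not hard to build small examples where greedy walks into a dead end not visited by the original path. What is actually needed here is a nontrivial combinatorial lemma about interleaving two integer sequences whose joint prefix sums and individual prefix minima are bounded---this is precisely the content of the Boja\'nczyk--Pilipczuk dealternation lemma for bichromatic words (their Lemma~7.1), which the paper invokes as a black box. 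Your lattice-path formulation is equivalent to that lemma, but you have not supplied a proof of it, and the greedy heuristic does not suffice.

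Finally, your last paragraph about normalizing ends to reduce $x$-shuffleable to $x$-static is not part of \cref{lem:block-shuffle-preserving-few-blocks} at all; that reduction is the content of the separate \cref{lem:block-shuffle-few-blocks}.
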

\begin{proof}
  The lemma is a~consequence of a~similar statement from the work of Bojańczyk and Pilipczuk \cite{DBLP:journals/lmcs/BojanczykP22}, formulated for bichromatic words, which in turn captures the understanding of \emph{typical sequences} from the work of Bodlaender and Kloks \cite{DBLP:journals/jal/BodlaenderK96}.
  Before we provide the statement of their lemma, we need to define block shuffles for words.
  We mostly follow the exposition from \cite{DBLP:journals/lmcs/BojanczykP22}, with the difference that their proof concerns words over alphabet $\{-, +\}$, excluding $0$ from the alphabet.
  However, it can be readily seen that their proof also works in the setting below.
  
  Fix the alphabet $\Sigma = \{0, -, +\}$.
  Given a~word $w \in \Sigma^*$, define:
  \begin{itemize}
    \item $\mathsf{sum}(w)$, the \emph{sum} of $w$, as the number of occurrences of $+$ in $w$, minus the number of occurrences of $-$ in $w$;
    \item $\mathsf{pmax}(w)$, the \emph{prefix maximum} of $w$, as the maximum sum of any prefix of $w$; and
    \item $\mathsf{pmin}(w)$, the \emph{prefix minimum} of $w$, as the minimum sum of any prefix of $w$.
  \end{itemize}
  Suppose the characters in a~word $w$ are colored with one of two colors, say red and blue; in such an~instance we say that $w$ is a~\emph{bichromatic word}.
  A~\emph{block} in such a~word is a~maximal subword comprising consecutive letters of $w$ of the same color.
  Then a~\emph{block shuffle} of $w$ is any word $w'$ created from $w$ by permuting the blocks of $w$ such that within each color, the order of the characters remains the same as in $w$.
  Then the Dealternation Lemma for bichromatic words reads as follows:
  \begin{claim}[{\cite[Lemma 7.1]{DBLP:journals/lmcs/BojanczykP22}}]
    \label{cl:word-dealternation}
    Let $w \in \Sigma^*$ be a~bichromatic word.
    Let $a, b \geq 0$ be two integers with the following properties: $\mathsf{pmax}(w) \leq a$, and if $u$ is a~word created from $w$ by restricting it to all letters of the same color, then $\mathsf{pmin}(u) \geq -b$.
    Then there exists a~block shuffle $w'$ of $w$ such that $\mathsf{pmax}(w') \leq \mathsf{pmax}(w)$ and $w'$ has at most $a + 4b + 2$ blocks in total.
  \end{claim}
  Let $f_{\ref{lem:block-shuffle-preserving-few-blocks}}(\ell) = 5\ell + 4$.
  Consider an~$x$-separable path $v_0v_1 \dots v_{p+1}$ and let $c_0, c_1, \dots, c_p \in \Z$ be the constants associated with the path.
  If the path comprises at most $2$ blocks, the lemma follows trivially -- we can choose $\Tc' = \Tc$ and $\sigma$ to be the identity permutation.
  Suppose now the path contains $t \geq 3$ blocks: $[\ell_1, r_1], [\ell_2, r_2], \dots, [\ell_t, r_t]$, where $1 = \ell_1 < r_1 < \ell_2 < r_2 < \dots < \ell_t < r_t = p$ and $\ell_{i+1} = r_i + 1$ for $i \in [t-1]$.
  Aiming to apply \cref{cl:word-dealternation}, we construct a~bichromatic word $w$ as follows:
  \begin{itemize}
    \item For every $i \in [\ell_2, r_{t-1}]$, define the word $w_i$ as follows:
    \[ w_i = \begin{cases}
       +^{c_i} & \text{if } c_i > 0; \\
       -^{|c_i|} & \text{if } c_i < 0; \\
       0 & \text{if } c_i = 0.
    \end{cases} \]
      Then we set $w = w_{\ell_2} w_{\ell_2 + 1} \dots w_{r_{t-1}}$.
    \item For every $i \in [\ell_2, r_{t-1}]$, color the letters of $w_i$ in $w$ red if $i \in [\ell_j, r_j]$ for even $j$, and blue otherwise. (That is, we color the subwords of $w$ corresponding to different $x$-blocks of the vertical path alternately; in other words, one color is allocated to the subwords corresponding to the $x$-empty nodes $v'_i$, and the other to the $x$-full nodes $v'_i$).
  \end{itemize}
  It is easy to see that block shuffles $w'$ of $w$ are in a~natural bijection with boundary-preserving $x$-block shuffles along $v_0v_1 \dots v_{p+1}$: Any reordering of the blocks in $w$ can be directly translated to a~reordering of the $x$-blocks of the path preserving the first and the last $x$-block, and vice versa.
  Such a~boundary preserving $x$-shuffle is said to be \emph{prescribed} by $w'$.
  
  The following claim about the prefix maximum of $w$ follows straight from the definition.
  \begin{claim}
    \label{cl:original-pmax}
    $\mathsf{pmax}(w) = \max_{i \in [\ell_2-1, r_{t-1}]} (c_{\ell_2} + c_{\ell_2 + 1} + \ldots + c_i)$.
  \end{claim}
  Note that the width of the edge $v_{\ell_2 - 1} v_{\ell_2}$ in the original decomposition $\Tc$ is (trivially) at least $0$; and for every $i \in [\ell_2-1, r_{t-1}]$, the width of the edge $v_i v_{i+1}$ is (by our assumption) at most $\ell$ (i.e., it exceeds the width of $v_{\ell_2 - 1} v_{\ell_2}$ by at most $\ell$).
  Thus, by the $x$-separability of the path $v_0v_1 \dots v_{p+1}$ for the trivial block shuffle using the identity permutation $\sigma$, we have $c_{\ell_2} + c_{\ell_2 + 1} + \ldots + c_i \leq \ell$ for every $i$ and hence $\mathsf{pmax}(w) \leq \ell$ by \cref{cl:original-pmax}.
  
  Now, suppose we found a~block shuffle $w'$ of $w$ with a~smaller or equal prefix maximum.
  Then the block shuffle can be naturally translated to an~$x$-block shuffle of $\Tc$ of width not greater than the width of $\Tc'$:
  \begin{claim}
    \label{cl:word-gives-tree-shuffling}
    Suppose $w'$ is a~block shuffle of $w$ with $\mathsf{pmax}(w') \leq \mathsf{pmax}(w)$, and let $\Tc'$ be the decomposition formed from $\Tc$ by performing a~boundary-preserving $x$-block shuffle prescribed by $w'$.
    Then the width of $\Tc'$ is at most the width of $\Tc$.
  \end{claim}
  \begin{claimproof}
    Let $\sigma$ be the recipe of the block shuffle prescribed by $w'$; note that for every $i \notin [\ell_2, r_{t-1}]$, it holds that $\sigma(i) = i$.
    By the same argument as in \cref{cl:original-pmax}, we have $\mathsf{pmax}(w') = \max_{i \in [\ell_2-1, r_{t-1}]} (c_{\sigma(\ell_2)} + c_{\sigma(\ell_2 + 1)} + \dots + c_{\sigma(i)})$.
    
    None of the edges outside of the path $v_{\ell_2 - 1}v_{\ell_2} \dots v_{r_{t-1} + 1}$ are affected by a~boundary-preserving $x$-block shuffle; that is, for any edge $e$ outside of this path, the $x$-block shuffle preserves the partitioning of the leaves of $T$ on either side of $e$. In particular, for every such edge $e$, the width of $e$ remains unchanged.
    Hence it is enough to verify the widths of each of the edges $v_{\ell_2 - 1}v_{\sigma(\ell_2)} = v_{\sigma(\ell_2 - 1)} v_{\sigma(\ell_2)},\, v_{\sigma(\ell_2)} v_{\sigma(\ell_2 + 1)},\, \ldots,\, v_{\sigma(r_{t-1})} v_{\sigma(r_{t-1}+1)} = v_{\sigma(r_{t-1})} v_{r_{t-1} + 1}$.
    
    Consider an~edge $e = v_{\sigma(i)} v_{\sigma(i+1)}$ for $i \in [\ell_2 - 1, r_{t-1}]$.
    By the $x$-separability of $v_0 v_1 \dots v_{p+1}$, the width of $e$ is
    \[ c_{\sigma(0)} + c_{\sigma(1)} + \ldots + c_{\sigma(i)} = \sum_{j=0}^{\ell_2 - 1} c_j + \sum_{j=\ell_2}^{i} c_{\sigma(i)} \leq \sum_{j=0}^{\ell_2 - 1} c_j + \mathsf{pmax}(w') \leq \sum_{j=0}^{\ell_2 - 1} c_j + \mathsf{pmax}(w). \]
    Let $i_{\max} \in [\ell_2 - 1, r_{t-1}]$ be such that $\mathsf{pmax}(w) = c_{\ell_2} + c_{\ell_2 + 1} + \ldots + c_{i_{\max}}$.
    Then
    \[ c_{\sigma(0)} + c_{\sigma(1)} + \ldots + c_{\sigma(i)} \leq c_0 + c_1 + \ldots + c_{i_{\max}}; \]
    that is, by the $x$-separability, the width of $e$ is upper-bounded by the width of the edge $v_{i_{\max}} v_{i_{\max} + 1}$ in the original decomposition $\Tc$, so in particular by the width of $\Tc$.
  \end{claimproof}
  
  It remains to bound the prefix minima of the restrictions of $w$ to all letters of a~single color.
  \begin{claim}
    \label{cl:restricted-pmin}
    Let $u$ be a~word created by restricting $w$ to all letters of the same color. Then $\mathsf{pmin}(u) \geq -\ell$.
  \end{claim}
  \begin{claimproof}
    Assume that $u = w_{i_1} w_{i_2} \dots w_{i_z}$ for $i_1 < i_2 < \ldots < i_z \in [\ell_2, r_{t-1}]$ such that the subwords $w_{i_1}, w_{i_2}, \ldots, w_{i_z}$ all have the same color in $w$; equivalently, $\{v'_{i_1}, v'_{i_2}, \ldots, v'_{i_z}\}$ is the subset of $\{v'_{\ell_2}, v'_{\ell_2 + 1}, \dots, v'_{r_{t-1}}\}$ comprising exactly the set of $x$-empty nodes or exactly the set of $x$-full nodes in $\Tc$.
    
    By the construction of $w$, we have
    \[ \mathsf{pmin}(u) = \min_{j \in [0, z]} (c_{i_1} + c_{i_2} + \ldots + c_{i_j}). \]
    
    Now construct:
    \begin{itemize}
      \item a~block shuffle $w'$ of $w$ by placing $u$ at the front of $w'$ and all the blocks of the opposite color at the back of $w'$, in the same order as in $w$;
      \item a~boundary-preserving $x$-block shuffle $\Tc'$ of $\Tc$ along $v_0 v_1 \dots v_{p+1}$ prescribed by $w'$; let also $\sigma$ be the recipe of this shuffle.
      (In other words, $\Tc'$ is constructed by placing all non-boundary blocks comprising $x$-empty (resp.~$x$-full) nodes next to each other.)
    \end{itemize}
    By construction, we have $\sigma(j) = j$ for all $j \in [0, \ell_2 - 1]$ and $\sigma(\ell_2 + j - 1) = i_j$ for all $j \in [z]$.
    
    Obviously, the width of the edge $v_{\ell_2 - 1} v_{\ell_2}$ in $\Tc$ is not larger than $\ell$; and by the $x$-separability of the vertical path $v_0 v_1 \dots v_{p+1}$ for the trivial block shuffle, it is equal to $c_0 + c_1 + \ldots + c_{\ell_2 - 1}$.
    On the other hand, for every $j \in [0, z]$, the width of the edge $v_{\sigma(\ell_2 + j - 1)} v_{\sigma(\ell_2 + j)}$ in $\Tc'$ is trivially at least $0$; and by the $x$-separability applied to the block shuffle along $\sigma$, it is equal to
    \[
      \sum_{q = 0}^{\ell_2 + j - 1} c_{\sigma(q)} = \sum_{q=0}^{\ell_2 - 1} c_q + \sum_{q=1}^j c_{i_q} \leq \ell + \sum_{q=1}^j c_{i_q}.
    \]
  Thus $c_{i_1} + c_{i_2} + \ldots + c_{i_j} \geq -\ell$ for every $j \in [0, z]$.
  Hence, $\mathsf{pmin}(u) \geq -\ell$.
  \end{claimproof}
  
  The proof of the lemma follows now in a~straightforward way: from \cref{cl:original-pmax,cl:restricted-pmin} it follows that $\mathsf{pmax}(w) \leq \ell$ and $\mathsf{pmin}(u) \geq -\ell$, where $u$ is the restriction of $w$ to the letters of any chosen color.
  Hence by \cref{cl:word-dealternation}, there exists a~block shuffle $w'$ of $w$ such that $\mathsf{pmax}(w') \leq \mathsf{pmax}(w)$ and $w'$ has at most $5\ell + 2$ blocks in total.
  Then by \cref{cl:word-gives-tree-shuffling}, the boundary-preserving $x$-block shuffle of $\Tc$ prescribed by $w'$ produces a~decomposition $\Tc'$ of width upper-bounded by the width of $\Tc$.
  Moreover, the path $v_{\sigma(0)} v_{\sigma(1)} \dots v_{\sigma(p+1)}$ in $\Tc'$ after the shuffle has at most $5\ell + 4$ $x$-blocks: the two boundary $x$-blocks and one additional $x$-block for each block of $w'$.
\end{proof}

It remains to lift the result of \cref{lem:block-shuffle-preserving-few-blocks} to general $x$-shuffleable paths:

\begin{proof}[Proof of \cref{lem:block-shuffle-few-blocks}]
  We will show that every $x$-shuffleable path can be partitioned into a~small number of $x$-static paths.
  Then the proof will follow from \cref{lem:static-implies-separable,lem:block-shuffle-preserving-few-blocks}.
  
  Recall that in our setting, $\Tc$ and $\Tc^b$ are rooted rank decompositions of $\Vc$ of width at most $\ell$ ($\ell \ge 0$) and $v_0 v_1 \dots v_{p+1}$ is an~$x$-shuffleable path in $\Tc$.
  
  \begin{claim}
    \label{cl:static-partitioning}
    There exists a~partitioning of the interval $[1, p]$ into $t \leq 8\ell + 1$ subintervals $[\ell_1, r_1],\allowbreak [\ell_2, r_2],\allowbreak \dots,\allowbreak [\ell_t, r_t]$ such that, for every $i \in [t]$, either $\ell_i = r_i$ or the vertical path $v_{\ell_i - 1} v_{\ell_i} v_{\ell_i + 1} \dots v_{r_i} v_{r_i + 1}$ is $x$-static.
    Moreover, any two subintervals with $\ell_i \neq r_i$ are separated by a~one-element subinterval.
  \end{claim}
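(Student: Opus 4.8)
The key quantity to control is how the subspace pair $(\sumof{\lparts_x(\Tc)[\oed]}\cap B_x,\ \sumof{\lparts_{\bar x}(\Tc)[\oed]}\cap B_x)$ evolves as the oriented edge $\oed$ walks down the path $v_0 v_1 \dots v_{p+1}$. Writing $P^x_i = \sumof{\lparts_x(\Tc)[\vec{v_i v_{i-1}}]}\cap B_x$, $P^{\bar x}_i = \sumof{\lparts_{\bar x}(\Tc)[\vec{v_i v_{i-1}}]}\cap B_x$ for the "upward" direction and $Q^x_i, Q^{\bar x}_i$ analogously for $\sumof{\lparts_x(\Tc)[\vec{v_{i-1}v_i}]}\cap B_x$ and its complement, the sequences $(P^x_i)_i$ and $(Q^{\bar x}_i)_i$ are non-decreasing in $i$ while $(P^{\bar x}_i)_i$ and $(Q^x_i)_i$ are non-increasing, since moving one step down the path adds the leaves hanging off $v'_i$ to one side and removes them from the other; the four families of intersections with the fixed space $B_x$ thus change monotonically. (Here I use that $v'_i$ is either $x$-empty or $x$-full for each $i$, which is exactly the definition of an $x$-shuffleable path, so each step moves leaves entirely to one of the two "$x$" or "$\bar x$" pots.) Because $\dim(B_x)\le\ell$ and each of these four dimensions lies in $[0,\ell]$ and is monotone along the path, each of them can strictly change at most $\ell$ times; hence there are at most $4\ell$ indices $i\in[2,p]$ at which at least one of the four subspaces strictly changes when passing from edge $v_{i-1}v_i$ to edge $v_i v_{i+1}$.

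\textbf{Carving out the static subpaths.} These $\le 4\ell$ "change indices" cut $[1,p]$ into at most $4\ell+1$ maximal intervals on each of which all four of the intersection-with-$B_x$ subspaces are constant. I would turn this into the stated partition as follows: for each change index $i$, declare the singleton interval $[i,i]$ to be one of the parts $[\ell_j,r_j]$ with $\ell_j=r_j=i$; between two consecutive change indices $i<i'$ the interval $[i+1, i'-1]$ (when non-empty) is a part. On such a non-singleton part $[\ell_j, r_j]$, the four subspaces are constant for all edges from $v_{\ell_j-1}v_{\ell_j}$ down to $v_{r_j}v_{r_j+1}$, which is literally the definition of the path $v_{\ell_j-1}v_{\ell_j}\dots v_{r_j}v_{r_j+1}$ being $x$-static (and it is $x$-shuffleable because it is a subpath of an $x$-shuffleable path, so each interior $v'_i$ is still $x$-full or $x$-empty). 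This gives $t\le 2\cdot 4\ell+1 = 8\ell+1$ subintervals — accounting precisely for the bound in the claim — and any two non-singleton parts are separated by a singleton part by construction. One small bookkeeping point: I should double-check that a change index being adjacent to the boundary (or two change indices being adjacent) does not break the counting; in the worst case every part is a singleton and $t=p\le 8\ell+1$ trivially, so the bound is safe.

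\textbf{Finishing \cref{lem:block-shuffle-few-blocks}.} Given \cref{cl:static-partitioning}, set $f_{\ref{lem:block-shuffle-few-blocks}}(\ell) = (8\ell+1)\cdot f_{\ref{lem:block-shuffle-preserving-few-blocks}}(\ell) + (8\ell+1)$ (or any comparable polynomial bound; the exact constant is irrelevant). Process the static parts one at a time: for each non-singleton $x$-static subpath, \cref{lem:static-implies-separable} says it is $x$-separable, so \cref{lem:block-shuffle-preserving-few-blocks} yields a boundary-preserving $x$-block shuffle of just that subpath, producing a decomposition of width not exceeding the current width, and leaving at most $f_{\ref{lem:block-shuffle-preserving-few-blocks}}(\ell)$ $x$-blocks inside that subpath. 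Since each of these shuffles is boundary-preserving, it affects no edge outside its own subpath; in particular it does not disturb the other static subpaths or the singleton separators, so I can compose the recipes into a single permutation $\sigma$ of $[1,p]$ (identity outside the union of the non-singleton parts). The composed shuffle is still a boundary-preserving $x$-block shuffle along $v_0\dots v_{p+1}$, its width never exceeds that of $\Tc$, and the total number of $x$-blocks afterwards is at most the sum over the $\le 8\ell+1$ parts of $f_{\ref{lem:block-shuffle-preserving-few-blocks}}(\ell)$, plus the contribution of the singleton separators, which is $\le f_{\ref{lem:block-shuffle-few-blocks}}(\ell)$ as chosen.

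\textbf{Main obstacle.} The delicate point is \cref{cl:static-partitioning} itself, specifically making the monotonicity-plus-pigeonhole argument airtight: one must verify that the four relevant subspaces really do vary monotonically (not just their dimensions), and that "strict change at most $\ell$ times" genuinely bounds the number of cut points rather than, say, the number of distinct values up to something larger. I also need to be careful that an $x$-static subpath inherits $x$-shuffleability and that \cref{lem:static-implies-separable} applies to it verbatim (it does, since both notions are defined purely in terms of the subpath and the global data $\Tc^b, x$). Beyond that, the argument is a routine assembly of the two preceding lemmas, and the composition-of-boundary-preserving-shuffles step is straightforward once one observes boundary-preserving shuffles have disjoint "support" on the edge set outside their own subpath.
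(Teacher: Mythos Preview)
Your proposal is correct and follows essentially the same approach as the paper: track four monotone chains of subspaces of $B_x$ (the paper records their dimensions as a ``profile'' and then uses nesting plus equal dimension to recover subspace equality, but the effect is identical), bound the number of strict changes by $4\ell$, and declare singleton intervals at the change points with maximal intervals in between. Your stated monotonicity directions for $P^x_i,P^{\bar x}_i,Q^x_i,Q^{\bar x}_i$ are swapped (all $P$'s are non-increasing and all $Q$'s non-decreasing in $i$, and this holds by the predecessor relation alone, without using that each $v'_i$ is $x$-empty or $x$-full), but this is a harmless slip since only monotonicity, not its direction, is needed.
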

  \begin{claimproof}
    For every $i \in [0, p]$, define the \emph{profile} of the edge $v_i v_{i+1}$ in $\Tc$ as the quadruple of integers $(\alpha_i, \beta_i, \gamma_i, \delta_i)$, where
    \[ \begin{split}
    \alpha_i &= \dim(\sumof{\lparts_x(T)[\vec{v_i v_{i+1}}]} \cap B_x), \\
    \beta_i &= \dim(\sumof{\lparts_{\bar{x}}(T)[\vec{v_i v_{i+1}}]} \cap B_x), \\
    \gamma_i &= \dim(\sumof{\lparts_x(T)[\vec{v_{i+1} v_i}]} \cap B_x), \\
    \delta_i &= \dim(\sumof{\lparts_{\bar{x}}(T)[\vec{v_{i+1} v_i}]} \cap B_x).
    \end{split} \]
    By construction, the sequences $(\alpha_i)_{i=0}^p$ and $(\beta_i)_{i=0}^p$ are non-decreasing, while the sequences $(\gamma_i)_{i=0}^p$ and $(\delta_i)_{i=0}^p$ are non-increasing; moreover, each of the values $\alpha_i, \beta_i, \gamma_i, \delta_i$ range from $0$ from $\ell$ since each value describes the dimension of a~subspace of $B_x$ (and $\dim(B_x) \leq \ell$ as $\Tc^b$ has width at most $\ell$).
    Therefore, there exist at most $4\ell + 1$ different profiles among all the edges $v_i v_{i+1}$.
    
    Say a~vertex $v_i$ ($i \in [p]$) is a~\emph{milestone} if the edges $v_{i-1} v_i$ and $v_i v_{i+1}$ have different profiles; observe that on the path $v_0 v_1 \dots v_{p+1}$, there are at most $4\ell$ milestones.
    We construct a~partitioning of $[1, p]$ into subintervals by:
    \begin{itemize}
      \item creating, for each milestone $v_i$, a~one-element subinterval $[i, i]$; and
      \item adding to the partitioning all maximal subintervals of $[1, p]$ not containing any milestones.
    \end{itemize}
    It is obvious that the partitioning contains at most $8\ell + 1$ subintervals.
    Now, let $[\ell, r]$ be some maximal subinterval of $[1, p]$ without any milestones.
    We claim that the path $v_{\ell-1} v_\ell \dots v_r v_{r+1}$ is $x$-static.
    Since none of the vertices $v_\ell, \dots, v_r$ are milestones, the profiles of the edges $v_{\ell-1} v_\ell$ and $v_r v_{r+1}$ are equal.
    Since $\alpha_{\ell-1} = \alpha_r$ and $\sumof{\lparts_x(T)[\vec{v_r v_{r+1}}]} \cap B_x \subseteq  \sumof{\lparts_x(T)[\vec{v_{\ell-1} v_\ell}]} \cap B_x$, we conclude that $\sumof{\lparts_x(T)[\vec{v_r v_{r+1}}]} \cap B_x = \sumof{\lparts_x(T)[\vec{v_{\ell-1} v_\ell}]} \cap B_x$; this verifies one of the equalities required by the definition of $x$-static paths.
    The remaining three equalities are proved analogously by analyzing the equalities $\beta_{\ell-1} = \beta_r$, $\gamma_{\ell-1} = \gamma_r$ and $\delta_{\ell-1} = \delta_r$.
  \end{claimproof}
  
  Let $[\ell_1, r_1], \dots, [\ell_t, r_t]$ be the partitioning of $[1, p]$ given by \cref{cl:static-partitioning}, and suppose that $\ell_1 \leq r_1 < \ell_2 \leq r_2 < \dots < \ell_t \leq r_t$.
  We inductively construct a~sequence of rank decompositions $\Tc_0 = \Tc, \Tc_1, \dots, \Tc_t$ with the following invariants:
  \begin{itemize}
    \item for every $1 \leq j \leq i \leq t$, vertices $v_{\ell_j}, \dots, v_{r_j}$ form -- in some order -- a~path in $\Tc_i$ with at most $f_{\ref{lem:block-shuffle-preserving-few-blocks}}(\ell)$ $x$-blocks; and
    \item for every $0 \leq i < j \leq t$ with $\ell_j \neq r_j$, the vertical path $v_{\ell_j - 1} v_{\ell_j} \dots v_{r_j} v_{r_j + 1}$ is $x$-static in $\Tc_i$.
  \end{itemize}
  We construct this sequence as follows: iterate the integers $i = 1, \dots, t$.
  If $\ell_i = r_i$, then set $\Tc_i = \Tc_{i-1}$.
  Otherwise, since the vertical path $v_{\ell_i - 1} v_{\ell_i} \dots v_{r_i} v_{r_i + 1}$ is $x$-static in $\Tc_{i-1}$, it is also $x$-separable (\cref{lem:static-implies-separable}); hence, we apply \cref{lem:block-shuffle-preserving-few-blocks} to produce a~boundary-preserving $x$-block shuffle $\Tc_i$ from $\Tc_{i-1}$, where the vertices $v_{\ell_i}, v_{\ell_i + 1}, \dots, v_{r_i}$ form a~vertical path with at most $f_{\ref{lem:block-shuffle-preserving-few-blocks}}(\ell)$ $x$-blocks.
  Since each boundary-preserving $x$-block only modifies the decomposition locally along the path $v_{\ell_i - 1}v_{\ell_i}\dots v_{r_i} v_{r_i+1}$, it can be easily verified that all the invariants are preserved by the update.
  Also, note that boundary-preserving $x$-block shuffles of $v_{\ell_i - 1} v_{\ell_i} \dots v_{r_i + 1}$ are also boundary-preserving $x$-block shuffles of $v_0 v_1 \dots v_{p+1}$ and a~composition of (boundary-preserving) $x$-block shuffles is also a~(boundary-preserving) $x$-block shuffle.
  We thus conclude that $\Tc_t$ is a~boundary-preserving $x$-block shuffle of $\Tc$ along the path $v_0 v_1 \dots v_{p+1}$.
  Moreover, by the invariants, for every $j \in [t]$, the vertices $v_{\ell_j}, \dots, v_{r_j}$ form a~vertical path in $\Tc_t$ with at most $f_{\ref{lem:block-shuffle-preserving-few-blocks}}(\ell)$ $x$-blocks.
  Since $t \leq 8\ell + 1$, we find that the produced decomposition contains at most $f_{\ref{lem:block-shuffle-few-blocks}}(\ell) \coloneqq (8\ell + 1) f_{\ref{lem:block-shuffle-preserving-few-blocks}}(\ell)$ $x$-blocks along the vertical path $v_{\sigma(1)} \dots v_{\sigma(p)}$.
  Also the width of $\Tc_t$ is upper-bounded by the width of $\Tc$.
\end{proof}

\subsection{Proof of the Local Dealternation Lemma}
\label{ssec:final-local-dealternation-lemma}

With all the required tools at hand, we can prove the Local Dealternation Lemma (\cref{lem:local-dealternation}).

\begin{proof}[Proof of \cref{lem:local-dealternation}]
  Let $T^\mix$ be the $x$-mixed skeleton of $\Tc$; by our assumption, it has at most $f_{\ref{lem:small-mixed-skeleton}}(\ell)$ nodes.
  If the skeleton is an~empty tree, then by \cref{lem:mixed-skeleton-mixed-descendant}, the root $r$ of $\Tc$ either $x$-full or $x$-empty.
  In either case, the lemma follows by setting $\Tc' = \Tc$, in which case the set $\lparts(\Tc')[x]$ is a~disjoint union of at most one $x$-factor of $\Tc'$.
  From now on assume that the skeleton is non-empty.

  The following observations are straightforward:
  \begin{claim}
    \label{cl:edge-path-shuffleable}
    Let $uv \in E(T^\mix)$, where $u$ is a~parent of $v$ in $T^\mix$ (i.e., $u$ is an~ancestor of $v$ in $T$).
    Then the simple vertical path between $u$ and $v$ in $T$ is $x$-shuffleable.
  \end{claim}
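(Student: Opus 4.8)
The plan is to unwind the definitions and verify the two requirements in the definition of an~$x$-shuffleable path directly, using the basic structural facts about mixed skeletons. Write the simple vertical path between $u$ and $v$ in $T$ as $v_0 v_1 \dots v_{p+1}$ with $v_0 = u$ and $v_{p+1} = v$, where $p \geq 0$; for each $i \in [p]$ let $v'_i$ denote the child of $v_i$ not lying on the path (which exists and is unique since $T$ is binary and $v_i$ is a~non-leaf node). The degenerate case $p = 0$ (where $u$ is the parent of $v$ in $T$) is immediate: there are no internal nodes to inspect, and $v_{p+1} = v \in V(T^\mix)$ is $x$-mixed by \cref{lem:mixed-skeleton-mixed-ancestors}. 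So assume $p \geq 1$.

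First I would record that, by the construction of $T^\mix$, the edge $uv \in E(T^\mix)$ corresponds to a~path in $T$ internally disjoint from $V(T^\mix)$; hence none of $v_1, \dots, v_p$ belongs to $V(T^\mix)$. Since $v = v_{p+1} \in V(T^\mix)$, \cref{lem:mixed-skeleton-mixed-ancestors} gives that $v$ is $x$-mixed, which is the second condition in the definition of an~$x$-shuffleable path.

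For the first condition, fix $i \in [p]$. The edge $v_i v_{i+1}$ lies on the path between $u$ and $v$ in $T$, so \cref{lem:mixed-skeleton-mixed-path} applied to $u, v \in V(T^\mix)$ shows that $\vec{v_{i+1} v_i}$ is $x$-mixed; as $v_i$ is the parent of $v_{i+1}$ in $T$, this means the node $v_{i+1}$ is $x$-mixed. Similarly, the edge $v_{i-1} v_i$ lies on the same path, so $\vec{v_i v_{i-1}}$ is $x$-mixed, i.e.\ the node $v_i$ is $x$-mixed. Now suppose toward a~contradiction that $v'_i$ is also $x$-mixed. Then both children $v_{i+1}$ and $v'_i$ of $v_i$ are $x$-mixed, so $v_i$ is an~$x$-branch point and therefore $v_i \in V(T^\mix)$ --- contradicting that $v_i$ is an~internal node of the $u$--$v$ path. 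Hence $v'_i$ is $x$-empty or $x$-full, as required.

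Both conditions being met, the path $v_0 v_1 \dots v_{p+1}$ is $x$-shuffleable, which proves the claim. I do not anticipate any genuine difficulty here: the argument is purely a~bookkeeping exercise with the definitions of $x$-branch point and $x$-shuffleable path together with \cref{lem:mixed-skeleton-mixed-ancestors} and \cref{lem:mixed-skeleton-mixed-path}; the only mild care needed is to treat the degenerate path (with no internal nodes) separately and to keep straight which orientation of a~path edge is being declared $x$-mixed.
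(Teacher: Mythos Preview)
Your proof is correct and follows essentially the same approach as the paper: show that the child of each internal node on the path lying on the path is $x$-mixed (via \cref{lem:mixed-skeleton-mixed-path}), and conclude that the other child cannot be $x$-mixed lest the internal node be an $x$-branch point. You are merely a bit more explicit than the paper in handling the degenerate case $p=0$ and in separately verifying that $v_{p+1}$ is $x$-mixed.
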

  \begin{claimproof}
    Let $w$ be an~internal node of the path (so $w \notin V(T^\mix)$), $w^+$ be the child of $w$ on the path and $w'$ be the child of $w$ not on the path.
    By \cref{lem:mixed-skeleton-mixed-path}, the node $w^+$ is $x$-mixed.
    Hence it cannot be that $w'$ is $x$-mixed -- otherwise, by definition, $w$ would be an~$x$-branch point.
  \end{claimproof}
  
  For the following observation, let $r$ be the root of $T$ and $r^\mix$ be the root of $T^\mix$.
  \begin{claim}
    If $r \neq r^\mix$, then the simple vertical path between $r$ and $r^\mix$ in $T$ is $x$-shuffleable.
  \end{claim}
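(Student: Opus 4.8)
\begin{claimproof}
  The plan is to mimic the proof of \cref{cl:edge-path-shuffleable}, but with \cref{lem:mixed-skeleton-mixed-ancestors} in place of \cref{lem:mixed-skeleton-mixed-path}, since here the top endpoint of the path is the genuine root $r$ rather than a node of $T^\mix$, so there is no edge of $T^\mix$ to invoke.

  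First I would record the basic positional fact about $r^\mix$: being the root of $T^\mix$, it is a~$T$-ancestor of every node of $V(T^\mix)$ (this is exactly what \cref{lem:mixed-skeleton-lca} provides, as that lemma forces $T^\mix$ to inherit the ancestor--descendant order of $T$). Consequently $r^\mix$ is the $T$-topmost element of $V(T^\mix)$, so no strict ancestor of $r^\mix$ in $T$ lies in $V(T^\mix)$. Now every node on the simple path from $r$ to $r^\mix$ in $T$ is an~ancestor of $r^\mix$, hence by \cref{lem:mixed-skeleton-mixed-ancestors} it is $x$-mixed; in particular the bottom endpoint $r^\mix$ is $x$-mixed, which is one of the two conditions in the definition of an~$x$-shuffleable path.

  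Then I would handle the off-path children. Let $w$ be an~internal node of the path, $w^+$ its child on the path, and $w'$ its other child. The node $w^+$ is again an~ancestor of $r^\mix$ (possibly $w^+ = r^\mix$), hence $x$-mixed; on the other hand $w$ is $x$-mixed but, being a~strict ancestor of $r^\mix$, lies outside $V(T^\mix)$, so $w$ is not an~$x$-branch point. Since an~$x$-branch point is precisely a~two-child node both of whose children are $x$-mixed, and $w^+$ is $x$-mixed, it must be that $w'$ is $x$-full or $x$-empty. This is exactly the remaining condition in the definition of $x$-shuffleability, so the path is $x$-shuffleable.

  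The only step needing any care is the identification of $r^\mix$ as the $T$-topmost node of $V(T^\mix)$; once that is extracted from \cref{lem:mixed-skeleton-lca}, the rest is the same mixedness bookkeeping already carried out for \cref{cl:edge-path-shuffleable}.
\end{claimproof}
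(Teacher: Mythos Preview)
Your proof is correct and follows essentially the same approach as the paper's: both argue that for an internal node $w$ of the path, the on-path child $w^+$ is $x$-mixed (being an ancestor of $r^\mix \in V(T^\mix)$, via \cref{lem:mixed-skeleton-mixed-ancestors}), whence $w'$ cannot be $x$-mixed lest $w$ be an $x$-branch point and land in $V(T^\mix)$. You are simply more explicit than the paper in spelling out why $w \notin V(T^\mix)$ (namely, that $r^\mix$ is the $T$-topmost node of $V(T^\mix)$) and in verifying the bottom-endpoint condition that $r^\mix$ is $x$-mixed.
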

  \begin{claimproof}
    Let $w$, $w^+$ and $w'$ be defined as in \cref{cl:edge-path-shuffleable}.
    Since $w^+$ is an~ancestor of $r^\mix$, the node $w^+$ is $x$-mixed.
    As in the previous claim, we conclude that $w'$ cannot be $x$-mixed.
  \end{claimproof}
  
  We now create a~new rank decomposition $\Tc'$ of $\Vc$ as follows: for every $uv \in E(T^\mix)$ in arbitrary order, perform on $\Tc$ a~boundary-preserving $x$-block shuffle of the vertical path between $u$ and $v$ in $T$ compliant with the statement of \cref{lem:block-shuffle-few-blocks}.
  Also, when $r \neq r^\mix$, apply an~analogous boundary-preserving $x$-block shuffle of the vertical path between $r$ and $r^\mix$ in $\Tc$.
  Let then $\Tc'$ be the decomposition after applying all the $x$-block shuffles.
  We will now verify that $\Tc'$ satisfies all the requirements of the lemma.
  
  First, by \cref{lem:swap-preserves-mixed-skeletons} and the fact that each $x$-block shuffle is a~composition of $x$-swaps, it follows that, for every $y \in V(T^b)$, the $y$-mixed skeletons of $\Tc$ and $\Tc'$ are equal; in particular, $T^\mix$ is the $x$-mixed skeleton of $\Tc'$.
  Next, assume that $y \in V(T^b)$ with $y \ngtr x$. That every $y$-factor of $\Tc$ is also a~$y$-factor of $\Tc'$ follows immediately from \cref{lem:factor-swap-perseverance}.
  It remains to show that the set $\lparts(\Tc^b)[x]$ can be decomposed into $f_{\ref{lem:local-dealternation}}(\ell)$ $x$-factors of $\Tc'$, for some function $f_{\ref{lem:local-dealternation}}$ yet to be defined.
  To this end, we will use the following simple claim:
  \begin{claim}
    \label{cl:shuffling-blocks-into-factors}
    Let $v_0v_1 \dots v_{p+1}$ be an~$x$-shuffleable path in $\Tc'$.
    Assume that the path $v_1 \dots v_p$ is comprised of $n \geq 1$ $x$-blocks.
    Then the set $\lparts_x(\Tc')[v_1] \setminus \lparts_x(\Tc')[v_{p+1}]$ can be decomposed into at most $n$ $x$-context factors of $\Tc'$.
  \end{claim}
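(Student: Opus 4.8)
The plan is to prove Claim~\ref{cl:shuffling-blocks-into-factors} by a direct structural argument on the $x$-shuffleable path. Recall that for each $i \in [p]$ we have the child $v'_i$ of $v_i$ off the path, and each $v'_i$ is either $x$-full or $x$-empty (this is exactly the definition of $x$-shuffleable), while $v_{p+1}$ is $x$-mixed. The $x$-blocks are the maximal intervals $[\ell_1, r_1], \dots, [\ell_n, r_n]$ of $[1,p]$ on which the nodes $v'_i$ are all of the same kind, and consecutive blocks alternate between $x$-empty and $x$-full. The first observation to record is that for each $i \in [p]$ we have $\lparts(\Tc')[v_i] = \lparts(\Tc')[v'_i] \sqcup \lparts(\Tc')[v_{i+1}]$, hence telescoping gives
\[
\lparts(\Tc')[v_1] \setminus \lparts(\Tc')[v_{p+1}] = \bigsqcup_{i=1}^{p} \lparts(\Tc')[v'_i].
\]
Intersecting with $\Vc_x$ and using that $\lparts(\Tc')[v'_i] \subseteq \Vc_x$ when $v'_i$ is $x$-full and $\lparts(\Tc')[v'_i] \cap \Vc_x = \emptyset$ when $v'_i$ is $x$-empty, we get that $\lparts_x(\Tc')[v_1] \setminus \lparts_x(\Tc')[v_{p+1}]$ equals the disjoint union of $\lparts(\Tc')[v'_i]$ over those $i$ with $v'_i$ being $x$-full.

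Next I would bundle the $x$-full $v'_i$'s block by block. For each $x$-full block $[\ell_j, r_j]$ the telescoping identity applied to the sub-path gives
\[
\bigsqcup_{i=\ell_j}^{r_j} \lparts(\Tc')[v'_i] = \lparts(\Tc')[v_{\ell_j}] \setminus \lparts(\Tc')[v_{r_j+1}],
\]
and since every $v'_i$ in this block is $x$-full, this set is contained in $\Vc_x$. So it is either an $x$-tree factor (if $\lparts(\Tc')[v_{r_j+1}] = \emptyset$, which does not happen here since $v_{r_j+1}$ lies on a path to the $x$-mixed node $v_{p+1}$ — but in any case) or an $x$-context factor of $\Tc'$, being the difference of the two tree factors $\lparts(\Tc')[v_{\ell_j}]$ and $\lparts(\Tc')[v_{r_j+1}]$. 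Since the blocks alternate, there are at most $\lceil n/2 \rceil \le n$ blocks of $x$-full type, so $\lparts_x(\Tc')[v_1] \setminus \lparts_x(\Tc')[v_{p+1}]$ decomposes into at most $n$ $x$-context factors (one per $x$-full block); should a block degenerate to a tree factor one can still count it as a context factor or simply note $n$ suffices as an upper bound.

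The only subtlety — and the step I expect to need the most care — is matching the exact phrasing of the claim: the claim speaks of $\lparts_x(\Tc')[v_1] \setminus \lparts_x(\Tc')[v_{p+1}]$, so I must be careful that $\lparts_x(\Tc')[v_1] \setminus \lparts_x(\Tc')[v_{p+1}] = \lparts(\Tc')[v_1] \cap \Vc_x \setminus \lparts(\Tc')[v_{p+1}]$ really coincides with the union of the $x$-full $\lparts(\Tc')[v'_i]$; this is where the disjointness of $v'_i$ from $\Vc_x$ for $x$-empty $v'_i$ is used, and one should verify $\lparts(\Tc')[v_{p+1}] \cap \Vc_x \subseteq \lparts(\Tc')[v_1] \cap \Vc_x$ so the set-difference behaves as expected (this holds because $v_{p+1}$ is a descendant of $v_1$ in $T$, so $\lparts(\Tc')[v_{p+1}] \subseteq \lparts(\Tc')[v_1]$). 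Once these bookkeeping points are pinned down, the bound of $n$ context factors is immediate, and the claim follows; in the surrounding proof this will be applied to each edge of $T^\mix$ and to the root path, multiplying the per-path bound $f_{\ref{lem:block-shuffle-few-blocks}}(\ell)$ by the number of skeleton edges, which is at most $f_{\ref{lem:small-mixed-skeleton}}(\ell)$, to obtain $f_{\ref{lem:local-dealternation}}(\ell)$.
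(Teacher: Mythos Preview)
Your proof is correct and follows essentially the same approach as the paper: telescope along the vertical path to write $\lparts(\Tc')[v_1]\setminus\lparts(\Tc')[v_{p+1}]$ as the disjoint union of the off-path subtrees $\lparts(\Tc')[v'_i]$, then bundle each $x$-full block $[\ell_j,r_j]$ into the single factor $\lparts(\Tc')[v_{\ell_j}]\setminus\lparts(\Tc')[v_{r_j+1}]$. Your observation that a one-element $x$-full block yields a tree factor rather than a context factor is a genuine nitpick (the paper's own proof glosses over it too), but as you note it is harmless since the surrounding argument only needs a bound on the number of $x$-factors.
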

  \begin{claimproof}
    For every $i \in [p]$, let $v'_i$ be the child of $v_i$ not on the path.
    Since $v_0v_1 \dots v_{p+1}$ is $x$-shuffleable, every node $v'_i$ is either $x$-empty or $x$-full.
    Moreover, each $x$-block $[\ell, r] \subseteq [1, p]$ is either an $x$-empty block (and then $\lparts(\Tc)[v_\ell] \setminus \lparts(\Tc)[v_{r+1}]$ is disjoint from $\Vc_x$) or an $x$-full block (and then $\lparts(\Tc)[v_\ell] \setminus \lparts(\Tc)[v_{r+1}] \subseteq \Vc_x$; so in particular, $\lparts(\Tc)[v_\ell] \setminus \lparts(\Tc)[v_{r+1}]$ is an~$x$-context factor of $\Tc'$).
    Therefore, $\lparts_x(\Tc)[v_1] \setminus \lparts_x(\Tc)[v_{p+1}] = (\lparts(\Tc)[v_1] \setminus \lparts(\Tc)[v_{p+1}]) \cap \Vc_x$ is a~disjoint union of $x$-context factors of the form $\lparts(\Tc)[v_\ell] \setminus \lparts(\Tc)[v_{r+1}]$, ranging over all $x$-full blocks $[\ell, r] \subseteq [1, p]$.
  \end{claimproof}
  
  Let $r$ be the root of $T'$, and $r^\mix$ be the root of $T^\mix$.
  Observe that every leaf $l$ of $T'$ can be uniquely assigned to one of the following groups:
  \begin{itemize}
    \item the group of leaves that are not descendants of $r^\mix$ (if $r \neq r^\mix$);
    \item for every $x$-leaf point $v \in V(T^\mix)$, the group of leaves that are descendants of $v$;
    \item for every edge $uv \in E(T^\mix)$, where $u$ is an~ancestor of $v$ in $T$, the group of leaves that are descendants of $w$ but not $v$, where $w$ is the child of $u$ on the path between $u$ and $v$ in $T$.
  \end{itemize}
  Thus, $\lparts(\Tc^b)[x] = \lparts_x(\Tc')[r]$ is the disjoint union of the following sets:
  \begin{itemize}
    \item $\lparts_x(\Tc')[r] \setminus \lparts_x(\Tc')[r^\mix]$ (if $r \neq r^\mix$);
    \item for every $x$-leaf point $v \in V(T^\mix)$, the set $\lparts_x(\Tc')[v]$;
    \item for every edge $uv \in E(T^\mix)$, where $u$ is an~ancestor of $v$ in $T$, the set $\lparts_x(\Tc')[w] \setminus \lparts_x(\Tc')[v]$, where $w$ is the child of $u$ on the path between $u$ and $v$ in $T$.
  \end{itemize}
  
  If $r \neq r^\mix$, then let $r_0r_1 \dots r_t$ be the vertical path between $r$ and $r^\mix$ ($r_0 = r$, $r_t = r^\mix$, $t \geq 1$).
  Since we performed a~boundary-preserving $x$-block shuffle along the vertical path $r_0 r_1 \dots r_t$, we get that the vertical path $r_1 \dots r_{t-1}$ comprises at most $f_{\ref{lem:block-shuffle-few-blocks}}(\ell)$ $x$-blocks; hence, by \cref{cl:shuffling-blocks-into-factors}, the set $\lparts_x(\Tc')[r_1] \setminus \lparts_x(\Tc')[r^\mix]$ can be partitioned into at most $f_{\ref{lem:block-shuffle-few-blocks}}(\ell)$ $x$-context factors of $\Tc'$.
  Let $r'$ be the child of $r$ not on the path from $r$ to $r^\mix$; then $r'$ is $x$-empty or $x$-full.
  If $r'$ is $x$-full, we add one additional tree factor $\lparts(\Tc')[r']$ to the partitioning.
  So $\lparts_x(\Tc')[r] \setminus \lparts_x(\Tc')[r^\mix] = \lparts_x(\Tc')[r'] \cup \left(\lparts_x(\Tc')[r_1] \setminus \lparts_x(\Tc')[r^\mix]\right)$ can be partitioned into at most $f_{\ref{lem:block-shuffle-few-blocks}}(\ell) + 1$ $x$-factors of $\Tc'$.
  
  Next, for every edge $uv \in V(T^\mix)$, where $u$ is an~ancestor of $v$ in $T$, let $w$ be the child of $u$ on the path from $u$ to $v$ in $T'$.
  Applying \cref{cl:shuffling-blocks-into-factors}, we get that the set $\lparts_x(\Tc')[w] \setminus \lparts_x(\Tc')[v]$ can be partitioned into $f_{\ref{lem:block-shuffle-few-blocks}}(\ell)$ $x$-context factors of $\Tc'$.
  Finally, for every $x$-leaf point $v$ in $T'$, one child $v^+$ of $v$ is $x$-full and the other child $v^-$ is $x$-empty. 
  So $\lparts_x(\Tc')[v] = \lparts(\Tc')[v^+]$ and the set $\lparts_x(\Tc')[v]$ is exactly an~$x$-tree factor of $\Tc'$.
  
  Summing up, we can partition the set $\lparts(\Tc^b)[x] = \lparts_x(\Tc')[r]$ into at most
  \[ f_{\ref{lem:block-shuffle-few-blocks}}(\ell) \cdot (|E(T^\mix)| + 1) + |V(T^\mix)| + 1 \leq
    (f_{\ref{lem:block-shuffle-few-blocks}}(\ell) + 1) f_{\ref{lem:small-mixed-skeleton}}(\ell) + 1 \]
  $x$-factors of $T'$.
  This finishes the proof of the Local Dealternation Lemma and it is enough to set $f_{\ref{lem:local-dealternation}}(\ell) = (f_{\ref{lem:block-shuffle-few-blocks}}(\ell) + 1) f_{\ref{lem:small-mixed-skeleton}}(\ell) + 1$.
\end{proof}

\section{Using rank decomposition automata to compute closures}
\label{sec:computing-closures}
This section is dedicated to the proofs of \Cref{lem:closureprds,lem:near-linear-annotated-decomposition-recovery}.
Along the way, we produce two rank decomposition automata that will be used by us heavily throughout the proof:
\begin{itemize}
  \item the \emph{exact rankwidth automaton} (\cref{ssec:rankwidth-automaton}) that for two fixed integers $k, \ell$ verifies, given an~annotated rank decomposition of width $\ell$ encoding a~partitioned graph $(G, \prt)$, whether $(G, \prt)$ has rankwidth at most $k$; and
  \item the \emph{closure automaton} (\cref{ssec:closure-automaton}) that, roughly speaking, for an~annotated tree decomposition $\Tc$ encoding a~graph $G$ and a~prefix $\Tpref$ of $\Tc$, represents how a~$c$-small $k$-closure of $\Tpref$ can look like in each subtree rooted at an~edge $\opx \in \oApp_T(\Tpref)$.
\end{itemize}

The exact rankwidth automaton given in \cref{ssec:rankwidth-automaton} will imply \Cref{lem:near-linear-annotated-decomposition-recovery}.
Finally, in \cref{ssec:ds-closures}, we will use both automata to produce a~data structure for minimal closures of \cref{lem:closureprds}.


In this section, we rely on the concepts and notation defined in \cref{ssec:dealternation-prelims}, in particular the subspace arrangements of linear spaces and rank decompositions thereof.


\subsection{Exact rankwidth automaton}
\label{ssec:rankwidth-automaton}

In this subsection, we will present an~implementation of the exact rankwidth automaton.
As a~consequence, we will also show that for any pair of integers $k, \ell \in \N$, one can determine -- in linear time with respect to the size of the graph -- whether a~partitioned graph, encoded by an~annotated rank decomposition of width at most $\ell$, has rankwidth at most $k$.
Moreover, in the positive case, in linear time we can recover a~rank decomposition of the partitioned graph of width at most $k$ (or in near-linear time if we require the output to be an~annotated decomposition).
The construction of the automaton crucially relies on the understanding of the cubic-time algorithm of Jeong, Kim and Oum~\cite{DBLP:journals/siamdm/JeongKO21} computing optimum-width rank decompositions of graphs and, more generally, subspace arrangements.
We proceed to give a~summary of this algorithm below.

\paragraph*{Summary of the algorithm of Jeong, Kim and Oum~\cite{DBLP:journals/siamdm/JeongKO21}.}
The $\Oh[\ell]{n^3}$ algorithm of~\cite{DBLP:journals/siamdm/JeongKO21} for rankwidth of subspace arrangements uses at its core the following subroutine: given two integers $k$, $\ell$ with $\ell \geq k$, a~subspace arrangement $\Vc$ of subspaces of $\F^d$, and a~rank decomposition of $\Vc$ of width $\ell$, determine whether a~rank decomposition of $\Vc$ of width $k$ exists; and if so, construct any such decomposition.
This subroutine is an~analog of a~similar linear-time algorithm for tree decompositions of graphs by Bodlaender and Kloks~\cite{DBLP:journals/jal/BodlaenderK96}.
Here, we provide a~brief description of the subroutine in~\cite{DBLP:journals/siamdm/JeongKO21}.

Suppose $\Tc^b = (T^b, \lambda^b)$ is a~rooted rank decomposition of $\Vc$ of width $\ell$.
Ideally, we would wish to compute, for every node $x \in V(\Tc^b)$, the set of all possible (unrooted) rank decompositions of $\Vc_x$ of width at most $k$.
Such sets would be computed using a~bottom-up dynamic programming scheme on $\Tc^b$ -- the only slightly non-trivial part is understanding, for a~node $x \in V(\Tc^b)$ with two children $c_1$, $c_2$, how to find the set of all rank decompositions of $\Vc_x$ of small width, given the corresponding sets of decompositions of $\Vc_{c_1}$ and $\Vc_{c_2}$.
Obviously, this idea, while correct, is doomed to fail since a~graph can (and usually will) have an~exponential number of valid rank decompositions of small width.

Thus, \cite{DBLP:journals/siamdm/JeongKO21} mimics the insight of Bodlaender and Kloks that, for each rank decomposition of $\Vc_x$ of small width, we can record just essential information about it, which. Roughly speaking, this information is a~heavily compressed version of the rank decomposition, with the details irrelevant to the subspaces in $\Vc \setminus \Vc_x$ stripped off. This information is named a~\emph{compact $B$-namu} in~\cite{DBLP:journals/siamdm/JeongKO21}.\footnote{In~\cite{DBLP:journals/jal/BodlaenderK96}, an~analogous piece of information is called a~\emph{characteristic}.}
Precisely, given a~subspace $B$ of $\F^d$, we define a~$B$-namu as a~tuple $(T, \alpha, {\bf w}, U)$, where: (i) $T$ is a~subcubic tree, possibly with some degree-$2$ nodes, (ii) $U$ is a~subspace of $B$, (iii) every oriented edge $\ouv \in \oE(T)$ is decorated with a~subspace $\alpha(\ouv) \subseteq U$, (iv) every edge $uv \in E(T)$ is decorated with an~integer ${\bf w}(uv) \geq 0$.
(There are a~couple of additional restrictions on the values of $\alpha(\cdot)$ and ${\bf w}(\cdot)$ -- that is, we have $\alpha(\vec{v_1v_2}) \subseteq \alpha(\vec{v_3v_4})$ whenever $\vec{v_1v_2}$ is a~predecessor of $\vec{v_3v_4}$ in $T$, and we have ${\bf w}(uv) \geq \dim(\alpha(\ouv) \cap \alpha(\ovu))$ for all $uv \in E(T)$ -- but these will be unimportant for our purposes. Similarly, their definition of a~$B$-namu includes additional objects that can be uniquely deduced from $(T, \alpha, {\bf w}, U)$.)
The width of a~$B$-namu is the maximum value of ${\bf w}(uv)$, or $0$ if $T$ is edgeless.

Note that there exists a~natural way of turning a~rank decomposition $\Tc = (T, \lambda)$ of $\Vc$ into a~$B$-namu $(T, \alpha, {\bf w}, U)$: we define $\alpha(\ouv) = B \cap \sumof{\lparts(\Tc)[\ouv]}$, ${\bf w}(uv) = \dim(\sumof{\lparts(\Tc)[\ouv]} \cap \sumof{\lparts(\Tc)[\ovu]})$, and $U = B \cap \sumof{\Vc}$.
It is a~straightforward exercise to verify that such a~construction indeed produces a~valid $B$-namu of width equal to the width of $\Tc$.

For the purposes of this summary we do not describe how to compress a~$B$-namu into the equivalent compact $B$-namu of equal width~\cite[Section 3.2]{DBLP:journals/siamdm/JeongKO21}.
Here, we only present the most essential takeaway of this process: Assuming bounded $\dim(B)$, compact $B$-namus of bounded width are small and all such $B$-namus can be generated quickly.
Henceforth, let $U_k(B)$ denote the set of all compact $B$-namus of width at most $k$.
Next, for any ordered basis $\BB = (\lin{v}_1, \dots, \lin{v}_{|\BB|})$ of $B$, let $U_k(\BB)$ denote the same set of compact $B$-namus, but where each subspace of $B$ is represented in the basis $\BB$. (So in a $B$-namu represented in the ordered basis $\BB$ of $B$, every subspace $A \subseteq B$ is encoded by a~sequence of $\dim(A) \cdot |\BB|$ bits $c_{ij}$ for $i \in [\dim(A)]$, $j \in [|\BB|]$ as the subspace spanned by vectors $\sum_{j=1}^{|\BB|} c_{ij}\lin{v}_j$ for $i \in [\dim(A)]$.)
For $t \in \N$, let $U_k^t$ denote the set of all possible encodings of compact $B$-namus of width at most $k$ in an~ordered basis of size at most $t$.
(So $U_k(\BB) \subseteq U_k^t$ for every ordered basis $\BB$ with $|\BB| \leq t$.)
Then we have that:


\begin{lemma}[{\cite[Lemma 5.4]{DBLP:journals/siamdm/JeongKO21}}]
  \label{lem:compact-namus}
  There exist functions $f_{\ref{lem:compact-namus}}\,\colon\, \N^2 \to \N$ and $g_{\ref{lem:compact-namus}}, h_{\ref{lem:compact-namus}}\,\colon\, \N^3 \to \N$ such that the following holds.
  Assume $\dim(B) = \ell \geq 0$.
  If $\Gamma = (T, \alpha, {\bf w}, U)$ is a~compact $B$-namu of width at most $k$, then $|E(T)| \leq f_{\ref{lem:compact-namus}}(k, \ell)$.
  Moreover, we have that $|U_k(B)| \leq g_{\ref{lem:compact-namus}}(k, \ell, |\F|)$ and moreover, the entire set $U_k^{\ell}$ can be generated in time $h_{\ref{lem:compact-namus}}(k, \ell, |\F|)$.
\end{lemma}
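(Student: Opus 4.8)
The statement to prove (\cref{lem:compact-namus}) is quoted verbatim from \cite[Lemma 5.4]{DBLP:journals/siamdm/JeongKO21}, so strictly speaking the ``proof'' is a citation. However, since this is a restatement adapted to the present paper's notation (compact $B$-namus, the sets $U_k(B)$, $U_k(\BB)$, $U_k^t$), the plan would be to carefully justify the adaptation rather than reprove the structure theory of namus from scratch.

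\textbf{The plan is the following.} First I would recall from \cite{DBLP:journals/siamdm/JeongKO21} the precise definition of a compact $B$-namu and the two key structural facts about them: (i) a compact $B$-namu of width at most $k$ has a bounded number of nodes whose degree is not exactly $2$ together with the ``trimming'' bound that forbids long paths of degree-$2$ nodes carrying redundant $\alpha$-labels, and (ii) the total number of distinct compact $B$-namus (up to isomorphism) of width at most $k$ is bounded in terms of $k$, $\dim(B)$, and $|\F|$. Fact (i) directly yields the edge bound $|E(T)| \le f_{\ref{lem:compact-namus}}(k,\ell)$: the number of branching/leaf nodes is $\Oh{1}$ (a subcubic tree with $\Oh[k,\ell]{1}$ leaves), and along each of the $\Oh[k,\ell]{1}$ maximal degree-$2$ paths the number of \emph{distinct} values of $\alpha(\ouv)$ is at most $\dim(U) + 1 \le \ell + 1$, and compactness forces consecutive equal values to be contracted, so each such path has length $\Oh{\ell}$.

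\textbf{Second}, for the cardinality bound $|U_k(B)| \le g_{\ref{lem:compact-namus}}(k, \ell, |\F|)$: having fixed the combinatorial skeleton $T$ (finitely many choices, by the edge bound), each oriented edge is decorated with a subspace $\alpha(\ouv) \subseteq U \subseteq B$, and there are at most $\Oh{|\F|^{\ell^2}}$ subspaces of an $\ell$-dimensional space over $\F$; each edge carries an integer weight ${\bf w}(uv) \in [0,k]$; and $U$ itself is one of $\Oh{|\F|^{\ell^2}}$ subspaces of $B$. Multiplying these finitely many choices over the $\Oh[k,\ell]{1}$ edges gives a bound depending only on $k, \ell, |\F|$.

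\textbf{Third}, for the enumeration claim $U_k^{\ell}$ can be generated in time $h_{\ref{lem:compact-namus}}(k,\ell,|\F|)$: one brute-forces over all subcubic trees of size up to $f_{\ref{lem:compact-namus}}(k,\ell)$, over all assignments of weights in $[0,k]$ to edges, and over all assignments of subspaces — now represented concretely as spans of $\le \ell$ vectors in $\F^{\le \ell}$, i.e. bit-matrices of bounded size — to $U$ and to each $\alpha(\ouv)$, and then filters the resulting tuples by checking (in $\Oh[k,\ell]{1}$ time each) the namu axioms: the monotonicity $\alpha(\vec{v_1v_2}) \subseteq \alpha(\vec{v_3v_4})$ along predecessors, the inequality ${\bf w}(uv) \ge \dim(\alpha(\ouv)\cap\alpha(\ovu))$, compactness, and whatever remaining conditions the definition imposes. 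Since each of these search spaces and checks is bounded purely in terms of $k$, $\ell$, $|\F|$, the total running time is a function $h_{\ref{lem:compact-namus}}(k,\ell,|\F|)$, and the output indeed contains $U_k(\BB)$ for every ordered basis $\BB$ of size $\le \ell$, because the encoding of any compact $B$-namu of width $\le k$ in such a basis is one of the generated tuples.

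\textbf{The main obstacle} is purely expository rather than mathematical: faithfully importing the definition of ``compact $B$-namu'' (including the trimming/compactification procedure and all side conditions on $\alpha$ and ${\bf w}$) from \cite{DBLP:journals/siamdm/JeongKO21} so that the edge bound in (i) and the well-definedness of the enumeration in (iii) are self-contained, without reintroducing the entire namu machinery. In the writeup I would keep the definition at the level of detail already used in the summary above and defer the verification that trimmed compact namus have bounded size to the cited \cite[Section 3.2, Lemma 5.4]{DBLP:journals/siamdm/JeongKO21}, treating \cref{lem:compact-namus} essentially as a restatement with the representation-in-a-basis refinement (the sets $U_k(\BB)$ and $U_k^t$) being the only genuinely new, and entirely routine, ingredient.
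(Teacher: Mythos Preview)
Your proposal is correct and matches the paper's approach: the paper does not prove this lemma at all but simply cites it from \cite[Lemma 5.4]{DBLP:journals/siamdm/JeongKO21}, exactly as you recognize in your first paragraph. Your sketch of the underlying argument (bounded branching plus bounded path lengths from compactness, then brute-force enumeration over bounded-size objects) is accurate and goes beyond what the paper itself provides.
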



Now, given $\Tc^b$, Jeong, Kim and Oum aim to compute for each $x \in V(T^b)$ the \emph{full set} at $x$ of width $k$ with respect to $\Tc^b$: essentially, the set of compact $B_x$-namus of all possible \emph{totally pure} unrooted rank decompositions of $\Vc_x$ of width at most $k$.\footnote{The notion of totally pure decompositions is announced in \cref{ssec:dealternation-prelims} and formally introduced in \cref{sec:totally-pure-decomp}, however the exact definition is not relevant here. For this recap, it is enough to remember that a~rank decomposition of $\Vc$ of width at most $k$ exists if and only if a~totally pure rank decomposition of $\Vc$ of width at most $k$ also exists (\cref{lem:korean-rank-structure-theorem}).}
This full set is denoted $\fullset_k(x)$.
Note that $\fullset_k(x) \subseteq U_k(B_x)$.
Since $\F = \GF(2)$ in our work and $|B_x| \leq \ell$ (since $\Tc^b$ is a~decomposition of width at most $\ell$), we see that $|\fullset_k(x)| \leq |U_k(B_x)| \leq g_{\ref{lem:compact-namus}}(k, \ell, 2)$.
Similarly, for an~ordered basis $\BB_x$ of $B_x$, let $\fullset_k^{\BB_x}(x) \subseteq U_k(\BB_x) \subseteq U_k^{|\BB_x|}$ denote the set $\fullset_k(x)$, but where all subspaces of $B_x$ are represented in the ordered basis $\BB_x$ as described above.

In order to facilitate the efficient computation of the full sets, \cite{DBLP:journals/siamdm/JeongKO21} introduces the notion of a~\emph{transcript} of $\Tc^b$.
Recall that the boundary space of $x$ is defined as $B_x = \sumof{\Vc_x} \cap \sumof{\Vc \setminus \Vc_x}$.
We now also define the space $B'_x$ as follows:
\[
  B'_x = 
  \begin{cases}
    B_x & \text{if $x$ is a~leaf of $T^b$}, \\
    B_x + B_{c_1} + B_{c_2} & \text{if $x$ is an~internal node of $T^b$ with children $c_1$ and $c_2$}.
  \end{cases}
\]
(\cite{DBLP:journals/siamdm/JeongKO21} equivalently uses $B_{c_1} + B_{c_2}$ in the second case, after having proved the inclusion $B_x \subseteq B_{c_1} + B_{c_2}$.)
Then a~\emph{transcript} of $\Tc^b$ is formed from two sets of ordered bases $\{\BB_x\}_{x \in V(T^b)}$ and $\{\BB'_x\}_{x \in V(T^b)}$ under the following conditions for all $x \in V(T^b)$:
\begin{itemize}
  \item $\BB_x$ is an~ordered basis of $B_x$ and $\BB'_x$ is an~ordered basis of $B'_x$;
  \item $\BB_x$ is a~prefix of $\BB'_x$.
\end{itemize}
Then, for any non-root node $x$ with parent $y$, we define the \emph{transition matrix} of $x$ as the unique $|\BB'_y| \times |\BB_x|$ matrix $M_{\vec{xy}}$ over $\F$ with the following property: Suppose ${\bf v}$ is a~vector in $B_x$ and that ${\bf v}' \in \F^{|\BB_x|}$ is the (unique) representation of ${\bf v}$ in the ordered basis $\BB_x$, that is, ${\bf v} = \sum_{i=1}^{|\BB_x|} {\bf v}'_i \cdot (\BB_x)_i$.
Then $M_{\vec{xy}} {\bf v}'$ is the unique representation of ${\bf v}$ in the ordered basis $\BB'_y$.
Intuitively, $M_{\vec{xy}}$ describes how the space $B_x$ embeds as a~subspace in $B'_y$.
Notably, this description has bitsize bounded by $\Oh{\ell^2}$ since $|\BB_x| = \dim(B_x) \leq \ell$ and $|\BB'_y| = \dim(B'_y) \leq \Oh{\ell}$, even though $B_x$ and $B'_y$ are subspaces of the highly-dimensional space $\F^d$.
This should be contrasted with the actual ordered bases $\BB_x$, $\BB'_y$: The representation of each ordered basis requires $\Omega(d)$ bits, and in our setting we will have $d = n$.
Therefore, even storing the transcript $\{\BB_x\}_{x \in V(T^b)}$, $\{\BB'_x\}_{x \in V(T^b)}$ requires $\Omega(n^2)$ bits of storage, so we cannot hope to compute it in subquadratic time.


It is then proved that:
\begin{lemma}[informal statement of {\cite[Theorem 7.8]{DBLP:journals/siamdm/JeongKO21}}]
  \label{lem:informal-transcript}
  Suppose that the subspace arrangement $\Vc$ is suitably preprocessed and let $n = |\Vc|$.
  Moreover, assume that each subspace in $\Vc$ has dimension at most $\ell$.
  Then given a~rooted rank decomposition $\Tc^b$ of width at most $\ell$, we can compute a~transcript $(\{\BB_x\}_{x \in V(T^b)},\ \{\BB'_x\}_{x \in V(T^b)})$ and the set of transition matrices $M_{\vec{xy}}$ in time $\Oh[\ell]{n^2}$.
\end{lemma}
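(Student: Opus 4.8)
The plan is to invoke the preprocessing and dynamic programming of Jeong, Kim and Oum~\cite[Section~7]{DBLP:journals/siamdm/JeongKO21}: since \cref{lem:informal-transcript} is a~restatement of~\cite[Theorem~7.8]{DBLP:journals/siamdm/JeongKO21}, I will only recall the construction and pinpoint where the $\Oh[\ell]{n^2}$ bound arises. After the (cubic-time) preprocessing of $\Vc$ I may assume access to, for every node $x \in V(T^b)$, an~explicit basis of the boundary space $B_x = \sumof{\Vc_x} \cap \sumof{\Vc \setminus \Vc_x}$, given as a~list of vectors of $\F^n$; since $\Tc^b$ has width at most $\ell$ this is a~list of at most $\ell$ vectors, occupying $\Oh[\ell]{n}$ space per node and $\Oh[\ell]{n^2}$ in total.

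From these bases I would first compute, by one bottom-up pass over $\Tc^b$, a~basis of each space $B'_x$: at a~leaf $B'_x = B_x$, and at an~internal node $x$ with children $c_1, c_2$ one has $B'_x = B_x + B_{c_1} + B_{c_2}$, so a~basis is obtained by Gaussian elimination over $\GF(2)$ on the concatenation of the already computed bases of $B_x$, $B_{c_1}$, $B_{c_2}$; note $\dim(B'_x) \le 3\ell$. To assemble a~transcript I would, for each $x$, take an~arbitrary ordered basis $\BB_x$ of $B_x$ and extend it to an~ordered basis $\BB'_x$ of $B'_x$ by greedily appending vectors from the bases of $B_{c_1}$ and $B_{c_2}$ that are not yet spanned; this is legitimate because $B_x \subseteq B'_x$, and $\BB_x$ is a~prefix of $\BB'_x$ by construction. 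Finally, for every non-root node $x$ with parent $y$, I would compute the transition matrix $M_{\vec{xy}}$ by expressing each vector of $\BB_x$ in the ordered basis $\BB'_y$ and taking the resulting coefficient vectors as the columns of $M_{\vec{xy}}$; this linear system is always solvable because $y$ is internal and $B'_y = B_y + B_x + B_{c'}$ where $c'$ is the sibling of $x$, so $B_x \subseteq B'_y$. Correctness of all three steps is immediate from the definitions of $B_x$, $B'_x$, transcripts, and transition matrices; crucially, although $B_x$ and $B'_y$ are subspaces of the high-dimensional $\F^n$, the matrix $M_{\vec{xy}}$ has format at most $3\ell \times \ell$ and hence is representable in $\Oh[\ell]{1}$ bits.

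For the running time: each of the $\Oh{n}$ nodes, and each of the $\Oh{n}$ edges, performs a~constant number of Gaussian-elimination or linear-solve steps on $\Oh[\ell]{1}$ vectors of $\F^n$ — the recombination producing $B'_x$, the independence checks extending $\BB_x$ to $\BB'_x$, and the solve producing $M_{\vec{xy}}$ — each costing $\Oh[\ell]{n}$, for a~total of $\Oh[\ell]{n^2}$. The main obstacle is not conceptual but quantitative: the ordered bases $\BB_x$, $\BB'_x$ genuinely live in $\F^n$, so a~single recombination already costs $\Theta(n)$ and there is no room to beat $\Oh[\ell]{n^2}$ in the general subspace-arrangement setting — this is exactly the bottleneck that the remainder of \cref{sec:computing-closures} sidesteps, in the special case of rankwidth of graphs, by maintaining the small transition matrices directly from an~annotated rank decomposition rather than ever materialising the ordered bases. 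Beyond that, the only care required is bookkeeping the linear algebra so as to certify the claimed complexity; the substantive content (the cubic-time preprocessing, and the correctness of the downstream dynamic programming over compact $B$-namus) is entirely that of~\cite{DBLP:journals/siamdm/JeongKO21}.
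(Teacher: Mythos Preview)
The paper does not supply its own proof of this lemma: it is stated purely as a citation of \cite[Theorem~7.8]{DBLP:journals/siamdm/JeongKO21}, and the surrounding text only comments on why the $\Oh[\ell]{n^2}$ bound is a bottleneck (namely, that the transcript is computed explicitly as vectors in $\F^n$). Your sketch is a faithful and correct reconstruction of that argument --- compute the bases $\BB_x$ explicitly, extend to $\BB'_x$ by Gaussian elimination on $\Oh[\ell]{1}$ vectors of length $n$, and solve for the transition matrices --- and your identification of the bottleneck matches the paper's own remark exactly.
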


The quadratic dependency on the size of the subspace arrangement in \cref{lem:informal-transcript} is a~bottleneck of the algorithm in \cite{DBLP:journals/siamdm/JeongKO21}.
The reason the algorithm in \cref{lem:informal-transcript} is inefficient is that it \emph{does} determine the transcript explicitly; it is, however, not clear at all how to avoid this step when processing general subspace arrangements.
Our contribution is to show that in the setting of rank decompositions of graphs, the transition matrices of some fixed transcript of $\Tc^b$ can be efficiently inferred from an~annotated rank decomposition that encodes a~graph.


Finally, Jeong et al.\ prove the following claim.
Note that this statement is not present explicitly in their work, but it follows immediately from the analysis of their Algorithm~3.1 and their discussion of Proposition~7.10 in Section~7.5.
\begin{lemma}[{\cite{DBLP:journals/siamdm/JeongKO21}}]
  \label{lem:full-set-ingredients}
  Suppose $(\{\BB_x\}_{x \in V(T^b)},\ \{\BB'_x\}_{x \in V(T^b)})$ is a~transcript of $\Tc^b$ and for every $x \in V(T^b)$ with parent $y$, $M_{\vec{xy}}$ is the transition matrix of $x$ with respect to the transcript.
  Then, for any $x \in V(T^b)$:
  \begin{itemize}
    \item If $x$ is a~leaf of $T^b$, then $\fullset^{\BB_x}_k(x)$ contains exactly one $B_x$-namu that can be computed knowing only the cardinality of $\BB_x$ in time $\Oh[\ell]{1}$.
    \item If $x$ is a~non-leaf node of $T^b$ with two children $c_1, c_2$, then $\fullset^{\BB_x}_k(x)$ can be computed from $\fullset^{\BB_{c_1}}_k(c_1)$, $\fullset^{\BB_{c_2}}_k(c_2)$, $M_{\vec{c_1x}}$, $M_{\vec{c_2x}}$ and $|\BB_x|$ in time $\Oh[\ell]{1}$.
  \end{itemize}
\end{lemma}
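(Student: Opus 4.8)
The plan is to extract the statement from the dynamic-programming analysis of Jeong, Kim and Oum~\cite{DBLP:journals/siamdm/JeongKO21}, invoking their notions of compact $B$-namus, full sets, and the join operation on $B$-namus, together with the size and enumeration bounds of \cref{lem:compact-namus}. Recall that $\fullset_k(x)$ is a subfamily of $U_k(B_x)$, the compact $B_x$-namus of width at most $k$, and that $\fullset^{\BB_x}_k(x)$ is the same family with every subspace of $B_x$ rewritten in the coordinates of the ordered basis $\BB_x$. Since $\F = \GF(2)$ and $\Tc^b$ has width at most $\ell$, we have $\dim(B_x) \le \ell$ and $\dim(B'_x) \le \Oh{\ell}$, so \cref{lem:compact-namus} gives $|\fullset^{\BB_x}_k(x)| \le g_{\ref{lem:compact-namus}}(k,\ell,2) = \Oh[\ell]{1}$, with each element a tree on at most $f_{\ref{lem:compact-namus}}(k,\ell)$ edges decorated by subspaces of $B_x$ whose $\BB_x$-coordinate representation uses $\Oh[\ell]{1}$ bits. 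It therefore suffices to describe an $\Oh[\ell]{1}$-time procedure that manipulates only these bounded-size descriptions, never the ambient space $\F^n$ or the bases $\BB_x$ themselves (which need $\Omega(n)$ bits); note that $|\BB'_x|$, $|\BB_{c_1}|$, $|\BB_{c_2}|$ are readable from the shapes of the transition matrices and the prefix length $|\BB_x| \le |\BB'_x|$ is given.

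For the base case, let $l$ be a leaf of $T^b$. Then $\Vc_l$ is a single subspace, its unique totally pure rank decomposition (under the leaf convention of~\cite{DBLP:journals/siamdm/JeongKO21}) is trivial, and the associated $B_l$-namu has the one-node edgeless tree, width $0$, and subspace $U = B_l$ (using $B_l \subseteq \sumof{\Vc_l} \subseteq \sumof{\Vc}$). In $\BB_l$-coordinates $U = B_l$ is the span of all $|\BB_l|$ coordinate vectors, so this single element of $\fullset^{\BB_l}_k(l)$ is determined by $|\BB_l|$ alone and is written down in $\Oh[\ell]{1}$ time.

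For the inductive case, let $x$ have children $c_1, c_2$. I would unfold the characterization of $\fullset_k(x)$ from~\cite[Section~7.5]{DBLP:journals/siamdm/JeongKO21}: its elements are obtained by (i) \emph{lifting} each $\Gamma_i \in \fullset_k(c_i)$ into the common space $B'_x = B_x + B_{c_1} + B_{c_2}$, i.e.\ applying $M_{\vec{c_i x}}$ to the $\BB_{c_i}$-coordinate vectors of every subspace decorating $\Gamma_i$, turning $\Gamma_i$ into a $\BB'_x$-coordinatised $B'_x$-namu; (ii) running the \emph{join} operation of~\cite{DBLP:journals/siamdm/JeongKO21} on the two lifted namus to produce the candidate $B'_x$-namus of $\Vc_x$; (iii) discarding those of width $> k$ or violating the purity conditions of~\cite{DBLP:journals/siamdm/JeongKO21}; and (iv) \emph{restricting} each survivor to $B_x$ --- replacing every decorating subspace $A \subseteq B'_x$ by $A \cap B_x$ and $U$ by $U \cap B_x$ --- and compressing the result into an element of $U_k(B_x)$. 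Step (i) is a linear map on $\Oh[\ell]{1}$-bit data; step (ii) is the combinatorial tree surgery of~\cite{DBLP:journals/siamdm/JeongKO21} on trees with $\Oh[\ell]{1}$ edges; step (iii) is checked against \cref{lem:compact-namus} and the finitely many, locally verifiable purity conditions; and in step (iv), because $\BB_x$ is a prefix of $\BB'_x$, the subspace $B_x$ is exactly the span of the first $|\BB_x|$ coordinate vectors of $\BB'_x$, so $A \cap B_x$ is a rank/kernel computation on an $\Oh{\ell} \times \Oh{\ell}$ matrix and re-expressing the answer in $\BB_x$-coordinates only drops trailing zero coordinates. By \cref{lem:compact-namus} only $\Oh[\ell]{1}$ pairs $(\Gamma_1,\Gamma_2)$ are ever considered and all intermediate objects have size $\Oh[\ell]{1}$, so the procedure runs in $\Oh[\ell]{1}$ time. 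Its correctness --- that the output equals $\fullset_k(x)$ --- is precisely the content of the analysis of Algorithm~3.1 and the discussion around Proposition~7.10 in~\cite{DBLP:journals/siamdm/JeongKO21}, which I would cite rather than reprove.

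The main obstacle is faithfully transcribing the join operation on $B$-namus and the family of purity conditions from~\cite{DBLP:journals/siamdm/JeongKO21}, and verifying that the basis change in step (i) commutes with compression: that coordinatising a compact namu in $\BB_{c_i}$ and then recompressing in $\BB'_x$ yields the same object one obtains by lifting the underlying uncompressed namu first and compressing afterwards. These facts are implicit in~\cite{DBLP:journals/siamdm/JeongKO21}, but must be made explicit here, since the whole point of the statement is that it operates on the $\Oh[\ell]{1}$-bit transition matrices alone, never on the high-dimensional bases $\BB_x$, $\BB'_x$.
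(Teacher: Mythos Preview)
Your proposal is correct and in fact goes beyond what the paper does: the paper does not prove this lemma at all, but simply records it as a consequence of~\cite{DBLP:journals/siamdm/JeongKO21}, noting that ``this statement is not present explicitly in their work, but it follows immediately from the analysis of their Algorithm~3.1 and their discussion of Proposition~7.10 in Section~7.5.'' Your outline of the lift/join/restrict/compress pipeline, together with the observation that all objects have $\Oh[\ell]{1}$-bit descriptions thanks to \cref{lem:compact-namus} and the prefix property of $\BB_x$ in $\BB'_x$, is exactly the unpacking one would perform to verify the citation, and you correctly identify the same two anchors (Algorithm~3.1 and Proposition~7.10) that the paper points to.
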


Finally, the authors show how to construct a~rank decomposition of small width, having computed all full sets:
\begin{lemma}[{\cite[Proposition~7.12]{DBLP:journals/siamdm/JeongKO21}}]
  \label{lem:korean-decomposition-recovery}
  Let $r$ be the root of $T^b$ and let $n = |\Vc|$.
  Then $\Vc$ admits a~rank decomposition of width at most $k$ if and only if $\fullset_k(r) \neq \emptyset$ (equivalently, $\fullset_k^{\BB_r}(r) \neq \emptyset$).
  If such a~decomposition exists, then a~(rooted) rank decomposition of $\Vc$ of width at most $k$ can be constructed from the set of transition matrices $M_{\vec{xy}}$, where $x \in V(T^b)$ and $y$ is the parent of $x$, and full sets $\fullset_k^{\BB_x}(x)$ for $x \in V(T^b)$ in time $\Oh[\ell]{n}$.
\end{lemma}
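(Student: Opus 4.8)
The plan is to split the argument into the logical equivalence and the constructive part, both of which rest on the namu machinery of \cite{DBLP:journals/siamdm/JeongKO21} recalled above. For the equivalence, I would first note that $B_r = \sumof{\Vc_r} \cap \sumof{\Vc \setminus \Vc_r} = \{\lin{0}\}$, so a compact $B_r$-namu records essentially only shape information of a decomposition. Using the correctness of the bottom-up dynamic programming --- i.e.\ \Cref{lem:full-set-ingredients} together with the definition of $\fullset_k(\cdot)$ --- I would argue that $\fullset_k(r)$ is exactly the set of compact $B_r$-namus of rank decompositions of $\Vc$ of width at most $k$ that are totally pure with respect to $\Tc^b$. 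Then the ``if'' direction is immediate: a nonempty full set yields such a decomposition, hence rankwidth at most $k$. For ``only if'' I would invoke \Cref{lem:korean-rank-structure-theorem}: any width-$\le k$ decomposition can be replaced by a totally pure one of optimum (hence $\le k$) width, whose compact $B_r$-namu lies in $\fullset_k(r)$. Since the transcript bases $\{\BB_x\}_{x \in V(T^b)}$ only fix a representation of the subspaces involved, $\fullset_k^{\BB_r}(r) \neq \emptyset$ is equivalent to $\fullset_k(r) \neq \emptyset$.

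For the reconstruction I would run a two-pass algorithm on $T^b$. The top-down pass commits at the root to an arbitrary $\Gamma_r \in \fullset_k^{\BB_r}(r)$; then, at an internal node $x$ with children $c_1, c_2$ and committed namu $\Gamma_x$, it searches --- in $\Oh[\ell]{1}$ time, since the full sets have $\Oh[\ell]{1}$ elements and there are $\Oh[\ell]{1}$ admissible ``connection patterns'' --- for namus $\Gamma_{c_1} \in \fullset_k^{\BB_{c_1}}(c_1)$, $\Gamma_{c_2} \in \fullset_k^{\BB_{c_2}}(c_2)$ and a connection pattern which, when glued and re-compacted (using $M_{\vec{c_1 x}}$, $M_{\vec{c_2 x}}$ and $|\BB_x|$ exactly as in the combine step of \Cref{lem:full-set-ingredients}), yields $\Gamma_x$; it commits to $\Gamma_{c_1}, \Gamma_{c_2}$ and records the connection pattern at $x$. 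At a leaf $l$ we have $\Vc_l = \{V_l\}$, and the committed namu forces the trivial one-leaf decomposition. The bottom-up pass then assembles the actual decomposition: at each leaf it outputs the trivial decomposition of $\{V_l\}$, and at each internal node it glues the two already-built child decompositions according to the recorded connection pattern, introducing only $\Oh[\ell]{1}$ new tree nodes. I would maintain the invariant that the decomposition built at $x$ is a rank decomposition of $\Vc_x$ whose compact $B_x$-namu equals $\Gamma_x$; this both certifies width $\le k$ and makes the inductive gluing step well-defined. At the root this yields a rank decomposition of $\Vc$ of width at most $k$, which I finally root by subdividing one edge.

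For the running time I would observe that the top-down pass does $\Oh[\ell]{1}$ work at each of the $\Oh{n}$ nodes of $T^b$, and that the assembled tree is cubic with $n$ leaves, hence $\Oh{n}$ nodes, again with $\Oh[\ell]{1}$ work per node; altogether $\Oh[\ell]{n}$. The main obstacle I anticipate is the ``reversibility'' of compaction: I am implicitly using that a compact namu at $x$, together with compact namus at its children, pins down an actual gluing of child decompositions realizing it, up to $\Oh[\ell]{1}$ bookkeeping. Establishing this cleanly requires unfolding the definition of compact namus and of the combine operation of \cite{DBLP:journals/siamdm/JeongKO21}; since those are used here as a black box, in the write-up I would either import the corresponding construction (their Proposition 7.12) verbatim or re-derive only the reconstruction step, and this is where the bulk of the bookkeeping lives.
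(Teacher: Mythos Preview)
The paper does not prove this statement at all: it is imported verbatim as \cite[Proposition~7.12]{DBLP:journals/siamdm/JeongKO21}, with no proof given here. Your sketch is a reasonable high-level reconstruction of how that proposition is proved in \cite{DBLP:journals/siamdm/JeongKO21}, and you correctly identify in your final paragraph that the real content lives in the reversibility of the combine/compaction step, which you would have to import from that paper anyway. So there is nothing to compare against in the present paper; your plan is consistent with how the result is used here as a black box.
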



\paragraph*{Transcript and transition matrices in annotated rank decompositions.}

Assume we are given a~rooted annotated rank decomposition $\Tc^b = (T^b, U^b, \reps^b, \repse^b, \dmap^b)$ of width $\ell$ encoding a~partitioned graph $(G, \Cc)$.
Assume for convenience that the vertices of $G$ are assigned integer labels from $1$ to $n \coloneqq |V(G)|$. 
Recall from \cref{ssec:dealternation-prelims} that $\Tc^b$ is isomorphic to a~rank decomposition of the subspace arrangement $\Vc = \{A_C\}_{C \in \prt}$ of width $2\ell$, where for every $C \in \prt$, $A_C \subseteq \GF(2)^n$ is the canonical subspace of $C$, spanned by the vectors $\lin{e}_v$ and $\sum_{u \in N(v)} \lin{e}_u$ for all $v \in C$.
Henceforth, without worrying about confusion, we will simultaneously treat $\Tc^b$ as an~annotated rank decomposition of $(G, \prt)$ and as a~rank decomposition of $\Vc$.
Whenever we consider an edge $xp \in E(\Tc^b)$, where $p$ is the parent of $x$, by the width of $xp$ we mean its width $q \in [0, \ell]$ in the decomposition of $(G, \prt)$; so $\dim(B_x) = 2q$.

Our current aim is to define a~specific transcript of $\Tc^b$ -- which we shall name a~\emph{canonical transcript} of $\Tc^b$ -- and then show that for any $x \in V(T^b)$ with parent $p$, the transition matrix $M_{\vec{xp}}$ with respect to the canonical transcript can be uniquely and efficiently deduced from the annotations around $x$ in $\Tc^b$.

We begin with understanding the boundary space $B_x$ for a~node $x \in V(T^b)$. Recall that $B_x = \sumof{\Vc_x} \cap \sumof{\Vc \setminus \Vc_x}$.
For convenience, we introduce the following shorthand notation: $\lin{e}_S \coloneqq \sum_{u \in S} \lin{e}_u$ for any $S \subseteq V(G)$.

\begin{lemma}
Let $x$ be a non-root node of $T^b$ and $p$ the parent of $x$, and let $q \in [0, \ell]$ be the width of the edge $xp$ in $\Tc^b$.
  Let also $S = \lparts(\Tc^b)[\vec{xp}] \subseteq V(G)$ be the set of vertices of $G$ assigned to leaf edges in the subtree of $x$ in $\Tc^b$.
  Then:
  \begin{itemize}
    \item The subspace $A_{\vec{xp}} \coloneqq \sumof{\{\lin{e}_{N(v) \setminus S} \mid v \in \reps^b(\vec{xp})\}}$ of $\GF(2)^n$ has dimension $q$;
    \item The subspace $A_{\vec{px}} \coloneqq \sumof{\{\lin{e}_{N(v) \cap S} \mid v \in \reps^b(\vec{px})\}}$ of $\GF(2)^n$ has dimension $q$;
    \item $B_x = A_{\vec{xp}} + A_{\vec{px}}$ and $A_{\vec{xp}} \cap A_{\vec{px}} = \{\lin{0}\}$.
  \end{itemize}
\end{lemma}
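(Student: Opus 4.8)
The starting point is the standard description of the canonical subspace $A_S = \sum_{v \in S} A_v$, where each $A_v$ is spanned by $\lin{e}_v$ and $\lin{e}_{N(v)}$.
The first thing I would record is the ``telescoping'' identity
$A_S \cap A_{\bar S} = B_x$ together with $\dim(A_S \cap A_{\bar S}) = 2q$, which follows from \cite[Lemma 52]{DBLP:journals/tit/JeongKO17} (quoted just above the statement) and the fact that $\cutrk_G(S) = q$ is the width of the edge $xp$.
So the whole proof amounts to pinning down \emph{two} explicit $q$-dimensional pieces inside $A_S \cap A_{\bar S}$, namely $A_{\vec{xp}}$ and $A_{\vec{px}}$, showing each has dimension exactly $q$, showing they intersect trivially, and showing their sum is all of $B_x$.

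First I would handle $A_{\vec{px}}$, i.e.\ the span of the vectors $\lin{e}_{N(v) \cap S}$ for $v \in \reps^b(\vec{px})$.
The key observations are: (i) each such vector lies in $\sumof{\Vc_x} = A_S$, since $\lin{e}_{N(v) \cap S}$ is a sum of generators $\lin{e}_u$ of $A_u$ for $u \in S$ (using $\lin{e}_u = \lin{e}_u$, or more precisely that $\lin{e}_u \in A_u$ when $u$ has a neighbour; one must be slightly careful with isolated vertices, but $A_u$ always contains $\lin{e}_u$ by definition); (ii) each such vector also lies in $\sumof{\Vc \setminus \Vc_x} = A_{\bar S}$, because $\lin{e}_{N(v) \cap S}$ differs from $\lin{e}_{N(v)} \in A_v \subseteq A_{\bar S}$ by the vector $\lin{e}_{N(v) \setminus S}$, which is itself a sum of $\lin{e}_u$'s over $u \in \bar S$ --- wait, this is not immediate, so in fact the cleaner route is: $\lin{e}_{N(v)} = \lin{e}_{N(v)\cap S} + \lin{e}_{N(v)\setminus S}$, and $\lin{e}_{N(v)}+\lin{e}_v \in A_v$, so $\lin{e}_{N(v)\cap S} \in A_{\bar S} + A_S$; the point is to argue membership in $A_{\bar S}$ specifically, which I would do by noting $v \in \bar S$ and $\lin{e}_{N(v)\setminus S}$ is supported on $\bar S$ and is a linear combination of the generating vectors of $A_w$, $w \in \bar S$, provided each such $w$ is adjacent to $v$ (then $\lin{e}_w \in A_w$); again isolated-vertex bookkeeping is needed but is routine.
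Having shown $A_{\vec{px}} \subseteq A_S \cap A_{\bar S} = B_x$, I would bound its dimension: $\reps^b(\vec{px})$ is a minimal representative of $\lparts(\Tc^b)[\vec{px}]$, so $|\reps^b(\vec{px})| \le 2^q$ by \Cref{lem:repsbound}, but more usefully, the $|\reps^b(\vec{px})| \times |\compl S|$ adjacency matrix of $\repse^b(xp)$ has rank exactly $q = \cutrk_G(S)$ by the remark ``$\cutrk_{G[R_A,R_B]}(R_A) = \cutrk_G(A)$'' in the preliminaries; the row space of that matrix over $\GF(2)$ is precisely (a coordinate restriction of) $\{\lin{e}_{N(v)\cap S} : v \in \reps^b(\vec{px})\}$ after identifying $\GF(2)^S$ with the relevant coordinate block, hence $\dim A_{\vec{px}} = q$.
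The argument for $A_{\vec{xp}}$ is entirely symmetric with the roles of $S$ and $\bar S$, and of $\reps^b(\vec{xp})$ and $\reps^b(\vec{px})$, exchanged, giving $\dim A_{\vec{xp}} = q$ and $A_{\vec{xp}} \subseteq B_x$.

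For the last bullet, the inclusion $A_{\vec{xp}} + A_{\vec{px}} \subseteq B_x$ is already established; since $\dim B_x = 2q$, it suffices to prove $A_{\vec{xp}} \cap A_{\vec{px}} = \{\lin 0\}$, which forces $\dim(A_{\vec{xp}} + A_{\vec{px}}) = 2q$ and hence equality with $B_x$.
The triviality of the intersection is the crux: a vector in $A_{\vec{xp}}$ is supported on $\compl S$ (it is a combination of $\lin{e}_{N(v)\setminus S}$'s), while a vector in $A_{\vec{px}}$ is supported on $S$ (a combination of $\lin{e}_{N(v)\cap S}$'s), and $S \cap \compl S = \emptyset$, so a common vector must be $\lin 0$.
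I expect the main obstacle to be none of the dimension counts --- those are bookkeeping --- but rather the careful verification that the ``restricted'' vectors $\lin{e}_{N(v)\cap S}$ and $\lin{e}_{N(v)\setminus S}$ genuinely lie in $A_S$ respectively $A_{\bar S}$ (and not merely in their sum), with the edge cases of isolated or low-degree vertices handled so that $\lin{e}_u \in A_u$ can always be invoked; once that is nailed down, the three bullets drop out by combining $\dim(A_S \cap A_{\bar S}) = 2q$ with the support-disjointness argument. A clean way to organise this is to first prove the two inclusions $A_{\vec{xp}}, A_{\vec{px}} \subseteq A_S \cap A_{\bar S}$, then prove $\dim A_{\vec{xp}} = \dim A_{\vec{px}} = q$ via the rank of $\repse^b(xp)$, then conclude from support-disjointness and the known value $\dim B_x = 2q$.
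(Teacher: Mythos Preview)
Your proposal is correct. The dimension computations for $A_{\vec{xp}}$ and $A_{\vec{px}}$ via the rank of the $S$--$\compl{S}$ adjacency matrix, and the trivial intersection via disjoint coordinate supports, match the paper exactly. Your worry about isolated or low-degree vertices is unfounded: $\lin{e}_u \in A_u$ holds by definition for every $u$, so the generator-by-generator verification that $A_{\vec{xp}}, A_{\vec{px}} \subseteq B_x$ goes through without any edge cases.

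For the third bullet you take a slightly different route than the paper. You establish the inclusion $A_{\vec{xp}} + A_{\vec{px}} \subseteq B_x$ by checking generators and then invoke the pre-known equality $\dim B_x = 2q$ (from the cited Lemma~52) together with the trivial intersection to force equality. The paper instead rewrites $\sumof{\Vc_x} = \sumof{\{\lin{e}_v : v \in S\}} + A_{\vec{xp}}$ and $\sumof{\Vc \setminus \Vc_x} = \sumof{\{\lin{e}_v : v \in \compl{S}\}} + A_{\vec{px}}$, and then computes the intersection $B_x$ directly from these decompositions using that the two coordinate subspaces meet only in $\{\lin{0}\}$ while $A_{\vec{xp}}$ and $A_{\vec{px}}$ sit inside the opposite coordinate subspace. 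The paper's route is self-contained (it does not need to quote $\dim B_x = 2q$ separately), whereas yours is perhaps more transparent about why one expects the answer; both are short and valid.
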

\begin{proof}
  Recall that the rank $q$ of the edge $xp$ is defined as the rank of the 0-1-matrix $M$ describing adjacencies between vertices in $S$ and vertices in $\compl{S}$ over $\GF(2)$.
  Supposing the rows of $M$ are indexed by $S$ and the columns are indexed by $\compl{S}$, we see that the row rank of $M$ is exactly
  \[ \dim(\sumof{\{\lin{e}_{N(v) \setminus S} \mid v \in S\}}) = \mathrm{rk}(M) = q. \]
  (This is because for any $v \in S$, the $v$th row of $M$ is given exactly by the vector $\lin{e}_{N(v) \setminus S}$, with the 0 entries of the vector corresponding to the elements of $S$ removed.)
  Since $\reps^b(\vec{xp})$ is a~representative of $S$ in $G$, we immediately have that $\{\lin{e}_{N(v) \setminus S} \mid v \in S\} = \{\lin{e}_{N(v) \setminus S} \mid v \in \reps^b(\vec{xp})\}$ and the first statement of the lemma follows.
  The second point is proved analogously, only that we consider the column rank of $M$ instead.

  For the final point, recall that
  \[ \sumof{\Vc_x} = \sumof{\{\lin{e}_v \mid v \in S\} \cup \{ \lin{e}_{N(v)} \mid v \in S \}}. \]
  For every $v \in S$, we subtract from $\lin{e}_{N(v)}$ all vectors $\lin{e}_u$ with $u \in N(v) \cap S$; since such vectors belong to $\sumof{\Vc_x}$, this operation does not change the subspace spanned by vectors and thus
  \[ \begin{split}
  \sumof{\Vc_x} &= \sumof{\{\lin{e}_v \mid v \in S\} \cup \{ \lin{e}_{N(v) \setminus S} \mid v \in \reps^b(\vec{xp}) \}} \\
  &= \sumof{\{\lin{e}_v \mid v \in S\}} + \sumof{\{ \lin{e}_{N(v) \setminus S} \mid v \in \reps^b(\vec{xp}) \}} = \sumof{\{\lin{e}_v \mid v \in S\}} + A_{\vec{xp}}.
  \end{split} \]
  Similarly,
  \[ \sumof{\Vc \setminus \Vc_x} = \sumof{\{\lin{e}_v \mid v \in \compl{S}\}} + \sumof{\{ \lin{e}_{N(v) \cap S} \mid v \in \reps^b(\vec{px}) \}} = \sumof{\{\lin{e}_v \mid v \in \compl{S}\}} + A_{\vec{px}}. \]
  Since $A_{\vec{xp}} \subseteq \sumof{\{\lin{e}_v \mid v \in \compl{S}\}}$, $A_{\vec{px}} \subseteq \sumof{\{\lin{e}_v \mid v \in S\}}$, and $\sumof{\{\lin{e}_v \mid v \in S\}} \cap \sumof{\{\lin{e}_v \mid v \in \compl{S}\}} = \{\lin{0}\}$, we conclude that $A_{\vec{xp}} \cap A_{\vec{px}} = \{\lin{0}\}$ and $B_x = \sumof{\Vc_x} \cap \sumof{\Vc \setminus \Vc_x} = A_{\vec{xp}} + A_{\vec{px}}$.
\end{proof}

Next, given a~sequence of vectors $(\lin{v}_1, \dots, \lin{v}_m)$ of a~linear space, define the \emph{lexicographically earliest basis} as the subsequence $(\lin{v}_{i_1}, \dots, \lin{v}_{i_t})$ of $(\lin{v}_1, \dots, \lin{v}_m)$ that is an~ordered basis of $\sumof{\{\lin{v}_1, \dots, \lin{v}_m\}}$ with the property that the sequence $(i_1, \dots, i_t)$ is lexicographically smallest possible.

We now define the canonical transcript $(\{\BB_x\}_{x \in V(T^b)},\ \{\BB'_x\}_{x \in V(T^b)})$ of $\Tc^b$.
\begin{itemize}
  \item For every $x \in V(T^b)$, define the canonical ordered basis $\BB_x$ of $B_x$ as follows.
  If $x$ is the root of $T^b$, then $\BB_x$ is empty.
  Otherwise, let $p$ be the parent of $x$ in $T^b$ and $q \in [0, \ell]$ be the rank of the edge $xp$.
  Consider the sequence of vectors $(\lin{e}_{N(v) \setminus S})_{v \in \reps^b(\vec{xp})}$ with indexes sorted by $<$; that is, $\lin{e}_{N(u) \setminus S}$ appears before $\lin{e}_{N(v) \setminus S}$ if and only if $u < v$.
  Then let $\BB_{\vec{xp}}$ be the lexicographically earliest basis of this sequence (so $\BB_{\vec{xp}}$ is an~ordered basis of $A_{\vec{xp}}$).
  Also define the ordered basis $\BB_{\vec{px}}$ of $A_{\vec{px}}$ as the lexicographically earliest basis of the analogous sequence of vectors $(\lin{e}_{N(v) \cap S})_{v \in \reps^b(\vec{px})}$. 
  Now define $\BB_x$ as the concatenation of $\BB_{\vec{xp}}$ and $\BB_{\vec{px}}$.
  
  \item Then, for every $x \in V(T^b)$, define the canonical ordered basis $\BB'_x$ of $B'_x$ as follows.
  If $x$ is a~leaf of $T^b$, then $\BB'_x = \BB_x$.
  Otherwise, let $c_1 < c_2$ be the two children of $x$ in $T^b$ and set $\BB'_x$ to the lexicographically earliest basis of the concatenation of the sequences $\BB_x$, $\BB_{c_1}$ and $\BB_{c_2}$.
\end{itemize}

It is easy to verify that $(\{\BB_x\}_{x \in V(T^b)},\ \{\BB'_x\}_{x \in V(T^b)})$ is indeed a~transcript of $\Tc^b$.
For all non-root nodes $x$ of $\Tc^b$ with parent $p$, define $M_{\vec{xp}}$ as the transition matrix of $x$ with respect to the canonical transcript of $\Tc^b$.
Our aim now is to show that each transition matrix $M_{\vec{xp}}$ can be recovered from the annotations around $p$ in $\Tc$.
For the following statement, recall the definitions of the transition signature and the edge signature from \Cref{sec:rwautom}.

\begin{lemma}
  \label{lem:transition-matrices}
  Let $x \in V(T^b)$ be a~non-root node and $p$ be the parent of $x$ in $T^b$.
  Then, in time $\Oh[\ell]{1}$, one can construct $M_{\vec{xp}}$ from:
  \begin{itemize}
    \item the transition signature $\tau(\Tc^b, \vec{pp'})$ if $p$ is a~non-root node of $T^b$ with parent $p'$; or
    \item the edge signatures $\sigma(\Tc^b, \vec{c_1r}), \sigma(\Tc^b, \vec{c_2r})$ if $p=r$ is the root of $T^b$ with children $c_1 < c_2$.
  \end{itemize}
\end{lemma}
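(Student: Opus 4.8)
The plan is to show that the canonical ordered bases $\BB_x$, $\BB'_x$, and hence the transition matrix $M_{\vec{xp}}$, can each be reconstructed purely from the representatives and representative graphs stored in the annotation, together with the representative maps $\dmap$ around the relevant node. The key conceptual point is that, although $\BB_x$ is a tuple of vectors in the $n$-dimensional space $\GF(2)^n$, every vector $\lin{e}_{N(v)\setminus S}$ (for $v\in\reps^b(\vec{xp})$) or $\lin{e}_{N(v)\cap S}$ (for $v\in\reps^b(\vec{px})$) is completely determined by the bipartite graph $\repse^b(xp)=G[\reps^b(\vec{xp}),\reps^b(\vec{px})]$: indeed $N(v)\setminus S$ restricted to $\reps^b(\vec{px})$ equals the neighborhood of $v$ in $\repse^b(xp)$, and two vertices of $\reps^b(\vec{xp})$ have the same such neighborhood iff their rows of $\repse^b(xp)$ coincide, which by the no-twins property never happens. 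So from $\repse^b(xp)$ alone we can compute, over $\GF(2)$, which subsequence of $(\lin{e}_{N(v)\setminus S})_{v}$ forms the lexicographically earliest basis, and we can compute the $\GF(2)$-coordinates of \emph{every} such vector in that basis — all in $\Oh[\ell]{1}$ time since $|\reps^b(\vec{xp})|,|\reps^b(\vec{px})|\le 2^\ell$. This yields an abstract representation of $\BB_{\vec{xp}}$ and $\BB_{\vec{px}}$, hence of $\BB_x=\BB_{\vec{xp}}\cdot\BB_{\vec{px}}$, as tuples of coordinate vectors relative to a canonical indexing of $\reps^b(\vec{xp})\cup\reps^b(\vec{px})$.

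First I would make precise this ``abstract representation'': for a fixed node $x$ with parent $p$, represent $B_x$ by the pair $(\reps^b(\vec{xp})\sqcup\reps^b(\vec{px}),\ \repse^b(xp))$, and represent a vector of $B_x$ by its coordinates in the canonical basis $\BB_x$. The crucial observation is that to build $M_{\vec{xp}}$ we must express each basis vector $(\BB_x)_i\in A_{\vec{xp}}\cup A_{\vec{px}}\subseteq B'_p$ in the basis $\BB'_p$. By the definition of $B'_p$, we have $B'_p=B_p+B_{c_1}+B_{c_2}$ where $c_1,c_2$ are the children of $p$ (one of which is $x$), and $\BB'_p$ is the lexicographically earliest basis of the concatenation $\BB_p\cdot\BB_{c_1}\cdot\BB_{c_2}$. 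So it suffices to: (1) reconstruct the abstract representations of $\BB_p,\BB_{c_1},\BB_{c_2}$ from the relevant signature; (2) express the vectors of each $\BB_{c_j}$ in a common ambient representation, so that the linear-algebra over $\GF(2)$ to extract the lexicographically earliest basis $\BB'_p$ and to rewrite each vector in it can be carried out symbolically.

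The main obstacle — and the place where the transition signature $\tau(\Tc^b,\vec{pp'})$, respectively the pair of edge signatures at the root, is genuinely needed — is step (2): the bases $\BB_{c_1}$ and $\BB_{c_2}$ are expressed in terms of neighborhoods into the \emph{sibling side}, i.e. in terms of the representative graphs $\repse^b(c_1p)$ and $\repse^b(c_2p)$, which are not literally subspaces of the same local coordinate system as $\BB_p$. To glue them I would use exactly the data in $\tau(\Tc^b,\vec{pp'})$: it contains $\reps^b$ on all oriented edges among $\{c_1,c_2,p,p'\}$, the graphs $\repse^b(c_1p),\repse^b(c_2p),\repse^b(pp')$, and the maps $\dmap(c_1pp'),\dmap(c_2pp'),\dmap(p'pc_1),\dmap(p'pc_2)$. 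The representative maps let me track, for each $r\in\reps^b(\vec{c_jp})$, which element of $\reps^b(\vec{pp'})$ represents it one level up, and hence how its neighborhood outside $\lparts[\vec{pp'}]$ looks; combined with $\repse^b(pp')$ this pins down $\lin{e}_{N(r)\setminus S'}$ in the ambient $\GF(2)$-space spanned by $\{\lin{e}_u : u\in\reps^b(\vec{p'p})\}\cup\{\lin{e}_u:u\in S'\}$ where $S'=\lparts[\vec{pp'}]$, which is a space of dimension $\le 2^\ell\cdot 2^\ell$ — bounded. Working in this bounded ambient space I can realize $\BB_p,\BB_{c_1},\BB_{c_2}$ concretely as $\GF(2)$-vectors, compute the lexicographically earliest basis $\BB'_p$ by Gaussian elimination in a fixed order, and then read off, for each $i$, the coordinates of $(\BB_x)_i$ in $\BB'_p$; stacking these gives $M_{\vec{xp}}$, and $\dim(B_x)=|\BB_x|$ is recovered en route. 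For the root case $p=r$ one uses that $B'_r=B_{c_1}+B_{c_2}$ (since $B_r=\{\lin 0\}$), that $\reps^b(\vec{c_1r})=\reps^b(\vec{rc_2})$ and $\reps^b(\vec{c_2r})=\reps^b(\vec{rc_1})$, and that the decomposition is rooted so the two child edges describe the same cut; then $\sigma(\Tc^b,\vec{c_1r})$ and $\sigma(\Tc^b,\vec{c_2r})$ supply exactly $\reps^b$ on the four oriented edges and the graph $\repse^b(c_1r)=\repse^b(c_2r)$, which is all that is needed to rebuild $\BB_{c_1},\BB_{c_2},\BB'_r$ in a common bounded ambient space and read off $M_{\vec{xr}}$. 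Every step manipulates objects of size $2^{\Oh{\ell}}$ and performs $\Oh{1}$ Gaussian eliminations over $\GF(2)$, so the total running time is $\Oh[\ell]{1}$.
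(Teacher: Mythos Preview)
Your overall plan is the right one and matches the paper's approach: compute each canonical basis $\BB_x,\BB_{c_1},\BB_{c_2},\BB_p$ implicitly from the local annotations, embed them in a common finite-dimensional coordinate system, run Gaussian elimination to extract $\BB'_p$, and read off the columns of $M_{\vec{xp}}$. The root case is also handled as you say.

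There is, however, a genuine gap in step (2). You write that the vectors can be realized ``in the ambient $\GF(2)$-space spanned by $\{\lin{e}_u : u\in\reps^b(\vec{p'p})\}\cup\{\lin{e}_u:u\in S'\}$ where $S'=\lparts[\vec{pp'}]$, which is a space of dimension $\le 2^\ell\cdot 2^\ell$''. But $S'=\lparts(\Tc^b)[\vec{pp'}]$ is the full vertex set on the $p$-side of the edge $pp'$, which may have $\Omega(n)$ elements; the proposed ambient space is therefore not of bounded dimension, and you cannot do Gaussian elimination in it in $\Oh[\ell]{1}$ time. Simply tracking each $r\in\reps^b(\vec{c_jp})$ through $\dmap(c_jpp')$ only controls the coordinates outside $S'$; it does nothing for the (potentially many) coordinates inside the sibling subtree.

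What the paper does to close this gap is the key step you are missing. It projects every vector in the concatenation $S=\BB_p\cdot\BB_{c_1}\cdot\BB_{c_2}$ onto just the coordinates indexed by $\reps^b(\vec{c_1p})\cup\reps^b(\vec{c_2p})\cup\reps^b(\vec{p'p})$ --- three representative sets, total size at most $3\cdot 2^\ell$ --- and proves that this projection is \emph{linearly equivalent} to $S$. The argument is a case analysis: for any vertex $u$ not in these three sets, one exhibits a representative $u'$ in the appropriate set with the property that every vector in $S$ has the same value at coordinate $u$ as at coordinate $u'$; hence dropping coordinate $u$ preserves all linear relations. Once this is established, the projected sequence $S'$ is of size $\Oh[\ell]{1}$ and can be written down explicitly from the bipartite graphs $\repse^b(c_1p),\repse^b(c_2p),\repse^b(pp')$ and the maps $\dmap$ in $\tau(\Tc^b,\vec{pp'})$; then your Gaussian-elimination step goes through verbatim. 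So your sketch becomes a proof once you replace the unbounded ambient space by this projection and justify the linear-equivalence claim.
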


In the proof, we will use the following simple observation.
We say that two sequences of vectors of equal length $\lin{v}_1, \dots, \lin{v}_m \in \F^d$ and $\lin{v}'_1, \dots, \lin{v}'_m \in \F^{d'}$ are \emph{linearly equivalent} if for every sequence of coefficients $a_1, \dots, a_m \in \F$, we have that $\sum_{i=1}^m a_i \lin{v}_i = \lin{0}$ if and only if $\sum_{i=1}^m a_i \lin{v}'_i = \lin{0}$.
Note that in this case, $(\lin{v}_{i_1}, \dots, \lin{v}_{i_t})$ is the lexicographically earliest basis of $(\lin{v}_1, \dots, \lin{v}_m)$ if and only if $(\lin{v}'_{i_1}, \dots, \lin{v}'_{i_t})$ is the lexicographically earliest basis of $(\lin{v}'_1, \dots, \lin{v}'_m)$.
Moreover, if $j \in [m]$ and $a_1, \dots, a_t \in F$, then $\lin{v}_j = \sum_{k=1}^t a_k \lin{v}_{i_k}$ if and only if $\lin{v}'_j = \sum_{k=1}^t a_k \lin{v}'_{i_k}$.
Then:
\begin{observation}
  \label{obs:dimension-reduction}
  Let $\lin{v}_1, \dots, \lin{v}_m$ be vectors of the same vector space $\F^d$ and let $a \in [d]$.
  Suppose one of the following conditions holds:
  \begin{itemize}
    \item $(\lin{v}_i)_a = 0$ for every $i \in [m]$, i.e., in all vectors, the $a$th entry is zero; or
    \item there exists a~different index $b \in [d]$ such that $(\lin{v}_i)_a = (\lin{v}_i)_b$ for every $i \in [m]$, i.e., in all vectors, the $a$th entry and the $b$th entry coincide.
  \end{itemize}
  Let $\lin{v}'_1, \lin{v}'_2, \dots, \lin{v}'_m$ be the vectors of $\F^{d-1}$ formed by dropping the $a$th coordinate from each vector $\lin{v}_i$.
  Then the sequences $(\lin{v}_1, \dots, \lin{v}_m)$ and $(\lin{v}'_1, \dots, \lin{v}'_m)$ are linearly equivalent.
\end{observation}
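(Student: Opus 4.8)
The plan is to unwind the definition of linear equivalence and check the two implications directly, coordinate by coordinate. Fix a sequence of coefficients $a_1, \dots, a_m \in \F$ and put $\lin{w} = \sum_{i=1}^m a_i \lin{v}_i \in \F^d$ and $\lin{w}' = \sum_{i=1}^m a_i \lin{v}'_i \in \F^{d-1}$. The key preliminary remark is that, since each $\lin{v}'_i$ is obtained from $\lin{v}_i$ by deleting the $a$th entry, the analogous deletion relates $\lin{w}$ to $\lin{w}'$: for every coordinate $c \in [d] \setminus \{a\}$, the corresponding entry of $\lin{w}'$ equals $(\lin{w})_c$. Everything then reduces to understanding the single deleted coordinate.

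First I would dispose of the forward direction, which uses neither hypothesis: if $\lin{w} = \lin{0}$, then all $d$ entries of $\lin{w}$ vanish, hence so do all $d-1$ entries of $\lin{w}'$, i.e.\ $\lin{w}' = \lin{0}$. For the converse, suppose $\lin{w}' = \lin{0}$. By the preliminary remark this gives $(\lin{w})_c = 0$ for all $c \neq a$, so it only remains to argue $(\lin{w})_a = 0$. In the first case of the hypothesis, $(\lin{v}_i)_a = 0$ for every $i$, so $(\lin{w})_a = \sum_i a_i (\lin{v}_i)_a = 0$. In the second case there is $b \neq a$ with $(\lin{v}_i)_a = (\lin{v}_i)_b$ for every $i$, whence $(\lin{w})_a = \sum_i a_i (\lin{v}_i)_a = \sum_i a_i (\lin{v}_i)_b = (\lin{w})_b$, and $(\lin{w})_b = 0$ because $b \neq a$ and we already showed every coordinate of $\lin{w}$ other than the $a$th vanishes. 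In both cases $(\lin{w})_a = 0$, so $\lin{w} = \lin{0}$, which establishes the equivalence.

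There is no real obstacle: the statement is exactly the bookkeeping fact that deleting a coordinate which is identically zero, or identically equal to another coordinate, across a whole family of vectors, does not change which linear combinations of that family are zero. The only point needing a moment's attention is the order of implications in the second case — one must first use $\lin{w}' = \lin{0}$ together with $b \neq a$ to conclude $(\lin{w})_b = 0$, and only then transport this to $(\lin{w})_a$ via the hypothesis. This is also precisely why the consequence about lexicographically earliest bases and about representing $\lin{v}_j$ as a combination of the chosen basis vectors carries over verbatim, which is the use we make of the observation in the proof of \Cref{lem:transition-matrices}.
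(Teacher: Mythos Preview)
Your argument is correct and complete. The paper states this as an observation without proof, and what you have written is exactly the natural verification one would supply.
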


Also we will use the following algorithmic tool which is an~easy application of Gaussian elimination: 
\begin{lemma}
  \label{lem:computing-earliest-basis}
  Let $\lin{v}_1, \lin{v}_2, \dots, \lin{v}_m$ be vectors of the same vector space $\F^d$.
  Then in time $\Oh{md^2}$ one can compute:
  \begin{itemize}
    \item the lexicographically earliest basis $(\lin{v}_{i_1}, \dots, \lin{v}_{i_t})$ of $(v_1, \dots, \lin{v}_m)$, and
    \item for every $j \in [m]$, the unique representation of $\lin{v}_j$ in this basis (i.e., the coefficients $a_{j,1}, \dots, a_{j,t} \in \F$ such that $\lin{v}_j = \sum_{k=1}^t a_{j,k} \lin{v}_{i_k}$).
  \end{itemize}
\end{lemma}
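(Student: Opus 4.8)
\textbf{Proof proposal for \cref{lem:computing-earliest-basis}.}
The plan is to run a single pass of Gaussian elimination on the sequence $\lin{v}_1, \dots, \lin{v}_m$, viewed as a $d \times m$ matrix whose $j$th column is $\lin{v}_j$, but processing the columns in left-to-right order so that the greedily chosen pivot columns form exactly the lexicographically earliest basis. More precisely, I would maintain a partially row-reduced matrix and, for each $j = 1, \dots, m$ in turn, reduce the current column $\lin{v}_j$ against the pivots discovered so far; if after this reduction the column is nonzero, declare $j$ a new basis index $i_k$ and record the pivot row, otherwise the reduced column already gives the coefficients $a_{j,1}, \dots, a_{j,t}$ expressing $\lin{v}_j$ in the basis collected up to that point (entries beyond $k$ being zero).

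The key steps, in order, are: (1) initialise an empty list of pivots $(i_1, \dots)$ together with, for each discovered pivot, the row index used to clear it; (2) iterate $j$ from $1$ to $m$, each time copying $\lin{v}_j$ and, for each already-recorded pivot in order, subtracting the appropriate $\F$-multiple so as to zero out the corresponding pivot row entry — this costs $\Oh{td} \le \Oh{d^2}$ arithmetic operations over $\F$ per column; (3) after reduction, test whether the resulting vector is zero: if not, it has some nonzero coordinate, which becomes the pivot row for the new basis element $i_k \coloneqq j$, and we normalise (over $\GF(2)$ nothing to do) and store it; if it is zero, the recorded subtraction coefficients are precisely the $a_{j,k}$ we must output; (4) at the end, return the list $(\lin{v}_{i_1}, \dots, \lin{v}_{i_t})$ and the coefficient table. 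Correctness of the lexicographic-minimality claim follows from a standard exchange/greedy argument: a column is skipped exactly when it lies in the span of earlier columns, which is precisely the condition characterising the lexicographically earliest basis, so the greedy choice never needs to be revised. The running time is $\Oh{m}$ column reductions, each $\Oh{d^2}$, giving the stated $\Oh{md^2}$ bound (one can absorb the cost of the final zero-test and of scanning for a fresh pivot row into the same bound).

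I do not expect any genuine obstacle here — this is a routine adaptation of Gaussian elimination, and the only mild subtlety is book-keeping: making sure the pivot rows are chosen consistently so that the coefficient vectors are well-defined and that processing columns strictly left to right really does yield the lexicographically earliest basis rather than some other basis. I would state these points briefly and leave the elementary verification to the reader, since the paper's other proofs handle comparable linear-algebra routines at this level of detail.
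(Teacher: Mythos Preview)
Your proposal is correct and matches the paper's approach: the paper does not give a detailed proof but simply states that the lemma is ``an easy application of Gaussian elimination,'' which is precisely the left-to-right column-reduction procedure you describe. The only minor bookkeeping point worth noting is that the subtraction coefficients you record are in terms of the \emph{reduced} pivot vectors rather than the original $\lin{v}_{i_k}$, so a small triangular change of basis (still $\Oh{d^2}$ per column) is needed to output the $a_{j,k}$ as stated; but you already flag bookkeeping as the one subtlety, and this falls well within the claimed time bound.
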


Therefore, we quickly get that:
\begin{lemma}
  \label{lem:canonical-basis-simple}
  Let $x \in V(T^b)$ be a~non-root vertex of $T^b$ with parent $p$, and let $S = \lparts(\Tc^b)[\vec{xp}] \subseteq V(G)$ and $q \in [0, \ell]$ be the width of $xp$.
  Then, given the sets $\reps^b(\vec{xp}), \reps^b(\vec{px})$ and the bipartite graph $\repse^b(xp)$, in time $\Oh[q]{1}$ one can compute the canonical ordered basis $\BB_x$, represented implicitly as two sequences of vertices $v^{\vec{xp}}_1, \dots, v^{\vec{xp}}_{q} \in \reps^b(\vec{xp})$ and $v^{\vec{px}}_1, \dots, v^{\vec{px}}_q \in \reps^b(\vec{px})$ such that
  \[
    \BB_x = (\lin{e}_{N(v^{\vec{xp}}_1) \setminus S},\, \dots,\, \lin{e}_{N(v^{\vec{xp}}_q) \setminus S},\, \lin{e}_{N(v^{\vec{px}}_{1}) \cap S},\, \dots,\, \lin{e}_{N(v^{\vec{px}}_{q}) \cap S}).
  \]
\end{lemma}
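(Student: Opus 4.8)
The plan is to unwind the definition of the canonical ordered basis $\BB_x$ given just before the statement and observe that it only refers to the local data $\reps^b(\vec{xp})$, $\reps^b(\vec{px})$ and $\repse^b(xp)$. Recall that $\BB_x$ is the concatenation of $\BB_{\vec{xp}}$ and $\BB_{\vec{px}}$, where $\BB_{\vec{xp}}$ is the lexicographically earliest basis of the sequence of vectors $(\lin{e}_{N(v) \setminus S})_{v \in \reps^b(\vec{xp})}$ with indices sorted by the order on $V(G)$, and symmetrically for $\BB_{\vec{px}}$ with the vectors $(\lin{e}_{N(v) \cap S})_{v \in \reps^b(\vec{px})}$. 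The key obstacle is that these vectors live in the $n$-dimensional space $\GF(2)^n$, so we cannot afford to write them down explicitly; we must replace them by linearly equivalent low-dimensional vectors computable from the annotations.

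First I would restrict attention to the columns of these vectors that actually matter. For $\BB_{\vec{xp}}$: every vector $\lin{e}_{N(v)\setminus S}$ with $v\in\reps^b(\vec{xp})$ is supported on $\compl{S}$, and moreover, since $\reps^b(\vec{px})$ is a representative of $\compl{S}$, for each $u\in\compl{S}$ there is some $r\in\reps^b(\vec{px})$ with $N(u)\cap S = N(r)\cap S$, equivalently with $u\in N(v)$ iff $r\in N(v)$ for all $v\in S$. Hence, applying \Cref{obs:dimension-reduction} repeatedly — first deleting all coordinates outside $\compl{S}$ (they are identically zero on our vectors), then, for each $u\in\compl{S}$, merging its coordinate with that of a chosen representative $r\in\reps^b(\vec{px})$ — the sequence $(\lin{e}_{N(v)\setminus S})_{v\in\reps^b(\vec{xp})}$ is linearly equivalent to the sequence of $\GF(2)^{\reps^b(\vec{px})}$-vectors whose $r$-th coordinate is $1$ iff $vr\in E(\repse^b(xp))$. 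This latter sequence has dimension at most $|\reps^b(\vec{px})|\le 2^q$ and is given directly by the bipartite graph $\repse^b(xp)$. Symmetrically, $(\lin{e}_{N(v)\cap S})_{v\in\reps^b(\vec{px})}$ is linearly equivalent to the sequence of $\GF(2)^{\reps^b(\vec{xp})}$-vectors read off from $\repse^b(xp)$. Since the field is $\GF(2)$ and all dimensions are $\Oh[q]{1}$, \Cref{lem:computing-earliest-basis} computes, in $\Oh[q]{1}$ time, the lexicographically earliest basis of each reduced sequence; by the linear-equivalence remark preceding \Cref{obs:dimension-reduction} the selected indices are exactly the same as for the original sequences, which pins down the vertex sequences $v^{\vec{xp}}_1,\dots,v^{\vec{xp}}_q$ and $v^{\vec{px}}_1,\dots,v^{\vec{px}}_q$.

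Finally I would check that these sequences have the claimed length $q$: $\BB_{\vec{xp}}$ is an ordered basis of $A_{\vec{xp}}$, which by the preceding lemma has dimension exactly $q$ (the row rank of the adjacency matrix of the edge $xp$), and likewise for $A_{\vec{px}}$; so each reduced sequence has rank $q$ and its lexicographically earliest basis has $q$ elements. Reporting the two vertex sequences then exactly encodes $\BB_x$ in the stated implicit form, and everything we used — reading $\repse^b(xp)$, building the $\Oh[q]{1}\times\Oh[q]{1}$ incidence matrices, one Gaussian elimination over $\GF(2)$ — runs in $\Oh[q]{1}$ time. The main thing to be careful about, rather than a genuine obstacle, is keeping the bookkeeping of the two merging steps in \Cref{obs:dimension-reduction} straight so that the preserved-index and preserved-combination properties are actually invoked correctly; once that is in place the proof is a short unwinding of definitions.
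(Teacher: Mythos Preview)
Your proposal is correct and follows essentially the same approach as the paper: reduce the high-dimensional vectors $(\lin{e}_{N(v)\setminus S})_{v\in\reps^b(\vec{xp})}$ to $\GF(2)^{|\reps^b(\vec{px})|}$-vectors via repeated application of \Cref{obs:dimension-reduction} (zero coordinates in $S$, then merge each $u\in\compl{S}$ with its representative in $\reps^b(\vec{px})$), read these off directly from $\repse^b(xp)$, and apply \Cref{lem:computing-earliest-basis}. Your explicit verification that each basis has length exactly $q$ is a nice addition the paper leaves implicit.
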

\begin{proof}
  Recall that $\BB_x$ is the concatenation of $\BB_{\vec{xp}}$ and $\BB_{\vec{px}}$.
  Here, we only show how to compute $\BB_{\vec{xp}}$; the latter is determined analogously.
  Let $\reps^b(\vec{xp}) = \{v_1 < v_2 < \dots < v_m\}$, where $m = |\reps^b(\vec{xp})| \leq 2^q$.
  Then $\BB_{\vec{xp}}$ is defined as the lexicographically earliest basis of the sequence $(\lin{e}_{N(v_1) \setminus S}, \dots, \lin{e}_{N(v_m) \setminus S})$.
  By a~repeated application of \cref{obs:dimension-reduction}, we produce a~linearly equivalent sequence of vectors $(\lin{u}'_1, \dots, \lin{u}'_m)$ by dropping from each vector of this sequence the coordinates corresponding to the vertices $u \in V(G)$ such that:
  \begin{itemize}
    \item $u \in S$ (since $(\lin{e}_{N(v_i) \setminus S})_u = 0$ for all $i \in [m]$); or
    \item $u \notin S$ and $u \notin \reps^b(\vec{px})$ (since then there exists a~representative $u' \in \reps^b(\vec{px})$ of $u$ such that $N(u) \cap S = N(u') \cap S$, or equivalently, $(\lin{e}_{N(v_i) \setminus S})_u = (\lin{e}_{N(v_i) \setminus S})_{u'}$ for all $i \in [m]$).
  \end{itemize}
  In other words, let $(\lin{u}'_1, \dots, \lin{u}'_m)$ be the sequence of vectors in $\F^{|\reps^b(\vec{px})|}$, where $\lin{u}'_i$ is constructed from $\lin{e}_{N(v_i) \setminus S}$ by dropping all coordinates not corresponding to the vertices of $\reps^b(\vec{px})$.
  This sequence can be constructed explicitly in time $\Oh[q]{1}$ using $\reps^b(\vec{xp})$, $\reps^b(\vec{px})$ and $\repse^b(xp)$.
  Using \cref{lem:computing-earliest-basis}, we find the lexicographically earliest basis $(\lin{u}'_{i_1}, \dots, \lin{u}'_{i_q})$ of $(\lin{u}'_1, \dots, \lin{u}'_m)$.
  Then by \cref{obs:dimension-reduction}, we have that $\BB_{\vec{xp}} = (\lin{e}_{N(v_{i_1}) \setminus S}, \dots, \lin{e}_{N(v_{i_q}) \setminus S})$.
\end{proof}

We are now ready to prove \cref{lem:transition-matrices}.

\begin{proof}[Proof of \cref{lem:transition-matrices}]
  First suppose that $p$ is the parent of $x$ and $p'$ is the parent of $p$ in $T^b$.
  Let also $x^\star$ be the sibling of $x$, i.e., the other child of $p$ in $T^b$.
  We showcase the proof in the case where $x < x^\star$, but the case $x > x^\star$ is analogous.
  
  Recall that $\BB'_p$ is the lexicographically earliest basis of the concatenation of $\BB_p$, $\BB_x$ and $\BB_{x^\star}$ (where $\BB_p$ is the concatenation of $\BB_{\vec{pp'}}$ and $\BB_{\vec{p'p}}$; $\BB_x$ is the concatenation of $\BB_{\vec{xp}}$ and $\BB_{\vec{px}}$; and $\BB_{x^\star}$ is the concatenation of $\BB_{\vec{x^\star p}}$ and $\BB_{\vec{px^\star}}$).
  Note that the transition signature $\tau(\Tc^b, \vec{pp'})$ contains the representative sets $\reps^b(\vec{xp})$, $\reps^b(\vec{px})$, $\reps^b(\vec{x^\star p})$, $\reps^b(\vec{p x^\star})$, $\reps^b(\vec{pp'})$, $\reps^b(\vec{p'p})$ and the bipartite graphs $\repse^b(xp)$, $\repse^b(x^\star p)$, $\repse^b(pp')$, so we can use \cref{lem:canonical-basis-simple} to compute the implicit representations of each $\BB_p, \BB_x, \BB_{x^\star}$ in time $\Oh[\ell]{1}$.
  Let $S$ be the concatenation of these ordered bases.
  Let also $S'$ be the sequence of vectors produced from $S$ by dropping all the coordinates corresponding to vertices outside of $\reps^b(\vec{xp}) \cup \reps^b(\vec{x^\star p}) \cup \reps^b(\vec{p'p})$.
  
  \begin{claim}
    $S$ and $S'$ are linearly equivalent.
  \end{claim}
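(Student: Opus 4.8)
The claim to prove is that the sequences $S$ and $S'$ are linearly equivalent, where $S$ is the concatenation of the canonical ordered bases $\BB_p, \BB_x, \BB_{x^\star}$ (each written as explicit $0/1$-vectors in $\GF(2)^n$ of the form $\lin{e}_{N(v) \cap \text{(relevant side)}}$), and $S'$ is obtained by dropping all coordinates not indexed by $\reps^b(\vec{xp}) \cup \reps^b(\vec{x^\star p}) \cup \reps^b(\vec{p'p})$.

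The plan is to invoke \cref{obs:dimension-reduction} repeatedly, dropping one coordinate at a time, and argue that each drop preserves linear equivalence. So fix a vertex $u \in V(G)$ whose coordinate is being dropped, i.e., $u \notin \reps^b(\vec{xp}) \cup \reps^b(\vec{x^\star p}) \cup \reps^b(\vec{p'p})$, and let me denote the relevant subtree sides as follows: write $S_x = \lparts(\Tc^b)[\vec{xp}]$, $S_{x^\star} = \lparts(\Tc^b)[\vec{x^\star p}]$, and note $S_x \cup S_{x^\star} = \lparts(\Tc^b)[\vec{pp'}]$, while $\lparts(\Tc^b)[\vec{p'p}] = V(G) \setminus (S_x \cup S_{x^\star})$. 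By \cref{lem:canonical-basis-simple}, every vector in $S$ is one of the four forms: $\lin{e}_{N(v) \setminus (S_x \cup S_{x^\star})}$ or $\lin{e}_{N(v) \cap (S_x \cup S_{x^\star})}$ (from $\BB_p$, using representatives on $\vec{pp'}$ and $\vec{p'p}$), $\lin{e}_{N(v) \setminus S_x}$ or $\lin{e}_{N(v) \cap S_x}$ (from $\BB_x$), and $\lin{e}_{N(v) \setminus S_{x^\star}}$ or $\lin{e}_{N(v) \cap S_{x^\star}}$ (from $\BB_{x^\star}$). The key step is a case analysis on which of the three sets $S_x$, $S_{x^\star}$, $V(G) \setminus (S_x \cup S_{x^\star})$ the vertex $u$ belongs to, and showing that for each vector type, the $u$-th entry is either identically zero across all of $S$, or identically equal to the entry at some index $b$ that is not being dropped (namely, a representative of $u$ on the appropriate appendix edge). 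For instance, if $u \in S_x$: then in vectors of the form $\lin{e}_{N(v) \setminus S_x}$, $\lin{e}_{N(v) \cap (S_x \cup S_{x^\star})}$ restricted appropriately, etc., the $u$-th entry is zero in types that exclude $S_x$, and in types that intersect $S_x$ we use that $u$ has a representative $u' \in \reps^b(\vec{xp})$ with $N(u') \setminus S_x = N(u) \setminus S_x$; but this only controls neighborhoods \emph{outside} $S_x$. The subtlety is that when the vector type is, say, $\lin{e}_{N(v) \cap S_x}$ with $v$ ranging over $\reps^b(\vec{px})$, the $u$-th entry is $[u \in N(v)]$, and we need a representative of $u$ \emph{inside} the subtree — here we use that $\reps^b(\vec{xp})$ is a minimal representative of $S_x$, so some $u' \in \reps^b(\vec{xp})$ satisfies $N(u) \setminus S_x = N(u') \setminus S_x$, and since $v \in \reps^b(\vec{px}) \subseteq V(G) \setminus S_x$, we get $[u \in N(v)] = [v \in N(u)] = [v \in N(u) \setminus S_x] = [v \in N(u') \setminus S_x] = [u' \in N(v)]$; and $u' \in \reps^b(\vec{xp})$ is not dropped. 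The same kind of argument handles $u \in S_{x^\star}$ (using $\reps^b(\vec{x^\star p})$) and $u \in V(G)\setminus(S_x\cup S_{x^\star})$ (using $\reps^b(\vec{p'p})$), and one must double-check the $\BB_p$-vectors whose supports are defined via $\reps^b(\vec{pp'})$ and $\reps^b(\vec{p'p})$ against the same three cases.

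The cleaner way to carry this out, which I would prefer for the write-up, is to apply \cref{obs:dimension-reduction} in a fixed order: first drop, for each $u \in S_x$ with $u \notin \reps^b(\vec{xp})$, the $u$-th coordinate, replacing it by (justifying via) the representative in $\reps^b(\vec{xp})$; but this is circular-looking since $\reps^b(\vec{px})$-indexed vectors also reference $S_x$. I would instead observe directly: a coordinate $u$ can be dropped iff for every vector $\lin{w}$ in $S$, either $\lin{w}_u = 0$ or there is $b \neq u$, surviving, with $\lin{w}_u = \lin{w}_b$; and I would verify this "surviving representative" property \emph{simultaneously} for all vectors, choosing $b$ to be the image of $u$ under the representative map on whichever appendix edge of $\{\vec{xp},\vec{x^\star p},\vec{p'p}\}$ corresponds to the side containing $u$. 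The main obstacle — really the only place that needs care — is making sure the chosen $b$ (a representative of $u$) is consistent across all six vector types at once, i.e., that a single representative of $u$ on the relevant appendix edge works whether the vector's support is a neighborhood-intersection or a neighborhood-difference, and whether the "other side" in the vector's definition is $S_x$, $S_{x^\star}$, or their union. This works because representatives on, say, $\vec{xp}$ control $N(\cdot) \setminus S_x$, which determines adjacency to any vertex outside $S_x$ — in particular to any representative vertex $v$ used to index a vector — and also, by symmetry of adjacency, adjacency of $u$ to vertices inside $S_x$ is what is being \emph{tested} (not what is being controlled), and that is where minimality of the representative set and the structure of the annotated decomposition come in. After establishing linear equivalence, \cref{obs:dimension-reduction}'s concluding remark (on preservation of lexicographically earliest bases and of coordinate representations of each $\lin{v}_j$) lets us read off $\BB'_p$ and the transition matrices $M_{\vec{xp}}, M_{\vec{x^\star p}}$ from the small vectors $S'$ in $\Oh[\ell]{1}$ time via \cref{lem:computing-earliest-basis}. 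The root case $p = r$ is entirely analogous, using the two edge signatures $\sigma(\Tc^b,\vec{c_1 r})$, $\sigma(\Tc^b,\vec{c_2 r})$ in place of the transition signature, and noting $B'_r = B_{c_1} + B_{c_2}$ with $B_r = \{\lin{0}\}$.
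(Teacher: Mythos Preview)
Your proposal is correct and follows essentially the same approach as the paper: apply \cref{obs:dimension-reduction} to each dropped coordinate $u$, case-splitting on which of the three sides $\lparts(\Tc^b)[\vec{xp}]$, $\lparts(\Tc^b)[\vec{x^\star p}]$, $\lparts(\Tc^b)[\vec{p'p}]$ contains $u$, and taking as the surviving index $b=u'$ the representative of $u$ on the corresponding appendix edge. The paper's write-up is somewhat cleaner in that it groups the six vector types into just two cases --- those whose support is disjoint from the side containing $u$ (giving $\lin{v}_u=\lin{v}_{u'}=0$), and those indexed by a vertex $w$ lying outside that side (giving $\lin{v}_u=[w\in N(u)\setminus S_x]=[w\in N(u')\setminus S_x]=\lin{v}_{u'}$) --- which dissolves the ``consistency across six types'' worry you raise. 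One small correction: minimality of the representative sets is not needed anywhere in this argument; only the representative property $N(u)\cap\lparts(\Tc^b)[\vec{px}]=N(u')\cap\lparts(\Tc^b)[\vec{px}]$ is used.
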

  \begin{claimproof}
    Note that $V(G)$ is a~disjoint union of $\lparts(\Tc^b)[\vec{xp}]$, $\lparts(\Tc^b)[\vec{x^\star p}]$ and $\lparts(\Tc^b)[\vec{p'p}]$.
    First suppose that $u \in \lparts(\Tc^b)[\vec{xp}]$, but $u \notin \reps^b(\vec{xp})$.
    Then there exists a~representative $u' \in \reps^b(\vec{xp})$ such that $N(u) \cap \lparts(\Tc^b)[\vec{px}] = N(u') \cap \lparts(\Tc^b)[\vec{p'x}]$.
    We now claim that for every vector $\lin{v} \in S$, we have $\lin{v}_u = \lin{v}_{u'}$.
    \begin{itemize}
      \item If $\lin{v}$ belongs to the ordered basis $\BB_{\vec{xp}}$ (i.e., $\lin{v}$ is implicitly represented by a~vertex of $\reps^b(\vec{xp})$), then by definition $\lin{v}_s = 0$ for all $s \in \lparts(\Tc^b)[\vec{xp}]$. Hence $\lin{v}_u = \lin{v}_{u'} = 0$.
      \item Similarly, if $\lin{v}$ belongs to $\BB_{\vec{pp'}}$ (resp.\ $\BB_{\vec{px^\star}}$), then the same argument follows from the fact that $\lparts(\Tc^b)[\vec{xp}]$ is a~subset of $\lparts(\Tc^b)[\vec{pp'}]$ (resp.\ $\lparts(\Tc^b)[\vec{px^\star}]$).
      So $\lin{v}_u = \lin{v}_{u'} = 0$.
      \item If $\lin{v}$ belongs to any of the ordered bases $\BB_{\vec{px}}$, $\BB_{\vec{p'p}}$ or $\BB_{\vec{x^\star p}}$, then $\lin{v}_u = \lin{v}_{u'}$ follows from $N(u) \cap \lparts(\Tc^b)[\vec{px}] = N(u') \cap \lparts(\Tc^b)[\vec{p'x}]$ and the fact that $\lparts(\Tc^b)[\vec{px}]$ is a~superset of both $\lparts(\Tc^b)[\vec{p'p}]$ and $\lparts(\Tc^b)[\vec{x^\star p}]$.
    \end{itemize}
    By case exhaustion we conclude that $\lin{v}_u = \lin{v}_{u'}$ for all vectors $\lin{v} \in S$, and so \cref{obs:dimension-reduction} applies and the coordinate corresponding to the vertex $u$ can be removed from all vectors of $S$ while maintaining the linear equivalence.
    A~symmetric proof for $u \in \lparts(\Tc^b)[\vec{x^\star p}] \setminus \reps^b(\vec{x^\star p})$ and $u \in \lparts(\Tc^b)[\vec{p'p}] \setminus \reps^b(\vec{p'p})$ settles the claim.
  \end{claimproof}

  Now observe that $S'$ can be constructed explicitly in time $\Oh[\ell]{1}$: it is enough to determine, for each $e_1 \in \{\vec{xp}, \vec{px}, \vec{x^\star p}, \vec{px^\star}, \vec{pp'}, \vec{p'p}\}$ and a~vector $\lin{u}$ in $\BB_{e_1}$ (implicitly represented by a~vertex $u \in \reps^b(e_1)$), and for each $e_2 \in \{\vec{xp}, \vec{x^\star p}, \vec{p'p}\}$ and $v \in \reps^b(e_2)$, the value of $\lin{u}_v$.
  It can be easily observed that this value is equal to $1$ if and only if $e_1$ and $e_2$ point towards each other (i.e., $e_1$ is a~predecessor of $e'_2$, where $e'_2$ is the edge $e_2$ with its head and tail swapped), and $uv \in E(G)$.
  Both of these conditions can be easily verified using the transition signature of $\vec{pp'}$ in $\Tc^b$.
  
  We now run the algorithm of \cref{lem:computing-earliest-basis} to find the lexicographically earliest basis $\BB_{S'}$ of $S'$ in time $\Oh[\ell]{1}$; moreover, this algorithm provides, for each vector $\lin{v} \in S'$, the representation of $\lin{v}$ in this basis.
  Since $S$ and $S'$ are linearly equivalent and $\BB'_p$ is the lexicographically earliest basis of $S$, we can easily recover, for each vector $\lin{v} \in \BB_x$, the representation of $\lin{v}$ in $\BB'_p$.
  These representations form the $|\BB'_p| \times |\BB_x|$ transition matrix $M_{\vec{xp}}$.
  
  \smallskip
  We now briefly discuss the case where $p=r$ is the root of $T^b$ with two children $c_1 < c_2$ (so that $x \in \{c_1, c_2\}$).
  Note that $B_r = \{\lin{0}\}$, so $\BB_r$ is empty; moreover, $B_{c_1} = B_{c_2} = \sumof{\Vc_{c_1}} \cap \sumof{\Vc_{c_2}}$ and hence $\BB'_r = \BB_{c_1}$.
  The implicit representations of both $\BB_{c_1}$ and $\BB_{c_2}$ can be deduced from the edge signatures $\sigma(\Tc^b, \vec{c_1r})$, $\sigma(\Tc^b, \vec{c_2r})$ in time $\Oh[\ell]{1}$ using \cref{lem:canonical-basis-simple}.
  Thus both $M_{\vec{c_1r}}$ -- the identity matrix of dimension $|\BB_{c_1}|$ -- and $M_{\vec{c_2r}}$ -- the transition matrix from the basis $\BB_{c_2}$ to $\BB_{c_1}$ -- can be computed using only $\sigma(\Tc^b, \vec{c_1r})$ and $\sigma(\Tc^b, \vec{c_2r})$ in time $\Oh[\ell]{1}$.
\end{proof}

\paragraph*{Construction of the rank decomposition automaton.}

We have now gathered enough tools to prove the following statement.

\newcommand{\ExactRankAutom}{{\cal JKO}}


\begin{lemma}
  \label{lem:exact-rank-automaton}
  Let $k, \ell \geq 0$ be integers with $k \leq \ell$.
  There exists a~label-oblivious rank decomposition automaton $\ExactRankAutom_{k,\ell} = (Q, \iota, \delta, \varepsilon)$ of width $\ell$ with evaluation time $\Oh[\ell]{1}$ and $|Q| = \Oh[k,\ell]{1}$, called the \emph{exact rankwidth automaton}, with the following properties:
  
  Suppose that $\Tc^b$ is a~rooted annotated rank decomposition of width at most $\ell$ that encodes a~partitioned graph $(G, \prt)$. 
  Let $\rho$ be the run of $\autom$ on $\Tc^b$.
  Then, for every $x \in V(\Tc^b)$, the full set $\fullset^{\BB_x}_k(x)$ at $x$ of width $k$ with respect to $\Tc^b$ is equal to:
  \begin{itemize}
    \item $\rho(\vec{xp})$ if $x$ is not the root of $T^b$ and $p$ is the parent of $x$; or
    \item $\rho(\vartheta)$ if $x$ is the root of $T^b$.
  \end{itemize}
\end{lemma}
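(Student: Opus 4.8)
The plan is to assemble the exact rankwidth automaton $\ExactRankAutom_{k,\ell}$ directly from the pieces developed in this subsection, so that at each node of $\Tc^b$ its state records precisely the full set $\fullset_k^{\BB_x}(x)$ computed with respect to the \emph{canonical transcript} of $\Tc^b$. Concretely, I would take the state set $Q$ to be the set of all encodings of families of compact $B_x$-namus of width at most $k$, that is, $Q = 2^{U_k^\ell} \cup \{\varnothing\}$; by \cref{lem:compact-namus}, $|U_k^\ell| = \Oh[k,\ell]{1}$, hence $|Q| = \Oh[k,\ell]{1}$. The automaton will be label-oblivious since the full sets depend only on the subspace arrangement $\Vc$ associated with $(G,\prt)$ — that is, only on the annotations $\reps,\repse,\dmap$ of $\Tc^b$ and not on any vertex labels — so we may drop $\Gamma$ from the description and let $\iota$ take only an edge signature.

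The three transition functions are defined as follows. For the initial mapping $\iota(\sigma)$ at a leaf edge $\vec{lp}$: the edge signature $\sigma = \sigma(\Tc^b,\vec{lp})$ determines the width $q$ of $lp$ via $\cutrk_{\repse_\sigma}(\reps^a_\sigma)$, hence $|\BB_l| = 2q$, and by the first bullet of \cref{lem:full-set-ingredients} the set $\fullset_k^{\BB_l}(l)$ is a single, explicitly computable $B_l$-namu depending only on $|\BB_l|$; so $\iota(\sigma)$ returns this singleton. For the transition mapping $\delta(\tau,q_1,q_2)$ at an internal node $x$ with children $c_1 < c_2$: from the transition signature $\tau = \tau(\Tc^b,\vec{xp})$ (or at the root from the two relevant edge signatures) we use \cref{lem:transition-matrices} to construct the transition matrices $M_{\vec{c_1x}}$ and $M_{\vec{c_2x}}$ and the value $\dim(B_x)$ in time $\Oh[\ell]{1}$, and then apply the second bullet of \cref{lem:full-set-ingredients} to combine $q_1 = \fullset_k^{\BB_{c_1}}(c_1)$ and $q_2 = \fullset_k^{\BB_{c_2}}(c_2)$ into $\fullset_k^{\BB_x}(x)$, again in time $\Oh[\ell]{1}$. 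The final mapping $\varepsilon(\sigma,q_1,q_2)$ at the root $r$ with children $c_1 < c_2$ is handled the same way, using the root case of \cref{lem:transition-matrices} to recover $M_{\vec{c_1r}}$ (the identity) and $M_{\vec{c_2r}}$, noting $B_r = \{\lin{0}\}$ and $\BB'_r = \BB_{c_1}$. Correctness of the overall run is then established by a routine bottom-up induction on $\Tc^b$, invoking \cref{lem:full-set-ingredients} at every node, combined with the fact (to be checked once at the outset) that the ordered bases $\{\BB_x\}$, $\{\BB'_x\}$ we implicitly manipulate are exactly those of the canonical transcript, which is indeed a valid transcript of $\Tc^b$. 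The evaluation-time bound $\Oh[\ell]{1}$ follows since each of $\iota,\delta,\varepsilon$ is a fixed-size computation (construction of $\Oh[\ell]{1}$-size transition matrices followed by the $\Oh[\ell]{1}$-time combination of bounded-size namu families).

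The main subtlety I anticipate — and the step I would treat with the most care — is the bookkeeping of \emph{which basis each state is represented in}. The functions $\delta$ and $\varepsilon$ only see the abstract transition/edge signatures, not the global canonical bases $\BB_x$ (which have $\Omega(n)$-bit descriptions); what makes the construction work is precisely \cref{lem:transition-matrices}, which shows the transition matrix relating $\BB_{c_i}$ to $\BB'_x$ is \emph{locally determined}. I need to make sure the inductive hypothesis is stated so that $\rho(\vec{c_ix})$ is the full set represented in $\BB_{c_i}$ (the canonical ordered basis of $B_{c_i}$, built from the sorted representative vectors on the edge $c_ix$ via \cref{lem:canonical-basis-simple}), and that $\delta$ produces $\rho(\vec{xp})$ represented in $\BB_x$; the consistency of these choices across a parent–child pair is exactly the content of \cref{lem:transition-matrices} and must be invoked explicitly. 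A second, lesser point is that the automaton as defined in \Cref{sec:rdautom} operates on \emph{unrooted} decompositions and its run is indexed by oriented edges / the marker $\vartheta$; I would simply use the rooted convention already specified there (setting $\rho(\vec{rc_2}) := \rho(\vec{c_1r})$, etc.) and note that the $\fullset$ of a node $x$ with parent $p$ corresponds to the state on the oriented edge $\vec{xp}$ directed towards the root, which is what the lemma statement asks for. Beyond these, everything is an application of already-proved lemmas.
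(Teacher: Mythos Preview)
Your proposal is correct and follows essentially the same approach as the paper's proof: take $Q = 2^{U_k^\ell}$, define $\iota$ via the leaf case of \cref{lem:full-set-ingredients} (reading off $|\BB_l|$ from the edge signature), define $\delta$ and $\varepsilon$ by invoking \cref{lem:transition-matrices} to recover the transition matrices $M_{\vec{c_i x}}$ and $|\BB_x|$ locally and then applying the non-leaf case of \cref{lem:full-set-ingredients}. Your attention to the basis-bookkeeping subtlety is exactly the right point to flag, and the paper handles it the same way (implicitly, via \cref{lem:transition-matrices}); the only cosmetic difference is that $\varnothing$ is already an element of $2^{U_k^\ell}$, so there is no need to adjoin it separately.
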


\begin{proof}
  Let $Q = 2^{U_k^\ell}$, i.e., every state in $Q$ is a~subfamily of the family of all possible encodings of compact $B$-namus of width at most $k$ in an~ordered basis of size at most $\ell$.
  Note that since $\Tc^b$ has width at most $\ell$, we get that for any node $x \in V(T^b)$ and any ordered basis $\BB_x$ of the boundary space $B_x$, we have $\fullset_k^{\BB_x}(x) \subseteq U_k^\ell$ and so $\fullset_k^{\BB_x}(x) \in Q$.
  
  We define the initial mapping $\iota$ so that, for any leaf edge $\olp \in \leafe(T^b)$, we have that $\rho(\vec{lp}) = \fullset_k^{\BB_l}(l)$.
  By \cref{lem:full-set-ingredients}, $\fullset_k^{\BB_l}(l)$ only depends on the cardinality of $\BB_l$, which can be uniquely deduced from the edge signature $\sigma(\Tc, \vec{lp})$.
  Since $\iota$ accepts a~leaf edge signature as an~argument, such an~initial mapping can be constructed.
  
  The transition mapping $\delta$ is constructed as follows.
  Suppose $x$ is not a~leaf nor a~root of $\Tc^b$ and let $p$ be the parent of $x$ in $\Tc^b$.
  Let also $c_1 < c_2$ be the two children of $x$ in $T^b$.
  Then, we compute $\rho(\vec{xp}) = \fullset_k^{\BB_x}(x)$ as follows.
  By \cref{lem:full-set-ingredients}, $\fullset_k^{\BB_x}(x)$ can be deduced uniquely from $\fullset_k^{\BB_{c_1}}(c_1)$, $\fullset_k^{\BB_{c_2}}(c_2)$, $M_{\vec{c_1x}}$, $M_{\vec{c_2x}}$ and $|\BB_x|$.
  From \cref{lem:transition-matrices} it follows that both $M_{\vec{c_1x}}$ and $M_{\vec{c_2x}}$ can be determined from the transition signature $\tau(\Tc^b, \vec{xp})$.
  Also $|\BB_x|$ can be quickly deduced from the transition signature.
  On the other hand, $\fullset_k^{\BB_{c_1}}(c_1)$ is simply $\rho(\vec{c_1x})$ and $\fullset_k^{\BB_{c_2}}(c_2)$ is $\rho(\vec{c_2x})$.
  So we define the transition mapping $\delta$ so that $\fullset_k^{\BB_x}(x) = \delta(\tau(\Tc^b, \vec{xp}),\, \fullset_k^{\BB_{c_1}}(c_1),\, \fullset_k^{\BB_{c_2}}(c_2))$.
  
  For the final mapping $\varepsilon$, let $x = r$ be the root of $T^b$ with children $c_1 < c_2$.
  Our aim is to determine $\rho(\vartheta) = \fullset_k^{\BB_r}(r)$ from $\rho(\vec{c_1r}) = \fullset_k^{\BB_{c_1}}(c_1)$, $\rho(\vec{rc_1}) = \rho(\vec{c_2r}) = \fullset_k^{\BB_{c_2}}(c_2)$ and the edge signature $\sigma(\Tc^b, \vec{c_1r})$.
  By the definitions of runs of automata on rooted trees, the edge signature $\sigma(\Tc^b, \vec{c_2r})$ is uniquely determined by $\sigma(\Tc^b, \vec{c_1r})$.
  Again by \cref{lem:full-set-ingredients}, $\fullset_k^{\BB_r}(r)$ can be deduced uniquely from $\fullset_k^{\BB_{c_1}}(c_1)$, $\fullset_k^{\BB_{c_2}}(c_2)$, $M_{\vec{c_1r}}$, $M_{\vec{c_2r}}$ and $|\BB_r| = 0$.
  And by \cref{lem:transition-matrices}, $M_{\vec{c_1r}}$ and $M_{\vec{c_2r}}$ can be computed in $\Oh[\ell]{1}$ time given $\sigma(\Tc^b, \vec{c_1r})$ and $\sigma(\Tc^b, \vec{c_2r})$.
  Thus we define $\varepsilon$ so that $\fullset_k^{\BB_r}(r) = \varepsilon(\delta(\Tc^b, \vec{c_1r}), \fullset_k^{\BB_{c_1}}(c_1), \fullset_k^{\BB_{c_2}}(c_2))$.
  
  Since $\iota$, $\delta$ and $\varepsilon$ can be computed from its arguments in time $\Oh[k,\ell]{1}$, the proof is complete.
\end{proof}

Combining \cref{lem:exact-rank-automaton} with \cref{lem:korean-decomposition-recovery}, we immediately obtain the following lemma.

\begin{lemma}
\label{lem:linear-decomposition-recovery}
Let $k, \ell \geq 0$ be integers.
There exists an~algorithm that, given as input an~annotated rank decomposition $\Tc$ of width $\ell$ that encodes a partitioned graph $(G,\prt)$, in time $\Oh[\ell]{|\Tc|}$ either:
\begin{itemize}
\item correctly determines that $(G, \prt)$ has rankwidth larger than $k$; or
\item outputs a (non-annotated) rank decomposition of $(G, \prt)$ of width at most $k$.
\end{itemize}   
\end{lemma}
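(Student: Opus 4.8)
The plan is to combine the exact rankwidth automaton of \Cref{lem:exact-rank-automaton} with the decomposition-reconstruction subroutine of Jeong, Kim and Oum (\Cref{lem:korean-decomposition-recovery}). First I would dispose of the degenerate case $k \ge \ell$: there $\Tc$, read simply as a (non-annotated) rank decomposition, already has width $\ell \le k$, so we output it. Hence from now on assume $k < \ell$, so the hypothesis $k \le \ell$ of \Cref{lem:exact-rank-automaton} holds. Root $\Tc$ arbitrarily, and recall that by \Cref{lem:subspace-arrangement-from-graph} we may simultaneously regard it as a rooted rank decomposition $\Tc^b$ of the subspace arrangement $\Vc = \{A_C : C \in \prt\}$, whose $\Vc$-width is exactly twice the $(G,\prt)$-width of $\Tc$. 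Conversely, every rank decomposition of $\Vc$ (necessarily of even width $w$) yields a rank decomposition of $(G,\prt)$ of width $w/2$ by keeping the tree and composing the leaf bijection with $A_C \mapsto C$, since for each edge-cut $S = \boldcup\{C : C \subseteq S\}$ we have $\dim(A_S \cap A_{\compl{S}}) = 2\,\cutrk_G(S)$. So it suffices to detect whether $(G,\prt)$ has rankwidth at most $k$ and, if so, to produce a rank decomposition of $\Vc$ of the corresponding (even) width.

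Next I would instantiate the exact rankwidth automaton $\ExactRankAutom_{k,\ell}$ and compute its run on $\Tc^b$ by a single bottom-up sweep: apply $\iota$ at the leaf edges and $\delta$ at the internal oriented edges pointing to the root, each evaluation in $\Oh[\ell]{1}$ time, and finally apply $\varepsilon$ at the root; equivalently, one $\mathsf{Initialize}$ call on the prefix-rebuilding data structure of \Cref{lem:automatonprds} instantiated with $\ExactRankAutom_{k,\ell}$ followed by $\mathsf{Run}$/$\mathsf{Valuation}$ queries. By \Cref{lem:exact-rank-automaton} this yields, for every node $x$, the full set $\fullset^{\BB_x}_k(x)$ with respect to $\Tc^b$, and in particular the value $\fullset_k(r)$ at the root $r$. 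By \Cref{lem:korean-decomposition-recovery} (combined with the width translation above), $\fullset_k(r) = \emptyset$ precisely when $(G,\prt)$ has rankwidth larger than $k$, in which case we report this and stop. Otherwise we invoke the reconstruction subroutine of \Cref{lem:korean-decomposition-recovery}, which builds a rooted rank decomposition of $\Vc$ of the sought width in time $\Oh[\ell]{|\prt|}$ from the family of full sets together with all transition matrices $M_{\vec{xy}}$ of the canonical transcript. The full sets are already in hand; each transition matrix I would produce in $\Oh[\ell]{1}$ time from the transition signature of the corresponding node of $\Tc$ (or, at the root, from the two incident leaf-edge signatures), exactly as in \Cref{lem:transition-matrices}. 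Finally, translate the resulting decomposition of $\Vc$ back into a rank decomposition of $(G,\prt)$ of width at most $k$ as described, contract the root to obtain a cubic tree, and output it. Since $|\prt| \le |\Tc|$ and $|Q| = \Oh[k,\ell]{1} = \Oh[\ell]{1}$, every step runs in $\Oh[\ell]{|\Tc|}$ time.

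The genuinely hard content of this statement — extracting the transition matrices of the transcript in $\Oh[\ell]{1}$ amortised time rather than materialising the whole transcript in $\Omega(|\Tc|^2)$ time and space, and packaging the Jeong--Kim--Oum dynamic program as a rank decomposition automaton — has already been carried out in \Cref{lem:transition-matrices,lem:exact-rank-automaton}. What remains for the present lemma is essentially bookkeeping, and the only points requiring care are: keeping the partitioned-graph view and the subspace-arrangement view of $\Tc$ in sync (in particular the factor-of-two relation between the two notions of width, and the fact that the reconstruction subroutine outputs a decomposition of $\Vc$, not of $(G,\prt)$); checking that the automaton run can indeed be produced by one bottom-up pass; and noting that the reconstruction step does not require any of the $\Omega(|\Tc|^2)$-time preprocessing of $\Vc$ used elsewhere in \cite{DBLP:journals/siamdm/JeongKO21}, as it consumes only the full sets and the locally computed transition matrices.
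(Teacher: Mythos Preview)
Your proposal is correct and takes exactly the paper's approach: the paper's own proof of this lemma is the single sentence ``Combining \cref{lem:exact-rank-automaton} with \cref{lem:korean-decomposition-recovery}, we immediately obtain the following lemma,'' and your write-up is a careful elaboration of precisely that combination, including the factor-of-two bookkeeping between graph rankwidth and subspace-arrangement width that the paper leaves implicit.
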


Which then by combining with \Cref{lem:rearangmain} implies \Cref{lem:near-linear-annotated-decomposition-recovery}, which we restate here.

\nearlineardecompositionrecovery*


\subsection{Closure automaton}
\label{ssec:closure-automaton}

\newcommand{\hall}{\mathcal{H}}
\newcommand{\hmap}{\beta}
\newcommand{\hcost}{\mathsf{cost}}

We move on to the description of another rank decomposition automaton -- an~automaton computing possible small closures within the subtrees of a~given rank decomposition.
This automaton, together with $\ExactRankAutom$ from \cref{lem:exact-rank-automaton}, will be used by us in the proof of \cref{lem:closureprds}.
The description below should be considered to be an~analog of a~similar closure automaton for treewidth \cite[Appendix A.2]{dyntw}.
However, this construction of the automaton is noticeably more involved here: In \cite{dyntw}, it was enough to maintain, for each subtree $T$ of the decomposition, a~bounded-size family of small subsets of $V(G)$ (so the description of each subtree $T$ simply had bounded size and could be manipulated explicitly).
Here, given an~annotated rank decomposition $\Tc$ of $G$, we will need to store, for each edge $\oxp \in \oE(\Tc)$, a~bounded-size family of \emph{partitions} of $V(\Tc)[\oxp]$ into a~small number of subsets.
Since we cannot store the partitions of $V(\Tc)[\oxp]$ explicitly in an~efficient manner, we first need to roll out a~way of encoding such partitions succinctly.
Intuitively, given a~partition $\prt$ of $V(\Tc)[\oxp]$, we want to select from each set $C \in \prt$ a~minimal representative $R_C$ of $C$ and encode the connections between $R_C$ and $\overline{C}$ in $G$.
The details follow below.

Let $X$ be a~nonempty finite set.
We define an~\emph{indexed partition} of $X$ as any sequence $\C = (X_1, \ldots, X_c)$ of (possibly empty) pairwise disjoint subsets of $X$ with $X_1 \cup \ldots \cup X_c = X$.
Then $\C$ is said to \emph{represent} the (non-indexed) partition $\prt = \{X_1, \ldots, X_c\} \setminus \{\emptyset\}$ of $X$.

Next, fix $c \in \N$.
We say that a~triple $\H = ((V_1, \dots, V_c), H, \eta)$ is a~\emph{$(c,X)$-indexed graph} if:
\begin{itemize}
  \item $H$ is an~undirected graph,
  \item $(V_1, \dots, V_c)$ is an~indexed partition of $V(H)$; 
  \item for every $i \in [c]$, the subgraph $H[V_i]$ is edgeless; and
  \item $\eta \,\colon\,V(H) \to X$ is a~labeling function.
\end{itemize}
Given a~$(c,X)$-indexed graph $\H = ((V_1, \dots, V_c), H, \eta)$, we define the \emph{derived} partitioned graph $(H, \Dc)$ by setting $\Dc = \{V_1, \dots, V_c\} \setminus \{\emptyset\}$.
Also, for convenience, define $V(\H) \coloneqq V(H)$, $E(\H) \coloneqq E(H)$, $G(\H) \coloneqq H$, $V_i(\H) \coloneqq V_i$ and $\eta(\H) \coloneqq \eta$.

Two $(c,X)$-indexed graphs $\H_1 = ((V^1_1, \dots, V^1_c), H_1, \eta_1)$, $\H_2 = ((V^2_1, \dots, V^2_c), H_2, \eta_2)$ are isomorphic (denoted $\H_1 \sim^{c,X} \H_2$) if there exists an~isomorphism $\pi\,\colon\, V(H_1) \to V(H_2)$ from $H_1$ to $H_2$ such that: (i) $\pi(V^1_i) = V^2_i$ for all $i \in [c]$, and (ii) $\eta_1(v) = \eta_2(\pi(v))$ for all $v \in V(H_1)$.

For $s \in \N$, we say that $\H = ((V_1, \dots, V_c), H, \eta)$ is \emph{$s$-small} if for every $i \in [c]$ and $x \in X$, we have $|V_i \cap \eta^{-1}(x)| \leq s$; i.e., each subset $V_i$ contains at most $s$ vertices of any given label.
Thus if $\H$ is an~$s$-small $(c,X)$-indexed graph, then $|V(\H)| \leq c s |X|$.
Note that the property of $s$-smallness of indexed graphs is preserved by isomorphism.
Hence we define $\sim^{c,X}_s$ as the restriction of $\sim^{c,X}$ to only the classes containing $s$-small indexed graphs.
It is easy to see that $\sim^{c,X}_s$ has $\Oh[c, |X|, s]{1}$ distinct equivalence classes.

Now suppose that a~graph $G$ is encoded by an~annotated rank decomposition $\Tc = (T, U, \reps, \repse, \dmap)$ of width $\ell$ and let $\oxp \in \oE(T)$.
Recall that $\lparts(\Tc)[\oxp]$ comprises the vertices of $G$ assigned to the leaf edges of $T$ that are closer to $x$ than $p$, and that $\reps(\oxp)$ is a~minimal representative of $\lparts(\Tc)[\oxp]$ in $G$.
We say that a~$(c,\reps(\oxp))$-indexed graph $\H = ((V_1, \dots, V_c), H, \eta)$ \emph{respects $\Tc$ along $\oxp$} if:
\begin{itemize}
  \item $H = G[\{V_1, \ldots, V_c\}]$; and
  \item for each $v \in V(H)$, the label $\eta(v)$ is the unique vertex in $\reps(\oxp)$ so that $N_G(v) \cap \lparts(\Tc)[\opx] = N_G(\eta(v)) \cap \lparts(\Tc)[\opx]$.
\end{itemize}

Observe that if the graph $G$ and the decomposition $\Tc$ is fixed, then both the graph $H$ and the labeling function $\eta$ of an~indexed graph respecting $\Tc$ along $\oxp$ only depend on the choice of the sets $V_1, \dots, V_c$.

Assuming $\H = ((V_1, \dots, V_c), H, \eta)$ respects $\Tc$ along $\oxp$, we say that it \emph{encodes} an~indexed partition $\C = (X_1, \dots, X_c)$ of $\lparts(\Tc)[\oxp]$ if $V_i$ is a~minimal representative of $X_i$ in $G$ for each $i \in [c]$.
It is straightforward to see that all indexed graphs encoding $\C$ are pairwise isomorphic: For each $i$ the collection of neighborhoods $\{N(v) \setminus X_i\}_{v \in X_i}$ is uniquely determined by $X_i$, so $V_i$ contains one vertex $v \in X_i$ for each distinct neighborhood $N(v) \setminus X_i$; and the resulting indexed graph is the same up to isomorphism regardless of the choice of $v$.
Also, we say that $\H$ encodes a~partition $\prt$ if $\H$ encodes some indexed partition $\C$ representing~$\prt$.



If $\Cc$ is a~partition of $\lparts(\Tc)[\oxp]$, then we define its \emph{cost} to be the number of nodes in the subtree rooted at $\oxp$ that are cut by $\prt$; i.e., the number of oriented edges $\vec{e}$ that are predecessors of $\oxp$ in $T$ such that $\lparts(\Tc)[\vec{e}]$ intersects more than one set of $\Cc$.
We similarly define the cost of indexed partitions of $\lparts(\Tc)[\oxp]$.


Finally, for every equivalence class $\Kc$ of $\sim^{c,\reps(\oxp)}_s$, let $A_\Kc$ be the set of pairs $(q, \H)$, where $\H \in \Kc$ is an~$s$-small $(c,\reps(\oxp))$-indexed graph encoding some partition $\prt$ of $\lparts(\Tc)[\oxp]$ of cost $q$.
Then we say that a~set $F$ is a \emph{set of $(c,s)$-small representatives of $\Tc$ along $\oxp$} if, for every equivalence class $\Kc$ of $\sim^{c,\reps(\oxp)}_s$ with $A_\Kc \neq \emptyset$, $F$ contains a~single pair $(q, \H) \in A_\Kc$ with the minimum cost $q$.
%
%
Note that the cardinality of $F$ is bounded by the number of equivalence classes $\sim^{c,\reps(\oxp)}_s$, which is bounded by $\Oh[c,s,\ell]{1}$.

Our aim is now to prove that a~rank decomposition automaton can compute, for each edge $\oxp$, some set of $(c, s)$-small representatives of $\Tc$ along $\oxp$ -- which we will call $\AutomataReps^{c,s}(\Tc, \oxp)$ from now on -- and additional annotations allowing us to efficiently recover, for each $(q, \H) \in \AutomataReps^{c,s}(\Tc, \oxp)$, an~indexed partition of $\lparts(\Tc)[\oxp]$ of cost $q$ encoded by $\H$.

\newcommand{\BoundRepr}{\mathcal{CR}}


\begin{lemma}
  \label{lem:bound-repr-automaton}
  For every triple of non-negative integers $c, s, \ell$, there exists a~label-oblivious rank decomposition automaton $\BoundRepr = \BoundRepr_{c,s,\ell}$ with evaluation time $\Oh[c, s, \ell]{1}$ with the following property.
  Suppose $G$ is a~graph encoded by an annotated rank decomposition $\Tc = (T, U, \reps, \repse, \dmap)$ of width at most $\ell$.
  Then the run $\rho$ of $\BoundRepr$ on $\Tc$ satisfies that for every $\oxp \in \oE(\Tc)$,
  \[ \rho(\oxp) = (\AutomataReps^{c,s}(\Tc, \oxp), \Phi), \]
  where $\AutomataReps^{c,s}(\Tc, \oxp)$ is a set of $(c,s)$-small representatives of $\Tc$ along $\oxp$, and $\Phi$ is a~mapping from $\AutomataReps^{c,s}(\Tc, \oxp)$ such that:
  \begin{itemize}
    \item if $\oxp$ is a~leaf oriented edge, then $\Phi$ maps each pair in $\AutomataReps^{c,s}(\Tc, \oxp)$ to $\bot$; and
    \item if $\oxp$ is a~non-leaf oriented edge, where $\oxp$ has two children $\vec{y_1x}$ and $\vec{y_2x}$, then for every $(q, \H) \in \AutomataReps^{c,s}(\Tc, \oxp)$, we have $\Phi((q, \H)) = ((q_1, \H_1), (q_2, \H_2))$ such that:
    \begin{itemize}
      \item $(q_t, \H_t) \in \AutomataReps^{c,s}(\Tc, \vec{y_t x})$ for each $t \in [2]$;
      \item for every indexed partition $(X^1_1, \dots, X^1_c)$ of $\lparts(\Tc)[\vec{y_1 x}]$ of cost $q_1$ encoded by $\H_1$, and every indexed partition $(X^2_1, \dots, X^2_c)$ of $\lparts(\Tc)[\vec{y_2 x}]$ of cost $q_2$ encoded by $\H_2$, the indexed partition $(X^1_1 \cup X^2_1, \dots, X^1_c \cup X^2_c)$ of $\lparts(\Tc)[\oxp]$ has cost $q$ and is encoded by~$\H$.
    \end{itemize}
  \end{itemize}
\end{lemma}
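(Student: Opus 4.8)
The plan is to construct the automaton $\BoundRepr_{c,s,\ell}$ as a standard tree automaton on $\Tc$ whose state on an oriented edge $\oxp$ is precisely the pair $(\AutomataReps^{c,s}(\Tc,\oxp),\Phi)$ demanded in the statement. First I would observe that the state set can be taken to be the set of all such pairs, which is of bounded size: $\AutomataReps^{c,s}(\Tc,\oxp)$ contains at most one pair per equivalence class of $\sim^{c,\reps(\oxp)}_s$, of which there are $\Oh[c,s,\ell]{1}$ many (using $|\reps(\oxp)|\le 2^\ell$), each pair consisting of a bounded-size $s$-small indexed graph together with an integer cost and, for $\Phi$, a pair of pointers into the children's states; so the number of abstractly possible states is bounded by a function of $c,s,\ell$. (We must be slightly careful that the cost integer $q$ can be large, but it only ever appears as a label that is compared and summed; so we store it as an integer, the state set is infinite in the same benign sense as in \Cref{lem:cmsordaut}, and the evaluation time remains $\Oh[c,s,\ell]{1}$.)

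Next I would define the three mappings. For the initial mapping $\iota$: given a leaf edge signature $\sigma(\Tc,\olp)$, the set $\lparts(\Tc)[\olp]=\reps(\olp)$ is a single vertex (since leaf representatives are singletons in an annotated rank decomposition that encodes a graph), so the only partitions of it are the $c$ indexed partitions placing that vertex in exactly one of the $c$ blocks; each such indexed partition has cost $0$, and the corresponding $s$-small indexed graph is the one-vertex graph with the vertex in block $i$ and the (unique) label. We group these into equivalence classes of $\sim^{c,\reps(\olp)}_s$, keep one minimum-cost representative per class, and set $\Phi$ to send each to $\bot$. This data is computable from $\sigma(\Tc,\olp)$ in $\Oh[c,s,\ell]{1}$ time. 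For the transition mapping $\delta$: given a transition signature $\tau(\Tc,\oxp)$ — which carries $\reps$, $\repse$, $\dmap$ around the node $x$, in particular $\reps(\vec{y_1x}),\reps(\vec{y_2x}),\reps(\oxp)$, the graphs $\repse(y_1x),\repse(y_2x),\repse(xp)$, and the maps $\dmap(y_1xp),\dmap(y_2xp)$ — together with the states $\rho(\vec{y_1x}),\rho(\vec{y_2x})$, I would iterate over all pairs $(q_1,\H_1)\in\AutomataReps^{c,s}(\Tc,\vec{y_1x})$ and $(q_2,\H_2)\in\AutomataReps^{c,s}(\Tc,\vec{y_2x})$, and for each pair ``merge'' $\H_1$ and $\H_2$ into a candidate indexed graph $\H$ for $\oxp$: block-wise take the disjoint union of the vertex sets, recompute the labels by pushing the old labels (which live in $\reps(\vec{y_ix})$) through $\dmap(y_ixp)$ into $\reps(\oxp)$ and then into $\reps(\opx)$-adjacency classes, recompute the edge set of $\H$ between the two halves from $\repse(\cdot)$ and the labels, and then re-extract a minimal representative in each block so that the result is again an $s$-small $(c,\reps(\oxp))$-indexed graph. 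The new cost is $q:=q_1+q_2+[\![\text{$\oxp$ itself is cut}]\!]$, where the bracket is $1$ iff the merged indexed partition puts vertices into more than one block (detectable from whether more than one block of $\H$ is nonempty). We then bucket all produced pairs $(q,\H)$ by the equivalence class of $\H$ under $\sim^{c,\reps(\oxp)}_s$, keep for each bucket a pair of minimum cost, and record in $\Phi$ the pair $((q_1,\H_1),(q_2,\H_2))$ that produced it. Finally the final mapping $\varepsilon$ at the root is analogous to $\delta$, using $\sigma(\Tc,\vec{c_1r})$ and $\sigma(\Tc,\vec{c_2r})$, with $\lparts(\Tc)[\vartheta]=V(G)$ so that ``cut'' blocks count the nodes $\cut(\prt)$ in the whole tree.

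The correctness argument then has two halves, both by induction on the subtree rooted at $\oxp$. For soundness I would show that every $(q,\H)$ put in the state does arise from a genuine cost-$q$ partition encoded by $\H$: from the inductive indexed partitions $(X^1_i)_i$ of $\lparts(\Tc)[\vec{y_1x}]$ and $(X^2_i)_i$ of $\lparts(\Tc)[\vec{y_2x}]$ recorded via $\Phi$, the block-wise union $(X^1_i\cup X^2_i)_i$ is a valid indexed partition of $\lparts(\Tc)[\oxp]$; its cost is the number of predecessor edges of $\oxp$ cut by it, which decomposes additively over the two subtrees plus the indicator for $\oxp$ itself — this uses only that a predecessor edge of $\oxp$ is a predecessor of exactly one of $\vec{y_1x},\vec{y_2x}$, or equals $\oxp$; and that the merged $\H$ encodes this union follows from the fact (noted in the text, with a reference to \Cref{lem:uniqisom}) that an indexed graph respecting $\Tc$ along $\oxp$ and encoding a given indexed partition is unique up to isomorphism, so it suffices to check that our re-extracted representative graph is one such. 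For completeness I would take an arbitrary $s$-small indexed graph $\H$ respecting $\Tc$ along $\oxp$ that encodes some partition $\prt$, pick a minimum-cost such $\prt$ in its $\sim$-class, restrict $\prt$ to $\lparts(\Tc)[\vec{y_1x}]$ and $\lparts(\Tc)[\vec{y_2x}]$ to obtain indexed partitions there, observe each restriction is $s$-small, note its induced indexed graph lies in a class that by induction is represented in the child state by a pair of cost at most the cost of that restriction, and then check that re-merging these minimum representatives yields an indexed graph in the same $\sim^{c,\reps(\oxp)}_s$-class as $\H$ and of cost at most that of $\prt$ (again using additivity of cost and minimality); hence $\delta$ retains a pair of cost at most — and therefore exactly — the minimum. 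I expect the main obstacle to be exactly this bookkeeping: verifying that ``restrict, take minimum child representatives, merge'' never changes the isomorphism class of the resulting indexed graph and never increases the cost, i.e. that minimality composes correctly through the restriction/merge operation. This is where the precise definitions of ``respects'', ``encodes'', and the uniqueness-up-to-isomorphism of encoding indexed graphs must be deployed carefully; everything else (the state bound, the evaluation time, the formal structure of the automaton) is routine given the framework of \Cref{sec:rwautom}.
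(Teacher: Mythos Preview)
Your proposal is correct and follows essentially the same approach as the paper: define a merge (the paper calls it $\mathsf{Combine}$) of child indexed graphs using $\dmap$ and $\repse$ from the transition signature, compute the cost as $q_1+q_2+[\text{more than one nonempty block}]$, bucket by $\sim^{c,\reps(\oxp)}_s$, keep a minimum-cost representative per class, and record back-pointers for $\Phi$; the paper's correctness argument is exactly your soundness/completeness split, with the ``main obstacle'' you flag (that restricting an $s$-small encoding to each child is still $s$-small, and that merging minimum-cost child representatives lands in the correct isomorphism class with the correct cost) proved as separate claims. One small correction: after merging and re-extracting minimal representatives the result is \emph{not} automatically $s$-small---you must test this and discard the pair if it fails (the paper does this explicitly); this is harmless for completeness precisely because of the $s$-smallness-under-restriction fact you mention.
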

\begin{proof}
  We need to implement the following two procedures:
  \begin{itemize}
    \item for a~leaf oriented edge $\olp$ of $T$ with edge signature $\sigma(\Tc, \olp)$, determine $\AutomataReps^{c,s}(\Tc, \olp)$; and
    \item for a~non-leaf oriented edge $\oxp$ of $T$ where $\oxp$ has two children $\vec{y_1x}$, $\vec{y_2x}$, find $\AutomataReps^{c,s}(\Tc, \oxp)$ and the mapping $\Phi$ as in the statement of the lemma, given $\AutomataReps^{c,s}(\Tc, \vec{y_1x})$, $\AutomataReps^{c,s}(\Tc, \vec{y_2x})$ and the transition signature $\tau(\Tc, \oxp)$.
    Here we inductively assume that for $t \in [2]$, $\AutomataReps^{c,s}(\Tc, \vec{y_tx})$ is a~set of $(c,s)$-small representatives of $\Tc$ along $\vec{y_tx}$.
  \end{itemize}
  
  First, for a~leaf edge $\olp$, observe that $|\lparts(\Tc)[\olp]| = 1$ and the only vertex $v \in \lparts(\Tc)[\olp]$ can be read from the edge signature $\sigma(\Tc, \olp)$.
  Thus there exist exactly $c$ non-isomorphic $(c, \reps(\olp))$-indexed graphs $\H_1, \dots, \H_c$ respecting $\Tc$ along $\olp$ and encoding an~indexed partition of $\lparts(\Tc)[\olp]$: For each $i \in [c]$, the indexed graph $\H_i$ is defined by the sequence of sets $(V^i_1, \dots, V^i_c)$, where $V^i_i = \{v\}$ and $V^i_j = \emptyset$ for $j \neq i$. Moreover, $E(\H_i) = \emptyset$ and $\eta(\H_i)(v) = v$.
  Naturally, each $\H_i$ encodes a~partition of $\lparts(\Tc)[\olp]$ of cost $0$.
  Hence $\AutomataReps^{c,s}(\Tc, \olp)$ can be enumerated by brute force in time $\Oh[c,s,\ell]{1}$.
  
  Now assume $\oxp$ is a~non-leaf oriented edge and let $\vec{y_1x}$ and $\vec{y_2x}$ be the two children of $\oxp$.
  For convenience, define $S_1 = \lparts(\Tc)[\vec{y_1x}]$, $S_2 = \lparts(\Tc)[\vec{y_2x}]$, and $S = \lparts(\Tc)[\oxp]$; we have that $S_1 \cap S_2 = \emptyset$ and $S = S_1 \cup S_2$.	
  
  \newcommand{\Combine}{\mathsf{Combine}}
  We now define a~function $\Combine(\H_1, \H_2)$, taking as arguments a~$(c,\reps(\vec{y_1x}))$-indexed graph $\H_1$ respecting $\Tc$ along $\vec{y_1x}$, and a~$(c,\reps(\vec{y_2x}))$-indexed graph $\H_2$ respecting $\Tc$ along $\vec{y_2x}$ and returning a~$(c,\reps(\oxp))$-indexed graph $\H$ respecting $\Tc$ along $\oxp$ as follows.
  Let us denote the input graphs by $\H_1 = ((V^1_1, \dots, V^1_c), H_1, \eta_1)$ and $\H_2 = ((V^2_1, \dots, V^2_c), H_2, \eta_2)$.
  Note that $V(H_1) \cap V(H_2) = \emptyset$.
  Define an~auxiliary $(c, \reps(\oxp))$-indexed graph $\H' = ((V'_1, \dots, V'_c), H', \eta')$ as follows:
  
  \begin{itemize}
    \item $V'_i = V^1_i \cup V^2_i$ for each $i \in [c]$;
    \item $V(H') = V(H_1) \cup V(H_2)$;
    \item $H'[V(H_1)] = H_1$ and $H'[V(H_2)] = H_2$;
    \item for $u \in V(H_1)$ and $v \in V(H_2)$, we have $uv \in E(H')$ if and only if $u$, $v$ do not belong to the same set $V'_i$ and moreover $\eta_1(u)\eta_2(v) \in E(G)$; and
    \item for $t \in [2]$ and $v \in V(H_t)$, we have $\eta'(v) = \dmap(y_i x p)(\eta(v))$.
  \end{itemize}
  
  A~verification with the definitions shows that $\H'$ respects $\Tc$ along $\oxp$.
  In particular, whenever $i \in [c]$ and $u, v \in V_i(\H')$ with $\eta'(u) = \eta'(v)$, we have that $N_G(u) \cap \lparts(\Tc)[\opx] = N_G(v) \cap \lparts(\Tc)[\opx]$.
  Also, $\Combine(\H_1, \H_2)$ can be constructed given $\H_1$ and $\H_2$ using only the transition signature $\tau(\Tc, \oxp)$. In particular, for $u \in V(H_1), v \in V(H_2)$, we have $\eta_1(u) \in \reps(\vec{y_1x})$, $\eta_2(v) \in \reps(\vec{y_2x})$, so whether $\eta_1(u)\eta_2(v) \in E(G)$ depends only on $\tau(\Tc, \oxp)$.
  
  Then $\H$ is constructed from $\H'$ as follows.
  We begin with $\H = \H'$. Whenever there is an~index $i \in [c]$ and two vertices $u, v \in V_i(\H)$ such that $N_{H'}(u) = N_{H'}(v)$ and $\eta'(u) = \eta'(v)$, we remove one of the vertices from $V_i(\H)$ (and therefore $\H$).
  
  
  We now prove a string of properties of $\Combine$:
  \begin{claim}
    \label{obs:combine-merges}
    Whenever $\H_1$ encodes an~indexed partition $(X^1_1, \dots, X^1_c)$ of $S_1$ and $\H_2$ encodes an~indexed partition $(X^2_1, \dots, X^2_c)$ of $S_2$, then $\Combine(\H_1, \H_2)$ encodes the indexed partition $(X^1_1 \cup X^2_1, \dots, X^1_c \cup X^2_c)$ of $S$.
  \end{claim}
  \begin{claimproof}
    Take $\H_1 = ((V^1_1, \ldots, V^1_c), H_1, \eta_1)$, $\H_2 = ((V^2_1, \ldots, V^2_c), H_2, \eta_2)$ and $\H \coloneqq \Combine(\H_1, \H_2) = ((V_1, \ldots, V_c), H, \eta)$.
    Let also $\H' = ((V'_1, \ldots, V'_c), H', \eta')$ be the auxiliary graph in the definition of $\Combine$.
    Since $\H'$ respects $\Tc$ along $\oxp$ and $\H$ is an~induced subgraph of $\H'$ (i.e., $V_i \subseteq V'_i$ for all $i \in [c]$), we find that also $\H$ respects $\Tc$ along $\oxp$.
    Finally define $X_i = X^1_i \cup X^2_i$ for $i \in [p]$.
    
    First consider two vertices $u, v \in V(H')$ with $u \in V'_i$, $v \in V'_j$ and $i \neq j$.
    We will show that $uv \in E(H')$ if and only if $uv \in E(G)$.
    If $u \in V^t_i$ and $v \in V^t_j$ for some $t \in [2]$, this follows from the fact that $\H_t$ respects $\Tc$ along $\vec{c_tx}$: We have $H_t = G[\{V^t_1, \ldots, V^t_c\}]$, so $uv \in E(H_t)$ if and only if $uv \in E(G)$. Then the statement follows from $H'[V(H_t)] = H_t$.
    On the other hand, if $u \in V^1_i$ and $v \in V^2_j$, then by construction we have placed an edge $uv \in E(H')$ if and only if $\eta_1(u)\eta_2(v) \in E(G)$.
    Then observe that $\eta_1(u)$ is defined so that $N_G(u) \cap \lparts(\Tc)[\vec{xy_1}] = N_G(\eta_1(u)) \cap \lparts(\Tc)[\vec{xy_1}]$, and $\eta_2(v)$ is defined similarly: $N_G(v) \cap \lparts(\Tc)[\vec{xy_2}] = N_G(\eta_2(v)) \cap \lparts(\Tc)[\vec{xy_2}]$.
    Since $u, \eta_1(u) \in \lparts(\Tc)[\vec{xy_2}]$ and $v, \eta_2(v) \in \lparts(\Tc)[\vec{xy_1}]$, we get that $uv \in E(G)$ if and only if $\eta_1(u)\eta_2(v) \in E(G)$.
    The statement follows.
    
    Then pick $i \in [c]$.
    We ought to show that $V_i$ is a~minimal representative of $X_i$ in $G$.
    Let $t \in [2]$ and $v \in X^t_i$.
    Since $\H_t$ encodes an~indexed partition $(X^t_1, \ldots, X^t_c)$ of $S_t$, there is $u \in V^t_i$ with $N_G(v) \setminus X^t_i = N_G(u) \setminus X^t_i$, and $u \in V'_i$ by construction.
    Also by construction, there exists $u' \in V_i$ such that $\eta'(u) = \eta'(u')$ and $N_{H'}(u) = N_{H'}(u')$.
    We have:
    \begin{itemize}
      \item $N_G(u) \cap \lparts(\Tc)[\vec{px}] = N_G(u') \cap \lparts(\Tc)[\vec{px}]$ (since $\eta'(u) = \eta'(u')$),
      \item $N_G(u) \cap X_j = N_G(u') \cap X_j$ for all $j \neq i$: Let $w \in X_j$.
      Pick $t' \in [2]$ for which $w \in X^{t'}_j$.
      Since $V^{t'}_j$ is a~minimal representative of $X^{t'}_j$ in $G$, there is some $w' \in V^{t'}_j$ such that $N_G(w) \setminus X^{t'}_j = N_G(w') \setminus X^{t'}_j$.
      Since $uw' \in E(H') \Leftrightarrow u'w' \in E(H')$, we have $uw' \in E(G) \Leftrightarrow u'w' \in E(G)$ by the considerations above.
      As $u, u' \notin X^{t'}_j$, we conclude that
      \[ uw \in E(G) \,\Leftrightarrow\, uw' \in E(G) \,\Leftrightarrow\, u'w' \in E(G) \,\Leftrightarrow\, u'w \in E(G). \]
    \end{itemize}
    
    So $N_G(u') \setminus X_i = N_G(u) \setminus X_i = N_G(v) \setminus X_i$, where the last equality follows from $N_G(u) \setminus X^t_i = N_G(v) \setminus X^t_i$.
    It follows that $u' \in V_i$ represents $v$ in $X_i$.
    As $v$ was arbitrary, we conclude that $V_i$ is a~representative of $X_i$.
    
    For minimality, observe that if $u, v \in V'_i$ with $N_G(u) \setminus X_i = N_G(v) \setminus X_i$, then also $\eta'(u) = \eta'(v)$ (since $N_G(u) \cap \lparts(\Tc)[\oxp] = N_G(v) \cap \lparts(\Tc)[\oxp]$) and $N_{H'}(u) = N_{H'}(v)$ (since $N_G(u) \cap X_j = N_G(v) \cap X_j$ for $j \neq i$).
    Thus the construction of $\H$ from $\H'$ would remove either $u$ or $v$ from the graph.
  \end{claimproof}  

  Next, $\Combine$ preserves isomorphism in the following sense:
  \begin{claim}
    \label{obs:combine-isomorphic}
    For each $t \in [2]$, suppose that $\H_t$ and $\H^\star_t$ are $(c, \reps(\vec{y_tx}))$-indexed graphs respecting $\Tc$ along $\vec{y_tx}$ such that $\H_t \sim^{c,\reps(\vec{y_tx})} \H^\star_t$.
    Then $\Combine(\H_1, \H_2) \sim^{c, \reps(\oxp)} \Combine(\H^\star_1, \H^\star_2)$.
  \end{claim}
  \begin{claimproof}
    Let $\H \coloneqq \Combine(\H_1, \H_2)$ and $\H'$ be the auxiliary graph in the construction of~$\H$.
    Likewise, let $\H^\star \coloneqq \Combine(\H^\star_1, \H^\star_2)$ and $(\H^\star)'$ be the auxiliary graph in the construction of~$\H^\star$.
    Also let $\pi_1 \,:\, V(\H_1) \to V(\H^\star_1)$, $\pi_2 \,:\, V(\H_2) \to V(\H^\star_2)$ be the isomorphisms promised by the statement of the claim.
    
    Observe that $\pi' \,:\, V(\H') \to V((\H^\star)')$ given by $\funrestriction{\pi'}{V(\H_1)} = \pi_1$ and $\funrestriction{\pi'}{V(\H_2)} = \pi_2$ is an~isomorphism between $\H'$ and $(\H^\star)'$: This holds since for each $i \in [c]$, we have $\pi(V_i(\H')) = \pi(V_i(\H_1) \cup V_i(\H_2)) = \pi_1(V_i(\H_1)) \cup \pi_2(V_i(\H_2)) = V_i(\H^\star_1) \cup V_i(\H^\star_2) = V_i((\H^\star)')$ and, for each $v \in V(\H_t)$ with $t \in [2]$,
    $\eta(\H')(v) = \dmap(y_t xp)\left(\eta(\H_t)(v)\right) = \dmap(y_t xp)\left(\eta(\H^\star_t)(\pi_t(v))\right) = \eta((\H^\star)')(\pi_t(v))$.
    Also the same arguments as in \cref{obs:combine-merges} show that $\pi'$ gives an isomorphism of the graphs $G(\H')$ and $G((\H^\star)')$.
    
    Since $\H'$ and $(\H^\star)'$ are isomorphic, it can be easily verified that the process of the construction of $\H$ from $\H'$ and $\H^\star$ from $(\H^\star)'$ preserves isomorphism.
    This finishes the proof.
%
%
  \end{claimproof}
  
  The following claim follows from a~simple application of \cref{obs:combine-merges}.
  \begin{claim}
    \label{cl:reverse-combine}
    Suppose $\H$ encodes an~indexed partition $(X_1, \dots, X_c)$ of $S$.
    Then there exists a~$(c, \reps(\vec{y_1 x}))$-indexed graph $\H_1$ and a~$(c, \reps(\vec{y_2 x}))$-indexed graph $\H_2$ such that:
    \begin{itemize}
      \item for each $t \in [2]$, $\H_t$ encodes the indexed partition $(X_1 \cap S_t, \dots, X_c \cap S_t)$ of $S_t$; and
      \item $\Combine(\H_1, \H_2) \sim^{c, \reps(\vec{xp})} \H$.
    \end{itemize}
  \end{claim}
  \begin{claimproof}
    For $t \in [2]$, let $\H_t$ be any $(c, \reps(\vec{y_t x}))$-indexed graph encoding the indexed partition $(X_1 \cap S_t, \ldots, X_c \cap S_t)$.
    Such an~indexed graph must exist since it is enough to take $\H_t = (V^t_1, \ldots, V^t_c), H_t, \eta_t)$, where for $i \in [c]$, $V^t_i$ is any minimal representative of $X_i \cap S_t$ in $G$, and the objects $H_t$, $\eta_t$ are uniquely deduced from $V^t_1, \ldots, V^t_c$.
%
    
    By \cref{obs:combine-merges}, $\Combine(\H_1, \H_2)$ encodes $(X_1, \dots, X_c)$.
    Since indexed graphs encoding the same indexed partition of $S$ are isomorphic, we conclude that $\Combine(\H_1, \H_2) \sim^{c,\reps(\oxp)} \H$.
  \end{claimproof}
  
  We also notice the following claim binding the cost of an~indexed partition of $S$ to the costs of indexed partitions of $S_1, S_2$:
  \begin{claim}
    \label{cl:merging-costs}
    Let $\H$ be a $(c, \reps(\oxp))$-indexed graph encoding an indexed partition $(X_1, \dots, X_c)$ of $S$ of cost $q \in \N$, and for each $t \in [2]$, $(X_1 \cap S_t, \dots, X_c \cap S_t)$ be an~indexed partition of $S_t$ of cost $q_i \in \N$.
    Let $\delta \in \{0, 1\}$ be the indicator equal to $1$ if and only if $\H$ has at least two nonempty parts (equivalently, at least two sets $X_1, \dots, X_c$ are nonempty).
    Then $q = q_1 + q_2 + \delta$.
  \end{claim}
  \begin{claimproof}
    Recall that $q$ is the number of oriented edges $\vec{e}$ that are predecessors of $\oxp$ such that $\lparts(\Tc)[\vec{e}]$ intersects more than one set in $(X_1, \dots, X_c)$.
    Noting that the two edges $\vec{y_1 x}$ and $\vec{y_2 x}$ are the two children of $\oxp$, we see that $q$ is the sum of the following values:
    \begin{itemize}
      \item for each $t \in [2]$, the number of predecessors $\vec{e}$ of $\vec{y_t x}$ such that $\lparts(\Tc)[\vec{e}]$ intersects at least two sets in $(X_1, \dots, X_c)$.
      Since $\lparts(\Tc)[\vec{e}] \subseteq S_t$, this is equivalently the number of predecessors $\vec{e}$ of $\vec{y_t x}$ with $\lparts(\Tc)[\vec{e}]$ intersecting at least two sets in $(X_1 \cap S_t, \dots, X_c \cap S_t)$, or exactly $q_t$; and
      \item $1$ if $\lparts(\Tc)[\opx] = S$ intersects at least two sets in $(X_1, \dots, X_c)$, or $0$ otherwise; equivalently, this indicator is equal to $1$ if and only if at least two sets in $(X_1, \dots, X_c)$ are nonempty.
      Since for each $i \in [c]$, the set $X_i$ is nonempty if and only if $V_i(\H)$ is nonempty, this indicator is equal to exactly $\delta$.
    \end{itemize}
    Therefore, $q = q_1 + q_2 + \delta$.
  \end{claimproof}
  
  Finally, the following claim will enable us to compute $\AutomataReps^{c,s}(\Tc, \oxp)$.
  \begin{claim}
    \label{cl:partition-composition}
    Let $\Kc$ be an~equivalence class of $\sim^{c,\reps(\oxp)}_s$ and suppose $(q, \H) \in A_\Kc$ has the minimum possible cost $q$ among all pairs in $A_\Kc$.
    Let $\delta \in \{0,1\}$ be an~indicator equal to $0$ if $\H$ has at most one nonempty part, and $1$ otherwise.
    Then there exist pairs $(q_1, \H_1) \in \AutomataReps^{c,s}(\Tc, \vec{y_1 x})$ and $(q_2, \H_2) \in \AutomataReps^{c,s}(\Tc, \vec{y_2 x})$ such that
    \[ \begin{split}
      q &= q_1 + q_2 + \delta, \\
      \H &\sim^{c,\reps(\oxp)} \Combine(\H_1, \H_2).
    \end{split} \]
  \end{claim}  
  \begin{claimproof}
    Let $(q, \H)$ and $\delta \in \{0, 1\}$ be as in the statement of the claim.
    By definition, $\H$ is an~$s$-small $(c, \reps(\oxp))$-indexed graph and there exists an~indexed partition $(X_1, \dots, X_c)$ of $S$ of cost $q$ encoded by $\H$.
    For $t \in [2]$ and $j \in [c]$, define $X^t_j = X_j \cap S_t$, so that $(X^t_1, \dots, X^t_c)$ is an~indexed partition of $S_t$; let then $q_t \in \N$ be the cost of this partition.
    Then $q = q_1 + q_2 + \delta$ by \cref{cl:merging-costs}.
    
    Let $\H_1, \H_2$ be $(c, \reps(\vec{y_1 x}))$-indexed and $(c, \reps(\vec{y_2 x}))$-indexed, respectively, graphs with the properties that $\Combine(\H_1, \H_2) \sim^{c,\reps(\oxp)} \H$ and for each $t \in [2]$, $\H_t$ encodes $(X^t_1, \dots, X^t_c)$.
    Note that such indexed graphs exist by \cref{cl:reverse-combine}.
    For each $t \in [2]$, we claim that $\H_t$ is $s$-small.
    Suppose otherwise; let $\H_t = ((V^t_1, \ldots, V^t_c), H_t, \eta_t)$ so that $V^t_j$ is a~minimal representative of $X^t_j$ for all $j \in [c]$.
    Then there is some index $j \in [c]$ and $s+1$ vertices $v_1, \dots, v_{s+1} \in V^t_j$ such that:
    \begin{itemize}
      \item $\eta_t(v_1) = \ldots = \eta_t(v_{s+1})$,
      \item the neighborhoods $N_G(v_1) \cap \compl{X^t_j}, \dots, N_G(v_{s+1}) \cap \compl{X^t_j}$ are pairwise different.
    \end{itemize}
    From $\eta_t(v_1) = \ldots = \eta_t(v_{s+1})$ and $\H_t$ respecting $\Tc$ along $\vec{y_t x}$, we also have $N_G(v_1) \cap \compl{S_t} = \ldots = N_G(v_{s+1}) \cap \compl{S_t}$.
    Since $X^t_j \subseteq S_t$, we infer that all the neighborhoods $N_G(v_1) \cap (S_t \setminus X^t_j), \dots, N_G(v_{s+1}) \cap (S_t \setminus X^t_j)$ are pairwise different.
    So we have that:
    \begin{itemize}
      \item $v_1, \dots, v_{s+1} \in X_j$,
      \item $N_G(v_1) \cap \compl{X_j}, \dots, N_G(v_{s+1}) \cap \compl{X_j}$ are pairwise different (since $S_t \setminus X^t_j \subseteq \compl{X_j}$), and
      \item $\eta(v_1) = \ldots = \eta(v_{s+1})$ (since $N_G(v_1) \cap \compl{S} = \ldots = N_G(v_{s+1}) \cap \compl{S}$).
    \end{itemize}
    Therefore, any minimal representative of $X_j$ in $G$ must contain at least $s + 1$ vertices with the same neighborhood $N_G(v_1) \cap \compl{S}$ in $\compl{S}$.
    This implies that $V_j(\H)$ must contain at least $s + 1$ vertices labeled $\eta(v_1)$ -- a~contradiction since we assumed $\H$ is $s$-small.
    So $\H_t$ is indeed $s$-small.
    
    For each $t \in [2]$, $\H_t$ encodes the indexed partition $(X^t_1, \ldots, X^t_c)$ of cost $q_t$.
    Thus there is a~pair $(q^\star_t, \H^\star_t) \in \AutomataReps^{c,s}(\Tc, \vec{y_1 x})$ and $(q^\star_2, \H^\star_2) \in \AutomataReps^{c,s}(\Tc, \vec{y_2 x})$ such that $q^\star_t \leq q_t$ and $\H^\star_t \sim^{c,\reps(\oxp)} \H_t$ for each $t \in [2]$.
    Let also, for each $t \in [2]$, $(Y^t_1, \ldots, Y^t_c)$ be an~indexed partition of $S_t$ of cost $q^\star_t$ encoded by $\H^\star_t$.    
    Then take $\H^\star \coloneqq \Combine(\H^\star_1, \H^\star_2)$.
    By \cref{obs:combine-isomorphic}, $\H^\star \sim^{c, \reps(\oxp)} \H$; in particular, $\H^\star$ has at most one nonempty part if and only if $\H$ does.
    By \cref{obs:combine-merges}, $\H^\star$ encodes the indexed partition $(Y_1, \ldots, Y_c)$, where $Y_i = Y^1_i \cup Y^2_i$ for $i \in [c]$.
    This partition has cost $q^\star_1 + q^\star_2 + \delta$ by \cref{cl:merging-costs}, so $(q^\star_1 + q^\star_2 + \delta,\, \H^\star) \in A_\Kc$.
%
    But since $(q, \H)$ has the minimum cost among all pairs in $A_\Kc$, we get
    \[ q \leq q^\star_1 + q^\star_2 + \delta \leq q_1 + q_2 + \delta = q. \]
    Therefore, $q^\star_1 = q_1$ and $q^\star_2 = q_2$ and thus $q = q_1 + q_2 + \delta$ for $(q_1, \H^\star_1) \in \AutomataReps^{c,s}(\Tc, \vec{y_1x})$, $(q_2, \H^\star_2) \in \AutomataReps^{c,s}(\Tc, \vec{y_2x})$ and $\H \sim^{c, \reps(\oxp)} \Combine(\H^\star_1, \H^\star_2)$.
%
%
  \end{claimproof}
  
  Therefore, we compute the set $\AutomataReps^{c,s}(\Tc, \oxp)$ as follows.
  We populate a~set $\Wc$ comprising pairwise different pairs containing a~non-negative integer and a~$(c,\reps(\oxp))$-indexed graph by:
  \begin{itemize}
    \item iterating all pairs $(q_1, \H_1) \in \AutomataReps^{c,s}(\Tc, \vec{y_1 x})$ and $(q_2, \H_2) \in \AutomataReps^{c,s}(\Tc, \vec{y_2 x})$,
    \item computing $\H = \Combine(\H_1, \H_2)$ and $q = q_1 + q_2 + \delta$, where $\delta = 1$ if $\H$ contains at least two nonempty parts, and $\delta = 0$ otherwise, and
    \item if $\H$ is $s$-small, adding a pair $(q, \H)$ to $\Wc$.
  \end{itemize}
  Then we filter $\Wc$ as follows: whenever $\Wc$ contains pairs $(q, \H)$ and $(q', \H')$ such that $q \leq q'$ and $\H \sim^{c, \reps(\oxy)} \H'$, we drop $(q', \H')$ from $\Wc$.
  Naturally, this entire process (the construction of $\Wc$ and its subsequent filtering) can be carried out in time $\Oh[c, s, \ell]{1}$.
  We finally set $\AutomataReps^{c,s}(\Tc, \oxp) \coloneqq \Wc'$.
  Naturally, by \cref{cl:partition-composition}, $\Wc'$ is a~set of $(c,s)$-small representatives of $\Tc$ along $\oxp$.

  
  We conclude the proof by observing that, for every $(q, \H) \in \AutomataReps^{c,s}(\Tc, \oxp)$, we can define the mapping $\Phi((q, \H))$ as any pair $((q_1, \H_1), (q_2, \H_2))$ for which $q = q_1 + q_2 + \delta$ and $\H = \Combine(\H_1, \H_2)$, where $\delta = 1$ if and only if $\H$ contains at least two nonempty parts. (Such a~pair exists by the construction of $\Wc$.)
  Then $\Phi((q, \H))$ satisfies all the requirements of the lemma by \cref{obs:combine-merges,cl:merging-costs}.
\end{proof}

\subsection{State optimization problem for rank decomposition automata}

In this subsection, we introduce an~optimization problem for rank decomposition automata that will be used in the proof of \cref{lem:closureprds}.
We will also show that this problem can be solved efficiently under the reasonable assumptions on the automaton.

Let $(S, +, \leq)$ be a~totally ordered commutative semigroup, i.e., a~commutative semigroup $(S, +)$ with a~total order $\leq$ with the property that, for any $x, y, z \in S$ with $x \leq y$, we have $x + z \leq y + z$.
Assume that $+$ can be evaluated in time $\beta$.

Let also $\autom = (Q,\iota,\delta,\varepsilon)$ be a~label-oblivious rank decomposition automaton of width $\ell$ with evaluation time $\beta$ and a~finite set of states.
Suppose $\Tc = (T, U, \reps, \repse, \dmap)$ is an~unrooted annotated rank decomposition of width at most $\ell$.
We will call any function $\kappa \,\colon\, \leafe(T) \to Q$ a~\emph{leaf edge state mapping}.
Given a~leaf edge state mapping $\kappa$ and an~edge $\vec{ab} \in \oE(T)$, we define the \emph{$\kappa$-run} of $(\autom, a, b)$ as the function $\rho_\kappa \,\colon\, \pred_T(\vec{ab}) \cup \pred_T(\vec{ba}) \cup \{\vartheta\} \to Q$ defined as follows:
\begin{itemize}
  \item for each leaf edge $\vec{lp} \in \leafe(T)$ it holds that $\rho_\kappa(\vec{lp}) = \kappa(\vec{lp})$;
  \item for each non-leaf edge $\vec{tp}$ of $T$ with children $\vec{c_1 t}, \vec{c_2t}$, where $c_1 < c_2$, it holds that $\rho_\kappa(\vec{tp}) = \delta(\tau(\Tc, \vec{tp}), \rho_\kappa(\vec{c_1t}), \rho_\kappa(\vec{c_2t}))$;
  \item $\rho_\kappa(\vartheta) = \varepsilon(\delta(\Tc, \vec{ab}), \rho_\kappa(\vec{ab}), \rho_\kappa(\vec{ba}))$.
\end{itemize}
So, in other words, a~$\kappa$-run of an~automaton is defined similarly to a~run of an~automaton, only that the initial mapping $\iota$ of the automaton is ignored, and instead we fix the state $\rho_\kappa(\vec{lp})$ of each leaf edge $\vec{lp}$ to $\kappa(\vec{lp})$.

Moreover, let ${\bf c} \,\colon\, \leafe(T) \times Q \to S$ be a~\emph{cost function}.
Then the cost of a~leaf edge state mapping $\kappa$ is defined as ${\bf c}(\kappa) \coloneqq \sum_{e \in \leafe(T)} {\bf c}(e, \kappa(e))$.

We now show that the optimization problem where, given a~set $F \subseteq Q$ of states, we are to find a~leaf edge state mapping $\kappa$ of minimum cost for which $\rho_\kappa(\vartheta) \in F$, can be solved efficiently.
The proof is a~standard application of the dynamic programming technique.
\ms{low prio: maybe we can cite something?}
\tk{\Cref{lem:automaton-dp} seems to be closely related to what happens in the proof of \Cref{lem:lincmsordaut} (in \Cref{subsec:appendixcmso}). At some point (maybe after the deadline), we could look into merging these lemmas a bit}

\begin{lemma}
  \label{lem:automaton-dp}
  Given:
  \begin{itemize}
    \item a~totally ordered commutative semigroup $(S, +, \leq)$ with evaluation time $\beta$,
    \item a~label-oblivious rank decomposition automaton $\autom = (Q, \iota, \delta, \varepsilon)$ of width $\ell$ with evaluation time $\beta$ and a~finite set $Q$ of states,
    \item an~annotated rank decomposition $\Tc = (T, U, \reps, \repse, \dmap)$ of width at most $\ell$ with $n$ nodes,
    \item an~edge $\vec{ab} \in \oE(T)$,
    \item a~cost function ${\bf c} \,\colon\, \leafe(T) \times Q \to S$, and
    \item a~set $F \subseteq Q$ of \emph{accepting states},
  \end{itemize}
  it is possible to determine in time $\Oh{|Q|^2 n \beta}$, whether there exists a~leaf edge state mapping $\kappa$ such that $\rho_\kappa(\vartheta) \in F$.
  If such a~mapping exists, then it is also possible to determine any such mapping minimizing the value of ${\bf c}(\kappa)$.
\end{lemma}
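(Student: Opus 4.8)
The plan is to perform a standard bottom-up dynamic programming over the annotated rank decomposition, where the DP state tracks the automaton state currently computed at an oriented edge together with the minimum cost of a leaf edge state mapping on the subtree below that produces this automaton state. First I would root the tree conceptually at the edge $\vec{ab}$: concretely, I work with the two oriented edges $\vec{ab}$ and $\vec{ba}$ and their predecessor sets, exactly as in the definition of a $\kappa$-run. For every oriented edge $\vec{tp}$ that is a predecessor of $\vec{ab}$ or $\vec{ba}$, I would maintain a table $D_{\vec{tp}} \colon Q \to S \cup \{+\infty\}$ where $D_{\vec{tp}}(q)$ is the minimum of $\sum_{e \in \leafe(T)[\vec{tp}]} {\bf c}(e, \kappa(e))$ over all partial leaf edge state mappings $\kappa$ on $\leafe(T)[\vec{tp}]$ whose induced $\kappa$-run at $\vec{tp}$ equals $q$ (and $+\infty$ if no such $\kappa$ exists). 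Here I extend the ordered semigroup with a top element $+\infty$ absorbing under $+$ and maximal under $\leq$, so that the recurrences are uniform.

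The base case is a leaf edge $\vec{lp} \in \leafe(T)$: here $\leafe(T)[\vec{lp}] = \{\vec{lp}\}$ and the only freedom is the choice of $\kappa(\vec{lp})$, so $D_{\vec{lp}}(q) = {\bf c}(\vec{lp}, q)$ for every $q \in Q$. For a non-leaf oriented edge $\vec{tp}$ with children $\vec{c_1 t}, \vec{c_2 t}$ (with $c_1 < c_2$), the recurrence reads
\[
D_{\vec{tp}}(q) = \min\bigl\{\, D_{\vec{c_1 t}}(q_1) + D_{\vec{c_2 t}}(q_2) \;:\; q_1, q_2 \in Q,\ \delta(\tau(\Tc, \vec{tp}), q_1, q_2) = q \,\bigr\},
\]
where the $\min$ is taken in the total order $\leq$ and equals $+\infty$ if no pair $(q_1, q_2)$ hits $q$. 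Correctness of this recurrence follows by a straightforward induction from the definition of the $\kappa$-run, using that $\leafe(T)[\vec{tp}]$ is the disjoint union of $\leafe(T)[\vec{c_1 t}]$ and $\leafe(T)[\vec{c_2 t}]$, and that the monotonicity axiom $x \le y \Rightarrow x + z \le y + z$ of the semigroup guarantees that combining cost-minimal sub-mappings yields a cost-minimal combined mapping. Finally, at the virtual node $\vartheta$, the answer is obtained by one more application of $\varepsilon$: the problem is feasible iff $\min\{\, D_{\vec{ab}}(q_1) + D_{\vec{ba}}(q_2) : \varepsilon(\sigma(\Tc, \vec{ab}), q_1, q_2) \in F \,\}$ is finite, and that minimum is the optimal cost ${\bf c}(\kappa)$.

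To recover an optimal mapping $\kappa$ itself, I would store, alongside each finite table entry $D_{\vec{tp}}(q)$, a witnessing pair $(q_1, q_2)$ realising the minimum, and then trace back top-down from $\vartheta$: pick the optimal $(q_1, q_2)$ at $\vartheta$, descend into $\vec{ab}$ with target state $q_1$ and into $\vec{ba}$ with target state $q_2$, and recurse, reading off $\kappa(\vec{lp})$ once a leaf edge is reached. For the running time: the tables $D_{\vec{tp}}$ are computed for $\Oh{n}$ oriented edges; at each non-leaf edge the recurrence iterates over all $|Q|^2$ pairs $(q_1, q_2)$, and for each pair one transition-signature evaluation $\delta(\tau(\Tc, \vec{tp}), q_1, q_2)$ (time $\beta$, since signatures around $\vec{tp}$ have size $\Oh[\ell]{1}$ and can be extracted in $\Oh[\ell]{1} \le \Oh{\beta}$ time) and one semigroup addition (time $\beta$) are performed, while updating $D_{\vec{tp}}(q)$ via the semigroup order is $\Oh{\beta}$ as well; summing gives $\Oh{|Q|^2 n \beta}$, and the trace-back is subsumed. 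I do not expect a genuine obstacle here: the only points requiring a little care are the treatment of the $+\infty$ sentinel (so that infeasible sub-mappings propagate cleanly) and the handling of the root-edge identifications $\rho_\kappa(\vec{r c_2}) = \rho_\kappa(\vec{c_1 r})$ when $\Tc$ is presented as a rooted decomposition, both of which are routine.
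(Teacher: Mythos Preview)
Your proposal is correct and essentially identical to the paper's proof: the paper defines the same tables (called $\mathrm{best}_{\vec{uv}}$ there, using $\bot$ rather than $+\infty$ for infeasibility), the same leaf base case, the same pairwise recurrence via $\delta$, the same final combination via $\varepsilon$ followed by minimising over $f \in F$, and the same top-down traceback to recover $\kappa$. The running-time analysis is also the same.
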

\begin{proof}
  Note that in the definition of $\rho_\kappa$ before, the value $\rho_\kappa(\vec{uv})$ for an~edge $\vec{uv} \in \oE(T)$ only depends on the values of $\kappa$ for $\vec{lp} \in \pred_T(\vec{uv})$.
  Therefore, without confusion we will write $\rho_{\kappa'}(\vec{uv})$ whenever $\kappa'$ is a~partial function defined on $\pred_T(\vec{uv}) \cap \leafe(T)$.  
  
  \newcommand{\bestdp}{\mathrm{best}}
    We want to compute, for every oriented edge $\vec{uv} \in \pred_T(\vec{ab}) \cup \pred_T(\vec{ba})$, the function $\bestdp_{\vec{uv}} \,\colon\, Q \to S \cup \{\bot\}$ with the following property for every $f \in Q$: suppose $K_{\vec{uv}, f}$ is the set of all partial valuations $\kappa' \,\colon\, \pred_T(\vec{uv}) \cap \leafe(T) \to Q$ such that $\rho_{\kappa'}(\vec{uv}) = f$.
    Then $\bestdp_{\vec{uv}}(f) = \bot$ if $K_{\vec{uv}, f} = \emptyset$; otherwise, $\bestdp_{\vec{uv}}(f)$ is equal to the minimum value of $\sum_{e \in \pred_T(\vec{uv}) \cap \leafe(T)} {\bf c}(e, \kappa'(e))$ over all $\kappa' \in K_{\vec{uv}, f}$.
    It is easy to observe that:
    \begin{itemize}
      \item for a~leaf edge $\vec{lp} \in \leafe(T)$, we have $\bestdp_{\vec{lp}}(q) = {\bf c}(\vec{lp}, f)$ for each $f \in Q$;
      \item for a~non-leaf edge $\vec{tp} \in \oE(T)$ with children $\vec{c_1t}$ and $\vec{c_2t}$ with $c_1 < c_2$, we have, for every $f \in Q$,
      \begin{equation}
        \label{eq:treedp}
        \begin{split}
        \bestdp_{\vec{tp}}(f) &= \min \{ \bestdp_{\vec{c_1t}}(f_1) + \bestdp_{\vec{c_2t}}(f_2) \,\mid \\ &\, f_1, f_2 \in Q,\, \bestdp_{\vec{c_1 t}}(f_1) \neq \bot,\, \bestdp_{\vec{c_2 t}}(f_2) \neq \bot,\, f = \delta(\tau(\Tc, \vec{tp}), f_1, f_2) \};
        \end{split}
      \end{equation}
      where we set $\bestdp_{\vec{tp}}(q) = \bot$ if the set on the right-hand side of \cref{eq:treedp} is empty.
      So given $\bestdp_{\vec{c_1t}}$ and $\bestdp_{\vec{c_2t}}$, we can compute $\bestdp_{\vec{tp}}$ in time $\Oh{|Q|^2\beta}$.
    \end{itemize}
    Therefore, all functions $\bestdp_{\vec{uv}}$ can be computed in time $\Oh{|Q|^2 n \beta}$ by a~simple bottom-up dynamic programming on trees with a~depth-first search on $T$.
    Similarly we define $\bestdp \,\colon\, Q \to S \cup \{\bot\}$ with the following property for all $f \in Q$: let $K_f$ be the set of valuations $\kappa \,\colon\, \leafe(T) \to Q$ such that $\rho_{\kappa}(\vartheta) = f$.
    Then $\bestdp(f) = \bot$ if $K_f = \emptyset$, and otherwise $\bestdp(f)$ is the minimum value of $\sum_{e \in \leafe(T)} {\bf c}(e, \kappa(e))$ over all $\kappa \in K_f$.
    As in \cref{eq:treedp}, we get that
    \begin{equation}
      \label{eq:endtreedp}
      \begin{split}
      \bestdp(f) = \min \{ &\bestdp_{\vec{ab}}(f_1) + \bestdp_{\vec{ba}}(f_2) \,\mid \\ &\, f_1, f_2 \in Q,\, \bestdp_{\vec{ab}}(f_1) \neq \bot,\, \bestdp_{\vec{ba}}(f_2) \neq \bot,\, f = \varepsilon(\delta(\Tc, \vec{ab}), f_1, f_2) \};
      \end{split}
    \end{equation}
    where $\bestdp(f) = \bot$ is set if the set on the right-hand side of \cref{eq:endtreedp} is empty.
    Then $\bestdp$ can be computed in time $\Oh{|Q|^2\beta}$ given $\bestdp_{\vec{ab}}$ and $\bestdp_{\vec{ba}}$.
    Now, if $\bestdp(f) = \bot$ for all $f \in F$, then we return that no mapping $\kappa$ with $\rho_\kappa(\vartheta) = q_0$ exists.
    Otherwise, such a~mapping exists.
    Let $f_0 \in F$ be the argument minimizing $\bestdp(f_0)$ among all $f \in F$ with $\bestdp(f) \neq \bot$.
    By retracing the optimum choices done by the dynamic programming scheme using the top-bottom depth-first search on $T$, we fully recover a~run $\rho_\kappa$ for some $\kappa \,\colon\, \leafe(T) \to Q$ such that $\rho_\kappa(\vartheta) = f_0$ and ${\bf c}(\kappa)$ is minimum possible; and we recover $\kappa$ by observing that for every $\vec{lp} \in \leafe(T)$, it holds that $\kappa(\vec{lp}) = \rho_\kappa(\vec{lp})$.
\end{proof}

Note that \cref{lem:automaton-dp} can be easily generalized to the case where $Q$ is an~infinite set, but there exists a~bound $q \in \N_{\ge 1}$ on the size of the set
\[ \{ \rho_\kappa(x) \,\mid\, \kappa\,\colon\,\leafe(T) \to Q \} \]
for all $x$.
Then it can be verified that the optimization problem stated above can be solved in time $\Oh{q^2 n \beta}$.

\subsection{Prefix-rebuilding data structure for minimal closures}
\label{ssec:ds-closures}

In this subsection, we finally give a~proof of \cref{lem:closureprds}.
Before we begin, we describe an~operation of \emph{gluing} rank decompositions; a~similar notion appears in the proof of \cref{lem:rearangmain}.

Suppose we have two disjoint sets of vertices $A, B$ and that $R_A \subseteq A$, $R_B \subseteq B$; we also have two partitioned graphs $(G_A, \prt_A)$, $(G_B, \prt_B)$ with vertex sets $A \cup R_B$ and $B \cup R_A$, respectively, such that $\{R_B\} \in \prt_A$ and $\{R_A\} \in \prt_B$.
Suppose also $\Tc_A = (T_A, U_A, \reps_A, \repse_A, \dmap_A)$ and $\Tc_B = (T_B, U_B, \reps_B, \repse_B, \dmap_B)$ are annotated rank decompositions encoding $(G_A, \prt_A)$ and $(G_B, \prt_B)$, with the following properties: $V(T_A) \cap V(T_B) = \{x, y\}$ and there exists a~leaf edge $\oxy \in \leafe(T_A)$ and a~leaf edge $\oyx \in \leafe(T_B)$ such that:
\begin{itemize}
  \item $\lparts(\Tc_A)[\oxy] = R_B$ and $\lparts(\Tc_B)[\oyx] = R_A$,
  \item $\reps_A(\oxy) = \reps_B(\oxy) = R_B$ and $\reps_B(\oyx) = \reps_A(\oyx) = R_A$, and
  \item $\repse_A(xy) = \repse_B(xy)$.
\end{itemize}
We then define the \emph{gluing of $\Tc_A$ along $xy$ with $\Tc_B$} as the annotated rank decomposition $\Tc = (T, U, \reps, \repse, \dmap)$ as follows:

\begin{itemize}
  \item $V(T) = V(T_A) \cup V(T_B)$ and $E(T) = E(T_A) \cup E(T_B)$,
  \item $U = U_A \cup U_B = A \cup B$,
  \item $\funrestriction{\reps}{\oE(T_A)} = \reps_A$ and $\funrestriction{\reps}{\oE(T_B)} = \reps_B$,
  \item $\funrestriction{\repse}{E(T_A)} = \repse_A$ and $\funrestriction{\repse}{E(T_B)} = \repse_B$, and
  \item $\funrestriction{\dmap}{\PT(T_A)} = \dmap_A$ and $\funrestriction{\dmap}{\PT(T_B)} = \dmap_B$.
\end{itemize}

It can be verified that $\Tc$ is an~annotated rank decomposition encoding a~partitioned graph $(G, \prt)$, where $V(G) = A \cup B$, $\prt = (\prt_A \setminus \{R_B\}) \cup (\prt_B \setminus \{R_A\})$, $G[A] = G_A$, $G[B] = G_B$ and $uv \in E(G)$ for $u \in A$, $v \in B$ if and only if $u'v' \in \repse(xy)$, where $u' \in R_A$ is the (unique) vertex such that $N_{G_A}(u) \cap R_B = N_{G_A}(u') \cap R_B$, and $v' \in R_B$ is the unique vertex such that $N_{G_B}(v) \cap R_A = N_{G_B}(v') \cap R_A$.
Moreover, the width of $\Tc$ is trivially the maximum of the widths of $\Tc_A$ and $\Tc_B$.

\newcommand{\prtpref}{\prt_{\mathrm{pref}}}
\newcommand{\skel}{\mathrm{skel}}
\newcommand{\tcskel}{\Tc_{\mathrm{skel}}}

We are now ready to prove \cref{lem:closureprds}, which we restate below for convenience.

\efficientclosureprds*
%
\begin{proof}
  For the course of the proof, fix $s \coloneqq 2^{2k}$ and the following label-oblivious rank decomposition automata:
  \begin{itemize}
    \item the exact rankwidth automaton $\ExactRankAutom = \ExactRankAutom_{2k,\, cs\ell+\ell}$, given by \cref{lem:exact-rank-automaton}; and
    \item the closure automaton $\BoundRepr = \BoundRepr_{c,\, s,\, \ell}$ of width $\ell$, given by \cref{lem:bound-repr-automaton}.
  \end{itemize}
  Note that both $\ExactRankAutom$ and $\BoundRepr$ have evaluation time $\Oh[c,\ell]{1}$.
  Our data structure consists simply of an~instance of $\BoundRepr$, maintained dynamically by the data structure of \cref{lem:automatonprds}.
  Thus the initialization time of the data structure on a~rooted annotated rank decomposition $\Tc$ is $\Oh[c,\ell]{|\Tc|}$, each prefix-rebuilding update $\prdesc$ is applied to the decomposition and the automaton in time $\Oh[c,\ell]{|\prdesc|}$, and each operation $\mathsf{Run}$ and $\mathsf{Valuation}$ runs in time $\Oh{1}$.
  
  It remains to implement $\mathsf{Closure}(\Tpref)$.
  So suppose we are given as a~query a~leafless prefix $\Tpref$ of $\Tc$.
  Let $A = \oApp_T(\Tpref)$ be the set of appendix edges of $\Tpref$ and let $\prtpref \coloneqq \{\reps(\oxp) \mid \oxp \in A\}$.
  We first perform a~clean-up of the prefix $\Tpref$ of $\Tc$ by replacing all representatives on the annotations in $\Tpref$ with elements of $\boldcup \prt_{\mathrm{pref}}$:
  \begin{claim}
    In time $\Oh[\ell]{|\Tpref|}$, one can produce a~rooted annotated rank decomposition $\tcskel = (T_\skel, U_\skel, \reps_\skel, \repse_\skel, \dmap_\skel)$ encoding the partitioned graph $(G[\prtpref], \prtpref)$ such that: (i) $T_\skel = T[\Tpref \cup \App_T(\Tpref)]$, (ii) for every $\oxp \in A$, we have $\lparts(\tcskel)[\oxp] = \reps(\oxp)$ and $\reps_\skel(\oxp) = \reps(\oxp)$.
  \end{claim}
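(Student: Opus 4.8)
The goal is essentially a local ``renaming'' operation: we want to take the subtree $T[\Tpref \cup \App_T(\Tpref)]$ sitting inside $\Tc$ and rebuild its annotations so that all representatives become vertices of $\boldcup \prtpref = \bigcup_{\oxp \in A} \reps(\oxp)$, while the annotations on the appendix edges themselves are left untouched. This is completely analogous to \Cref{lem:annotdecomppart}, and indeed I would prove the claim by essentially invoking that lemma, or rather by reproducing its short argument in this local setting. First I would set $T_\skel \coloneqq T[\Tpref \cup \App_T(\Tpref)]$ and form the ``raw'' annotated rank decomposition $(T_\skel,\, \boldcup\prtpref,\, \funrestriction{\reps}{\oE(T_\skel)},\, \funrestriction{\repse}{E(T_\skel)},\, \funrestriction{\dmap}{\PT(T_\skel)})$, viewing each appendix edge $\oxp \in A$ as a leaf edge of $T_\skel$ carrying the set $\reps(\oxp)$. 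This structure already satisfies properties \Cref{prop:ard:encedge,prop:ard:edges} and correctly encodes the right adjacencies between the sets of $\prtpref$; the only failure of the annotated-rank-decomposition axioms is that some representative sets $\reps(\oxy)$ for interior edges $xy \in E(T[\Tpref])$ may contain vertices outside $\boldcup \prtpref$, violating the requirement that $U = \boldcup_{\olp \in \leafe(T_\skel)} \reps(\olp)$.

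Then I would fix this exactly as in the proof of \Cref{lem:annotdecomppart}: for every oriented interior edge $\oxy$ and every representative $u \in \reps(\oxy)$ with $u \notin \boldcup \prtpref$, find a substitute $u' \in \boldcup\prtpref$ on the same side with $N_G(u') \cap \lparts(\Tc)[\oyx] = N_G(u) \cap \lparts(\Tc)[\oyx]$, and replace $u$ by $u'$ in $\reps_\skel(\oxy)$, $\repse_\skel(xy)$, and the representative maps $\dmap_\skel$ touching $\oxy$. Since $\reps(\oap)$ is a minimal representative of $\lparts(\Tc)[\oap]$ for each appendix edge $\oap$, the set $\boldcup\prtpref$ is a representative of $V(G) = \lparts(\Tc)[\text{root}]$ restricted to either side of any interior cut, so such $u'$ always exists. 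The substitutes are computed by the standard two-phase dynamic programming on $T_\skel$: one depth-first pass following $\dmap$ towards the root, one away from the root, exactly as in \Cref{lem:annotdecomppart,lem:dropleafs}. One must then re-minimize each $\reps_\skel(\oxy)$ if the substitution introduced duplicate rows (equivalently, re-pick a minimal representative using $\repse_\skel$), and re-derive $\dmap_\skel$ accordingly; all of this is local and bounded.

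For correctness I would check the four encoding properties. Properties \Cref{prop:ard:enclp,prop:ard:encedge} are immediate from the construction; \Cref{prop:ard:minrep} holds after the re-minimization step and uses the fact that a representative of $\lparts(\Tc)[\oxy]$ restricted to $\boldcup\prtpref$ is still a representative of $\lparts(\tcskel)[\oxy]$ in $G[\prtpref]$, together with $G[\reps_\skel(\oxy), \reps_\skel(\oyx)] = G[\prtpref][\reps_\skel(\oxy), \reps_\skel(\oyx)]$; and \Cref{prop:ard:edges} is preserved because each substitution $u \mapsto u'$ satisfies the defining neighborhood equality, propagated along $\dmap$ in the same way as in the uniqueness proof of the encoded graph (the displayed induction around \Cref{lem:ardeuniq:eq1}). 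Since $\tcskel$ is isomorphic to an induced-subgraph-style restriction of $\Tc$, its width is at most the width of $\Tc$, hence at most $\ell$. Finally, properties (i) and (ii) hold by construction: $T_\skel = T[\Tpref \cup \App_T(\Tpref)]$ by definition, and we never modified the annotations on the appendix edges, so $\lparts(\tcskel)[\oxp] = \reps(\oxp) = \reps_\skel(\oxp)$ for $\oxp \in A$. The running time is $\Oh[\ell]{|\Tpref|}$ because $T_\skel$ has $\Oh{|\Tpref|}$ nodes, each $|\reps(\oxy)| \le 2^\ell$, and both dynamic-programming passes do $\Oh[\ell]{1}$ work per node. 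The only mildly delicate point — and the one I would write out most carefully — is verifying that re-minimizing and re-deriving $\dmap_\skel$ after the substitution does not break \Cref{prop:ard:edges}; but this is exactly the bookkeeping already carried out in \Cref{lem:dropleafs,lem:annotdecomppart}, so there is no genuine new obstacle.
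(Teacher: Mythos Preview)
Your proposal is correct and takes essentially the same approach as the paper: the paper's proof is simply ``Follows immediately from \Cref{lem:annotdecomppart} and its proof,'' and you have faithfully reproduced that argument (restrict to the subtree, then swap out any stray representatives via the two-phase DP along $\dmap$), while also explicitly verifying the extra properties (i) and (ii). The only superfluous step is the re-minimization: since distinct $u_1,u_2\in\reps(\oxy)$ have distinct neighborhoods across the cut, their substitutes are also distinct, so no duplicates can arise.
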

  \begin{claimproof}
    Follows immediately from \cref{lem:annotdecomppart} and its proof.
  \end{claimproof}
  Note that for each $\oxp \in A$, $\reps_\skel(\opx)$ is a~(minimal) representative of $\lparts(\Tc)[\opx]$ in~$G$.
  
  \paragraph*{Auxiliary objects and definitions.}
%
  For $(q, \H) \in \AutomataReps^{c,s}(\Tc, \oxp)$, let the \emph{cut-rank cost} of $\H$ with respect to $\oxp$, denoted $\mathsf{ccost}(\H, \oxp)$, be the value computed as follows.
  Let $(H, \Dc)$ be the partitioned graph derived from $\H$.
  Let also $\overline{\Dc} = \Dc \cup \{\reps_\skel(\opx)\}$ and $\overline{H} = G[\overline{\Dc}]$.
  Then $\mathsf{ccost}(\H, \oxp) = \sum_{C \in \Dc} \cutrk_{\overline{H}}(C)$.

  Consider $\Lambda$ -- the set of all mappings $\lambda$ assigning to each edge $\oxp \in A$ a~member of $\AutomataReps^{c,s}(\Tc, \oxp)$.
  For every $\oxp \in A$, define $(q_\lambda(\oxp), \H_\lambda(\oxp)) \coloneqq \lambda(\oxp)$, i.e., $q_\lambda(\oxp)$ and $\H_\lambda(\oxp)$ are the first and the second coordinate of $\lambda(\oxp)$.
  Let also $(H_\lambda(\oxp), \Dc_\lambda(\oxp))$ denote the partitioned graph derived from $\H_\lambda(\oxp)$.
  Also, let $\overline{\Dc}_\lambda(\oxp) = \Dc_\lambda(\oxp) \cup \{\reps_\skel(\opx)\}$ and $\overline{H}_\lambda(\oxp) = G[\overline{\Dc}_\lambda(\oxp)]$.
  Next, set $r_\lambda(\oxp) = \sum_{C \in \Dc_\lambda(\oxp)} \cutrk_{\overline{H}_\lambda(\oxp)}(C) = \mathsf{ccost}(\H_\lambda(\oxp), \oxp)$.
  Then, for any $\lambda \in \Lambda$, define:
  \begin{itemize}
    \item $\Dc_\lambda \coloneqq \bigcup_{\oxp \in A} \Dc_\lambda(\oxp)$; equivalently, $\Dc_\lambda$ is the union of all nonempty parts in all indexed graphs $\H_\lambda(\oxp)$ for $\oxp \in A$;
    
    \item $G_\lambda \coloneqq G[\Dc_\lambda]$;
    
    \item $q_\lambda \coloneqq \sum_{\oxp \in A} q_\lambda(\oxp)$;
    
    \item $r_\lambda \coloneqq \sum_{\oxp \in A} r_\lambda(\oxp)$.
  \end{itemize}
  
  \paragraph*{Reduction from finding minimal closures to the optimization of $\lambda$.}
  For any $k$-closure $\prt$ of $\Tpref$, we shall say that it is \emph{represented} by a~family $\Dc$ of nonempty disjoint sets of $V(G)$ if $|\Dc| = |\prt|$ and for every set $C \in \prt$, $\Dc$ contains a~representative $D$ of $C$ in $G$.
  We will now prove the following claim, implying that a~representation of a~minimal $k$-closure can be found by examining only families~$\prt_\lambda$:
  
  \begin{claim}
    \label{cl:lambdas-represent-closures}
    Let $\lambda \in \Lambda$ be such that the rankwidth of $(G_\lambda, \Dc_\lambda)$ is at most $2k$ and, among all such mappings $\lambda$, the value $r_\lambda$ is minimum; and among those, $q_\lambda$ is minimum.
    Then for every partition $\prt$ of $V(G)$ defined as $\prt = \bigcup_{\oxp \in A} \prt_{\oxp}$, where $\prt_{\oxp}$ is a~partition of $\lparts(\Tc)[\oxp]$ into at most $c$ sets encoded by $\H_\lambda(\oxp)$ and of cost $q_\lambda(\oxp)$, $\prt$ is a~minimal $c$-small $k$-closure of $\Tpref$ represented by $\Dc_\lambda$.
    In particular, $\Dc_\lambda$ represents some minimal $c$-small $k$-closure of $\Tpref$.
    Moreover, if no $\lambda$ with the property above exists, then no $c$-small $k$-closure of $\Tpref$ exists.
  \end{claim}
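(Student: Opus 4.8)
The plan is to prove \Cref{cl:lambdas-represent-closures} by establishing a bijective-style correspondence between mappings $\lambda \in \Lambda$ and $c$-small $k$-closures of $\Tpref$ (up to the choice of representatives), and then checking that the lexicographic minimization of $(r_\lambda, q_\lambda)$ matches exactly the definition of a minimal $c$-small $k$-closure, which primarily minimizes $\sum_{C \in \prt} \cutrk(C)$ and secondarily $|\cut_T(\prt)|$. First I would unpack the definitions: for a $c$-small $k$-closure $\prt$ of $\Tpref$, every $C \in \prt$ satisfies $C \subseteq \lparts(\Tc)[a]$ for some appendix $a$, so $\prt$ partitions into $\prt = \bigcup_{\oxp \in A} \prt_{\oxp}$ where $\prt_{\oxp} = \{C \in \prt \mid C \subseteq \lparts(\Tc)[\oxp]\}$, and $c$-smallness says $|\prt_{\oxp}| \le c$. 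Each $\prt_{\oxp}$ is encoded (after picking minimal representatives $D$ of each $C$) by some $(c,\reps(\oxp))$-indexed graph; the key point from \Cref{lem:bound-repr-automaton} is that $\AutomataReps^{c,s}(\Tc,\oxp)$ contains, for every isomorphism class of $s$-small encoding, a pair $(q,\H)$ of \emph{minimum cost} $q$, and the cost $q$ equals the number of nodes of $T$ below $\oxp$ cut by $\prt_{\oxp}$. I must verify that the indexed graph encoding $\prt_{\oxp}$ is always $s$-small with $s = 2^{2k}$: this is where the near-twin/\Cref{lem:repsbound} argument comes in, since a minimal representative of any $C$ with $\cutrk(C) \le 2k$ has at most $2^{2k}$ vertices, and all of them sharing one label means they have the same neighborhood outside $\lparts(\Tc)[\oxp]$, hence lie in a single $\cutrk$-bounded set (this uses that $(G[\prt],\prt)$ has rankwidth at most $2k$, so every $C$ has $\cutrk_{G[\prt]}(C) \le 2k$, and cut-rank only drops when passing from $G$ to a representative subgraph).

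Next I would handle the two directions. \textbf{From $\lambda$ to a closure:} given $\lambda$ with $(G_\lambda,\Dc_\lambda)$ of rankwidth $\le 2k$, and any $\prt = \bigcup_{\oxp} \prt_{\oxp}$ with $\prt_{\oxp}$ a partition of $\lparts(\Tc)[\oxp]$ into $\le c$ parts encoded by $\H_\lambda(\oxp)$ of cost $q_\lambda(\oxp)$ — note such $\prt_{\oxp}$ exists and is unique up to the choice of representatives, because an indexed graph respecting $\Tc$ along $\oxp$ determines the collection of neighborhoods $\{N(v)\setminus X_i\}$ and hence determines the partition once we expand representatives back to full equivalence classes. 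I must check $\prt$ is genuinely a \emph{partition} of $V(G)$ (the $\prt_{\oxp}$ are over disjoint ground sets $\lparts(\Tc)[\oxp]$ which tile $V(G)$ since every vertex lies in exactly one $\lparts(\Tc)[\oxp]$ for $\oxp \in \oApp_T(\Tpref)$), and that $(G[\prt],\prt)$ has rankwidth $\le 2k$ — this follows because $\Dc_\lambda$ is a set of minimal representatives of the parts of $\prt$, so $G[\prt]$ and $G_\lambda = G[\Dc_\lambda]$ are "representative-isomorphic" in the sense that cut-ranks across every part agree, hence $(G[\prt],\prt)$ and $(G_\lambda,\Dc_\lambda)$ have equal rankwidth. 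So $\prt$ is a $c$-small $k$-closure, and $\Dc_\lambda$ represents it. \textbf{From a closure to $\lambda$:} given any $c$-small $k$-closure $\prt'$, for each $\oxp$ let $\H$ be the $s$-small indexed graph encoding $\prt'_{\oxp}$ with representatives; by \Cref{lem:bound-repr-automaton}, $\AutomataReps^{c,s}(\Tc,\oxp)$ contains a pair $(q,\H^\star)$ with $\H^\star \sim \H$ and $q$ at most the cost of $\prt'_{\oxp}$. Setting $\lambda(\oxp) = (q,\H^\star)$ gives a mapping with $r_\lambda = \sum_C \cutrk(C)$ for the parts $C$ of $\prt'$ (since cut-rank is isomorphism-invariant and $r_\lambda(\oxp)$ is defined precisely as this sum over $\Dc_\lambda(\oxp)$) and $q_\lambda \le |\cut_T(\prt')|$; moreover the rankwidth of $(G_\lambda,\Dc_\lambda)$ equals that of $(G[\prt'],\prt') \le 2k$.

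Finally I would combine: if $\lambda$ minimizes $r_\lambda$ lexicographically-then $q_\lambda$ over all $\lambda$ with $(G_\lambda,\Dc_\lambda)$ of rankwidth $\le 2k$, then the induced closure $\prt$ has $\sum_{C\in\prt}\cutrk(C) = r_\lambda$ which is minimum over all $c$-small $k$-closures (by the two directions above, every such closure yields a $\lambda$ achieving $r_\lambda = \sum \cutrk$, so the minimum over $\lambda$ is at most the minimum over closures, and conversely every $\lambda$ yields a closure with $\sum \cutrk = r_\lambda$). For the secondary criterion I need one extra observation: among the $\lambda$'s achieving the minimal $r_\lambda$ and coming from minimal-$\sum\cutrk$ closures, minimizing $q_\lambda$ minimizes $|\cut_T(\prt)|$, because $|\cut_T(\prt)| = |\Tpref| + \sum_{\oxp \in A}(\text{number of nodes below }\oxp\text{ cut by }\prt) = |\Tpref| + \sum_{\oxp} q_\lambda(\oxp) = |\Tpref| + q_\lambda$, using that $\Tpref \subseteq \cut_T(\prt)$ (every node of $\Tpref$ is cut since more than one part of $\prt$ meets its subtree, as $\prt$ has $\ge 2$ parts distributed across distinct appendices — modulo the degenerate case $|\prt|=1$ which forces $|A|=1$ and is handled separately) and that the nodes cut by $\prt$ outside $\Tpref$ partition according to which appendix subtree they lie in, with the count in the $\oxp$-subtree being exactly the cost $q_\lambda(\oxp)$. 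Thus $\Dc_\lambda$ represents a minimal $c$-small $k$-closure. The "moreover" clause is immediate: if some $c$-small $k$-closure existed it would (via the closure-to-$\lambda$ direction) produce an admissible $\lambda$, so the nonexistence of any admissible $\lambda$ certifies that no $c$-small $k$-closure exists. The main obstacle I anticipate is the careful bookkeeping around $s$-smallness and the identification $|\cut_T(\prt)| = |\Tpref| + q_\lambda$, together with the degenerate cases ($\Tpref$ with a single appendix edge, or closures with one part), which need to be checked but should not present a conceptual difficulty.
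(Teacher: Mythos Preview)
Your proposal is correct and follows essentially the same two-direction argument as the paper: show that every admissible $\lambda$ yields a $c$-small $k$-closure $\prt$ with $\sum_{C}\cutrk(C)=r_\lambda$ and $|\cut_T(\prt)|=|\Tpref|+q_\lambda$, and conversely that every $c$-small $k$-closure $\prt'$ yields an admissible $\lambda$ with $r_\lambda=\sum_{C}\cutrk(C)$ and $q_\lambda\le|\cut_T(\prt')|-|\Tpref|$; then lexicographic minimization of $(r_\lambda,q_\lambda)$ matches the definition of a minimal closure. The paper spells out more carefully the step you call ``representative-isomorphic'' (that $(G_\lambda,\Dc_\lambda)$ and $(G[\prt],\prt)$ have the same rankwidth), by building an explicit isomorphism $\pi$ from the indexed-graph isomorphisms and checking edges across different appendix subtrees via the labels $\eta$; your sketch of this is correct in spirit but would need that detail filled in. Your worry about degenerate cases is unnecessary: since $\Tpref$ is a nonempty leafless prefix of a binary tree one always has $|A|\ge 2$, and hence any closure has $|\prt|\ge 2$ and every node of $\Tpref$ is cut.
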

  \begin{claimproof}
    Fix $\lambda \in \Lambda$ with the property that the rankwidth of $(G_\lambda, \Dc_\lambda)$ is at most $2k$.
    For every $\oxp \in A$, let $\prt_{\oxp}$ be a~partition of $\lparts(\Tc)[\oxp]$ into at most $c$ sets encoded by $\H_\lambda(\oxp)$ of cost $q_\lambda(\oxp)$.
    Then let $\prt$ be the partition of $V(G) = \bigcup_{\oxp \in A} \lparts(\Tc)[\oxp]$ defined as $\prt = \bigcup_{\oxp \in A} \prt_{\oxp}$.
    We claim that $\prt$ is a~$c$-small $k$-closure of $\Tpref$ such that $\sum_{C \in \prt} \cutrk_G(C) = r_\lambda$ and the number of nodes of $T$ cut by $\prt$ is exactly $q_\lambda + |\Tpref|$.
    
    \begin{itemize}
      \item \emph{$\prt$ is a~$k$-closure of $\Tpref$:} Let $\oxp \in A$. Since $\H_\lambda(\oxp)$ encodes $\prt_{\oxp}$, there exists a~bijection $\chi_{\oxp} \,\colon\, \prt_{\oxp} \to \Dc_\lambda(\oxp)$ such that for every $C \in \prt_{\oxp}$, $\chi_{\oxp}(C)$ is a~minimal representative of $C$ in $G$.
      Thus there exists a~bijection $\chi\,\colon\, \prt \to \Dc_\lambda$ such that for every $C \in \prt$, $\chi(C)$ is a~minimal representative of $C$ in $G$.
      Hence, the rankwidth of $(G[\prt], \prt)$ is equal to the rankwidth of $(G_\lambda, \Dc_\lambda) = (G[\Dc_\lambda], \Dc_\lambda)$, which is bounded from above by $2k$.
      Moreover, by construction, for every $C \in \prt$ we have $C \subseteq \lparts(\Tc)[\oxp]$ for some $\oxp \in A$.
      Therefore, $\prt$ is a~$k$-closure of $\Tpref$.
      
      \item \emph{$\prt$ is $c$-small}: for every $\oxp \in A$, $\prt_{\oxp}$ is the subfamily of $\prt$ forming a~partitioning of $\lparts(\Tc)[\oxp]$.
      By construction, $|\prt_{\oxp}| \leq c$.
      
      \item \emph{$\sum_{C \in \prt} \cutrk_G(C) = r_\lambda$}: choose $C \in \prt$ and let $\oxp \in A$ be such that $C \in \prt_{\oxp}$. As noted before, the bijection $\chi_{\oxp}\,\colon\,\prt_{\oxp} \to \Dc_\lambda(\oxp)$ is such that for every $C \in \prt_{\oxp}$, $\chi_{\oxp}(C)$ is a~minimal representative of $C$ in $G$.
      Let $R_C \coloneqq \chi_{\oxp}(C)$.
      Also, $\reps_\skel(\opx)$ is a~minimal representative of $\lparts(\Tc)[\opx]$ in $G$.
      Since $\boldcup \prt_{\oxp} \cup \lparts(\Tc)[\opx] = V(G)$, we find that
      \[ \cutrk_G(C) = \cutrk_{G[\boldcup \Dc_\lambda(\oxp) \cup \{\reps_\skel(\opx)\}]}(R_C) = \cutrk_{\overline{H}_\lambda(\oxp)}(R_C). \]
      The statement now follows by summing the equation above for all $C \in \prt$.
      
      \item \emph{$\prt$ cuts exactly $q_\lambda + |\Tpref|$ nodes of $T$}:
      Each node of $\Tpref$ must obviously be cut by every closure of $\Tpref$.
      Then, for every $\oxp \in A$, the value $q_\lambda(\oxp)$ denotes the cost of the partitioning $\prt_{\oxp}$ of $\lparts(\Tc)[\oxp]$, i.e., the number of nodes cut by $\prt_{\oxp}$ (equivalently, $\prt$) in the subtree of $T$ rooted at~$x$.
      Therefore, $\prt$ cuts $|\Tpref| + \sum_{\oxp \in A} q_\lambda(\oxp) = q_\lambda + |\Tpref|$ nodes of $T$.      
    \end{itemize}
    
    Conversely, let $\prt$ be a~$c$-small $k$-closure of $\Tpref$ and suppose that $\sum_{C \in \prt} \cutrk_G(C) = r$ and that $\prt$ cuts $q$ nodes of $T$.
    Our goal is to find a~mapping $\lambda \in \Lambda$ such that the rankwidth of $(G_\lambda, \Dc_\lambda)$ is at most $2k$, and $r_\lambda = r$ and $q_\lambda \leq q - |\Tpref|$.
    It is easy to see that the verification of this claim will finish the proof.
    
    For every $\oxp \in A$, let $\prt_{\oxp} \subseteq \prt$ comprise the parts of $\prt$ that are subsets of $\lparts(\Tc)[\oxp]$.
    Since $\prt$ is a~$c$-small closure of $\Tpref$, we have $\prt = \bigcup_{\oxp \in A} \prt_{\oxp}$ and $|\prt_{\oxp}| \leq c$ for all $\oxp \in A$.
    Let $q'_{\oxp}$ be the cost of $\prt_{\oxp}$, i.e., the number of the nodes in the subtree rooted at $\oxp$ that are cut by $\prt_{\oxp}$.
    As discussed earlier in the course of the proof, we have $q = |\Tpref| + \sum_{\oxp \in A} q'_{\oxp}$.
    
    For every $C \in \prt_{\oxp}$, we have $\cutrk_G(C) \leq 2k$ as $\prt$ is a~$k$-closure of $\Tpref$.
    So for every $C \in \prt_{\oxp}$, we can find a~minimal representative $R_C$ of $C$ in $G$ of cardinality at most $2^{2k} = s$.
    Thus, we can define an~$s$-small $(c, \reps(\oxp))$-indexed graph $\H'_{\oxp} = ((V^{\oxp}_1, \dots, V^{\oxp}_c), H_{\oxp}, \eta_{\oxp})$ encoding $\prt_{\oxp}$ by setting $\{V^{\oxp}_1, \dots, V^{\oxp}_{|\prt_{\oxp}|}\} = \{R_C \mid C \in \prt_{\oxp}\}$, $V^{\oxp}_{|\prt_{\oxp}| + 1} = \dots = V^{\oxp}_c = \emptyset$, and choosing $H_{\oxp}$ and $\eta_{\oxp}$ so as to ensure that $\H'_{\oxp}$ respects $\Tc$ along $\oxp$ (as discussed before, such a~choice is unique as soon as the sets $V^{\oxp}_1, \dots, V^{\oxp}_c$ are determined).
    Now by definition of $\AutomataReps^{c,s}(\Tc, \oxp)$, there exists a~pair $(q_{\oxp}, \H_{\oxp}) \in \AutomataReps^{c,s}(\Tc, \reps(\oxp))$ such that $\H_{\oxp} \sim^{c,\reps(\oxp)} \H'_{\oxp}$ and $q_{\oxp} \leq q'_{\oxp}$.
    Define then the mapping $\lambda \in \Lambda$ by setting $\lambda(\oxp) = (q_{\oxp}, \H_{\oxp})$ for each $\oxp \in A$.
    We claim that $\lambda$ satisfies the required conditions.
    
    In the following arguments, let $\pi_{\oxp}\,\colon\,V(\H_{\oxp}) \to V(\H'_{\oxp})$ be any isomorphism from $\H_{\oxp}$ to $\H'_{\oxp}$.
    Let also $\pi\,\colon\, \bigcup_{\oxp \in A} V(\H_{\oxp}) \to \bigcup_{\oxp \in A} V(\H'_{\oxp})$ be defined by $\funrestriction{\pi}{V(\H_{\oxp})} = \pi_{\oxp}$ for each $\oxp \in A$.
    By the properties of the isomorphism of indexed graphs, for every $\oxp \in A$ and $v \in V(\H_{\oxp})$, it holds that $N_G(v) \cap \lparts(\Tc)[\opx] = N_G(\pi(v)) \cap \lparts(\Tc)[\opx]$.
    
    Define $\Dc'_\lambda \coloneqq \pi(\Dc_\lambda) = \{\pi(C) \mid C \in \Dc_\lambda\} = \{R_C \mid C \in \prt\}$.
    We claim that $G[\Dc'_\lambda]$ is isomorphic to $G_\lambda$, with the isomorphism given by $\pi$.
    So let $u, v \in V(G_\lambda)$, aiming to show that $uv \in E(G_\lambda)$ if and only if $\pi(u) \pi(v) \in E(G[\Dc'_\lambda])$.
    
    \begin{itemize}
      \item Naturally, if $u$ and $v$ belong to the same part of $\Dc_\lambda$, then $\pi(u)$ and $\pi(v)$ belong to the same part of $\Dc'_\lambda$ and so $uv \notin E(G_\lambda)$ and $\pi(u)\pi(v) \notin E(G[\Dc'_\lambda])$.
      
      \item Otherwise, if $u, v \in V(\H_{\oxp})$ for some $\oxp \in A$ (but $u, v$ belong to different parts), then $uv \in E(\H_{\oxp})$ if and only if $\pi(u)\pi(v) \in E(\H'_{\oxp})$, since $\pi$ is an~isomorphism from $\H_{\oxp}$ to $\H'_{\oxp}$.
    As both $\H_{\oxp}$ and $\H'_{\oxp}$ respect $\Tc$ along $\oxp$, we have that $uv \in E(\H_{\oxp})$ if and only if $uv \in E(G)$; and that $\pi(u)\pi(v) \in E(\H'_{\oxp})$ if and only if $\pi(u)\pi(v) \in E(G)$.
    This settles this case.
      
      \item Finally, suppose $u \in V(\H_{\vec{x_1 p_1}})$ and $v \in V(\H_{\vec{x_2 p_2}})$ for $x_1 \neq x_2$.
    Then $u, \pi(u) \in \lparts(\Tc)[\vec{p_2 x_2}]$ and $v, \pi(v) \in \lparts(\Tc)[\vec{p_1 x_1}]$.
    From $N_G(u) \cap \lparts(\Tc)[\vec{p_1 x_1}] = N_G(\pi(u)) \cap \lparts(\Tc)[\vec{p_1 x_1}]$ we find that $uv \in E(G)$ if and only if $\pi(u)v \in E(G)$.
    And from $N_G(v) \cap \lparts(\Tc)[\vec{p_2 x_2}] = N_G(\pi(v)) \cap \lparts(\Tc)[\vec{p_2 x_2}]$ we get that $\pi(u)v \in E(G)$ if and only if $\pi(u)\pi(v) \in E(G)$ and we are done.
     \end{itemize}
    
    So $G[\Dc'_\lambda]$ is isomorphic to $G_\lambda$.
    We now verify the conditions required from $\lambda$.
    
    \begin{itemize}
      \item \emph{$(G_\lambda, \Dc_\lambda)$ has rankwidth at most $2k$:}
      For each $\oxp \in A$, by the construction of $\H'_{\oxp}$, each part of $\H'_{\oxp}$ is a~subset (in fact, a~minimal representative) of a~unique set in $\prt_{\oxp}$.
      Hence $\Dc'_\lambda$ is formed from $\prt$ by replacing each part $C \in \prt$ with some minimal representative of $C$ in $G$.
      Thus obviously, since $(G[\prt], \prt)$ has rankwidth at most $2k$, then so does $(G[\Dc'_\lambda], \Dc'_\lambda)$.
      As $G[\Dc'_\lambda] = \pi(G[\Dc_\lambda])$ and $\Dc'_\lambda = \pi(\Dc_\lambda)$, also $(G[\Dc_\lambda], \Dc_\lambda)$ has rankwidth at most $2k$.
      
      \item \emph{$r_\lambda = r$:} let $\oxp \in A$.
      Let $\overline{H}'_\lambda(\oxp) = G[\overline{\Dc}'_\lambda(\oxp)]$, where $\overline{\Dc}'_\lambda(\oxp) = \{R_C \mid C \in \prt_{\oxp}\} \cup \{\reps_\skel(\opx)\}$.
      Since $R_C$ is a~representative of $C$ in $G$ for each $C \in \prt_{\oxp}$ and $\reps_\skel(\opx)$ is a~representative of $\lparts(\Tc)[\opx]$ in $G$ and $\boldcup \prt_{\oxp} \cup \lparts(\Tc)[\opx] = V(G)$, we have, for every $C \in \prt_{\oxp}$,
      \[ \cutrk_G(C) = \cutrk_{\overline{H}'_\lambda(\oxp)}(R_C). \]
      But now observe that the partitioned graphs $(\overline{H}_\lambda(\oxp), \overline{\Dc}_\lambda(\oxp))$ and $(\overline{H}'_\lambda(\oxp), \overline{\Dc}'_\lambda(\oxp))$ are isomorphic, with the isomorphism preserving $\reps_\skel(\opx)$ and mapping each vertex $v \in \boldcup \Dc_\lambda(\oxp) = V(\H_{\oxp})$ to $\pi(v)$.
      Therefore, for every $C \in \prt_{\oxp}$,
      \[ \cutrk_{\overline{H}'_\lambda(\oxp)}(R_C) = \cutrk_{\overline{H}_\lambda(\oxp)}(\pi^{-1}(R_C)). \]
      Since $\Dc_\lambda(\oxp) = \{\pi^{-1}(R_C) \mid C \in \prt_{\oxp}\}$, we conclude that
      \begin{equation}
      \label{eq:cutrk-bashing}
      \begin{split}
      \sum_{C \in \prt_{\oxp}} \cutrk_G(C) &= \sum_{C \in \prt_{\oxp}} \cutrk_{\overline{H}_\lambda(\oxp)}(\pi^{-1}(R_C)) \\&= \sum_{C \in \Dc_\lambda(\oxp)} \cutrk_{\overline{H}_\lambda(\oxp)}(C) = \mathsf{ccost}(\H_\lambda(\oxp), \oxp).
      \end{split}
      \end{equation}
      We get the required equality by summing \cref{eq:cutrk-bashing} for all $\oxp \in A$ and recalling that $r = \sum_{C \in \prt} \cutrk_G(C)$ and $r_\lambda = \sum_{\oxp \in A} \mathsf{ccost}(\H_\lambda(\oxp), \oxp)$.
      
      \item \emph{$q_\lambda \leq q - |\Tpref|$:} this follows immediately from the facts that $q = |\Tpref| + \sum_{\oxp \in A} q'_{\oxp}$ and that $q_{\oxp} \leq q'_{\oxp}$ for each $\oxp \in A$.
    \end{itemize}
    Therefore, the proof is complete.
  \end{claimproof}
  
  
  \paragraph*{Rank decompositions of partitioned graphs $(G_\lambda, \Dc_\lambda)$.}
  We now show how, for any mapping $\lambda \in \Lambda$, we produce a~rank decomposition of the partitioned graph $(G_\lambda, \Dc_\lambda)$.
  
  Consider an~edge $\oxp \in A$ and a~pair $(q, \H) \in \AutomataReps^{c,s}(\Tc, \oxp)$.
  For technical reasons, we will now rename vertices of $\H$ so as to ensure that $\H$ contains all vertices of $\reps(\oxp)$.
  We construct a~graph $\H^\star$ from $\H$ as follows: For every vertex $u \in \reps(\oxp)$ such that $u \notin V(\H)$, choose any vertex $v \in V(\H)$ such that $\eta(\H)(v) = u$ (such a~vertex exists since $\H$ encodes some partition of $\lparts(\Tc)[\oxp]$ and $\reps(\oxp)$ is a~minimum representative of $\lparts(\Tc)[\oxp]$), and rename $v$ to $u$.
  Let also $\pi_\H$ be the isomorphism from $\H^\star$ to $\H$ prescribed by the procedure above.
  Naturally, this construction ensures that $\H^\star$ is isomorphic to $\H$ (but we stress that there could be $u \in V_i(\H^\star)$ and $v \in V_j(\H^\star)$ with $i \neq j$ such that $uv \in E(\H^\star) \not\Leftrightarrow uv \in E(G)$).
  By the properties of $\eta(\H)$, we have that, for every $u \in \reps(\oxp)$,
  \begin{equation}
    \label{eq:stupid-technical-mapping}
    N_G(u) \cap \lparts(\Tc)[\opx] = N_G(\pi_\H(u)) \cap \lparts(\Tc)[\opx].
  \end{equation}
  
  Given an~edge $\oxp \in A$ and a~pair $(q, \H) \in \AutomataReps^{c,s}(\Tc, \oxp)$, define now an~annotated rank decomposition derived from $\H^\star$, denoted $\Tc_{\H^\star}$, as follows.  
  Recall that $(H, \Dc)$ is the partitioned graph derived from $\H$ and $\overline{\Dc} = \Dc \cup \{\reps_\skel(\opx)\}$, and $\overline{H} = G[\overline{\Dc}]$.
  Then define $(\overline{H}^\star, \overline{\Dc}^\star)$ as the partitioned graph created from $(\overline{H}, \overline{\Dc})$ by renaming each vertex $v \in V(H)$ to $\pi_\H^{-1}(v)$.
  Note that by the construction of $\overline{H}^\star$, we have that $\reps_\skel(\oxp) \cup \reps_\skel(\opx) \subseteq V(\overline{H}^\star)$ and moreover $\reps_\skel(\opx) \in \overline{\Dc}^\star$.
  Then let $\Tc_{\H^\star} = (T_{\H^\star}, U_{\H^\star}, \reps_{\H^\star}, \repse_{\H^\star}, \dmap_{\H^\star})$ be an~arbitrary annotated rank decomposition of $(\overline{H}^\star, \overline{\Dc}^\star)$ with the following properties:
  \begin{itemize}
    \item $V(T_{\H^\star}) \cap V(T_\skel) = \{x, p\}$ and $\opx$ is a~leaf edge of $T_{\H^\star}$;
    \item $\lparts(\Tc_{\H^\star})[\opx] = \reps_{\H^\star}(\opx) = \reps_\skel(\opx)$ and $\reps_{\H^\star}(\oxp) = \reps_\skel(\oxp) = \reps(\oxp)$.
  \end{itemize}
  It can be easily seen that such a~decomposition exists and can be constructed from $\H$ and the annotations on the edge $xp$ of $\Tc$ in time $\Oh[c,\ell]{1}$.
  Observe also that $\repse_{\H^\star}(xp) = \repse_\skel(xp)$: For any pair of vertices $u \in \reps_\skel(\oxp)$, $v \in \reps_\skel(\opx)$ we have $uv \in E(\repse_{\H^\star}(xp))$ if and only if $\pi_\H(u) v \in E(G)$ by the definition of $\H^\star$.
  But by \cref{eq:stupid-technical-mapping}, $\pi_\H(u) v \in E(G)$ if and only if $uv \in E(G)$, which holds if and only if $uv \in E(\repse_\skel(xp))$.
  Next, since $|\boldcup \overline{\Dc}^\star| = |V(\H^\star)| + |\reps_\skel(\opx)| \leq cs\ell + \ell$, the width of $\Tc_{\H^\star}$ is bounded by $cs\ell + \ell$.
  Let also $\Tc_{\H}$ be the decomposition formed from $\Tc_{\H^\star}$ by renaming all vertices $v \in V(\H^\star)$ of the graph encoded by the decomposition back to $\pi_\H(v)$.
  Naturally, $\Tc_{\H}$ encodes $(\overline{H}, \overline{\Dc}) = (G[\overline{\Dc}], \overline{\Dc})$.
  
  Next, for any $\lambda \in \Lambda$, define the following rank decompositions:
  \begin{itemize}
    \item $\Tc_\lambda^\star$ -- the decomposition formed by gluing $\tcskel$ along $xp$ with each decomposition $\Tc_{\H^\star_{\lambda(\oxp)}} = (T_{\H^\star_{\lambda(\oxp)}},\allowbreak U_{\H^\star_{\lambda(\oxp)}},\allowbreak \reps_{\H^\star_{\lambda(\oxp)}},\allowbreak \repse_{\H^\star_{\lambda(\oxp)}},\allowbreak \dmap_{\H^\star_{\lambda(\oxp)}})$ for $\oxp \in A$ in arbitrary order; this gluing is possible since for every $\oxp \in A$, we have $\oxp \in \leafe(T_\skel)$, $\opx \in \leafe(T_{\H^\star})$, $\reps_{\H^\star_{\lambda(\oxp)}}(\opx) = \reps_\skel(\opx)$, $\reps_{\H^\star_{\lambda(\oxp)}}(\oxp) = \reps_\skel(\oxp)$ and $\repse_{\H^\star_{\lambda(\oxp)}}(xp) = \reps_\skel(xp)$.
    It is easy to see that $\Tc_\lambda^\star$ encodes some partitioned graph with vertex set $\bigcup_{\oxp \in A} V(H^\star_{\lambda(\oxp)})$.
    Moreover, its width is bounded by $cs\ell + \ell$ as discussed at the introduction of the notion of gluing decompositions.
    
    \item $\Tc_\lambda$ -- the decomposition formed from $\Tc_\lambda^\star$ by renaming every vertex $v$ in the partitioned graph encoded by $\Tc_\lambda^\star$ such that $v \in V(H^\star_{\lambda(\oxp)})$ for $\oxp \in A$ back to $\pi_{\H_{\lambda(\oxp)}}(v)$.
    Of course, the width of $\Tc_\lambda$ is also bounded by $cs\ell + \ell$.
  \end{itemize}
  
  The following observation follows straight from the analysis of the construction of $\Tc_\lambda$ and $\Tc^\star_\lambda$.
  \begin{observation}
    $\Tc_\lambda$ encodes the partitioned graph $(G_\lambda, \Dc_\lambda)$, and $\Tc^\star_\lambda$ encodes a~partitioned graph isomorphic to $(G_\lambda, \Dc_\lambda)$.
  \end{observation}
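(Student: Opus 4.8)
The plan is to read the partitioned graph encoded by $\Tc^\star_\lambda$ off the iterated gluing and then transport the conclusion through the renamings $\pi_{\H_{\lambda(\oxp)}}$. Perform the gluings in the order $\oxp_1, \dots, \oxp_t$ of the elements of $A$ used in the definition of $\Tc^\star_\lambda$, set $\Tc^{(0)} \coloneqq \tcskel$, and let $\Tc^{(i)}$ be the gluing of $\Tc^{(i-1)}$ along $x_i p_i$ (writing $\oxp_i = \vec{x_i p_i}$) with $\Tc_{\H^\star_{\lambda(\oxp_i)}}$, so that $\Tc^\star_\lambda = \Tc^{(t)}$. First I would check, by induction on $i$, that the preconditions of the gluing operation hold at each step. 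A gluing modifies only the subtree hanging below the glued leaf, so after the first $i-1$ steps the leaf edge $\oxp_i$ is untouched: in $\Tc^{(i-1)}$ it still carries the annotation $\reps_\skel(\oxp_i) = \reps(\oxp_i)$, the part $\reps(\oxp_i)$ is still present in the encoded partition, and the annotations inherited from $\tcskel$ on the edge $x_i p_i$ are unchanged; on the other side $\reps_{\H^\star_{\lambda(\oxp_i)}}(\oxp_i) = \reps_\skel(\oxp_i)$ and $\reps_{\H^\star_{\lambda(\oxp_i)}}(\vec{p_i x_i}) = \reps_\skel(\vec{p_i x_i})$ hold by the construction of $\Tc_{\H^\star}$, and $\repse_{\H^\star_{\lambda(\oxp_i)}}(x_i p_i) = \repse_\skel(x_i p_i)$ is exactly the equality established in the paragraph preceding this statement. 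Hence each $\Tc^{(i)}$ is a well-defined annotated rank decomposition encoding a partitioned graph, and so are $\Tc^\star_\lambda$ and, after renaming, $\Tc_\lambda$.

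Next I would track the vertex universe and the partition. Applying the gluing formula $\prt = (\prt_A \setminus \{R_B\}) \cup (\prt_B \setminus \{R_A\})$ with $R_B = \reps(\oxp_i)$ and $R_A = \reps_\skel(\vec{p_i x_i})$, step $i$ removes the part $\reps(\oxp_i)$ and inserts the parts of $\overline{\Dc}^\star_{\lambda(\oxp_i)}$ other than $\reps_\skel(\vec{p_i x_i})$, i.e.\ the family $\pi_{\H_{\lambda(\oxp_i)}}^{-1}(\Dc_\lambda(\oxp_i))$, while the vertices of $H^\star_{\lambda(\oxp_i)}$ enter the universe and nothing leaves it. So $\Tc^\star_\lambda$ encodes a partitioned graph $(G', \prt')$ with $V(G') = \bigcup_{\oxp \in A} V(H^\star_{\lambda(\oxp)})$ and $\prt' = \bigcup_{\oxp \in A} \pi_{\H_{\lambda(\oxp)}}^{-1}(\Dc_\lambda(\oxp))$. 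Let $\pi_\lambda$ be the bijection acting on each $V(H^\star_{\lambda(\oxp)})$ as $\pi_{\H_{\lambda(\oxp)}}$; then $\Tc_\lambda$, being $\Tc^\star_\lambda$ renamed by $\pi_\lambda$, encodes the partitioned graph whose vertex universe is $\bigcup_{\oxp} V(H_\lambda(\oxp)) = \boldcup \Dc_\lambda$ and whose partition is $\bigcup_{\oxp} \Dc_\lambda(\oxp) = \Dc_\lambda$. Since an annotated rank decomposition determines the partitioned graph it encodes, and since $\Tc^\star_\lambda$ visibly encodes a partitioned graph isomorphic via $\pi_\lambda$ to the one encoded by $\Tc_\lambda$, it remains only to check that the graph encoded by $\Tc_\lambda$ has edge set $E(G[\Dc_\lambda]) = E(G_\lambda)$.

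Finally I would verify the edges, which I expect to be the main obstacle. On a single block, the gluing keeps the subgraph induced on $V(H^\star_{\lambda(\oxp)})$ equal to $H^\star_{\lambda(\oxp)}$, which $\pi_{\H_{\lambda(\oxp)}}$ maps onto $H_\lambda(\oxp) = G[\Dc_\lambda(\oxp)]$; as every part of $\Dc_\lambda(\oxp)$ is a part of $\Dc_\lambda$, this matches $G[\Dc_\lambda]$ there. For a cross edge $uv$ between distinct blocks $V(H^\star_{\lambda(\oxp)})$ and $V(H^\star_{\lambda(\vec{x'p'})})$, the gluing formula determines adjacency by routing through the unique representatives selected by matching neighbourhoods along the leaf edges $x p$ and $x' p'$ (into $\reps_\skel(\opx)$ and $\reps(\oxp)$ respectively, and symmetrically for $\vec{x'p'}$). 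Unwinding this chain, using that $\tcskel$ encodes $(G[\prtpref], \prtpref)$ with $\reps(\oxp)$ and $\reps_\skel(\opx)$ minimal representatives of $\lparts(\Tc)[\oxp]$ and $\lparts(\Tc)[\opx]$ in $G$, and using \cref{eq:stupid-technical-mapping} (which guarantees that $\pi_{\H_{\lambda(\oxp)}}$ preserves adjacency to $\lparts(\Tc)[\opx]$), one shows that $uv \in E(G')$ if and only if $\pi_\lambda(u)\pi_\lambda(v) \in E(G)$; this is the same two-layer adjacency-tracking argument that underlies the proof that an annotated rank decomposition uniquely determines its encoded graph. Since $\pi_\lambda(u)$ and $\pi_\lambda(v)$ lie in parts of $\Dc_\lambda$ coming from different blocks, hence in different parts, this coincides with the edge relation of $G[\Dc_\lambda]$. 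Therefore $\pi_\lambda(G') = G[\Dc_\lambda] = G_\lambda$, so $\Tc_\lambda$ encodes $(G_\lambda, \Dc_\lambda)$ and $\Tc^\star_\lambda$ encodes a partitioned graph isomorphic to $(G_\lambda, \Dc_\lambda)$ via $\pi_\lambda$.
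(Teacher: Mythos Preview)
Your proposal is correct and is exactly the kind of verification one would carry out; the paper itself provides no proof and simply states that the observation ``follows straight from the analysis of the construction of $\Tc_\lambda$ and $\Tc^\star_\lambda$.'' You supply the details the paper omits: checking the gluing preconditions at each step (using that prior gluings do not touch the not-yet-processed leaf edges and that $\repse_{\H^\star}(xp)=\repse_\skel(xp)$), tracking the partition via the gluing recipe $\prt = (\prt_A \setminus \{R_B\}) \cup (\prt_B \setminus \{R_A\})$, and verifying the edge relation block-by-block using \cref{eq:stupid-technical-mapping} and the fact that $\tcskel$ encodes $(G[\prtpref],\prtpref)$.
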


  \paragraph*{Optimizing $\lambda$.}
  At this point of time, we have reduced the problem to finding a~mapping $\lambda \in \Lambda$ with the rankwidth of $(G_\lambda, \Dc_\lambda)$ bounded by $2k$, such that the pair $(r_\lambda, q_\lambda)$ is lexicographically minimum possible.
  In the sequel, we will show how this can be done using the exact rankwidth automaton $\ExactRankAutom = (Q,\iota,\delta,\varepsilon)$.
  
  We now briefly sketch the idea.
  A~brute-force search for an~optimum $\lambda$ would look as follows: recall that $\Tc_\lambda$ is an~annotated rank decomposition of $(G_\lambda, \Dc_\lambda)$ of width $cs\ell + \ell$.
  Hence running $\ExactRankAutom$ on $\Tc_\lambda$ will correctly determine whether the rankwidth of the encoded partitioned graph $(G_\lambda, \Dc_\lambda)$ is at most $2k$.
  Repeating this procedure for all possible $\lambda \in \Lambda$ yields all viable mappings $\lambda$; for each of these, we can easily compute the values $r_\lambda$ and $q_\lambda$ -- each of these is of the form $q_\lambda = \sum_{\oxp \in A} f_{\oxp}(\lambda(\oxp))$ and $r_\lambda = \sum_{\oxp \in A} g_{\oxp}(\lambda(\oxp))$ for some functions $f_{\oxp}$, $g_{\oxp}$ that can be evaluated efficiently given $\lambda(\oxp)$.
  Thus we can find the optimum mapping $\lambda$.
  
  Note that in the description above, instead of the annotated decomposition $\Tc_\lambda$ encoding $(G_\lambda, \Dc_\lambda)$, we could have used an~annotated decomposition $\Tc^\star_\lambda$ encoding a~partitioned graph isomorphic to $(G_\lambda, \Dc_\lambda)$.
  Then $\ExactRankAutom$, when run on $\Tc^\star_\lambda$, will return that the encoded partitioned graph has rankwidth at most $2k$ if and only if it would do so when run on $\Tc_\lambda$.
  This choice has an~important consequence: All annotated decompositions $\Tc^\star_\lambda$ have \emph{the same} annotated prefix.
  Formally, given two annotated rank decompositions $\Tc_1 = (T_1, U_1, \reps_1, \repse_1, \dmap_1)$ and $\Tc_2 = (T_2, U_2, \reps_2, \repse_2, \dmap_2)$ and a~set $S \subseteq V(T_1) \cap V(T_2)$, we say that $\Tc_1$ and $\Tc_2$ \emph{agree on} $S$ if
  \[
  \begin{split}
  T_1[S] &= T_2[S], \\
  \funrestriction{\reps_1}{\oE(T_1[S])} &= \funrestriction{\reps_2}{\oE(T_2[S])}, \\
  \funrestriction{\repse_1}{E(T_1[S])} &= \funrestriction{\repse_2}{E(T_2[S])}, \\
  \funrestriction{\dmap_1}{\PT(T_1[S])}&= \funrestriction{\dmap_2}{\PT(T_2[S])}.
  \end{split}
  \]
  Then, for any $\lambda_1, \lambda_2 \in \Lambda$, the decompositions $\Tc^\star_{\lambda_1}$ and $\Tc^{\star}_{\lambda_2}$ agree on $\Tpref' \coloneqq \Tpref \cup \App_T(\Tpref)$.
  This observation will allow us to reuse the partial runs of $\ExactRankAutom$, which will enable us to find the optimum mapping $\lambda$ by means of a~dynamic programming on the rooted subtree induced by $\Tpref'$ (precisely, using \cref{lem:automaton-dp}).
  The details can be found below.
  
  Let $\oxp \in A$ and $(q, \H) \in \AutomataReps^{c,s}(\Tc, \oxp)$.
  Define the state $\xi_{\H^\star} \in Q$ of $\ExactRankAutom$ as follows.
  Recall that $\opx$ is the unique leaf edge of $\Tc_{\H^\star}$ such that $\lparts(\Tc_{\H^\star})[\opx] = \reps_\skel(\opx)$ (and so $\reps_{\H^\star}(\opx) = \reps_\skel(\opx)$ and $\reps_{\H^\star}(\oxp) = \reps_\skel(\oxp)$).
  Then let $\rho_{\H^\star}$ be the run of $\ExactRankAutom$ on $(\Tc_{\H^\star}, x, p)$, and set $\xi_{\H^\star} \coloneqq \rho_{\H^\star}(\oxp)$.
  Note that $\xi_{\H^\star}$ can be determined in time $\Oh[c,\ell]{1}$.

  We now claim that in a~run of $\ExactRankAutom$ on $\Tc^\star_\lambda$ for some $\lambda \in \Lambda$, the partial runs on the glued decompositions $\Tc_{\H^\star_{\lambda(\oxp)}}$ are exactly the recorded states $\xi_{\H^\star_{\lambda(\oxp)}}$.

  \begin{claim}
    \label{cl:partialruns}
    Let $\lambda \in \Lambda$ and $\oxp \in A$.
    If $\rho$ is the run of $\ExactRankAutom$ on $\Tc^\star_\lambda$, then $\rho(\oxp) = \xi_{\H^\star_{\lambda(\oxp)}}$.
  \end{claim}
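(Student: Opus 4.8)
\textbf{Proof proposal for \cref{cl:partialruns}.}
The plan is to prove the claim by induction on the structure of the subtree of $\Tc_\lambda^\star$ that lies ``below'' the edge $\oxp$, i.e., the subtree obtained by gluing $\Tc_{\H^\star_{\lambda(\oxp)}}$ onto $\tcskel$ at $xp$. The key observation to exploit is that, by the definition of gluing, the portion of $\Tc^\star_\lambda$ that is a~predecessor of $\oxp$ is precisely the decomposition $\Tc_{\H^\star_{\lambda(\oxp)}}$ restricted to the side containing $\reps_\skel(\opx)$; in particular, for every oriented edge $\oed \in \pred_{T^\star_\lambda}(\oxp)$, the transition signature $\tau(\Tc^\star_\lambda, \oed)$ coincides with the transition signature $\tau(\Tc_{\H^\star_{\lambda(\oxp)}}, \oed)$ (and similarly for edge signatures at leaves), because gluing does not alter the annotations on either glued piece. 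This is the one place where we use the precise bookkeeping conditions on gluing: $V(T_{\H^\star}) \cap V(T_\skel) = \{x, p\}$, $\opx$ is a~leaf edge of $T_{\H^\star}$, and $\reps_{\H^\star}(\opx) = \reps_\skel(\opx)$, $\reps_{\H^\star}(\oxp) = \reps_\skel(\oxp)$, $\repse_{\H^\star}(xp) = \repse_\skel(xp)$, so the glued decomposition is well-defined and its annotations near $x$ and $p$ agree with those of $\Tc_{\H^\star_{\lambda(\oxp)}}$.

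First I would fix $\lambda \in \Lambda$ and $\oxp \in A$, and observe that the subtree of $T^\star_\lambda$ induced by $\pred_{T^\star_\lambda}(\oxp)$ is exactly the subtree of $T_{\H^\star_{\lambda(\oxp)}}$ induced by $\pred_{T_{\H^\star_{\lambda(\oxp)}}}(\oxp)$ --- that is, the ``$x$-side'' of $\Tc_{\H^\star_{\lambda(\oxp)}}$ when $\opx$ is treated as the root-pointing edge of its leaf $p$. Since $\ExactRankAutom$ is a~label-oblivious rank decomposition automaton, the run $\rho$ of $\ExactRankAutom$ on $\Tc^\star_\lambda$ restricted to $\pred_{T^\star_\lambda}(\oxp)$ is, by the uniqueness of runs, determined solely by the initial mapping $\iota$ applied to the leaf edge signatures and the transition mapping $\delta$ applied to the transition signatures $\tau(\Tc^\star_\lambda, \oed)$ for $\oed \in \pred_{T^\star_\lambda}(\oxp) \setminus \leafe(T^\star_\lambda)$. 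By the agreement of annotations noted above, these signatures are identical to the corresponding signatures of $\Tc_{\H^\star_{\lambda(\oxp)}}$. Hence $\funrestriction{\rho}{\pred_{T^\star_\lambda}(\oxp)}$ equals $\funrestriction{\rho_{\H^\star_{\lambda(\oxp)}}}{\pred_{T_{\H^\star_{\lambda(\oxp)}}}(\oxp)}$, where $\rho_{\H^\star_{\lambda(\oxp)}}$ is the run of $\ExactRankAutom$ on $(\Tc_{\H^\star_{\lambda(\oxp)}}, x, p)$. In particular, evaluating at the edge $\oxp$ itself gives $\rho(\oxp) = \rho_{\H^\star_{\lambda(\oxp)}}(\oxp) = \xi_{\H^\star_{\lambda(\oxp)}}$, which is exactly the claim.

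The only subtlety I would be careful about is the direction of orientation: one must check that $\oxp$, as an~oriented edge of $T^\star_\lambda$, points towards the root of $T^\star_\lambda$ (so that $\rho(\oxp)$ is well-defined as part of the run), and that this matches the orientation used when computing $\xi_{\H^\star_{\lambda(\oxp)}}$ via the run of $\ExactRankAutom$ on $(\Tc_{\H^\star_{\lambda(\oxp)}}, x, p)$. Since $\oxp$ points from the leaf-heavy glued side $T_{\H^\star_{\lambda(\oxp)}}$ into $\tcskel$, which contains the root, this is immediate, but it is worth stating explicitly. Beyond that, the argument is a~direct consequence of the ``locality'' of runs of rank decomposition automata together with the fact that gluing preserves annotations on both glued parts, so I do not anticipate any genuine obstacle here --- the bulk of the work is simply setting up the correspondence between the two subtrees and invoking the uniqueness of runs. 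The next step after this claim (not part of the current statement) will presumably be to combine the partial runs $\xi_{\H^\star_{\lambda(\oxp)}}$ over all $\oxp \in A$ with the fixed annotated prefix $\Tpref'$ and feed the result into \cref{lem:automaton-dp}, using $\xi_{\H^\star}$ as the leaf edge state for the leaf corresponding to $\oxp$, and the semigroup of lexicographically-ordered pairs $(r_\lambda, q_\lambda)$ as the cost structure.
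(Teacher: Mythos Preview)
Your proposal is correct and takes essentially the same approach as the paper: both arguments observe that the glued decomposition $\Tc^\star_\lambda$ and the small decomposition $\Tc_{\H^\star_{\lambda(\oxp)}}$ have identical annotations on the subtree consisting of $p$ together with all descendants of $x$ (the paper phrases this via its ``agree on'' notion), and then conclude by uniqueness of automaton runs that the states at $\oxp$ coincide. Your version is more explicit about the mechanics (predecessor subtrees, matching signatures, orientation of $\oxp$), but the underlying argument is the same.
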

  \begin{claimproof}
    Observe that the set $B \coloneqq V(\Tc_{\H^\star_{\lambda(\oxp)}})$ comprises exactly $p$ and the set of descendants of $x$ in $T^\star_\lambda$.
    Moreover, by the construction of $\Tc^\star_\lambda$ (and the properties of gluing decompositions), we get that the decompositions $\Tc_{\H^\star_{\lambda(\oxp)}}$ and $\Tc^\star_\lambda$ agree on $B$.
    We immediately infer that $\rho(\oxp) = \rho_{\H^\star_{\lambda(\oxp)}}(\oxp) = \xi_{\H^\star_{\lambda(\oxp)}}$.
  \end{claimproof}
  
  Aiming to use \cref{lem:automaton-dp} in our case, let $\overline{\Z} \coloneqq \Z \cup \{+\infty\}$ and define the totally ordered commutative semigroup $(S, +, \leq)$, where $S = \overline{\Z} \times \overline{\Z}$, $+$ is the coordinate-wise sum and $\leq$ is the lexicographic order on $S$. 
  Then define the cost function ${\bf c} \,\colon\, A \times Q \to S$ by setting, for every $\oxp \in A$ and $f \in Q$, the value
  \begin{equation}
    \label{eq:costdef}
    {\bf c}(\oxp, f) = \min\{ (\mathsf{ccost}(\H, \oxp), q) \,\mid\, (\H, q) \in \AutomataReps^{c,s}(\oxp), \,\xi_{\H^\star} = f \};
  \end{equation}
  where we set ${\bf c}(\oxp, f) = (+\infty, +\infty)$ if the set on the right-hand side of \cref{eq:costdef} is empty.
  Let also $F \subseteq Q$ be the set of states of $\ExactRankAutom$ accepting that the input decomposition describes a~partitioned graph of rankwidth at most $2k$; or equivalently, $F$ is the set of states representing non-empty full sets of width $2k$ at a~root of an~input decomposition.
  
  We now show that the results of \cref{lem:automaton-dp} will be enough to determine the existence of $\lambda$ with the rankwidth of $(G_\lambda, \Dc_\lambda)$ bounded by $2k$, and in the case any such $\lambda$ exists -- to determine an~optimum mapping $\lambda$.
  
  \begin{claim}
    \label{cl:lambda-to-automaton}
    Suppose $\lambda \in \Lambda$ is such that $(G_\lambda, \Dc_\lambda)$ has rankwidth at most $2k$.
    Let $\kappa \,\colon\, A \to Q$ be defined as $\kappa(\oxp) = \xi_{\H^\star_{\lambda(\oxp)}}$ for each $\oxp \in A$, and let $\rho_\kappa$ be the $\kappa$-run of $\autom$ on $\Tc_\skel$.
    Then $\rho_\kappa(\vartheta) \in F$ and ${\bf c}(\kappa) \leq (r_\lambda, q_\lambda)$.
  \end{claim}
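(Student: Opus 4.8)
\textbf{Plan for the proof of \cref{cl:lambda-to-automaton}.}
The plan is to relate the $\kappa$-run $\rho_\kappa$ of $\ExactRankAutom$ on $\Tc_\skel$ to the full run $\rho$ of $\ExactRankAutom$ on $\Tc^\star_\lambda$, and to use this identification together with the defining properties of $\ExactRankAutom$ and the cost function ${\bf c}$. The key observation is that $\Tc^\star_\lambda$ is obtained by gluing $\Tc_\skel$ (which equals $\tcskel$ restricted to $\Tpref'$ together with the trivial leaf edges along $A$) with the decompositions $\Tc_{\H^\star_{\lambda(\oxp)}}$ for $\oxp \in A$, and that $\Tc^\star_\lambda$ and $\Tc_\skel$ agree on $\Tpref'$. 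Because of this agreement and \cref{cl:partialruns}, the value $\rho(\oxp)$ produced by the run of $\ExactRankAutom$ on $\Tc^\star_\lambda$ at each appendix edge $\oxp \in A$ is exactly $\xi_{\H^\star_{\lambda(\oxp)}} = \kappa(\oxp)$. Hence, by the definition of a $\kappa$-run (where the initial mapping $\iota$ is ignored and the leaf-edge states are fixed to $\kappa$), the run $\rho_\kappa$ of $\ExactRankAutom$ on $\Tc_\skel$ agrees with $\funrestriction{\rho}{\pred(\oab) \cup \pred(\oba) \cup \{\vartheta\}}$ on all edges of $T_\skel$ lying in $\Tpref'$ (for the appropriate choice of $a,b$): the transitions of $\ExactRankAutom$ along $\oE(T_\skel[\Tpref'])$ depend only on the transition signatures there, which coincide for $\Tc_\skel$ and $\Tc^\star_\lambda$, and on the child states, which are inductively equal by the base case at the appendix edges. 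In particular $\rho_\kappa(\vartheta) = \rho(\vartheta)$.

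Next I would invoke \cref{lem:exact-rank-automaton}: the valuation $\rho(\vartheta)$ of $\ExactRankAutom$ on $\Tc^\star_\lambda$ equals $\fullset^{\BB_r}_{2k}(r)$, the full set at the root of $\Tc^\star_\lambda$ of width $2k$. Since $\Tc^\star_\lambda$ encodes a partitioned graph isomorphic to $(G_\lambda, \Dc_\lambda)$ and this has rankwidth at most $2k$, \cref{lem:korean-decomposition-recovery} gives that this full set is non-empty, and so by the definition of $F$ we conclude $\rho(\vartheta) \in F$; hence $\rho_\kappa(\vartheta) \in F$. For the cost bound, I compute ${\bf c}(\kappa) = \sum_{\oxp \in A} {\bf c}(\oxp, \kappa(\oxp)) = \sum_{\oxp \in A} {\bf c}(\oxp, \xi_{\H^\star_{\lambda(\oxp)}})$. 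By the definition of ${\bf c}$ in \cref{eq:costdef}, for each $\oxp \in A$ the value ${\bf c}(\oxp, \xi_{\H^\star_{\lambda(\oxp)}})$ is the lexicographic minimum of $(\mathsf{ccost}(\H, \oxp), q)$ over all pairs $(\H, q) \in \AutomataReps^{c,s}(\Tc, \oxp)$ with $\xi_{\H^\star} = \xi_{\H^\star_{\lambda(\oxp)}}$; since the pair $\lambda(\oxp) = (q_\lambda(\oxp), \H_\lambda(\oxp))$ is itself such a pair, we get ${\bf c}(\oxp, \xi_{\H^\star_{\lambda(\oxp)}}) \leq (\mathsf{ccost}(\H_\lambda(\oxp), \oxp), q_\lambda(\oxp)) = (r_\lambda(\oxp), q_\lambda(\oxp))$. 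Summing over $\oxp \in A$ and using that coordinate-wise sums respect the lexicographic order (as $(S,+,\leq)$ is a totally ordered commutative semigroup) yields ${\bf c}(\kappa) \leq \left(\sum_{\oxp \in A} r_\lambda(\oxp), \sum_{\oxp \in A} q_\lambda(\oxp)\right) = (r_\lambda, q_\lambda)$, as required.

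The main obstacle I anticipate is the careful bookkeeping in establishing $\rho_\kappa(\vartheta) = \rho(\vartheta)$: one must be precise about which edge $\oab$ one roots the $\kappa$-run of $\ExactRankAutom$ on $\Tc_\skel$ at so that it corresponds to running $\ExactRankAutom$ on the full $\Tc^\star_\lambda$ with the root of $\Tc^\star_\lambda$ in the $\Tpref'$-part, and to check that the transition and final mappings of $\ExactRankAutom$ indeed only consult the (common) annotations and signatures on $\Tpref'$ and the child states. This is routine given the definitions of runs and of gluing, but it is the step that requires the most attention to detail; the rest is a direct application of \cref{lem:exact-rank-automaton}, \cref{lem:korean-decomposition-recovery}, and the definition of ${\bf c}$.
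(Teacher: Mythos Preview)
Your proposal is correct and follows essentially the same approach as the paper: identify $\rho_\kappa(\vartheta)$ with $\rho(\vartheta)$ via \cref{cl:partialruns} and the agreement of $\Tc^\star_\lambda$ and $\Tc_\skel$ on $\Tpref'$, then use the definition of $F$ (equivalently \cref{lem:exact-rank-automaton,lem:korean-decomposition-recovery}) for $\rho_\kappa(\vartheta)\in F$, and bound each summand of ${\bf c}(\kappa)$ by $(r_\lambda(\oxp), q_\lambda(\oxp))$ via \cref{eq:costdef}. Your extra care about the rooting and the semigroup inequality is fine and matches what the paper does more tersely.
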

  \begin{claimproof}
    Let also $\rho$ be the run of $\ExactRankAutom$ on $\Tc^\star_\lambda$.
    By \cref{cl:partialruns}, we have $\rho(\oxp) = \xi_{\H^\star_{\lambda(\oxp)}} = \kappa(\oxp)$.
    Since $\Tc^\star_\lambda$ and $\Tc_\skel$ agree on $V(\Tc_\skel)$, we infer that $\rho_\kappa(\vartheta) = \rho(\vartheta)$.
    Therefore, $\rho_\kappa(\vartheta) \in F$ if and only if $\rho(\vartheta) \in F$, which only holds when the full set of $\Tc^\star_\lambda$ at the root $r$ of width at most $2k$ is nonempty (i.e., $(G_\lambda, \Dc_\lambda)$ has rankwidth at most $2k$).
    So $\rho_\kappa(\vartheta) \in F$.
    Since ${\bf c}(\oxp, \kappa(\oxp)) \leq (r_\lambda(\oxp), q_\lambda(\oxp))$ for each $\oxp \in A$ (by \cref{eq:costdef}), ${\bf c}(\kappa) = \sum_{\oxp \in A} {\bf c}(\oxp, \kappa(\oxp))$, $r_\lambda = \sum_{\oxp \in A} r_\lambda(\oxp)$ and $q_\lambda = \sum_{\oxp \in A} q_\lambda(\oxp)$, we conclude that ${\bf c}(\kappa) \leq (r_\lambda, q_\lambda)$.
  \end{claimproof}
  
  \begin{claim}
    \label{cl:automaton-to-lambda}
    Suppose there exists a~leaf edge state mapping $\kappa \,\colon\, A \to Q$ such that ${\bf c}(\kappa) \neq (+\infty, +\infty)$ and, for the $\kappa$-run $\rho_\kappa$ of $\autom$ on $(\overline{\Tc}_\skel, r_1, r_2)$, we have $\rho_\kappa(\vartheta) \in F$.
    Then there exists $\lambda \in \Lambda$ such that $(G_\lambda, \Dc_\lambda)$ has rankwidth at most $2k$ and $(r_\lambda, q_\lambda) = {\bf c}(\kappa)$.
    Moreover, $\lambda$ can be constructed in time $\Oh[c,\ell]{|\Tpref|}$.
  \end{claim}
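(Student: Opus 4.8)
The plan is to run the argument of \cref{cl:lambda-to-automaton} in reverse. First I would unpack the hypothesis ${\bf c}(\kappa) \neq (+\infty,+\infty)$: since ${\bf c}(\kappa) = \sum_{\oxp \in A} {\bf c}(\oxp, \kappa(\oxp))$ and each summand lies in $\overline{\Z} \times \overline{\Z}$, this forces ${\bf c}(\oxp, \kappa(\oxp)) \neq (+\infty, +\infty)$ for every $\oxp \in A$, so the set on the right-hand side of \cref{eq:costdef} is non-empty. For each $\oxp \in A$ I would then fix a pair $\lambda(\oxp) \coloneqq (q, \H) \in \AutomataReps^{c,s}(\Tc, \oxp)$ attaining the minimum in \cref{eq:costdef}, i.e.\ with $\xi_{\H^\star} = \kappa(\oxp)$ and $(\mathsf{ccost}(\H, \oxp), q) = {\bf c}(\oxp, \kappa(\oxp))$. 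This defines a map $\lambda \in \Lambda$, and by construction $r_\lambda(\oxp) = \mathsf{ccost}(\H_\lambda(\oxp), \oxp)$ and $q_\lambda(\oxp)$ are precisely the two coordinates of ${\bf c}(\oxp, \kappa(\oxp))$. Summing over $\oxp \in A$ and using $r_\lambda = \sum_{\oxp} r_\lambda(\oxp)$, $q_\lambda = \sum_{\oxp} q_\lambda(\oxp)$ gives $(r_\lambda, q_\lambda) = {\bf c}(\kappa)$.

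Next I would verify that $(G_\lambda, \Dc_\lambda)$ has rankwidth at most $2k$. Recall that $\Tc^\star_\lambda$ encodes a partitioned graph isomorphic to $(G_\lambda, \Dc_\lambda)$ and has width at most $cs\ell + \ell$, which is exactly the width for which $\ExactRankAutom = \ExactRankAutom_{2k,\, cs\ell+\ell}$ is defined. Let $\rho$ be the run of $\ExactRankAutom$ on $\Tc^\star_\lambda$. By \cref{cl:partialruns}, $\rho(\oxp) = \xi_{\H^\star_{\lambda(\oxp)}} = \kappa(\oxp)$ for every $\oxp \in A$. Since $\Tc^\star_\lambda$ agrees with $\tcskel$ on $\Tpref \cup \App_T(\Tpref)$ (this is how $\Tc^\star_\lambda$ is glued together) and the leaf edges of $T_\skel$ are exactly the edges $\oxp$ with $\oxp \in A$, the restriction of $\rho$ to $T_\skel$ is nothing but the $\kappa$-run of $\ExactRankAutom$ on $\overline{\Tc}_\skel$: the leaf states coincide with $\kappa$, and every transition strictly inside $\Tpref$ is determined by the agreeing annotations. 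Hence $\rho(\vartheta) = \rho_\kappa(\vartheta) \in F$, which by the definition of $F$ (equivalently, by \cref{lem:korean-decomposition-recovery}) means that the partitioned graph encoded by $\Tc^\star_\lambda$ has rankwidth at most $2k$; since rankwidth is preserved under isomorphism, so does $(G_\lambda, \Dc_\lambda)$.

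Finally, for the running time I would bound $|A| = |\oApp_T(\Tpref)| = \Oh{|\Tpref|}$ and note that, for a fixed $\oxp \in A$, the family $\AutomataReps^{c,s}(\Tc, \oxp)$ has only $\Oh[c,\ell]{1}$ members (its size is bounded by the number of equivalence classes of $\sim^{c,\reps(\oxp)}_s$, and $s = 2^{2k} = \Oh[\ell]{1}$ as $k \le \ell$), and it is already available via the $\mathsf{Run}(\oxp)$ query of the maintained instance of $\BoundRepr$. For each member $(q,\H)$ I would compute $\xi_{\H^\star}$ in $\Oh[c,\ell]{1}$ time (build $\Tc_{\H^\star}$ in $\Oh[c,\ell]{1}$ time and read off $\rho_{\H^\star}(\oxp)$ using the $\Oh[\ell]{1}$ evaluation time of $\ExactRankAutom$), and $\mathsf{ccost}(\H, \oxp)$ in $\Oh[c,\ell]{1}$ time (the graph $\overline{H}$ has $\Oh[c,\ell]{1}$ vertices, its relevant bipartite adjacencies are recovered from $H$, $\eta$ and $\repse_\skel(xp)$, and a bounded-size cut-rank computation then suffices). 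So selecting the minimizer for each $\oxp$ costs $\Oh[c,\ell]{1}$, for a total of $\Oh[c,\ell]{|\Tpref|}$.

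The step I expect to be the main obstacle is the middle one: certifying that the run of $\ExactRankAutom$ on $\Tc^\star_\lambda$ agrees at $\vartheta$ with the $\kappa$-run on $\overline{\Tc}_\skel$. This hinges on two things that must be checked with care — invoking \cref{cl:partialruns} to identify the partial runs $\rho(\oxp)$ with the prescribed leaf states $\kappa(\oxp)$, and the fact that $\Tc^\star_\lambda$ and $\tcskel$ carry the same annotated prefix on $\Tpref \cup \App_T(\Tpref)$, so that all transitions inside $\Tpref$, and hence the final state, are computed identically in the two runs. Everything else is routine bookkeeping.
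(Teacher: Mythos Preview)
Your proposal is correct and follows essentially the same approach as the paper: define $\lambda(\oxp)$ as a minimizer in \cref{eq:costdef}, verify $(r_\lambda,q_\lambda)={\bf c}(\kappa)$ by summation, and reuse the argument of \cref{cl:lambda-to-automaton} via \cref{cl:partialruns} and the agreement of $\Tc^\star_\lambda$ with $\tcskel$ on $\Tpref\cup\App_T(\Tpref)$ to conclude $\rho(\vartheta)=\rho_\kappa(\vartheta)\in F$. You supply more detail than the paper does---notably the well-definedness of the minimizer from ${\bf c}(\kappa)\neq(+\infty,+\infty)$ and the $\Oh[c,\ell]{|\Tpref|}$ running-time analysis---all of which is sound.
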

  \begin{claimproof}
    Construct a~valuation $\lambda \in \Lambda$ as follows.
    For every $\oxp \in A$, choose $\lambda(\oxp)$ to be such a~pair $(\H, q) \in \AutomataReps^{c,s}(\oxp)$ that $\xi_{\H^\star} = \kappa(\oxp)$ and $(\mathsf{ccost}(\H, \oxp), q) = {\bf c}(\oxp, \kappa(\oxp))$.
    Repeating the same argument involving \cref{cl:partialruns} as before, we find that since $\rho_\kappa(\vartheta) \in F$, we have that $(G_\lambda, \Dc_\lambda)$ has rankwidth at most $2k$.
    We also easily verify that ${\bf c}(\kappa) = (r_\lambda, q_\lambda)$.
  \end{claimproof}

  Apply now \cref{lem:automaton-dp} for the automaton $\ExactRankAutom$, the semigroup $(S, +, \leq)$, the decomposition $\Tc_\skel$, the cost function ${\bf c}$, and the set of accepting states $F$.
  The algorithm of \cref{lem:automaton-dp} runs in time $\Oh[c,\ell]{|T_\skel|} = \Oh[c,\ell]{|\Tpref|}$ and returns one of the following:
  \begin{itemize}
    \item \emph{there is no mapping $\kappa \,\colon\, A \to Q$ such that $\rho_\kappa(\vartheta) \in F$ where $\rho_\kappa$ is the $\kappa$-run on $\Tc_\skel$, or the cost of all such mappings is $(+\infty, +\infty)$}.
      Then by \cref{cl:lambda-to-automaton} there exists no $\lambda \in \Lambda$ with the property that the rankwidth of $(G_\lambda, \Dc_\lambda)$ is at most $2k$; hence we can return that $\Tpref$ has no $c$-small $k$-closure.
      
    \item \emph{$\kappa \,\colon\, A \to Q$ is the minimum-cost mapping such that $\rho_\kappa(\vartheta) \in F$ where $\rho_\kappa$ is the $\kappa$-run on $\Tc_\skel$, and the cost of the mapping is finite}.
      Then we reconstruct the mapping $\lambda \in \Lambda$ in time $\Oh[c,\ell]{|\Tpref|}$ such that $(r_\lambda, q_\lambda) = {\bf c}(\kappa)$ using \cref{cl:automaton-to-lambda}.
      By \cref{cl:lambda-to-automaton}, such a~mapping has the minimum value of $r_\lambda$; and among all such optimal mappings, it also has the minimum possible value of $q_\lambda$.
  \end{itemize}    
  
  Finally, using \cref{cl:lambdas-represent-closures}, we conclude that:
  \begin{corollary}
    \label{cor:found-best-mapping-lambda}
    In time $\Oh[c, \ell]{|\Tpref|}$, we can:
    \begin{itemize}
      \item correctly decide that $\Tpref$ has no $c$-small $k$-closure; or
      \item find a~mapping $\lambda \in \Lambda$ such that $\Dc_\lambda$ represents some minimal $c$-small $k$-closure of $\Tpref$.
      Moreover, for every partition $\prt$ of $V(G)$ defined as $\prt = \bigcup_{\oxp \in A} \prt_{\oxp}$, where $\prt_{\oxp}$ is a~partition of $\lparts(\Tc)[\oxp]$ into at most $c$ sets encoded by $\H_\lambda(\oxp)$ and of cost $q_\lambda(\oxp)$, $\prt$ is a~minimal $c$-small $k$-closure of~$\Tpref$.
    \end{itemize}
  \end{corollary}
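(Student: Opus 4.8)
The plan is to assemble the components built up over the course of this proof. The key inputs are already in place: the exact rankwidth automaton $\ExactRankAutom = \ExactRankAutom_{2k,\,cs\ell+\ell}$ of \cref{lem:exact-rank-automaton} (label-oblivious, with evaluation time $\Oh[c,\ell]{1}$ and a state set $Q$ of size $\Oh[c,\ell]{1}$), the closure automaton $\BoundRepr$ maintained by the data structure of \cref{lem:automatonprds} (which delivers $\AutomataReps^{c,s}(\Tc,\oxp)$ and the auxiliary map $\Phi$ at every appendix edge $\oxp \in A$ in $\Oh{1}$ time), and the cleaned-up decomposition $\tcskel$ of the prefix obtained from \cref{lem:annotdecomppart}. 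For each $\oxp \in A = \oApp_T(\Tpref)$ and each pair $(q,\H) \in \AutomataReps^{c,s}(\Tc,\oxp)$ I would first compute, in $\Oh[c,\ell]{1}$ time, the cut-rank cost $\mathsf{ccost}(\H,\oxp)$, the renamed indexed graph $\H^\star$, the small annotated decomposition $\Tc_{\H^\star}$, and the partial state $\xi_{\H^\star} = \rho_{\H^\star}(\oxp)$ of $\ExactRankAutom$. Since $|\AutomataReps^{c,s}(\Tc,\oxp)| = \Oh[c,\ell]{1}$ and $|A| = \Oh{|\Tpref|}$, this precomputation, and hence the assembly of the cost function ${\bf c}$ of \cref{eq:costdef}, takes $\Oh[c,\ell]{|\Tpref|}$ time in total.

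Next I would run the dynamic-programming routine of \cref{lem:automaton-dp} on the automaton $\ExactRankAutom$, the lexicographically ordered semigroup $(S,+,\le)$ with $S = \overline{\Z} \times \overline{\Z}$, the decomposition $\Tc_\skel$ (which has $\Oh{|\Tpref|}$ nodes), the cost function ${\bf c}$, and the accepting set $F \subseteq Q$ of states witnessing a nonempty full set of width $2k$ at a root. Because $|Q| = \Oh[c,\ell]{1}$ and all evaluations take $\Oh[c,\ell]{1}$ time, this runs in $\Oh[c,\ell]{|\Tpref|}$ time and either reports that no leaf-edge state mapping $\kappa$ of finite cost has $\rho_\kappa(\vartheta) \in F$, or returns a minimum-cost such mapping $\kappa$. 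In the first case, \cref{cl:lambda-to-automaton} shows that no $\lambda \in \Lambda$ can have $(G_\lambda,\Dc_\lambda)$ of rankwidth at most $2k$, so by \cref{cl:lambdas-represent-closures} the prefix $\Tpref$ admits no $c$-small $k$-closure, and I would return this verdict. In the second case, I would apply \cref{cl:automaton-to-lambda} to reconstruct, in $\Oh[c,\ell]{|\Tpref|}$ time, a mapping $\lambda \in \Lambda$ with $(G_\lambda,\Dc_\lambda)$ of rankwidth at most $2k$ and $(r_\lambda,q_\lambda) = {\bf c}(\kappa)$; combining the minimality of ${\bf c}(\kappa)$ with \cref{cl:lambda-to-automaton} then gives that this $\lambda$ primarily minimizes $r_\lambda$ and secondarily minimizes $q_\lambda$ over all feasible mappings.

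It remains to transfer optimality from mappings to closures, which is exactly the content of \cref{cl:lambdas-represent-closures}: feasible $\lambda$'s are in correspondence with $c$-small $k$-closures $\prt$ of $\Tpref$ under which $r_\lambda = \sum_{C \in \prt}\cutrk_G(C)$ and $q_\lambda + |\Tpref|$ equals the number of nodes of $T$ cut by $\prt$, so a lexicographically optimal pair $(r_\lambda,q_\lambda)$ corresponds precisely to (a representation of) a minimal $c$-small $k$-closure. Hence $\Dc_\lambda$ represents such a minimal closure, and for any choice, for each $\oxp \in A$, of a partition $\prt_{\oxp}$ of $\lparts(\Tc)[\oxp]$ into at most $c$ sets encoded by $\H_\lambda(\oxp)$ and of cost $q_\lambda(\oxp)$, the partition $\prt = \bigcup_{\oxp \in A}\prt_{\oxp}$ is a minimal $c$-small $k$-closure of $\Tpref$. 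I do not foresee a genuine obstacle — the statement is a repackaging of the earlier claims — but the point demanding care is to confirm that the three-way chain of optimizations is faithful: that the minimization over leaf-edge state mappings carried out by \cref{lem:automaton-dp} is mirrored in both directions by the minimization over $\lambda \in \Lambda$ (\cref{cl:lambda-to-automaton,cl:automaton-to-lambda}), and in turn by the minimization over $c$-small $k$-closures (\cref{cl:lambdas-represent-closures}), and that the constant additive offset $|\Tpref|$ between "cost $q_\lambda$" and "number of cut nodes" is tracked correctly — which is harmless, since it is independent of $\lambda$.
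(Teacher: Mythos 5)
Your proposal is correct and follows essentially the same route as the paper: precompute the states $\xi_{\H^\star}$ and the cost function ${\bf c}$, run the dynamic programming of \cref{lem:automaton-dp} on $\ExactRankAutom$ over $\Tc_\skel$, and then chain \cref{cl:lambda-to-automaton}, \cref{cl:automaton-to-lambda}, and \cref{cl:lambdas-represent-closures} to transfer the lexicographic optimality of $(r_\lambda, q_\lambda)$ to the minimality of the closure. The only difference is that you make explicit the $\Oh[c,\ell]{|\Tpref|}$ precomputation of ${\bf c}$, which the paper leaves implicit.
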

  
  
  \paragraph*{Reconstructing the closure.}
  Having found $\lambda$, we want now to reconstruct any minimal $c$-small $k$-closure $\prt$ of $\Tpref$.
  Recall that, since we cannot afford to compute $\prt$ explicitly (since a~closure is essentially an arbitrary partitioning of $V(G)$), we are required to return the closure in a~compact form -- precisely, the sets $\cut_T(\prt)$ and $\aep_T(\prt)$, that is the prefix of $T$ cut by $\prt$ and the appendix edge partition of $\prt$.
  The procedure should work in time $\Oh[c,\ell]{|\cut_T(\prt)|}$.
  
  Let $\oxp \in A$ and recall that $(q_\lambda(\oxp), \H_\lambda(\oxp)) = \lambda(\oxp) \in \AutomataReps^{c,s}(\Tc, \oxp)$.
  We will now present a~subroutine finding a~partition $\prt_{\oxp}$ of $\lparts(\Tc)[\oxp]$, represented implicitly as $\aep_T(\prt_{\oxp})$, so that $\prt_{\oxp}$ is of cost $q_\lambda(\oxp)$ and is encoded by $\H_\lambda(\oxp)$.
  
  At the start of the subroutine, we initialize a~sequence of initially empty pairwise disjoint subsets $(V_1, \dots, V_c)$ of $\lparts(\Tc)[\oxp]$; eventually, $(V_1, \dots, V_c)$ will form an~indexed partition of $\lparts(\Tc)[\oxp]$.
  The sets $V_1, \dots, V_c$ are represented implicitly by sets $E_1, \dots, E_c$ of oriented edges of $T$ with the property that $E_i = \aes_T(V_i)$ for each $i \in [c]$.
  We now implement a~recursive function $\textsc{Populate}(\vec{ab}, q, \H)$ that, under the assumptions that $\vec{ab}$ is a~predecessor of $\oxp$ in $T$ and $(q, \H) \in \AutomataReps^{c,s}(\Tc, \vec{ab})$, adds to each set $V_1, \dots, V_c$ a~subset $V^{\vec{ab}}_1, \dots, V^{\vec{ab}}_c$, respectively, so that $(V^{\vec{ab}}_1, \dots, V^{\vec{ab}}_c)$ is an~indexed partition of $\lparts(\Tc)[\vec{ab}]$ of cost $q$ encoded by $\H$. (Note that such an~indexed partition must exist by the assumptions.)
  In the implementation, we consider two cases.
  \begin{itemize}
    \item If $q = 0$, then no node of the subtree of $T$ rooted at $\vec{ab}$ may be cut by $(V^{\vec{ab}}_1, \dots, V^{\vec{ab}}_c)$.
    That is, the entire subset $\lparts(\Tc)[\vec{ab}]$ belongs to one of the sets $V^{\vec{ab}}_j$.
    Here, the value $j$ can be found in constant time since it is exactly the unique index $j$ such that $V_j(\H) \neq \emptyset$.
    So we add $\vec{ab}$ to $E_j$ and we are done.
    
    \item If $q \geq 1$, then some nodes of the subtree of $T$ rooted at $\vec{ab}$ are cut by $(V^{\vec{ab}}_1, \dots, V^{\vec{ab}}_c)$; in particular, one of these nodes must be $a$, and moreover, $\vec{ab}$ cannot be a~leaf edge of $T$ and so $\vec{ab}$ has two children $\vec{y_1a}, \vec{y_2a}$.
    In constant time (using the dynamic data structure of \cref{lem:automatonprds} maintaining $\BoundRepr$ on $\Tc$ dynamically), we read the value $\rho(\vec{ab})$, where $\rho$ is the run of $\BoundRepr$ on $\Tc$.
    By \cref{lem:bound-repr-automaton}, the value $\rho(\vec{ab})$ contains a~mapping $\Phi$; let $((q_1, \H_1), (q_2, \H_2)) = \Phi((q, \H))$ such that $(q_1, \H_1) \in \AutomataReps^{c,s}(\Tc, \vec{y_1a})$ and $(q_2, \H_2) \in \AutomataReps^{c,s}(\Tc, \vec{y_2a})$.
    We then run $\textsc{Populate}(\vec{y_1 a}, q_1, \H_1)$ and $\textsc{Populate}(\vec{y_2 a}, q_2, \H_2)$ and add $a$ to $\cut_T(\prt)$.
    
    The two recursive calls add to the sets $V_1, \dots, V_c$ the subsets $V^{\vec{y_1a}}_1, \dots, V^{\vec{y_1a}}_c$ and $V^{\vec{y_2a}}_1, \dots, V^{\vec{y_2a}}_c$, respectively, with the property that for each $t \in [2]$, the sequence $(V^{\vec{y_ta}}_1, \dots, V^{\vec{y_ta}}_c)$ is an~indexed partition of $\lparts(\Tc)[\vec{y_t a}]$ of cost $q_t$ encoded by $\H_t$.
    So again by \cref{lem:bound-repr-automaton}, the sequence $V^{\vec{ab}}_1, \dots, V^{\vec{ab}}_c$ given by $V^{\vec{ab}}_j = V^{\vec{y_1a}}_j \cup V^{\vec{y_2a}}_j$ is an~indexed partition of $\lparts(\Tc)[\vec{ab}]$ of cost $q$ encoded by $\H$.
    Since the recursive calls already added each set $V^{\vec{ab}}_j$ to $V_j$, we are done.
  \end{itemize}
  Thus running $\textsc{Populate}(\oxp, q_\lambda(\oxp), \H_\lambda(\oxp))$ will create an~indexed partition $(V_1, \dots, V_c)$ of $\lparts(\Tc)[\oxp]$ of cost $q_\lambda(\oxp)$ encoded by $\H_\lambda(\oxp)$; the partition is stored implicitly as sets $E_1, \dots, E_c$.
  So letting $\prt_{\oxp} = \{V_1, \dots, V_c\} \setminus \{\emptyset\}$, the nonempty sets in $E_1, \dots, E_c$ form $\aep_T(\prt_{\oxp})$.
  Tracing the execution of $\textsc{Populate}$, it is easy to verify that this set $\aep_T(\prt_{\oxp})$ can be computed in time $\Oh[c,\ell]{|\cut_{\oxp}(\prt_{\oxp})| + 1}$, where $\cut_{\oxp}(\prt_{\oxp})$ is the set of nodes of $T$ that are children of $\oxp$ that are cut by $\prt_{\oxp}$.
  
  Now let $\prt \coloneqq \bigcup_{\oxp \in A} \prt_{\oxp}$, so that $\prt$ is encoded by $\aep_T(\prt) \coloneqq \bigcup_{\oxp \in A} \aep_T(\prt_{\oxp})$.
  Then by \cref{cor:found-best-mapping-lambda}, $\prt$ is indeed a~minimal $c$-small $k$-closure of $\Tpref$.
  The set of nodes cut by $\prt$ is exactly $\cut_T(\prt) = \Tpref \cup \bigcup_{\oxp \in A} \cut_{\oxp}(\prt_{\oxp})$.
  The set $\aep_T(\prt)$ can be found by invoking the function $\textsc{Populate}(\oxp, q_\lambda(\oxp), \H_\lambda(\oxp))$ for each $\oxp \in A$ separately and gathering the nonempty sets of edges after each call.
  The time complexity of all recursive calls is bounded by
  \[ \Oh[c,\ell]{\sum_{\oxp \in A} |\cut_{\oxp}(\prt_{\oxp})| + 1} \leq
    \Oh[c,\ell]{|\Tpref| + \sum_{\oxp \in A} |\cut_{\oxp}(\prt_{\oxp})|} =
    \Oh[c,\ell]{|\cut_T(\prt)|},
  \]
  since $|A| = |\Tpref| + 1$.
  This finishes the description of the effective reconstruction of $\cut_T(\prt)$ and $\aep_T(\prt)$.
  
  \paragraph*{Obtaining the decomposition of the closure.}
  The final object we are required to return is a~rank decomposition $(T^\star, \lambda^\star)$ of $(G[\prt], \prt)$ of width at most $2k$.
  Remembering that the partition $\prt$ reconstructed a~moment ago is represented by $\Dc_\lambda$, we observe that the task at hand can be accomplished~by:
  \begin{itemize}
    \item computing a~rank decomposition $(T^\square, \lambda^\square)$ of $(G_\lambda, \Dc_\lambda)$ of width at most $2k$, and
    \item producing a~rank decomposition $(T^\star, \lambda^\star)$ of $(G[\prt], \prt)$ by setting $T^\star \coloneqq T^\square$ and setting $\lambda^\star(C) \coloneqq \lambda^\square(R_C)$ for every $C \in \prt$, where $R_C \in \prt_\lambda$ is a~representative of $C$ in $G$.
  \end{itemize}
  The former step is done by constructing the annotated decomposition $\Tc_\lambda$ of $(G_\lambda, \Dc_\lambda)$ (of width at most $cs\ell + \ell = \Oh[c,\ell]{1}$) explicitly in time $\Oh[c,\ell]{|\Tpref|}$.
  Since the rankwidth of $(G_\lambda, \Dc_\lambda)$ -- equal to the rankwidth of $(G[\prt], \prt)$ -- is at most $2k$, we apply \cref{lem:linear-decomposition-recovery} in time $\Oh[c,\ell]{|\Tpref|}$ and we are done.
  The latter step can then be performed in time $\Oh{|\Tpref|}$ as long as $\prt$ is represented by $\aep_T(\prt)$; or in other words, $\lambda^\star$ is represented as a~function $\lambda\colon \aep_T(\prt) \to \leafe(T^\star)$.
  This concludes the proof of \cref{lem:closureprds}.
\end{proof}

\section{Conclusions}
\label{sec:concl}
We gave a data structure for maintaining bounded-width rank decompositions of dynamic graphs of bounded rankwidth in subpolynomial time per update.
We also used this data structure to give an almost-linear time parameterized algorithm for computing an optimum-width rank decomposition of a given graph.
Along the way, we proved several auxiliary structural and algorithmic results for rankwidth.
An important conceptual contribution of our work appears to be the definition of annotated rank decompositions, together with the efficient algorithms for manipulating them and for translating dynamic programming from other representations of rank decompositions to annotated rank decompositions.
We then discuss future research directions and make some additional remarks about our results.

The obvious interesting open problem is to improve the dynamic algorithm of \Cref{the:dynsimple} to work in $\Oh[k]{\log^{\Oh{1}} n}$ time per update, instead of the current $2^{\Oh[k]{\sqrt{\log n \log \log n}}}$ time.
This would also improve the algorithm of \Cref{the:altrw} to $\Oh[k]{n \log^{\Oh{1}} n} + \Oh{m}$ time.
The same problem is open for dynamic treewidth, so the natural path to solve it would be to first improve the dynamic treewidth algorithm of~\cite{dyntw}, and then generalize the result to rankwidth.
However, we note that the tools developed in \Cref{sec:refi} appear to give a cleaner and more elegant framework for dynamic rankwidth than the framework for dynamic treewidth of~\cite{dyntw} is, so it could make sense to approach dynamic treewidth via dynamic rankwidth, or perhaps via dynamic branchwidth.

In \Cref{the:dynfull} we gave a framework for applying edge updates defined by $\CMSO_1$ sentences.
In this framework, the time required to apply the update is at least linear in the number of vertices incident to the edges updated.
It would be interesting to explore whether this limitation could be lifted for some types of edge updates.
In particular, would there exist a framework for updating many edges at once, where the update time could be sublinear in the number of vertices incident to the edges updated?

Rankwidth of graphs is related to branchwidth of matroids, so it would be interesting to explore whether our techniques could be extended into that setting.
We note that by the connection proved by Oum~\cite{Oum05}, all rankwidth algorithms directly apply to branchwidth of binary matroids when the binary matroid is represented by its fundamental graph, so \Cref{the:altrw} gives an improvement in this setting.
However, our techniques do not seem to directly apply to the more interesting setting of linear matroids represented by matrices.

In \Cref{the:dynfull} we support operations that take some partial vertex-labeling as an input.
We note that \Cref{the:dynsimple,the:dynfull} can be easily extended to the setting where instead of a graph, we maintain a vertex-labeled graph with a bounded number of labels that can be accessed by the $\LinCMSO_1$ formulas.
This extension can be done simply by gadgeteering: We can add some number of degree-1 neighbors to each vertex to encode the label of that vertex.
These gadgeteering techniques also appear applicable for extending our results to the setting of rankwidth/cliquewidth of more general binary relational structures, with an approximation factor depending on the exact definition of rankwidth in that setting.

Lastly, we remark that our dynamic algorithm works in space $\Oh[k]{n}$, and the algorithm of \Cref{the:altrw} in space $\Oh[k]{n}+\Oh{m}$.
In particular, the dynamic algorithm could be interesting from the viewpoint of models of computation with limited space, as its space complexity can be sublinear in the total size $n+m$ of the graph.

\paragraph*{Acknowledgements.}
We thank Micha\l{} Pilipczuk for helpful discussions on this project.

\bibliographystyle{alpha}
\bibliography{references}

\appendix

\section{Logarithmic height rank decompositions}
\label{sec:appendix1}

We show that rank decompositions can be turned into logarithmic height, which is based on~\cite{DBLP:conf/wg/CourcelleK07}.

\lemlogdepthdecomp*
\begin{proof}
We assume that the components $C \in \prt$ in the representation of $\lmap$ are represented as pointers so that the representation of $\lmap$ is of size $\Oh{|V(T)|}$.
Let us also assume without loss of generality that $(T,\lmap)$ is unrooted.

We will construct a binary tree $T^*$ of height $\Oh{\log |V(T)|}$ so that
\begin{enumerate}
\item every node $t \in V(T^*)$ is labeled with a subtree $\delta(t)$ of $T$ that contains at least one leaf of $T$,
\item\label{lem:logdepthdecomp:pr2} for each $t \in V(T^*)$ there are at most two edges of $T$ that have one endpoint in $V(\delta(t))$ and another endpoint in $V(T) \setminus V(\delta(t))$, and
\item if $\delta(t)$ contains at least two leaves of $T$, then $t$ has two children $c_1$ and $c_2$ so that $\leafs(T) \cap \leafs(\delta(t))$ is the disjoint union of $\leafs(T) \cap \leafs(\delta(c_1))$ and $\leafs(T) \cap \leafs(\delta(c_2))$.
\end{enumerate}

Before giving the algorithm to construct $T^*$, let us observe that $T^*$ can be transformed into a rooted rank decomposition $(T^*,\lmap^*)$ of $(G,\prt)$ of height $\Oh{\log |V(T)|}$ and width at most $2k$:
Note that for each leaf $l \in \leafs(T^*)$, the subtree $\delta(l)$ contains exactly one leaf of $T$, and these leaves of $T$ are distinct for distinct leaves of $T^*$.
Therefore, there is a natural bijection between $\leafs(T)$ and $\leafs(T^*)$, so we construct $\lmap^*$ simply by following this bijection.
This construction can be implemented in $\Oh{|V(T)|}$ time.
Then, \Cref{lem:logdepthdecomp:pr2} implies that for each $t \in V(T^*)$ (except the root), it holds that $\lparts(T^*)[t] = \lparts(T)[\vec{xy}] \cap \lparts(T)[\vec{zw}]$ for some $\vec{xy},\vec{zw} \in \oE(T)$.
Because of submodularity of $\cutrk_G$, this implies that $\cutrk_G(\lparts(T^*)[t]) \le \cutrk_G(\lparts(T)[\vec{xy}]) + \cutrk_G(\lparts(T)[\vec{zw}]) \le 2k$, which implies that $(T^*,\lmap^*)$ has width at most $2k$.

Then we describe an algorithm to construct such $T^*$ in time $\Oh{|V(T)| \log |V(T)|}$.
The algorithm constructs $T^*$ recursively top-down, in particular, each recursive step takes a subtree $\delta(t)$ of $T$ as an input and if it contains at least two leaves of $T$, constructs the subtrees $\delta(c_1)$ and $\delta(c_2)$ of $T$ for the two children $c_1$ and $c_2$ of $t$, and recurses to $c_1$ and $c_2$.
Alternatively, we can also construct subtrees $\delta(c_1),\delta(c_2),\delta(c_3),\delta(c_4)$ of $T$, where $c_1$ and $c_2$ will be the children of $t$, and $c_3$ and $c_4$ the children of $c_1$, and then recurse to $c_2, c_3$, and $c_4$.

Denote $X = \leafs(T) \cap \leafs(\delta(t))$.
If there is at most one edge of $T$ that has an endpoint in both $V(\delta(t))$ and $V(T) \setminus V(\delta(t))$, we pick an edge $xy \in E(\delta(t))$ so that $|X \cap \leafs(\delta(t))[\oxy]|\le \frac{2}{3}|X|$ and $|X \cap \leafs(\delta(t))[\oyx]|\le \frac{2}{3}|X|$, and let $\delta(c_1)$ and $\delta(c_2)$ be the two connected components of $\delta(t)-xy$.
Such $xy$ can be shown to exist by a simple walking argument on $\delta(t)$.

Then suppose there are two edges of $T$ that have an endpoint in both $V(\delta(t))$ and $V(T) \setminus V(\delta(t))$.
If both of them are incident to the same node $x$ of $\delta(t)$, we can set $\delta(t) \coloneqq \delta(t) - \{x\}$ and apply the case of one edge.
Therefore suppose one of them is incident to a node $x$ of $\delta(t)$ and other to a node $y \neq x$ of $\delta(t)$.
Note that both $x$ and $y$ have degree $2$ in $\delta(t)$.
Let $x = z_1,z_2,\ldots,z_{\ell} = y$ be the unique path between $x$ and $y$ in $\delta(t)$.
Now, each node $z_i$ on this path is incident to exactly one oriented edge $\vec{w_i z_i} \in \oE(\delta(t))$ so that $w_i$ is not on the path, and moreover, the sets $\leafs(\delta(t))[\vec{w_i z_i}]$ form a partition of $X$.
Let us pick the smallest $r$ so that $\sum_{i=1}^r |\leafs(\delta(t))[\vec{w_i z_i}]| \ge \frac{|X|}{3}$.
First, if $z_r \in \{x,y\}$, we let $\delta(c_1)$ be the connected component of $\delta(t) - \{z_r\}$ that contains all vertices on the path except $z_r$, and $\delta(c_2)$ the connected component that is disjoint with the path.
It can be observed that both of them satisfy \Cref{lem:logdepthdecomp:pr2}.
Moreover, we observe that $\delta(c_1)$ contains at most $\frac{2}{3}|X|$ leaves in $X$, and there is at most one edge of $T$ that has endpoints in both $V(\delta(c_2))$ and $V(T)-V(\delta(c_2))$, namely the edge $w_r z_r$.

It remains to consider the case $z_r \notin \{x,y\}$.
We first let $\delta(c_1)$ and $\delta(c_2)$ be the two connected components of $\delta(t) - z_{r} z_{r+1}$, with $x \in V(\delta(c_1))$ and $y \in V(\delta(c_2))$.
Then, we let $\delta(c_3)$ and $\delta(c_4)$ be the two connected components of $\delta(c_1) - \{z_r\}$, with $x \in V(\delta(c_3))$.
We observe that each of the constructed subtrees satisfy \Cref{lem:logdepthdecomp:pr2}.
Moreover, each of $\delta(c_2)$ and $\delta(c_3)$ contain at most $\frac{2}{3}|X|$ leaves in $X$, and there is at most one edge of $T$ that has endpoints in both $V(\delta(c_4))$ and $V(T)-V(\delta(c_4))$, namely the edge $w_r z_r$.

Clearly, each recursive call of this algorithm can be implemented in $\Oh{|\delta(t)|}$ time.
To obtain both the total time complexity $\Oh{|V(T)| \log |V(T)|}$ and the $\Oh{\log |V(T)|}$ height of $T^*$, it remains to bound the height of this recursion tree.
We recall that if there is at most one edge that has endpoints in $V(\delta(t))$ and $V(T)-V(\delta(t))$, then the size of $X$ shrinks by at least a factor $\frac{1}{3}$ when going to the children.
Also, in the other two cases, the only case when we recurse to a child where the size of $X$ does not shrink by a factor of $\frac{1}{3}$ is when there is only one edge of $T$ with endpoints in both the subtree of this child and outside of it.
We conclude that on any path of length $4$ that goes from a node in $T^*$ towards some leaf of $T^*$, the size of $X$ must shrink by a factor of at least $\frac{1}{3}$, implying that the height of $T^*$ is $\Oh{\log |V(T)|}$.
\end{proof}



\section{Cliquewidth}
\label{sec:cliquewidth}
In this appendix we give the definition of cliquewidth, show that annotated rank decompositions can be translated into cliquewidth expressions, show that automata working on cliquewidth expressions can be translated into rank decomposition automata, and use this to translate known dynamic programming algorithms on cliquewidth to rank decomposition automata.

\subsection{Definition and \texorpdfstring{$k$}{k}-expressions}
\label{subsec:cliquewidthdef}
Cliquewidth was introduced by Courcelle, Engelfriet, and Rozenberg~\cite{CourcelleER93} and defined in its modern form by Courcelle in~\cite{DBLP:journals/apal/Courcelle95}.
Next we define cliquewidth similarly to~\cite{CourcelleMR00}.
Let $k \in \N$.
A tuple $\pg = (G,V_1,\ldots,V_k)$ is a \emph{$k$-graph} if $G$ is a graph and $V_1,\ldots,V_k$ are disjoint subsets of $V(G)$ whose union equals $V(G)$ (they are not a partition because they are indexed by $[k]$ and allowed to be empty).
We define three types of operations for constructing $k$-graphs.
First, the disjoint union of two $k$-graphs $\pg^1 = (G^1,V^1_1,\ldots,V^1_k)$ and $\pg^2 = (G^2,V^2_1,\ldots,V^2_k)$ where $G^1$ and $G^2$ are disjoint is defined as
\[\pg_1 \oplus \pg_2 = (G^1 \cup G^2, V^1_1 \cup V^2_1, \ldots, V^1_k \cup V^2_k),\text{ where $G^1 \cup G^2$ is the union of $G^1$ and $G^2$.}\]
Then, $\eta(i,j)(\pg)$ for $i,j \in [k]$ with $i \neq j$ denotes the $k$-graph obtained from $\pg = (G,V_1,\ldots,V_k)$ by adding all possible edges between $V_i$ and $V_j$, i.e.,
\[\eta(i,j)(\pg) = (G',V_1,\ldots,V_k), \text{ where } V(G') = V(G) \text{ and } E(G') = E(G) \cup \{uv \mid u \in V_i \wedge v \in V_j\}.\]
Then, $\pi(i,j)(\pg)$ for $i,j \in [k]$ with $i \neq j$ denotes the $k$-graph obtained from $\pg$ by renaming $i$ into $j$, i.e.,
\[\pi(i,j)(\pg) = (G, V'_1, \ldots, V'_k), \text{ where $V'_i = \emptyset$, $V'_j = V_i \cup V_j$, and $V'_l = V_l$ for $l \in [k] \setminus \{i,j\}$.}\]
A graph has cliquewidth at most $k$ if it can be constructed from single-vertex $k$-graphs by using these operations.

More formally, we let $\opkg_k = \{\oplus\} \cup \bigcup_{i,j\in [k] \mid i \neq j} \{\eta(i,j), \pi(i,j)\}$ denote the set of operations on $k$-graphs.
We define that \emph{$k$-expression} is a triple $\Expr = (T,U,\mu)$, where $T$ is a rooted tree whose every node has at most two children and $\mu \colon V(T) \rightarrow U \cup \opkg_k$ is a labeling of its nodes so that
\begin{itemize}
\item the restriction $\funrestriction{\mu}{L(T)}$ of $\mu$ to the leaves of $T$ is a bijection $\funrestriction{\mu}{L(T)} \colon L(T) \to U$,
\item every node $t$ with one child is labeled with $\mu(t) \in \opkg_k \setminus (U \cup \{\oplus\})$ for some $i,j \in [k]$ with $i \neq j$, and
\item every node $t$ with two children is labeled with $\mu(t) = \oplus$.
\end{itemize}

We recursively define that a node $t \in V(T)$ \emph{encodes} a $k$-graph $\zeta(t) = (G,V_1,\ldots,V_k)$ if
\begin{itemize}
\item $t$ is a leaf, $G$ is the graph with a single vertex $\mu(t)$, and $V_1 = V(G)$,
\item $t$ has one child $c$ and $\zeta(t) = \mu(t)(\zeta(c))$, or
\item $t$ has two children $c_1,c_2$ and $\zeta(t) = \zeta(c_1) \oplus \zeta(c_2)$.
\end{itemize}

We say that $\Expr$ encodes a graph $G$ if its root encodes a $k$-graph $(G,V_1,\ldots,V_k)$ for some $V_1,\ldots,V_k$.
We note that if $\Expr$ encodes $G$, then $V(G) = U$.
Now the more formal definition of cliquewidth is that the cliquewidth of $G$ is the smallest $k$ so that there exists a $k$-expression that encodes $G$.

Then we prove that an annotated rank decompositions of width $k$ that encodes a graph $G$ can be turned in $\Oh[k]{n}$ time to a $(2^{k+1}-1)$-expression that encodes $G$.
Our proof follows the original construction of Oum and Seymour~\cite{OumS06}, but optimizes it to linear time in the case of annotated rank decompositions.
The definitions and auxiliary lemmas used for proving this will also be used in the next subsection for translating automata working on $k$-expressions to automata working on annotated rank decompositions.
We will use some definitions that are introduced in \Cref{sec:rdautom}.

Let $\Tc = (T,V(G),\reps,\repse,\dmap)$ be an annotated rank decomposition that encodes a graph $G$ and has width $\ell$.
We start with an observation that allows to optimize the $k$ of the expression by one.
\begin{observation}
Let $\oxy \in \oE(T)$.
There are at most $2^{\ell}-1$ vertices $v \in \reps(\oxy)$ so that $N_{\repse(xy)}(v)$ is non-empty.
\end{observation}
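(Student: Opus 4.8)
The plan is to unpack what the annotation $\repse(xy) = G[\reps(\oxy), \reps(\oyx)]$ means together with the minimality of the representative $\reps(\oyx)$, and then apply \Cref{lem:repsbound} on the opposite side of the edge. First I would recall that by \Cref{prop:ard:minrep}, $\reps(\oyx)$ is a \emph{minimal} representative of $\lparts(\Tc)[\oyx]$ in $G$, and $\repse(xy)$ is the bipartite graph encoding exactly the adjacencies between $\reps(\oxy)$ and $\reps(\oyx)$. In particular, for $v \in \reps(\oxy)$, the set $N_{\repse(xy)}(v)$ equals $N_G(v) \cap \reps(\oyx)$, so the claim is about how many vertices of $\reps(\oxy)$ have a nonempty neighbourhood into $\reps(\oyx)$.

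The key step is the observation that $\repse(xy)$ cannot have twins over its bipartition — this is part of the definition of an annotated rank decomposition. If two vertices $v, v' \in \reps(\oxy)$ both satisfied $N_{\repse(xy)}(v) = N_{\repse(xy)}(v') = \emptyset$, they would be twins over the bipartition $(\reps(\oxy), \reps(\oyx))$, contradicting that $\repse(xy)$ has no such twins (unless $v = v'$). Hence at most one vertex of $\reps(\oxy)$ has empty neighbourhood into $\reps(\oyx)$. So the count of vertices of $\reps(\oxy)$ with nonempty $\repse(xy)$-neighbourhood is at least $|\reps(\oxy)| - 1$, and I would instead want to bound this directly: by \Cref{lem:repsbound}, since $\reps(\oxy)$ is a minimal representative of $\lparts(\Tc)[\oxy]$, we have $|\reps(\oxy)| \le 2^{\cutrk_G(\lparts(\Tc)[\oxy])} \le 2^{\ell}$, where the last inequality uses that $\Tc$ has width $\ell$. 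Combining, the number of vertices $v \in \reps(\oxy)$ with $N_{\repse(xy)}(v) \neq \emptyset$ is at most $|\reps(\oxy)| \le 2^\ell$; to squeeze out the extra $-1$, I subtract off the (at most one) vertex with empty neighbourhood, giving at most $2^\ell - 1$.

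The main obstacle — though it is minor — is making sure the edge case $|\reps(\oxy)| = 2^\ell$ really forces one of the representatives to have empty neighbourhood, rather than the bound being achieved with all neighbourhoods nonempty. This is exactly where the no-twins condition is essential: there is a unique "row" equal to the zero vector among rows of the adjacency matrix of $\repse(xy)$ indexed by $\reps(\oxy)$, because distinct rows equal to zero would violate no-twins. So either $|\reps(\oxy)| \le 2^\ell - 1$ and we are done trivially, or $|\reps(\oxy)| = 2^\ell$ and then exactly one of those $2^\ell$ rows is zero, leaving $2^\ell - 1$ nonzero rows, i.e., $2^\ell - 1$ vertices with nonempty neighbourhood. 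In both cases the count is at most $2^\ell - 1$, as required. The whole argument is a two-line consequence of \Cref{lem:repsbound} plus the no-twins clause in the definition of annotated rank decompositions.
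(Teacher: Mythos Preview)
Your approach is close to the paper's, but there is a genuine gap in the case $|\reps(\oxy)| = 2^\ell$. You assert that ``exactly one of those $2^\ell$ rows is zero'', yet the only justification you give --- that two zero rows would be twins --- establishes \emph{at most} one zero row, not \emph{at least} one. Nothing you have written rules out the possibility that all $2^\ell$ rows are nonzero: the bound $|\reps(\oxy)| \le 2^\ell$ from \Cref{lem:repsbound} together with the no-twins clause does not by itself force a zero row to appear.

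The missing step is exactly what the paper uses as its whole argument: the rows of the adjacency matrix of $\repse(xy)$ lie in its row space, which has dimension $\cutrk_{\repse(xy)}(\reps(\oxy)) \le \ell$ and therefore at most $2^\ell$ elements over $\GF(2)$, of which only $2^\ell - 1$ are nonzero. Combined with the no-twins condition (rows are pairwise distinct), this gives at most $2^\ell - 1$ vertices with a nonzero row --- and this handles all cases at once, without your case split. Your route through \Cref{lem:repsbound} is a detour via a \emph{consequence} of this row-space bound rather than the bound itself; once you invoke the row space directly, the argument is two lines.
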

\begin{proof}
Let $M$ be the $|\reps(\oxy)| \times |\reps(\oyx)|$ matrix describing adjacencies of $\repse(xy)$.
We have that the rank of $M$ is at most $\ell$, so it has a row-basis of size $\ell$.
All other rows can be written as linear combinations of this row-basis with coefficients $0$ and $1$, so there are at most $2^{\ell}-1$ different non-zero rows.
\end{proof}

Then let $k = 2 \cdot 2^{\ell}-1$.
We define the $k$-graph associated with an oriented edge $\oxy \in \oE(T)$ to be the $k$-graph
\[\pg(\oxy) = (G(\oxy), V_1(\oxy), \ldots, V_{k}(\oxy)),\]
so that $G(\oxy) = G[\lparts(\Tc)[\oxy]]$ and where the sets $V_1(\oxy), \ldots, V_{k}(\oxy)$ are defined as follows.
Let $\xi_{\oxy} \colon \reps(\oxy) \rightarrow [2^{\ell}]$ be the injective function that maps each $v \in \reps(\oxy)$ to $\xi_{\oxy}(v) \in [2^{\ell}]$ so that
\begin{itemize}
\item if $N_{\repse(xy)}(v) = \emptyset$ then $\xi_{\oxy}(v) = 2^{\ell}$, and
\item otherwise $\xi_{\oxy}(v)$ is the number $i \in [2^{\ell}-1]$ so that there are exactly $i-1$ vertices $u \in \reps(\oxy)$ with $u<v$ and $N_{\repse(xy)}(u) \neq \emptyset$.
\end{itemize}
Let $v \in V(G(\oxy))$.
There exists unique $r_v \in \reps(\oxy)$ so that $N_{G}(r_v) \cap \lparts(\Tc)[\oyx] = N_G(v) \cap \lparts(\Tc)[\oyx]$.
We assign $v$ to the set $V_{\xi_{\oxy}(r_v)}$.
This concludes the definition of $\pg(\oxy)$.
We observe that $\xi_{\oxy}$ can be computed from $\reps(\oxy)$ and $\repse(xy)$ in time $\Oh[\ell]{1}$.

Then we show that these graphs can be inductively constructed on the rank decomposition by operations in $\opkg_k$.

\begin{lemma}
\label{lem:kgraph_ind_trans}
Let $\oxy \in \oE(T)$ be a non-leaf oriented edge and $\vec{c_1 x},\vec{c_2 x}$ be the children of $\oxy$.
The $k$-graph $\pg(\oxy)$ can be produced by a sequence of $\Oh{k^2}$ operations in $\opkg_k$ from the $k$-graphs $\pg(\vec{c_1 x})$ and $\pg(\vec{c_2 x})$.
Moreover, this sequence of operations depends only on the transition signature $\tau(\Tc, \oxy)$ and can be computed given it in $\Oh[\ell]{1}$ time.
\end{lemma}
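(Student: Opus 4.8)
The plan is to mimic the classical Oum--Seymour construction of a cliquewidth expression from a rank decomposition, but to carry it out at the level of a single node so that only the local annotations $\tau(\Tc,\oxy)$ are consulted. First I would recall the standard idea: starting from $\pg(\vec{c_1 x})$ and $\pg(\vec{c_2 x})$, we take their disjoint union $\pg(\vec{c_1 x}) \oplus \pg(\vec{c_2 x})$; this gives a $k$-graph on $\lparts(\Tc)[\oxy] = \lparts(\Tc)[\vec{c_1 x}] \cup \lparts(\Tc)[\vec{c_2 x}]$ whose induced edges are correct \emph{inside} each of the two sides but is missing exactly the edges of $G$ running between $\lparts(\Tc)[\vec{c_1 x}]$ and $\lparts(\Tc)[\vec{c_2 x}]$. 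Those missing edges are determined by the bipartite graph $\repse(c_1 c_2)$ between $\reps(\vec{c_1 x})$ and $\reps(\vec{c_2 x})$: two vertices $u \in V_i(\vec{c_1 x})$ and $v \in V_j(\vec{c_2 x})$ are adjacent in $G$ iff their representatives $r_u = \xi_{\vec{c_1 x}}^{-1}(i)$, $r_v = \xi_{\vec{c_2 x}}^{-1}(j)$ (with labels in $[2^\ell]$, shifted by $2^\ell$ for the second side to land in $[k]$) satisfy $r_u r_v \in E(\repse(c_1 c_2))$. But $\repse(c_1 c_2)$ is not directly stored in $\tau(\Tc,\oxy)$; rather $\tau(\Tc,\oxy)$ stores $\repse(c_1 x)$, $\repse(c_2 x)$, $\repse(xy)$, and the maps $\dmap(c_1 x c_2)$, $\dmap(c_2 x c_1)$, etc. So the first technical step is to reconstruct the adjacency pattern $G[\reps(\vec{c_1 x}), \reps(\vec{c_2 x})]$ from $\tau(\Tc,\oxy)$; this is exactly the remark made in \Cref{sec:annot} that $\dmap(xyz)$ together with $\repse(yz)$ determines $G[\reps(\oxy),\reps(\ozy)]$, applied here with $y = x$. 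Concretely, $u \in \reps(\vec{c_1 x})$ and $v \in \reps(\vec{c_2 x})$ are adjacent iff $\dmap(c_1 x c_2)(u)$ is adjacent to $v$ in $\repse(c_2 x)$.

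The second step is to realize the addition of these cross-edges by $\opkg_k$-operations. For each pair $(i,j)$ with $i \in [2^\ell]$ (a label class coming from the first side) and $j \in \{2^\ell+1, \ldots, k+1\}$ reindexed suitably so the second side uses labels $2^\ell, \ldots, 2 \cdot 2^\ell - 1$ (there is a small bookkeeping issue since $k = 2 \cdot 2^\ell - 1$ and we only need $2^\ell$ labels per side, so I will use labels $1,\ldots,2^\ell$ for the first side and $2^\ell, \ldots, 2\cdot 2^\ell - 1$ for the second, with the ``no external neighbours'' classes on the two sides either merged or kept in class $2^\ell$), apply $\eta(i,j)$ precisely when $\xi_{\vec{c_1 x}}^{-1}(i)$ and $\xi_{\vec{c_2 x}}^{-1}(j - 2^\ell + 1)$ are adjacent as determined above. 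Crucially, each original vertex of side one lies in exactly one class $V_i$ and each of side two in exactly one class, and these classes have not been disturbed by the disjoint union, so a single $\eta(i,j)$ adds exactly the bipartite-complete set of edges between $V_i \cap \lparts(\Tc)[\vec{c_1 x}]$ and $V_j \cap \lparts(\Tc)[\vec{c_2 x}]$ and nothing else; this is where the requirement $\repse$ has ``no twins over the bipartition'' (so the classes are genuinely distinct) is convenient, though even without it the argument goes through since equal rows just get the same label. There are at most $(2^\ell)^2 = \Oh{k^2}$ such $\eta$-operations, and they commute since they each only add edges and never change class membership.

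The third step is the relabeling: after adding all cross-edges, the $k$-graph has the correct underlying graph $G[\lparts(\Tc)[\oxy]]$, but its label classes are indexed by $\xi_{\vec{c_1 x}}$ and $\xi_{\vec{c_2 x}}$, whereas $\pg(\oxy)$ demands the classes be indexed by $\xi_{\oxy}$ (defined via $\reps(\oxy)$ and $\repse(xy)$). I would compute, for each $v \in \reps(\vec{c_1 x})$, the representative $\dmap(c_1 x y)(v) \in \reps(\oxy)$, which tells us which class of $\pg(\oxy)$ the vertices currently in class $\xi_{\vec{c_1 x}}(v)$ must move to; similarly with $\dmap(c_2 x y)$ for the second side. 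This induces a function from the current $\le 2 \cdot 2^\ell$ labels to the target $\le 2^\ell$ labels, which can be effected by $\Oh{k}$ applications of $\pi(i,j)$ (process target labels one at a time, merging into them all source labels that map there, being careful about ordering so that no label is merged into one that will itself later be renamed — a standard topological-sort-free trick is to first $\pi$ every source label into a fresh temporary label block, then $\pi$ the temporaries down, which needs $k \le 2\cdot 2^\ell - 1$ labels total, matching our budget). Since $\dmap(c_t x y)$, $\repse(xy)$, $\repse(c_t x)$ and $\reps$ on all four incident edges are all part of $\tau(\Tc,\oxy)$, and $|\reps(\cdot)| \le 2^\ell$, every one of these steps — reconstructing the cross-adjacency, listing the $\eta$-operations, computing $\xi_{\vec{c_1 x}}, \xi_{\vec{c_2 x}}, \xi_{\oxy}$, and listing the $\pi$-operations — runs in $\Oh[\ell]{1}$ time and depends only on $\tau(\Tc,\oxy)$, giving a sequence of $\Oh{k^2}$ operations as claimed. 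The main obstacle I anticipate is purely the bookkeeping around label budgets and the ordering of $\pi$-operations so that renaming is faithful; the combinatorial content (that disjoint union plus bipartite-complete additions plus renaming reproduce $\pg(\oxy)$) is routine once the adjacency between $\reps(\vec{c_1 x})$ and $\reps(\vec{c_2 x})$ is extracted from $\tau(\Tc,\oxy)$ via the $\dmap$ maps.
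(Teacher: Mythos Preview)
Your proposal is correct and follows essentially the same approach as the paper: shift the labels of one side into the upper half of $[k]$, take the disjoint union, add the cross-edges by $\eta$ operations determined from $\dmap(c_1 x c_2)$ and $\repse(x c_2)$, and finally relabel via $\dmap(c_t x y)$ and a sequence of $\pi$ operations. One point to tighten in your write-up: the label shift on $\pg(\vec{c_2 x})$ must happen \emph{before} the disjoint union (the paper applies $\pi(j,j+2^\ell)$ for $j\in[2^\ell-1]$ to $\pg(\vec{c_2 x})$ first and only then takes $\oplus$), since after $\oplus$ the two sides' label classes have been merged and cannot be separated again by any operation in $\opkg_k$; your exposition currently reads as though $\oplus$ comes first.
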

\begin{proof}
We give the construction of $\pg(\oxy)$ from $\pg(\ocax)$ and $\pg(\ocbx)$.
Because $|\reps(\vec{c_i x})| \le 2^{\ell}$, the sets $V_{2^{\ell}+1}(\vec{c_i x}), \ldots, V_{2 \cdot 2^{\ell}-1}(\vec{c_i x})$ are empty for both $c_i \in \{c_1, c_2\}$.
We start by applying the operations $\pi(j, j+2^{\ell})$ for all $j \in [2^\ell-1]$ to the $k$-graph $\pg(\vec{c_2 x})$.
Let $\pg'(\vec{c_2 x})$ be the resulting $k$-graph.
Then, let $\pg''(\oxy) = \pg(\vec{c_1 x}) \oplus \pg'(\vec{c_2 x})$.
For each $u \in \reps(\vec{c_1 x})$ and $v \in \reps(\vec{c_2 x})$, we know whether $uv \in E(G)$ by inspecting $\dmap(c_1 x c_2)$ and $\repse(x c_2)$, and we know that $uv \notin E(G)$ if $\xi_{\ocax}(u) = 2^{\ell}$ or $\xi_{\ocbx}(v) = 2^{\ell}$.
If $uv \in E(G)$, we apply the operation $\eta(\xi_{\ocax}(u),\xi_{\ocbx}(v)+2^{\ell})$ to $\pg''(\oxy)$.

It remains to rename the labels of the representatives.
Assume $\ell \geq 1$ since otherwise there is nothing to do.
We construct a~function $f : [k] \to [2^\ell]$ so that for each $u \in \reps(\vec{c_1 x})$ we have $f(\xi_{\ocax}(u)) = \xi_{\oxy}(\dmap(c_1 x y)(u)))$; and similarly, for each $v \in \reps(\vec{c_2 x})$ with $\xi_{\ocbx}(v) \neq 2^{\ell}$ we have $f(\xi_{\ocbx}(v) + 2^{\ell}) = \xi_{\oxy}(\dmap(c_2 x y)(v)))$.
Since $k > 2^\ell$, it is straightforward to produce a~sequence of $\Oh{k}$ operations $\pi(\cdot, \cdot)$ that, in total, remaps each label $i \in [k]$ to the label $f(i)$.
\ms{I rewrote this paragraph since the previous argument failed a bit. Does it look good now?}
%

We observe that this sequence of operations depends only on $\tau(\Tc, \oxy)$ and can be computed from it in $\Oh[\ell]{1}$ time.
It remains to prove that it correctly produces the $k$-graph $\pg(\oxy)$.
Let $\pg^* = (G^*, V^*_1, \ldots, V^*_k)$ denote the $k$-graph resulting from the operations.
We prove that $\pg^*$ and $\pg(\oxy)$ are equal.

Let us first check that $G^* = G[\lparts(\Tc)[\oxy]]$.
We have $V(G^*) = V(G(\oxy))$ by construction.
Let $\pg''(\oxy) = (G''(\oxy), V''_1, \ldots, V''_k)$.
We have $V''_1 \cup \ldots \cup V''_{2^\ell-1} \subseteq \lparts(\Tc)[\ocax]$ and $V''_{2^\ell+1} \cup \ldots \cup V''_{2 \cdot 2^\ell-1} \subseteq \lparts(\Tc)[\ocbx]$.
Therefore, our operations did not add edges between the pairs of vertices in $\lparts(\Tc)[\ocax]$, nor between the pairs of vertices in $\lparts(\Tc)[\ocbx]$, so we have that $G^*[\lparts(\Tc)[\ocax]] = G(\oxy)[\lparts(\Tc)[\ocax]]$ and $G^*[\lparts(\Tc)[\ocbx]] = G(\oxy)[\lparts(\Tc)[\ocbx]]$.
It remains to check edges between $\lparts(\Tc)[\ocax]$ and $\lparts(\Tc)[\ocbx]$.
By our construction we have that edges between $u \in \reps(\vec{c_1 x})$ and $v \in \reps(\vec{c_2 x})$ are as claimed.
Suppose that $v \in \lparts(\Tc)[\ocax]$ and $r_v \in \reps(\vec{c_1 x})$ is the node so that $N_G(r_v) \cap \lparts(\Tc)[\oxca] = N_G(v) \cap \lparts(\Tc)[\oxca]$.
We have that $v$ and $r_v$ are in the same set $V''_l$, and therefore $N_{G^*}(v) \cap \lparts(\Tc)[\ocbx] = N_{G^*}(r_v) \cap \lparts(\Tc)[\ocbx]$.
Therefore, because the neighborhood of $r_v$ to $\lparts(\Tc)[\ocbx]$ is correct and $\lparts(\Tc)[\ocbx] \subseteq \lparts(\Tc)[\oxca]$, we deduce that the neighborhood of $v$ to $\lparts(\Tc)[\ocbx]$ is also correct.

Let us then check that $V^*_j = V_j(\oxy)$ for all $j \in [k]$.
Consider $v \in \reps(\ocax)$.
By definitions of annotated rank decompositions we have that $N_G(v) \cap \lparts(\Tc)[\oyx] = N_G(\dmap(c_1 x y)(v)) \cap \lparts(\Tc)[\oyx]$, which readily implies that $v \in V^*_j$ if and only if $v \in V_j(\oxy)$.
Then consider $v \in \lparts(\Tc)[\ocax]$, and again let $r_v \in \reps(\vec{c_1 x})$ be the node so that $N_G(r_v) \cap \lparts(\Tc)[\oxca] = N_G(v) \cap \lparts(\Tc)[\oxca]$.
We have that $v$ and $r_v$ are in the same set $V''_l$, so they end up in the same set $V^*_j$.
Because $\lparts(\Tc)[\oyx] \subseteq \lparts(\Tc)[\oxca]$, we have that $N_G(r_v) \cap \lparts(\Tc)[\oyx] = N_G(v) \cap \lparts(\Tc)[\oyx]$, so the correctness of $v$ follows from the correctness of $r_v$.
The proof for $v \in \lparts(\Tc)[\ocbx]$ is similar.
\end{proof}

Then, with similar arguments we can show that a $k$-graph representing $G$ can be constructed from $\pg(\oxy)$ and $\pg(\oyx)$ for some edge $xy \in E(T)$.
We omit the proof as it is similar to the proof of \Cref{lem:kgraph_ind_trans}.

\begin{lemma}
\label{lem:kgraph_ind_final}
Let $xy \in E(T)$.
The $k$-graph $(G, V(G), \emptyset, \ldots, \emptyset)$ can be produced by a sequence of $\Oh{k^2}$ operations in $\opkg_k$ from the $k$-graphs $\pg(\oxy)$ and $\pg(\oyx)$.
Moreover, this sequence of operations depends only on the edge signature $\sigma(\Tc,\oxy)$.
\end{lemma}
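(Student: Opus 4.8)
The plan is to mimic the construction in the proof of \Cref{lem:kgraph_ind_trans}, using the edge $xy$ in place of a branching node. Let $\ell$ be the width of $\Tc$ and recall $k = 2\cdot 2^{\ell}-1$. If $\ell = 0$ there are no edges across the cut defined by $xy$, so $\pg(\oxy) \oplus \pg(\oyx)$ already equals $(G, V(G), \emptyset, \ldots, \emptyset)$ and the empty sequence suffices; so assume $\ell \ge 1$. First I would relabel $\pg(\oyx)$ by applying $\pi(j, j+2^{\ell})$ for all $j \in [2^{\ell}-1]$, obtaining a $k$-graph $\pg'(\oyx)$ whose labels lie in $\{2^{\ell}\} \cup \{2^{\ell}+1, \ldots, 2\cdot 2^{\ell}-1\}$, disjoint from the labels $\{1,\ldots,2^{\ell}\}$ of $\pg(\oxy)$ except for the shared label $2^{\ell}$, which by the definition of $\xi$ is reserved for vertices with empty neighbourhood across the cut. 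Then take $\pg'' = \pg(\oxy) \oplus \pg'(\oyx)$. Next, for every pair $u \in \reps(\oxy)$, $v \in \reps(\oyx)$ with $uv \in E(\repse(xy))$ I would apply $\eta(\xi_{\oxy}(u), \xi_{\oyx}(v)+2^{\ell})$; since $uv \in E(\repse(xy))$ forces $N_{\repse(xy)}(u) \neq \emptyset \neq N_{\repse(xy)}(v)$, both $\xi_{\oxy}(u)$ and $\xi_{\oyx}(v)$ lie in $[2^{\ell}-1]$, so each such operation pairs a label class contained in $\lparts(\Tc)[\oxy]$ with one contained in $\lparts(\Tc)[\oyx]$. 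Finally I would collapse all label classes into $V_1$ by applying $\pi(j, 1)$ for $j = 2, \ldots, k$, yielding the candidate $k$-graph $\pg^{\star} = (G^{\star}, V_1^{\star}, \ldots, V_k^{\star})$.

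The bounds and the dependence claim are then routine. The maps $\xi_{\oxy}$, $\xi_{\oyx}$ and the edge set of $\repse(xy)$ are all determined by, and computable in $\Oh[\ell]{1}$ time from, $\sigma(\Tc,\oxy) = (\reps(\oxy), \reps(\oyx), \repse(xy))$, so the entire sequence depends only on $\sigma(\Tc,\oxy)$. Its length is $(2^{\ell}-1) + 1 + |E(\repse(xy))| + (k-1)$, and $|E(\repse(xy))| \le |\reps(\oxy)|\cdot|\reps(\oyx)| \le 4^{\ell} = \Oh{k^2}$, so the total is $\Oh{k^2}$.

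It remains to verify $\pg^{\star} = (G, V(G), \emptyset, \ldots, \emptyset)$. The vertex set is correct because $V(G(\oxy)) \sqcup V(G(\oyx)) = \lparts(\Tc)[\oxy] \cup \lparts(\Tc)[\oyx] = V(G)$, and the final renaming puts every vertex into $V_1^{\star}$ and empties the other classes. For the edges, no operation after the disjoint union adds an edge inside $\lparts(\Tc)[\oxy]$ or inside $\lparts(\Tc)[\oyx]$, since in $\pg''$ the classes $\{1,\ldots,2^{\ell}-1\}$ sit in $\lparts(\Tc)[\oxy]$ and $\{2^{\ell}+1,\ldots,2\cdot2^{\ell}-1\}$ in $\lparts(\Tc)[\oyx]$ and the $\eta$'s only ever join the former with the latter; hence $G^{\star}[\lparts(\Tc)[\oxy]] = G[\lparts(\Tc)[\oxy]]$ and $G^{\star}[\lparts(\Tc)[\oyx]] = G[\lparts(\Tc)[\oyx]]$ because $\pg(\oxy)$, $\pg(\oyx)$ encode these induced subgraphs. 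For a cross pair $w \in \lparts(\Tc)[\oxy]$, $w' \in \lparts(\Tc)[\oyx]$, let $r_w \in \reps(\oxy)$ and $r_{w'} \in \reps(\oyx)$ be their representatives; then $w, r_w$ lie in the same class of $\pg''$ and so do $w', r_{w'}$, whence $ww' \in E(G^{\star}) \iff r_w r_{w'} \in E(G^{\star}) \iff r_w r_{w'} \in E(\repse(xy))$. By \Cref{prop:ard:minrep} for the edge $xy$ we have $\repse(xy) = G[\reps(\oxy), \reps(\oyx)]$, so this is equivalent to $r_w r_{w'} \in E(G)$, and the chain $ww' \in E(G) \iff w' \in N_G(w) \cap \lparts(\Tc)[\oyx] \iff w' \in N_G(r_w) \cap \lparts(\Tc)[\oyx] \iff r_w \in N_G(w') \cap \lparts(\Tc)[\oxy] \iff r_w \in N_G(r_{w'}) \iff r_w r_{w'} \in E(G)$ (using that $r_w$, $r_{w'}$ represent $w$, $w'$ across the cut) closes the loop. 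As in \Cref{lem:kgraph_ind_trans}, the only delicate point is this bookkeeping of representatives across the cut; the rest is mechanical, and I expect to omit it by referring to the analogous argument there.
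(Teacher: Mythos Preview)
Your proposal is correct and takes exactly the approach the paper intends: the paper omits the proof of this lemma, remarking that it is similar to that of \Cref{lem:kgraph_ind_trans}, and your construction is precisely the natural adaptation of that proof to the edge case, using $\repse(xy)$ directly from the edge signature in place of the combination of $\dmap$ and $\repse$ used at an internal node.
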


Now we are ready to give the algorithm to translate annotated rank decompositions into $k$-expressions.

\begin{lemma}
\label{lem:rdtocwexp}
There is an algorithm that given an annotated rank decomposition $\Tc$ of width $\ell$ that encodes a graph $G$, in time $\Oh[\ell]{|\Tc|}$ outputs a $(2^{\ell+1}-1)$-expression that encodes $G$.
\end{lemma}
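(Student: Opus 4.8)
The plan is to build the desired $(2^{\ell+1}-1)$-expression by a single bottom-up pass over the tree $T$ of $\Tc$, using the $k$-graphs $\pg(\oed)$ attached to oriented edges as the intermediate objects. Set $k = 2\cdot 2^{\ell}-1 = 2^{\ell+1}-1$. First I would pick an arbitrary non-leaf node $r$ of $T$ to serve (essentially) as the root: choose an edge $xy\in E(T)$ incident to $r$ and orient the whole tree so that for every non-leaf oriented edge $\oed$ directed ``towards'' this chosen edge, $\oed$ has two well-defined children in the sense of \Cref{sec:rdautom}. Concretely, every oriented edge of $T$ other than $\oxy$ and $\oyx$ is a predecessor of exactly one of $\oxy$, $\oyx$; I process the two subtrees $\pred_T(\oxy)$ and $\pred_T(\oyx)$ separately and then combine.

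For the construction itself, I would process the oriented edges in $\pred_T(\oxy)$ (and symmetrically $\pred_T(\oyx)$) in postorder. For each leaf oriented edge $\olp$, the set $\lparts(\Tc)[\olp]=\reps(\olp)$ is a single vertex $v$, so the associated $k$-graph $\pg(\olp)$ is a one-vertex $k$-graph with $v$ in part $\xi_{\olp}(v)\le 2^{\ell}$; this is produced by a single leaf node of the $k$-expression (labelled by $v$) possibly followed by one $\pi(1,\xi_{\olp}(v))$ rename, and it is computed in $\Oh[\ell]{1}$ time from $\sigma(\Tc,\olp)$. For each non-leaf oriented edge $\oed$ with children $\vec{c_1x},\vec{c_2x}$, I already have (as partial $k$-expressions) roots encoding $\pg(\vec{c_1x})$ and $\pg(\vec{c_2x})$; by \Cref{lem:kgraph_ind_trans} there is a sequence of $\Oh{k^2}=\Oh[\ell]{1}$ operations from $\opkg_k$, computable in $\Oh[\ell]{1}$ time from the transition signature $\tau(\Tc,\oed)$ (which is read off the stored annotations around the node), that turns these two into $\pg(\oed)$; I append the corresponding $\Oh[\ell]{1}$ unary-operation and $\oplus$ nodes on top. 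After finishing both subtrees I have roots encoding $\pg(\oxy)$ and $\pg(\oyx)$, and by \Cref{lem:kgraph_ind_final} a further $\Oh{k^2}=\Oh[\ell]{1}$ operations, depending only on $\sigma(\Tc,\oxy)$, combine them into a $k$-graph $(G,V(G),\emptyset,\dots,\emptyset)$; its underlying graph is $G$, so this final $k$-expression $\Expr=(T',V(G),\mu)$ encodes $G$. Correctness is immediate from the two cited lemmas by induction on the postorder; the only base-case observation needed is that a leaf oriented edge has $|\reps(\olp)|=1$ so $\xi_{\olp}$ lands in $[2^{\ell}]\subseteq[k]$.

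For the running time: $T$ has $\Oh{|\Tc|}$ nodes, each oriented edge is handled once, and by \Cref{lem:kgraph_ind_trans,lem:kgraph_ind_final} the work per edge — reading the relevant transition/edge signature from the stored annotations, computing the injection $\xi$ (which costs $\Oh[\ell]{1}$ given $\reps$ and $\repse$ around that edge), and emitting $\Oh[\ell]{1}$ new expression nodes — is $\Oh[\ell]{1}$. Since $|\reps(\oed)|\le 2^{\ell}$ for all $\oed$ (by the width bound and \Cref{lem:repsbound}), every local lookup really is $\Oh[\ell]{1}$, so the total time is $\Oh[\ell]{|\Tc|}$, and the number of parts used never exceeds $k=2^{\ell+1}-1$. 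I do not anticipate a genuine obstacle here: the real content is in \Cref{lem:kgraph_ind_trans,lem:kgraph_ind_final}, which are already available; the main care-point in writing this up is the bookkeeping of the orientation/rooting of $T$ and making sure the two halves $\pred_T(\oxy)$, $\pred_T(\oyx)$ are stitched together using exactly the edge-signature construction of \Cref{lem:kgraph_ind_final} rather than the transition construction, and checking that all signatures used are computable from the explicitly stored representation in constant time per node.
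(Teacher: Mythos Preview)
Your proposal is correct and follows essentially the same approach as the paper: choose an edge $xy$, build the $k$-expression bottom-up over $\pred_T(\oxy)\cup\pred_T(\oyx)$ using \Cref{lem:kgraph_ind_trans} at each internal oriented edge and the leaf-initialisation with at most one $\pi$ operation, then apply \Cref{lem:kgraph_ind_final} at the top. The paper's proof is slightly terser (it just speaks of computing the $\Oh{k^2}$-node trees per edge and ``gluing'' them), but the algorithm and its analysis are the same.
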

\begin{proof}
Let $\Tc = (T,V(G),\reps,\repse,\dmap)$ and $k = (2^{\ell+1}-1)$, and let us use the definitions introduced in this subsection.
We choose an arbitrary edge $ab \in E(T)$.
By using \Cref{lem:kgraph_ind_trans}, we compute for each non-leaf oriented edge $\oxy \in \pred_T(\oab) \cup \pred_T(\oba)$ a rooted tree with $\Oh{k^2}$ nodes, so that the internal nodes are labeled with operations in $\opkg_k$ and the two leaves are labeled with the two child edges $\ocax$ and $\ocbx$ of $\oxy$, so that it corresponds to a sequence of operations in $\opkg_k$ that turn $\pg(\ocax)$ and $\pg(\ocbx)$ into $\pg(\oxy)$.
We also use \Cref{lem:kgraph_ind_final} to compute the rooted tree with $\Oh{k^2}$ nodes, so that the internal nodes are labeled with operations in $\opkg_k$ and the two leaves are labeled with $a$ and $b$, so that it corresponds to a sequence of operations in $\opkg_k$ that turn $\pg(\oab)$ and $\pg(\oba)$ into $(G, V(G), \emptyset, \ldots, \emptyset)$.
For each leaf edge $\olp \in \leafe(T)$ we compute the $k$-expression with at most one operation that turns the $k$-graph $(G[\reps(\olp), V_1, \emptyset, \ldots, \emptyset])$ into $\pg(\olp)$.
Now, we observe that by gluing these $\Oh{|\Tc|}$ trees we computed together, we obtain a $k$-expression that encodes $G$.
This takes in total $\Oh[\ell]{|\Tc|}$ time.
\end{proof}

\subsection{Automata on \texorpdfstring{$k$}{k}-expressions}
We then define automata working on $k$-expressions.
Our definitions do not strictly follow any literature as they are geared to our notation and the goal of proving \Cref{lem:automata_transl}, but can be seen as equivalent to definitions given by Courcelle and Engelfriet~\cite{CEbook}.

A $k$-expression automaton is a 6-tuple $\autom = (Q,\Gamma,\iota,\chi,\psi,\phi)$ that consists of
\begin{itemize}
\item a state set $Q$,
\item a vertex label set $\Gamma$,
\item an initial mapping $\iota$ that maps a single-vertex graph labeled with $\gamma \in \Gamma$ to a state $\iota(\gamma) \in Q$,
\item a transition mapping $\psi$ that maps every pair of form $(\mu, q)$, where $\mu \in \opkg_k \setminus \{\oplus\}$ and $q \in Q$ to a state $\psi(\mu,q) \in Q$,
\item a transition mapping $\chi$ that maps every pair of states $(q_1,q_2) \in Q \times Q$ to a state $\chi(q_1,q_2) \in Q$, and
\item a final mapping $\phi$ that maps each state $q \in Q$ to a state $\phi(q) \in Q$.
\end{itemize}

The evaluation time of the automaton is the maximum running time to compute the functions $\iota$, $\psi$, $\chi$, and $\phi$ given their arguments.

Let $\Expr = (T,V(G),\mu)$ be a $k$-expression that encodes a graph $G$ and $\alpha \colon V(G) \rightarrow \Gamma$ a vertex-labeling of $G$ with $\Gamma$.
The \emph{run} of $\autom$ on the pair $(\Expr,\alpha)$ is the unique mapping $\rho \colon V(T) \rightarrow Q$ so that
\begin{itemize}
\item for each leaf $l \in \leafs(T)$ it holds that $\rho(l) = \iota(\alpha(\mu(t)))$,
\item for each node $t \in V(T)$ that has one child $c$ it holds that $\rho(t) = \psi(\mu(t),\rho(c))$, and
\item for each node $t \in V(T)$ that has two children $c_1,c_2$ with $c_1 < c_2$ it holds that $\rho(t) = \chi(\rho(c_1),\rho(c_2))$.
\end{itemize}

The valuation of $\autom$ on $(\Expr,\alpha)$ is $\phi(\rho(r))$, where $r$ is the root of $T$.
We say that $\autom$ is \emph{expression-oblivious} if its valuation on $(\Expr,\alpha)$ depends only on the graph $G$ encoded by $\Expr$ and the labeling $\alpha$.
In that case, we call this also the valuation of $\autom$ on $(G,\alpha)$.
The purpose of the final mapping $\phi$ in the definition is to be able to make $k$-expression automata expression-oblivious, for example, if the purpose of $\autom$ is to decide whether $G$ satisfies some graph property, then the image of $\phi$ could be just $\{\bot,\top\}$, while $Q$ could be much larger in order to represent intermediate computations.

We are now ready to prove that $k$-expression automata can be translated into rank decomposition automata.
This is not surprising since the construction of $(2^{\ell+1}-1)$-expression from a rank decomposition of width $\ell$ in \Cref{lem:rdtocwexp} works in a local manner.
The proof uses definitions of rank decomposition automata from \Cref{sec:rdautom}.

\begin{lemma}
\label{lem:automata_transl}
Let $\ell \in \N$ and $k = 2^{\ell+1}-1$.
Given an expression-oblivious $k$-expression automaton $\autom_{\mathsf{ex}} = (Q,\Gamma,\iota,\chi,\psi,\phi)$ with evaluation time $\beta$, it is possible to construct a rank decomposition automaton $\autom_{\mathsf{rd}} = (Q,\Gamma,\iota',\delta,\varepsilon)$ of width $\ell$ and evaluation time $\Oh[\ell]{\beta}$, so that if $\Tc = (T,V(G),\reps,\repse,\dmap)$ is an annotated rank decomposition that encodes a graph $G$ and has width at most $\ell$, $\alpha \colon V(G) \rightarrow \Gamma$ is a vertex-labeling of $G$ with $\Gamma$, and $a,b \in V(T)$ is a pair of adjacent nodes in $T$, then the valuation of $\autom_{\mathsf{rd}}$ on $(\Tc, a, b, \alpha)$ is the same as the valuation of $\autom_{\mathsf{ex}}$ on $(G,\alpha)$.
\end{lemma}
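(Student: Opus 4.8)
The plan is to re-run the construction of \Cref{lem:rdtocwexp} but ``in the coordinate system of $\autom_{\mathsf{ex}}$'', i.e., instead of actually producing the $k$-expression we only track the state that $\autom_{\mathsf{ex}}$ would be in after processing the corresponding piece of the expression. Concretely, recall from \Cref{subsec:cliquewidthdef} that with each oriented edge $\oxy \in \oE(T)$ we associated the $k$-graph $\pg(\oxy) = (G[\lparts(\Tc)[\oxy]], V_1(\oxy),\ldots,V_k(\oxy))$, that \Cref{lem:kgraph_ind_trans} gives a~sequence of $\Oh{k^2}$ operations in $\opkg_k$, depending only on the transition signature $\tau(\Tc,\vec{tp})$, that builds $\pg(\vec{tp})$ from $\pg(\vec{c_1t})$ and $\pg(\vec{c_2t})$, and that \Cref{lem:kgraph_ind_final} gives a similar sequence depending only on the edge signature $\sigma(\Tc,\oxy)$ that builds $(G,V(G),\emptyset,\ldots,\emptyset)$ from $\pg(\oxy)$ and $\pg(\oyx)$. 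Also, for a leaf edge $\olp$, $\pg(\olp)$ is obtained from a single-vertex $k$-graph by at most one operation in $\opkg_k$, depending only on $\sigma(\Tc,\olp)$ and the label of the unique vertex in $\reps(\olp)$.

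First I would define the semantics of the state attached by $\autom_{\mathsf{rd}}$ to an oriented edge $\oxy$ (directed towards the root in the rooted convention): it should equal the state $\rho_{\mathsf{ex}}(t_{\oxy})$ that $\autom_{\mathsf{ex}}$ assigns to the root $t_{\oxy}$ of the sub-$k$-expression that builds $\pg(\oxy)$ from the single-vertex $k$-graphs of $\lparts(\Tc)[\oxy]$, where that sub-$k$-expression is the one produced by the gluing in the proof of \Cref{lem:rdtocwexp}. With this invariant fixed, the three mappings of $\autom_{\mathsf{rd}}$ are read off directly: (i) the initial mapping $\iota'(\sigma,\gamma)$, for a leaf edge signature $\sigma$ and $\gamma\colon \reps^a_\sigma \to \Gamma$, is computed by taking the state $\iota(\gamma(v))$ of $\autom_{\mathsf{ex}}$ on the single-vertex graph (where $v$ is the unique element of $\reps^a_\sigma$), then applying $\psi$ for the at most one operation in $\opkg_k$ obtained from $\sigma$; (ii) the transition mapping $\delta(\tau, q_1, q_2)$ first combines $q_1,q_2$ by $\chi$ (the $\oplus$), then applies $\psi$ once for each of the $\Oh{k^2}=\Oh[\ell]{1}$ operations returned by \Cref{lem:kgraph_ind_trans} on input $\tau$, in the order given there; (iii) the final mapping $\varepsilon(\sigma, q_1, q_2)$ combines $q_1,q_2$ by $\chi$, applies $\psi$ for each of the $\Oh{k^2}$ operations returned by \Cref{lem:kgraph_ind_final} on input $\sigma$, and finally applies $\phi$. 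Each of these uses $\Oh[\ell]{1}$ evaluations of the functions of $\autom_{\mathsf{ex}}$, so the evaluation time is $\Oh[\ell]{\beta}$, and the state set and label set are literally those of $\autom_{\mathsf{ex}}$.

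Next I would verify correctness of the invariant by induction on $\pred_T(\vec{xy})$, mirroring the definition of the run of a rank decomposition automaton in \Cref{sec:rdautom}: for a leaf edge it holds by the definition of $\iota'$ and the leaf case of the construction of the $k$-expression; for a non-leaf edge $\vec{tp}$ with children $\vec{c_1t} < \vec{c_2t}$, by the induction hypothesis the states at $\vec{c_1t}$ and $\vec{c_2t}$ are the $\autom_{\mathsf{ex}}$-states at the roots of the sub-expressions building $\pg(\vec{c_1t})$ and $\pg(\vec{c_2t})$, and applying $\chi$ followed by the $\psi$-operations prescribed by \Cref{lem:kgraph_ind_trans} is exactly running $\autom_{\mathsf{ex}}$ over the piece of the glued $k$-expression that turns these two $k$-graphs into $\pg(\vec{tp})$ — so the resulting state is $\rho_{\mathsf{ex}}$ at the root of the sub-expression building $\pg(\vec{tp})$, as required. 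For the valuation on $(\Tc,a,b,\alpha)$: the runs on $(\Tc,\oab,\alpha)$ and $(\Tc,\oba,\alpha)$ give (by the invariant) the $\autom_{\mathsf{ex}}$-states at the roots of the sub-expressions building $\pg(\oab)$ and $\pg(\oba)$; then $\varepsilon(\sigma(\Tc,\oab), \cdot, \cdot)$, by its definition and \Cref{lem:kgraph_ind_final}, continues the run of $\autom_{\mathsf{ex}}$ through the final gluing piece that yields a $k$-expression encoding $G$, and then applies $\phi$, producing precisely the valuation of $\autom_{\mathsf{ex}}$ on the pair (that $k$-expression, $\alpha$). Since $\autom_{\mathsf{ex}}$ is expression-oblivious, this valuation depends only on $G$ and $\alpha$, hence equals the valuation of $\autom_{\mathsf{ex}}$ on $(G,\alpha)$, as claimed; the rooted-convention adjustments ($\rho(\vec{rc_2}) := \rho(\vec{c_1r})$ etc.) are handled exactly as in \Cref{sec:rdautom} and cause no difficulty. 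The main obstacle is purely bookkeeping: ensuring that the ordering and grouping of the $\opkg_k$-operations used by $\autom_{\mathsf{rd}}$ match precisely the order in which $\autom_{\mathsf{ex}}$ would process the glued $k$-expression of \Cref{lem:rdtocwexp} — in particular that the sequence from \Cref{lem:kgraph_ind_trans} is ``consumed'' by repeated $\psi$'s and the single $\oplus$ by $\chi$ on the two child states with $c_1 < c_2$, matching the orientation convention for children of an oriented edge in \Cref{sec:rdautom}; once this alignment is pinned down, everything else is a straightforward unwinding of definitions.
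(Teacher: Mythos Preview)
Your proposal is correct and follows essentially the same approach as the paper: simulate $\autom_{\mathsf{ex}}$ along the local pieces of the $k$-expression produced in \Cref{lem:rdtocwexp}, using \Cref{lem:kgraph_ind_trans} for $\delta$ and \Cref{lem:kgraph_ind_final} for $\varepsilon$, and appeal to expression-obliviousness at the end. One small inaccuracy worth fixing: in the sequence produced by \Cref{lem:kgraph_ind_trans} the $\oplus$ is \emph{not} the first operation --- some $\pi$-renamings are applied to $\pg(\vec{c_2 x})$ before the union --- so $\delta(\tau,q_1,q_2)$ should process the sequence in its given order (applying $\psi$ to $q_2$ first, then $\chi$, then further $\psi$'s), exactly the bookkeeping issue you already flag; the paper handles this uniformly by viewing $\Expr(\tau)$ as a small two-leaf tree and running $\autom_{\mathsf{ex}}$ on it with $q_1,q_2$ at the leaves.
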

\begin{proof}
We use the definitions of $\pg(\oxy)$ and $\xi_{\oxy}$ introduced in \Cref{subsec:cliquewidthdef}.
By \Cref{lem:kgraph_ind_trans} we can associate with each $\oxy \in \oE(T)$ a $k$-expression $\Expr(\oxy) = (T^{\mathsf{ex}}(\oxy), \lparts(\Tc)[\oxy], \mu(\oxy))$ so that the root of $T^{\mathsf{ex}}(\oxy)$ encodes $\pg(\oxy)$, and if $\oxy$ is non-leaf then $\Expr(\oxy)$ is constructed by combining $\Expr(\ocax)$ and $\Expr(\ocbx)$ by $\Oh[\ell]{1}$ operations in $\opkg_k$ that depend only on $\tau(\Tc, \oxy)$.
In particular, if $\Tpref$ is the prefix of $T^{\mathsf{ex}}(\oxy)$ so that the connected components of $T^{\mathsf{ex}}(\oxy) - \Tpref$ are $T^{\mathsf{ex}}(\ocax)$ and $T^{\mathsf{ex}}(\ocbx)$, then the pair $\Expr(\tau(\Tc, \oxy)) = (T^{\mathsf{ex}}(\oxy)[\Tpref \cup \App(\Tpref)], \funrestriction{\mu(\oxy)}{\Tpref})$ depends only on $\tau(\Tc, \oxy)$.
The tree $T^{\mathsf{ex}}(\oxy)[\Tpref \cup \App(\Tpref)]$ has exactly two leaves that correspond to the roots of $\Expr(\ocax)$ and $\Expr(\ocbx)$, and we let names of these leaves be $l_1$ and $l_2$ so that $l_i$ corresponds to $c_i$ (note that $\tau(\Tc, \oxy)$ includes the subtree $T[\{x,y,c_1,c_2\}]$ so this is allowed).
If $\oxy$ is a leaf edge then $\Tc^{\mathsf{ex}}(\oxy)$ is the $k$-expression consisting of at most two nodes that encodes $\pg(\oxy)$.

Then we define the automaton $\autom_{\mathsf{rd}} = (Q,\Gamma,\iota',\delta,\varepsilon)$.
Like indicated by the notation, the sets $Q$ and $\Gamma$ are the same as for the automaton $\autom_{\mathsf{ex}} = (Q,\Gamma,\iota,\chi,\psi,\phi)$.
The function $\iota'$ is defined as follows:
Let $\sigma$ be an edge signature $\sigma = (\reps^a_\sigma,\reps^b_\sigma,\repse_\sigma)$ and $\gamma$ a function $\gamma \colon \reps^a_\sigma \rightarrow \Gamma$.
If $\reps^a_\sigma$ is a set consisting of a single vertex $v$, we set $\iota'(\sigma,\gamma) = \iota(\gamma(v))$.
Otherwise, we set $\iota'(\sigma,\gamma)$ to be an arbitrary state in $Q$.
Note that $\autom_{\mathsf{rd}}$ is required to work only on annotated rank decompositions that encode graphs, for which the latter case never happens.

The mapping $\delta(\tau,q_1,q_2)$, where $\tau$ is a transition signature and $q_1,q_2 \in Q$ is defined as follows.
We take the pair $\Expr(\tau) = (T^*, \mu^*)$ defined earlier in the course of the proof.
Let $L(T^*) = \{l_1, l_2\}$.
Then we take the run of $\autom_{\mathsf{ex}}$ on $(T^*, \mu^*)$, defined as a function $\rho \colon V(T^*) \rightarrow Q$ so that for the two leaves $l_1,l_2$ we have $\rho(l_1) = q_1$ and $\rho(l_2) = q_2$, and for other nodes the run is defined as per the usual definition of a run of $\autom_{\mathsf{ex}}$.
Then, we set $\delta(\tau,q_1,q_2) = \rho(r)$, where $r$ is the root of $T^*$.
Before defining $\varepsilon$ we can observe that the following claim follows from our construction.
\begin{observation}
Let $\oxy \in \oE(T)$ and $\rho_{\mathsf{rd}} \colon \pred_T(\oxy) \rightarrow Q$ be the run of $\autom_{\mathsf{rd}}$ on $(\Tc,\oxy,\alpha)$.
Let also $\rho_{\mathsf{ex}} \colon V(T^{\mathsf{ex}}(\oxy)) \rightarrow Q$ be the run of $\autom_{\mathsf{ex}}$ on $\Expr(\oxy)$.
Then $\rho_{\mathsf{rd}}(\oxy) = \rho_{\mathsf{ex}}(r(\oxy))$, where $r(\oxy)$ is the root of $T^{\mathsf{ex}}(\oxy)$.
\end{observation}

Next we define $\varepsilon$.
By \Cref{lem:kgraph_ind_trans}, the $k$-graph $\pg = (G, V(G), \emptyset, \ldots, \emptyset)$ can be constructed from the $k$-graphs $\pg(\oxy)$ and $\pg(\oyx)$ by $\Oh[\ell]{1}$ applications of operations in $\opkg_k$ that depend only on the edge signature $\sigma(\Tc,\oxy)$.
Therefore, we can similarly define a $k$-expression $\Expr(x, y) = (T^{\mathsf{ex}}(\oxy), \lparts(\Tc)[\oxy], \mu(\oxy))$ that encodes $\pg$ and is constructed by combining $\Expr(\oxy)$ and $\Expr(\oyx)$ by $\Oh[\ell]{1}$ operations in $\opkg_k$ that depend only on $\sigma(\Tc,\oxy)$.
We can also define a pair $\Expr(\sigma(\Tc,\oxy))$ to describe how exactly these $k$-expressions should be combined.

Now, $\varepsilon(\sigma,q_1,q_2)$ can be constructed from $\Expr(\sigma)$ similarly as $\delta$ was constructed from $\Expr(\tau)$ and finally applying the mapping $\phi$, so that the valuation of $\autom_{\mathsf{rd}}$ on $(\Tc,x,y,\alpha)$ is the same as the valuation of $\autom_{\mathsf{ex}}$ on $(\Expr(x, y),\alpha)$.
Now because $\autom_{\mathsf{ex}}$ is expression-oblivious, the valuation of $\autom_{\mathsf{ex}}$ on $(\Expr(x, y),\alpha)$ is the valuation of $\autom_{\mathsf{ex}}$ on $(G,\alpha)$, which concludes the correctness of the construction.
In the constructions of the functions $\delta$ and $\varepsilon$ we apply the functions $\chi$,$\psi$, and $\phi$ $\Oh[\ell]{1}$ times, so the evaluation time of $\autom_{\mathsf{rd}}$ is $\Oh[\ell]{\beta}$.
\end{proof}

We note that the properties of $\autom_{\mathsf{rd}}$ asserted in the statement of \Cref{lem:automata_transl} imply that it is decomposition-oblivious.

\subsection{\texorpdfstring{$\CMSO_1$}{CMSO1}}
\label{subsec:appendixcmso}
We use definitions of $\CMSO_1$ logic given in \Cref{subsec:maincmso}.
The following theorem was given in~\cite{CourcelleMR00} (see also \cite[Section 6]{CEbook}).
\begin{theorem}[\cite{CourcelleMR00}]
\label{the:cmsokexpr}
There is an algorithm that given a $\CMSO_1$ sentence $\varphi$ with $p$ free set variables and $k \in \N$, in time $\Oh[\varphi,k]{1}$ constructs an decomposition-oblivious $k$-expression automaton $\autom = (Q,\Gamma,\iota,\chi,\psi,\phi)$ so that $\Gamma = 2^{[p]}$, the valuation of $\autom$ on $(G,\alpha)$ is $\top \in Q$ if and only if $(G,\alpha) \models \varphi$, the number of states is $|Q| \le \Oh[\varphi,k]{1}$, and the evaluation time is $\Oh[\varphi,k]{1}$.
\end{theorem}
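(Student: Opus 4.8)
The plan is to reconstruct the argument of Courcelle, Makowsky, and Rotics~\cite{CourcelleMR00} (see also Courcelle and Engelfriet~\cite[Section~6]{CEbook}): build the $k$-expression automaton whose states are the $\CMSO_1$-types of bounded complexity of the encoded $k$-graph together with the interpretation of the free variables, and whose transitions implement the Feferman--Vaught-style composition of types along the cliquewidth operations. First I would fix the logical parameters. Represent a $k$-graph $(G,V_1,\dots,V_k)$ together with a vertex-labeling $\alpha\colon V(G)\to 2^{[p]}$ as a relational structure over the vocabulary $\Sigma_{k,p}=\{\mathrm{edg},V_1,\dots,V_k,X_1,\dots,X_p\}$, where $\mathrm{edg}$ is the symmetric adjacency relation, the $V_i$ are the label sets of the $k$-graph, and $X_j=\{v:j\in\alpha(v)\}$. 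Let $q$ be the quantifier rank of $\varphi$ and $m$ the least common multiple of the moduli of the counting predicates occurring in $\varphi$; both are $\Oh[\varphi]{1}$. Define the \emph{$\CMSO_1[q,m]$-type} of a $\Sigma_{k,p}$-structure to be the set (up to logical equivalence) of $\CMSO_1$-sentences over $\Sigma_{k,p}$ of quantifier rank at most $q$ using only moduli dividing $m$ that it satisfies. By the standard Hintikka-formula / Ehrenfeucht--Fra\"iss\'e machinery for $\CMSO$, for fixed $q,m$ and a fixed finite vocabulary these types form a finite, effectively enumerable set, and, given any $\CMSO_1[q,m]$-sentence, one can compute the set of types that contain it. Take $Q$ to be this set of types; as $q$, $m$, and $|\Sigma_{k,p}|$ are bounded in terms of $\varphi$ and $k$, we get $|Q|\le\Oh[\varphi,k]{1}$, and $Q$ can be generated in $\Oh[\varphi,k]{1}$ time.

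Next I would define $\autom=(Q,\Gamma,\iota,\chi,\psi,\phi)$ with $\Gamma=2^{[p]}$ so that, on any $k$-expression $\Expr$ encoding a graph $G$ with labeling $\alpha$, the run of $\autom$ labels each node by the $\CMSO_1[q,m]$-type of the $\Sigma_{k,p}$-structure it encodes. A leaf introducing a vertex with profile $\gamma$ encodes a fixed one-vertex structure (the vertex lies in $V_1$ by the convention in the definition of $k$-expressions, and in exactly the $X_j$ with $j\in\gamma$), so $\iota(\gamma)$ is a constant computed directly. Each unary operation $\eta(i,j)$, $\pi(i,j)$ is a quantifier-free interpretation of a $\Sigma_{k,p}$-structure in itself ($\eta(i,j)$ rewrites $\mathrm{edg}$ to $\mathrm{edg}\cup(V_i\times V_j)\cup(V_j\times V_i)$, and $\pi(i,j)$ rewrites $V_j$ to $V_i\cup V_j$ and $V_i$ to $\emptyset$, leaving the rest fixed); hence if $B$ is obtained from $A$ by the operation then $B\models\theta$ iff $A\models\theta^{\mathcal I}$, where $\theta^{\mathcal I}$ substitutes each atom by its definition and has the same quantifier rank and moduli, so $\psi(\mu,s)$ is $s$ pushed through this syntactic substitution and re-read as a type. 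For $\oplus$ I would invoke the composition theorem for $\CMSO$ under disjoint union (below) to obtain an effective $\chi$ computing the type of $A\sqcup B$ from the types of $A$ and $B$. Finally set $\phi(s)=\top$ if $\varphi\in s$ and $\bot$ otherwise; since $\varphi$ mentions none of the $V_i$, membership $\varphi\in s$ depends only on $(G,\alpha)$ and not on the chosen expression, so $\autom$ is expression-oblivious and its valuation on $(G,\alpha)$ is $\top$ iff $(G,\alpha)\models\varphi$. All of $\iota,\psi,\chi,\phi$ are lookups in fixed finite tables, so the evaluation time is $\Oh[\varphi,k]{1}$, and assembling the tables is an $\Oh[\varphi,k]{1}$ computation.

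The hard part will be the $\CMSO$ composition theorem for disjoint union underlying $\chi$. I would prove by induction on $\CMSO_1$ formulas that for every $\theta(\bar Y)$ over $\Sigma_{k,p}$ there is a finite set of pairs $(\alpha_i(\bar Y),\beta_i(\bar Y))$, each of quantifier rank at most that of $\theta$ and with moduli dividing $m$, such that for all $A,B$ and all partitions $\bar Y=\bar Y_A\sqcup\bar Y_B$,
\[
(A\sqcup B,\bar Y)\models\theta \iff \bigvee_i\bigl((A,\bar Y_A)\models\alpha_i \wedge (B,\bar Y_B)\models\beta_i\bigr).
\]
Atoms decompose because the disjoint union adds no cross-relations; Boolean connectives distribute; an element or set existential over $A\sqcup B$ splits into one over $A$ and one over $B$ (with an extra side case-split for element quantifiers) and commutes with the finite disjunction. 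The delicate case is the modular counting quantifier $\mathrm{Card}_{a,m}\,Y.\,\theta(Y)$: by induction $\theta(Y)$ decomposes as $\bigvee_i(\alpha_i(Y_A)\wedge\beta_i(Y_B))$, so classify the subsets $Y_A\subseteq A$ by which $\alpha_i$ they satisfy (finitely many ``patterns'', each $\CMSO$-definable) and the subsets $Y_B\subseteq B$ likewise; whether $(Y_A,Y_B)$ satisfies the disjunction depends only on the pattern pair, so the number of satisfying $Y$ modulo $m$ is a fixed $\Z/m\Z$-bilinear function of the two vectors of pattern-class counts taken modulo $m$, and ``the number of $Y_A$ of pattern $S$ is $\equiv c\pmod m$'' is itself a $\CMSO_1[q,m]$-sentence about $A$ (a modular quantifier over the definable predicate ``$Y_A$ has pattern $S$''); hence the decomposition of $\mathrm{Card}_{a,m}\,Y.\,\theta$ is the disjunction, over all pairs of residue vectors mapping to $a$, of the conjunctions of the corresponding per-side sentences. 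For plain $\MSO$ this is the classical Feferman--Vaught/Shelah theorem and for $\CMSO$ the extension is Courcelle's; every step is effective, so $\chi$ is a computable table and the construction is complete.
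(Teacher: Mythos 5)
The paper does not prove this statement at all: Theorem~\ref{the:cmsokexpr} is imported as a black box from \cite{CourcelleMR00} (see also \cite[Section~6]{CEbook}), so there is no in-paper proof to compare against. Your reconstruction follows exactly the route of those sources --- states are bounded-rank $\CMSO_1$-types of the $k$-graph-with-labels structure, leaves and the unary operations $\eta(i,j),\pi(i,j)$ are handled by (quantifier-free) syntactic backward translation, $\oplus$ by the Feferman--Vaught/Shelah--Courcelle composition theorem for disjoint union, and the final mapping tests membership of $\varphi$ in the root type --- and it is essentially correct, including the observation that expression-obliviousness follows because $\varphi$ does not mention the $V_i$.

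One small mismatch worth flagging: the paper defines $\CMSO_1$ as $\MSO_1$ extended with \emph{cardinality predicates} $|Y|\equiv a \pmod m$ on set variables, whereas your ``delicate case'' treats a \emph{modular quantifier} $\mathrm{Card}_{a,m}\,Y.\,\theta(Y)$ counting satisfying set assignments. For the logic as the paper defines it, the counting atom decomposes trivially under disjoint union via $|Y|=|Y_A|+|Y_B|$ and the additive case-split on residues; your more elaborate bilinear-pattern argument is only needed for the stronger logic with modular set quantifiers, and if you do work in that stronger logic you must (as you implicitly do) close the induction under it, since the per-side sentences ``the number of $Y_A$ of pattern $S$ is $\equiv c \pmod m$'' are not expressible with cardinality predicates alone. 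Either way the construction goes through, so this is a matter of matching the definition rather than a gap.
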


By combining \Cref{lem:automata_transl,the:cmsokexpr}, we immediately obtain the following.

\cmsordaut*

Then we prove \Cref{lem:lincmsordaut} by using \Cref{lem:cmsordaut}.

\lincmsordaut*
\begin{proof}
Denote $\varphi = (\phi,f)$, where $\phi$ is a $\CMSO_1$ sentence with $p+q$ free variables, where $p$ is the number of free variables of $\varphi$.
Let $f(x_1,\ldots,x_q) = c_0 + c_1 x_1 + \ldots + c_q x_q$.
We first use \Cref{lem:cmsordaut} to turn $\phi$ into a rank decomposition automaton $\autom' = (Q',\Gamma',\iota',\delta',\varepsilon')$ of width $\ell$.

Let $\Tc = (T,V(G),\reps,\repse,\dmap)$ be an annotated rank decomposition that encodes a graph $G$, $\Gamma = 2^{[p]}$, and $\alpha \colon V(G) \rightarrow \Gamma$ a vertex-labeling of $G$.
Then, for a set $X \subseteq V(G)$ and a vertex labeling $\alpha' \colon X \rightarrow 2^{[p+1,p+q]}$, we define $\mathsf{val}(X,\alpha') = f(|X_1|, \ldots, |X_q|)$ where $X_i = \{v \in X \mid i+p \in \alpha'(v)\}$.
We also denote by $\funrestriction{\alpha}{X} \cup \alpha'$ the function $\funrestriction{\alpha}{X} \cup \alpha' \colon X \rightarrow 2^{[p+q]}$ with $(\funrestriction{\alpha}{X} \cup \alpha')(v) = \funrestriction{\alpha}{X}(v) \cup \alpha'(v)$ for all $v \in X$.
Then for every pair $(\oxy,s)$ with $\oxy \in \oE(T)$ and $s \in Q'$, we define $\mathsf{maxval}(\oxy,s)$ to be the maximum value of $\mathsf{val}(\lparts(\Tc)[\oxy], \alpha')$ over all functions $\alpha' \colon \lparts(\Tc)[\oxy] \rightarrow 2^{[p+1,p+q]}$ so that the valuation of $\autom'$ on $(\Tc, \oxy, \alpha \cup \alpha')$ is $s$, or $-\infty$ if no such $\alpha'$ exists.

Now, the state set of $\autom$ is the set of all functions $g \colon Q' \rightarrow \Z \cup \{-\infty\}$, and we can define the transitions of $\autom$ so that the valuation of $\autom$ on $(\Tc, \oxy, \alpha)$ is the function $g_{\oxy}$ that maps each $s \in Q'$ to $\mathsf{maxval}(\oxy, s)$.
In particular, for non-leaf edges $\oxy$ with child edges $\ocax$ and $\ocbx$ this can be done by setting for each $s \in Q'$ the value $g_{\oxy}(s)$ to be the maximum of $g_{\ocax}(s_1)+g_{\ocbx}(s_2)-c_0$ so that $\delta(\tau(\Tc, \oxy), s_1, s_2) = s$.
The construction of the initial mapping $\iota$ is straightforward.
We observe that we can construct the final mapping similarly, so that valuation of $\autom$ on $(G,\alpha)$ is equal to the maximum value of $\mathsf{val}(V(G),\alpha')$ over all functions $\alpha' \colon V(G) \rightarrow 2^{[p+q]}$ so that the valuation of $\autom'$ on $(G,\alpha \cup \alpha')$ is $\top$, and if no such $\alpha'$ exists, the valuation is $\bot$.
This gives evaluation time $\Oh{|Q'|^2 \cdot \beta}$, where $\beta$ is the evaluation time of $\autom'$, resulting in $\Oh[\ell,\varphi]{1}$ evaluation time.
\end{proof}

Let us then also prove \Cref{lem:labelindsub} here.

\lemlabelindsub*
\begin{proof}
We first turn $\Tc$ into an annotated rank decomposition $\Tc' = (T',V(G),\reps',\repse',\dmap')$ that encodes the graph $G$ (instead of the partitioned graph $(G,\prt)$).
This can be done in $\Oh[\ell]{|\Tc|}$ time by adding a subtree of size $\Oh[\ell]{1}$ below each leaf of $\Tc$.

Let $|V(H)| = p$ and let us index the vertices of $H$ by $u_1,\ldots,u_p$.
We write a $\CMSO_1$ sentence $\varphi$ of length $|\varphi| \le \Oh[H]{1}$ with $2p$ free variables so that $(G,X_1,\ldots,X_p,Y_1,\ldots,Y_p) \models \varphi$ if and only if $|X_i| = 1$ and $X_i \subseteq Y_i$ for all $i \in [p]$, and $G[X_1 \cup \ldots \cup X_p]$ is isomorphic to $H$ with an isomorphism that maps the single vertex $v_i \in X_i$ to $u_i$.
We use \Cref{lem:cmsordaut} to construct a rank decomposition automaton $\autom = (Q,\Gamma,\iota,\delta,\varepsilon)$ so that for all adjacent nodes $x,y \in V(T')$, the valuation of $\autom$ on $(\Tc', x, y, \alpha)$ is $\top$ if and only if $\alpha \colon V(G) \rightarrow 2^{[2p]}$ is a vertex-labeling corresponding to $X_1,\ldots,X_p,Y_1,\ldots,Y_p$ so that $(G,X_1,\ldots,X_p,Y_1,\ldots,Y_p) \models \varphi$.

We construct labeling $\alpha \colon V(G) \rightarrow 2^{[2p]}$ so that $\alpha(v) \cap [p] = \emptyset$ and $p+i \in \alpha(v)$ if and only if $u_i \in \gamma(v)$.
Then, if $f$ is a function $f \colon [p] \rightarrow V(G) \cup \{\bot\}$, we denote by $\alpha+f$ the function $(\alpha+f) \colon V(G) \rightarrow 2^{[2p]}$ so that $(\alpha+f)(v) = \alpha(v) \cup \{i \mid f(i) = v\}$.
Now, for each oriented edge $\oxy$ of $T'$ denote by $g_{\oxy}$ the function that maps each $q \in Q$ to a function $g_{\oxy}(q) \colon [p] \rightarrow \lparts(\Tc')[\oxy] \cup \{\bot\}$ so that the valuation of $\autom$ on $(\Tc',\oxy,\alpha+g_{\oxy}(q))$ is $q$, or to $\bot$ if no such function exists.
Now we can construct an auxiliary automaton $\autom'$ that computes $g_{\oxy}$ for each oriented edge $\oxy$ of $T'$ directed towards an arbitrarily chosen root, and finally from that construct a function $f \colon [p] \rightarrow V(G) \cup \{\bot\}$ so that the valuation of $\autom$ on $(\Tc',x,y,(\alpha+f))$ is $\top$, or find that no such $f$ exists.
By construction, such $f$ corresponds to a witness of $H$ as a labeled induced subgraph of $(G,\gamma)$.
\end{proof}

\section{Totally pure rank decompositions}
\label{sec:totally-pure-decomp}

We now formally introduce the concept of \emph{totally pure rank decompositions} introduced by Jeong, Kim and Oum~\cite{DBLP:journals/siamdm/JeongKO21} and signaled in \cref{ssec:dealternation-prelims}.
Then we will use this definition to prove the existence of optimum-width decompositions of subspace arrangements with bounded-size mixed skeletons.

We reuse the definitions from \cref{ssec:dealternation-prelims} and in the introduction below mostly follow the notation of \cite{DBLP:journals/siamdm/JeongKO21}.

Let $\Tc = (T, \lambda)$ be an~unrooted rank decomposition and $\Tc^b = (T^b, \lambda^b)$ be rooted.
Let also $x \in V(T^b)$.
We say that $\Tc$ is \emph{$x$-disjoint} if either $x$ is the root of $T^b$, or $T$ contains an~edge $uv$ such that $\lparts(\Tc)[\ouv] = \Vc_x$ (equivalently, $\ouv$ is $x$-full and $\ovu$ is $x$-empty).

Let $B_x$ be the boundary space of $x$ in $\Tc^b$, defined as $B_x = \sumof{\lparts(\Tc^b)[\vec{xp}]} \cap \sumof{\lparts(\Tc^b)[\vec{px}]}$, where $p$ is the parent of $x$ in $T^b$; observe that equivalently, $B_x = \sumof{\Vc_x} \cap \sumof{\Vc \setminus \Vc_x}$.
With this in mind, we say that an~edge $uv$ of $T$ is \emph{$x$-degenerate} if the following linear space equality holds:
\[
  \sumof{\lparts_x(\Tc)[\ouv]} \cap B_x = \sumof{\lparts_x(\Tc)[\ovu]} \cap B_x.
\]
Such an~edge is \emph{proper} $x$-degenerate if at least one of the following conditions holds:
\begin{itemize}
  \item either $\ouv$ or $\ovu$ is $x$-empty; or
  \item there exists $y \in V(T^b)$ with $y < x$ such that: (a) there exists a~$y$-degenerate edge in $T$ (possibly different than $uv$) that is not proper, and (b) neither $\ouv$ nor $\ovu$ is $y$-empty.
\end{itemize}
Even though the definition above is recursive, it is defined correctly and uniquely -- the notion of proper $x$-degeneracy only depends on the proper $y$-degeneracy of edges for $y < x$.

An~$x$-degenerate edge that is not proper is called \emph{improper} $x$-degenerate.
If $\Tc$ contains an~improper $x$-degenerate edge, we say that $\Tc$ is \emph{$x$-degenerate}.

Next, an~edge $\ouv$ of $(T, \lambda)$ is \emph{$x$-guarding} (or: $uv$ $x$-guards its end $u$) if the following strict inclusion holds:
\[
  \sumof{\lparts_x(\Tc)[\ouv]} \cap B_x \subsetneq \sumof{\lparts_x(\Tc)[\ovu]} \cap B_x.
\]
In this case, $\ouv$ is \emph{improper} $x$-guarding if all of the following conditions hold: $\deg(u) = 3$; $\ouv$ is $x$-mixed; and if $u_1, u_2$ are the two neighbors of $u$ other than $v$, then neither $\vec{u_1u}$ nor $\vec{u_2u}$ is $x$-empty.
Otherwise, $\ouv$ is \emph{proper} $x$-guarding.

Finally, a~two-edge path $uvw$ of $(T, \lambda)$ is an~\emph{$x$-blocking path} if the following two equalities hold:
\[
\begin{split}
  \sumof{\lparts_x(\Tc)[\vec{uv}]} \cap B_x &= \sumof{\lparts_x(\Tc)[\vec{vw}]} \cap B_x \eqqcolon A_1, \\
  \sumof{\lparts_x(\Tc)[\vec{wv}]} \cap B_x &= \sumof{\lparts_x(\Tc)[\vec{vu}]} \cap B_x \eqqcolon A_2;
\end{split}
\]
and moreover, neither $A_1 \subseteq A_2$ nor $A_2 \subseteq A_1$.
(Note that this implies that $\dim(A_1), \dim(A_2) > 0$, so in particular, neither $\vec{uv}$ nor $\vec{wv}$ is $x$-empty.)
In this case, $uvw$ is an~\emph{improper} $x$-blocking path if $\deg(v) = 3$ and $\vec{v'v}$ is $x$-mixed, for the unique neighbor $v'$ of $v$ other than $u$ and $w$.
Otherwise, $uvw$ is \emph{proper} $x$-blocking.

With this bag of definitions at hand, we say that $\Tc$ is \emph{$x$-pure} if one of the following holds:
\begin{itemize}
  \item $\Tc$ is $x$-degenerate and $x$-disjoint; or
  \item $\Tc$ is not $x$-degenerate, and every $x$-guarding edge $\ouv$ and every $x$-guarding path $uvw$ is proper.
\end{itemize}
Finally, $\Tc$ is \emph{totally pure} with respect to $\Tc^b$ if it is $x$-pure for all $x \in V(T^b)$.

Now, the structure theorem proven by Jeong, Kim and Oum reads as follows:

\koreanrankstructuretheorem*

\paragraph*{Totally pure decompositions imply small mixed skeletons.} Recall now the definition of mixed skeletons from \cref{ssec:dealternation-mixed-skeletons}.
Using \cref{lem:korean-rank-structure-theorem}, we will now give the omitted proof of \cref{lem:small-mixed-skeleton}, which we restate below for convenience.

\smallmixedskeletonlemma*

\begin{proof}
  Let $\Tc = (T, \lambda)$ be a~rooted optimum-width rank decomposition of $\Vc$ that is totally pure with respect to $\Tc^b$; such a~decomposition exists by \cref{lem:korean-rank-structure-theorem}.
  We claim that, for every $x \in V(T^b)$, the height of the $x$-mixed skeleton of $\Tc$ is at most $2\ell + 2$.
  Since mixed skeletons are rooted binary trees, the statement of the lemma will follow immediately.
  
  Fix $x \in V(T^b)$ and let $T^\mix$ be the $x$-mixed skeleton of $\Tc$.
  Assume for contradiction that there exists a~vertical path $P = v_0v_1\ldots{}v_{p+1}$ in $T^\mix$ for some $p \geq 2\ell + 1$, where for each $i \in [p+1]$, the node $v_{i-1}$ is an~ancestor of $v_i$ in $T$.
  For each $i \in [p + 1]$, define $v_i^L$ as the parent of $v_i$ in $T$, and for each $i \in [0, p]$, define $v_i^R$ as the unique child of $v_i$ in $T$ on the simple path between $v_i$ and $v_{i+1}$.
  For each $i \in [p]$, let $v'_i$ be the remaining child of $v_i$ in $T$.
  Note that for each $i \in [p]$, the node $v_i$ is an~$x$-branch point (\cref{lem:mixed-skeleton-binary}), so the edges $\vec{v_i^Rv_i}$ and $\vec{v'_iv_i}$ are $x$-mixed; moreover, the edge $\vec{v_i^Lv_i}$ is $x$-mixed by \cref{lem:mixed-skeleton-mixed-path}.
  
  Recall that $B_x = \sumof{\lparts_x(\Tc^b)[\vec{xp}]} \cap \sumof{\lparts_x(\Tc^b)[\vec{px}]}$, where $p$ is the parent of $x$ in $T^b$.
  Since $\Tc^b$ has width $\ell$, by definition we necessarily have that $\dim(B_x) \leq \ell$.
  Consider the following vector spaces for each $i \in [p + 1]$:
  \[
  \begin{split}
  A_i &= \sumof{\lparts_x(\Tc)[\vec{v_i v_i^L}]} \cap B_x, \\
  B_i &= \sumof{\lparts_x(\Tc)[\vec{v_i^L v_i}]} \cap B_x.
  \end{split}
  \]
  Note that $\vec{v_i^L v_i}$ is a~predecessor of $\vec{v_{i+1}^L v_{i+1}}$ for each $i \in [p]$.
  Therefore we have the following chains of inclusions of vector spaces:
  \[
  \begin{split}
    A_1 \supseteq A_2 \supseteq \ldots \supseteq A_{p+1}, \\
    B_1 \subseteq B_2 \subseteq \ldots \subseteq B_{p+1}.
  \end{split}
  \]
  Each $A_i$ and each $B_i$ is a~vector space of dimension at most $\ell$ since each is a~subspace of  $B_x$.
  Since $p \geq 2\ell + 1$, we find that there exists an~index $t \in [p]$ such that $A_t = A_{t+1}$ and $B_t = B_{t+1}$.
  Because $\sumof{\lparts_x(\Tc)[\vec{v_{t+1} v_{t+1}^L}]} \subseteq \sumof{\lparts_x(\Tc)[\vec{v_t^R v_t}]} \subseteq \sumof{\lparts_x(\Tc)[\vec{v_t v_t^L}]}$ and $\sumof{\lparts_x(\Tc)[\vec{v_t^L v_t}]} \subseteq \sumof{\lparts_x(\Tc)[\vec{v_t v_t^R}]} \subseteq \sumof{\lparts_x(\Tc)[\vec{v_{t+1}^L v_{t+1}}]}$, we have
  \begin{align*}
    &\sumof{\lparts_x(\Tc)[\vec{v_t v_t^L}]} \cap B_x = \sumof{\lparts_x(\Tc)[\vec{v_t^R v_t}]} \cap B_x = A_t
    \quad\text{and}\\
    &\sumof{\lparts_x(\Tc)[\vec{v_t^L v_t}]} \cap B_x = \sumof{\lparts_x(\Tc)[\vec{v_t v_t^R}]} \cap B_x = B_t.
  \end{align*}
  We now consider several cases with regard to the containment relation between $A_t$ and $B_t$.
  \begin{itemize}
    \item If $A_t = B_t$, then the edge $e \coloneqq v_t^L v_t$ is by definition $x$-degenerate.
    
      Suppose first $e$ is improper.
      Then $\Tc$ is $x$-degenerate and so by the total purity of $\Tc$, $\Tc$ is $x$-pure and thus $x$-disjoint (i.e., either $x$ is the root of $T^b$ and then $\Vc_x = \Vc$, or there exists an~edge $pq \in E(T)$ such that $\lparts(\Tc)[\vec{pq}] = \Vc_x$).
      However, by \cref{lem:mixed-skeleton-mixed-path}, the edges $\vec{v_t^L v_t}$ and $\vec{v_t v_t^L}$ are both $x$-mixed.
      This is a~contradiction as in an~$x$-disjoint decomposition, there cannot exist an edge $uv \in E(T)$ such that both $\vec{uv}$ and $\vec{vu}$ are $x$-mixed.
      
      Now assume that $e$ is proper.
      Again by \cref{lem:mixed-skeleton-mixed-path}, the edges $\vec{v_t^L v_t}$ and $\vec{v_t v_t^L}$ are both $x$-mixed.
      By the fact that $e$ is proper, it must be the case that for some $y \in V(T^b)$ with $y < x$, the decomposition $\Tc$ is $y$-degenerate and neither $\vec{v_t^L v_t}$ nor $\vec{v_t v_t^L}$ is $y$-empty.
      By the total purity of $\Tc$, we have that $\Tc$ is $y$-disjoint.
      As previously, it cannot be that both $\vec{v_t^L v_t}$ and $\vec{v_t v_t^L}$ are $y$-mixed.
      Therefore, one of the edges $\vec{v_t^L v_t}, \vec{v_t v_t^L}$ is $y$-full.
      So by \cref{obs:mixedness-relations}, that edge is $x$-full, too -- a~contradiction.
    
    \item If $A_t \subsetneq B_t$, then the edge $e \coloneqq \vec{v_t v_t^L}$ is $x$-guarding by definition.
      But recall that the three edges $\vec{v_t v_t^L}$, $\vec{v_t^R v_t}$ and $\vec{v'_t v_t}$ are $x$-mixed.
      Hence $e$ is improper $x$-guarding by definition, which contradicts the assumption that $\Tc$ is totally pure with respect to $\Tc^b$.
      
    \item If $B_t \subsetneq A_t$, the analogous argument follows, using the $x$-guarding edge $\vec{v_t v_t^R}$ instead.
    
    \item If $A_t \not\subseteq B_t$ and $B_t \not\subseteq A_t$, then the path $v_t^L v_t v_t^R$ is $x$-blocking by definition.
      But since $\vec{v'_t v_t}$ is $x$-mixed, we get that $v_t^L v_t v_t^R$ is improperly $x$-blocking and thus $\Tc$ is not $x$-pure -- a~contradiction.
  \end{itemize}
  Since we reached a~contradiction in each possible case, the proof of the lemma is complete.
\end{proof}

\end{document}